\providecommand{\tabularnewline}{\\}
\newcommand{\myTitle}{A Classic Thesis Style\xspace}
\newcommand{\myName}{Andr\'e Miede\xspace}
\newcommand{\myFaculty}{Put data here\xspace}
\newcommand{\myUni}{Put data here\xspace}
\providecommand{\mLyX}{L\kern-.1667em\lower.25em\hbox{Y}\kern-.125emX\@}
\newcommand{\ie}{i.\,e.}
\newcommand{\backrefnotcitedstring}{\relax}
\newcommand{\backrefcitedsinglestring}[1]{(Cited on page~#1.)}
\newcommand{\backrefcitedmultistring}[1]{(Cited on pages~#1.)}
		   \renewcommand*{\backref}[1]{}  
		   \renewcommand*{\backrefalt}[4]{
		      \ifcase #1 %
		         \backrefnotcitedstring%
		      \or%
		         \backrefcitedsinglestring{#2}%
		      \else%
		         \backrefcitedmultistring{#2}%
		      \fi}%
\theoremstyle{plain}
\newtheorem{thm}{\protect\theoremname}
\newtheorem{thm}{\protect\theoremname}[chapter]
  \theoremstyle{definition}
  \newtheorem{defn}[thm]{\protect\definitionname}
  \theoremstyle{definition}
  \newtheorem{example}[thm]{\protect\examplename}
  \theoremstyle{remark}
  \newtheorem{rem}[thm]{\protect\remarkname}
  \theoremstyle{plain}
  \newtheorem{lem}[thm]{\protect\lemmaname}
  \theoremstyle{plain}
  \newtheorem{cor}[thm]{\protect\corollaryname}
  \theoremstyle{plain}
  \newtheorem{prop}[thm]{\protect\propositionname}
  \theoremstyle{definition}
  \newtheorem{problem}[thm]{\protect\problemname}
\numberwithin{figure}{chapter}
\numberwithin{table}{chapter}
  \providecommand{\corollaryname}{Corollary}
  \providecommand{\definitionname}{Definition}
  \providecommand{\examplename}{Example}
  \providecommand{\lemmaname}{Lemma}
  \providecommand{\problemname}{Problem}
  \providecommand{\propositionname}{Proposition}
  \providecommand{\remarkname}{Remark}
\providecommand{\theoremname}{Theorem}
\begin{document}
\newgeometry{margin=2cm}

\begin{titlepage}

 \newcommand{\HRule}{\rule{\linewidth}{0.5mm}}\pagenumbering{gobble}

\begin{center}
~\\[1cm]
\par\end{center}

\begin{center}
\includegraphics[width=0.66\textwidth]{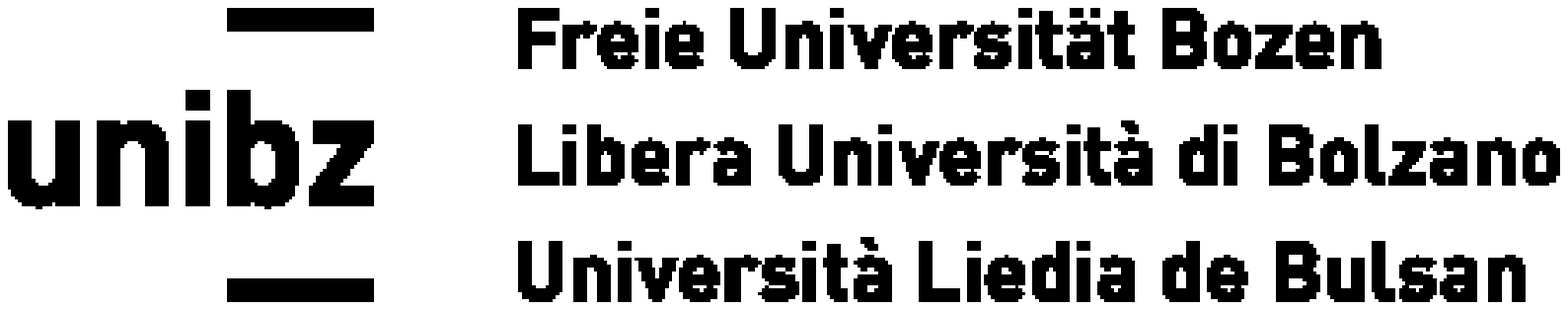}~\\[2.4cm]
\par\end{center}

\begin{center}
\textsc{\Large{}PhD Thesis}{\Large{}\\[1.2cm]}
\par\end{center}{\Large \par}

\begin{center}
\HRule \\[0.4cm] { \textbf{\huge{}Query-driven Data Completeness
Management \\[0.4cm] }}
\par\end{center}{\huge \par}

\begin{center}
\HRule \\[1.5cm]
\par\end{center}

\begin{center}
\begin{minipage}[c]{0.4\textwidth}%
\begin{flushleft}
\emph{\large{}Author:}{\large{}}\\
{\large{} Simon Razniewski }
\par\end{flushleft}%
\end{minipage}%
\begin{minipage}[c]{0.4\textwidth}%

\begin{flushright}
\emph{\large{}Supervisor:}{\large{} }\\
{\large{} Prof. Werner Nutt }
\par\end{flushright}%
\end{minipage}
\par\end{center}

\vspace{5.5cm}

\hspace{1.15cm}\emph{Thesis in part reviewed by:}

\hspace{1.3cm}- Prof. Leopoldo Bertossi, Carleton University, Canada

\hspace{1.3cm}- Prof. Jan van den Bussche, Universiteit Hasselt,
Belgium

\hspace{1.3cm}- Prof. Floris Geerts, University of Antwerp, Belgium

\hspace{1.3cm}- Prof. Francesco Ricci, Free University of Bozen-Bolzano,
Italy

\hspace{1.3cm}- Prof. Fabian M. Suchanek, Télécom ParisTech University,
France

\begin{center}
\vfill{}

\par\end{center}

\begin{center}
{\large{}October 2014}
\par\end{center}{\large \par}

\end{titlepage}

\restoregeometry

\newpage{}

 \ \newpage{}

 \ \vspace{3cm}

\begin{flushleft}
\emph{\hspace{0.7cm}There are two kinds of methodologies: \vspace{-0.2cm}
}
\par\end{flushleft}

\begin{flushleft}
\emph{\hspace{1cm}1. Those that cannot be used to reason about incomplete
information}
\par\end{flushleft}

\newpage{}

 \ \newpage{}

\chapter*{Acknowledgement}

First of all I would like to thank my advisor for his advice, guidance
and patience. I would also like to thank to my close colleagues and
my coauthors, in particular Ognjen Savkovic, Marco Montali, Fariz
Darari and Giuseppe Pirró. I would like to thank my collaborators,
in particular Divesh Srivastava, Flip Korn, Marios Hadjieleftheriou
and Alin Deutsch. I am thankful for the good surrounding and support
from my colleagues in the KRDB group and in the Faculty of Computer
Science. Thanks to my friends. Thanks to my family. Thanks to Mita.

\global\long\def\eat{}

\global\long\def\Wlog{W.l.o.g.\ }

\global\long\def\ie{i.\ e.\ ,}

\global\long\def\set#1{\{\,#1\,\}}

\global\long\def\const{\mathrm{const}}

\global\long\def\union{\cup}

\global\long\def\dom{\mathit{dom}}

\global\long\def\var{\mathit{Var}}

\global\long\def\query#1#2{#1 \qif#2}

\global\long\def\pdb{{\mathcal{D}\xspace}\xspace}

\global\long\def\D{{\pdb}}

\global\long\def\id#1{#1^{i}}

\global\long\def\av#1{#1^{a}}

\global\long\def\can{\mathrm{can}}

\global\long\def\fc{f_{\C}}

\global\long\def\complb#1{\mathit{Compl^{b}}(#1)}

\global\long\def\compls#1{\mathit{Compl^{s}}(#1)}

\global\long\def\complstar#1#2{\mathit{Compl}^{#2}(#1)}

\global\long\def\compl#1{\textit{Compl}($#1$)}

\global\long\def\da{\av D}

\global\long\def\di{\id D}

\global\long\def\lcstmt{C}

\global\long\def\cplstmt#1#2{#1 {\, \dot{\subseteq} \, } #2}

\global\long\def\Compl#1{\mathit{Compl}{(}#1{)}}

\global\long\def\Complb#1{\mathit{Compl}^{b}($#1$)}

\global\long\def\Compls#1{\mathit{Compl}^{s}($#1$)}

\global\long\def\qcompl#1{\mathit{Compl}{(}#1{)}}

\global\long\def\cplset{{\mathcal{C}}}
 \global\long\def\C{{\mathcal{C}}}

\global\long\def\uk{\mathrm{uk}}
 %

\global\long\def\na{\mathrm{N/A}}
 %

\global\long\def\codd{\mathrm{Codd}}

\global\long\def\sql{\mathrm{SQL}}

\global\long\def\Null{\textsl{null}\xspace}

\global\long\def\Nulls{\textsl{null}s\xspace}

\global\long\def\epos#1#2{\textit{EPos}(#1,#2)}

\global\long\def\student{\texttt{student}\xspace}

\global\long\def\person{\texttt{person}\xspace}

\global\long\def\it#1{\mathit{#1}}

\global\long\def\sid{\texttt{sid}\xspace}

\global\long\def\name{\texttt{name}\xspace}

\global\long\def\level{\texttt{level}\xspace}

\global\long\def\code{\texttt{code}\xspace}

\global\long\def\hometown{\texttt{hometown}\xspace}

\global\long\def\class{\texttt{class}\xspace}

\global\long\def\formTeacher{\texttt{formTeacher}\xspace}

\global\long\def\viceFormTeacher{\texttt{viceFormTeacher}\xspace}

\global\long\def\profile{\texttt{profile}\xspace}

\global\long\def\customer{\texttt{customer}}

\global\long\def\street{\texttt{street}}

\global\long\def\city{\texttt{city}}

\global\long\def\supplier{\texttt{supplier}}

\global\long\def\partner{\texttt{partner}}

\global\long\def\longTerm{\texttt{long\_term}}

\global\long\def\contact{\texttt{contact}}

\global\long\def\iD{\texttt{id}}

\global\long\def\John{\mathit{John}}

\global\long\def\Mary{\mathit{Mary}}

\global\long\def\pupil{\mathit{pupil}}

\global\long\def\req{\mathit{request}}

\global\long\def\enr{\mathit{enrolled}}

\global\long\def\test{\mathit{test}}

\global\long\def\Hoferschool{\mathit{HoferSchool}}

\global\long\def\DaVincischool{\mathit{DaVinciSchool}}

\global\long\def\livesin{\mathit{livesIn}}

\global\long\def\Bolzano{\mathit{Bolzano}}

\global\long\def\Bob{\mathit{Bob}}
 \global\long\def\Alice{\mathit{Alice}}

\global\long\def\Merano{\mathit{Merano}}

\global\long\def\etal{et al.}

\global\long\def\satisfies{\models}

\global\long\def\RQ{\ensuremath{\L_{{\rm {RQ}}}}}

\global\long\def\CQ{\ensuremath{\L_{{\rm {CQ}}}}}

\global\long\def\LRQ{\ensuremath{\L_{{\rm {LRQ}}}}}

\global\long\def\LCQ{\ensuremath{\L_{{\rm {LCQ}}}}}

\global\long\def\LC{TC\xspace}

\global\long\def\dd#1#2{#1_{1},\ldots,#1_{#2}}

\global\long\def\quotes#1{\lq\lq#1\rq\rq}

\global\long\def\wrt{w.r.t.\ }

\global\long\def\WLOG{wlog\xspace}

\global\long\def\PTIME{\ensuremath{\mathsf{PTIME}}}

\global\long\def\CONP{\ensuremath{\mathsf{coNP}}}

\global\long\def\NP{\ensuremath{\mathsf{NP}}}

\global\long\def\piptwo{\Pi_{2}^{P}}

\global\long\def\sigptwo{\Sigma_{2}^{P}}

\global\long\def\Cont{\ensuremath{\mathsf{Cont}}}

\global\long\def\ContU{\ensuremath{\mathsf{ContU}}}

\global\long\def\LCQC{\ensuremath{\mathsf{TC\text{-}QC}}}

\global\long\def\LCLC{\ensuremath{\mathsf{TC\text{-}TC}}}

\global\long\def\QCQC{\ensuremath{\mathsf{QC\text{-}QC}}}

\global\long\def\smixed{\mathrm{s+}}

\global\long\def\complpi#1{\textit{Compl}_{\pi}\ensuremath{(#1)}}

\global\long\def\cplhat{\hat{C}}

\global\long\def\chk{\check{}}

\global\long\def\localComp{table completeness\xspace}

\global\long\def\LocalComp{Table completeness\xspace}

\global\long\def\true{\textit{true}}

\global\long\def\false{\textit{false}}

\global\long\def\And{\wedge}

\global\long\def\Or{\vee}

\global\long\def\eset{\emptyset}

\global\long\def\bigset#1{ \Bigl\{ #1 \Bigr\} }

\global\long\def\bigmid{\ \Big|\ }

\global\long\def\bag#1{\{\hspace{-0.13em}|\, #1 \,|\hspace{-0.13em}\}}

\global\long\def\incl{\subseteq}

\global\long\def\incls{\supseteq}

\global\long\def\col{\colon}

\global\long\def\angles#1{\langle#1\rangle}

\global\long\def\Sum{\ensuremath{{\sf sum}}}

\global\long\def\Count{\ensuremath{{\sf count}}}

\global\long\def\Max{\ensuremath{{\sf max}}}

\global\long\def\Min{\ensuremath{{\sf min}}}

\global\long\def\L{\mathcal{L}}

\global\long\def\core#1{\mathring{#1}}

\global\long\def\qif{\,{:}{-}\,}

\global\long\def\lit#1{L_{#1}}

\global\long\def\cpred#1{V_{#1}}

\global\long\def\concond{G}

\global\long\def\Q{{\mathcal{Q}}}

\global\long\def\F{{\mathcal{F}}}

\global\long\def\S{{\mathcal{S}}}

\global\long\def\implies{\rightarrow}

\global\long\def\findom#1#2#3{\mathit{Dom}(#1,#2,#3)}

\global\long\def\determines{\rightarrow\! \! \! \! \! \rightarrow}

\global\long\def\dotequiv{\ \dot{\equiv} \ }

\global\long\def\queryset{{\mathcal{Q}}}

\global\long\def\schemaconstraintset{{\mathcal{F}}}

\global\long\def\val{\upsilon}

\global\long\def\viewset{{\mathcal{V}}}

\global\long\def\tpl#1{\bar{#1}}

\global\long\def\tplsub#1#2{{\bar{#1}}_{#2}}

\global\long\def\aufz#1#2{#1_{1},\ldots,#1_{#2}}

\global\long\def\modelsms{\models\! \! ^{\mathit{M\!\! S}}}

\global\long\def\joininquotes{\mbox{ ``\ensuremath{\bowtie}"}}

\global\long\def\domby{\preceq}

\global\long\def\query#1#2{#1\qif#2}

\global\long\def\pdb{\mathcal{D}}

\global\long\def\domby{\preceq}

\global\long\def\dominates{\succeq}

\global\long\def\query#1#2{#1\qif#2}

\global\long\def\compl#1{\mathrm{Compl}(#1)}

\global\long\def\tpl#1{\bar{#1}}

\global\long\def\ourStyle#1{{\small\texttt{#1}}}

\global\long\def\tuple#1{\langle\texttt{#1}\rangle}

\global\long\def\ur{\texttt{u}}

\global\long\def\G{{\mathcal{G}}}

\global\long\def\UU{{\mathcal{U}}}

\global\long\def\LL{{\mathcal{L}}}

\global\long\def\BB{{\mathcal{B}}}

\global\long\def\TC{T_{\gcs}}

\global\long\def\frozen#1{\tilde{#1}}

\global\long\def\widefrozen#1{\widetilde{#1}}

\global\long\def\ID{\mathit{id}}
 \global\long\def\T{{\cal T}}

\global\long\def\frozenID{\frozen{\ID}}

\global\long\def\mathSet#1{\ensuremath{\{\,#1\,\}}}

\global\long\def\Ga{\ensuremath{G^{a}}}

\global\long\def\Gi{\ensuremath{G^{i}}}

\global\long\def\QC{\ensuremath{\mathit{Compl}(Q)}}

\global\long\def\buildQC#1{\ensuremath{\mathit{Compl}(\mathit{#1})}}

\global\long\def\buildQCset#1{\ensuremath{\mathit{Compl}^{s}(\mathit{#1})}}

\global\long\def\buildQCbag#1{\ensuremath{\mathit{Compl}^{b}(\mathit{#1})}}

\global\long\def\defineGC#1#2{\ensuremath{\mathit{Compl}(#1\mid#2)}}

\global\long\def\defineGCone#1{\ensuremath{\mathit{Compl}(#1)}}

\global\long\def\buildGC#1#2{\ensuremath{\mathit{Compl}(\set{#1}\mid\set{#2})}}

\global\long\def\buildGCone#1{\ensuremath{\mathit{Compl}(\set{#1})}}

\global\long\def\calC{\ensuremath{\mathcal{C}}}

\global\long\def\GCC{\calC}

\global\long\def\term#1{\ensuremath{\mathtt{#1}}}

\global\long\def\mytabs{\hspace{1em}*{2\parindent}\=\hspace{1cm}\=\hspace{1cm}\=\hspace{1cm}\=\hspace{1cm}}

\global\long\def\Graph{\ensuremath{G}}
 \global\long\def\Pattern{\ensuremath{P}}


\global\long\def\buildDBQuery#1#2{\ensuremath{{#1}\;\text{:-}\;{#2}}}

\global\long\def\calF{\ensuremath{\mathcal{F}}}
\global\long\def\fed#1{\ensuremath{\mathit{fed}_{\calF}(#1)}}
 \global\long\def\evalfed#1#2{\ensuremath{\llbracket\fed{#1}\rrbracket_{#2}^{s}}}
 \global\long\def\evalfedset#1{\ensuremath{\llbracket\fed{#1}\rrbracket_{\emptyset}^{s}}}

\global\long\def\defineSERVICE#1#2{\ensuremath{(\ourStyle{SERVICE}\;#1\;#2)}}


\global\long\def\subsumedBy{\sqsubseteq}

\global\long\def\lessInformativeThan{\subsumedBy}

\global\long\def\buildSELECTnew#1#2{\ensuremath{(\mathit{SELECT}\;\mathSet{#1}\;#2)}}

\global\long\def\newvars#1{\ensuremath{\mathit{newvars}(#1)}}

\global\long\def\ext#1{\ensuremath{\mathit{ext}(#1)}}

\global\long\def\calT{\ensuremath{\mathcal{T}}}

\global\long\def\evaltopdown#1#2{\ensuremath{\llbracket#1\rrbracket_{#2}^{td}}}

\global\long\def\queryMINUS#1#2{(#1 \ourStyle{ MINUS } #2)}

\global\long\def\queryUNION#1#2{(#1 \ourStyle{ UNION } #2)}

\global\long\def\queryAND#1#2{(#1 \ourStyle{ AND } #2)}

\global\long\def\queryOPT#1#2{(#1 \ourStyle{ OPT } #2)}

\global\long\def\queryFILTER#1#2{(#1 \ourStyle{ FILTER } #2)}

\global\long\def\querySERVICE#1#2{(\ourStyle{SERVICE} \; #1 \; #2)}

\global\long\def\querySERVICEBraces#1#2{(\ourStyle{SERVICE} \; #1 \; \{#2\}}

\global\long\def\yes{\texttt{yes}}
 \global\long\def\notexists#1{\ensuremath{\ourStyle{NOT}\;\ourStyle{EXISTS}}(#1)}

\global\long\def\iri#1{\ensuremath{\mathit{iri}(#1)}}

\global\long\def\vterm#1{\ensuremath{\mathit{vterm}(#1)}}

\global\long\def\fedC{\ensuremath{\bar{\calC}}}

\global\long\def\fedG{\ensuremath{\bar{\calG}}}

\global\long\def\calS{\ensuremath{\mathcal{S}}}
 \global\long\def\calG{\ensuremath{\mathcal{G}}}

\global\long\def\modelsfed{\models_{\mathit{fed}}}

\global\long\def\fednew#1#2{\ensuremath{\mathit{fed}_{#1}(#2)}}

\global\long\def\SERVICE{\ourStyle{SERVICE}\xspace}

\global\long\def\AND{\ourStyle{AND}\xspace}

\global\long\def\g{\ensuremath{G}}

\global\long\def\pg{\ensuremath{\mathcal{G}}}
\global\long\def\gs{\ensuremath{\bar{G}}}

\global\long\def\pgs{\ensuremath{\bar{\mathcal{G}}}}

\global\long\def\gc{\ensuremath{C}}

\global\long\def\gcs{\ensuremath{\mathcal{C}}}

\global\long\def\fgcs{\ensuremath{\dot{\mathcal{C}}}}

\global\long\def\pgsIdeal{\ensuremath{\dot{\mathcal{G}}^{i}}}

\newcounter{mycount} 
\global\long\def\myprob#1#2{%
\stepcounter{mycount} \par\noindent Problem\ \themycount.}

\global\long\def\evalwithfed#1#2{\ensuremath{\llbracket#1\rrbracket_{#2}^{\mathit{}}}}

\global\long\def\Vcompl{\ourStyle{Completeness}}

\global\long\def\VhasComplStmt{\ourStyle{hasComplStmt}}

\global\long\def\VhasPattern{\ourStyle{hasPattern}}

\global\long\def\VhasCondition{\ourStyle{hasCondition}}

\global\long\def\Vsubject{\ourStyle{subject}}

\global\long\def\Vpredicate{\ourStyle{predicate}}

\global\long\def\Vobject{\ourStyle{object}}

\global\long\def\VspinVarName{\ourStyle{spin:varName}}

\global\long\def\VvarName{\ourStyle{varName}}

\global\long\def\dbp{\mathit{dbp}}


\global\long\def\mytpl#1{\langle#1\rangle}

\global\long\def\defineASK#1{\ensuremath{(\ourStyle{ASK}\;#1)}}

\global\long\def\buildASK#1{\ensuremath{(\ourStyle{ASK}\;\mathSet{#1})}}

\global\long\def\defineSELECT#1#2{\ensuremath{(\ourStyle{SELECT}\;#1\;#2)}}

\global\long\def\buildSELECT#1#2{\ensuremath{(\ourStyle{SELECT}\;\mathSet{#1}\;\mathSet{#2})}}

\global\long\def\defineCONSTRUCT#1#2{\ensuremath{(\ourStyle{CONSTRUCT}\;#1\;#2)}}

\global\long\def\buildCONSTRUCT#1#2{\ensuremath{(\ourStyle{CONSTRUCT}\;\{\mathSet{#1}\}\;\{\mathSet{#2}\})}}

\global\long\def\ASK{\ourStyle{ASK}\xspace}

\global\long\def\SELECT{\ourStyle{SELECT}\xspace}

\global\long\def\CONSTRUCT{\ourStyle{CONSTRUCT}\xspace}

\global\long\def\DISTINCT{\ourStyle{DISTINCT}\xspace}

\global\long\def\UNION{\ourStyle{UNION}\xspace}

\global\long\def\FILTER{\ourStyle{FILTER}\xspace}

\global\long\def\OPT{\operatorname{\mathrm{OPT}}}

\global\long\def\OR{\ourStyle{OR}\xspace}

\global\long\def\MINUS{\ourStyle{MINUS}\xspace}

\global\long\def\map#1#2{\ensuremath{\text{#1}\mapsto\text{#2}}}

\global\long\def\triple#1#2#3{(\mathit{#1}, \mathit{#2}, \mathit{#3})}

\global\long\def\buildIFF#1#2{\ensuremath{#1\;\;\;\;\text{iff}\;\;\;\;#2}}

\global\long\def\evalset#1#2{\ensuremath{\llbracket#1\rrbracket_{#2}^{s}}}

\global\long\def\evalbag#1#2{\ensuremath{\llbracket#1\rrbracket_{#2}^{b}}}

\global\long\def\evalsetbag#1#2{\ensuremath{\llbracket#1\rrbracket_{#2}}}

\global\long\def\eval#1#2{\ensuremath{\llbracket#1\rrbracket_{#2}}}

\global\long\def\setquery#1#2{(\set{#1},\set{#2})}

\global\long\def\sparqlquery#1#2{(#1,#2)}

\global\long\def\cardMapping#1#2{\ensuremath{\mathit{card}_{#1}(#2)}}

\global\long\def\vars#1{\ensuremath{\mathit{vars}(#1)}}

\global\long\def\emptymap{\ensuremath{\mu_{\emptyset}}}

\global\long\def\mappingset{\ensuremath{\Omega}}

\global\long\def\rhodf{\ensuremath{\rho\mathit{DF}}}

\global\long\def\cl{\ensuremath{\mathit{cl}}}

\global\long\def\corner{{\tt {CoRNER}}}

\global\long\def\void{{\tt {VoID}}}

\global\long\def\leftouterjoin{\; \includegraphics[scale=0.8]{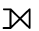} \;}

\global\long\def\TC{T_{\gcs}}

\global\long\def\frozen#1{\tilde{#1}}

\global\long\def\widefrozen#1{\widetilde{#1}}

\global\long\def\ID{\mathit{id}}
 \global\long\def\T{{\cal T}}

\global\long\def\frozenID{\frozen{\ID}}

\global\long\def\C{{\cal C}}

\global\long\def\T{\mathcal{T}}
\global\long\def\P{\mathcal{P}}

\global\long\def\film{\mathit{film}}
 \global\long\def\movie{\mathit{movie}}
\global\long\def\dir{\mathrm{dir}}
\global\long\def\tn{\mathrm{tn}}

\global\long\def\ids{\G}

\global\long\def\setCS{\C}

\global\long\def\aseq{\mathit{Aseq}}

\global\long\def\qats{\bar{T}}

\global\long\def\actrw{\mathrm{RA}}

\global\long\def\actdb{\mathrm{CA}}
 \global\long\def\tgd#1#2{#1\rightarrow#2}

\global\long\def\re{\mathit{re}}
 \global\long\def\ce{\mathit{ce}}

\global\long\def\CE{\mathrm{CE}}
 \global\long\def\start{start}
 \global\long\def\resident{\mathit{resident}}
 \global\long\def\copyy{\mathit{copy}}
 \global\long\def\Davinci{\mathit{DaVinci}}

\renewcommand\thepart{\Roman{part}}

\global\long\def\C{{\cal C}}

\global\long\def\ext#1{\mathrm{ext}(#1)}

\global\long\def\John{\mathit{John}}

\global\long\def\Mary{\mathit{Mary}}

\global\long\def\pupil{\mathit{pupil}}

\global\long\def\req{\mathit{request}}

\global\long\def\enr{\mathit{enrolled}}

\global\long\def\test{\mathit{test}}

\global\long\def\Hoferschool{\mathit{HoferSchool}}

\global\long\def\DaVincischool{\mathit{DaVinciSchool}}

\global\long\def\livesin{\mathit{livesIn}}

\global\long\def\Bolzano{\mathit{Bolzano}}

\global\long\def\Bob{\mathit{Bob}}

\global\long\def\Alice{\mathit{Alice}}

\global\long\def\Merano{\mathit{Merano}}

\global\long\def\coNP{\mathrm{coNP}}

\global\long\def\NP{\mathrm{NP}}

\global\long\def\actrw{\mathrm{RA}}

\global\long\def\actdb{\mathrm{CA}}

\global\long\def\tgd#1#2{#1\rightarrow#2}

\global\long\def\re{\mathit{re}}
 \global\long\def\ce{\mathit{ce}}

\global\long\def\CE{\mathrm{CE}}

\global\long\def\start{start}

\global\long\def\resident{\mathit{resident}}

\global\long\def\copyy{\mathit{copy}}

\global\long\def\Davinci{\mathit{DaVinci}}

\global\long\def\qats{\bar{T}}

\global\long\def\ucont#1{\mathit{ContU}(#1)}
\global\long\def\ent#1{\mathit{EntC}(#1)}

\global\long\def\reach{\mathit{reach}}

\global\long\def\resident{\mathit{resident}}

\global\long\def\theenumi{(\roman{enumi})}
 \global\long\def\labelenumi{\theenumi}

\global\long\def\domain{\ensuremath{\mathit{dom}}}

\global\long\def\mytpl#1{\langle#1\rangle}

\global\long\def\fc{f_{\C}}
 \global\long\def\uk{\mathrm{uk}}
 \global\long\def\na{\mathrm{n/a}}

\global\long\def\const{\mathrm{const}}
 \global\long\def\codd{\mathrm{Codd}}
 \global\long\def\sql{\mathrm{SQL}}
 \global\long\def\can{\mathrm{can}}
 \global\long\def\complb#1{\mathit{Compl^{b}}(#1)}
 \global\long\def\compls#1{\mathit{Compl^{s}}(#1)}
 \global\long\def\complstar#1#2{\mathit{Compl}^{#2}(#1)}
 \global\long\def\smixed{\mathrm{s+}}
 \global\long\def\pdb{\mathcal{D}}
 \global\long\def\domby{\preceq}
 \global\long\def\dominates{\succeq}
 \global\long\def\query#1#2{#1\qif#2}
 \global\long\def\tpl#1{\bar{#1}}
 \global\long\def\ctle{\mathbf{E}}
 \global\long\def\ctlf{\mathbf{F}}
 \global\long\def\ctla{\mathbf{A}}
 \global\long\def\ctlg{\mathbf{G}}
 \global\long\def\muexists{<\negthickspace\negthinspace-\negthickspace\negthinspace>}
 \global\long\def\muall{[\negthinspace-\negthinspace]}
 \global\long\def\statedescriptor{\mathrm{stateDescriptor}}
 \global\long\def\mulp{\mu\mathcal{L_{\mathit{P}}}}
 \global\long\def\mula{\mu\mathcal{L_{\mathit{A}}}}

\global\long\def\live{\mathrm{Live}}
 \global\long\def\chase{\mathrm{chase}}

\global\long\def\OnlyIf{\lq\lq$\Rightarrow$\rq\rq\ \ }

\global\long\def\NULL{\bot}

\global\long\def\arity#1{\textit{arity}(#1)}

\global\long\def\student{\ensuremath{\mathtt{student}}\xspace}
 \global\long\def\class{\ensuremath{\mathtt{class}}\xspace}
 \global\long\def\classCode{\ensuremath{\mathtt{classCode}}\xspace}
 \global\long\def\name{\ensuremath{\mathtt{name}}\xspace}
 \global\long\def\homeTown{\ensuremath{\mathtt{homeTown}}\xspace}
 \global\long\def\code{\ensuremath{\mathtt{code}}\xspace}
 \global\long\def\hascode{\ensuremath{\mathtt{hasCode}}\xspace}
 \global\long\def\profile{\ensuremath{\mathtt{profile}}\xspace}
 \global\long\def\formTeacher{\ensuremath{\mathtt{formTeacher}}\xspace}
 \global\long\def\lncourse{\ensuremath{\mathtt{lnCourse}}\xspace}

\global\long\def\hasattr{\ensuremath{\mathtt{hasAttr}}\xspace}
 \global\long\def\attr{\ensuremath{\mathtt{attr}}\xspace}

\global\long\def\arts{\mathrm{'arts'}\xspace}
 \global\long\def\school{\textit{school}\xspace}

\global\long\def\Mary{\mathrm{Mary}}
 \global\long\def\Paul{\mathrm{Paul}}
 \global\long\def\Chester{\mathrm{Chester}}
 \global\long\def\Hampton{\mathrm{Hampton}}
 \global\long\def\twoA{\mathrm{2a}}

\global\long\def\nonulls#1{{#1}^{\downarrow}}
\global\long\def\nonullsmod#1{{#1}^{\Downarrow}}

\global\long\def\Rule#1{\rho_{#1}}
\global\long\def\tcruleEx#1#2#3{{#1}\rightarrow\exists\,{#2}{\mbox{{\bf .}}}\,{#3}}
 \global\long\def\tcrule#1#2{{#1}\rightarrow{#2}}

%
\global\long\def\IF{\mathrm{inc}}
\global\long\def\RF{\mathrm{res}}
\global\long\def\PF{\mathrm{2\bot}}
\global\long\def\TN{\mathrm{3\bot}}
\global\long\def\AN{\mathrm{ambg}}
\global\long\def\cert{\mathrm{cert}}
\global\long\def\chase{\ensuremath{\mathit{chase}}\xspace}

\global\long\def\SIG{\Sigma}
 \global\long\def\SIGi{\id{\SIG}}
 \global\long\def\SIGa{\av{\SIG}}

\global\long\def\Null{\textsl{null}\xspace}
 \global\long\def\Nulls{\textsl{null}s\xspace}
 \global\long\def\D{{\pdb}}
 \global\long\def\id#1{#1^{i}}
 \global\long\def\av#1{#1^{a}}

\global\long\def\epos#1#2{\textit{EPos}(#1,#2)}

\global\long\def\etal{et al.}
 \global\long\def\satisfies{\models}

\global\long\def\RQ{\ensuremath{\L_{{\rm {RQ}}}}}
 \global\long\def\CQ{\ensuremath{\L_{{\rm {CQ}}}}}

\global\long\def\LRQ{\ensuremath{\L_{{\rm {LRQ}}}}}
 \global\long\def\LCQ{\ensuremath{\L_{{\rm {LCQ}}}}}

\global\long\def\L{{\mathcal{L}}}

\global\long\def\compl#1{\textit{Compl}(#1)}

\global\long\def\complpi#1{\textit{Compl}_{\pi}\ensuremath{(#1)}}
\global\long\def\card#1{|#1|}

\global\long\def\localComp{table completeness\xspace}
 \global\long\def\LocalComp{Table completeness\xspace}
 \global\long\def\LC{TC\xspace}


\global\long\def\eset{\emptyset}
 \global\long\def\bigset#1{ \Bigl\{ #1 \Bigr\} }

\global\long\def\bigmid{\ \Big|\ }

\global\long\def\bag#1{\{\hspace{-0.13em}|\, #1 \,|\hspace{-0.13em}\}}

\global\long\def\incl{\subseteq}

\global\long\def\incls{\supseteq}

\global\long\def\col{\colon}

\global\long\def\angles#1{\langle#1\rangle}

\global\long\def\dd#1#2{#1_{1},\ldots,#1_{#2}}

\global\long\def\quotes#1{\lq\lq#1\rq\rq}

\global\long\def\wrt{w.r.t.\ }

\global\long\def\WLOG{wlog\xspace}

\global\long\def\core#1{\mathring{#1}}


\global\long\def\qif{\,{:}{-}\,}
 \global\long\def\union{\cup}
 \global\long\def\dom{\mathit{dom}}
 \global\long\def\var{\mathit{Var}}
 \global\long\def\lit#1{L_{#1}}
 \global\long\def\cpred#1{V_{#1}}
 \global\long\def\concond{G}
 \global\long\def\da{\av D}
 \global\long\def\di{\id D}
 \global\long\def\lcstmt{C}
 \global\long\def\cplstmt#1#2{#1 {\, \dot{\subseteq} \, } #2}
 \global\long\def\Compl#1{\mathit{Compl}{(}#1{)}}
 \global\long\def\Complb#1{\mathit{Compl}^{b}($#1$)}
 \global\long\def\Compls#1{\mathit{Compl}^{s}($#1$)}
 \global\long\def\Complsred#1{\mathit{Compl}^{sred}($#1$)}

\global\long\def\qcompl#1{\mathit{Compl}{(}#1{)}}
 \global\long\def\cplset{{\mathcal{C}}}
 \global\long\def\C{{\mathcal{C}}}
\global\long\def\Q{{\mathcal{Q}}}
\global\long\def\F{{\mathcal{F}}}
\global\long\def\S{{\mathcal{S}}}
\global\long\def\cplhat{\hat{C}}
 \global\long\def\chk{\check{}}
 \global\long\def\implies{\rightarrow}
 \global\long\def\And{\wedge}
\global\long\def\Or{\vee}
\global\long\def\findom#1#2#3{\mathit{Dom}(#1,#2,#3)}

\global\long\def\true{\textit{true}}

\global\long\def\false{\textit{false}}

\global\long\def\determines{\rightarrow\! \! \! \! \! \rightarrow}

\global\long\def\dotequiv{\ \dot{\equiv} \ }
\global\long\def\Wlog{W.l.o.g.\ }
 \global\long\def\ie{i.\ e.\ ,}
 \global\long\def\piptwo{\Pi_{2}^{P}}
 \global\long\def\sigptwo{\Sigma_{2}^{P}}

\global\long\def\queryset{{\mathcal{Q}}}

\global\long\def\schemaconstraintset{{\mathcal{F}}}
\global\long\def\scset{\schemaconstraintset}
\global\long\def\qset{\queryset}
 \global\long\def\val{\upsilon}
 \global\long\def\viewset{{\mathcal{V}}}

\global\long\def\tplsub#1#2{{\bar{#1}}_{#2}}
\global\long\def\aufz#1#2{#1_{1},\ldots,#1_{#2}}
\global\long\def\set#1{\{\,#1\,\}}
 \global\long\def\Onlyif{\lq\lq$\Rightarrow$\rq\rq\ \ }

\global\long\def\If{\lq\lq$\Leftarrow$\rq\rq\ \ }

\global\long\def\tc#1#2#3{\mathit{Compl}(#1;\,#2;\,#3)}




\global\long\def\customer{\texttt{customer}}
 \global\long\def\street{\texttt{street}}
 \global\long\def\city{\texttt{city}}
 \global\long\def\supplier{\texttt{supplier}}
 \global\long\def\partner{\texttt{partner}}
 \global\long\def\longTerm{\texttt{long\_term}}
 \global\long\def\contact{\texttt{contact}}
 \global\long\def\hometown{\texttt{hometown}}
 \global\long\def\iD{\texttt{id}}

\global\long\def\mrm#1{\mathrm{#1}}
 \global\long\def\aufz#1#2{#1_{1},\ldots,#1_{#2}}

\global\long\def\dist{\mathrm{dist}}
 \global\long\def\tpl#1{\bar{#1}}

\global\long\def\hotel{\mathrm{Hotel}}
 \global\long\def\F{\mathcal{F}}

\global\long\def\acompl{\mathit{CA}}

\global\long\def\di{D^{i}}

\global\long\def\da{D^{a}}

\global\long\def\restrict#1#2{#1\cap#2}

\global\long\def\shrink{\mathit{shrink}}

\global\long\def\complatom#1{\mathit{CA}_{#1}}

\global\long\def\pot{\mathrm{\mathit{Pot}}}

\global\long\def\comploutofrange{\textit{ComplOutOfRange}_{\F,\da}}

\global\long\def\buffer{\mathit{buffer}}

\global\long\def\set#1{\{#1\}}

\global\long\def\bigset#1{ \bigl\{ #1 \bigr\} }

\global\long\def\bigmid{\ \big|\ }

\global\long\def\ca{\mathrm{CA}}
\global\long\def\oor{\mathrm{OR}}
\global\long\def\coor{\mathrm{COOR}}
\global\long\def\cert{\mathit{cert}}
\global\long\def\poss{\mathit{poss}}
\global\long\def\imposs{\mathit{imposs}}
\global\long\def\notins{\mathit{notins}}
\global\long\def\complins{\mathit{complins}}

\global\long\def\query#1#2{#1:\!-\ #2}

\global\long\def\compl#1{\mathrm{Compl}(#1)}

\global\long\def\C{\mathcal{C}}

\global\long\def\S{\mathcal{S}}

\global\long\def\Onlyif{\lq\lq$\Rightarrow$\rq\rq\ \ }

\global\long\def\If{\lq\lq$\Leftarrow$\rq\rq\ \ }

\global\long\def\size#1{\mid\!#1\!\mid}

\global\long\def\V{\mathcal{V}}
\global\long\def\det{-\!\!-\!\!>\!\!>}

\global\long\def\L{\mathcal{L}}

\global\long\def\UCont{\mathsf{\mathit{UCont}}}
\global\long\def\Cont{\ensuremath{\mathit{Cont}}}
\global\long\def\ContU{\mathit{UCont}}
\global\long\def\ucont{\mathit{UCont}}
\global\long\def\ucont#1{\mathit{UCont}(#1)}
\global\long\def\ent#1{\mathit{EntC}(#1)}

\global\long\def\D{\mathcal{D}}

\global\long\def\Q{\mathcal{Q}}

\global\long\def\det{\determines}

\global\long\def\male{\mathit{male}}
\global\long\def\female{\mathit{female}}

\global\long\def\compl#1{\mathit{Compl}(#1)}

\global\long\def\tcqc{\mbox{\ensuremath{\mathit{TC-QC}}}}

\global\long\def\TCQC{\mbox{\ensuremath{\mathit{TC-QC}}}}
\global\long\def\LCQC{\mathit{TC-QC}}
\global\long\def\Cont{\mathit{Cont}}

\global\long\def\qcqc{\mathit{QC-QC}}

\global\long\def\tctc{\mathit{TC-TC}}

\renewcommand{\tcqc}{\mbox{\textit{TC-QC}}}
\renewcommand{\TCQC}{\mbox{\textit{TC-QC}}}
\renewcommand{\LCQC}{\mbox{\textit{TC-QC}}}
\renewcommand{\tctc}{\mbox{\textit{TC-TC}}}
\renewcommand{\LCLC}{\mbox{\textit{TC-TC}}}
\renewcommand{\qcqc}{\mbox{\textit{QC-QC}}}

\global\long\def\wrt{wrt.\ }

\global\long\def\sigmaptwo{\Sigma_{2}^{P}}

\global\long\def\Q{\mathcal{Q}}

\global\long\def\NP{\mathrm{NP}}
\global\long\def\coNP{\mathrm{coNP}}
\global\long\def\CONP{\coNP}
\global\long\def\PTIME{\mathrm{PTIME}}

\global\long\def\Count{\mathrm{COUNT}}
\global\long\def\Sum{\mathrm{SUM}}
\global\long\def\Max{\mathrm{MAX}}
\global\long\def\Min{\mathrm{MIN}}

\global\long\def\tcop{T_{\C}}

\global\long\def\bag#1{\{\!\{#1\}\!\}}

\global\long\def\A{\mathcal{A}}

\global\long\def\leftouterjoin{=\!\!\bowtie}

\global\long\def\proofr{"\negmedspace\!\Rightarrow\negmedspace\!"\!:}

\global\long\def\proofl{"\negmedspace\!\Leftarrow\negmedspace\!"\!:}

\global\long\def\query#1#2{#1{:\!-}\,#2}

\global\long\def\FILTER{\mathrm{FILTER}}

\global\long\def\UNION{\mathrm{UNION}}

\global\long\def\complstern#1{\complstar{#1}*}

\pagenumbering{arabic}

\cleardoublepage{}

$\ \ $

\cleardoublepage{}

\begin{center}
\begin{minipage}[t]{0.87\columnwidth}%
$\ $

$\ $

$\ $

\begin{center}
{\large{}Abstract}
\par\end{center}{\large \par}

$\ $

Knowledge about data completeness is essentially in data-supported
decision making. In this thesis we present a framework for metadata-based
assessment of database completeness. We discuss how to express information
about data completeness and how to use such information to draw conclusions
about the completeness of query answers. In particular, we introduce
formalisms for stating completeness for parts of relational databases.
We then present techniques for drawing inferences between such statements
and statements about the completeness of query answers, and show how
the techniques can be extended to databases that contain null values.
We show that the framework for relational databases can be transferred
to RDF data, and that a similar framework can also be applied to spatial
data. We also discuss how completeness information can be verified
over processes, and introduce a data-aware process model that allows
this verification.%
\end{minipage}
\par\end{center}

$\ \ $

\cleardoublepage{}

\chapter*{Publication Overview}

\section*{Conference Publications}
\begin{itemize}
\item Simon Razniewski and Werner Nutt. Adding completeness information
to query answers over spatial databases. International Conference
on Advances in Geographic Information Systems (SIGSPATIAL), 2014.
\item Simon Razniewski, Marco Montali, and Werner Nutt. Verification of
query completeness over processes. International Conference on Business
Process Management (BPM), pages 155\textendash{}170, 2013. Acceptance
rate 14,4\%.
\item Fariz Darari, Werner Nutt, Giuseppe Pirrò, and Simon Razniewski. Completeness
statements about RDF data sources and their use for query answering.
International Semantic Web Conference (ISWC), pages 66\textendash{}83,
2013. Acceptance rate 21,5\%.
\item Simon Razniewski and Werner Nutt. Assessing the completeness of geographical
data (short paper). British National Conference on Databases (BNCOD),
2013. Acceptance rate 47,6\%. 
\item Werner Nutt and Simon Razniewski. Completeness of queries over SQL
databases. Conference on Information and Knowledge Management (CIKM),
pages 902\textendash{}911, 2012. Acceptance rate 13,4\%.
\item Simon Razniewski and Werner Nutt. Completeness of queries over incomplete
databases. International Conference on Very Large Databases (VLDB),
2011. Acceptance rate 18,1\%. 
\end{itemize}

\section*{Other Publications}
\begin{itemize}
\item Simon Razniewski, Marco Montali, and Werner Nutt. Verification of
query completeness over processes {[}extended version{]}. CoRR, abs/1306.1689,
2013.
\item Werner Nutt, Simon Razniewski, and Gil Vegliach. Incomplete databases:
Missing records and missing values. DQDI workshop at DASFAA, 2012.%

\item Simon Razniewski and Werner Nutt. Checking query completeness over
incomplete data. Workshop on Logic and Databases (LID) at ICDT/EDBT,
2011. 
\end{itemize}
\tableofcontents{}

\chapter{Introduction}

Decision processes in businesses and organizations are becoming more
and more data-driven. To draw decisions based on data, it is crucial
to know about the reliability of the data, in order to correctly assess
the trustworthiness of conclusions. A core aspect of this assessment
is completeness: If data is incomplete, one may wrongly believe that
certain facts do not hold, or wrongly believe that a derived characteristics
are valid, while in fact the present data does not represent the complete
data set, which may have different characteristics. With the advent
of in-memory database systems that merge the traditionally separated
transaction processing (OLTP) and decision support (OLAP), data quality
and data completeness assessment are also topics that require more
timely treatment than in the traditional setting, where transaction
data and data warehouses are separate modules.

This work is motivated by a collaboration with the school department
of the Province of Bolzano, which faces data completeness problems
when monitoring the status of the school system. The administration
runs a central database into which all schools should regularly submit
core data about pupil enrollments, teacher employment, budgets and
similar. However, as there are numerous schools in the province and
as there are various paths to submit data (database clients, Excel-sheets,
phone calls, ...), data for some schools is usually late or data about
specific topics is missing. For instance, when assigning teachers
to schools for the next school year, it is often the case that the
data about the upcoming enrollments is not yet complete for some schools.
In practice, decisions are then based on estimates, for instance using
figures from the previous year. Completeness information would greatly
help the decision makers to know which figures are reliable, and which
need further checks and/or estimates.

In this thesis, we discuss a framework for metadata-based data completeness
assessment. In particular, we present:
\begin{enumerate}
\item an investigation into reasoning about the completeness of query answers
including decision procedures and analyses of the complexities of
the problems,
\item an extension of completeness reasoning to geographical databases and
to RDF data, 
\item a formalization of data-aware processes and methods to extract completeness
statements from such process descriptions.
\end{enumerate}
The rest of this chapter is structured as follows. In Section \ref{sec:intro-dq-and-data-completeness}
we give a general introduction to the area of Data Quality and to
the problem of data completeness. In Section \ref{sec:intro-motivating-example}
we illustrate the problem of data completeness management with the
example of school data management. Section \ref{sec:intro-contribution}
summarizes the contributions in this thesis and in Section \ref{sec:intro-overview}
we explain the outline of this thesis.

\section{Data Quality and Data Completeness}

\label{sec:intro-dq-and-data-completeness}

Quality is a vague term, and this also transfers to data quality.
A general definition that most people concerned with data quality
could agree with is that data are of high quality ``if they are fit
for their intended uses in operations, decision making and planning''
\cite{juranDQhandbook}.

Data quality has been a problem since long. With the emergence of
electronic databases in the 1960s, creation and storage of larger
volumes of data has become easier, leading also to more potential
data quality problems. Since the very beginning, data quality has
been an issue in relational databases, e.g., keys were introduced
in order to avoid duplicates \cite{codd-relational-model}. As an
independent research area, data quality has gained prominence in the
1990s. Three areas of data quality have received particular attention: 
\begin{enumerate}
\item The first area is \emph{duplicate detection}, which is also referred
to as entity resolution, and which is one of the most important operations
within data cleansing \cite{record-linkage-old,record-linkage-current-state}.
It seems that this is the most common practical problem that nearly
any business that manages customer relations will run into. 
\item The second area are \emph{guidelines and methodologies for assessing
and improving data quality}, with a prominent one being the TDQM methodology
\cite{wang1996-data-quality-core-paper,wang_dq}. 
\item The third area are approaches for dealing with \emph{data quality
in data integration} settings, which are particularly concerned with
integration techniques \cite{lenzerini-data-integration} or methods
for identifying data sources that best satisfy certain information
needs \cite{naumann_completeness}.
\end{enumerate}
Since the very beginning, relational databases have been designed
so that they are able to store incomplete data~\cite{codd_null}.
The theoretical foundations for representing and querying incomplete
information were laid by Imielinski and Lipski~\cite{imielinski_lipski_representation_systems}
who captured earlier work on \emph{Codd-,} \emph{c-} and \emph{v-tables}
with their conditional tables and introduced the notion of representation
system. Later work on incomplete information has focused on the concepts
of certain and possible answers, which formalize the facts that certainly
hold and that possibly hold over incomplete data \cite{fagin_data_exchange,lenzerini-data-integration,abiteboul1991representation}.
Still, most work on incompleteness focuses on querying incomplete
data, not on the assessment of the completeness. A possible reason
is that unlike consistency, completeness can hardly be checked by
looking at the data itself. If one does not have another complete
data source to compare with, then except for missing values, incompleteness
is not visible, as one cannot see what is not present. As well, incompleteness
can only be fixed if one has a more complete data source at hand that
can be used, which is usually not the case as then one could directly
use that more complete data source.

In turn, if metadata about completeness is present, an assessment
of the completeness of a data source is possible. As queries are the
common way to use data, we investigate in particular how such metadata
can be used to annotate query answers with completeness information.

In difference to data cleansing, we do not aim to improve data quality,
but instead aim to give a usage-specific information about data quality.
In difference to the guidelines and methodologies, we do not give
hints on how to improve data quality, but instead focus on the algorithmic
question of how to logically reason about completeness information.
In contrast to the approaches in the area of data integration, we
do not investigate source selection optimization or query semantics
over incomplete data.

There has been previous work on metadata-based completeness assessment
of relational databases. A first approach is by Motro \cite{motro_integrity},
who used information about complete query answers to assess the completeness
of other query answers. Later on, Halevy, introduced the idea of using
statements about the completeness of parts of a database to assess
the completeness of query answers \cite{levy_completeness}. In both
works, the problem of deciding whether a query answer is complete
based on completeness metadata could only be answered in a few trivial
cases. 

In the next section, we see a motivating story for this research.

\section{Motivation}

\label{sec:intro-motivating-example}

Consider the school district administrator Alice. Her job in the administration
is to monitor the impacts of new teaching methodologies, special integration
programs and socioeconomic situations on learning outcomes of students.

As last year a new, more interactive teaching methodology was introduced
for Math courses, Alice is interested to see whether that shows any
impact on the performance on the students. So, two weeks after the
end of the school year, she uses her cockpit software to find out
how many pupils have the grade A in math. 

The results show that at high schools, the number changed insignificantly
by +0.3\%, while at middle schools the tool reports a drop of 37\%
compared to the last year.

Alice is shocked about this figure, and quickly calls her assistant
Frank to investigate this drop.

Frank calls several middle schools and questions them about the performance
of their students in Math. All schools that he calls say that the
Math results of their students are as usual.

Confused from hearing that the schools report no problems, Frank suspects
that something must be wrong with the cockpit software. He therefore
sends an email to Tom, the database administrator.

Tom's answer is immediate:

\medskip{}

\texttt{$\ \ $Dude, forget those figures, we don't have the data
yet.\vspace{-2bp}
}

\texttt{$\ \ $-tom}

\medskip{}
As Frank tells this to Alice, she is relieved to hear that the new
teaching methodology is not likely to have wrecked the Math performance.
Nevertheless she is upset to not know which data in the cockpit she
can actually believe in and which not. Maybe the brilliant results
of last year's sport campaign (-80\% overweight students) were actually
also due to missing data?

Alice orders the IT department to find a solution for telling her
which numbers in the cockpit are reliable and which not.

A week later, at a focus group meeting organized by Tom, all participants
quickly agree that it is no problem to know which data in the database
is complete. They just have to keep track of the batches of data that
the schools submit. However, how can they turn this information into
something that Alice can interpret? They decide that they need some
kind of reasoner, which attaches to each number in Alice's cockpit
a green/red flag telling her whether the number is reliable. Developing
this reasoner becomes Tom's summer project (Chapter \ref{chap:general-reasoning}).

At the end of the summer break, the reasoner seems successfully implemented.
However just during the presentation to Alice, the reasoner crashes
with the following error message:

\medskip{}

\texttt{$\ \ $java.lang.NullpointerException(\textquotedbl{}Grade
is null\textquotedbl{})}

\medskip{}

As it turns out, a null value for a grade caused the reasoner to crash.
Thus back to coding, Tom gets stuck when thinking of whether to treat
such null values as incomplete attributes or as attributes that have
no value. As it turns out after consultations with the administration,
both cases happen: Some courses are just generally ungraded, while
in other cases the grade may not yet be decided. As the reasoner has
to know which case applies, Tom finds himself changing the database
schema to allow a disambiguation of the meaning of null values (Chapter
\ref{chap:nulls}).

In his free time, Tom is also a member of the OpenStreetMap project
for creating a free open map of the world. During some pub meeting
with other members of OpenStreetMap he mentions his work on database
completeness. The others get curious. Don't they have similar problems
when trying to track completeness information in OpenStreetMap? Tom
therefore invents reasoning methods for geographical data (Chapter
\ref{chap:osm}).

In the meantime, Alice is very satisfied with the new green and red
flags in her cockpit software. She has a chat about this with some
colleagues of the provincial administration, which are involved in
the ongoing data publishing projects as part of the Open Government
initiative in the province. They consult again Tom, who adapts his
reasoner to deal also with the RDF data format that is used for data
publishing, and the SPARQL query language (Chapter \ref{chap:lod}).

In their efforts to standardize processes at schools, the administration
introduces a workflow engine. It now becomes a question how information
about of the states of the workflows of the different schools can
be utilized to assess query completeness. Thus, they investigate how
business process state information can be used to automatically extract
information about completeness (Chapter \ref{chap:bpm}).

\section{Contribution}

\label{sec:intro-contribution}

The contributions of this thesis are threefold:

First, we introduce the reasoning problems of TC-TC entailment, TC-QC
entailment and QC-QC entailment and show that most variants of these
problems can be reduced to the well-studied problem of query containment,
thus enabling implementations that can make use of a broad set of
existing solutions.

Second, we show that completeness reasoning can also be done over
RDF data or over geographical data, and that the additional challenges
in this settings are manageable.

Third, we show that in settings where data is generated by formalized
and accessible processes, instead of just assuming that given completeness
statements are correct, one instead can verify the completeness of
query answers by looking at the status of the processes.

\section{Structure}

\label{sec:intro-overview}

This thesis is structured as follows:

In Chapter \ref{chap:prelims}, we introduce relational databases,
queries over such databases and formalisms for expressing completeness.
In Chapter \ref{chap:general-reasoning}, we introduce the core reasoning
problems and discuss their complexity. In Chapter \ref{chap:nulls},
we extend the core framework by allowing null values in databases.
In Chapter \ref{chap:osm}, we discuss completeness reasoning over
geographical databases. In Chapter \ref{chap:lod}, we discuss completeness
reasoning over RDF data. In Chapter \ref{chap:bpm}, we show how completeness
statements can be verified over data-centric business processes. In
Chapter \ref{chap:discussion}, we discuss implications of the presented
results, possible limitation, and future directions.

\chapter{Preliminaries}

\label{chap:prelims}

In this chapter we discuss concepts and notation that are essential
for the subsequent content. In Section \ref{sec:running-example},
we introduce the running example used throughout this thesis. We introduce
relational databases and their logical formalization in Section \ref{sec:prelim:relational-databases}.
In Section \ref{sec:prelim:queries}, we formalize queries over relational
databases, focusing on the positive fragment of SQL. In Section \ref{sec:prelim:incomplete-databases}
we introduce the model for incompleteness of databases, and in Sections
\ref{sec:prelim:query-completeness} and \ref{sec:prelim:table-completeness}
two important kinds of completeness statements about incomplete databases,
namely table completeness and query completeness statements. In Section
\ref{sec:prelims:query-containment}, we recall the problem of query
containment, onto which many later problems will be reduced, and review
its complexity.

The concepts presented in this chapter were already known in the literature,
though our presentation may be different. On the complexity of query
containment, we present three new hardness results.

\section{Running Example}

\label{sec:running-example}

For the examples throughout this thesis we consider a database about
schools. We assume that this database consists of the following\global\long\def\custstyle#1{\mathit{#1}}
 tables:\global\long\def\student{\custstyle{student}}
\global\long\def\name{\custstyle{name}}
\global\long\def\school{\custstyle{school}}
\global\long\def\class{\custstyle{class}}

\global\long\def\person{\custstyle{person}}
\global\long\def\gender{\custstyle{gender}}
\global\long\def\livesIn{\custstyle{livesIn}}
\global\long\def\town{\custstyle{town}}
\global\long\def\result{\custstyle{result}}
\global\long\def\grade{\custstyle{grade}}
\global\long\def\request{\custstyle{request}}
\global\long\def\subject{\custstyle{subject}}
\global\long\def\science{\custstyle{science}}
\global\long\def\code{\custstyle{code}}
\global\long\def\formteacher{\custstyle{formTeacher}}
\global\long\def\profile{\custstyle{profile}}

\begin{itemize}
\item $\student(\name,\class,\school)$
\item $\person(\name,\gender)$
\item $\livesin(\name,\town)$
\item $\class(\school,\code,\formteacher,\profile)$
\item $\result(\name,\subject,\grade)$
\item $\request(\name,\school)$
\end{itemize}
As this is just a toy example, we assume that persons are uniquely
identified by their name (in practice one would assign unique IDs
or use nearly unique combinations such as birth data and birth place).
The $\student$ table stores for each student the class and the school
that he/she is attending. The $\person$ table stores for persons
such as students and teachers their gender. The $\livesin$ table
stores for persons the town they are living in. The $\result$ table
stores for students the results they have obtained in different subjects.
The $\request$ table stores enrollment requests of current or upcoming
students at schools.\global\long\def\nameone{\mathit{John}}
\global\long\def\nametwo{\mathit{Mary}}
\global\long\def\namethree{\mathit{Bob}}
\global\long\def\schoolone{\mathit{HoferSchool}}
\global\long\def\schooltwo{\mathit{DaVinci}}

\section{Relational Databases}

\label{sec:prelim:relational-databases}

\emph{Relational databases} are a very widely used technology for
storing and managing structured data. The formal background of relational
databases is the relational data model. A \emph{database schema} consists
of a set of relations, where each relation consists of a relation
name and a set of attributes. A relation usually represents either
an entity type or a logical relation between entity types.

To model relational databases, we assume a set of relation symbols
$\Sigma$, each with a fixed arity. We call $\Sigma$ the \emph{signature}
or the database schema. We also assume a dense ordered domain of constants
$\dom$, that is, a domain like the rational numbers or like the set
of possible strings over some alphabet.
\begin{defn}
Given a fixed database schema $\Sigma$, a \emph{database instance}
$D$ is a finite set of ground atoms over $\dom$ with relation symbols
from $\Sigma$. 
\end{defn}
For a relation symbol $R\in\Sigma$ we write $R(D)$ to denote the
interpretation of $R$ in $D$, that is, the set of atoms in $D$
with relation symbol~$R$.
\begin{example}
\label{ex-relational-database}Consider that $\nameone$ is male and
a student in class 3a, $\nametwo$ is female and a student in class
5c, and $\namethree$ is male. One of the possible ways to store this
information would be to use two database tables, \emph{person} with
the attributes \emph{name} and \emph{gender,} and \emph{student} with
the attributes \emph{name}, \emph{class} and \emph{school}, as shown
in Figure \ref{table:database-representation-of-database}. Then this
database $D_{\school}$ would contain the following set of facts:
\begin{eqnarray*}
 & \{\!\!\!\!\!\! & \student(\nameone,3a,\schoolone),\student(\nametwo,5c,\schoolone),\\
 &  & \person(\namethree,\it{male}),\person(\nametwo,\it{female}),\person(\namethree,\it{male})\}
\end{eqnarray*}
\begin{table}
\centering{}%
\begin{tabular}{|ccc|>{\centering}p{1cm}|cc|}
\cline{1-3} \cline{5-6} 
\multicolumn{3}{|c|}{Student} &  & \multicolumn{2}{c|}{Person}\tabularnewline
name & class & school &  & name & gender\tabularnewline
\cline{1-3} \cline{5-6} 
$\nameone$ & \emph{3a} & $\schoolone$ &  & $\nameone$ & \emph{male}\tabularnewline
$\nametwo$ & \emph{5c} & $\schoolone$ &  & $\nametwo$ & \emph{female}\tabularnewline
\cline{1-3} 
\multicolumn{1}{c}{} &  & \multicolumn{1}{c}{} &  & $\namethree$ & \emph{male}\tabularnewline
\cline{5-6} 
\end{tabular}\caption{Database representation of the information from Example \ref{ex-relational-database}}
\label{table:database-representation-of-database}
\end{table}
There exist several extensions of the core relational model that cannot
be captured with the basic model described above. To mention here
are especially database constraints, data types, null values and temporal
data models:\end{example}
\begin{itemize}
\item Real-world databases almost always have keys and foreign keys defined,
which both are \emph{database constraints. }A discussion of database
constraints and their effects on completeness reasoning can be found
in \cite{razniewski:diplom:thesis:2010,savkovic:ICLP:2013}.
\item Attributes in relational databases are normally typed, which both
can make some techniques easier, because different types need not
be compared, or harder e.g., when reasoning about data types with
a nondense domain. We do not consider \emph{data types} in this work.
\item A special value for representing missing or nonexisting information,
the \emph{null value}, has, despite principled concerns about its
meaning, entered the standard relational model. A detailed analysis
of completeness reasoning with null values is contained in Chapter
\ref{chap:nulls}.
\item Facts in a database are often time-stamped with information about
their creation in the database or in the real-world or both, and there
exists a body of work on such temporal databases. Although some of
our results may be transferable, in this work, we do not consider
temporal databases.
\end{itemize}

\section{Database Queries}

\label{sec:prelim:queries}Queries are a structured way of accessing
data in databases. For relational databases, the SQL query language
is the standard. A basic SQL query specifies a set of attributes,
a set of referenced tables and selection conditions.
\begin{example}
\label{ex:conjunctive-query-sql}Consider again the database schema
from Example \ref{ex-relational-database}. An SQL query to find the
names of all male pupils can be written as:
\begin{eqnarray*}
 &  & \mbox{SELECT Student.name}\\
 &  & \mbox{FROM Student, Person}\\
 &  & \mbox{WHERE Student.name=Person.name AND}\\
 &  & \mbox{\ \ \ \ \ \ \ \ \ \ \ \ \ \ \ \ Person.gender='male';}
\end{eqnarray*}

While SQL queries may also contain negation and set difference, the
positive fragment of SQL, that is, the fragment without negation,
set difference, union and disjunction, has a correspondence in \emph{(positive)
conjunctive queries}. Conjunctive queries are a well established logical
query language. To formalize conjunctive queries, we need some definitions.
\end{example}
A \emph{condition} $G$ is a set of atoms using relations from $\Sigma$
and possibly the comparison predicates $=$, $<$ and $\leq$. As
common, we write a condition as a sequence of atoms, separated by
commas. 

A condition is \emph{safe} if each of its variables occurs in a relational
atom. A \emph{term} is either a constant or a variable.
\begin{defn}
[Conjunctive Query]A safe \emph{conjunctive query} is an expression
of the form $\query{Q(\tpl t_{1})}{B(\tpl t_{1},\tpl t_{2})}$, where
$B$ is a safe condition, and $\tpl t_{1}$ and $\tpl t_{2}$ are
vectors of terms such that every variable in $\tpl t_{1}$ also occurs
in some relational atom in $B$, or is equal to some constant. 
\end{defn}
We only consider safe queries and therefore omit this qualification
in the future. We often refer to the entire query by the symbol $Q$.
We call $Q(\tpl t_{1})$ the \emph{head}, $B$ the \emph{body}, the
variables in $\tpl t_{1}$ the \emph{distinguished variables}, and
the variables in $\tpl t_{2}$ the \emph{nondistinguished variables}
of $Q$. We generically use the symbol $L$ for the subcondition of
$B$ containing the relational atoms and $M$ for the subcondition
containing the comparisons.

A conjunctive query is called \emph{projection free}, if $\tpl t_{2}$
contains no variables. A conjunctive query is called \emph{boolean},
if $\tpl t_{1}$ contains no variables.
\begin{rem}
[Notation] For simplicity, in some following results we will use
conjunctive queries whose head contains only variables. We will write
such queries as $\query{Q(\tpl x)}{B(\tpl x,\tpl y)}$, where $\tpl y$
are the nondistinguished variables of $Q$. Any queries with constants
in the head can be transformed into a query with only variables in
the head, by adding equality atoms to the body. Therefore, this does
not introduce loss of generality.
\end{rem}
A conjunctive query is linear, if it contains every relation symbol
at most once. A conjunctive query is relational, if it does not contain
arithmetic comparisons besides $"="$.

\paragraph{Classes of Conjunctive Queries}

In the following, we will focus on four specific classes of conjunctive
queries:
\begin{enumerate}
\item linear relational queries ($\LRQ$): conjunctive queries without repeated
relation symbols and without comparisons, 
\item relational queries ($\RQ$): conjunctive queries without comparisons, 
\item linear conjunctive queries ($\LCQ$): conjunctive queries without
repeated relation symbols, 
\item conjunctive queries ($\CQ$). 
\end{enumerate}
Classes 1-3 as subclasses of class 4 are interesting, because they
capture special cases of conjunctive queries for which, as we will
show later, reasoning can be computationally easier.

For a query $\query{Q(\tpl x)}{B(\tpl x,\tpl y)}$, a \emph{valuation}
is a mapping from $\{\tpl x\cup\tpl y\}$ into $\dom$. Conjunctive
queries can be evaluated under set or under bag semantics, respectively
returning a set or a bag of tuples as answer. A valuation $v$ \emph{satisfies}
a query $\query QB$ over a database $D$, if $vB\subseteq D$, that
is, if the ground atoms in $vB$ are in the database $D$.

The result of evaluating a query $Q$ under bag semantics over a database
instance $D$ is denoted as $Q(D)$, and is defined as the following
bag of tuples:
\[
Q(D)=\bag{v\tpl x\mid\mbox{\ensuremath{v\ }is a valuation that satisfies \ensuremath{B}over \ensuremath{D}}}
\]

that is, every $v\tpl x$ appears as often as there are different
valuations $v$ satisfying $B$ over $D$.

If the query is evaluated under set semantics, all duplicate elements
are removed from the result, thus, the query answer contains each
tuple at most once and hence is a set of tuples.

Where necessary, we will mark the distinction between bag or set semantics
by appropriate superscripts $\cdot^{s}$ or $\cdot^{b}$.
\begin{example}
Consider again the query from Example \ref{ex:conjunctive-query-sql},
that asks for the names of all male students by joining the student
and the person table. As a conjunctive query, this query would be
written as follows:

\[
\query{Q(g)}{\student(n,c,s),\person(n,\male)}
\]

If we evaluate this query over the database $D_{\school}$ defined
in Example \ref{ex-relational-database}, the only valuation $v$
for which it holds that $B(v\tpl x,v\tpl y)\subseteq D$ is the valuation
$\{n\rightarrow\nameone,c\rightarrow3a,s\rightarrow\schoolone\}$.
Thus, the answer to $Q^{s}(D_{\school})$ and $Q^{b}(D_{\school})$
is $\{\nameone\}$.

Consider also another query $\query{Q(g)}{\person(n,g)}$ that asks
for all the genders of persons. Under bag semantics, the result $Q_{\gender}^{b}(D_{\school})$
would be $\set{\male,\male,\female}$. Under set semantics, the multiplicities
of the fact $\male$ would collapse and hence the answer to $Q_{\gender}^{s}(D_{\school})$
would be $\set{\male,\female}$.\end{example}
\begin{rem}
[Freezing] In many technical results that follow we will evaluate
queries over atoms that include variables. A technique called freezing
has been used in the literature for that purpose, which uses a freeze
mapping to replaces variables in atom with fresh constants. Where
it is clear from the context which atoms are the frozen ones we will
not make the freeze mapping explicit but allow the evaluation of queries
directly over atoms that include variables.

Formally, we extend the definition of valuations such that they may
also map variables into variables. Then, a valuation $v$ satisfies
a query $\query QB$ over a set of atoms $\A$, if $vB\subseteq\A$.
\end{rem}
Two queries are \emph{equivalent} under bag or set semantics, if they
return the same result over all possible database instances. A query
is \emph{minimal} under set semantics, if no relational atom can be
removed from its body without leading to a non-equivalent query.

A query $Q_{1}$ is \emph{contained} under set semantics in a query
$Q_{2}$, if for all database instances it holds that the result of
$Q_{1}$ is a subset of the result of $Q_{2}$. All containment techniques
used in this thesis are for queries under set semantics, therefore,
whenever in the following we talk about query containment, we refer
to containment under set semantics. More details on query containment
are in Section \ref{sec:prelims:query-containment}.

Conjunctive queries can also be extended to contain aggregate functions
such as COUNT, SUM, MIN or MAX, for which we discuss completeness
reasoning in Section \ref{sec:extensions:aggregate-queries}.

\section{Incomplete Databases}

\label{sec:prelim:incomplete-databases}

A core concept for the following theory is the partially complete
database or \emph{incomplete database}. The concept of incomplete
databases was first introduced by Motro in \cite{motro_integrity}.

Incompleteness needs a reference: If an available database is considered
to be incomplete, then, at least conceptually, some complete reference
must exist. Usually, the complete reference is the state of the real
world, of which available databases capture only parts and may therefore
be incomplete.

We model an incomplete database as a pair of database instances: one
instance that describes the complete state, and another instance that
describes the actual, possibly incomplete state. 
\begin{defn}
An \emph{incomplete database} is a pair $\pdb=(\di,\da)$ of two database
instances $\di$ and $\da$ such that $\da\incl\di$. 
\end{defn}
Following the notation introduced by Levy \cite{levy_completeness},
we call $\di$ the \emph{ideal} database, and $\da$ the \emph{available}
database. The requirement that $\da$ is included in $\di$ implies
that all facts in the available database are correct wrt.\ the ideal
database, however, some facts from the ideal database may be missing
in the available database.
\begin{example}
Consider a partial database $\pdb_{S}=(\di_{S},\da_{S})$ for a school
with two students, $\nameone$ and $\nametwo$, and one teacher, $\namethree$,
as follows:
\begin{align*}
\di_{S}=\  & \{\textit{student(\ensuremath{\nameone}, 3a, \ensuremath{\schoolone})},\textit{student(\ensuremath{\nametwo}, 5c, \ensuremath{\schoolone})},\\[0.7ex]
 & \ \textit{\ person(\ensuremath{\nameone}, male)},\textit{person(\ensuremath{\nametwo}, female)},\\[0.7ex]
 & \textit{\ \ person(\ensuremath{\namethree}, male)}\,\}\\
\da_{S}=\  & \di_{S}\setminus\set{\textit{person(\ensuremath{\namethree}, male)},\textit{student(\ensuremath{\nametwo}, 5c, \ensuremath{\schoolone})}},
\end{align*}
 that is, the available database misses the facts that $\nametwo$
is a student and that $\namethree$ is a person.
\end{example}
In the next two sections we define statements to express that parts
of the information in $\da$ are complete with regard to the ideal
database~ $\di$. We distinguish query completeness and \localComp
statements.

\section{Query Completeness}

\label{sec:prelim:query-completeness}

Because an available database may miss information wrt.\ the ideal
database, it is of interest to know whether a query over the available
database still gives the same answer as what holds in the ideal database.
Query completeness statements allow to express this fact:
\begin{defn}
[Query Completeness]Let $Q$ be a query. Then $\compl Q$ is a query
completeness statement.
\end{defn}
Query completeness statements refer to incomplete databases: 
\begin{defn}
A query completeness (QC) statement $\Compl Q$ for a query $Q$ is
\emph{satisfied} by an incomplete database $\pdb$, denoted as $\pdb\models\Compl Q$,
if $Q(\av D)=Q(\id D)$.
\end{defn}
Intuitively, a query completeness statement is satisfied if the available
database is complete enough to answer the query in the same way as
the ideal database would do. In the following chapters, query completeness
will be the key property for which we study satisfaction.
\begin{example}
Consider the above defined incomplete database $D_{S}$ and the query
\[
\query{Q_{1}(n)}{\textit{student}(n,c,s),\textit{person}(n,\male)},
\]
 asking for all male students. Over both, the available database $\da_{S}$
and the ideal database $\di_{S}$, this query returns exactly $\nameone$.
Thus, $D_{S}$ satisfies the query completeness statement for $Q_{1}$,
that is, 
\[
\pdb_{S}\models\Compl{Q_{1}}.\qquad
\]
\end{example}
\begin{rem}
[Terminology]In contrast to the terminology used by Motro \cite{motro_integrity},
our definition of completeness not only requires that the answer over
the ideal database is contained in the available one, also the converse
and thus the equivalence of the query answers. In the work of Motro,
the equivalence property was called query integrity, and consisted
of the query completeness and the symmetric property of query correctness. 

In our work, there is no need to separate completeness and integrity:
as we do not consider incorrect but only incomplete databases, and
as we consider only positive queries, the property of query correctness
always holds, and hence any positive query that satisfies the property
of query completeness in Motro's sense also satisfies the property
of query integrity in Motro's sense.
\end{rem}

\section{Table Completeness}

\label{sec:prelim:table-completeness}The second important statement
for talking about completeness are table completeness (TC) statements.
A \localComp statement allows one to say that a certain part of a
relation is complete, without requiring the completeness of other
parts of the database. Table completeness statements were first introduced
by Levy in \cite{levy_completeness}, where they were called local
completeness statements. 

A table completeness statement has two components, a relation $R$
and a condition $G$. Intuitively, it says that all tuples of the
ideal relation~ $R$ that satisfy the condition $G$ in the ideal
database are also present in the available relation $R$.
\begin{defn}
[Table Completeness] Let $\tpl t$ be a vector of terms, $R(\tpl t)$
be an $R$-atom and let $G$ be a condition such that $R(\tpl t),G$
is safe. Then $\Compl{R(\tpl t);G}$ is a \emph{\localComp statement}. 
\end{defn}
Observe that $G$ can contain relational and built-in atoms and that
we do not make any safety assumptions about $G$ alone.

Each table completeness statement has an \emph{associated query},
which is defined as $\query{Q_{R(\tpl t);G}(\tpl t)}{R(\tpl t),G}$.
We often refer to $R$ as the head of the statement and $G$ as the
condition.
\begin{defn}
\label{def:satisfaction-of-tc} Let $C=\Compl{R(\tpl t);G}$ be a
table completeness statement and $\pdb=(\id D,\av D)$ be an incomplete
database. Then $C$ is satisfied over $\D$, written $\pdb\models\Compl{R(\tpl t);G}$,
if 
\[
Q_{R(\tpl t);G}(\id D)\incl R(\av D).
\]

\end{defn}
That is, the statement is satisfied if all $R$-facts that satisfy
the condition $G$ over the ideal database are also contained in the
available database.

The ideal database instance $\id D$ is used to determine those tuples
in the ideal version $R(\di)$ that satisfy $G$. Then, for satisfaction
of the completeness statement, all these facts have to be present
also in the available version $R(\da)$. In the following, we will
denote a TC statement generically as $C$ and refer to the associated
query simply as~$Q_{C}$.

The semantics of TC statements can also be expressed using a rule
notation like the one that is used for instance for tuple-generating
dependencies (TGDs) (see~\cite{fagin_data_exchange}). As a preparation,
we introduce two copies of our signature $\SIG$, which we denote
as $\SIGi$ and $\SIGa$. The first contains a relation symbol $\id R$
for every $R\in\SIG$ and the second contains a symbol $\av R$. Now,
every incomplete database $(\di,\da)$ can naturally be seen as a
$\SIGi\cup\SIGa$-instance. We extend this notation also to conditions
$G$. By replacing every occurrence of a symbol $R$ by $\id R$ (resp.\ $\av R$),
we obtain $\id G$ (resp.\ $\av G$) from $G$. Similarly, we define
$\id Q$ and $\av Q$ for a query $Q$. With this notation, $(\di,\da)\models\compl Q$
iff $\id Q(\di)=\av Q(\da)$. Now, we can associate to each statement
$C=\compl{R(\tpl t);G}$, a corresponding TGD $\rho_{C}$ as 
\[
\rho_{C}\col\id R(\tpl t),\id G\rightarrow\av R(\tpl t)
\]
from the schema $\id{\Sigma}$ to the schema $\av{\Sigma}$. Clearly,
for every TC statement $C$, an incomplete database satisfies $C$
in the sense defined above if and only if it satisfies the rule $\Rule C$
in the classical sense of rule satisfaction.
\begin{example}
In the incomplete database $\pdb_{S}$ defined above, we can observe
that in the available relation \textit{person}, the teacher \textit{$\namethree$}
is missing, while all students are present. Thus, \textit{person}
is complete for all students. The available relation \textit{student}
contains \textit{Hans}, who is the only male student. Thus, \textit{student}
is complete for all male persons. Formally, these two observations
can be written as \localComp statements: 
\begin{align*}
 & C_{1}=\Compl{\textit{person}(n,g);\textit{student}(n,c,s)},\\
 & C_{2}=\Compl{\textit{student}(n,c,s);\textit{person}(n,\textit{male})},
\end{align*}
 which, as seen, are satisfied by the incomplete database $\pdb_{S}$.
The TGD $\rho_{C_{2}}$ corresponding to the statement $C_{2}$ would
be 
\[
\id{\student}(n,c,s),\id{\person}(n,\male)\longrightarrow\av{\student}(n,c,s).
\]
\end{example}
\begin{rem}
[Notation]Analogous to conjunctive queries, without loss of generality,
TC statements that contain constants in their head can be rewritten
such that they contain no constants in their head, using additional
equality atoms. Thus, whenever in the following we assume that TC
statements have only variables in the head, this neither introduces
loss of generality.\end{rem}
\begin{example}
Consider the TC statement $\compl{\person(n,\male);\emptyset}$. Then
this statement is equivalent to the statement $\compl{\person(n,g);\allowbreak g=\male}$.
\end{example}
Table completeness cannot be expressed by query completeness, because
the latter requires completeness of the relevant parts of all the
tables that appear in the statement, while the former only talks about
the completeness of a single table.
\begin{example}
As an illustration, consider the \localComp statement $C_{1}$ that
states that \textit{person} is complete for all students. The corresponding
query $Q_{C_{1}}$ that asks for all persons that are students is
\[
\query{Q_{C_{1}}(n,g)}{\textit{person}(n,g),\textit{student}(n,c,s)}.
\]
 Evaluating $Q_{C_{1}}$ over $\di_{S}$ gives the result $\set{\textit{\ensuremath{\nameone}},\,\nametwo}$.
However, evaluating it over $\da_{S}$ returns only $\set{\nameone}$.
Thus, $\pdb_{S}$ does not satisfy the completeness of the query $Q_{C_{1}}$
although it satisfies the \localComp statement $C_{1}$.

{} As we will discuss in Chapter \ref{chap:general-reasoning}, query
completeness for queries under bag semantics can be expressed using
table completeness, while under set semantics generally it cannot
be expressed.
\end{example}

\section{Complexity of Query Containment}

\label{sec:prelims:query-containment}

As many complexity results in this thesis will be found by reducing
query containment to completeness reasoning or vice versa, in this
section we review the problem in detail, list known complexity results
and complete the picture by giving hardness results for three asymmetric
containment problems.

Remember that a query $Q_{1}$ is contained (under set semantics)
in a query $Q_{2}$, if over all database instances $D$ the set of
answers $Q_{1}^{s}(D)$ is contained in the set of answers $Q_{2}^{s}(D)$.
\begin{defn}
Given conjunctive query languages $\L_{1}$ and $\L_{2}$, the problem
\begin{itemize}
\item $\Cont(\L_{1,}\L_{2})$ denotes the problem of deciding whether a
query from language $\L_{1}$ is contained in a query from language
$\L_{2}$,
\item $\UCont(\L_{1,}\L_{2})$ denotes the problem of deciding whether a
query from language $\L_{1}$ is contained in a \emph{union} of queries
from language $\L_{2}$.
\end{itemize}
\end{defn}
A commonly used technique for deciding containment between relational
queries is checking for the existence of homomorphisms. The NP-completeness
of query containment for relational conjunctive queries was first
shown by Chandra and Merlin in \cite{ChandraM77}. Results regarding
the $\piptwo$-completeness of containment with comparisons were first
published by van der Meyden in \cite{meyden-Complexity_querying_ordered_domains-pods}.

To complete the picture for the languages $\{\LRQ,\LCQ,\RQ,\CQ\}$
introduced in Section \ref{sec:prelim:queries}, we also need to consider
asymmetric containment problems, which have received little attention
in the literature so far. To the best of our knowledge, the results
that will follow have not been shown in the literature before \cite{Razniewski:Nutt-Compl:of:Queries-VLDB11}.

We show the hardness of $\ContU(\LRQ,\LCQ)$, $\Cont(\RQ,\LRQ)$ and
$\Cont(\RQ,\LCQ)$ by a reduction of \textit{(i)} 3-UNSAT, \textit{(ii)}
3-SAT, and \textit{(iii)} $\forall\exists$3-SAT, respectively.

\subsection{$\ContU(\LRQ,\LCQ)$ is $\CONP$-hard}

\label{coNP-hardness-ContU(LRQ,LCQ)}

Containment checking for a linear conjunctive query in a linear conjunctive
query is in PTIME, and the same holds also when considering a union
of linear conjunctive queries as container. Thus, the problem $\ContU(\LRQ,\LCQ)$
is the minimal combination that leads to a jump into coNP.

A well-known coNP-complete problem is 3-UNSAT. A 3-SAT formula is
unsatisfiable exactly if its negation is valid.

Let $\phi$ be a negated 3-SAT formula in disjunctive normal form
as follows: 
\[
\phi=\gamma_{1}\Or\ldots\Or\gamma_{k},
\]
 where each clause $\gamma_{i}$ is a conjunction of literals $l_{i1},l_{il2}$
and $l_{i3}$, and each literal is a positive or negated propositional
variable $p_{i1},p_{i2}$ or $p_{i3}$, respectively.

Using new relation symbols $C_{1}$ to $C_{k}$, we define queries
$Q$, $Q'_{1},\ldots,Q'_{k}$ as follows: 
\[
\query{Q()}{C_{1}(p_{11},p_{12},p_{13}),\ldots,C_{k}(p_{k1},p_{k2},p_{k3})},
\]
 
\[
\query{Q'_{i}()}{C_{i}(x_{1},x_{2},x_{3}),x_{1}\circ_{1}0,x_{2}\circ_{2}0,x_{3}\circ_{3}0},
\]
 where $\circ_{j}=``\geq"$ if $l_{ij}$ is a positive proposition
and $\circ_{j}=``<"$ otherwise.

Clearly, $Q$ is a linear relational query and the $Q'_{i}$ are linear
conjunctive queries.
\begin{lem}
Let $\phi$ be a propositional formula in disjunctive normal form
with exactly 3 literals per clause, and $Q$ and $Q_{1}$ to $Q_{k}$
be constructed as above. Then 
\[
\phi\mbox{ is valid \ \ iff \ \ }Q\subseteq\bigcup_{i=1..k}Q'_{i}.
\]
\end{lem}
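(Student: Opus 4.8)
The plan is to prove the two directions by translating between truth assignments for $\phi$ and databases over the signature $\{C_1,\dots,C_k\}$, exploiting that the domain $\dom$ is densely ordered so that the conditions $x \geq 0$ and $x < 0$ partition $\dom$ into a ``true'' half and a ``false'' half. Throughout, I use that $Q$ and all $Q'_i$ are Boolean, so (under set semantics) $Q\subseteq\bigcup_{i=1..k}Q'_i$ holds iff for every database $D$ with $Q(D)\neq\eset$ there is some $i$ with $Q'_i(D)\neq\eset$; and that $Q'_i(D)\neq\eset$ means exactly that $C_i(D)$ contains a tuple $(a_1,a_2,a_3)$ with $a_j \circ_j 0$ for $j=1,2,3$.

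For the direction ``$\phi$ valid $\Rightarrow$ $Q\subseteq\bigcup_{i=1..k}Q'_i$'': given a database $D$ with $Q(D)\neq\eset$, fix a valuation $v$ of the propositional variables witnessing this, so $C_i(vp_{i1},vp_{i2},vp_{i3})\in D$ for every $i$. Define a truth assignment $\tau$ by $\tau(p)=\true$ iff $v(p)\geq 0$. By validity, $\tau$ satisfies some clause $\gamma_i$, hence each literal $l_{ij}$ is true under $\tau$. A case split on the polarity of $l_{ij}$ then shows $v(p_{ij})\circ_j 0$: if $l_{ij}$ is positive, $l_{ij}=p_{ij}$ so $\tau(p_{ij})=\true$, hence $v(p_{ij})\geq 0$, and $\circ_j$ is $``\geq"$; if $l_{ij}$ is negative, symmetrically $v(p_{ij})<0$ and $\circ_j$ is $``<"$. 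Thus the tuple $(vp_{i1},vp_{i2},vp_{i3})\in C_i(D)$ satisfies all comparisons of $Q'_i$, so $Q'_i(D)\neq\eset$, and $\bigl(\bigcup_{i} Q'_i\bigr)(D)\neq\eset$.

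For the converse ``$Q\subseteq\bigcup_{i=1..k}Q'_i$ $\Rightarrow$ $\phi$ valid'': take an arbitrary truth assignment $\tau$ and build a database $D_\tau$ by choosing, for each propositional variable $p$, a constant $d_p\in\dom$ with $d_p\geq 0$ if $\tau(p)=\true$ and $d_p<0$ if $\tau(p)=\false$, and then putting exactly the single atom $C_i(d_{p_{i1}},d_{p_{i2}},d_{p_{i3}})$ into relation $C_i$, for every $i$. Then $Q(D_\tau)\neq\eset$ via the valuation $p\mapsto d_p$, so by containment some $Q'_i(D_\tau)\neq\eset$. Since $C_i(D_\tau)$ contains only the tuple $(d_{p_{i1}},d_{p_{i2}},d_{p_{i3}})$, that tuple must satisfy $d_{p_{ij}}\circ_j 0$ for all $j$; reading this back through the definitions of $\circ_j$ and of $d_p$ shows that each literal $l_{ij}$ is true under $\tau$, so $\tau$ satisfies $\gamma_i$ and hence $\phi$. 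As $\tau$ was arbitrary, $\phi$ is valid.

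I do not expect a serious obstacle. The only points that need care are: (a) making the polarity case analysis line up the comparison $\circ_j$ with the sign regime of $d_p$ (resp.\ $v(p)$) in both directions; and (b) observing in the converse that each $C_i$ holds exactly one tuple in $D_\tau$, so the existential quantification in the body of $Q'_i$ cannot jump to some unintended tuple --- this is precisely what forces the assignment coded by $D_\tau$, and also why the construction must use fresh relation symbols $C_1,\dots,C_k$ rather than one shared relation. The fact that a propositional variable may occur in several clauses is harmless, since the same constant $d_p$ (resp.\ value $v(p)$) is used at all of its occurrences.
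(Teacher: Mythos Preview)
Your proof is correct and follows essentially the same approach as the paper: both directions translate between truth assignments and valuations via the sign of values relative to $0$, using validity to find a satisfied clause in one direction and building a canonical one-tuple-per-relation database from an assignment in the other. Your write-up is considerably more detailed than the paper's terse argument, and your explicit observations about the single tuple in each $C_i(D_\tau)$ and about shared propositional variables are exactly the points the paper leaves implicit.
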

\begin{proof}
Observe first that the comparisons in the $Q'_{i}$ correspond to
the disambiguation between positive and negated propositions, that
is, whenever a variable is interpreted as a constant greater or equal
zero, this corresponds to the truth value assignment $\true$, while
less zero corresponds to $\false$. 

$\proofr$ If $\phi$ is valid, then for every possible truth value
assignment of the propositional variables $p_{ij}$, one of the clauses
$C_{i}$ evaluates to true. Whenever $Q$ returns true over some database
instance, the query $Q'_{i}$ that corresponds to the clause $C_{i}$
that evaluates to true under that assignment, returns true as well. 

$\proofl$ If the containment holds, then for every instantiation
of $Q$ we find a $Q'_{i}$ that evaluates to true as well. This $Q'_{i}$
corresponds to the clause $C_{i}$ of $\phi$ that evaluates to true
under that variable assignment. 
\end{proof}
We therefore conclude the following hardness result.
\begin{cor}
The problem $\ContU(\LRQ,\LCQ)$ is $\CONP$-hard.\label{cor:hardness-lrq-lcq}
\end{cor}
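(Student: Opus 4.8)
The plan is to obtain the corollary as an immediate consequence of the lemma by packaging the construction into a polynomial-time many-one reduction from 3-UNSAT, which is $\CONP$-complete. First I would observe that a 3-CNF formula $\psi = \delta_1 \And \cdots \And \delta_k$, with each $\delta_i$ a disjunction of three literals, is unsatisfiable if and only if its negation is valid; by De Morgan's laws that negation is $\phi = \gamma_1 \Or \cdots \Or \gamma_k$, where $\gamma_i$ is the conjunction of the three complementary literals of $\delta_i$. Hence $\phi$ is exactly a formula in disjunctive normal form with three literals per clause, which is precisely the input shape assumed by the lemma.

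Given $\psi$, the reduction outputs the instance $(Q, \set{Q'_1, \ldots, Q'_k})$ of $\ContU(\LRQ, \LCQ)$, where $Q$ and the $Q'_i$ are built from $\phi$ over fresh relation symbols $C_1, \ldots, C_k$ exactly as in the paragraph preceding the lemma. This is computable in time linear in $|\psi|$: $Q$ has one body atom per clause, and each $Q'_i$ has one relational atom together with three comparison atoms. As already noted, $Q$ has no repeated relation symbol and no comparison, so $Q \in \LRQ$, and each $Q'_i$ uses its single relation $C_i$ once, so $Q'_i \in \LCQ$; thus the output is a legal instance of the claimed problem. By the lemma, $\psi$ is unsatisfiable iff $\phi$ is valid iff $Q \subseteq \bigcup_{i=1..k} Q'_i$, which is exactly what $\ContU(\LRQ, \LCQ)$ decides, so the reduction is correct and $\ContU(\LRQ, \LCQ)$ is $\CONP$-hard.

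I do not expect a genuine obstacle here: all the combinatorial content lives in the lemma, and the corollary is a routine wrapper. The only points needing a moment's care are confirming that the negation of a 3-CNF formula is a DNF with exactly three literals per clause, so that the lemma applies verbatim, and checking the syntactic memberships $Q \in \LRQ$ and $Q'_i \in \LCQ$ against the definitions in Section~\ref{sec:prelim:queries}; both are immediate.
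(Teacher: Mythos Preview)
Your proposal is correct and follows essentially the same approach as the paper: the corollary is stated immediately after the lemma with the words ``We therefore conclude the following hardness result,'' and your write-up simply spells out the implicit polynomial-time reduction from 3-UNSAT that the paper sets up in the paragraphs preceding the lemma. The only content you add beyond the paper is making explicit the De~Morgan step and the syntactic checks $Q\in\LRQ$, $Q'_i\in\LCQ$, which the paper states without elaboration.
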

The next problem is to find the step into NP.

\subsection{$\Cont(\RQ,\LRQ)$ is $\NP$-hard}

Deciding containment of a linear relational query in a linear relational
query is in PTIME, as there is only one possibility to find a homomorphism.
The same also holds when checking containment of a relational query
in a relational query. As we show below, the complexity becomes NP
as soon as repeated relation symbols may occur in the containee query.

\label{NP-hardness-Cont(RQ,LRQ)}

Let $\phi$ be a 3-SAT formula in conjunctive normal form as follows:
\[
\phi=\gamma_{1}\wedge\ldots\wedge\gamma_{k},
\]
 where each clause $\gamma_{i}$ is a conjunction of literals $l_{i1},l_{i2}$
and $l_{i3}$, and each literal is a positive or negated propositional
variable $p_{i1},p_{i2}$ or $p_{i3}$, respectively.

Using new relation symbols $F_{1}/3$ to $F_{k}/3$, we define queries
$Q$ and $Q'$ as follows: 
\[
\query{Q()}{F_{1}^{(7)},\ldots,F_{k}^{(7)}},
\]
 where $F_{i}^{(7)}$ is a conjunction of seven ground facts that
use the relation symbol $F_{i}$ and all those seven combinations
of $\set{0,1}$ as arguments, under which, when 0 is considered as
the truth value false and 1 as the truth value true, the clause $\gamma_{i}$
evaluates to true, and 
\[
\query{Q'()}{C_{1}(p_{11},p_{12},p_{13}),\ldots,C_{k}(p_{k1},p_{k2},p_{k3})}.
\]

Clearly, $Q$ is a relational query and $Q'$ a linear relational
query.
\begin{lem}
Let $\phi$ be a 3-SAT formula in conjunctive normal form and let
$Q$ and $Q'$ be constructed as shown above. Then 
\[
\phi\mbox{ is satisfiable \ \ iff \ \ }Q\subseteq Q'.
\]
\end{lem}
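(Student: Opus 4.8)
The plan is to use the Chandra--Merlin homomorphism characterization: $Q\subseteq Q'$ holds exactly when there is a homomorphism from the body of $Q'$ into the (ground) body of $Q$, and such homomorphisms are in bijection with satisfying assignments of $\phi$. I would then spell this out as the two implications, in the style of the previous lemma. The bridge is the identification of the constant $0$ with \false{} and $1$ with \true{}: under this reading a map $\beta$ from the propositional variables of $\phi$ to $\{0,1\}$ is at once a truth assignment and a valuation for $Q'$, since the query variables of $Q'$ are precisely the propositional variables of $\phi$ and its atoms are $F_i(p_{i1},p_{i2},p_{i3})$, one per clause (the same variable being reused across the clauses in which it occurs). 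The one property driving everything is built into the definition of $F_i^{(7)}$: for every such $\beta$, the ground atom $F_i(\beta(p_{i1}),\beta(p_{i2}),\beta(p_{i3}))$ lies in the body of $Q$ iff $\beta$ makes $\gamma_i$ true, because $F_i^{(7)}$ lists exactly the seven tuples over $\{0,1\}$ that satisfy $\gamma_i$.

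For ``$\Leftarrow$'' (containment implies satisfiability), I would evaluate both queries on the canonical instance $D_Q$ whose facts are exactly the ground atoms in the body of $Q$. Then $Q(D_Q)$ is true, so by $Q\subseteq Q'$ also $Q'(D_Q)$ is true, witnessed by a valuation $v$ of the variables of $Q'$ with $v(\mathrm{body}(Q'))\subseteq D_Q$. Every fact of $D_Q$ has all its arguments in $\{0,1\}$ and every propositional variable of $\phi$ occurs in some atom of $Q'$, so $v$ maps into $\{0,1\}$ and hence is a truth assignment; by the property above it satisfies every clause $\gamma_i$, so $\phi$ is satisfiable.

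For ``$\Rightarrow$'' (satisfiability implies containment), suppose $\beta$ satisfies $\phi$ and let $D$ be any instance with $Q(D)$ true, i.e.\ with $\mathrm{body}(Q)\subseteq D$ (recall the body of $Q$ is ground). Using $\beta$ itself as a valuation for $Q'$, the property above gives $F_i(\beta(p_{i1}),\beta(p_{i2}),\beta(p_{i3}))\in\mathrm{body}(Q)\subseteq D$ for every $i$, so $\beta$ witnesses that $Q'(D)$ is true; since $D$ was arbitrary, $Q\subseteq Q'$.

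The main point to get right --- and the reason the reduction uses $k$ distinct relation symbols $F_1,\dots,F_k$ rather than one --- is preventing a valuation from ``cheating'' by sending the $Q'$-atom for clause $i$ onto a satisfying tuple recorded for a different clause: with separate symbols, the image of the $i$-th atom of $Q'$ is forced into the block $F_i^{(7)}$, so each clause has to be satisfied on its own. The remaining checks, that $Q$ is a relational query, $Q'$ a linear relational query, and that both have size polynomial in $|\phi|$, are immediate from the construction.
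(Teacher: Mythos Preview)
Your proof is correct and follows essentially the same route as the paper's: both directions hinge on the correspondence between homomorphisms from the body of $Q'$ into the ground body of $Q$ and satisfying assignments of $\phi$, with the canonical instance of $Q$ serving as the witness for the ``$\Leftarrow$'' direction. Your write-up is more explicit than the paper's (which is quite terse), and your closing remark about why distinct relation symbols $F_1,\dots,F_k$ are needed to prevent cross-clause ``cheating'' is a useful clarification that the paper leaves implicit.
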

\begin{proof}
$\proofr$ If $\phi$ is satisfiable, there exists an assignment of
truth values to the propositions, such that each clause evaluates
to true. This assignment can be used to show that whenever $Q$ returns
a result, every $C_{i}$ in $Q'$ can be mapped to one ground instance
of that predicate in $Q$. 

$\proofl$ If the containment holds, $Q'$ must be satisfiable over
a database instance that contains only the ground facts in $Q$. The
mapping from the variables in $Q'$ to the constant $\set{0,1}$ gives
a satisfying assignment for the truth values of the propositions in
$\phi$. 
\end{proof}
We therefore conclude the following hardness result.
\begin{cor}
The problem $\Cont(\RQ,\LRQ)$ is $\NP$-hard.\label{cor:hardness-containment-rq-lrq}
\end{cor}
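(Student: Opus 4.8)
The plan is to read the corollary off directly from the preceding lemma: the lemma already carries all the combinatorial content, and what remains for the corollary is only to observe that the construction of $Q$ and $Q'$ from $\phi$ is a genuine polynomial-time many-one reduction from a known $\NP$-complete problem. I would take $3$-SAT as the source, relying on its $\NP$-completeness (Cook--Levin, or Karp's list). Given a $3$-CNF formula $\phi=\gamma_1\wedge\cdots\wedge\gamma_k$, the reduction sends $\phi$ to the pair $(Q,Q')$ built above; by the lemma, $\phi$ is satisfiable iff $Q\subseteq Q'$, so $\phi\mapsto(Q,Q')$ is a correct many-one reduction from $3$-SAT to the decision problem ``is $Q$ contained in $Q'$?''. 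It then only remains to check that this map is computable in polynomial time and that its output is a legitimate instance of $\Cont(\RQ,\LRQ)$.

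For the size bound: $Q'$ is a single conjunction of $k$ relational atoms $F_i(p_{i1},p_{i2},p_{i3})$, one per clause, hence of size linear in $|\phi|$. The query $Q$ consists, for each clause $\gamma_i$, of the seven ground atoms over the fresh symbol $F_i$ encoding the satisfying Boolean assignments of $\gamma_i$; these seven triples depend only on the sign pattern of $\gamma_i$ and can be written down in constant time, so $Q$ has $7k$ atoms and is again of size linear in $|\phi|$. The symbols $F_1,\dots,F_k$ are pairwise distinct, so no relation symbol is repeated in $Q'$, and $Q'$ uses only equality-free relational atoms: thus $Q'\in\LRQ$. Likewise $Q$ contains only relational atoms and no comparisons, so $Q\in\RQ$ --- it does repeat each $F_i$ seven times, which is precisely why we cannot also place it inside $\LRQ$. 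Hence the reduction outputs a valid instance of $\Cont(\RQ,\LRQ)$, and combining this with the lemma gives $3\text{-SAT}\leq_p\Cont(\RQ,\LRQ)$, so $\Cont(\RQ,\LRQ)$ is $\NP$-hard.

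I do not expect a real obstacle here, since the mathematical substance lives entirely in the lemma; the only point that deserves a word of care --- and it is inherited from the lemma rather than new to the corollary --- is that one should assume each clause of $\phi$ has three literals over three \emph{distinct} propositional variables (a standard, polynomial normalization of $3$-SAT), which is what makes ``the seven satisfying triples'' in the definition of $Q$ well defined. With that understood, the proof of the corollary is in essence the single sentence: the construction above is a polynomial-time reduction from $3$-SAT, so $\Cont(\RQ,\LRQ)$ inherits $\NP$-hardness.
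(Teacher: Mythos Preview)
Your proposal is correct and takes essentially the same approach as the paper, which treats the corollary as an immediate consequence of the preceding lemma (the paper simply writes ``We therefore conclude the following hardness result''). You have in fact been more careful than the paper in spelling out why the construction is polynomial-time and why $Q\in\RQ$ and $Q'\in\LRQ$, and your remark about normalizing 3-SAT to have three distinct variables per clause is a sensible clarification the paper leaves implicit.
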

Next we will discuss when containment reasoning becomes $\piptwo$-hard.

\subsection{$\Cont(\RQ,\LCQ)$ is $\piptwo$-hard}

\label{PiP2-hardness-cont(RQ,LCQ)}

It is known that containment of conjunctive queries is $\piptwo$-hard
\cite{meyden-Complexity_querying_ordered_domains-pods}. As we show
below, it is indeed sufficient to have comparisons only in the container
query in order to obtain $\piptwo$-hardness.

Checking validity of a universally-quantified 3-SAT formula is a $\piptwo$-complete
problem. A universally-quantified 3-SAT formula $\phi$ is a formula
of the form 
\[
\forall\aufz xm\exists\aufz yn:\gamma_{1}\wedge\ldots\wedge\gamma_{k},
\]
 where each $\gamma_{i}$ is a disjunction of three literals over
propositions $p_{i1}$, $p_{i2}$ and $p_{i3}$, and $\set{\aufz xm}\union\set{\aufz yn}$
are propositions.

Let the $C_{i}$ be again ternary relations and let $R_{i}$ and $S_{i}$
be binary relations. We first define conjunctive conditions $G_{j}$
and $G'_{j}$ as follows: 
\begin{align*}
 & G_{j}=R_{j}(0,w_{j}),R_{j}(w_{j},1),S_{j}(w_{j},0),S_{j}(1,1),\\
 & G'_{j}=R_{j}(y_{j},z_{j}),S_{j}(z_{j},x_{j}),y_{j}\leq0,z_{j}>0.
\end{align*}

Now we define queries $Q$ and $Q'$ as follows: 
\[
\query{Q()}{G_{1},\ldots,G_{k},F_{1}^{(7)},\ldots,F_{m}^{(7)}},
\]
 where $F_{i}^{(7)}$ stands for the 7 ground instances of the predicate
$C_{i}$ over $\set{0,1}$, under which, when 0 is considered as the
truth value false and 1 as the truth value true, the clause $\gamma_{i}$
evaluates to true, and 
\[
\query{Q'()}{G'_{1},\ldots,G'_{m},C_{1}(p_{11},p_{12},p_{13}),\ldots,C_{k}(p_{k1},p_{k2},p_{k3})}.
\]

Clearly, $Q$ is a relational query and $Q'$ is a linear conjunctive
query.
\begin{lem}
Let $\phi$ be a universally quantified 3-SAT formula as shown above
and let $Q$ and $Q'$ be constructed as above. Then 
\[
\phi\mbox{ is valid \ \ iff \ \ }Q\subseteq Q'.
\]
\end{lem}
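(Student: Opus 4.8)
The plan is to reason directly from the semantics of containment: $Q\subseteq Q'$ holds iff for every database $D$, truth of the Boolean query $Q$ on $D$ implies truth of $Q'$ on $D$. Since $Q$ carries no comparisons, $Q$ is true on $D$ exactly when $D$ contains a homomorphic image $\mu B_Q$ of the body of $Q$; writing $v_j:=\mu w_j$, the database $D$ then contains, for each universally quantified variable $x_j$, the four atoms $R_j(0,v_j)$, $R_j(v_j,1)$, $S_j(v_j,0)$, $S_j(1,1)$, together with all seven ground $C_i$-tuples of every $F_i^{(7)}$. Dually, $Q'$ is true on $D$ exactly when some valuation $h$ maps the relational atoms of $Q'$ into $D$ and satisfies the comparisons $y_j\le 0$, $z_j>0$ of every $G'_j$. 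So the reduction turns on one structural observation about the $R_j/S_j$ gadget: given that $D$ contains the image of $G_j$ under $w_j\mapsto v_j$, which values can $x_j$ receive in such an $h$?

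I expect this gadget analysis to be the main obstacle, with the two implications then following almost mechanically. The conjunction $y_j\le 0 \wedge z_j>0$ forces $R_j(y_j,z_j)$ to be matched against one of the two $R_j$-atoms of $G_j$, and the position of $v_j$ relative to $0$ decides which: if $v_j\le 0$ then only $R_j(v_j,1)$ is admissible, binding $z_j$ to $1$; if $v_j>0$ then only $R_j(0,v_j)$ is admissible, binding $z_j$ to $v_j$. Propagating $z_j$ through $S_j(z_j,x_j)$ and matching against $S_j(v_j,0)$ and $S_j(1,1)$ then pins $x_j$ down: $v_j\le 0$ forces $x_j\mapsto 1$, and $v_j>0$ forces $x_j\mapsto 0$; the sole borderline case $v_j=1$ leaves $h$ a choice, and I would simply check that it is harmless. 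In the implication that recovers validity from containment the forcing must be exhaustive, so there I will use databases whose only $R_j$- and $S_j$-facts are exactly those from $G_j$, and I would double-check that no degenerate homomorphism -- one collapsing $0$ with $1$, or placing $1$ in an unexpected spot -- evades the forcing.

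For the forward implication ($\phi$ valid $\Rightarrow Q\subseteq Q'$): given $D$ with $Q$ true on $D$ via $\mu$, read off the universal assignment $\tau$ with $\tau(x_j)$ true iff $\mu w_j\le 0$. Validity of $\phi$ supplies an existential assignment $\sigma$ with $\tau\cup\sigma$ satisfying $\gamma_1\wedge\ldots\wedge\gamma_k$. I then assemble a comparison-respecting valuation $h$ of $Q'$: on each gadget choose $h(y_j),h(z_j),h(x_j)$ as the gadget analysis prescribes, using the atoms $R_j(\mu w_j,1),S_j(1,1)$ when $\tau(x_j)$ is true and $R_j(0,\mu w_j),S_j(\mu w_j,0)$ when it is false -- all of which $\mu$ has already placed in $D$ -- so that $h(x_j)=1$ iff $\tau(x_j)$ is true; on the existential propositions let $h$ be $1$ or $0$ according to $\sigma$. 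Since $\tau\cup\sigma$ satisfies $\gamma_i$, the triple $(h(p_{i1}),h(p_{i2}),h(p_{i3}))$ is one of the seven satisfying rows of $\gamma_i$, so the atom $C_i(p_{i1},p_{i2},p_{i3})$ of $Q'$ maps into $F_i^{(7)}\subseteq D$, and $h$ witnesses that $Q'$ is true on $D$.

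For the backward implication ($Q\subseteq Q'\Rightarrow\phi$ valid): fix an arbitrary universal assignment $\tau$ and build $D_\tau$ by freezing each $G_j$ with $w_j\mapsto 0$ when $\tau(x_j)$ is true and with $w_j\mapsto\epsilon_j$ for a constant $\epsilon_j$ with $0<\epsilon_j<1$ (available by density of $\dom$) when $\tau(x_j)$ is false, and by adding every ground tuple of every $F_i^{(7)}$. The freezing valuation makes $Q$ true on $D_\tau$, so by hypothesis $Q'$ is true on $D_\tau$ via some $h$; since the only $R_j$- and $S_j$-facts in $D_\tau$ come from $G_j$, the gadget analysis forces $h(x_j)=1$ exactly when $\tau(x_j)$ is true. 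As the only $C_i$-facts in $D_\tau$ are over $\{0,1\}$, every value $h(p_{i\ell})$ lies in $\{0,1\}$, so $h$ induces an existential assignment $\sigma$ (reading $1$ as true), and $h$ mapping $C_i(p_{i1},p_{i2},p_{i3})$ into $F_i^{(7)}$ says precisely that $\tau\cup\sigma$ satisfies $\gamma_i$, for every $i$. As $\tau$ was arbitrary, $\phi$ is valid. One point I would spell out: containment under order comparisons must in principle be tested against all linearizations of the containee's variables, but here $Q$ has no comparisons and its only relevant variables are the $w_j$, whose placements relative to $0$ are exactly what the universal quantifier of $\phi$ ranges over -- so the two quantifications align and a single reduction suffices.
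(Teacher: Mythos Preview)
Your proof is correct and follows essentially the same approach as the paper: analyze the $R_j/S_j$ gadget to see that the value assigned to $w_j$ determines (up to the degenerate case $v_j=1$) the value $x_j$ must receive, then use this forcing in both directions. Your treatment is in fact more careful than the paper's---you make the gadget case analysis explicit, you note and handle the borderline $v_j=1$, and in the backward direction you choose $w_j\mapsto\epsilon_j\in(0,1)$ precisely to avoid that ambiguity, whereas the paper speaks loosely of instantiating $w_j$ by ``a positive or a negative number.''
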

\begin{proof}
Observe first the function of the conditions $G$ and $G'$: Each
condition $G_{j}$ is contained in the condition $G'_{j}$, as whenever
a structure corresponding to $G_{j}$ is found in a database instance,
$G'_{j}$ is also found there. However, there is no homomorphism from
$G'_{j}$ to $G_{j}$ as $x_{j}$ will either be mapped to 0 or 1,
depending on the instantiation of $w_{j}$ (see also Figure~\ref{fig:reduction_pip2}).
\begin{figure}
\begin{centering}
\includegraphics[scale=0.42]{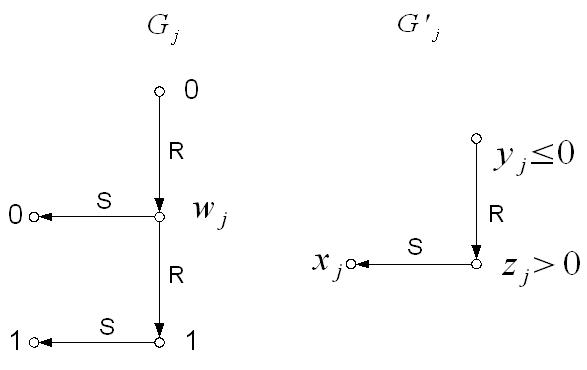}
\par\end{centering}

\caption{Structure of $G_{j}$ and $G_{j}'$. Depending on the value assigned
to $w_{j}$, the value of $x_{j}$ is either $0$ or $1$.}
\label{fig:reduction_pip2} 
\end{figure}

$\proofr$ If $\phi$ is valid, then for every possible assignment
of truth values to the universally quantified propositions, a satisfying
assignment for the existentially quantified ones exists.

Whenever a database instance $D$ satisfies $Q$, each condition $G_{j}$
must be satisfied there, and $w_{j}$ will have a concrete value,
that determines which value $x_{j}$ in $G'_{j}$ can take. As $\phi$
is valid, however, it does not matter which values the universally
quantified variables $x$ take, there always exists a satisfying assignment
for the other variables, such that each atom $C_{j}$ can be mapped
to one of the ground instances $F_{j}^{(7)}$ that are in $D$ since
$Q$ is satisfied over $D$. Then, $Q'$ will be satisfied over $D$
as well and hence $Q\subseteq Q'$ holds.

$\proofl$ If $Q$ is contained in $Q'$, for every database $D$
that instantiates $Q$, we find that $Q'$ is satisfied over it. Especially,
no matter whether we instantiate the $w_{j}$ by a positive or a negative
number, and hence whether the $x_{j}$ will be mapped to 0 or 1, there
exists an assignment for the existentially quantified variables such
that each $C_{j}$ is mapped to a ground instance from $F_{j}^{(7)}$.
This directly corresponds to the validity of $\phi$, where for every
possible assignment of truth values to the universally quantified
variables, a satisfying assignment for the existential quantified
variables exists.
\end{proof}
We therefore conclude the following hardness result.
\begin{cor}
The problem $\Cont(\RQ,\LCQ)$ is $\piptwo$-hard.\label{corr:hardness-cont-rq-lcq}
\end{cor}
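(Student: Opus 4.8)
The plan is essentially to observe that the Lemma just proved \emph{is} the core of the reduction, so that the corollary follows by a short bookkeeping argument; the only points requiring care are that the reduction is polynomial-time computable and that the instance it produces really lies in $\Cont(\RQ,\LCQ)$.

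First I would invoke the fact recalled above, that deciding validity of a formula $\forall\aufz xm\,\exists\aufz yn\colon\gamma_1\wedge\ldots\wedge\gamma_k$ is $\piptwo$-complete. Hardness of $\Cont(\RQ,\LCQ)$ then reduces to exhibiting a polynomial-time many-one reduction $\phi\mapsto(Q,Q')$ from this problem such that $\phi$ is valid iff $Q\subseteq Q'$ — which is exactly what the construction preceding the Lemma, together with the Lemma's equivalence, supplies.

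Second I would check computability in polynomial time. Each clause $\gamma_j$ contributes the constant-size gadget $G_j,G'_j$ over two fresh binary symbols $R_j,S_j$, and the block $F_j^{(7)}$, obtained by enumerating the $2^3=8$ truth assignments to the three propositions of $\gamma_j$ and keeping the $7$ that satisfy it, plus the single clause atom $C_j(p_{j1},p_{j2},p_{j3})$. All of this is computable in time polynomial (indeed linear) in $|\phi|$, so $\phi\mapsto(Q,Q')$ is a genuine polynomial-time reduction.

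Third — and this is the step I would be most careful about — I would verify the query-class memberships. The query $Q$ mentions only the symbols $R_j,S_j,C_i$ in ground atoms and uses no comparison predicate, hence $Q\in\RQ$; it contains repeated relation symbols, but $\RQ$ does not forbid that. The query $Q'$ uses the comparisons $y_j\leq0$ and $z_j>0$, so it is not a relational query, but every relation symbol occurs at most once in its body: one copy of each $R_j$ and each $S_j$ coming from a single $G'_j$, and one copy of each ternary $C_i$, with the binary gadget symbols indexed disjointly from the ternary clause symbols; hence $Q'$ is linear, i.e.\ $Q'\in\LCQ$. Combining this with the Lemma shows that $\phi\mapsto(Q,Q')$ is a polynomial-time many-one reduction from a $\piptwo$-complete problem to $\Cont(\RQ,\LCQ)$, which yields $\piptwo$-hardness. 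The main obstacle is thus not mathematical depth but the mechanical check that linearity of the container $Q'$ is not accidentally violated and that no comparison sneaks into the containee $Q$.
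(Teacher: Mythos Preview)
Your proposal is correct and takes essentially the same approach as the paper, which simply states the corollary as an immediate consequence of the preceding lemma; your additional explicit checks that the reduction is polynomial-time and that $Q\in\RQ$, $Q'\in\LCQ$ are exactly the bookkeeping the paper leaves implicit. One small slip: the gadgets $G_j,G'_j$ (over $R_j,S_j$) are indexed by the \emph{universally quantified variables} $x_1,\ldots,x_m$, not by the clauses; it is the $F_i^{(7)}$ blocks and the $C_i$ atoms that are indexed by clauses, but this does not affect your argument.
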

We summarize the complexity for containment of unions of conjunctive
queries in Table \ref{table:complexity_querycontainment}. Regarding
the upper bounds for the cases $\UCont(\LCQ,\RQ)$, $\UCont(\LCQ,\RQ)$,
$\UCont(\LCQ,\CQ)$ and $\UCont(\LCQ,\LCQ)$, which imply all other
upper bounds, observe the following:
\begin{itemize}
\item $\UCont(\LCQ,\RQ)$ is in PTIME, because due to the linearity of the
containee query, there exists only one homomorphism that needs to
be checked.
\item $\UCont(\CQ,\RQ)$ is in NP because containment of a relational conjunctive
query in a positive relational query (an extension of relational conjunctive
queries that allows disjunction) is in NP (see \cite{Sagiv:Yannakakis-Containment-VLDB}),
and comparisons only for the containee query do not change the techniques,
besides a check for unsatisfiability of the containee query.
\item $\UCont(\LCQ,\CQ)$ is in coNP, because, in order to show non-containment,
it suffices to guess some valuation for the containee query such that
no homomorphism from the container exists.
\item $\UCont(\LCQ,\LCQ)$ is in $\piptwo$ as shown by van der Meyden \cite{meyden-Complexity_querying_ordered_domains-pods}.
\end{itemize}
\begin{table}[t]
\begin{centering}
{\scriptsize{}}%
\begin{tabular}{|>{\raggedright}p{1.4cm}c|cccc|}
\hline 
\multicolumn{2}{|c|}{} & \multicolumn{4}{c|}{{\scriptsize{}Containee Query Language $\L_{1}$}}\tabularnewline
\multicolumn{2}{|c|}{} & {\scriptsize{}LRQ } & {\scriptsize{}LCQ } & {\scriptsize{}RQ } & {\scriptsize{}CQ }\tabularnewline
\hline 
\multirow{4}{1.4cm}{{\scriptsize{}Container Query Language $\L_{2}$ }} & {\scriptsize{}LRQ } & {\scriptsize{}polynomial } & {\scriptsize{}polynomial } & {\scriptsize{}NP-complete } & {\scriptsize{}NP-complete }\tabularnewline
 & {\scriptsize{}RQ } & {\scriptsize{}polynomial } & {\scriptsize{}polynomial } & {\scriptsize{}NP-complete } & {\scriptsize{}NP-complete }\tabularnewline
 & {\scriptsize{}LCQ } & {\scriptsize{}coNP-complete } & {\scriptsize{}coNP-complete } & {\scriptsize{}$\piptwo$-complete } & {\scriptsize{}$\piptwo$-complete}\tabularnewline
 & {\scriptsize{}CQ } & {\scriptsize{}coNP-complete } & {\scriptsize{}coNP-complete } & {\scriptsize{}$\piptwo$-complete } & {\scriptsize{}$\piptwo$-complete}\tabularnewline
\hline 
\end{tabular}
\par\end{centering}{\scriptsize \par}

\caption{Complexity of checking containment of a query from language $\L_{1}$
in a union of queries from language $\L_{2}$. Observe the asymmetry
of the axes, as the step into coNP appears when allowing comparisons
in the container queries, while the step into NP appears when allowing
repeated relation symbols in the containee query.}

\label{table:complexity_querycontainment} 
\end{table}

\section{Entailment}

The next chapter will focus on entailment between completeness statements.
We therefore review the notion of entailment here:

A set of table completeness or query completeness statements $S_{1}$
entails a set $S_{2}$ of table completeness or query completeness
statements, written $S_{1}\models S_{2}$, if whenever the statements
in $S_{1}$ are satisfied, then also the statements in $S_{2}$ are
satisfied. Formally: 
\[
S_{1}\models S_{2}\ \ \ \ \ \mbox{iff for all partial databases \ensuremath{\D:}}\ \ \D\models S_{1}\ \mbox{implies}\ \D\models S_{2}
\]

To make a statement about a single incomplete database is hard, as
the content of the ideal database is usually hardly or not at all
accessible. Because of the universal quantification in this notion
of entailment, the content of the ideal database need not be accessed.
The entailment guarantees that no matter what the available and the
ideal database look like, as long as they satisfy $S_{1}$, they also
satisfy $S_{2}$. 
\begin{example}
Consider the table completeness statement $C=$\linebreak{}
 $\compl{\person(n,g);\emptyset}$, which states that the available
$\person$ table contains all facts from the ideal $\person$ table,
and consider the query\linebreak{}
 $\query{Q(x)}{\person(x,y)}$ asking for the names of all persons.
Clearly, $C$ entails $\compl Q$, because whenever all person tuples
are complete, then also the query asking for their names will be complete.
This entailment means that no matter how many person tuples are there
in the ideal database, as long as all of them are also in the available
database, the query $Q$ will return a complete answer.
\end{example}
In the next chapter we look into entailment reasoning between table
completeness and table completeness (TC-TC), table completeness and
query completeness (TC-QC) and query completeness and query completeness
(QC-QC) statements.

\chapter{Completeness Reasoning}

\label{chap:general-reasoning}

In the previous chapter we have introduced incomplete databases and
table and query completeness statements. In this chapter we focus
on the reasoning about the latter two. 

As in our view, table completeness statements are a natural way of
expressing that parts of a database are complete, and queries are
the common means to access data in a database, we will particularly
focus on the problem of \emph{entailment of query completeness by
table completeness} \emph{statements} (TC-QC entailment). 

\emph{Entailment of table completeness by table completeness} is useful
when managing sets of completeness statements, and in important cases
also for solving TC-QC entailment. 

\emph{Entailment of query completeness by query completeness} (QC-QC
entailment) plays a role when completeness guarantees are given in
form of query completeness statements, which may be the case for views
over databases.

The results in this chapter are as follows: For TC-QC entailment,
we develop decision procedures and assess the complexity of TC-QC
inferences depending on the languages of the TC and QC statements.
We show that for queries under bag semantics and for minimal queries
under set semantics, weakest preconditions for query completeness
can be expressed in terms of table completeness statements, which
allow to reduce TC-QC entailment to TC-TC entailment.

For the problem of TC-TC entailment, we show that it is equivalent
to query containment.

For QC-QC entailment, we show that the problem is decidable for queries
under bag semantics. For queries under set semantics, we give sufficient
conditions in terms of query determinacy.

For aggregate queries, we show that for the aggregate functions SUM
and COUNT, TC-QC has the same complexity as TC-QC for nonaggregate
queries under bag semantics. For the aggregate functions MIN and MAX,
we show that TC-QC has the same complexity as TC-QC for nonaggregate
queries under set semantics.

For reasoning wrt.\ a database instance, we show that TC-QC becomes
computationally harder than without an instance, while QC-QC surprisingly
becomes solvable, whereas without an instance, decidability is open.\medskip{}

The results on the equivalence between TC-TC entailment and query
containment (Section \ref{sec:general:TC-TC}), the upper bound for
TC-QC entailment for queries under bag semantics (Theorem \ref{theorem_characterizing_query_completeness})
and the combined complexity of TC-QC reasoning wrt.\ database instances
(Theorem \ref{thm:instancereasoning-tcqc}(i)) were already shown
in the Diplomarbeit (master thesis) of Razniewski \cite{razniewski:diplom:thesis:2010}.
Also Theorem \ref{thm:weakest-precond} was contained there, although
it was erroneously claimed to hold for conjunctive queries, while
so far it is only proven to hold for relational queries. As these
results are essential foundations for a complete picture of completeness
reasoning, we include them in this chapter. 

All results besides Section \ref{sec:general:QC-QC} and Theorem \ref{prop:hardness-tc-qc-bag-semantics}
were published at the VLDB 2011 conference \cite{Razniewski:Nutt-Compl:of:Queries-VLDB11}.

The results on QC-QC instance reasoning and on TC-QC instance reasoning
under bag semantics are unpublished.\medskip{}

This chapter is structured as follows: In Section \ref{sec:general:TC-TC},
we discuss TC-TC entailment and in particular its equivalence with
query containment. In Section \ref{sec:general:TC-QC} we discuss
TC-QC entailment and in Section \ref{sec:general:QC-QC} QC-QC entailment.
In Section \ref{sec:extensions:aggregate-queries}, we discuss completeness
reasoning for aggregate queries, in Section \ref{sec:reasoning_with_instances}
reasoning wrt. database instances, and in Section \ref{sec:general:intro}
we review related work on completeness entailment.

\section{Table Completeness Entailing Table Completeness}

\label{sec:general:TC-TC}

\LocalComp statements describe parts of relations, which are stated
to be complete. Therefore, one set of such statements entails another
statement if the part described by the latter is contained in the
parts described by the former. Therefore, as we will show, TC-TC entailment
checking can be done by checking containment of the parts described
by the statements, which in turn can be straightforwardly reduced
to query containment.
\begin{example}
Consider the TC statements $C_{1}$ and $C_{2}$, stating that the
\textit{person} table is complete for all persons or for all female
persons, respectively: 
\begin{align*}
 & C_{1}=\Compl{\textit{person}(n,g);\emptyset},\\
 & C_{2}=\Compl{\textit{person}(n,g);\, g=\textit{female}}.
\end{align*}
 Obviously, $C_{1}$ entails $C_{2}$, because having all persons
implies also having all female persons. To show that formally, consider
the associated queries $Q_{C_{1}}$ and $Q_{C_{2}}$, that describe
the parts that are stated to be complete, which thus ask for all persons
or for all female persons, respectively: 
\begin{align*}
 & \query{Q_{C_{1}}(n,g)}{\textit{person}(n,g)},\\
 & \query{Q_{C_{2}}(n,g)}{\textit{person}(n,g),g=\textit{female}}.
\end{align*}
 Again it is clear that $Q_{C_{2}}$ is contained in $Q_{C_{1}}$,
because retrievable female persons are always a subset of retrievable
persons. In summary, we can say that $C_{1}$ entails $C_{2}$ because
$Q_{C_{2}}$ is contained in $Q_{C_{1}}$. 
\end{example}
The example can be generalized to a linear time reduction under which
entailment of a TC statement by other TC statements is translated
into containment of a conjunctive query in a union of conjunctive
queries. Furthermore, one can also reduce containment of unions of
conjunctive queries to TC-TC entailment, as the next theorem states.
Recall the four classes of conjunctive queries introduced in Section
\ref{sec:prelim:queries}: linear relational queries ($\LRQ$), relational
queries ($\RQ$), linear conjunctive queries ($\LCQ$) and conjunctive
queries ($\CQ$).
\begin{thm}
\label{theo-equivalence:TC-TC:UCont} Let $\L_{1}$ and $\L_{2}$
be classes of conjunctive queries among $\{\LRQ,\RQ,\LCQ,\CQ\}$.
Then the problems of $\tctc(\L_{1},\L_{2}$) and \linebreak{}
$\UCont(\L_{2},\L_{1}$) can be reduced to each other in linear time. \end{thm}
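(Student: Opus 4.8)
The plan is to route both reductions through the \emph{associated query} $Q_C$ of a table completeness statement, the pivot being the following characterization of TC-TC entailment: for a finite set $\mathcal{C}$ of TC statements and a single TC statement $C$ with head relation $R$,
\[
\mathcal{C}\models C \quad\Longleftrightarrow\quad Q_C\ \incl\ \bigcup\nolimits_{C_i\in\mathcal{C},\ \mathrm{head}(C_i)=R}\, Q_{C_i},
\]
where the union ranges only over the premises whose head relation equals $R$ (the remaining ones constrain no $R$-fact). Direction ``$\Leftarrow$'' is immediate: if $(\di,\da)\models\mathcal{C}$ and $\tpl a\in Q_C(\di)$, then the assumed containment applied to the instance $\di$ gives $\tpl a\in Q_{C_i}(\di)\incl R(\da)$ for some matching $C_i$, whence $Q_C(\di)\incl R(\da)$, i.e.\ $(\di,\da)\models C$. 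For ``$\Rightarrow$'' I would argue by contraposition: a counterexample to the containment provides a database $D$ and a valuation $v$ satisfying $B_{Q_C}$ over $D$ with $\tpl a:=v\tpl t\in Q_C(D)$ but $\tpl a\notin Q_{C_i}(D)$ for every $i$; I then take $\di:=v B_{Q_C}$ (so $\di\incl D$ and $\tpl a\in Q_C(\di)$) and define $\da$ by $R'(\da):=\bigcup_{\mathrm{head}(C_i)=R'}Q_{C_i}(\di)$ for each relation symbol $R'$. By construction $(\di,\da)$ is a legal incomplete database, since each tuple of $Q_{C_i}(\di)$ carries a witnessing $R'$-atom in $\di$, and it satisfies every $C_i$, hence $\mathcal{C}$; yet monotonicity of conjunctive queries yields $\tpl a\notin R(\da)$ while $\tpl a\in Q_C(\di)$, so $(\di,\da)\not\models C$ and $\mathcal{C}\not\models C$.

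Granting this characterization, the reduction from $\tctc(\L_1,\L_2)$ to $\UCont(\L_2,\L_1)$ is immediate and linear time: from an instance with premises $\mathcal{C}$ in $\L_1$ and conclusion $C$ in $\L_2$, output the query $Q_C$ -- which lies in $\L_2$, since the language of a TC statement is by definition that of its associated query -- together with the union of the associated queries $Q_{C_i}$ of those premises whose head relation agrees with that of $C$ (each in $\L_1$), discarding the rest. The characterization makes this answer-preserving, and by the same lemma discarding the non-matching premises is harmless. The only degenerate case is an empty union, where the required containment amounts to unsatisfiability of $Q_C$; this cannot happen for the comparison-free languages $\LRQ,\RQ$, and is handled by a trivial preliminary test otherwise.

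For the converse direction, from a $\UCont(\L_2,\L_1)$ instance ``$Q\incl\bigcup_i P_i$'' with $Q\in\L_2$, all $P_i\in\L_1$, and all of some common arity $k$, I would pick a fresh $k$-ary relation symbol $R$ and form the TC statements $C_Q:=\compl{R(\tpl x);\,B_Q}$ and $C_{P_i}:=\compl{R(\tpl x_i);\,B_{P_i}}$, where $\tpl x,\tpl x_i$ are the distinguished-variable vectors of $Q$ and of $P_i$ (which, following the conventions established above, we may take to consist of variables). Since $R$ is fresh, adjoining the atom $R(\cdot)$ to a query introduces no comparison and repeats no relation symbol, so the associated queries $Q_{C_Q}$ and $Q_{C_{P_i}}$ lie in $\L_2$ and $\L_1$ respectively, and all of these statements have head relation $R$. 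By the characterization, $\{C_{P_i}\}_i\models C_Q$ iff $Q_{C_Q}\incl\bigcup_i Q_{C_{P_i}}$, so it remains to observe that this last containment is equivalent to the original ``$Q\incl\bigcup_i P_i$''. This holds because prepending the same fresh atom $R(\tpl x)$, over the answer variables, to all queries on both sides of a union containment neither strengthens nor weakens it: the single fact $R(\tpl a)$ for the relevant answer tuple $\tpl a$ may be freely added to or removed from a witnessing database without changing the behaviour of any $P_i$ or $Q$ there, as none of them mentions $R$. This transformation is again linear time, which completes the mutual reduction.

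I expect the main obstacle to be the ``$\Rightarrow$'' half of the characterization -- that entailment implies query containment -- and specifically the bookkeeping required to turn a counterexample to containment into a \emph{legal} incomplete database $(\di,\da)$ (verifying $\da\incl\di$) that moreover satisfies \emph{all} of $\mathcal{C}$, so that the entailment hypothesis becomes applicable; once $(\di,\da)$ is set up, the contradiction is short. The secondary, routine point is checking in the second reduction that the fresh-atom padding both preserves membership in each of $\LRQ,\RQ,\LCQ,\CQ$ and leaves the containment relation unchanged -- both rest on $R$ not occurring in the original queries.
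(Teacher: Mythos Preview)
Your proposal is correct and follows essentially the same approach as the paper: both establish the characterization $\{C_1,\ldots,C_n\}\models C_0$ iff $Q_{C_0}\subseteq\bigcup_i Q_{C_i}$ (restricted to premises with matching head relation), and both use a fresh head relation for the reverse reduction. The only noticeable difference is in the contrapositive step of ``$\Rightarrow$'': the paper takes the simpler construction $\di:=D$ and $\da:=D\setminus\{R(\tpl c)\}$, whereas you shrink to $\di:=vB_{Q_C}$ and build $\da$ as $T_{\mathcal C}(\di)$---both work, but the paper's version avoids the monotonicity detour and the check that $\da\subseteq\di$.
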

\begin{proof}
\emph{Reducing TC-TC to $\UCont$:} Consider a TC-TC entailment problem
$"\set{C_{1},\ldots,C_{n}}\stackrel{?}{\models}\compl{C_{0}}"$, where
$C_{0}$ is a statement for a relation $R$. Since statements for
relations different from $R$ do not influence the entailment, we
assume that $C_{1}$ to $C_{n}$ are statements for $R$ as well.
Recall that for a TC statement $C=\compl{R(\tpl x);G}$, the query
$Q_{C}$ is defined as $\query{Q_{C}(\tpl x)}{R(\tpl x),G}$.

\medskip{}

Claim: $\set{C_{1},\ldots,C_{n}}\stackrel{}{\models}\compl{C_{0}}$
if and only if $Q_{C_{0}}\subseteq Q_{C_{1}}\cup\ldots\cup Q_{C_{n}}$

\medskip{}

$"\Rightarrow"$: Suppose the containment does not hold. Then, by
definition there exists a database $D$ and a tuple $\tpl c$ such
that $\tpl c$ is in $Q_{C_{0}}(D)$ but not in the answer of any
of the queries $Q_{C_{1}}$ to $Q_{C_{n}}$ over $D$. Thus, we can
construct an incomplete database $\D=(\di,\da)$ with $\di=D$ and
$\da=D\setminus\set{R(\tpl c)}$. Then, $\D$ satisfies $C_{1}$ to
$C_{n}$, because $R(\tpl c)$ is by assumption not in $Q_{C_{1}}(\di)\cup\cdots\cup Q_{C_{n}}(\di)$,
and $R(\tpl c)$ is the only difference between $\di$ and $\da$.
But $\D$ does not satisfy $C_{0}$, because $R(\tpl c)$ is in $Q_{C_{0}}(\di)$
and thus $\D$ proves that $C_{1}$ to $C_{n}$ do not entail $C_{0}$.

$"\Leftarrow"$: Suppose the entailment does not hold. Then, by definition
there exists an incomplete database $\D=(\di,\da)$ such that there
is a tuple $t$ in $Q_{C_{0}}(\di)$ which is not in $\da$. Since
$C_{1}$ to $C_{n}$ are satisfied over $\D$, by definition $t$
may not be in $Q_{C_{i}}(\di)$ for $Q_{C_{1}}$ to $Q_{C_{n}}$ and
hence $\di$ is a database that shows that $Q_{C_{0}}$ is not contained
in the union of the $Q_{C_{i}}$.

\emph{Reducing $\UCont$ to TC-TC:} A union containment problem has
the form $"Q_{0}\stackrel{?}{\subseteq}Q_{1}\cup\cdots\cup Q_{n}"$,
where the $Q_{i}$ are queries of the same arity, and it shall be
decided whether over all database instances the answer of $Q_{0}$
is a subset of the union of the answers of $Q_{1}$ to $Q_{n}$.

Consider now a containment problem $"Q_{0}\stackrel{?}{\subseteq}Q_{1}\cup\cdots\cup Q_{n}"$,
where each $Q_{i}$ has the form $\query{Q_{i}(\tpl x)}{B_{i}}$.
We construct a TC-TC entailment problem as follows: We introduce a
new relation symbol $H$ with the same arity as the $Q_{i}$, and
construct completeness statements $C_{0}$ to $C_{n}$ as $C_{i}=\compl{H(\tpl x);B_{i}}$.
Analogous to the reduction in the opposite direction, by contradiction
it is now straightforward that $Q_{0}$ is contained in $Q_{1}\cup\cdots\cup Q_{n}$
if and only if $C_{1}$ to $C_{n}$ entail $C_{0}$.
\end{proof}

\section{Table Completeness Entailing Query Completeness}

\label{sec:general:TC-QC}

\label{sec:TC-QC-entailment}

In this section we discuss the problem of TC-QC entailment and its
complexity. We first show that query completeness for queries under
bag semantics can be characterized by so-called canonical TC statements.
We then show that for TC-QC entailment for queries under set semantics,
for minimal relational queries TC-QC can also be reduced to TC-TC.
We then give a characterization for general TC-QC entailment for queries
under set semantics, that is, queries that are nonminimal or contain
comparisons.

\subsection{TC-QC Entailment for Queries under Bag Semantics}

\label{characterizing}

In this section we discuss whether and how query completeness can
be characterized in terms of \localComp. Suppose we want the answers
for a query $Q$ to be complete. An immediate question is which \localComp
conditions our database should satisfy so that we can guarantee the
completeness of $Q$.

To answer this question, we introduce canonical completeness statements
for a query. Intuitively, the canonical statements require completeness
of all parts of relations where tuples can contribute to answers of
the query. Consider a query $\query{Q(\tpl s)}{\dd An,M}$, with relational
atoms $A_{i}$ and comparisons $M$. The \emph{canonical completeness
statement} for the atom $A_{i}$ is the TC statement 
\[
C_{i}=\Compl{A_{i};A_{1},\ldots,A_{i-1},A_{i+1},\ldots,A_{n},M}.
\]
 We denote by $\C_{Q}=\set{\dd Cn}$ the set of all canonical completeness
statements for $Q$.
\begin{example}
Consider the query 
\[
\query{Q_{2}(n)}{\textit{student}(n,c,s),\textit{class}(s,c,f,\science),}
\]
 asking for the names of all students that are in a class with \textit{Science}
profile. Its canonical completeness statements are the \localComp
statements 
\begin{align*}
C_{1} & =\Compl{\textit{student}(n,c,s);\textit{class}(s,c,f,\science)}\\
C_{2} & =\Compl{\textit{class}(s,c,f,\science);\textit{student}(n,s,c)}.
\end{align*}
 As was shown in \cite{razniewski:diplom:thesis:2010}, query completeness
can equivalently be expressed by the canonical completeness statements
in certain cases.\end{example}
\begin{thm}
\label{theorem_characterizing_query_completeness} Let $Q$ be a conjunctive
query. Then for all incomplete databases~$\pdb$, 
\[
\pdb\models\complstar Q*\mbox{\quad iff\quad}\pdb\models\cplset_{Q},
\]
 holds for
\begin{enumerate}
\item $*=b$,
\item $*=s$, if $Q$ is a projection-free query.
\end{enumerate}
\end{thm}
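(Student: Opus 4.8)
The plan is to funnel both equivalences through a single combinatorial condition on $\pdb$. Write $\query{Q(\tpl s)}{A_{1},\ldots,A_{n},M}$ with relational atoms $A_{i}$ and comparisons $M$, and for a database instance $D$ set
\[
W(D)=\set{v\mid vA_{j}\in D\text{ for }j=1,\ldots,n\text{ and }v\models M},
\]
the set of valuations satisfying the body over $D$, where $vA_{j}$ denotes the ground atom obtained by applying $v$ to $A_{j}$. Since $\av D\incl\id D$ and the truth of $M$ does not depend on the instance, $W(\av D)\incl W(\id D)$ always holds. The first step is to prove
\[
\pdb\models\cplset_{Q}\quad\Longleftrightarrow\quad W(\av D)=W(\id D).
\]
For ``$\Leftarrow$'', take $v\in W(\id D)$: for each $i$ the valuation $v$ satisfies the body $A_{i},A_{1},\ldots,A_{i-1},A_{i+1},\ldots,A_{n},M$ of $Q_{C_{i}}$ over $\id D$, so the atom $vA_{i}$ belongs to $Q_{C_{i}}(\id D)$; since $C_{i}$ holds, $vA_{i}\in\av D$, and as $i$ was arbitrary $v\in W(\av D)$, giving $W(\id D)\incl W(\av D)$ and hence equality. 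For ``$\Rightarrow$'', every atom in $Q_{C_{i}}(\id D)$ has the form $vA_{i}$ for some $v\in W(\id D)=W(\av D)$, so $vA_{i}\in\av D$, which is exactly the satisfaction of $C_{i}$, for every $i$.

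It then remains to relate $W(\av D)=W(\id D)$ to query completeness on each side. For part~(1), by the definition of bag semantics the multiplicity of a tuple $\tpl t$ in $Q^{b}(D)$ is $\card{\set{v\in W(D)\mid v\tpl s=\tpl t}}$; since $W(\av D)\incl W(\id D)$, for each fixed $\tpl t$ the valuations contributing over $\av D$ form a subset of those contributing over $\id D$, so $Q^{b}(\av D)$ and $Q^{b}(\id D)$ coincide iff these subsets agree for every $\tpl t$, i.e.\ iff $W(\av D)=W(\id D)$; with the first step this yields $\pdb\models\complstar Q b$ iff $\pdb\models\cplset_{Q}$. For part~(2), projection-freeness means every variable of the body occurs among the head terms $\tpl s$, so $v\mapsto v\tpl s$ is injective on $W(D)$ and $Q^{s}(D)=\set{v\tpl s\mid v\in W(D)}$ is in inclusion-preserving bijection with $W(D)$; hence $Q^{s}(\av D)=Q^{s}(\id D)$ iff $W(\av D)=W(\id D)$, and with the first step we again obtain $\pdb\models\complstar Q s$ iff $\pdb\models\cplset_{Q}$.

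The delicate point is this reduction to the condition on $W$. In the bag case one must exploit that equality of bags forces the \emph{sets} of satisfying valuations (grouped by head tuple) to coincide, not merely to be equinumerous --- this is precisely what lets one read the canonical statements back off. In the set case one must see where projection-freeness is essential: without it a nondistinguished variable may be projected away, making $Q^{s}$ complete even though some $A_{i}$ still misses tuples in $\av D$; for instance, for $\query{Q(x)}{R(x,y)}$ with $\id D=\set{R(a,b),R(a,c)}$ and $\av D=\set{R(a,b)}$ one has $Q^{s}(\id D)=\set a=Q^{s}(\av D)$, yet the canonical statement $\compl{R(x,y);\emptyset}$ fails since $(a,c)\in Q_{C}(\id D)$ but $R(a,c)\notin\av D$. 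So the hypothesis cannot be dropped, and everything else is routine bookkeeping about valuations and the shape of the associated query $Q_{C_{i}}$.
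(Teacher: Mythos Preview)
Your argument is correct and follows essentially the same approach as the paper: both hinge on the observation that satisfaction of $\C_{Q}$ is equivalent to the preservation of all satisfying valuations of the body from $\id D$ to $\av D$. You make this explicit by introducing the set $W(D)$ and factoring both equivalences through $W(\av D)=W(\id D)$; the paper carries out the same reasoning but leaves this intermediate condition implicit, arguing each direction of~(i) separately and deriving~(ii) as a corollary of~(i) via the remark that bag equality implies set equality (together with the tacit fact that for projection-free queries the converse also holds). Your organization is arguably cleaner, and your explicit counterexample for why projection-freeness is needed is a nice addition that the paper only supplies later.

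One minor slip: in your first step the labels ``$\Leftarrow$'' and ``$\Rightarrow$'' are interchanged---the argument you call ``$\Leftarrow$'' assumes that each $C_{i}$ holds and derives $W(\id D)\incl W(\av D)$, which is the forward direction of the displayed equivalence, and conversely for the other paragraph. The content is right; only the arrows need swapping.
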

\begin{proof}
\textit{(i)} \Onlyif 
Indirect proof: Suppose, one of the completeness assertions in $\cplset_{Q}$
does not hold over $\pdb$, for instance, assertion $C_{1}$ for atom
$A_{1}$. Suppose, $R_{1}$ is the relation symbol of $A_{1}$. Let
$C_{1}$ stand for the \LC statement 
$\Compl{A_{1};\, B_{1}}$ where $B_{1}=B\setminus\set{A_{1}}$ and
$B$ is the body of $Q$. Let $Q_{1}$ be the query associated to
$C_{1}$.

Then $Q_{1}(\id D)\not\subseteq R_{1}(\av D)$. Let $\tpl c$ be a
tuple that is in $Q_{1}(\id D)$, and therefore in $R_{1}(\id D)$,
but not in $R_{1}(\av D)$. By the fact that $Q_{1}$ has the same
body as $Q$, the valuation $\upsilon$ of $Q_{1}$ over $\id D$
that yields $\tpl c$ is also a satisfying valuation for $Q$ over
$\id D$. So we find one occurrence of some tuple $\tpl c'\in Q(\id D)$,
where $\tpl c'=v\tpl x_{1}$, with $\tpl x_{1}$ being the distinguished
variables of $Q$.

However, $\upsilon$ does not satisfy $Q$ over $\av D$ because $\tpl c$
is not in $R_{1}(\av D)$. By the monotonicity of conjunctive queries,
we cannot have another valuation yielding $\tpl c'$ over $\av D$
but not over $\id D$. Therefore, $Q(\av D)$ contains at least one
occurrence of $\tpl c'$ less than $Q(\id D)$, and hence $Q$ is
not complete over $D$.

\textit{(i)} \If Direct proof: We have to show that if $t$ is $n$
times in $Q(\di)$ then $\tpl c$ is also $n$ times in $Q(\da)$.

For every occurrence of $\tpl c$ in $Q(\di)$ we have a valuation
of the variables of $Q$ that is satisfying over $\di$. We show that
if a valuation is satisfying for $Q$ over $\di$, then it is also
satisfying for $Q$ over $\da$. A valuation $v$ for a conjunctive
condition $\concond$ is satisfying over a database instance if we
find all elements of the instantiation $v\concond$ in that instance.
If a valuation satisfies $Q$ over $\di$, then we will find all instantiated
atoms of $v\concond$ also in $\da$, because the canonical completeness
conditions hold in $D$ by assumption. Satisfaction of the canonical
completeness conditions requires that for every satisfying valuation
of $v$ of $Q$, for every atom $A$ in the body of $Q$, the instantiation
atom $vA$ is in $\da$. Therefore, each satisfying valuation for
$Q$ over $\di$ yielding a result tuple $\tpl c\in Q(\di)$ is also
a satisfying valuation over $\da$ and hence $Q$ is complete over
$\pdb$.

\textit{(ii)} Follows from \textit{(i)}. Since the query is complete
under bag semantics, it is also complete under set semantics, because
whenever two bags are equal, also the corresponding sets are equal.
\end{proof}
While the lower bounds were left open in \cite{razniewski:diplom:thesis:2010},
with the following theorem we show that $\UCont(\L_{1},\L_{2})$ can
also be reduced to $\tcqc$($\L_{2},\L_{1}$), both under bag or set
semantics.
\begin{thm}
\label{prop:hardness-tc-qc-bag-semantics}Let $\L_{1}$ and $\L_{2}$
be classes of conjunctive queries among $\{\LRQ,\RQ,\LCQ,\CQ\}$.
Then the problem of $\UCont(\L_{1},\L_{2})$ can be reduced to $\tcqc^{*}(\L_{2},\L_{1}$)
for $*\in\{s,b\}$ in linear time.\end{thm}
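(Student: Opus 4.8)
The plan is to merge the reduction of $\UCont$ to TC-TC entailment (Theorem~\ref{theo-equivalence:TC-TC:UCont}) with the characterization of query completeness by canonical completeness statements (Theorem~\ref{theorem_characterizing_query_completeness}). Given a union-containment instance $Q_0\subseteq Q_1\cup\cdots\cup Q_n$ with $Q_0\in\L_1$, $Q_i\in\L_2$ and $Q_i(\tpl x)\qif B_i$, I would introduce a fresh relation symbol $H$ of the arity of the $Q_i$ and let $Q^*$ be the query with body $H(\tpl x),B_0$ and \emph{all} of its variables in the head. Adjoining a fresh atom and enlarging the head affects neither linearity nor the set of comparisons, so $Q^*\in\L_1$, and $Q^*$ is safe and projection-free. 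As table-completeness hypotheses I would take $C_i:=\compl{H(\tpl x);B_i}$ for $i=1,\ldots,n$, together with one statement $\compl{R(\tpl w_R);\emptyset}$ expressing full completeness of $R$ for every relation symbol $R$ occurring in $B_0$. The associated query of $C_i$ is $Q_{C_i}(\tpl x)\qif H(\tpl x),B_i$, which lies in $\L_2$ since $H$ is fresh; each full-completeness statement has a single-atom associated query, hence in $\LRQ\subseteq\L_2$. The construction is linear-time.

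For correctness I would invoke Theorem~\ref{theorem_characterizing_query_completeness} --- part~(i) for bag semantics, part~(ii) for set semantics, where projection-freeness of $Q^*$ is exactly what makes part~(ii) applicable --- to conclude that query completeness of $Q^*$ (under the relevant semantics) is entailed by the hypotheses iff the hypotheses entail every statement of $\cplset_{Q^*}$. This set consists of $C_H^*:=\compl{H(\tpl x);B_0}$ (the canonical statement for the atom $H(\tpl x)$) and, for each relational atom $A_j$ of $B_0$, the statement $\compl{A_j;\,H(\tpl x),\,B_0\setminus\{A_j\}}$. Each statement of the latter kind is entailed already by the single hypothesis $\compl{R(\tpl w_R);\emptyset}$ with $R$ the relation symbol of $A_j$, since full completeness of a relation entails completeness of any of its restrictions --- an obvious containment of the associated queries. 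Thus the TC-QC entailment holds iff the hypotheses entail $C_H^*$.

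It then remains to show that the hypotheses entail $C_H^*$ iff $Q_0\subseteq\bigcup_i Q_i$. As $C_H^*$ and the $C_i$ speak only about $H$, whereas the full-completeness statements speak only about other relations, the latter are irrelevant here: from any incomplete database satisfying all $C_i$ but violating $C_H^*$, one obtains a counterexample to the whole entailment by additionally completing every relation other than $H$ in the available instance. So the question reduces to $\{C_i\}_i\models C_H^*$, which by Theorem~\ref{theo-equivalence:TC-TC:UCont} is equivalent to $Q_{C_H^*}\subseteq\bigcup_i Q_{C_i}$, i.e.\ to containment of $H(\tpl x),B_0$ in $\bigcup_i\big(H(\tpl x),B_i\big)$; since the fresh symbol $H$ occurs identically, and always with argument tuple $\tpl x$, in all these queries, a routine homomorphism argument collapses this to $B_0\subseteq\bigcup_i B_i$, that is, to $Q_0\subseteq\bigcup_i Q_i$.

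The step I expect to be the main obstacle is reconciling correctness with the language restriction. Keeping only the $C_i$ as hypotheses is \emph{incorrect}, because completeness of $Q^*$ also demands completeness of the relations occurring in $B_0$; but feeding the remaining canonical statements of $Q^*$ back in as hypotheses would yield TC statements whose associated queries carry the full syntactic shape of $Q_0\in\L_1$, which need not belong to $\L_2$ (e.g.\ when $\L_1=\RQ$ and $\L_2=\LRQ$). The full-completeness statements $\compl{R(\tpl w_R);\emptyset}$ discharge precisely those obligations with single-atom queries, which always lie in $\LRQ\subseteq\L_2$, so correctness is restored without inflating the language of the TC statements.
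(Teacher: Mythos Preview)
Your proof is correct and uses essentially the same reduction as the paper: a fresh relation symbol $H$ (the paper calls it $S$), the statements $C_i=\compl{H(\tpl x);B_i}$, and full-completeness statements $\compl{R(\tpl w_R);\emptyset}$ for every relation occurring in $B_0$. The only notable differences are stylistic: you make $Q^*$ projection-free so that Theorem~\ref{theorem_characterizing_query_completeness}(ii) applies uniformly for set semantics and then route correctness through Theorems~\ref{theorem_characterizing_query_completeness} and~\ref{theo-equivalence:TC-TC:UCont}, whereas the paper keeps the head $\tpl d_0$ of $Q_0$ and argues directly, constructing the counterexample $(D\cup\{S(\tpl c)\},D)$ from a witness $\tpl c\in Q_0(D)\setminus\bigcup_i Q_i(D)$.
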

\begin{proof}
We show how the reduction works in principle. Consider a $\UCont$
problem $"Q_{0}\stackrel{?}{\models}Q_{1}\cup Q_{2}"$ for three queries,
each of the form $\query{Q_{i}(\tpl d_{i})}{B_{i}}$. We define a
set of TC statements $\C$ and a query $Q$ such that $\C\models\complstar Q*$
if and only if $Q_{0}\subseteq Q_{1}\union Q_{2}$.

Using a new relation symbol $S$ with the same arity as the $Q_{i}$,
we define the new query as $\query{Q(\tpl d_{0})}{S(\tpl d_{0}),B_{0}}$.
For every relation symbol $R$ in the signature $\Sigma$ of the $Q_{i}$
we introduce the statement $C_{R}=\Compl{R(\tpl x_{R});\true}$, where
$\tpl x_{R}$ is a vector of distinct variables. Furthermore, for
each of $Q_{i}$, $i=1,2$, we introduce the statement $C_{i}=\Compl{S(\tpl d_{i});B_{i}}$.
Let $\C=\set{C_{1},C_{2}}\union\set{C_{R}\mid R\in\Sigma}$. Then
it $\C$ entails $\complstar Q*$ if and only if $Q_{0}\subseteq Q_{1}\union Q_{2}$,
because on the one hand, the containment implies that any tuple needed
for the completeness of $Q$ is also constrained by the TC statements
in $\C$, and on the other hand, because, if the containment would
not hold, there would exist a database instance $D$ and a tuple $\tpl c$
such that $\tpl c$ would be in the answer of $Q_{0}$ over $D$ but
not in the answer of $Q_{1}$ to $Q_{n}$, and thus, an incomplete
database $(D,D\setminus\{S(\tpl d)\})$ would satisfy $\C$ but not
$\complstar Q*$, thus showing that the former does not entail the
latter.
\end{proof}
From the theorem above and Theorem \ref{theo-equivalence:TC-TC:UCont}
we conclude that $\UCont$ and $\tcqc$ for queries under bag semantics
can be reduced to each other and therefore have the same complexity.
We summarize the complexity results for $\tcqc$ entailment under
bag semantics in Table \ref{table:complexity:tc-qc-bag-semantics}.

\begin{table}
\begin{centering}
{\scriptsize{}}%
\begin{tabular}{|>{\raggedright}p{1.2cm}c|cccc|}
\hline 
\multicolumn{2}{|c|}{} & \multicolumn{4}{c|}{{\scriptsize{}Query Language}}\tabularnewline
\multicolumn{2}{|c|}{} & {\scriptsize{}LRQ } & {\scriptsize{}LCQ } & {\scriptsize{}RQ } & {\scriptsize{}CQ }\tabularnewline
\hline 
\multirow{4}{1.2cm}{{\scriptsize{}TC Statement Language }} & {\scriptsize{}LRQ } & {\scriptsize{}polynomial } & {\scriptsize{}polynomial } & {\scriptsize{}NP-complete } & {\scriptsize{}NP-complete }\tabularnewline
 & {\scriptsize{}RQ } & {\scriptsize{}polynomial } & {\scriptsize{}polynomial } & {\scriptsize{}NP-complete } & {\scriptsize{}NP-complete }\tabularnewline
 & {\scriptsize{}LCQ } & {\scriptsize{}coNP-complete } & {\scriptsize{}coNP-complete } & {\scriptsize{}$\piptwo$-complete } & {\scriptsize{}$\piptwo$-complete}\tabularnewline
 & {\scriptsize{}CQ } & {\scriptsize{}coNP-complete } & {\scriptsize{}coNP-complete } & {\scriptsize{}$\piptwo$-complete } & {\scriptsize{}$\piptwo$-complete}\tabularnewline
\hline 
\end{tabular}
\par\end{centering}{\scriptsize \par}

\caption{Complexity of deciding TC-QC entailment under bag semantics. The entries
are equivalent to those for query containment as shown in Table \ref{table:complexity_querycontainment}.}

\label{table:complexity:tc-qc-bag-semantics} 
\end{table}

\subsection{Characterizations of Query Completeness under Set Semantics}

In the previous section we have seen that for queries under bag semantics
and for queries under set semantics without projections, query completeness
can be characterized by \localComp. For queries under set semantics
with projections, this is not generally possible:
\begin{example}
Consider the query $\query{Q(c)}{\pupil(n,c,s)}$ asking for all the
classes of pupils. Its canonical completeness statements are
\[
\{\compl{\student(n,c,s);\true}\}
\]
Now, consider an incomplete database $\D=(\di,\da)$ with 
\begin{eqnarray*}
 &  & \di=\{\student(\mathit{John},3a,\schoolone),\student(\nametwo,3a,\schoolone)\}\\
 &  & \di=\{\student(\nameone,3a,\schoolone)\}
\end{eqnarray*}

Clearly, the canonical statement is violated because Mary is missing
in the available database. But the query is still complete, as it
returns the set $\{3a\}$ over both the ideal and the available database.
\end{example}
As it is easy to see that for any query $Q$ and any incomplete database
$\D$ it holds that $\D\models\complb Q$ implies $\D\models\compls Q$,
we can conclude the following corollary.
\begin{cor}
\label{c_q-models-compl_q} Let $Q$ be a conjunctive query. Then
for $*\in\{s,b\}$ 
\[
\C_{Q}\models\complstar Q*.
\]
\end{cor}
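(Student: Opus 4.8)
The plan is to obtain this corollary almost immediately from Theorem~\ref{theorem_characterizing_query_completeness}. Recall that the assertion $\C_Q\models\complstar Q*$ unfolds, by the definition of entailment, to: for every incomplete database $\pdb$, if $\pdb\models\C_Q$ then $\pdb\models\complstar Q*$. So it suffices to establish this implication for each incomplete database $\pdb$ and each $*\in\{s,b\}$.

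First I would dispatch the bag case. Theorem~\ref{theorem_characterizing_query_completeness}(i) already gives the stronger statement that $\pdb\models\complb Q$ \emph{iff} $\pdb\models\C_Q$ for every incomplete database $\pdb$; reading off just the ``if'' direction is exactly $\C_Q\models\complb Q$, so nothing further is needed here.

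For the set case I would \emph{not} try to reuse part (ii) of Theorem~\ref{theorem_characterizing_query_completeness}, since there the equivalence is only proved for projection-free queries, whereas the corollary is stated for arbitrary conjunctive queries. Instead I would chain the bag case with the elementary observation that bag completeness implies set completeness: if $\pdb=(\di,\da)$ and $Q(\da)=Q(\di)$ holds as an equality of bags, then applying duplicate elimination to both sides preserves the equality, so $Q(\da)=Q(\di)$ also holds as an equality of sets, i.e.\ $\pdb\models\complb Q$ implies $\pdb\models\compls Q$. Composing the two implications, $\pdb\models\C_Q$ implies $\pdb\models\complb Q$ implies $\pdb\models\compls Q$, which is the desired $\C_Q\models\compls Q$.

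There is essentially no obstacle in this argument — it is a corollary in the genuine sense. The only point that needs a moment's care is to recognise that the set-semantics half cannot be read off directly from Theorem~\ref{theorem_characterizing_query_completeness}(ii), because projections are permitted in $Q$ and the converse direction of that equivalence may then fail (cf.\ the example preceding the corollary); it must instead be routed through bag semantics as above.
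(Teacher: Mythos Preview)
Your proof is correct and follows essentially the same approach as the paper: the bag case is read off from Theorem~\ref{theorem_characterizing_query_completeness}(i), and the set case is obtained by observing that bag completeness implies set completeness. The only cosmetic difference is that the paper phrases the set case via the projection-free variant $Q'$ of $Q$ (noting $\C_{Q}=\C_{Q'}$, applying part~(ii), and then projecting), whereas you route it directly through part~(i) and duplicate elimination; the underlying idea is the same.
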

\begin{proof}
The claim for bag semantics is shown in Theorem~\ref{theorem_characterizing_query_completeness}.
For set semantics, we consider the projection-free variant $Q'$ of
$Q$. Note that $\C_{Q}=\C_{Q'}$. Thus, by the preceding theorem,
if $\pdb\models\C_{Q}$, then $\pdb\models\Compl{Q'}$, and hence,
$Q'(\da)=Q'(\di)$. Since the answers to $Q$ are obtained from the
answers to $Q'$ by projection, it follows that $Q(\da)=Q(\di)$ and
hence, $\pdb\models\Compl Q$.
\end{proof}
Let $Q$ be a conjunctive query. We say that a set $\C$ of \LC statements
is \emph{characterizing} for $Q$ if for all incomplete databases
$\pdb$ it holds that $\pdb\models\cplset$ if and only if $\pdb\models\Compl Q$.

From Corollary~\ref{c_q-models-compl_q} we know that the canonical
completeness statements are a sufficient condition for query completeness
under set semantics. However, one can show that they fail to be a
necessary condition for queries with projection. One may wonder whether
there exist other sets of characterizing TC statements for such que\-ries.
The next theorem tells us that this is not the case.
\begin{prop}
\label{theo-LCs:Not:Characterizing:For:Qs:W:Projection} Let $Q$
be a conjunctive query with at least one non-distinguished variable.
Then no set of \localComp statements is characterizing for $\Compl Q$
under set semantics. \end{prop}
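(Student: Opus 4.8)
The plan is to argue by contradiction, using a closure property that sets of table completeness statements enjoy but that $\Compl Q$ loses once $Q$ projects. First I would note that, for a fixed ideal instance $\di$, satisfaction of a TC statement constrains only the available instance: for $C=\Compl{R(\tpl t);G}$ the forced‑fact set $F_C(\di)=\{\,R(\tpl c)\mid \tpl c\in Q_{R(\tpl t);G}(\di)\,\}$ depends on $\di$ alone, and $(\di,\da)\models C$ iff $F_C(\di)\subseteq\da$. Hence for a whole set $\C$, $(\di,\da)\models\C$ iff $\bigcup_{C\in\C}F_C(\di)\subseteq\da$, so over a fixed $\di$ the class of available instances completing $\C$ is closed under intersection: $(\di,\da_1)\models\C$ and $(\di,\da_2)\models\C$ imply $(\di,\da_1\cap\da_2)\models\C$. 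If some $\C$ were characterizing for $\Compl Q$, then $\Compl Q$ would inherit this property, so it suffices to exhibit one ideal instance $\di$ and two available instances $\da_1,\da_2\subseteq\di$ with $(\di,\da_1)\models\Compl Q$ and $(\di,\da_2)\models\Compl Q$ but $(\di,\da_1\cap\da_2)\not\models\Compl Q$. Such ``witness splitting'' is precisely what projection under set semantics allows, since one answer tuple can be produced by several unrelated valuations.

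To build the triple I would use frozen bodies. We may assume $Q$ is minimal — hence a core — since $\Compl Q$ depends only on the equivalence class of $Q$ (and satisfiable, as otherwise $\Compl Q$ always holds and is characterized by $\emptyset$); write $Q(\tpl x)\qif B(\tpl x,\tpl y)$ with $\tpl y$ nonempty and pick a nondistinguished variable $y$. Let $\frozen B$ be the frozen body, let $T\subseteq\frozen B$ be the frozen atoms mentioning $\frozen y$, and let $T'$ arise from $T$ by renaming $\frozen y$ to a fresh constant $\frozen y'$ throughout. Set $\di:=\frozen B\cup T'$, $\da_1:=\frozen B$, $\da_2:=(\frozen B\setminus T)\cup T'$, so $\da_1\cap\da_2=\frozen B\setminus T$. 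Because $Q$ is a core there is no homomorphism $B\to\frozen B\setminus T$ that is the identity on $\tpl x$ (it would give a non‑surjective endomorphism of $B$), so $\frozen{\tpl x}\notin Q(\da_1\cap\da_2)$ while $\frozen{\tpl x}\in Q(\di)$, yielding $(\di,\da_1\cap\da_2)\not\models\Compl Q$. Both $\da_1$ and $\da_2$ are isomorphic copies of $\frozen B$ inside $\di$, so each answers $\frozen{\tpl x}$, and the remaining point is the claim $Q(\da_1)=Q(\di)=Q(\da_2)$, i.e.\ that the twin atoms $T'$ add no answer tuple. Here the retraction $\rho$ sending $\frozen y'$ to $\frozen y$ is a homomorphism $\di\to\frozen B$, so $\rho(\tpl c)\in Q(\frozen B)$ for every $\tpl c\in Q(\di)$; and $\tpl c$ can differ from $\rho(\tpl c)$ only if some distinguished variable is mapped to $\frozen y'$, which forces that distinguished variable to occupy a relation/position pair also occupied by $y$ in $B$. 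So whenever $y$ can be chosen to avoid sharing a position with any distinguished variable, no new answer survives and the argument closes (the case of $\da_2$ being symmetric).

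The main obstacle is the residual case in which every nondistinguished variable shares a position with a distinguished one — the archetype being $Q(x)\qif R(x,y),R(y,x)$, where the fresh element $\frozen y'$ lands on a $2$-cycle with $\frozen x$ and so re‑enters $Q(\di)$ as a genuine new answer, breaking $Q(\da_1)=Q(\di)$. For such cyclic/symmetric cores I would replace the twin by a canonical database: take $\di$ to consist of all tuples over a finite universe $U$ for every relation symbol of $B$, so $Q(\di)=U^{|\tpl x|}$, and two-colour the atoms of $\di$ into $\da_1,\da_2$ so that each colour class still yields all of $U^{|\tpl x|}$; this is feasible because a nondistinguished variable supplies $|U|^{|\tpl y|}$ distinct valuations witnessing each answer tuple, so for $|U|$ large enough such a colouring exists, and then $\da_1\cap\da_2=\emptyset$ makes $Q(\da_1\cap\da_2)=\emptyset\subsetneq Q(\di)$. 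The delicate part of the write‑up is merging these two ideas into a single construction valid for every minimal projecting $Q$ — twinning the genuinely non‑rigid part of the body while leaving the rigid part shared, and controlling whether fresh elements can re‑enter as answers. Everything else — the closure lemma, the core/homomorphism bookkeeping, and the reduction to minimal $Q$ — is routine.
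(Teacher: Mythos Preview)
Your closure-under-intersection framing is a clean setup, and the twin construction it leads to is exactly the paper's: with $a=\tilde y$ and $b=\tilde y'$, your $\di,\da_1,\da_2$ coincide with the paper's $B[y/a]\cup B[y/b]$, $B[y/a]$, $B[y/b]$. The paper then asserts that $\D_1,\D_2\models\compl Q$, deduces that any characterizing $\C$ forces no fact of $\di$, and concludes that $\D_3=(\di,\emptyset)$ satisfies $\C$ but not $\compl Q$. You are right to flag the obstacle for cores such as $Q(x)\qif R(x,y),R(y,x)$: there $Q(\di)=\{\tilde x,a,b\}$ but $Q(\da_1)=\{\tilde x,a\}$, so $\D_1\not\models\compl Q$ and the argument collapses. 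The paper's own proof glosses over precisely this point---it justifies $\D_1,\D_2\models\compl Q$ only by ``$B[y/a]$ and $B[y/b]$ are isomorphic and $y$ does not appear in the output'', which is insufficient once the fresh constant can be bound by a distinguished variable---so you have in fact been more careful here than the paper.

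Your repair for the residual case is a plausible direction but not yet a proof. Insisting on a partition $\da_1\cap\da_2=\emptyset$ is too strong: any atom of $B$ containing only distinguished variables yields an instance atom that every witness for a given answer must use, so it has to lie in both $\da_i$ (e.g.\ $Q(x_1,x_2)\qif R(x_1),S(x_1,y),S(y,x_2)$ is minimal and falls under your position-sharing ``hard'' criterion, yet no partition of the complete instance can work---and, incidentally, the twin construction already succeeds for this query, so your hard-case criterion is also broader than necessary). All you actually need is $Q(\da_1\cap\da_2)\subsetneq Q(\di)$, and the assertion ``for $|U|$ large enough such a colouring exists'' then requires a genuine argument (probabilistic or explicit); you should also say what happens when $Q$ carries comparisons.
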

\begin{proof}
Let $\query QB$ be a query with at least one nondistinguished variable
$y$. We construct three incomplete databases as follows:

\begin{align*}
 & \id{D_{1}}=\set{B[y/a],\, B[y/b]} & \av{D_{3}} & =\set{B[y/a]}\\
 & \id{D_{2}}=\set{B[y/a],\, B[y/b]} & \av{D_{3}} & =\set{B[y/b]}\\
 & \id{D_{3}}=\set{B[y/a],\, B[y/b]} & \av{D_{3}} & =\set{}.
\end{align*}

Suppose now there exists some set $\C$ of TC statements such that
$\C$ characterizes $\compl Q$, that is, for any incomplete database
$\D$ it holds that $\D\models\C$ iff $\D\models\compl Q$. The incomplete
databases $\D_{1}$ and $\D_{2}$ constructed above satisfy $\compl Q$,
because $B[y/a]$ and $B[y/b]$ are isomorphic, and the nondistinguished
variable does not appear in the output in the output of $Q(\D_{1})$
or $Q(\D_{2})$. Since the available databases $\da_{1}$ and $\da_{2}$
are however missing the facts $B[y/b]$ and $B[y/a]$, respectively,
it cannot be the case that any tuple from $B[y/b]$ or $B[y/a]$ in
the ideal databases $\di_{1}$ and $\di_{2}$ is constrained by a
TC statement in $\C$.

Thus, also the incomplete database $\D_{3}$ satisfies $\C$, because
$\di_{3}$ is the same as $\di_{1}$ and $\di_{2}$, and the available
database misses only the facts $B[y/a]\cup B[y/b]$ that are not constrained
by $\C$. But clearly $\D_{3}$ does not satisfy $\compl Q$, as $Q(\di_{3})$
contains the distinguished variables of $Q$ but $Q(\da_{3})$ is
empty.
\end{proof}
By Theorem~\ref{theo-LCs:Not:Characterizing:For:Qs:W:Projection},
for a projection query $Q$ the statement $\Compl Q$ is not equivalent
to any set of TC statements. Thus, if we want to perform arbitrary
reasoning tasks, no set of TC statements can replace $\Compl Q$. 

However, if we are interested in TC-QC inferences, that is, in finding
out whether $\Compl Q$ \emph{follows} from a set of TC statements
$\C$, then, as the next result shows, $\C_{Q}$ can take over the
role of $\Compl Q$ provided $Q$ is a minimal relational query and
the statements in $\C$ are relational:
\begin{thm}
Let $Q$ be a minimal relational conjunctive query and $\C$ be a
set of \localComp statements containing no comparisons. Then\label{thm:weakest-precond}
\[
\C\models\Compl Q\mbox{\quad implies\quad}\C\models\C_{Q}.
\]

\end{thm}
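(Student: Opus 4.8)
The plan is to prove the contrapositive: assuming $\C\not\models\C_Q$, I would construct one incomplete database that satisfies every statement in $\C$ yet violates $\Compl Q$. Write $A_1,\dots,A_n$ for the body $B$ of $Q$, so that $\C_Q=\{C_1,\dots,C_n\}$ with $C_j=\Compl{A_j;\,A_1,\dots,A_{j-1},A_{j+1},\dots,A_n}$. Since entailment of a set of statements means entailment of each member, $\C\not\models\C_Q$ gives a fixed index $i$ with $\C\not\models C_i$. Let $R_i$ be the relation symbol of $A_i$, let $\frozen B$ be the canonical (frozen) instance obtained from $B$ by replacing each variable $y$ by a fresh constant $\frozen y$, and let $\frozen{\tpl x}$ be the frozen image of the distinguished variables; by the earlier remarks I may assume heads are constant-free.

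For the database I would take $\pdb^{*}=(\frozen B,E)$, where $E$ is the ``$\C$-forced'' part of $\frozen B$, namely $E=\bigcup_{C\in\C}\{\,R(u\tpl s)\mid C=\Compl{R(\tpl s);G}\text{ and }u\text{ is a valuation with }u(R(\tpl s),G)\subseteq\frozen B\,\}$. Equivalently, $E$ is the available side produced by applying the TGDs $\rho_C$ ($C\in\C$) to $(\frozen B,\eset)$; as these rules never fire on the ideal side, a single round saturates, $E$ is well defined, and $E\subseteq\frozen B$, so $\pdb^{*}$ is a legitimate incomplete database. Then $\pdb^{*}\models\C$ holds essentially by construction, since $E$ already contains every $R$-fact that the condition of a statement $\Compl{R(\tpl s);G}\in\C$ demands over the ideal side $\frozen B$.

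The key step is to show $\frozen{A_i}\notin E$. By Theorem~\ref{theo-equivalence:TC-TC:UCont}, $\C\models C_i$ is equivalent to the containment $Q_{C_i}\subseteq\bigcup\{Q_C\mid C\in\C,\ \mathrm{rel}(C)=R_i\}$; as $Q$ is relational and the statements in $\C$ are comparison-free, all queries involved are relational conjunctive queries, so this containment is decided by the canonical database of the containee $Q_{C_i}$, which is exactly $\frozen B$ with canonical answer tuple the arguments of $\frozen{A_i}$. Thus $\C\not\models C_i$ says precisely that no $C\in\C$ produces $\frozen{A_i}$ over $\frozen B$, i.e.\ $\frozen{A_i}\notin E$. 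Now $\frozen{\tpl x}\in Q(\frozen B)$ via the identity valuation $y\mapsto\frozen y$; if we also had $\frozen{\tpl x}\in Q(E)$, a witnessing valuation $w$ with $wB\subseteq E\subseteq\frozen B\setminus\{\frozen{A_i}\}$ and $w\tpl x=\frozen{\tpl x}$ would, after composing with the inverse of the freeze map, be an endomorphism of $B$ fixing the distinguished variables whose image omits $A_i$, hence not surjective; but the body of a minimal conjunctive query is a core, so every such endomorphism is surjective (Chandra and Merlin) --- a contradiction. Hence $\frozen{\tpl x}\notin Q(E)$, so $Q(\frozen B)\neq Q(E)$, i.e.\ $\pdb^{*}\not\models\Compl Q$; with $\pdb^{*}\models\C$ this yields $\C\not\models\Compl Q$, the contrapositive.

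I expect the crux to be the step $\frozen{A_i}\notin E$: it hinges on non-entailment of the \emph{single} table-completeness statement $C_i$ being witnessed already over $Q$'s own canonical database $\frozen B$, which rests on the canonical-database criterion for containment of a relational conjunctive query in a union of relational conjunctive queries (Theorem~\ref{theo-equivalence:TC-TC:UCont} together with the results of Chandra and Merlin and of Sagiv and Yannakakis). That criterion breaks down once comparisons appear, which is exactly why the statement is confined to relational $Q$ and comparison-free $\C$ and, presumably, why the claim for general conjunctive queries is only claimed and not proven. Everything else is routine bookkeeping: the constants-in-heads normalisation, the fact that chasing $(\frozen B,\eset)$ stabilises after one round, the direct verification that $\pdb^{*}\models\C$, and the degenerate cases where $Q$ is Boolean or projection-free (for which the conclusion also follows from Theorem~\ref{theorem_characterizing_query_completeness}).
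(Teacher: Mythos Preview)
Your proof is correct and follows the same overall contour as the paper's: argue the contrapositive, take the frozen body $\frozen B$ as the ideal side, and use minimality of $Q$ to conclude that the frozen head tuple cannot be recovered once the atom $\frozen{A_i}$ is absent from the available side. The difference lies in how the available side is chosen and how $\frozen{A_i}$ is shown to be missing. The paper sets $\av D = \frozen B\setminus\{\frozen{A_i}\}$ and then, starting from an \emph{arbitrary} witness $\pdb$ to $\C\not\models C_i$, proves that this specific pair still satisfies $\C$ via a composition-of-homomorphisms argument (pushing a hypothetical violation in $\frozen B$ forward along the valuation witnessing the original violation in $\pdb$). You instead set $\av D = T_\C(\frozen B)$, so that satisfaction of $\C$ is automatic, and then invoke Theorem~\ref{theo-equivalence:TC-TC:UCont} together with the canonical-database criterion for containment of a relational query in a union of relational queries to obtain $\frozen{A_i}\notin T_\C(\frozen B)$ directly.

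Your route is a bit more modular: it reuses the $T_\C$ machinery and the TC--TC/containment equivalence already established in the chapter, and it pinpoints exactly where the ``relational, no comparisons'' hypothesis is consumed (the one-instance containment test). The paper's route is more self-contained, replacing the appeal to Sagiv--Yannakakis by an explicit homomorphism chase through the external witness $\pdb$. Both arguments collapse to the same final step---a non-surjective body endomorphism fixing the head contradicts minimality---so neither is materially stronger; yours is just tidier bookkeeping.
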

For completeness we list below the proof that is already contained
in \cite{razniewski:diplom:thesis:2010}. Note however that there
erroneously it is claimed that the theorem holds for conjunctive queries,
while the proof deals only with relational queries. Whether the theorem
holds also for conjunctive queries is an open question.
\begin{proof}
By contradiction. Assume $Q$ is minimal and $\C$ is such that $\C\models\qcompl Q$,
but $\C\not\models\C_{Q}$. Then, because $\C\not\models\C_{Q}$,
there exists some incomplete database $\pdb$ such that $\pdb\models\C$,
but $\pdb\not\models\C_{Q}$. Since $\pdb\not\models\C_{Q}$, we find
that one of the canonical completeness statements in $\C_{Q}$ does
not hold in $\pdb$. Let~$B$ be the body of $Q$. 

Without loss of generality, assume that $\pdb\not\models C_{1}$,
where $C_{1}$ is the canonical statement for $A_{1}=R_{1}(\tpl d_{1})$,
the first atom in~$B$. Let $Q_{1}$ be the query associated to $C_{1}$.
Thus, there exists some tuple $\tpl u_{1}$ such that $\tpl u_{1}\in Q_{1}(\di)$,
but $\tpl u_{1}\not\in R_{1}(\da)$. Now we construct a second incomplete
database $\pdb_{0}$. To this end let $B'$ be the frozen version
of $B$, that is, each variable in $B$ is replaced by a fresh constant,
and let $A'_{1}=R_{1}(\tpl d_{1}')$ be the frozen version of $A_{1}$.
Now, we define $\pdb_{0}=(B',B'\setminus\set{A'_{1}})$.

\medskip{}

\emph{Claim:} $\pdb_{0}$ satisfies $\C$ as well 

\medskip{}

\noindent To prove the claim, we note that the only difference between
$\di_{0}$ and $\da_{0}$ is that $A'_{1}\notin\da_{0}$, therefore
all \LC statements in $\C$ that describe \localComp of relations
other than $R_{1}$ are satisfied immediately. To show that $\pdb_{0}$
satisfies also all statements in $\C$ that describe \localComp of
$R_{1}$, we assume the contrary and show that this leads to a contradiction. 

\noindent Assume $\pdb_{0}$ does not satisfy some statement $C\in\C$.
Then $Q_{C}(\di_{0})\setminus R_{1}(\da_{0})\neq\eset$, where $Q_{C}(\bar{x}')$
is the query associated with $C$. Since $Q_{C}(\di_{0})\incl R_{1}(\da_{0})$,
it must be the case that $\tpl d'_{1}\in Q_{C}(\di_{0})\setminus R_{1}(\da_{0})$.
Let $B_{C}$ be the body of $Q_{C}$.Then, $\tpl d'_{1}\in Q_{C}(\di_{0})$
implies that there is a valuation $\delta$ such that $\delta B_{C}\incl B'$
and $\delta\tpl x'=\tpl d'_{1}$, where $\tpl x'$ are the distinguished
variables of $C$. As $\tpl u_{1}\in Q_{1}(\di)$, and $Q_{1}$ has
the same body as $Q$, there exists another valuation $\theta$ such
that $\theta B\subseteq\di$ and $\theta\tpl d_{1}=\tpl u_{1}$, where
$\tpl d_{1}$ are the arguments of the atom $A_{1}$.

Composing $\theta$ and $\delta$, while ignoring the difference between
$B$ and its frozen version $B'$, we find that $\theta\delta B_{C}\incl\theta B'=\theta B\incl\di$
and $\theta\delta\tpl x'=\theta\tpl d'_{1}=\theta\tpl d_{1}=\tpl u_{1}$.
In other words, $\theta\delta$ is a satisfying valuation for $Q_{C}$
over $\di$ and thus $\tpl u_{1}=\theta\delta\tpl x'\in Q_{C}(\di)$.
However, $\tpl u_{1}\notin R_{1}(\da)$, hence, $D$ would not satisfy
$C$. This contradicts our initial assumption. Hence, we conclude
that also $\pdb_{0}$ satisfies $\C$.

\smallskip{}

Since $\pdb_{0}$ satisfies $\C$ and $\C\models\Compl Q$, it follows
that $Q$ is complete over $\pdb_{0}$. As $\di_{0}=B'$, the frozen
body of $Q$, we find that $\tpl x''\in\id Q(\pdb_{0})$, with $\tpl x''$
being the frozen version of the distinguished variables $\tpl x$
of $Q$. As $Q$ is complete over $\pdb_{0}$, we should also have
that $\tpl x''\in Q(\da_{0})$. However, as $\di_{0}=B'\setminus\set{A'_{1}}$,
this would require a satisfying valuation from $B$ to $B'\setminus\set{A'_{1}}$
that maps $\tpl x$ to $\tpl x''$. This valuation would correspond
to a non-surjective homomorphism from $Q$ to $Q$ and hence $Q$
would not be minimal.
\end{proof}
By the previous theorems, we have seen that for queries without projection
and for minimal relational queries, the satisfaction of the canonical
completeness statements is a necessary condition for the entailment
of query completeness from table completeness for queries under set
semantics. 

As a consequence, in these cases the question of whether TC statements
imply completeness of a query $Q$ can be reduced to the question
of whether these TC statements imply the canonical completeness statements
of $Q$.

\subsection{TC-QC Entailment for Queries under Set Semantics}

\label{sub:general-tc-qc-entailment}

We have seen that for queries under bag semantics, a TC-QC entailment
problem can be translated into a TC-TC entailment problem by using
the canonical completeness statements. Furthermore, the TC-TC entailment
problem can be translated into a query containment problem. 

For queries under set semantics, we have seen that for queries containing
projections, no characterizing set of TC statements exists. For queries
without comparisons, we have seen that nevertheless for TC-QC entailment,
weakest preconditions in terms of TC exist, which allow to reduce
TC-QC to TC-TC. For queries with comparisons however, it is not known
whether such characterizing TC statements exist. In the following
we will show that there is a translation of TC-QC into query containment
directly.

Recall that we distinguish between four languages of conjunctive queries: 
\begin{itemize}
\item linear relational queries ($\LRQ$): conjunctive queries without repeated
relation symbols and without comparisons, 
\item relational queries ($\RQ$): conjunctive queries without comparisons, 
\item linear conjunctive queries ($\LCQ$): conjunctive queries without
repeated relation symbols, 
\item conjunctive queries ($\CQ$). 
\end{itemize}
We say that a TC statement is in one of these languages if its associated
query is in it. For $\L_{1}$, $\L_{2}$ ranging over the above languages,
we denote by $\LCQC(\L_{1},\L_{2})$ the problem to decide whether
a set of TC statements in $\L_{1}$ entails completeness of a query
in $\L_{2}$. As a first result, we show that TC-QC entailment can
be reduced to a certain kind of query containment. It also corresponds
to a simple containment problem \wrt tuple-generating dependencies.
From this reduction we obtain upper bounds for the complexity of TC-QC
entailment.

To present the reduction, we define the \emph{unfolding} of a query
\wrt to a set of TC statements. Let $\query{Q(\tpl s)}{\aufz An,M}$
be a conjunctive query where $M$ is a set of comparisons and the
relational atoms are of the form $A_{i}=R_{i}(\tpl s_{i})$, and let
$\C$ be a set of TC statements, where each $C_{j}\in\C$ is of the
form $\Compl{R_{j}(\tpl d_{j});G_{j}}$. Then the unfolding of $Q$
\wrt $\C$, written $Q^{\C}$, is defined as follows: 
\[
Q^{\C}(\tpl s)=\bigwedge_{i=1,..,n}\Big(R_{i}(\tpl s_{i})\wedge\bigvee_{C_{j}\in\C,}(G_{j}\wedge\tpl s_{i}=\tpl d_{j})\Big)\wedge M.
\]
 Intuitively, $Q^{\C}$ is a modified version of $Q$ that uses only
those parts of tables that are asserted to be complete by $\C$.
\begin{thm}
\label{thm:reduction_lc-qc_to_containment} Let $\C$ be a set of
TC statements and $Q$ be a conjunctive query. Then 
\[
\C\models\Compl Q\mbox{\quad iff\quad}Q\subseteq Q^{\C}.
\]

\end{thm}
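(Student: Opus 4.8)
The plan is to prove both implications directly from the definitions, reading $Q^{\C}$ as the union of conjunctive queries obtained by distributing every inner disjunction over the outer conjunction: a disjunct picks, for each relational atom $A_i = R_i(\tpl s_i)$ of $Q$, one statement $C_{j} = \compl{R_{j}(\tpl d_{j});G_{j}}$ of $\C$ and contributes the atoms of $G_{j}$ together with the equalities $\tpl s_i = \tpl d_{j}$, where the variables of this occurrence of $C_j$ are taken fresh (disjuncts with $R_j \ne R_i$ are unsatisfiable and may be dropped, and the comparisons $M$ of $Q$ are kept in every disjunct). The one bookkeeping fact underlying both directions is the following \emph{translation}: for any instance $E$ and tuple $\tpl c$, one has $\tpl c \in Q^{\C}(E)$ if and only if there is a valuation $v$ satisfying the body of $Q$ over $E$ with $v\tpl s = \tpl c$ such that for every $i$ there is some $C \in \C$ whose head relation is $R_i$ with $v A_i \in Q_{C}(E)$ (identifying an answer of $Q_C$ with the corresponding $R_i$-fact). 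This holds in both directions essentially by definition: from a satisfying valuation $w$ of $Q^{\C}$ over $E$ one reads off its restriction $v$ to the variables of $Q$ and, for the statement $C_{j(i)}$ chosen by $w$ at position $i$, the restriction of $w$ to that fresh copy witnesses $v A_i \in Q_{C_{j(i)}}(E)$; conversely, a $v$ together with witnessing valuations $\mu_i$ of the associated queries $Q_{C_{j(i)}}$ can be glued (they share no variables once renamed) into a satisfying valuation of $Q^{\C}$.

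For the direction $Q \subseteq Q^{\C} \Rightarrow \C \models \compl{Q}$, I would take any incomplete database $\pdb = (\di,\da) \models \C$ and show $Q(\di) = Q(\da)$. Monotonicity of conjunctive queries and $\da \incl \di$ give $Q(\da) \incl Q(\di)$. For the reverse inclusion, let $\tpl c \in Q(\di)$; then $\tpl c \in Q^{\C}(\di)$, so by the translation there is a satisfying valuation $v$ of $Q$ over $\di$ with $v\tpl s = \tpl c$ and, for each $i$, some $C_{j(i)} \in \C$ with $v A_i \in Q_{C_{j(i)}}(\di)$. Since $\pdb \models C_{j(i)}$, Definition~\ref{def:satisfaction-of-tc} gives $Q_{C_{j(i)}}(\di) \incl R_i(\da)$, hence $v A_i \in R_i(\da)$ for every $i$. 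As $v$ still satisfies $M$, it satisfies the body of $Q$ over $\da$, so $\tpl c \in Q(\da)$. Thus $\pdb \models \compl{Q}$, and since $\pdb$ was arbitrary, $\C \models \compl{Q}$.

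For the converse $\C \models \compl{Q} \Rightarrow Q \subseteq Q^{\C}$, I would argue contrapositively. Suppose $Q \not\subseteq Q^{\C}$ and pick a database $D$ and a tuple $\tpl c$ with $\tpl c \in Q(D)$ but $\tpl c \notin Q^{\C}(D)$. Writing $R_C$ for the relation symbol in the head of $C$, define the available instance
\[
\da \ :=\ \bigcup_{C \in \C}\{\, R_C(\tpl u) \mid \tpl u \in Q_C(D) \,\}
\]
and set $\pdb = (\di,\da)$ with $\di = D$. Every answer of $Q_C$ is an $R_C$-fact of $D$, so $\da \incl D$ and $\pdb$ is indeed an incomplete database; moreover $Q_C(\di) = Q_C(D) \incl R_C(\da)$ for every $C \in \C$ by construction, i.e.\ $\pdb \models \C$. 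It remains to show $\pdb \not\models \compl{Q}$, for which it suffices that $\tpl c \notin Q(\da)$, since then $\tpl c \in Q(\di) \setminus Q(\da)$. Suppose instead $\tpl c \in Q(\da)$ with witnessing valuation $v$, $v\tpl s = \tpl c$. For each $i$, $v A_i \in \da$ means $v A_i \in Q_{C_{j(i)}}(D)$ for some $C_{j(i)} \in \C$ with head relation $R_i$; and since $v A_i \in \da \incl D$ for all $i$ and $v$ satisfies $M$, $v$ satisfies the body of $Q$ over $D$ as well. By the right-to-left direction of the translation, $\tpl c \in Q^{\C}(D)$, contradicting the choice of $D$. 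Hence $\tpl c \notin Q(\da)$, so $\pdb \models \C$ yet $\pdb \not\models \compl{Q}$, contradicting $\C \models \compl{Q}$; therefore $Q \subseteq Q^{\C}$.

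The argument carries no genuine difficulty; the only point that needs care is keeping the variable renamings of the TC statements separate across the atoms $A_i$ and across disjuncts, so that a valuation of $Q^{\C}$ decomposes cleanly into a valuation of $Q$ plus one independent valuation per TC statement used, and vice versa. Once the translation fact is phrased precisely both directions are short, and the construction of $\da$ in the second direction — the union of the images $Q_C(D)$, which is exactly the smallest available instance still satisfying $\C$ relative to the ideal instance $D$ — is precisely what makes the counterexample go through.
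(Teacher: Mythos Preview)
Your proof is correct and follows essentially the same approach as the paper. The paper packages your ``translation'' fact and the properties of your constructed $\da$ into a separate lemma about the operator $T_{\C}(D) := \bigcup_{C\in\C}\{R_C(\tpl u)\mid \tpl u\in Q_C(D)\}$ (showing in particular $Q^{\C}(D)=Q(T_{\C}(D))$), and then runs the $\Rightarrow$ direction directly rather than contrapositively by taking $\pdb=(D,T_{\C}(D))$ for arbitrary $D$; but the underlying construction and reasoning are the same as yours.
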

The proof follows after the next Lemma.

Intuitively, this theorem says that a query is complete \wrt a set
of TC statements, iff its results are already returned by the modified
version that uses only the complete parts of the database. This gives
the upper complexity bounds of TC-QC entailment for several combinations
of languages for TC statements and queries. 

The containment problems arising are more complicated than the ones
commonly investigated. The first reason is that queries and TC statements
can belong to different classes of queries, thus giving rise to asymmetric
containment problems with different languages for container and containee.
The second reason is that in general $Q^{\C}$ is not a conjunctive
query but a conjunction of unions of conjunctive queries.

To prove Theorem~\ref{thm:reduction_lc-qc_to_containment}, we need
a definition and a lemma. 
\begin{defn}
\label{def:tc-operator} Let $C$ be a \LC-statement for a relation
$R$. We define the function $T_{C}$ that maps database instances
to $R$-facts as $T_{C}(D)=\set{R(\tpl d)\mid\tpl d\in Q_{C}(D)}$.
That is, if $\di$ is an ideal database, then $T_{C}(\di)$ returns
those $R$-facts that must be in $\da$, if $(\di,\da)$ is to satisfy
$C$. We define $T_{\C}(D)=\bigcup_{C\in\C}T_{C}(D)$ if $\C$ is
a set of TC-statements. \end{defn}
\begin{lem}
\label{lemma_reasoning_1} Let $\C$ be a set of TC statements. Then 
\begin{enumerate}
\item \label{lemma_reasoning1_1} $T_{\C}(D)\incl D$, \ for all database
instances $D$; 
\item \label{lemma_reasoning1_2} $\pdb\models\C\mbox{\ \ iff \ }T_{\C}(\id D)\incl\av D$,
\ for all incomplete databases $\pdb=(\di,\da)$ with $\da\incl\di$; 
\item \label{lemma_reasoning1_3} $Q^{\C}(D)=Q(T_{\C}(D))$, \ for all
conjunctive queries $Q$ and database instances $D$. 
\end{enumerate}
\end{lem}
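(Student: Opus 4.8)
The plan is to obtain items (i) and (ii) by unwinding the definition of $T_{\C}$ (Definition~\ref{def:tc-operator}) together with the definition of satisfaction of a TC statement (Definition~\ref{def:satisfaction-of-tc}). For (ii), fix a single statement $C=\Compl{R(\tpl t);G}\in\C$. By Definition~\ref{def:satisfaction-of-tc}, $\pdb\models C$ iff $Q_{R(\tpl t);G}(\di)\incl R(\da)$. Every element of $T_C(\di)$ is an $R$-fact, so $T_C(\di)\incl\da$ is equivalent to $T_C(\di)\incl R(\da)$, which by Definition~\ref{def:tc-operator} is exactly $Q_{R(\tpl t);G}(\di)\incl R(\da)$; hence $\pdb\models C$ iff $T_C(\di)\incl\da$. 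Taking the conjunction over all $C\in\C$ and using $T_{\C}(\di)=\bigcup_{C\in\C}T_C(\di)$ gives (ii). Item (i) then follows as the special case $\da=\di=D$ of (ii) (note $(D,D)\models\C$ holds trivially); alternatively, directly: if $\tpl d\in Q_{C}(D)$, the witnessing valuation maps the $R$-atom of the body of $Q_C$ into $D$, so $R(\tpl d)\in D$, whence $T_C(D)\incl D$ and $T_{\C}(D)\incl D$.

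The substance of the lemma is item (iii), which I would prove by a direct unfolding argument. Write the body of $Q$ as $A_1,\dots,A_n,M$ with $A_i=R_i(\tpl s_i)$. The key observation is a per-atom equivalence: for every valuation $v$ of the variables of $Q$ and every $i$,
\[
R_i(v\tpl s_i)\in T_{\C}(D)\quad\text{iff}\quad v\ \text{extends to an assignment over}\ D\ \text{satisfying}\ R_i(\tpl s_i)\wedge\bigvee_{C_j\in\C}(G_j\wedge\tpl s_i=\tpl d_j),
\]
where each $C_j=\Compl{R_i(\tpl d_j);G_j}$ (only statements for $R_i$ contribute) is taken with a fresh copy of its variables, disjoint from those of $Q$. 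For ``$\Rightarrow$'': $R_i(v\tpl s_i)\in T_{C_j}(D)$ for some statement $C_j$ for $R_i$ means $v\tpl s_i\in Q_{C_j}(D)$, witnessed by a valuation $u$ of the fresh variables of $C_j$ with $u\tpl d_j=v\tpl s_i$, $R_i(u\tpl d_j)\in D$ and $uG_j\incl D$; then $v\cup u$ satisfies the $C_j$-disjunct together with $R_i(\tpl s_i)$. For ``$\Leftarrow$'': if $v$ together with an assignment $u$ to the fresh variables of some $C_j$ satisfies $R_i(\tpl s_i)\wedge G_j\wedge\tpl s_i=\tpl d_j$ over $D$, then $u\tpl d_j=v\tpl s_i$, $R_i(u\tpl d_j)=R_i(v\tpl s_i)\in D$ and $uG_j\incl D$, so $u$ witnesses $v\tpl s_i\in Q_{C_j}(D)$, hence $R_i(v\tpl s_i)\in T_{C_j}(D)\incl T_{\C}(D)$.

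Given this equivalence, I would conclude as follows. A tuple $\tpl c$ lies in $Q(T_{\C}(D))$ iff there is a valuation $v$ of the variables of $Q$ with $v\tpl s=\tpl c$ such that $v$ satisfies $M$ and $R_i(v\tpl s_i)\in T_{\C}(D)$ for all $i$; by the per-atom equivalence this is precisely the condition that $v$, suitably extended on the fresh variables introduced by the disjunctions, satisfies the body of $Q^{\C}$ over $D$ with $v\tpl s=\tpl c$, i.e.\ $\tpl c\in Q^{\C}(D)$. Thus $Q^{\C}(D)=Q(T_{\C}(D))$. I expect the only real care needed — more a bookkeeping point than a genuine mathematical difficulty — to be the handling of variables: $Q^{\C}$ is a conjunction of unions of conjunctive queries, so each disjunct $(G_j\wedge\tpl s_i=\tpl d_j)$ must carry its own fresh renaming of the variables of $C_j$, and these existentially quantified variables are exactly the assignments $u$ produced in the equivalence above.
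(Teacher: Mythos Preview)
Your proposal is correct and follows essentially the same approach as the paper: items (i) and (ii) are unwound from the definitions, and item (iii) is the unfolding argument. The paper's own proof is in fact only a brief sketch (one sentence per item), so your per-atom equivalence and the bookkeeping about fresh variables simply spell out in detail what the paper states tersely as ``unfolding $Q$ using the queries in $\C$ and evaluating the unfolding over $D$ amounts to the same as computing $T_{\C}(D)$ and evaluating $Q$ over the result.''
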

\begin{proof}
\textit{(1)} Holds because of the specific form of the queries associated
with $\C$. 

\textit{(2)} Follows from the definition of when an incomplete database
satisfies a set of \LC statements. 

\textit{(3)} Holds because unfolding $Q$ using the queries in $\C$
and evaluating the unfolding over the original database $D$ amounts
to the same as computing a new database $T_{\C}(D)$ using the queries
in $\C$ and evaluating $Q$ over the result. 
\end{proof}

\begin{proof}
[Proof of Theorem \ref{thm:reduction_lc-qc_to_containment}]\Onlyif:
Suppose $\C\models\Compl Q$. We want to show that $Q\subseteq Q^{\C}$.
Let $D$ be a database instance. Define $\id D=D$ and $\av D=T_{C}(D)$.
Then $\pdb=(\id D,\av D)$ is a incomplete database, due to Lemma~\ref{lemma_reasoning_1}(i),
which satisfies $\C$, due to Lemma~\ref{lemma_reasoning_1}(iii).
Exploiting that $\pdb\models\Compl Q$, we infer that $Q(D)=Q(\id D)=Q(\av D)=Q(T_{\C}(D))=Q^{\C}(D)$.

\If: Suppose $Q\subseteq Q^{\C}$. Let $\pdb=(\id D,\av D)$ be an
incomplete database such that $\pdb\models\C$. Then we have $Q(\id D)\incl Q^{\C}(\id D)=Q(T_{\C}(\id D))\incl Q(\av D)$,
where the first inclusion holds because of the assumption, the equality
holds because of Lemma~\ref{lemma_reasoning_1}(iii), and the last
inclusion holds because of Lemma~\ref{lemma_reasoning_1}(ii), since
$\pdb\models\C$.
\end{proof}
We show that for linear queries $Q$ the entailment $\C\models\Compl Q$
can be checked by evaluating the function $\tcop$ over test databases
derived from $Q$. If $\C$ does not contain comparisons, one test
database is enough, otherwise exponentially many are needed. We use
the fact that containment of queries with comparisons can be checked
using test databases obtained by instantiating the body of the containee
with representative valuations (see~\cite{Klug-Containment:And:Comparisons-JACM}).
A set of valuations $\Theta$ is \emph{representative} for a set of
variables $\tpl x$ and constants $\tpl c$ relative to $M$, if the
$\theta\in\Theta$ correspond to the different ways to linearly order
the terms in $\tpl x\cup\tpl c$ while conforming to the constraints
imposed by $M$.
\begin{lem}
\label{lemma_reasoning2} Let $\query{Q(\tpl s)}{L,M}$ be a conjunctive
query, let $\C$ be a set of TC statements, and let $\Theta$ be a
set of valuations that is representative for the variables in $Q$
and the constants in $L$ and $\C$ relative to $M$. Then: \end{lem}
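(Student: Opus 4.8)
The plan is to lean on Theorem~\ref{thm:reduction_lc-qc_to_containment}, which already turns $\C\models\Compl Q$ into the containment $Q\subseteq Q^{\C}$, and then reduce that containment to a finite family of evaluations of $\tcop$ by the representative‑valuation technique for containment of queries with order comparisons (Klug). Concretely I would first isolate the auxiliary fact: for a conjunctive query $Q$ with relational atoms $L$ and comparisons $M$, and for any \emph{positive} query $P$ (a monotone query built from relational atoms, conjunction, disjunction, existential quantification and comparisons), one has $Q\subseteq P$ iff $\theta\tpl s\in P(\theta L)$ for every $\theta$ in a set $\Theta$ that is representative for the variables of $Q$ together with the constants of $L$ and of $P$ relative to $M$. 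The lemma then follows by instantiating $P:=Q^{\C}$, whose constants are exactly those of $\C$, and rewriting $Q^{\C}(\theta L)=Q(\tcop(\theta L))$ via Lemma~\ref{lemma_reasoning_1}(iii); in the comparison‑free case a single $\theta$ (the identity freezing of $L$) suffices, matching the "one test database" remark.

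For the auxiliary fact the forward direction is immediate: each $\theta\in\Theta$ conforms to $M$, so it is a satisfying valuation of $Q$ over the canonical database $\theta L$, whence $\theta\tpl s\in Q(\theta L)\subseteq P(\theta L)$. The backward direction carries the weight. Given a database $D$ and $\tpl c\in Q(D)$, pick a witnessing valuation $\mu$ with $\mu L\subseteq D$, $\mu\models M$, $\mu\tpl s=\tpl c$. The total preorder that $\mu$ induces on the terms of $\tpl x\cup\tpl c$ (variables of $Q$ and constants of $L$) is compatible with $M$, hence is realised by some $\theta\in\Theta$ together with a map $h$ on $\dom$ that is constant‑preserving, strictly monotone on the finitely many values involved (using density of $\dom$ to fill gaps, including for any value a non‑safe comparison variable of some $G_{j}$ receives), and satisfies $h\circ\theta=\mu$ on those terms, so $h(\theta L)=\mu L\subseteq D$. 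By hypothesis $\theta\tpl s\in P(\theta L)$; fixing an assignment to all variables of $P$ that satisfies its relational and comparison atoms over $\theta L$ and composing with $h$ yields an assignment satisfying $P$ over $D$ — the relational atoms land in $D$ since $h(\theta L)\subseteq D$, and the comparison atoms (those of $M$ and those inside the $G_{j}$) survive because $h$ is constant‑preserving and strictly monotone. Hence $\tpl c=\mu\tpl s=h(\theta\tpl s)\in P(D)$, so $Q\subseteq P$.

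The main obstacle I expect is the comparison bookkeeping in the backward direction, and in particular making precise why $\Theta$ must also order the constants occurring in $\C$: these are exactly the constants appearing in the conditions $G_{j}$ and in the tuples $\tpl d_{j}$ of the unfolding $Q^{\C}$, so unless $\Theta$ fixes their position relative to the variables of $Q$ the de‑freezing map $h$ could destroy a comparison that is actually evaluated during the computation of $Q^{\C}(\theta L)$. A secondary point, worth a one‑line remark, is that although $Q^{\C}$ is a conjunction of unions of conjunctive queries rather than a single conjunctive query, the representative‑valuation criterion still applies, since it only uses that the container is monotone and existential‑positive with comparisons; this is also where the "exponentially many test databases" count comes from, as $|\Theta|$ is exponential in the number of variables and constants once $M$ (or comparisons in $\C$) are present, and collapses to a single test database when neither $Q$ nor $\C$ uses comparisons.
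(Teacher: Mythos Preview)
Your auxiliary fact is correct and the sketch via representative valuations is sound; instantiating with $P=Q^{\C}$ and invoking Lemma~\ref{lemma_reasoning_1}(iii) you obtain the characterisation
\[
Q\subseteq Q^{\C}\quad\mbox{iff}\quad \theta\tpl s\in Q\bigl(T_{\C}(\theta L)\bigr)\ \mbox{for all}\ \theta\in\Theta.
\]
But this is not what the lemma asserts. The lemma's criterion is $\theta L=T_{\C}(\theta L)$ (and, in part~(i), simply $L=T_{\C}(L)$). You never bridge the two, and that bridge is precisely where the hypothesis $Q\in\LCQ$---linearity of $Q$---enters, a hypothesis you do not mention at all. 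The argument is short once spotted: since $T_{\C}(\theta L)\subseteq\theta L$ and $\theta L$ contains at most one atom per relation symbol, any satisfying valuation of $Q$ over $T_{\C}(\theta L)$ must send each atom of $L$ to its unique $\theta$-image, forcing $\theta L\subseteq T_{\C}(\theta L)$; the converse direction is immediate. Without linearity the two criteria diverge, which is exactly why the lemma is stated only for $Q\in\LCQ$. The paper's proof uses linearity in the same place, in the ``$\Rightarrow$'' direction, when it argues that if an atom $A$ drops out of $T_{\C}$ then ``there is no atom in $T_{\C}(D)$ with the same relation symbol as $A$''.

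A secondary imprecision concerns part~(i). You say ``in the comparison-free case a single $\theta$ (the identity freezing of $L$) suffices'', but in (i) the query $Q\in\LCQ$ may well carry comparisons; it is $\C\subseteq\RQ$ that is comparison-free. So $\Theta$ does not collapse. The reason one test with the frozen $L$ suffices is rather that, when the TC statements are relational, $T_{\C}$ is insensitive to the ordering: from $L\subseteq T_{\C}(L)$ one gets $\theta L\subseteq T_{\C}(\theta L)$ for every $\theta$ by composing the witnessing homomorphisms with $\theta$, and conversely, for any injective $\theta$ satisfying $M$ one has $T_{\C}(\theta L)$ isomorphic to $T_{\C}(L)$. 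This step is separate from your Klug-style argument and is missing from the proposal.
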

\begin{itemize}
\item \label{lemma_reasoning2_1} If $Q\in$ $\LCQ$, and $\C\subseteq\RQ$,
then 
\[
Q\subseteq Q^{\C}\mbox{\quad iff\quad}L=T_{C}(L).
\]

\item \label{lemma_reasoning2_2} If $Q\in\LCQ$ and $\C\subseteq\CQ$,
then 
\[
Q\subseteq Q^{\C}\mbox{\quad iff\quad}\theta L=T_{\C}(\theta L)\mbox{\ \ for all\ }\theta\in\Theta.
\]
\end{itemize}
\begin{proof}
\textit{(i)} \Onlyif Suppose $T_{\C}(L)\not\subseteq L$. Then there
is an atom $A$ such that $A\in L\setminus T_{C}(L)$. We consider
a valuation $\theta$ for $Q$ and create the database $D=\theta L$.
Then $Q(D)\neq\eset$ and, due to containment, $Q^{\C}(D)\neq\eset$.
At the same time, $Q^{\C}(D)=Q(T_{\C}(D))=Q(T_{\C}(\theta L))$. However,
since $A\not\in t_{C}(L)$, there is no atom in $T_{C}(D)$ with the
same relation symbol as $A$ and therefore $Q(T_{C}(D))=\eset$.

\If Let $\tpl c\in Q(D)$. We show that $\tpl c\in Q^{\C}(D)$. There
exists a valuation $\theta$ such that $\theta\models M$, $\theta L\incl D$,
and $\theta\tpl s=\tpl c$. Since $L=T_{\C}(L)$, we conclude that
$\theta L=T_{\C}(\theta L)\subseteq T_{C}(D)$. Hence, $\theta$ satisfies
$Q$ over $T_{\C}(D)$. Thus $\tpl c=\theta\tpl s\in Q(T_{\C}(D))=Q^{\C}(D)$. 

\textit{(ii)} $"\Rightarrow":$ Same argument as for \textit{(i)}.
Suppose $T_{\C}(\theta L)\not\subseteq\theta L$ for some $\theta\in\Theta$.
Then as for (i), there must exist an atom $A\in\theta L$ which is
not in $T_{\C}(\theta L)$ and which allows to construct a database
where $Q$ returns some answer using $A$, which $Q^{\C}$ does not
return.

$"\Leftarrow":$ Suppose $\tpl c\in Q(D)$. Then there must exist
some valuation $\theta$ such that $\theta\models M$, $\theta L\incl D$,
and $\theta\tpl s=\tpl c$, and $\theta$ must order the terms in
$Q$ in the same way as some valuation $\theta'\in\Theta$. Since
$\theta'L=T_{\C}(\theta'L)$, we conclude that also $\theta L=T_{\C}(\theta L)\subseteq T_{C}(D)$,
which again shows that $\tpl c$ is also returned over $Q^{\C}(D)$.
\end{proof}
The above lemma says that when checking \LC-QC entailment for relational
\LC statements and a query in $\LCQ$, we can ignore the comparisons
in the query and decide the containment problem by applying the function
$t_{\C}$ to the relational atoms of the query.
\begin{thm}
We have the following upper bounds: \label{theo-lcqc:upper:bounds} 
\begin{enumerate}
\item $\LCQC(\RQ,\LCQ)$ is in $\PTIME$. 
\item $\LCQC(\CQ,\LCQ)$ is in $\CONP$. 
\item $\LCQC(\RQ,\RQ)$ is in $\NP$. 
\item $\LCQC(\CQ,\CQ)$ is in $\piptwo$. 
\end{enumerate}
\end{thm}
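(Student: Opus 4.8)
The plan is to start from the reduction of TC-QC entailment to query containment established in Theorem~\ref{thm:reduction_lc-qc_to_containment}, namely $\C\models\Compl Q$ iff $Q\subseteq Q^{\C}$, and to bound the cost of deciding this containment for each of the four language combinations. When the query $Q$ is linear (parts~(1) and~(2)) I would additionally invoke Lemma~\ref{lemma_reasoning2}, which collapses the containment test to checking, over one or more test databases built from the relational body $L$ of $Q$, whether $L$ is a fixpoint of the operator $\tcop$. When $Q$ may repeat relation symbols (parts~(3) and~(4)) linearity is unavailable, so I would argue directly from the identity $Q^{\C}(D)=Q(\tcop(D))$ of Lemma~\ref{lemma_reasoning_1} together with the inclusion $\tcop(D)\incl D$, reducing the containment to the existence of a homomorphism from $Q$ into $\tcop$ applied to the frozen body of $Q$.

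For parts~(1) and~(2), $Q\in\LCQ$. By Lemma~\ref{lemma_reasoning2}, when $\C\subseteq\RQ$ the containment $Q\subseteq Q^{\C}$ holds iff $L=\tcop(L)$ for the frozen relational body $L$ of $Q$, and when $\C\subseteq\CQ$ it holds iff $\theta L=\tcop(\theta L)$ for every $\theta$ in a representative set $\Theta$ for the variables of $Q$ and the constants in $L$ and $\C$. Since $\tcop(D)\incl D$ always holds (Lemma~\ref{lemma_reasoning_1}), only the reverse inclusion needs to be tested, i.e.\ that every atom of the (instantiated) body lies in $\tcop$ of that body. The decisive observation is that, because $Q$ is linear, the body $L$ (and each $\theta L$) contains at most one atom per relation symbol; hence, for a TC statement $C_{j}=\Compl{R(\tpl d_{j});G_{j}}$ and a target atom $R(\tpl a)$, any homomorphism of the body $R(\tpl d_{j}),G_{j}$ that sends $\tpl d_{j}$ to $\tpl a$ is \emph{forced} atom by atom (each relational atom of $G_{j}$ must map to the unique atom of the body with its relation symbol, and each comparison in $G_{j}$ is then evaluated on concrete arguments), so its existence is decidable in polynomial time even if $G_{j}$ has repeated relation symbols or comparisons. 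Consequently $\tcop(L)$, and each $\tcop(\theta L)$, is computable in polynomial time. For part~(1) this yields a single polynomial test, so $\LCQC(\RQ,\LCQ)\in\PTIME$. For part~(2), $\Theta$ may be exponentially large, but each $\theta$ has polynomial size and the test $\theta L=\tcop(\theta L)$ is polynomial; non-entailment is therefore witnessed by a single $\theta$ violating the equality, which can be guessed and checked in polynomial time, so $\LCQC(\CQ,\LCQ)\in\CONP$.

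For parts~(3) and~(4), let $D_{Q}$ denote the frozen body of $Q$ and $\tpl c$ its frozen distinguished tuple. In part~(3) (no comparisons anywhere), $Q\subseteq Q^{\C}$ holds iff $\tpl c\in Q^{\C}(D_{Q})=Q(\tcop(D_{Q}))$, i.e.\ iff there is a homomorphism $g$ from $Q$ into $\tcop(D_{Q})$ that maps the distinguished variables of $Q$ to $\tpl c$; and since $\tcop(D_{Q})\incl D_{Q}$, each image atom $R(\tpl a)=g(A_{i})$ belongs to $\tcop(D_{Q})$ iff there is a homomorphism $h_{i}$ from the body of some $Q_{C_{j}}$ into $D_{Q}$ that sends its head tuple to $\tpl a$. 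A certificate for entailment is thus $g$ together with one pair $(C_{j},h_{i})$ per relational atom $A_{i}$ of $Q$; all of these are of polynomial size and polynomially verifiable, so $\LCQC(\RQ,\RQ)\in\NP$. In part~(4) the comparisons force a universal quantifier over a representative set $\Theta$ for the variables of $Q$ and the constants in $Q$ and $\C$: by the standard instantiation argument for containment with comparisons in the containee, $\C\models\Compl Q$ iff for every $\theta\in\Theta$ we have $\theta\tpl c\in Q^{\C}(\theta L)=Q(\tcop(\theta L))$. For a fixed $\theta$, $\theta L$ is a concrete instance and, exactly as in part~(3), the condition $\theta\tpl c\in Q(\tcop(\theta L))$ is an $\NP$ property; hence non-entailment has the form ``there is a $\theta\in\Theta$ for which a $\CONP$ property holds'', which places it in $\sigptwo$, and therefore $\LCQC(\CQ,\CQ)\in\piptwo$.

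The step I expect to be the main obstacle is that $Q^{\C}$ is in general not a conjunctive query but a conjunction of unions of conjunctive queries, so the naive route of distributing it into a union of conjunctive queries and applying known containment bounds blows up exponentially. The argument must avoid ever materialising that union: for the linear cases this is done by exploiting that linearity of $Q$ makes the evaluation of $\tcop$ on the (instantiated) body deterministic and hence polynomial, even though the TC statements may themselves contain repeated relation symbols and comparisons; for the general cases it is done by decomposing the single ``homomorphism into $\tcop(D_{Q})$'' test atom by atom, with a separate small homomorphism certifying membership of each image atom in $\tcop(D_{Q})$, so that the overall witness stays of polynomial size. Checking carefully that these per-atom certificates are indeed sufficient — in particular that the representative-valuation machinery of Lemma~\ref{lemma_reasoning2} (and of Klug's technique in part~(4)) interacts correctly with the unions occurring inside $Q^{\C}$ — is where the real work lies.
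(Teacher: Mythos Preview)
Your proposal is correct. For parts~(i) and~(ii) you follow exactly the paper's route: invoke Lemma~\ref{lemma_reasoning2} and exploit the linearity of $L$ to make $\tcop$ computable in polynomial time on the (instantiated) body, yielding $\PTIME$ for one test database and $\CONP$ when a representative valuation must be guessed.

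For parts~(iii) and~(iv) your route differs from the paper's. The paper does not unfold the structure of $Q^{\C}$ at all; it simply observes that $Q^{\C}$ is a positive relational query (case~(iii)) or a positive query with comparisons (case~(iv)) and then quotes off-the-shelf containment bounds: Sagiv--Yannakakis for $\Cont(\RQ,\text{positive }\RQ)\in\NP$, and van der Meyden for $\Cont(\CQ,\text{positive }\CQ)\in\piptwo$. You instead derive these bounds in situ by using $Q^{\C}(D)=Q(\tcop(D))$ and packaging the witness as a single homomorphism $g$ from $Q$ into $\tcop(D_Q)$ together with, for each body atom, a small homomorphism certifying membership of its image in $\tcop(D_Q)$; for~(iv) you wrap this in a universal quantifier over representative valuations. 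This is essentially a specialised re-proof of the cited results, tailored to the particular shape of $Q^{\C}$. The paper's approach is shorter and delegates the subtle interaction between representative valuations and the inner unions to the literature; your approach is self-contained and makes the polynomial-size certificate explicit, which is pedagogically useful and sidesteps the need to verify that $Q^{\C}$ really falls under the scope of those cited theorems.
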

\begin{proof}
\textit{(i)} By Lemma~\ref{lemma_reasoning2}(i), the containment
test requires to check whether whether $L=t_{\C}(L)$ for a linear
relational condition $L$ and a set $\C$ of relational \LC statements.
Due to the linearity of $L$, this can be done in polynomial time.

\textit{(ii)} By Lemma~\ref{lemma_reasoning2}(ii), non-containment
is in $\NP$, because it suffices to guess a valuation $\theta\in\Theta$
and check that $\theta L\setminus T_{C}(\theta L)\neq\eset$, which
can be done in polynomial time, since $L$ is linear.

\textit{(iii)} Holds because containment of a relational conjunctive
query in a positive relational query (an extension of relational conjunctive
queries that allows disjunction) is in $\NP$ (see \cite{Sagiv:Yannakakis-Containment-VLDB}).

\textit{(iv)} Holds because containment of a conjunctive query in
a positive query with comparisons is in $\piptwo$ \cite{meyden-Complexity_querying_ordered_domains-pods}.
\end{proof}
In Lemma \ref{prop:hardness-tc-qc-bag-semantics} we have seen that
$\UCont(\L_{2},\L_{1})$ can be reduced also to $\tcqc^{s}(\L_{1},\L_{2})$.
To use this lemma for showing the hardness of TC-QC entailment, we
have to consider the complexities of the asymmetric containment that
were discussed in Section \ref{sec:prelims:query-containment}. 

For one problem, namely $\LCQC(\LRQ,\CQ)$, the upper bound that was
shown in Theorem \ref{theo-lcqc:upper:bounds} ($\piptwo)$ and the
complexity of the corresponding query containment problem $\UCont(\CQ,\LRQ)$
(NP) do not match. Indeed, using the same technique as was used to
show the hardness of $\Cont(\RQ,\LCQ)$ in Corollary \ref{corr:hardness-cont-rq-lcq},
we are able to prove that $\LCQC(\LRQ,\CQ)$ is $\piptwo$-hard:
\begin{lem}
\label{hardness-1-case} There is a $\PTIME$ many-one reduction from
$\forall\exists$3-SAT to $\LCQC(\LRQ,\CQ)$. 
\end{lem}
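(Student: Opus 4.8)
The plan is to reduce $\forall\exists$3-SAT to the containment problem $P\subseteq P^{\C}$ and then invoke Theorem~\ref{thm:reduction_lc-qc_to_containment}, which states that for a conjunctive query $P$ and a set $\C$ of TC statements one has $\C\models\Compl P$ iff $P\subseteq P^{\C}$. So it suffices, given a $\forall\exists$3-SAT instance $\phi=\forall x_1\dots x_m\,\exists y_1\dots y_n:\gamma_1\wedge\dots\wedge\gamma_k$, to produce in polynomial time a query $P\in\CQ$ and a set $\C$ of $\LRQ$ statements (statements whose associated query is linear and comparison-free) such that $P\subseteq P^{\C}$ holds exactly when $\phi$ is valid. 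The building blocks are those of the $\piptwo$-hardness proof for $\Cont(\RQ,\LCQ)$ in Corollary~\ref{corr:hardness-cont-rq-lcq}: the selector gadget $G_j=R_j(0,w_j),R_j(w_j,1),S_j(w_j,0),S_j(1,1)$ with a free variable $w_j$, its comparison-guarded partner $G'_j=R_j(y_j,z_j),S_j(z_j,x_j),y_j\le 0,z_j>0$, and, for each clause $\gamma_i$, the block $F_i^{(7)}$ of seven ground facts over a fresh relation $C_i$ listing the satisfying $\{0,1\}$-assignments of $\gamma_i$.

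In the reduction, $P$ plays the part of the \emph{containee} of Corollary~\ref{corr:hardness-cont-rq-lcq}: its body contains all selector blocks $G_j$ and all ground blocks $F_i^{(7)}$, so that every frozen instance of $P$ already carries this structure. The set $\C$ is designed so that the unfolding $P^{\C}$ plays the part of the \emph{container}: it contains trivial statements $\Compl{R(\tpl x_R);\true}$ for $R_j$, $S_j$ and $C_i$ (so those atoms survive the unfolding unchanged), together with one ``reconstruction'' statement per clause and per universal variable whose condition --- being linear and comparison-free, hence in $\LRQ$ --- rebuilds, upon unfolding an auxiliary atom of $P$, the atoms $C_i(\tpl p_i)$ with their shared propositional variables and the atoms $R_j(y_j,z_j),S_j(z_j,x_j)$ of $G'_j$. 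Since $\LRQ$ statements cannot carry the comparisons $y_j\le 0,z_j>0$, these are placed in the body of $P$ itself, on $P$-variables $y_j,z_j$ that the reconstruction statements re-use, so that they reappear in $P^{\C}$; $P$ is additionally equipped with enough copies of the auxiliary atoms (whose frozen versions furnish, in every test database, both the $(0,w_j)$- and the $(w_j,1)$-choice) so that a homomorphism witnessing $P\subseteq P^{\C}$ has genuine freedom in matching the $G'_j$-pattern against $G_j$.

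Correctness is then argued as in Corollary~\ref{corr:hardness-cont-rq-lcq}, via representative valuations (Klug's method, as already used in Lemma~\ref{lemma_reasoning2}): a representative valuation of $P$ fixes the sign of each $w_j$, hence an assignment of the universal variables; the comparison-guarded pattern $G'_j$ in $P^{\C}$ forces the witnessing homomorphism to send $x_j$ to $0$ or $1$ in accordance with that sign; and the clause atoms $C_i(\tpl p_i)$ must be mapped into the block $F_i^{(7)}$, i.e.\ the existential variables must be extended to an assignment satisfying every clause. Hence $P\subseteq P^{\C}$ iff for every sign pattern of the $w_j$ --- equivalently, for every assignment to $\bar x$ --- the existential variables can be chosen to satisfy $\phi$'s matrix, i.e.\ iff $\phi$ is valid. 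The construction is clearly computable in linear time, giving the required many-one reduction.

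I expect the delicate point to be precisely the relocation of the comparisons: because the $\LRQ$ restriction forbids comparisons inside the TC statements, the order constraints of the reference gadget must live in the containee $P$, where they also prune the set of representative valuations considered; one has to verify that this does not collapse the $\forall/\exists$ alternation, and that the duplicated auxiliary atoms of $P$ give the witnessing homomorphism exactly the freedom it needs --- no more, no less --- so that the representative-valuation analysis reproduces the reference argument line by line. Everything else is routine bookkeeping on top of Corollary~\ref{corr:hardness-cont-rq-lcq} and Theorem~\ref{thm:reduction_lc-qc_to_containment}.
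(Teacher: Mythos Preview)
Your plan takes a genuinely different route from the paper, and the very difficulty you flag at the end is the one the paper's construction neatly avoids.

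The paper does \emph{not} try to push the $Q'$-structure (the clause atoms $C_i(\tpl p_i)$ and the patterns $G'_j$) into the conditions of TC statements. Instead it takes the query to be $Q\cap Q'$ itself, i.e.\ the conjunction of the bodies of $Q$ and $Q'$ from Corollary~\ref{corr:hardness-cont-rq-lcq}. The comparisons $y_j\le 0$, $z_j>0$ simply sit in the body of this $\CQ$-query, where they came from. The set $\C$ of TC statements is then extremely simple: for each $j$ the four statements $\Compl{R_j(0,\_);\true}$, $\Compl{R_j(\_,1);\true}$, $\Compl{S_j(\_,0);\true}$, $\Compl{S_j(\_,1);\true}$, and for each $i$ the two statements $\Compl{C_i(0,\_,\_);\true}$, $\Compl{C_i(1,\_,\_);\true}$. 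These are all $\LRQ$ statements (single atom, one constant, empty condition). The correctness argument is short: if $Q\subseteq Q'$ then $Q\cap Q'\equiv Q$, and every atom of $Q$'s body carries a $0$ or $1$ in a position covered by $\C$, so $\C\models\Compl Q=\Compl{Q\cap Q'}$. If $Q\not\subseteq Q'$, pick $D_0$ with $D_0\models Q$ but $D_0\not\models Q'$, and set $\di=D_0\cup vB_{Q'}$, $\da=D_0$, where $v$ sends every variable to $-3$ or $3$; the extra facts escape $\C$ (wrong constants), so $\C$ is satisfied, yet $Q\cap Q'$ holds on $\di$ and fails on $\da$.

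By contrast, your plan introduces auxiliary relations and ``reconstruction'' TC statements so that $P^{\C}$ rebuilds the $G'_j$-pattern, and relocates the comparisons $y_j\le 0$, $z_j>0$ onto $P$-variables. That relocation is exactly where the plan is incomplete: once $y_j,z_j$ are $P$-variables they are fixed by each representative valuation of $P$ (the universal side), and you have to engineer the auxiliary atoms so that, for every such valuation, the container $P^{\C}$ can still \emph{choose} the appropriate $\text{Aux}_j$-fact to recover the selector behaviour. You gesture at ``enough copies'' but never specify them, and you yourself say the alternation might collapse. The paper's trick sidesteps all of this: no auxiliary atoms, no reconstruction, no relocation --- the comparisons stay where they were, and the TC statements only filter atoms by the presence of the constants $0$ or~$1$.
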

In Corollary \ref{corr:hardness-cont-rq-lcq}, we have seen that $Cont(\RQ,\LCQ)$
is $\piptwo$-hard, because validity of $\forall\exists$3-SAT formulas
can be translated into a \linebreak{}
$Cont(\RQ,\LCQ)$ instance.

We now show $\piptwo$-hardness of $\LCQC(\LRQ,\CQ)$ by translating
those $\Cont(\RQ,\LCQ)$ instances into $\LCQC(\LRQ,\CQ)$ instances.

Recall that the $\Cont(\RQ,\LCQ)$ problems were of the form ``$Q\stackrel{?}{\subseteq}Q'$?\textquotedbl{},
where $Q$ and $Q'$ were 
\[
\query{Q()}{G_{1},\ldots,G_{m},F_{1}^{(7)},\ldots,F_{k}^{(7)}},
\]
\vspace{-15bp}
\[
\query{Q'()}{G'_{1},\ldots,G'_{m},C_{1}(p_{11},p_{12},p_{13}),\ldots,C_{k}(p_{k1},p_{k2},p_{k3})},
\]
 and $G_{j}$ and $G'_{j}$ were 
\begin{align*}
 & G_{j}=R_{j}(0,w_{j}),R_{j}(w_{j},1),S_{j}(w_{j},0),S_{j}(1,1),\\
 & G'_{j}=R_{j}(y_{j},z_{j}),S_{j}(z_{j},x_{j}),y_{j}\leq0,z_{j}>0.
\end{align*}

Now consider the set $\C$ of completeness statements containing for
every $1\leq j\leq m$ the statements 
\begin{align*}
\Compl{R_{j}(0,\_);\true},\\
\Compl{R_{j}(\_,1);\true},\\
\Compl{S_{j}(\_,0);\true},\\
\Compl{S_{j}(\_,1);\true},
\end{align*}
 and containing for every $1\leq i\leq k$ the statements 
\begin{align*}
\Compl{C_{i}(1,\_,\_);\true},\\
\Compl{C_{i}(0,\_,\_);\true},
\end{align*}
 where, for readability, $"\_$\textquotedbl{} stands for arbitrary
unique variables.

Clearly, $\C$ contains only statements that are in $\LRQ$ and $Q\cap Q'$
is in $\CQ$.
\begin{lem}
Let $Q$ and $Q'$ be queries constructed from the reduction of a
$\forall\exists$ 3-SAT instance, and let $\C$ be constructed as
above. Then 
\[
\C\models\Compl{Q\cap Q'}\mbox{\ \ iff \ \ }Q\subseteq Q'.
\]
\end{lem}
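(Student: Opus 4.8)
The plan is to prove both directions directly at the level of incomplete databases, using only the elementary facts about the operator $T_{\C}$ recorded in Lemma~\ref{lemma_reasoning_1} (one could instead route through Theorem~\ref{thm:reduction_lc-qc_to_containment} and compute the unfolding $(Q\cap Q')^{\C}$, but a direct argument is shorter). Write $B_{Q}$ and $B_{Q'}$ for the bodies of $Q$ and $Q'$; these queries are boolean and share no variables, so $Q\cap Q'$ is the boolean query with body $B_{Q}\cup B_{Q'}$ and comparisons $y_{j}\le 0$, $z_{j}>0$ (all inherited from $Q'$, since $Q$ is relational). The first thing I would record is how $\C$ meets the two halves of this body. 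On the one hand, every relational atom of $B_{Q}$ — namely the atoms $R_{j}(0,w_{j})$, $R_{j}(w_{j},1)$, $S_{j}(w_{j},0)$, $S_{j}(1,1)$ coming from the conditions $G_{j}$, and the ground $C_{i}$-atoms of the $F_{i}^{(7)}$'s, whose first argument is $0$ or $1$ — matches, under any valuation, the head pattern of some statement in $\C$, each of which has condition $\true$. On the other hand, $\C$ constrains only $R_{j}$-facts with first argument $0$ or second argument $1$, only $S_{j}$-facts with second argument $0$ or $1$, and only $C_{i}$-facts with first argument $0$ or $1$.

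For the implication $Q\subseteq Q'\Rightarrow\C\models\Compl{Q\cap Q'}$, I would take an arbitrary incomplete database $\pdb=(\di,\da)$ with $\pdb\models\C$. Since $Q\cap Q'$ is monotone and $\da\incl\di$, satisfaction of $\Compl{Q\cap Q'}$ reduces to: if $Q\cap Q'$ returns true over $\di$, then it returns true over $\da$. So fix a valuation $\nu$ of $\vars{Q}\cup\vars{Q'}$ with $\nu(B_{Q}\cup B_{Q'})\incl\di$ respecting the comparisons. By the first observation, each atom $\nu A$ with $A\in B_{Q}$ matches the head of a statement with condition $\true$ in $\C$ (e.g.\ $\nu(R_{j}(0,w_{j}))=R_{j}(0,\nu w_{j})$ matches $\Compl{R_{j}(0,\_);\true}\in\C$), and since $\nu A\in\di$ and $\pdb\models\C$ this forces $\nu A\in\da$; hence $\nu$ restricted to $\vars{Q}$ satisfies $Q$ over $\da$. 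From $Q\subseteq Q'$ there is then a valuation $\mu$ of $\vars{Q'}$ with $\mu(B_{Q'})\incl\da$ respecting the comparisons, and since $\vars{Q}$ and $\vars{Q'}$ are disjoint, the combination of $\nu$ on $\vars{Q}$ and $\mu$ on $\vars{Q'}$ witnesses that $Q\cap Q'$ returns true over $\da$. As $\pdb$ was arbitrary, $\C\models\Compl{Q\cap Q'}$.

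For the converse I would argue the contrapositive. Assume $Q\not\subseteq Q'$ and fix a database $D_{0}$ with $Q(D_{0})\neq\emptyset$ and $Q'(D_{0})=\emptyset$. Let $\nu'$ be the valuation of $\vars{Q'}$ that sends, for every $j$, the gadget variables $y_{j}\mapsto -1$ and $z_{j}\mapsto 2$, and sends every other variable of $Q'$ (each $x_{j}$ and each proposition variable) to a pairwise distinct fresh constant different from $0$, $1$ and from all constants in $D_{0}$. Set $\di:=D_{0}\cup\nu'(B_{Q'})$ and $\da:=T_{\C}(\di)$. By Lemma~\ref{lemma_reasoning_1}(i), $\da\incl\di$, and by Lemma~\ref{lemma_reasoning_1}(ii), $\pdb:=(\di,\da)$ satisfies $\C$. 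Also $Q\cap Q'$ returns true over $\di$: a valuation mapping $B_{Q}$ into $D_{0}$ exists because $Q(D_{0})\neq\emptyset$, while $\nu'$ maps $B_{Q'}$ into $\di$ and satisfies $y_{j}\le 0$, $z_{j}>0$, and the two combine (disjoint variables). Now by the choice of $\nu'$ every atom of $\nu'(B_{Q'})$ lies outside every pattern $\C$ can constrain — $R_{j}(-1,2)$ has neither first argument $0$ nor second argument $1$, $S_{j}(2,\cdot)$ has second argument $\notin\{0,1\}$, each $C_{i}(\cdot,\dots)$ has first argument $\notin\{0,1\}$ — so $T_{\C}(\di)=T_{\C}(D_{0})\incl D_{0}$. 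By monotonicity $(Q\cap Q')(\da)\incl Q'(T_{\C}(D_{0}))\incl Q'(D_{0})=\emptyset$, so $Q\cap Q'$ returns false over $\da$. Thus $\pdb\models\C$ but $\pdb\not\models\Compl{Q\cap Q'}$, hence $\C\not\models\Compl{Q\cap Q'}$.

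The only genuinely delicate point is the instantiation $\nu'$ in the last step: it must simultaneously (a) satisfy the comparisons $y_{j}\le 0$ and $z_{j}>0$, so that $Q\cap Q'$ really holds over $\di$, and (b) produce only atoms that no statement of $\C$ constrains, so that $\da=T_{\C}(\di)$ collapses to a subset of $D_{0}$, on which $Q'$ — and a fortiori $Q\cap Q'$ — fails. The density of $\dom$ is what makes suitable values available (something strictly below $0$, something strictly above $0$ and different from $1$, fresh constants elsewhere), and this is the one place where the precise shapes of the statements in $\C$ and of the atoms of the $G'_{j}$ and the $C_{i}$ are actually used; everything else is pattern-matching plus monotonicity.
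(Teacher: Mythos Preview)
Your proof is correct and follows essentially the same approach as the paper. The minor differences are stylistic: in the \If direction the paper simply notes that $Q\subseteq Q'$ makes $Q\cap Q'$ equivalent to $Q$ and then argues completeness of $Q$ alone, whereas you unfold this argument explicitly; in the \Onlyif direction the paper sets $\da=D_{0}$ and argues directly that the missing atoms of $vB_{Q'}$ violate no statement in $\C$, whereas you set $\da=T_{\C}(\di)$ and show $T_{\C}(\di)\subseteq D_{0}$, which amounts to the same thing via Lemma~\ref{lemma_reasoning_1}. Your treatment of the instantiation $\nu'$ is in fact more careful than the paper's, since you make explicit that the values must simultaneously satisfy the comparisons $y_{j}\le 0$, $z_{j}>0$ and avoid all head patterns of $\C$---a point the paper's choice of ``$-3$ or $3$'' leaves implicit.
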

\begin{proof}
\If Assume $Q\subseteq Q'$. We have to show that $\C\models\Compl{Q\cap Q'}$.
Because of the containment, $Q\cap Q'$ is equivalent to $Q$, and
hence it suffices to show that $\C\models\Compl Q$.

Consider an incomplete database $\pdb$ such that $\pdb\models\C$
and $\di\models Q$. Because of the way in which $\C$ is constructed,
all tuples in $\di$ that made $Q$ satisfied are also in $\da$,
and hence $\da\models Q$ as well.

\Onlyif Assume $Q\not\subseteq Q'$. We have to show that $\C\not\models\Compl{Q\cap Q'}$.

Since the containment does not hold, there exists a database $D_{0}$
that satisfies $Q$ but not $Q'$. We construct an incomplete database
$\pdb$ with 
\begin{align*}
 & \di=D_{0}\union vB_{Q'}\\
 & \da=D_{0},
\end{align*}
 where $v$ is a valuation for $Q'$ that maps any variable either
to the constant -3 or 3.

By that, the tuples from $vB_{Q'}$, missing in $\da$ do not violate
$\C$, that always has constants 0 or 1 in the heads of its statements,
so $\C$ is satisfied by $\pdb$. But as $\di$ satisfies $Q\cap Q'$
and $\da$ does not, this shows that $\C\not\models\Compl{Q\cap Q'}$.
\end{proof}
We summarize our results for the lower complexity bounds of TC-QC
entailment:
\begin{thm}
\label{theo-lcqc:lower:bounds} We have the following lower bounds: \end{thm}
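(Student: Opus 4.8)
The plan is to obtain most of the bounds mechanically from reductions already in place, and to treat the one combination that does not fit the pattern, namely $\LCQC(\LRQ,\CQ)$, separately.

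First I would record the engine. Theorem~\ref{prop:hardness-tc-qc-bag-semantics} supplies, for any classes $\L_1,\L_2$ among $\{\LRQ,\RQ,\LCQ,\CQ\}$, a linear-time reduction of $\UCont(\L_2,\L_1)$ to $\tcqc^{s}(\L_1,\L_2)$, which is exactly the problem $\LCQC(\L_1,\L_2)$. So to prove ``$\LCQC(\L_1,\L_2)$ is $X$-hard'' it suffices to exhibit an $X$-hard instance of $\UCont(\L_2,\L_1)$, and such instances are precisely the containment hardness results of Section~\ref{sec:prelims:query-containment}. Concretely: for the $\NP$-hard entry take $\UCont(\RQ,\RQ)$, which inherits $\NP$-hardness from $\Cont(\RQ,\LRQ)$ (Corollary~\ref{cor:hardness-containment-rq-lrq}), using $\LRQ\subseteq\RQ$ and that a single query is a one-element union; for the $\CONP$-hard entry take $\UCont(\LCQ,\CQ)$, which contains the $\CONP$-hard problem $\UCont(\LRQ,\LCQ)$ (Corollary~\ref{cor:hardness-lrq-lcq}), using $\LRQ\subseteq\LCQ\subseteq\CQ$; and for the $\piptwo$-hard entry take $\UCont(\CQ,\CQ)$, which contains the $\piptwo$-hard problem $\Cont(\RQ,\LCQ)$ (Corollary~\ref{corr:hardness-cont-rq-lcq}). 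Composing each of these with Theorem~\ref{prop:hardness-tc-qc-bag-semantics} gives the stated lower bounds; matched with the upper bounds of Theorem~\ref{theo-lcqc:upper:bounds}, the corresponding entailment problems are complete for $\NP$, $\CONP$ and $\piptwo$ respectively. Since the reduction of Theorem~\ref{prop:hardness-tc-qc-bag-semantics} is stated for $\ast\in\{s,b\}$, these three bounds also hold under bag semantics.

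The remaining case, $\LCQC(\LRQ,\CQ)$ is $\piptwo$-hard, does not come for free this way: the generic transfer would only yield $\NP$-hardness, because $\UCont(\CQ,\LRQ)$ is merely $\NP$-complete. For this I would invoke Lemma~\ref{hardness-1-case}, whose argument is the one sketched just before the statement: reduce from $\forall\exists$3-SAT by reusing the $\Cont(\RQ,\LCQ)$ encoding to build $Q$ and $Q'$, form $Q\cap Q'$ (which lies in $\CQ$), and let $\C$ be the fixed set of $\LRQ$ table-completeness statements declaring complete exactly the slices of $R_j$, $S_j$ and $C_i$ whose first argument is the constant $0$ or $1$. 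Then $\C\models\Compl{Q\cap Q'}$ iff $Q\subseteq Q'$: the ``if'' direction holds because the containment makes $Q\cap Q'$ equivalent to $Q$ and $\C$ retains, over any ideal database, every tuple that can make $Q$ true; for ``only if'' one takes a witness database $D_0$ with $D_0\models Q$ but $D_0\not\models Q'$ and forms the incomplete database $(D_0\cup vB_{Q'},\,D_0)$, where $v$ maps every variable to $\pm 3$, so that the tuples removed from the available side never match the heads of the statements in $\C$ and $\C$ stays satisfied, while $Q\cap Q'$ has become incomplete.

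The one delicate point is this last reduction: one must check that the $\LRQ$ statements in $\C$ are at once strong enough to force completeness of $Q$ when $Q\subseteq Q'$ and weak enough to survive on the padded witness database when $Q\not\subseteq Q'$, which is exactly what the gap between the constants $\{0,1\}$ appearing in the statement heads and the value $\pm 3$ used for the padding valuation achieves. Everything else is routine bookkeeping over results already proved. It is worth noting that this is also the place where set semantics is strictly harder than bag semantics: under bag semantics $\LCQC(\LRQ,\CQ)$ coincides with $\UCont(\CQ,\LRQ)$ and is only $\NP$-complete.
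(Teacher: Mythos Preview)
Your overall strategy is the paper's strategy, and your treatment of item~(4) via Lemma~\ref{hardness-1-case} is exactly right. However, the first three bounds are not established by your argument, because you widen the containment problems in the wrong direction before feeding them into the reduction.

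Recall what Theorem~\ref{prop:hardness-tc-qc-bag-semantics} gives you: $\UCont(\L_2,\L_1)$ reduces to $\LCQC(\L_1,\L_2)$. So to obtain hardness of $\LCQC(\L_1,\L_2)$ for the \emph{specific, small} classes listed in the theorem, you must feed in a hard $\UCont$ problem for exactly those classes. Instead you enlarge the classes first. For the $\CONP$ entry you take $\UCont(\LCQ,\CQ)$; via the reduction this yields hardness of $\LCQC(\CQ,\LCQ)$, not of $\LCQC(\LCQ,\LRQ)$ as the theorem claims. For the $\NP$ entry you take $\UCont(\RQ,\RQ)$, which yields hardness of $\LCQC(\RQ,\RQ)$, not of $\LCQC(\LRQ,\RQ)$. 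For the $\piptwo$ entry you take $\UCont(\CQ,\CQ)$, yielding hardness of $\LCQC(\CQ,\CQ)$, not of $\LCQC(\LCQ,\RQ)$. In each case you have proved a strictly weaker statement: hardness for a larger language pair does not imply hardness for a smaller one.

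The fix is trivial: do not widen. Plug the corollaries directly into the reduction. Corollary~\ref{cor:hardness-lrq-lcq} gives $\UCont(\LRQ,\LCQ)$ $\CONP$-hard, hence $\LCQC(\LCQ,\LRQ)$ is $\CONP$-hard. Corollary~\ref{cor:hardness-containment-rq-lrq} gives $\Cont(\RQ,\LRQ)$ $\NP$-hard, so $\UCont(\RQ,\LRQ)$ is $\NP$-hard, hence $\LCQC(\LRQ,\RQ)$ is $\NP$-hard. Corollary~\ref{corr:hardness-cont-rq-lcq} gives $\Cont(\RQ,\LCQ)$ $\piptwo$-hard, so $\UCont(\RQ,\LCQ)$ is $\piptwo$-hard, hence $\LCQC(\LCQ,\RQ)$ is $\piptwo$-hard. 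This is precisely the paper's proof.
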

\begin{enumerate}
\item $\LCQC(\LCQ,\LRQ)$ is $\CONP$-hard. 
\item $\LCQC(\LRQ,\RQ)$ is $\NP$-hard. 
\item $\LCQC(\LCQ,\RQ)$ is $\piptwo$-hard. 
\item $\LCQC(\LRQ,\CQ)$ is $\piptwo$-hard. \end{enumerate}
\begin{proof}
Follows from Corollaries \ref{cor:hardness-lrq-lcq}, \ref{cor:hardness-containment-rq-lrq},
\ref{corr:hardness-cont-rq-lcq} and Lemma~\ref{hardness-1-case}.
\end{proof}
We find that the upper bounds shown by the reduction to query containment
(Theorem \ref{theo-lcqc:upper:bounds}) and the lower bounds shown
in the Theorem above match. The complexity of TC-QC entailment is
also summarized in Table~\ref{table:complexity_lc-qc}.

\begin{table}[t]
\begin{centering}
{\scriptsize{}}%
\begin{tabular}{|>{\raggedright}p{1.2cm}c|cccc|}
\hline 
\multicolumn{2}{|c|}{} & \multicolumn{4}{c|}{{\scriptsize{}Query Language}}\tabularnewline
\multicolumn{2}{|c|}{} & {\scriptsize{}LRQ } & {\scriptsize{}LCQ } & {\scriptsize{}RQ } & {\scriptsize{}CQ }\tabularnewline
\hline 
\multirow{4}{1.2cm}{{\scriptsize{}TC Statement Language }} & {\scriptsize{}LRQ } & {\scriptsize{}polynomial } & {\scriptsize{}polynomial } & {\scriptsize{}NP-complete } & {\scriptsize{}$\piptwo$-complete }\tabularnewline
 & {\scriptsize{}RQ } & {\scriptsize{}polynomial } & {\scriptsize{}polynomial } & {\scriptsize{}NP-complete } & {\scriptsize{}$\piptwo$-complete }\tabularnewline
 & {\scriptsize{}LCQ } & {\scriptsize{}coNP-complete } & {\scriptsize{}coNP-complete } & {\scriptsize{}$\piptwo$-complete } & {\scriptsize{}$\piptwo$-complete}\tabularnewline
 & {\scriptsize{}CQ } & {\scriptsize{}coNP-complete } & {\scriptsize{}coNP-complete } & {\scriptsize{}$\piptwo$-complete } & {\scriptsize{}$\piptwo$-complete}\tabularnewline
\hline 
\end{tabular}
\par\end{centering}{\scriptsize \par}

\caption{Complexity of deciding TC-QC entailment under set semantics. Compared
with the reasoning under bag semantics (Table \ref{table:complexity:tc-qc-bag-semantics}),
the reasoning becomes harder for TC-statements without comparisons
and repeated relation symbols in combination with conjunctive queries.}

\label{table:complexity_lc-qc} 
\end{table}

\subsection{Alternative Treatment}

Instead of treating set and bag semantics completely separated, one
can show that set-reasoning can easily be reduced to bag reasoning.

We remind the reader that a conjunctive query is minimal, if no atom
can be dropped from the body of the query without leading to a query
that is not equivalent under set semantics. 

Given a query $Q$ and a set of TC statements $\C$, it was shown
that $\C\models\complb Q$ entails $\C\models\compls Q$. Regarding
the contrary, observe the following:
\begin{prop}
[Characterization] \label{prop:bag-and-set-normally-the-same}Let
$\query QB$ be a satisfiable conjunctive query. Then the following
two are equivalent:
\begin{enumerate}
\item $Q$ is minimal
\item For every set $\C$ of TC statements, it holds that $\C\models\compls Q$
implies \textup{$\C\models\complb Q$.}
\end{enumerate}
\end{prop}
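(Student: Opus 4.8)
The plan is to get both implications out of results already proved in this section: the weakest-precondition Theorem~\ref{thm:weakest-precond}, the fact that the canonical statements $\C_Q$ entail query completeness under either semantics (Corollary~\ref{c_q-models-compl_q}), and the operator $T_\C$ together with Lemma~\ref{lemma_reasoning_1}. Throughout it is worth keeping in mind that for a query with body $B'$ its frozen body $\widefrozen{B'}$ is a database over which the identity valuation satisfies the query.

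For $(1)\Rightarrow(2)$, suppose $Q$ is minimal and let $\C$ be a set of TC statements with $\C\models\compls Q$. Since $Q$ is minimal, Theorem~\ref{thm:weakest-precond} upgrades $\C\models\compls Q$ to $\C\models\C_Q$, i.e.\ the canonical completeness statements of $Q$ are already entailed by $\C$. Now apply Corollary~\ref{c_q-models-compl_q} with $*=b$, which gives $\C_Q\models\complb Q$; chaining the two entailments yields $\C\models\complb Q$, which is condition~(2). The caveat is that Theorem~\ref{thm:weakest-precond} is stated for relational queries and comparison-free TC statements, so this direction inherits exactly that hypothesis (the conjunctive case hinges on the open point noted there).

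For $(2)\Rightarrow(1)$ I would argue by contraposition: given that $Q$, with body $B$, is not minimal, I exhibit one $\C$ with $\C\models\compls Q$ but $\C\not\models\complb Q$. Let $Q_{min}$ be a minimal subquery of $Q$ obtained by deleting redundant atoms, so $Q_{min}$ is set-equivalent to $Q$ and its body $B_{min}$ is a proper subset of $B$; set $\C:=\C_{Q_{min}}$. That $\C\models\compls Q$ follows from Corollary~\ref{c_q-models-compl_q} (with $*=s$, applied to $Q_{min}$) together with the fact that $\compls{Q_{min}}$ and $\compls Q$ are literally the same statement. For $\C\not\models\complb Q$, consider the incomplete database $\pdb:=(\widefrozen B,\,T_\C(\widefrozen B))$; by Lemma~\ref{lemma_reasoning_1} it is a legitimate incomplete database satisfying $\C$. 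I would then show that $T_\C(\widefrozen B)$ equals the frozen body $\widefrozen{B_{min}}$ of $Q_{min}$ — every homomorphism from $B_{min}$ into $\widefrozen B$ lands inside $\widefrozen{B_{min}}$, since composing it with the retraction $B\to B_{min}$ produces an endomorphism of the core $B_{min}$ — so $T_\C(\widefrozen B)$ is strictly smaller than $\widefrozen B$, and finally that this strict loss of atoms strictly decreases the number of satisfying valuations of $Q$ that map the distinguished variables onto their frozen images, whence $Q^{b}(\widefrozen B)\neq Q^{b}(T_\C(\widefrozen B))$ and $\pdb\not\models\complb Q$.

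The main obstacle is the last step of the second direction: simultaneously controlling $T_{\C_{Q_{min}}}(\widefrozen B)$ (it must not reach outside $\widefrozen{B_{min}}$) and turning non-minimality of $Q$ into an actual multiplicity gap for the frozen distinguished tuple. The first part is the core-theoretic behaviour of minimal queries; the second is delicate precisely when the redundant atoms involve only projected variables, so the frozen-body instance may need to be set up more carefully than the naive choice $\widefrozen B$. A minor additional item is making explicit the relational scope carried over from Theorem~\ref{thm:weakest-precond} in the first direction.
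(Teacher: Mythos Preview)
Your $(1)\Rightarrow(2)$ direction is correct under the relational hypothesis you flag, but the paper does not go through Theorem~\ref{thm:weakest-precond} at all. It argues directly: given $\C\models\compls Q$ and $\D=(\di,\da)\models\C$, every valuation $v$ satisfying $Q$ over $\di$ also satisfies it over $\da$. If some atom $A\in vB$ were missing from $\da$, then $A$ is not constrained by $\C$ over $\di$, hence by monotonicity of $T_\C$ not over $vB$ either, so $(vB,vB\setminus\{A\})$ is an incomplete database satisfying $\C$; minimality of $Q$ then makes this a counterexample to $\C\models\compls Q$. Since all satisfying valuations survive, $\complb Q$ holds. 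This direct route is shorter and does not inherit the relational restriction that your detour via Theorem~\ref{thm:weakest-precond} forces on you.

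For $(2)\Rightarrow(1)$ you and the paper share the choice $\C=\C_{Q_{\min}}$, but your claimed equality $T_{\C_{Q_{\min}}}(\widetilde B)=\widetilde{B_{\min}}$ is false, and the core-theoretic justification you give does not work. Take $Q(x)\,{:}{-}\,R(x),R(y)$: here $B_{\min}=\{R(x)\}$, $\C_{Q_{\min}}=\{\Compl{R(x);\true}\}$, and $T_\C(\widetilde B)=\{R(\tilde x),R(\tilde y)\}=\widetilde B$, not $\widetilde{B_{\min}}$. The flaw is that $r\circ h$ being an automorphism of the core $B_{\min}$ does not force the homomorphism $h$ itself to land inside $B_{\min}$; the map $x\mapsto y$ witnesses this directly. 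The paper's construction is different and simpler: it picks one droppable atom $A_i$, a satisfying valuation $v$ (using satisfiability of $Q$), and uses the incomplete database $(vB,vB\setminus\{vA_i\})$, arguing that $v$ is a satisfying valuation over the ideal side but not over the available side, which already breaks $\complb Q$. So the obstacle you anticipate in your final paragraph is a genuine gap in your plan, not just a technical wrinkle.
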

\begin{proof}
$\proofr$ Suppose $Q$ is minimal, and suppose some set $\C$ of
TC statements entails $\compls Q$. We have to show that $\C$ entails
also $\complb Q$.

Let $\D$ be an incomplete database that satisfies $\C$. We have
to show that $\D$ satisfies also $\complb Q$. Let $v$ be a satisfying
valuation for $Q$ over $\di$. We claim that every atom in $vB$
is also in $\da$:

Suppose some atom $A\in vB$ was not in $\da$. Then, $A$ cannot
be constrained by $\C$. But then, we could construct an incomplete
database as $(vB,vB\setminus A)$ which would satisfy $\C$ but would
not satisfy $\compls Q$ because of the minimality of $Q$. This incomplete
database would contradict the assumption that $\C\models\compls Q$
and hence we conclude that any atom in $vB$ is also in $\da$. But
this implies that $\D$ satisfies $\complb Q$, which concludes the
argument.

$\proofl$ Suppose $Q$ is not minimal. We have to show that there
exist a set of TC statements that entails $\compls Q$ but does not
entail $\complb Q$. Consider the set $\can(Q_{\min})$, where $Q_{\min}$
is a minimal version of $Q$. By Proposition \ref{c_q-models-compl_q},
$\can(Q_{\min})$ entails $\compls{Q_{\min}}$ and hence since $Q$
and $Q_{\min}$ are equivalent under set semantics, $\can(Q_{\min})$
entails also $\compls Q$. Since $Q$ is not minimal, at least one
atom $A_{i}$ can be dropped from the body of $Q$ without changing
its results under set semantics. Furthermore, since $Q$ is satisfiable,
there must exist some satisfying valuation $v$ for $Q$ over $\di$.
Thus, an incomplete database $(vB,vB\setminus vA_{i})$ constructed
from the body $B$ of $Q$ shows that $\compls Q\not\models\complb Q$,
because the valuation $v$ is not satisfying for $Q$ over $\da$.\end{proof}
\begin{thm}
[Reduction] Let $Q$ be a conjunctive query, $\C$ be a set of TC
statements, and let $Q_{\min}$ be a minimal version of $Q$ under
set semantics. Then 
\[
\C\models\compls Q\ \ \mbox{iff}\ \ \C\models\complb{Q_{\min}}
\]
\end{thm}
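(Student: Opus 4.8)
The plan is to reduce the claim to Proposition~\ref{prop:bag-and-set-normally-the-same} together with two elementary facts: that $Q$ and $Q_{\min}$ are equivalent under set semantics, and that over any incomplete database bag completeness implies set completeness. First I would dispose of the degenerate case in which $Q$ is unsatisfiable. In that case $Q$, and hence any minimal version $Q_{\min}$, returns the empty answer over every database, so $\compls Q$ and $\complb{Q_{\min}}$ are satisfied by every incomplete database, and both sides of the claimed equivalence hold vacuously. From now on I would therefore assume $Q$ satisfiable, so that $Q_{\min}$ is a satisfiable minimal conjunctive query that is equivalent to $Q$ under set semantics.

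The key preliminary step is to note that, since $Q$ and $Q_{\min}$ return the same set of answers over every database, for every incomplete database $\pdb=(\di,\da)$ we have $Q(\di)=Q_{\min}(\di)$ and $Q(\da)=Q_{\min}(\da)$ as sets, and hence $\pdb\models\compls Q$ if and only if $\pdb\models\compls{Q_{\min}}$. Quantifying over all incomplete databases that satisfy $\C$, this gives $\C\models\compls Q$ if and only if $\C\models\compls{Q_{\min}}$. It therefore suffices to prove that $\C\models\compls{Q_{\min}}$ if and only if $\C\models\complb{Q_{\min}}$.

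For the direction $\proofr$ I would apply Proposition~\ref{prop:bag-and-set-normally-the-same}: since $Q_{\min}$ is satisfiable and minimal, $\C\models\compls{Q_{\min}}$ implies $\C\models\complb{Q_{\min}}$. For $\proofl$ I would use that, over any incomplete database, equality of the answer bags of $Q_{\min}$ implies equality of the underlying answer sets, so $\pdb\models\complb{Q_{\min}}$ implies $\pdb\models\compls{Q_{\min}}$; hence $\C\models\complb{Q_{\min}}$ implies $\C\models\compls{Q_{\min}}$. This direction needs neither minimality nor satisfiability.

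There is no genuinely hard step: the substance has already been established in Proposition~\ref{prop:bag-and-set-normally-the-same}, and what remains is bookkeeping. The only point to watch is keeping the satisfiability hypothesis explicit throughout the $\proofr$ direction, since Proposition~\ref{prop:bag-and-set-normally-the-same} is stated only for satisfiable queries; this is precisely why the unsatisfiable case must be peeled off at the very start.
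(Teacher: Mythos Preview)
Your proof is correct and follows essentially the same route as the paper: reduce $\compls Q$ to $\compls{Q_{\min}}$ via set-equivalence, then invoke Proposition~\ref{prop:bag-and-set-normally-the-same} for the forward direction and the trivial bag-to-set implication for the backward direction. Your treatment is in fact slightly more careful than the paper's, since you explicitly peel off the unsatisfiable case to respect the satisfiability hypothesis of Proposition~\ref{prop:bag-and-set-normally-the-same}, which the paper's proof glosses over.
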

\begin{proof}
$"\Rightarrow":$ Consider an incomplete database $\D$ with $\D\models\compls Q$.
The query $Q_{\min}$ is equivalent to $Q$ under set semantics, and
since $Q$ is complete under set semantics, also $\min(Q)$ is complete
under set semantics, so by Proposition 1, since $Q_{\min}$ is minimal,
$Q_{\min}$ is also complete under bag semantics.

$"\Leftarrow":$ Since bag-completeness implies set-completeness entailment,
it holds that $\complb{Q_{\min}}$ entails $\compls{Q_{\min}}$, and
since the query $Q_{\min}$ is equivalent to $Q$ under set semantics,
therefore also $\compls Q$ holds.
\end{proof}
As a consequence, we can reduce completeness reasoning under set semantics
to query minimization and completeness entailment under bag semantics,
which is equivalent to query containment. 

Note that for \emph{asymmetric entailment problems} (more complex
language for the query than for the completeness statements), the
minimization may be harder than the query containment used to solve
bag-completeness reasoning. The complexity results shown before tell
that this is the case when the queries are conjunctive queries and
the TC statements linear relational queries, as this is the only reported
case where completeness entailment under set semantics $(\Pi_{2}^{P})$
is harder than under bag semantics (NP). 

For \emph{symmetric reasoning problems}, that is, problems where the
language of the TC statements is the same as that of the query, query
minimization and query containment have the same complexity and therefore
also TC-QC reasoning for queries under set and under bag semantics
have the same complexity.

\section{Query Completeness Entailing Query Completeness}

\label{sec:general:QC-QC}

In this section we discuss the entailment of query completeness statements
by query completeness statements. We first review the notion of query
completeness (QC-QC) entailment and the relation of the problem to
the problem of query determinacy. We then show that query determinacy
is a sufficient but not a necessary condition for QC-QC entailment,
that QC-QC entailment and determinacy are sensitive to set/bags semantics,
and that QC-QC entailment is decidable under bag semantics. All considerations
here are for conjunctive queries without comparisons. Proposition
\ref{prop:determinacy-sufficient-for-QC-QC} is already contained
in \cite{razniewski:diplom:thesis:2010} and \cite{Razniewski:Nutt-Compl:of:Queries-VLDB11},
the other results are new.

Query completeness entailment is the problem of deciding whether the
completeness of a set of queries entails the completeness of another
query. For a set of queries $\Q=\set{Q_{1},\ldots,Q_{n}},$ we write
$\complstar{\Q}*$ as shorthand for $\complstar{Q_{1}}*\wedge\cdots\wedge\complstar{Q_{n}}*$.
\begin{defn}
[Query Completeness Entailment] Let $Q$ be a query and $\Q$ be
a set of queries. Then $\complstern{\Q}$ entails $\complstern Q$
for $*\in\set{s,b}$, if it holds for all incomplete databases $\D=(\di,\da)$
that 
\[
\mbox{If for all queries }Q_{i}\in\Q:\ Q_{i}^{*}(\di)=Q_{i}^{*}(\da)\ \ \ \ \mbox{then also}\ \ \ \ Q^{*}(\di)=Q^{*}(\da).
\]
Given $\Q$ and $Q$, we also write $\qcqc^{*}(\Q,Q)$ as shorthand
for $\complstern{\Q}\models\complstern Q$. A variant of this problem
is when $\da$ is fixed, this is investigated in Section \ref{sec:reasoning_with_instances}.
\end{defn}
To solve QC-QC entailment, Motro proposed to look for \emph{rewritings}
of the query $Q$ in terms of the queries in $\Q$ \cite{motro_integrity}. 

If the query $Q$ can be rewritten in terms of the complete queries
$\Q$, then $Q$ can also be concluded to be complete, because the
answer to $Q$ can be computed from the complete answers of $\Q$.
\begin{example}
Consider the following three queries:
\begin{eqnarray*}
 &  & \query{Q_{1}(x)}{R(x),S(x),T(x)}\\
 &  & \query{Q_{2}(x)}{R(x),S(x)}\\
 &  & \query{Q_{3}(x)}{T(x)}
\end{eqnarray*}
 Assume now that the queries $Q_{2}$ and $Q_{3}$ are asserted to
be complete. Clearly, the query $Q_{1}$ can be rewritten in terms
of $Q_{2}$ and $Q_{3}$ as 
\[
\query{Q_{1}(x)}{Q_{2}(x),Q_{3}(x)}.
\]
 Thus, the result of $Q_{1}$ can be computed as the intersection
of the results of the queries $Q_{2}$ and $Q_{3}$, and therefore,
when $Q_{2}$ and $Q_{3}$ are complete, the intersection between
them is complete as well and therefore completeness of $Q_{2}$ and
$Q_{3}$ entails completeness of $Q_{1}$.
\end{example}
An information-theoretic definition of rewritability was given by
Segoufin and Vianu \cite{segoufin_vianu_determinacy} using the notion
of query determinacy: 
\begin{defn}
[Determinacy] Let $Q$ be a query and $\Q$ be a set of queries.
Then $\Q$ \emph{determines} $Q$ under $*$ semantics with $*\in\set{b,s}$,
if for all pairs of databases $(D_{1},D_{2})$ it holds that 
\[
\mbox{If for all queries }Q_{i}\in\Q:\ Q_{i}^{*}(D_{1})=Q_{i}^{*}(D_{2})\ \ \ \ \mbox{then also}\ \ \ \ Q^{*}(D_{1})=Q^{*}(D_{2}).
\]
If $\Q$ determines $Q$, we write $\Q\det Q$. 

So far the problem has received attention only under set semantics.
There, if a query $Q$ is determined by a set of queries $\Q$, the
answer to $Q$ can be computed from the answers of $\Q$, and $Q$
can be rewritten in terms of $\Q$ in second-order logic \cite{segoufin_vianu_determinacy}.
However, the rewriting need not be a conjunctive query itself. In
fact, Segoufin and Vianu showed that there exist queries $\Q$ and
$Q$ such that $Q$ can be rewritten in terms of $\Q$ as a first-order
query, while there is no rewriting as conjunctive query. A good example
for this case was given by Afrati in \cite{afrati-determinacy}: \end{defn}
\begin{example}
Consider the following queries $P_{3},P_{4}$ and $P_{5}$, asking
for paths of length 3, 4 and 5, respectively:
\begin{eqnarray*}
 &  & \query{P_{3}(x,y)}{R(x,z_{1}),R(z_{1},z_{2}),R(z_{2},y)}\\
 &  & \query{P_{4}(x,y)}{R(x,z_{1}),R(z_{1},z_{2}),R(z_{2},z_{3})},R(z_{3},y)\\
 &  & \query{P_{5}(x,y)}{R(x,z_{1}),R(z_{1},z_{2}),R(z_{2},z_{3}),R(z_{3},z_{4}),R(z_{4},y)}
\end{eqnarray*}
It is easy to see that $P_{5}$ cannot be rewritten as conjunctive
query in terms of $P_{3}$ and $P_{4}$. However, there exists a first-order
rewriting for $P_{5}$ as follows:
\[
\query{P_{5}(x,y)}{P_{4}(x,z)\wedge\forall w:\ P_{3}(w,z)\rightarrow P_{4}(w,y)}
\]

Whether determinacy for conjunctive queries under set semantics is
decidable, remains an open question to date. Various works have shown
decidability for sublanguages of conjunctive queries \cite{pasaila-11-decidability-determinacy-sublanguages,fan-geerts-determinacy-sublanguages-2012}.

It is easy to see that query determinacy is a sufficient condition
for QC-QC entailment, as expressed by the following proposition:\end{example}
\begin{prop}
[Sufficiency of Determinacy for QC-QC]\label{prop:determinacy-sufficient-for-QC-QC}Let
$\Q\cup\set Q$ be a set of queries and $*\in\set{b,s}$. Then 
\[
\qcqc^{*}(\Q,Q)\mbox{\quad if \quad}\Q\determines^{*}Q.
\]
\end{prop}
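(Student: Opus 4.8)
The plan is to give a direct proof straight from the definitions of determinacy and QC-QC entailment. The key observation is that an incomplete database $\D = (\di,\da)$ with $\da \incl \di$ is in particular a pair of database instances, so determinacy, which quantifies over \emph{all} pairs of databases, applies to it.

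First I would assume $\Q \determines^* Q$ and take an arbitrary incomplete database $\D = (\di,\da)$ such that $\D \models \complstern{Q_i}$ for every $Q_i \in \Q$; that is, $Q_i^*(\di) = Q_i^*(\da)$ for all $i$. I then apply the definition of determinacy to the particular pair $(D_1, D_2) := (\di, \da)$: since $Q_i^*(\di) = Q_i^*(\da)$ holds for every $Q_i \in \Q$, determinacy yields $Q^*(\di) = Q^*(\da)$. This is exactly the condition $\D \models \complstern Q$. Since $\D$ was arbitrary, $\complstern{\Q} \models \complstern Q$, i.e.\ $\qcqc^*(\Q,Q)$ holds. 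The same argument works verbatim for both $* = s$ and $* = b$, since the definitions of determinacy and of query completeness entailment are parametrized uniformly by $*$.

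There is essentially no obstacle here: the proof is a one-line specialization of the universal quantifier in the definition of determinacy from ``all pairs of databases'' to ``all pairs forming an incomplete database''. The only thing worth remarking on is \emph{why} this is only sufficient and not necessary — which the surrounding text already flags as a separate result — namely that QC-QC entailment only needs to consider pairs $(\di,\da)$ with the containment $\da \incl \di$, a strictly smaller class of witnesses than determinacy must handle, so entailment can hold even when determinacy fails. That asymmetry is not part of this proposition, so I would not develop it here; the proof of the proposition itself is complete with the specialization argument above.
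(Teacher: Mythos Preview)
Your proof is correct and follows essentially the same approach as the paper: both argue that the universal quantification over all pairs $(D_1,D_2)$ in the definition of determinacy specializes to the pairs $(\di,\da)$ with $\da\incl\di$ required for QC-QC entailment. The paper states this in a single sentence, whereas you spell out the specialization explicitly, but the content is identical.
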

\begin{proof}
The definitions of query determinacy entails the definition of QC-QC
entailment, as query determinacy holds if $Q$ returns the same answer
over all pairs of databases $D_{1}$ and $D_{2}$, while QC-QC entailment
requires only to check those pairs where $D_{1}$ is a subset of $D_{2}$.
\end{proof}
For the special case of boolean queries, in \cite{razniewski:diplom:thesis:2010}
it was shown that query determinacy and $\qcqc$-entailment coincide.

In general, determinacy however is not a necessary condition for query
completeness entailment:
\begin{prop}
[Non-necessity of Determinacy for QC-QC] Let $\Q\cup\set Q$ be a
set of queries, and $*\in\set{b,s}$. Then there exist queries $\Q\cup\set Q$
such that $\qcqc*(\Q,Q)$ holds and $"\Q\determines^{*}Q"$ does not
hold.\label{prop:determinacy-not-necessary-for-qcqc}\end{prop}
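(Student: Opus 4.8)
The plan is to exhibit, for both semantics simultaneously, a singleton set $\Q=\set{Q_1}$ and a query $Q$ over a single binary relation $R$, taken to be the length-two and length-three path queries
\[
\query{Q_1(x,y)}{R(x,z),R(z,y)},\qquad\query{Q(x,y)}{R(x,z_1),R(z_1,z_2),R(z_2,y)}.
\]
First I would refute determinacy. For $*\in\set{s,b}$ I would exhibit two databases with equal $Q_1$-answers but different $Q$-answers, for instance $D_1=\set{R(1,2),R(2,3),R(3,4)}$ and $D_2=\set{R(1,5),R(5,3),R(2,6),R(6,4)}$: here $Q_1(D_1)=Q_1(D_2)=\set{(1,3),(2,4)}$, with each tuple occurring once so that the bags coincide as well, whereas $Q(D_1)=\set{(1,4)}$ and $Q(D_2)=\eset$. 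Since $D_1$ and $D_2$ are $\incl$-incomparable, this gap is invisible to the completeness problem, which only ever inspects pairs in which one database contains the other; that asymmetry is exactly what the proposition exploits.

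Next I would prove $\qcqc^{*}(\Q,Q)$. For bag semantics I would go through canonical statements: by Theorem~\ref{theorem_characterizing_query_completeness}(i), $\pdb\models\complb{Q_1}$ iff $\pdb\models\C_{Q_1}$, and $\pdb\models\C_{Q}$ iff $\pdb\models\complb{Q}$, so it suffices to establish the TC-TC entailment $\C_{Q_1}\models\C_{Q}$, which by Theorem~\ref{theo-equivalence:TC-TC:UCont} is a query-containment problem. This reduces to checking that each canonical statement of $Q$ is entailed by a single canonical statement of $Q_1$; concretely, the statements for the first two atoms of $Q$ follow from $\Compl{R(x,z);R(z,y)}$ and the one for the last atom follows from $\Compl{R(z,y);R(x,z)}$, the required homomorphisms between the associated queries being immediate. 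For set semantics the canonical statements are no longer characterizing, since $Q_1$ has a projected variable, so here I would argue directly. Given $\pdb=(\di,\da)$ with $\da\incl\di$ and $Q_1^{s}(\di)=Q_1^{s}(\da)$, and given a witness $R(a,b),R(b,c),R(c,d)\in\di$ of a triple $(a,d)\in Q^{s}(\di)$, the preserved two-path $(a,c)$ yields $b'$ with $R(a,b'),R(b',c)\in\da$; then $R(b',c)\in\di$ together with $R(c,d)\in\di$ makes $(b',d)$ a two-path of $\di$, whose preservation yields $c'$ with $R(b',c'),R(c',d)\in\da$, so that $R(a,b'),R(b',c'),R(c',d)\in\da$ witnesses $(a,d)\in Q^{s}(\da)$; the converse inclusion is monotonicity of conjunctive queries.

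The homomorphism checks in the bag case are routine. The delicate step is the set-semantics re-routing: one must verify that preserving the \emph{whole} set of two-paths really does allow the three edges of a length-three path to be replaced one at a time while staying inside $\da$, and the key subtlety is that the replacement path need not be unique or related to the original one — that freedom is precisely why QC-QC entailment can hold even though $Q_1$ does not determine $Q$. I would also confirm that $D_1$ and $D_2$ above are genuinely $\incl$-incomparable, so that they do not accidentally witness a failure of the entailment itself, and that the paths through the fresh vertices $5$ and $6$ exhaust the two-paths of $D_2$.
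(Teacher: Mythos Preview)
Your proof is correct. The set-semantics half is essentially the paper's argument (same path example, same re-routing), and your non-determinacy witness is the paper's with concrete constants in place of $x,y$.

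For bag semantics you diverge: the paper switches to a much simpler pair, $\query{Q_1()}{R(x)}$ and $\query{Q_2(x)}{R(x)}$, and argues directly that equal cardinality of $R$ plus $\da\subseteq\di$ forces $R(\da)=R(\di)$, with non-determinacy shown by $\set{R(a)}$ versus $\set{R(b)}$. Your route instead reuses the $P_2/P_3$ example and pushes the entailment through canonical statements via Theorems~\ref{theorem_characterizing_query_completeness} and~\ref{theo-equivalence:TC-TC:UCont}, checking $\C_{P_2}\models\C_{P_3}$ by homomorphisms. This is heavier machinery, but it buys you a single witness covering both semantics at once, and it illustrates concretely how the bag case reduces to TC-TC and thus to containment. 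The paper's choice keeps the bag argument self-contained and two lines long, at the cost of a second example.
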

\begin{proof}
\emph{(Set semantics)}: Consider the following two queries
\begin{eqnarray*}
 &  & \query{P_{2}(x,y)}{R(x,z),R(z,y)}\\
 &  & \query{P_{3}(x,y)}{R(x,z),R(z,w),R(w,y)}
\end{eqnarray*}
that ask for paths of length 2 and 3, respectively.

Then, $\compls{P_{2}}$ entails $\compls{P_{3}}$ for the following
reason:

Consider an incomplete database $\D=(\di,\da)$ and assume that in
$\di$ there is a path from a node 1 via nodes 2 and 3 to a node 4,
and assume that $P_{2}$ is complete over $\D$. Thus, since $P_{2}(\di)=\{(1,3)(2,4)\}$
there must be a path from 1 via some node x to node 3 and from 2 via
some node y to node 4 in the available database. But since $\da\subseteq\di$
those paths must also be in the ideal database. But then there is
a path x-3-4 in the ideal database and hence by the same reasoning
a path from x via some z to 4 in the available database and hence
the path 1-x-z-4 is in the available database and thus $P_{3}(\di)=P_{3}(\da)=\{(1,4)\}$
and hence $P_{3}$ is complete over $\D$.

However, the following pair of databases $D_{1}$ and $D_{2}$ shows
that $P_{2}$ does not determine $P_{3}$: Let $D_{1}=\{R(1,2),R(2,3),R(3,4)\}$
and $D_{2}=\{R(1,x),R(x,3),R(2,y),R(y,4)\}.$ Then $P_{2}(D_{1})=P_{2}(D_{2})=\{(1,3),(2,4)\}$
but $P_{3}(D_{1})=(1,3)$ is not the same as $P_{3}(D_{2})=\emptyset$
and hence determinacy does not hold.\smallskip{}

\emph{(Bag semantics)}: Consider queries $\query{Q_{1}()}{R(x)}$
and $\query{Q_{2}(x)}{R(x)}$. Then $\complb{Q_{1}}$ entails $\complb{Q_{2}}$,
because $\complb{Q_{1}}$ ensures that the same number of tuples is
in $R(\di)$ and $R(\da)$, and because of the condition $\da\subseteq\di$
that incomplete databases have to satisfy, this implies that $R(\di)$
and $R(\da)$ must contain also the same tuples.

But $Q_{1}$ does not determine $Q_{2}$ under bag semantics, because
having merely the same number of tuples in $R$ does not imply to
have also the same tuples, as e.g.\ a pair of databases $D_{1}=\set{R(a)}$
and $D_{2}=\set{R(b)}$ shows. 
\end{proof}
We show next that both regarding query determinacy and completeness
entailment, it is important to distinguish between set and bag semantics.
As the following theorem shows, both problems are sensitive to this
distinction:
\begin{prop}
[Set/bag sensitivity of QC-QC and Determinacy]There exist sets $\Q\cup\set Q$
of queries such that
\begin{enumerate}
\item $\Q\determines^{s}Q$ does not entail $\Q\determines^{b}Q$,
\item $\qcqc^{s}(\Q,Q)$ does not entail $\qcqc^{b}(\Q,Q)$,
\item $\qcqc^{b}(\Q,Q)$ does not entail $\qcqc^{s}(\Q,Q)$.
\end{enumerate}
\end{prop}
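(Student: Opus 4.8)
The plan is to settle the three non-implications separately, in each case producing a small family $\Q$ of queries, a query $Q$, and a witnessing (incomplete) database; throughout I work over a signature with a single relation $R$ so the bookkeeping stays light.

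For (i) and (ii) I use the length-two path query together with a padded variant, equivalent to it under set semantics but not under bag semantics: let $\query{P_{2}(x,y)}{R(x,z),R(z,y)}$ and $\query{Q(x,y)}{R(x,z),R(z,y),R(u,v)}$. The first things to establish are (a) that $Q$ is equivalent to $P_{2}$ under set semantics, since over any database the extra atom $R(u,v)$ is satisfiable precisely when the rest of the body is; and (b) that under bag semantics $Q^{b}(a,b)=P_{2}^{b}(a,b)\cdot\card{R(D)}$, whereas $\card{R(D)}$ is \emph{not} a function of the multiset $P_{2}^{b}(D)$, because $R$-tuples lying on no length-two path are invisible to $P_{2}$. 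From (a), $\set{P_{2}}\determines^{s}Q$, and $\compls{Q}$ is literally the same statement as $\compls{P_{2}}$; these give the positive halves of (i) and (ii). For the negative half of (i), take $D_{1}=\set{R(1,2),R(2,3)}$ and $D_{2}=\set{R(1,2),R(2,3),R(4,5)}$: both yield $P_{2}^{b}=\bag{(1,3)}$, but $(1,3)$ occurs with multiplicity $2$ in $Q^{b}(D_{1})$ and $3$ in $Q^{b}(D_{2})$, so $\set{P_{2}}$ does not determine $Q$ under bag semantics. For the negative half of (ii), take the incomplete database $\pdb$ with $\id D=\set{R(1,2),R(2,3),R(4,5)}$ and $\av D=\set{R(1,2),R(2,3)}$: since $R(4,5)$ lies on no length-two path, $P_{2}^{b}(\id D)=P_{2}^{b}(\av D)$, so $\pdb\models\complb{P_{2}}$, whereas $\card{R(\id D)}=3\neq2=\card{R(\av D)}$ forces $Q^{b}(\id D)\neq Q^{b}(\av D)$, so $\pdb\not\models\complb{Q}$.

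For (iii) I reuse the boolean/unary pair from the proof of Proposition~\ref{prop:determinacy-not-necessary-for-qcqc}: $\query{Q_{1}()}{R(x)}$ and $\query{Q_{2}(x)}{R(x)}$. Under bag semantics $\complb{Q_{1}}$ asserts $\card{R(\id D)}=\card{R(\av D)}$, which together with $\av D\incl\id D$ forces $R(\id D)=R(\av D)$ and hence $\complb{Q_{2}}$; so $\qcqc^{b}(\set{Q_{1}},Q_{2})$ holds. Under set semantics $\compls{Q_{1}}$ asserts only that $R(\id D)$ and $R(\av D)$ are simultaneously empty or nonempty, so the incomplete database with $\id D=\set{R(a),R(b)}$ and $\av D=\set{R(a)}$ satisfies $\compls{Q_{1}}$ while $Q_{2}^{s}(\id D)=\set{a,b}\neq\set{a}=Q_{2}^{s}(\av D)$; hence $\qcqc^{s}(\set{Q_{1}},Q_{2})$ fails.

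The routine part is the handful of (multiset) query evaluations above, which are immediate by inspection. The one point needing real care is claim (b) --- that $\set{P_{2}}$ determines $Q$ under set semantics but not under bag semantics. This rests on there being $R$-tuples that add to $\card{R(D)}$ but contribute to no length-two path (the role of the isolated edge $R(4,5)$), and on keeping $\Q$ a singleton, so that no additional query in $\Q$ re-exposes those tuples and re-forces their preservation. Making sure this padding device cannot be circumvented is the main obstacle; once it is established, all three parts follow from the evaluations listed above.
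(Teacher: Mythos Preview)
Your proof is correct and follows essentially the same approach as the paper. For parts (i) and (ii) you exploit the same idea---a pair of queries that are set-equivalent (via a redundant atom) but not bag-equivalent---though you use a disconnected padding atom $R(u,v)$ where the paper uses a connected subsumable one; for part (iii) you use exactly the same $Q_{1},Q_{2}$ example as the paper.
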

\begin{proof}
\emph{(Claims 1 and 2)}: Consider the following query $Q_{\mathrm{nr\text{\_french}}}$
that asks for the names of people that attended a French language
course and some other language course. Observe that under bag semantics,
this query returns for each person that takes French the name as often
as that person takes language courses, possibly also in other languages:
\[
\query{Q_{\mathrm{nr\text{\_french}}}(n)}{\result(n,French,g),\result(n,x,g')}.
\]
Consider now a second query $Q_{\mathrm{french}}$ which only asks
for the names of persons that took a French language course

\[
\query{Q_{\mathrm{french}}(n)}{\result(n,French,g)}.
\]
If both queries are evaluated under set semantics, then completeness
of $Q_{\mathrm{french}}$ implies completeness of $Q_{\mathrm{nr\text{\_french}}}$,
because under set semantics, both queries are equivalent, as $Q_{\mathrm{nr\text{\_french}}}$
is not minimal. But under bag semantics, completeness of $Q_{\mathrm{french}}$
does not entail completeness of $Q_{\mathrm{nr\text{\_french}}}$
as for instance the following incomplete database shows:

Consider the incomplete database $\D=(\di,\da)$ where $\result(\di)$
contains $\{(\nameone,French,A),(\nameone,Dutch,B)\}$ and $\result(\da)$
contains\linebreak{}
 $\{(\nameone,French,A)\}$. Then, $Q_{\mathrm{french}}$ returns
over both the ideal and the available database the answer $\{(\nameone)\}$
and hence is complete, however, $Q_{\mathrm{nr\text{\_french}}}$
returns $\{(\nameone),(\nameone)\}$ over the ideal database and $\{(\nameone)\}$
over the available database and hence is not complete.

While because of the equivalence under set semantics, it is also clear
that under set semantics $Q_{\mathrm{french}}$ determines $Q_{\mathrm{nr\_french}}$,
the incomplete database $\D$ from above shows that under bag semantics
that is not the case.\smallskip{}

\emph{(Claim 3)}: Consider the queries $\query{Q_{1}()}{R(x)}$ and
$\query{Q_{2}(x)}{R(x)}$ as used in the proof of Prop.\ \ref{prop:determinacy-not-necessary-for-qcqc}. 

Clearly, under bag semantics, completeness of $Q_{1}$ entails completeness
of $Q_{2}$, because $Q_{1}$ ensures that the same number of tuples
are present in $R(\da)$ as in $R(\di)$, and by the condition $\da\subseteq\di$,
this implies that those are exactly the same tuples.

But under set semantics completeness of $Q_{1}$ does not entail completeness
of $Q_{2}$, as an incomplete database with $\da=\set{R(a)}$ and
$\di=\set{R(a),R(b)}$ shows.
\end{proof}
The observations stated in the previous theorem are encouraging, as
they imply that QC-QC entailment under bag semantics may be not as
hard as under set semantics. That this is indeed the case, shows the
following theorem. Let $\L_{1}$ and $\L_{2}$ be conjunctive query
languages. We then denote with $\qcqc^{b}(\L_{1},\L_{2})$ the problem
of deciding whether completeness of a query in $\L_{1}$ under bag
semantics is entailed by completeness of a set of queries in $\L_{2}$
under bag semantics.
\begin{thm}
[Decidability of $\qcqc^b$] For all conjunctive query languages $\L_{1}$,
$\L_{2}$ in $\{\LRQ,\LCQ,$$\ $ $\RQ,\CQ\}$ there is a polynomial-time
reduction from $\qcqc^{b}(\L_{1},\L_{2})$ to $\UCont(\L_{2},\L_{1})$.\label{thm:qc-qc-bag-is-decidable}\end{thm}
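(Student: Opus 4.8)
The plan is to reduce $\qcqc^{b}$ to $\UCont$ by routing it through table completeness, reusing the machinery of Section~\ref{characterizing}. The starting point is Theorem~\ref{theorem_characterizing_query_completeness}(i): under bag semantics, for every conjunctive query $Q_{i}$ the statement $\complb{Q_{i}}$ is equivalent to the finite set $\C_{Q_{i}}$ of its canonical table completeness statements. Each member of $\C_{Q_{i}}$ has exactly the body of $Q_{i}$, with only the head changed (and any constants in that head can be turned into equality atoms), so it lies in the same class among $\{\LRQ,\LCQ,\RQ,\CQ\}$ as $Q_{i}$ itself. Hence, given a $\qcqc^{b}$ instance ``$\complb{\Q}\models\complb{Q}$'' with $\Q$ in one of the four languages and $Q$ in another, I would first replace every premise $\complb{Q_{i}}$ by $\C_{Q_{i}}$ and set $\C_{\Q}=\bigcup_{Q_{i}\in\Q}\C_{Q_{i}}$. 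This rewriting is linear, and it turns the problem into the bag-semantics TC--QC entailment ``$\C_{\Q}\models\complb{Q}$'', where $\C_{\Q}$ is a set of TC statements in the language of $\Q$ and $Q$ is still a query in its original language.

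Next I would invoke the reduction of bag-semantics TC--QC entailment to union containment that is already assembled in Section~\ref{characterizing}: apply Theorem~\ref{theorem_characterizing_query_completeness}(i) once more, this time to $Q$, so that $\complb{Q}$ is equivalent to $\C_{Q}$ and ``$\C_{\Q}\models\complb{Q}$'' becomes the TC--TC entailment ``$\C_{\Q}\models\C_{Q}$''; note that this holds iff $\C_{\Q}$ entails each individual statement of $\C_{Q}$; and translate each such TC--TC entailment into a union-containment instance by Theorem~\ref{theo-equivalence:TC-TC:UCont}. In every resulting instance the containee is the query associated with a canonical statement of $Q$ --- a head-variant of $Q$, hence in $Q$'s language --- while the container is a union of queries associated with canonical statements of the $Q_{i}\in\Q$ --- head-variants of the $Q_{i}$, hence in $\Q$'s language --- which is the union-containment problem onto which the reduction maps. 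Composing the two steps yields the claimed polynomial-time (in fact linear) reduction.

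The point that needs attention is that ``$\C_{\Q}\models\C_{Q}$'' unfolds into one TC--TC entailment per relational atom of $Q$, so Theorem~\ref{theo-equivalence:TC-TC:UCont} delivers polynomially many union-containment instances, and $\qcqc^{b}$ holds iff all of them do. The \emph{main obstacle} I anticipate is collapsing these into a single $\UCont$ instance without leaving the relevant query classes: one would rename relation symbols so the instances use pairwise disjoint signatures, conjoin the (now disjoint) bodies in the containee, and for each original container query keep all other copies intact while substituting that query's body into its own copy, then verify that linearity and the absence of comparisons are preserved and that the single containment is equivalent to the conjunction. For the way the theorem is used afterwards --- deducing decidability of $\qcqc^{b}$ and the complexity upper bounds inherited from $\UCont$ --- this collapsing step is inessential, since each of $\PTIME$, $\NP$, $\CONP$ and $\piptwo$ is closed under conjunction, so a conjunction of $\UCont$ instances already sits in the same complexity class as a single one.
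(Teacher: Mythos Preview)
Your proof is correct and follows the same route as the paper: convert both premises and conclusion to their canonical TC statements via Theorem~\ref{theorem_characterizing_query_completeness}(i), then reduce the resulting TC--TC entailment to $\UCont$ via Theorem~\ref{theo-equivalence:TC-TC:UCont}. You are in fact more explicit than the paper about the point that $\C_{\Q}\models\C_{Q}$ unfolds into one $\UCont$ instance per atom of $Q$, and your observation that this is immaterial for the downstream complexity claims is correct.
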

\begin{proof}
This follows from Theorems \ref{theorem_characterizing_query_completeness}
and \ref{theo-equivalence:TC-TC:UCont}. The former theorem shows
that a query under bag semantics is complete over an incomplete database,
exactly if its canonical completeness statements are satisfied, while
preserving the languages. Thus, QC-QC entailment can be reduced to
the entailment of the canonical completeness statements, which is
a TC-TC entailment problem. 

The latter theorem shows that $\tctc(\L_{1},\L_{2})$ can be reduced
to $\UCont(\L_{2},\L_{1})$. Thus, QC-QC entailment under bag semantics
can be reduced to containment of unions of queries, while interchanging
languages.
\end{proof}
An interesting related decidable problem is query determinacy, with
the determining queries $\Q$ being evaluated under bag semantics
and the determined query $Q$ under set semantics. Decidability of
this problem follows from results by Fan et al.\ \cite{fan-geerts-determinacy-sublanguages-2012},
who showed that query determinacy is decidable when the determining
queries contain no projections. As queries under set semantics without
projection directly correspond to queries under bag semantics, this
implies decidability of query determinacy when the determining queries
are evaluated under bag and the determined query under set semantics.

Nevertheless, important questions remain open.
\begin{problem}
[Open Questions] Let $\Q\cup\set Q$ be a set of queries. Then the
following are open problems:
\begin{enumerate}
\item Does $\Q\determines^{b}Q$ imply $\Q\determines^{s}Q$?
\item Is $\Q\determines^{s}Q$ decidable?
\item Is $\Q\determines^{b}Q$ decidable?
\item Is $\qcqc^{s}(\Q,Q)$ decidable?
\end{enumerate}
\end{problem}
In Section \ref{sec:reasoning_with_instances}, we discuss completeness
reasoning with instances, and show that both QC-QC entailment and
query determinacy for queries under set semantics are decidable, when
one database instance is fixed.

\begin{figure}
\includegraphics[width=1\columnwidth]{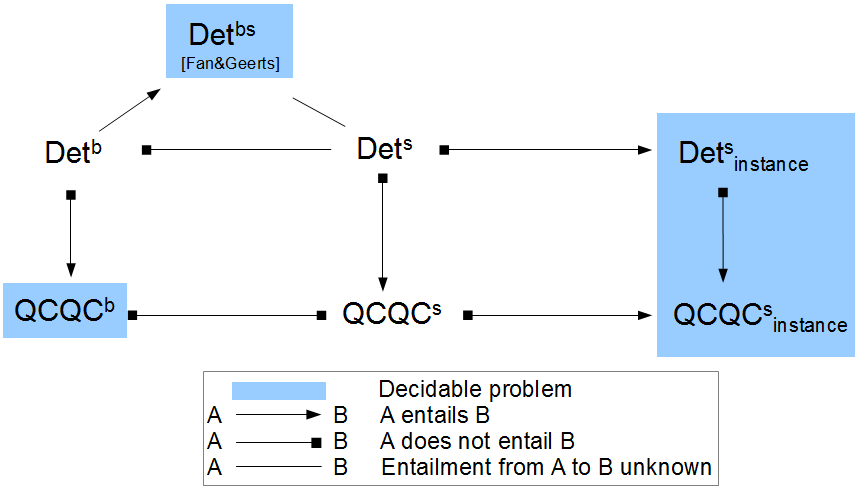}

\caption{Relation of different instances of QC-QC entailment and query determinacy
as discussed in Sections \ref{sec:general:QC-QC} and \ref{sec:reasoning_with_instances}.}

\label{fig:results-qcqc-and-determinacy}
\end{figure}

We summarize the results of this section in Figure \ref{fig:results-qcqc-and-determinacy}.
For completeness, we also include the results on instance reasoning
from Section~\ref{sec:reasoning_with_instances}.

\section{Aggregate Queries}

\label{sec:extensions:aggregate-queries}

Aggregate queries are important for data processing in many applications,
especially in decision support. In contrast to normal queries, aggregate
queries do not only ask for tuples but also allow to compute results
of aggregate functions such as SUM, COUNT, MIN or MAX over results.
The school administration for instance is mostly interested in knowing
how many students or teachers are there that satisfy a certain property,
not who those teachers or students are. Completeness reasoning for
aggregate queries may be different depending on the aggregation function.
\begin{example}
Consider a query $Q_{\mathrm{nr}}$ that asks for the number of pupils
in the class 4A. In SQL, this aggregate query would be written as
follows:
\begin{eqnarray*}
 &  & \mbox{SELECT count(*)}\\
 &  & \mbox{FROM pupil}\\
 &  & \mbox{WHERE class=4a AND school=\ensuremath{\schoolone}}
\end{eqnarray*}
 Furthermore, consider a query $Q_{\mathrm{best\_pt}}$ for the best
grade that a pupil of class 4A obtained in Pottery, and consider a
completeness statement that says that the database is complete for
all pupils in level 4A. Then, the query $Q_{\mathrm{nr}}$ will also
return a correct answer, because all pupils are there. Whether the
answer to $Q_{\mathrm{best\_pt}}$ is correct is however unknown,
because while the pupils are complete, nothing is asserted about their
grades.
\end{example}
Query equivalence for aggregate queries has already been studied by
Cohen, Nutt and Sagiv in \cite{CohenEtAl-Aggregate:Equivalence-JACM}.
We will leverage on those results. We will also draw upon the results
for non-aggregate queries as presented in Section \ref{sec:general:TC-QC}
to investigate when TC-statements imply completeness of aggregate
queries.

We consider queries with the aggregate functions $\Count$, $\Sum$,
and $\Max$. Results for $\Max$ can easily be reformulated for $\Min$.
Note that $\Count$ is a nullary function while $\Sum$ and $\Max$
are unary.

\paragraph{Formalization}

An \emph{aggregate term} is an expression of the form $\alpha(\tpl y)$,
where $\tpl y$ is a tuple of variables, having length 0 or~1. Examples
of aggregate terms are $\Count()$ or $\Sum(y)$. If $\query{Q(\tpl x,\tpl y)}{L,M}$
is a conjunctive query, and $\alpha$ an aggregate function, then
we denote by $Q^{\alpha}$ the aggregate query $\query{Q^{\alpha}(\tpl x,\alpha(\tpl y))}{L,M}$.
We say that $Q^{\alpha}$ is a \emph{conjunctive aggregate query}
and that $Q$ is the \emph{core} of $Q^{\alpha}$. 

Over a database instance, $Q^{\alpha}$ is evaluated by first computing
the answers of its core $Q$ under bag semantics, then forming groups
of answer tuples that agree on their values for $\tpl x$, and finally
applying for each group the aggregate function $\alpha$ to the bag
of $y$-values of the tuples in that group.

A sufficient condition for an aggregate query to be complete over
$\pdb$ is that its core is complete over~$\pdb$ under bag semantics.
Hence, Corollary~\ref{c_q-models-compl_q} gives us immediately a
sufficient condition for TC-QC entailment. Recall that $\C_{Q}$ is
the set of canonical completeness statements of a query $Q$.
\begin{prop}
Let $Q^{\alpha}$ be an aggregate query and $\C$ be a set of TC statements.
Then 
\[
\C\models\C_{Q}\mbox{\ \ \ \ implies\ \ \ \ }\C\models\Compl{Q^{\alpha}}.
\]

\end{prop}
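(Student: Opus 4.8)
The plan is to chain two entailments already available in the paper: from $\C \models \C_Q$ to $\C \models \complb Q$, and from there to $\C \models \Compl{Q^{\alpha}}$. First I would fix an arbitrary incomplete database $\pdb = (\di,\da)$ with $\pdb \models \C$ and show $\pdb \models \Compl{Q^{\alpha}}$; since $\pdb$ is arbitrary, this yields the claimed entailment.

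For the first link, the hypothesis $\C \models \C_Q$ gives $\pdb \models \C_Q$, and by Theorem~\ref{theorem_characterizing_query_completeness}(i) (equivalently Corollary~\ref{c_q-models-compl_q} for $*=b$) the satisfaction of the canonical completeness statements is equivalent to bag-completeness of the core, so $\pdb \models \complb Q$, that is, $Q(\di) = Q(\da)$ as bags. For the second link I would unwind the evaluation of $Q^{\alpha}$: over any instance $D$ one first computes the bag $Q(D)$, then partitions it into groups of tuples agreeing on the $\tpl x$-values, and finally applies $\alpha$ to the bag of $\tpl y$-values of each group. Each of these three operations is a deterministic function of the bag $Q(D)$, hence equal bags $Q(\di) = Q(\da)$ force $Q^{\alpha}(\di) = Q^{\alpha}(\da)$, i.e.\ $\pdb \models \Compl{Q^{\alpha}}$. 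This last step is exactly the informal observation stated just before the proposition, made precise, and it is the only place where the specific semantics of aggregation enters.

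I do not expect a genuine obstacle here: the argument is transitivity of entailment together with the fact that the aggregate answer depends only on the bag of core answers. The one point that needs attention is that the core must be complete under \emph{bag} semantics rather than set semantics, which is precisely why the canonical statements $\C_Q$ suffice: by Theorem~\ref{theorem_characterizing_query_completeness} they characterize bag-completeness of $Q$, whereas for set semantics with projections they would only be sufficient, not necessary. Since for this direction we only need sufficiency anyway, no strengthening of the hypothesis is required, and the proof is short.
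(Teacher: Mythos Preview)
Your proposal is correct and follows exactly the reasoning the paper indicates: the paper does not give a separate proof but states just before the proposition that bag-completeness of the core suffices for completeness of $Q^{\alpha}$, and that Corollary~\ref{c_q-models-compl_q} therefore yields the result immediately. Your write-up simply unpacks this chain of entailments explicitly.
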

For $\Count$-queries, completeness of $Q^{\Count}$ is the same as
completeness of the core $Q$ under bag semantics. Thus, we can reformulate
Theorem~\ref{theorem_characterizing_query_completeness} for $\Count$-queries:
\begin{thm}
\label{theo-count:query:completeness} Let $Q^{\Count}$ be a $\Count$-query
and $\C$ be a set of TC statements. Then
\[
\C\models\Compl{Q^{\Count}}\mbox{\ \ \ \ if and only if \ \ \ \ }\C\models\C_{Q}
\]
\end{thm}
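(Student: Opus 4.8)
The statement to prove is Theorem \ref{theo-count:query:completeness}: for a $\Count$-query $Q^{\Count}$ and a set $\C$ of TC statements, $\C \models \Compl{Q^{\Count}}$ iff $\C \models \C_Q$. The natural strategy is to reduce completeness of the aggregate query to completeness of its core under bag semantics, and then invoke Theorem \ref{theorem_characterizing_query_completeness}, which already establishes that $\pdb \models \complstar{Q}{b}$ iff $\pdb \models \C_Q$ for any conjunctive query $Q$. The key observation is that a $\Count$-query, by the evaluation semantics described just above, groups the bag of answers of the core $Q$ by the distinguished variables $\tpl x$ and returns the size of each group; hence the answer of $Q^{\Count}$ over a database instance $D$ is \emph{determined by, and determines}, the bag $Q(D)$ restricted to its multiplicities --- more precisely, two databases yield the same $\Count$-answer on all groups exactly when the core returns the same bag of tuples. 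So the first step is to prove the pointwise equivalence: for every incomplete database $\pdb = (\di, \da)$, $\pdb \models \Compl{Q^{\Count}}$ iff $\pdb \models \complstar{Q}{b}$.

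The forward direction of that pointwise claim ($\complstar{Q}{b}$ implies $\Compl{Q^{\Count}}$) is the easy one and is already stated as the ``sufficient condition'' remark preceding the theorem: if $Q(\di) = Q(\da)$ as bags, then grouping and counting gives identical aggregate answers. The converse --- that equality of all $\Count$-answers forces $Q(\di) = Q(\da)$ as bags --- is where I would spend the care. Here I would use the structural fact that $\da \incl \di$, so every valuation satisfying the body $B$ of $Q$ over $\da$ also does so over $\di$; thus $Q(\da) \le Q(\di)$ as bags (multiplicity-wise) on every tuple. If the two bags differed, there would be some tuple $\tpl c = v\tpl x$ with strictly smaller multiplicity in $Q(\da)$ than in $Q(\di)$; then the group for the $\tpl x$-value $\tpl c$ has strictly smaller count over $\da$, so the $\Count$-answers differ on that group. (One must be slightly careful that the grouping is by $\tpl x$-values and the multiplicities being compared are the group sizes, but since distinct valuations agreeing on $\tpl x$ are exactly what the count counts, this goes through.) This establishes $\pdb \models \Compl{Q^{\Count}}$ iff $\pdb \models \complstar{Q}{b}$.

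Once the pointwise equivalence holds, the theorem follows by chaining: $\C \models \Compl{Q^{\Count}}$ means every $\pdb$ satisfying $\C$ satisfies $\Compl{Q^{\Count}}$, which by the pointwise equivalence is the same as every such $\pdb$ satisfying $\complstar{Q}{b}$, i.e. $\C \models \complstar{Q}{b}$; and by Theorem \ref{theorem_characterizing_query_completeness}(i), $\pdb \models \complstar{Q}{b}$ iff $\pdb \models \C_Q$, so $\C \models \complstar{Q}{b}$ iff $\C \models \C_Q$. Combining gives $\C \models \Compl{Q^{\Count}}$ iff $\C \models \C_Q$.

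The main obstacle --- such as it is --- is the careful bookkeeping in the converse of the pointwise step: one needs to track that an aggregate query's answer really does encode the full bag of core answers, handling the grouping by distinguished variables correctly, and noting that $\tpl y$ (the single aggregated variable, or none for $\Count$) plays no role beyond contributing to multiplicities of the group. There is no deep new idea here beyond monotonicity and the already-proved characterization of bag completeness; the content is entirely in observing that $\Count$ faithfully reflects bag multiplicities and then reusing Theorem \ref{theorem_characterizing_query_completeness}. One should also remark explicitly that the languages are preserved (the core $Q$ of $Q^{\Count}$ lies in the same class as $Q^{\Count}$), so that any complexity consequences carry over, though for the bare equivalence statement this is not strictly needed.
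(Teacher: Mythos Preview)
Your proposal is correct and follows essentially the same approach as the paper: establish the pointwise equivalence between $\Compl{Q^{\Count}}$ and $\complb Q$ for the core, then invoke Theorem~\ref{theorem_characterizing_query_completeness}(i). The paper's own proof is a one-line remark (``a count is correct if and only if the nonaggregate query retrieves all tuples''), whereas you spell out the converse direction of the pointwise equivalence carefully via monotonicity---this is a welcome elaboration but not a different route.
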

\begin{proof}
Follows from the fact that a count is correct, if and only if the
nonaggregate query retrieves all tuples from the database.
\end{proof}
In contrast to $\Count$-queries, a $\Sum$-query can be complete
over an incomplete database $(\id D,\av D)$ although its core is
incomplete. The reason is that it does not hurt if some tuples from
$\id D$ that only contribute 0 to the overall sum are missing in
$\av D$. Nonetheless, we can prove an analogue of Theorem~\ref{theo-count:query:completeness}
if there are some restrictions on TC statements and query.

We say that a set of comparisons $M$ is \emph{reduced}, if for all
terms $s$, $t$ it holds that $M\models s=t$ only if $s$ and $t$
are syntactically equal. A conjunctive query is \emph{reduced} if
its comparisons are reduced. Every satisfiable query can be equivalently
rewritten as a reduced query in polynomial time. We say that a $\Sum$-query
is \emph{nonnegative} if the summation variable $y$ can only be bound
to nonnegative values, that is, if $M\models y\geq0$.
\begin{thm}
\label{theo-sum:query:completeness} Let $Q^{\Sum}$ be a reduced
nonnegative $\Sum$-query and $\C$ be a set of relational TC statements.
Then
\[
\C\models\Compl{Q^{\Sum}}\mbox{\ \ \ \ if and only if\ \ \ \ }\C\models\C_{Q}
\]
\end{thm}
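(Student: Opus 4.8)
The plan is to reduce the claim to Theorem~\ref{theorem_characterizing_query_completeness} (the bag-semantics characterization) by showing that, for a reduced nonnegative $\Sum$-query $Q^{\Sum}$ and a set of \emph{relational} TC statements $\C$, completeness of $Q^{\Sum}$ over an incomplete database $\pdb$ is equivalent to completeness of the core $Q$ under bag semantics. One direction is immediate: if $\C\models\C_Q$, then by Corollary~\ref{c_q-models-compl_q} the core $Q$ is complete under bag semantics over every $\pdb\models\C$, hence the groups and their bags of $y$-values are identical over $\di$ and $\da$, so $Q^{\Sum}$ is complete. For the converse I would argue contrapositively: assume $\C\not\models\C_Q$ and build an incomplete database $\pdb=(\di,\da)$ with $\pdb\models\C$ but $Q^{\Sum}(\di)\neq Q^{\Sum}(\da)$.

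The construction follows the template used in the proof of Theorem~\ref{thm:weakest-precond}: since $\C\not\models\C_Q$, pick some canonical statement $C_i$ (for atom $A_i=R_i(\tpl d_i)$, the $i$-th relational atom of the body $B$) and an incomplete database witnessing $\pdb\models\C$, $\pdb\not\models C_i$; so there is a tuple $\tpl u_i\in Q_{C_i}(\di)\setminus R_i(\da)$ coming from a satisfying valuation $\theta$ of $B$ over $\di$. Freeze $B$ to $B'$ (fresh constants for the variables) and set $\di_0 = B'$, $\da_0 = B'\setminus\{A_i'\}$, where $A_i'$ is the frozen copy of $A_i$. The key point, proved verbatim as in Theorem~\ref{thm:weakest-precond} using the composition of $\theta$ with the homomorphism realizing any $C\in\C$ over $\di_0$, is that $\pdb_0=(\di_0,\da_0)\models\C$; here we crucially use that $\C$ contains no comparisons, so the composition argument goes through unchanged. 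It remains to observe that $Q^{\Sum}$ is \emph{not} complete over $\pdb_0$: because $Q$ is reduced, the frozen body $B'$ contains no ``coincidental'' equalities, so the single satisfying valuation of $Q$ over $\di_0$ is the identity on $B'$, producing one answer tuple whose $y$-component is the frozen constant standing for the summation variable; since $Q^{\Sum}$ is nonnegative ($M\models y\ge 0$) and reduced, that constant is strictly positive, so it contributes a nonzero value. Over $\da_0$ the atom $A_i'$ is missing, so the group either disappears or its $y$-bag shrinks by a positive amount, and in either case the SUM value strictly decreases (or the group vanishes entirely), giving $Q^{\Sum}(\di_0)\neq Q^{\Sum}(\da_0)$. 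Hence $\C\not\models\Compl{Q^{\Sum}}$.

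The main obstacle I expect is the ``reduced + nonnegative'' bookkeeping that guarantees the discrepancy survives aggregation: one has to be careful that dropping $A_i'$ actually changes the \emph{SUM}, not merely the underlying bag of core answers. Nonnegativity rules out the pathological case where a missing tuple contributes $0$ and is thus invisible to SUM, and reducedness ensures the summation variable freezes to a genuine positive constant rather than to $0$ via a forced equality $y=0$ in $M$. A secondary subtlety is handling the case where several frozen tuples map into the same group: here one must check that removing one of them still strictly lowers the group's sum, which again follows from every summand being positive. If $Q$ happens to be unsatisfiable the statement is vacuous, so we may assume satisfiability throughout, which also justifies the existence of the witnessing valuation $\theta$ and the reduced rewriting.
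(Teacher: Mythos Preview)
Your approach is sound but takes a different route from the paper. The paper stays with the original witness $\pdb$: if the violating valuation $\theta$ already has $\theta y>0$, the sums for the group $\theta\tpl x$ over $\di$ and $\da$ differ immediately; if $\theta y=0$, it perturbs $\theta$ to some $\theta'$ with $\theta'y>0$ (using reducedness of $M$ and density of $\dom$) and then \emph{enlarges} both $\di$ and $\da$ by $\theta'L\setminus\{\theta'A_1\}$, checking that the enlarged pair still satisfies~$\C$. You instead discard $\pdb$ and build a fresh witness $(L',L'\setminus\{A_i'\})$ from a frozen body, in the style of Theorem~\ref{thm:weakest-precond}. This buys you a cleaner verification that $\pdb_0\models\C$ (no interaction between old and new facts), at the price of having to choose the freeze mapping so that it simultaneously satisfies $M$, sends $y$ to a strictly positive value, and uses constants fresh for the constants in $\C$ and~$Q$---which is precisely where reducedness and density enter, and you should make this choice explicit.

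Two small slips to fix. First, the identity is not in general the \emph{single} satisfying valuation of $Q$ over $\di_0$: the core $Q$ is not assumed minimal, so self-joins can create further valuations into $L'$. Second, and relatedly, ``every summand being positive'' is not guaranteed; you only know that every summand is nonnegative. Neither slip damages the argument: the identity valuation \emph{is} lost when passing from $\di_0$ to $\da_0$ and contributes $\tilde y>0$, while every other lost contribution is $\geq 0$, so the sum for the group $\tilde{\tpl x}$ strictly drops (or that group disappears), which is all you need.
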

\begin{proof}
The direction $\C\models\C_{Q}$ implies $\C\models\Compl{Q^{\Sum}}$
holds trivially. It remains to show that $\C\models\Compl{Q^{\Sum}}$
implies $\C\models\C_{Q}$.

Assume this does not hold. Then $\C\models\Compl{Q^{\Sum}}$ and there
exists some $\pdb=(\id D,\av D)$ such that $\pdb\models\C$, but
$\pdb\not\models\C_{Q}$. Without loss of generality, assume that
condition $C_{1}$ of $\C_{Q}$, which corresponds to the first relational
atom, say $A_{1}$, of the body of $Q$, is not satisfied by~$\pdb$.
Then there is a valuation $\theta$ such that $M\models\theta$ and
$\theta L\incl\id D$, but $\theta A_{1}\notin\av D$. If $\theta y\neq0$,
then we are done, because $\theta$ contributes a positive value to
the overall sum for the group $\theta\tpl x$. Otherwise, we can find
a valuation $\theta'$ such that \textit{(i)} $\theta'\models M$,
\textit{(ii)} $\theta'y>0$, \textit{(iii)} if $\theta'z\neq\theta z$,
then $\theta'z$ is a fresh constant not occurring in $\pdb$, and
\textit{(iv)} for all terms $s$, $t$, it holds that $\theta's=\theta't$
only if $\theta s=\theta t$. Such a $\theta'$ exists because $M$
is reduced and the order over which our comparisons range is dense.
Due to \textit{(iii)}, in general we do not have that $\theta'L\incl\id D$.

We now define a new incomplete database $\pdb'=(\id{D'},\av{D'})$
by adding $\theta'L\setminus\set{\theta'A}$ both to $\id D$ and
$\av D$. Thus, we have that \textit{(i)} $\theta'L\incl\id{D'}$,
\textit{(ii)} $\theta'L\not\incl\av{D'}$, and \textit{(iii)} $\pdb'\models\C$.
The latter claim holds because any violation of $\C$ by $\pdb'$
could be translated into a violation of $\C$ by $\pdb$, using the
fact that $\C$ is relational. Hence, $\theta'$ contributes the positive
value $\theta'y$ to the sum for the group $\theta'\tpl x$ over $\pdb'$,
but not over $\pdb$. Consequently, the sums for $\theta'\tpl x$
over $\id{D'}$ and $\av{D'}$ are different (or there is no such
sum over $\av{D'}$), which contradicts our assumption that $\C\models\Compl{Q^{\Sum}}$.
\end{proof}
In the settings of Theorems~\ref{theo-count:query:completeness}
and~\ref{theo-sum:query:completeness}, to decide TC-QC entailment,
it suffices to decide the corresponding TC-TC entailment problem with
the canonical statements of the query core. By Theorem~\ref{theo-equivalence:TC-TC:UCont},
these entailment problems can be reduced in $\PTIME$ to containment
of unions of conjunctive queries.

We remind the reader that for the query languages considered in this
work, TC-TC entailment has the same complexity as TC-QC entailment
(cf.\ Table~\ref{table:complexity_lc-qc}), with the exception of
$\LCLC(\LRQ,\CQ)$ and $\LCLC(\RQ,\CQ)$. The TC-QC problems for these
combinations are $\piptwo$-complete, while the corresponding TC-TC
problems are in $\NP$.

While for $\Count$ and $\Sum$-queries the multiplicity of answers
to the core query is crucial, this has no influence on the result
of a $\Max$-query. Cohen et al.\ have characterized equivalence
of $\Max$-queries in terms of \emph{dominance} of the cores~\cite{CohenEtAl-Aggregate:Equivalence-JACM}.
A query $Q(\tpl s,y)$ is dominated by a query $Q'(\tpl s',y')$ if
for every database instance $D$ and every tuple $(\tpl d,d)\in Q(D)$
there is a tuple $(\tpl d,d')\in Q'(D)$ such that $d\leq d'$. For
$\Max$-queries it holds that $Q_{1}^{\Max}$ and $Q_{2}^{\Max}$
are equivalent if and only if $Q_{1}$ dominates $Q_{2}$ and vice
versa. In analogy to Theorem~\ref{thm:reduction_lc-qc_to_containment},
we can characterize query completeness of $\Max$-queries in terms
of dominance.
\begin{thm}
\label{thm:max-queries-characterization}Let $\C$ be a set of TC-statements
and $Q^{\Max}$ be a $\Max$-query. Then 
\[
\C\models\Compl{Q^{\Max}}\mbox{\ \ \ \ iff \ \ \ \ensuremath{Q\ }is dominated by\ }{Q}^{\C}
\]
\end{thm}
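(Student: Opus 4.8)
The plan is to reduce the statement to the dominance characterization of $\Max$-query equivalence due to Cohen, Nutt and Sagiv~\cite{CohenEtAl-Aggregate:Equivalence-JACM}, in the same spirit in which Theorem~\ref{thm:reduction_lc-qc_to_containment} reduced plain TC-QC entailment to containment of $Q$ in $Q^{\C}$. First I would record two facts that hold for every instance $D$: since $T_{\C}(D)\incl D$ (Lemma~\ref{lemma_reasoning_1}(\ref{lemma_reasoning1_1})) and conjunctive queries are monotone, $Q^{\C}(D)=Q(T_{\C}(D))\incl Q(D)$ by Lemma~\ref{lemma_reasoning_1}(\ref{lemma_reasoning1_3}). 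In particular $Q^{\C}$ is \emph{always} dominated by $Q$, so the condition ``$Q$ is dominated by $Q^{\C}$'' is equivalent to ``$Q^{\Max}$ and $(Q^{\C})^{\Max}$ are equivalent $\Max$-queries'' by the Cohen--Nutt--Sagiv characterization. Hence it suffices to prove $\C\models\Compl{Q^{\Max}}$ iff $Q^{\Max}$ and $(Q^{\C})^{\Max}$ are equivalent.

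For the ``if'' direction I would take an arbitrary incomplete database $\pdb=(\di,\da)$ with $\pdb\models\C$. By Lemma~\ref{lemma_reasoning_1}(\ref{lemma_reasoning1_2}) we have $T_{\C}(\di)\incl\da$, so $Q^{\C}(\di)=Q(T_{\C}(\di))\incl Q(\da)\incl Q(\di)$. I would then argue group by grouping-tuple $\tpl d$: a missing group counts as ``undefined'' and the bag inclusions force the group-wise $\max$ values to be comparable in the same order. Using equivalence of $Q^{\Max}$ and $(Q^{\C})^{\Max}$ over $\di$, every $\di$-group has the same $\max$ as the corresponding $Q^{\C}(\di)$-group; squeezing through $Q(\da)$ shows the $\da$-group exists and has that same $\max$. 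Conversely, any $\da$-group is a $\di$-group whose $\max$ cannot exceed the $Q^{\C}(\di)$-max, which equals the $Q(\di)$-max, hence they coincide. This gives $Q^{\Max}(\di)=Q^{\Max}(\da)$, i.e.\ $\pdb\models\Compl{Q^{\Max}}$.

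For the ``only if'' direction I would mirror the proof of Theorem~\ref{thm:reduction_lc-qc_to_containment}: given an instance $D$, set $\di:=D$ and $\da:=T_{\C}(D)$, which is an incomplete database by Lemma~\ref{lemma_reasoning_1}(\ref{lemma_reasoning1_1}) and satisfies $\C$ by Lemma~\ref{lemma_reasoning_1}(\ref{lemma_reasoning1_2}); thus by hypothesis $Q^{\Max}(\di)=Q^{\Max}(\da)$. Taking any $(\tpl d,d)\in Q(D)=Q(\di)$, the group of $\tpl d$ over $\di$ has some $\max$ $m\ge d$, so $(\tpl d,m)\in Q^{\Max}(\da)$, which means $(\tpl d,m)\in Q(T_{\C}(D))=Q^{\C}(D)$ with $m\ge d$; this is exactly the witness required for $Q$ to be dominated by $Q^{\C}$ on $D$.

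The main obstacle I anticipate is the careful bookkeeping around $\Max$ semantics: unlike $\Sum$ and $\Count$ it is insensitive to multiplicities, and whole groups can disappear when passing to a subinstance, so the squeezing argument must be airtight about \emph{when} a $\da$-group exists at all, not just about the value of its maximum. A secondary, more routine point is to make sure the reduction to equivalence is legitimate, i.e.\ that the direction ``$Q^{\C}$ dominated by $Q$'' really comes for free from monotonicity and $T_{\C}(D)\incl D$, and that the group-wise formulation of $\Max$-query equivalence I use matches the one in~\cite{CohenEtAl-Aggregate:Equivalence-JACM}.
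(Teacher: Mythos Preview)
Your proposal is correct and follows essentially the same route as the paper: both directions hinge on the canonical incomplete database $(D,T_{\C}(D))$ for ``only if'' and on the chain $Q^{\C}(\di)=Q(T_{\C}(\di))\incl Q(\da)\incl Q(\di)$ for ``if''. The one difference is that you take a detour through \emph{equivalence} of $Q^{\Max}$ and $(Q^{\C})^{\Max}$ and then do a group-by-group squeezing argument, whereas the paper works with dominance directly: from $Q(\di)$ dominated by $Q^{\C}(\di)$ and $Q^{\C}(\di)\incl Q(\da)$ it concludes immediately that $Q(\di)$ is dominated by $Q(\da)$, which (together with the trivial reverse dominance from $\da\incl\di$) gives $Q^{\Max}(\di)=Q^{\Max}(\da)$; this avoids the bookkeeping about disappearing groups that you flagged as the main obstacle.
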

\begin{proof}
$"\Rightarrow":$ By counterposition. Assume $Q$ is not dominated
by $Q^{\C}$. Then there exists a database instance $D$ such that
there is a tuple $t_{1}=(\tpl d,d)\in Q(D)$ but there is no tuple
$(\tpl d,d')\in Q^{\C}(D)$ with $d\leq d'$. By Lemma \ref{lemma_reasoning_1},
the incomplete database $(D,T_{\C}(D))$ satisfies $\C$, and $Q^{\C}(D)=Q(T_{\C}(D))$
and thus there is no tuple $(\tpl d,d')\in Q^{\C}(D)$ with $d\leq d'$.
Thus it is shown that $\C$ does not entail $\compl{Q^{\Max}}$.

$"\Leftarrow"$: Analogous to the proof of Theorem \ref{thm:reduction_lc-qc_to_containment}.
Suppose $Q$ is dominated by $Q^{\C}$. Let $\pdb=(\id D,\av D)$
be an incomplete database such that $\pdb\models\C$. Then we have
that $Q(\id D)$ is dominated by $Q^{\C}(\id D)$ because of the assumption,
and $Q^{\C}(\id D)=Q(T_{\C}(\id D))$ because of Lemma~\ref{lemma_reasoning_1}(iii),
and $Q(T_{\C}(\id D))\incl Q(\av D)$ because of Lemma~\ref{lemma_reasoning_1}(ii),
since $\pdb\models\C$. Thus, $Q(\di)$ is dominated by $Q(\da)$
and hence the $\Max$-query is complete.
\end{proof}
Dominance is a property that bears great similarity to containment.
For queries without comparisons it is even equivalent to containment
while for queries with comparisons it is characterized by the existence
of \emph{dominance mappings}, which resemble the well-known containment
mappings (see~\cite{CohenEtAl-Aggregate:Equivalence-JACM}). This
allows to conclude that the upper and lower bounds of Theorems~\ref{theo-lcqc:upper:bounds}
and~\ref{theo-lcqc:lower:bounds} hold also for $\Max$-queries. 

If $\L$ is a class of conjunctive queries, we denote by $\L^{\Max}$
the class of $\Max$-queries whose core is in $\L$. For languages
$\L_{1}$, $\L_{2}^{\Max}$, the problem $\LCQC(\L_{1},\L_{2}^{\Max})$
is defined as one would expect. With this notation, we can conclude
the following:
\begin{thm}
Let $\L_{1}$ and $\L_{2}$ be languages among \LRQ, \LCQ, \RQ
\ and \CQ. Then the complexity of $\LCQC(\L_{1},\L_{2}^{\Max})$
is the same as the one of $\LCQC(\L_{1},\L_{2})$.\end{thm}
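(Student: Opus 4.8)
The plan is to bound $\LCQC(\L_1,\L_2^{\Max})$ both from above and from below by the complexity of $\LCQC(\L_1,\L_2)$, and then read off from Table~\ref{table:complexity_lc-qc} that the two coincide in every cell.

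For the upper bounds I would start from Theorem~\ref{thm:max-queries-characterization}, which turns the question ``$\C\models\Compl{Q^{\Max}}$?'' into ``is $Q$ dominated by $Q^{\C}$?'', and then rerun the proof of Theorem~\ref{theo-lcqc:upper:bounds} almost verbatim, replacing containment mappings by the \emph{dominance mappings} of Cohen, Nutt and Sagiv~\cite{CohenEtAl-Aggregate:Equivalence-JACM}: a dominance mapping differs from a containment mapping only in that the constraint relating the source and target aggregation variables is ``$\le$'' rather than ``$=$''. Concretely, for the comparison-free combinations dominance is literally containment, so nothing changes; for the remaining combinations one decides non-dominance by the same guess-and-verify scheme used for non-containment---guess a representative valuation of the containee in the style of Lemma~\ref{lemma_reasoning2}, instantiate it, and check that no dominance mapping from $Q^{\C}$ exists---with linearity of the containee again making the verification polynomial. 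The dominance versions of the four cases of Theorem~\ref{theo-lcqc:upper:bounds} then give $\PTIME$, $\CONP$, $\NP$ and $\piptwo$ upper bounds, which by language inclusion cover all $\L_1,\L_2$ and match the table.

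For the lower bounds I would give a linear-time many-one reduction of $\LCQC(\L_1,\L_2)$ to $\LCQC(\L_1,\L_2^{\Max})$, transporting the hardness results of Theorem~\ref{theo-lcqc:lower:bounds}. Given an instance ``$\C\stackrel{?}{\models}\Compl Q$'' with $\C\subseteq\L_1$ and $\query{Q(\tpl x)}{B}\in\L_2$, introduce a fresh unary relation symbol $S$, put $\query{P(\tpl x,y)}{B,S(y)}$ and $\C'=\C\cup\set{\Compl{S(y);\true}}$. Then $P\in\L_2$ (one fresh relational atom is added, no comparison and no repeated symbol) and $\C'\subseteq\L_1$ (the associated query of the new statement lies in $\LRQ$, which sits inside all four classes), and one verifies $\C\models\Compl Q$ iff $\C'\models\Compl{P^{\Max}}$. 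For the forward direction, over any $\pdb=(\di,\da)\models\C'$ one has $S(\di)=S(\da)$, so within each group $\tpl d$ the $\Max$ of the $y$-values is the same constant over $\di$ and over $\da$; hence $P^{\Max}(\di)=P^{\Max}(\da)$ reduces to $Q^{s}(\di)=Q^{s}(\da)$, which holds by $\C\models\Compl Q$. For the converse, take a counterexample $\pdb=(\di,\da)$ to $\C\models\Compl Q$, so there is a tuple in $Q^{s}(\di)\setminus Q^{s}(\da)$; adding one arbitrary $S$-fact to both $\di$ and $\da$ yields an incomplete database that still satisfies $\C'$ (no statement of $\C$ mentions $S$, and $\Compl{S(y);\true}$ holds since the two $S$-relations are now equal and nonempty) but not $\Compl{P^{\Max}}$, as the $\Max$-answers mirror $Q^{s}$ and those still differ.

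The step I expect to be the main obstacle is the faithful transfer of the upper-bound proof: one must check that every ingredient of the proof of Theorem~\ref{theo-lcqc:upper:bounds}---the reformulation of $Q^{\C}$ through the operator $T_{\C}$, the representative-valuation technique of Lemma~\ref{lemma_reasoning2}, and the asymmetric containment bounds imported from the literature---has a dominance analogue in the same complexity class, which ultimately rests on the fact, established in~\cite{CohenEtAl-Aggregate:Equivalence-JACM}, that for the conjunctive-query fragments in play deciding dominance is no harder than deciding containment. A small but necessary point in the reduction is the corner case of an empty available $S$-relation, which is precisely why a single $S$-fact is added to both instances.
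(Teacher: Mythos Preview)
Your proposal is correct and follows essentially the same approach as the paper: the upper bound via Theorem~\ref{thm:max-queries-characterization} together with the dominance complexity results of~\cite{CohenEtAl-Aggregate:Equivalence-JACM}, and the lower bound via the reduction that adjoins a fresh unary relation (the paper calls it $U$) whose completeness is asserted and over which the $\Max$ is taken. Your treatment is in fact more careful than the paper's terse proof, in that you explicitly identify and handle the empty-$S$ corner case by adding a single $S$-fact to both instances in the converse direction---a detail the paper leaves implicit.
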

\begin{proof}
That $\LCQC(\L_{1},\L_{2}^{\Max})$ is at most as hard as $\LCQC(\L_{1},\L_{2})$
follows from Theorem \ref{thm:max-queries-characterization} and the
complexity results for query dominance in \cite{CohenEtAl-Aggregate:Equivalence-JACM}.

That $\LCQC(\L_{1},\L_{2}^{\Max})$ is at least as hard as $\LCQC(\L_{1},\L_{2})$
follows from the fact that $\LCQC(\L_{1},\L_{2})$ can trivially be
reduced to $\LCQC(\L_{1},\L_{2}^{\Max})$ by introducing a new unary
relation symbol U with a new variable $x$, of which the maximum is
calculated, into a query and by adding the assertion that U is complete.
\end{proof}

\section{Instance Reasoning}

\label{sec:reasoning_with_instances}In many cases one has access
to the current state of the database, which may be exploited for completeness
reasoning. Already Halevy~\cite{levy_completeness} observed that
taking into account both a database instance and the functional dependencies
holding over the ideal database, additional QC statements can be derived.
Denecker \etal~\cite{Denecker:Calabuig-logical_theory_partial_databases:tods:10}
showed that for first order queries and TC statements, TC-QC entailment
with respect to a database instance is in $\CONP$, and $\CONP$-hard
for some queries and statements. They then focused on approximations
for certain and possible answers over incomplete databases.
\begin{example}
\label{example:instance-reasoning}As a very simple example, consider
the query 
\[
\query{Q(n)}{\mathit{student}(n,c,s),\result(n,\textit{'Greek'},g)},
\]
 asking for the names of students that attended Greek language courses.
Suppose that the \textit{language\_attendance} table is known to be
complete. Then this alone does not imply the completeness of $Q$,
because records in the \textit{student} table might be missing.

Now, assume that we additionally find that in our database that the
table \textit{$\result$} contains no record about Greek.

As the \textit{$\result$} table is known to be complete, it does
not matter which tuples are missing in the \textit{student} table.
No student can have taken Greek anyway. The result of $Q$ must always
be empty, and hence we can conclude that $Q$ is complete in this
case. 
\end{example}
In this section we will discuss TC-QC and QC-QC reasoning wrt. a concrete
database instance. We show that for queries under set semantics, TC-QC
reasoning becomes harder whereas QC-QC reasoning becomes easier.

\subsection{Entailment of Query Completeness by Table Completeness}

Formally, the question of \emph{\LC-QC entailment \wrt a database
instance} is formulated as follows: given an available data\-base
instance $\da$, a set of \localComp statements $\C$, and a query
$Q$, is it the case that for all ideal database instances $\di$
such that $(\di,\da)\models\C$, we have that $Q(\da)=Q(\di)$? If
this holds, we write 
\[
\da,\C\models\Compl Q.
\]

Interestingly, TC-QC entailment \wrt a concrete database is $\piptwo$-complete
even for linear relational queries:
\begin{thm}
$\tcqc$ entailment \wrt a database instance has (i) polynomial data
complexity and is (ii) $\piptwo$-complete in combined complexity
for all combinations of languages among \LRQ, $\L_{\mathrm{LCQ}}$,
\RQ, and \CQ.\label{thm:instancereasoning-tcqc}
\end{thm}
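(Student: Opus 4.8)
The plan is to prove the two parts separately, since they make rather different demands.

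For \textbf{(i), polynomial data complexity}, I would fix the query $Q$, the set of TC statements $\C$, and vary only the available instance $\da$. The key observation is that by Theorem~\ref{thm:reduction_lc-qc_to_containment}, $\da,\C\models\Compl Q$ fails exactly when there exists an ideal database $\di \supseteq \da$ with $(\di,\da)\models\C$ and $Q(\di)\neq Q(\da)$, i.e.\ some new answer tuple $\tpl c\in Q(\di)\setminus Q(\da)$. Since all queries are monotone, $Q(\da)\subseteq Q(\di)$ always, so the issue is purely whether extra tuples can be forced into $\di$ that produce a new answer while still respecting $\C$. The crucial point for data complexity is that $Q$ and $\C$ are \emph{fixed}: a witnessing $\di$ need only add, for each of the boundedly many relational atoms of $Q$ and each valuation of its (boundedly many) variables over the active domain of $\da$ plus a constant number of fresh constants, a bounded number of new facts. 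Hence one can enumerate in polynomial time all candidate "minimal extensions" $\di$ of $\da$ that could yield a new $Q$-answer, and for each check in polynomial time whether $(\di,\da)\models\C$ (this is just evaluating the fixed queries $Q_{C_j}$ over $\di$, polynomial in $|\di|$, which is polynomial in $|\da|$). If any such $\di$ satisfies $\C$ and yields a new answer, entailment fails; otherwise it holds. This gives a $\PTIME$ data-complexity decision procedure.

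For \textbf{(ii), $\piptwo$-completeness in combined complexity}, membership in $\piptwo$ follows by the standard argument: to refute $\da,\C\models\Compl Q$, guess (existentially) a polynomial-size candidate $\di\supseteq\da$ and a tuple $\tpl c$, verify $\tpl c\in Q(\di)$, verify $\tpl c\notin Q(\da)$, and then check $(\di,\da)\models\C$ --- the last check is itself co-NP (it is the $\UCont$-style condition $T_\C(\di)\subseteq\da$, which for general conjunctive $\C$ requires a universal homomorphism check). So non-entailment is in $\sigptwo$ and entailment in $\piptwo$; one must be a little careful that a polynomial-size witness $\di$ always suffices, which follows because, as in the data-complexity argument, only a bounded number of fresh constants (linear in the size of $Q$ and $\C$) is ever needed. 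For hardness, the target is to show $\piptwo$-hardness already for the weakest language combination, $\LRQ$ queries and $\LRQ$ TC statements. The natural route is to reduce from $\forall\exists$3-SAT, adapting the reduction of Lemma~\ref{hardness-1-case} (which already gives $\piptwo$-hardness of $\LCQC(\LRQ,\CQ)$ \emph{without} an instance): having a concrete available instance $\da$ at our disposal is an \emph{extra} resource, so we should be able to push the complexity up even for the more restricted query/statement languages by encoding the "fork" gadgets of Figure~\ref{fig:reduction_pip2} into $\da$ rather than into the query body.

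The main obstacle is the hardness half of (ii): engineering a reduction that uses only linear relational queries and linear relational TC statements. Without repeated relation symbols and without comparisons, the query itself is very weak, so essentially all the combinatorial content must be encoded in the fixed instance $\da$ and in the \emph{choice} of which ideal superinstances $\di$ are $\C$-consistent. The plan is: introduce one relation per clause and one per propositional variable; let $\da$ contain the "true assignments" gadget for each clause (the seven satisfying rows, as in $F_i^{(7)}$), put just enough into $\da$ so that the universally quantified variables' values are forced to range over all possibilities in the ideal completions consistent with $\C$, and design $\C$ so that $(\di,\da)\models\C$ holds precisely when $\di$ extends $\da$ by a well-formed encoding of a truth assignment; then $Q$ --- a single linear relational conjunctive query joining one atom per clause and per variable --- returns a new answer over some consistent $\di$ iff the $\forall\exists$3-SAT instance is \emph{not} valid. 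Verifying that no comparisons and no repeated relation symbols are needed, and that the fixed instance $\da$ can simulate the role previously played by comparisons in $G_j,G_j'$, is the delicate step; I would model it closely on the VLDB-paper reductions but shift the "order/disambiguation" trick into membership-in-$\da$ constraints enforced by the TC statements.
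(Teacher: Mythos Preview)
Your arguments for part~(i) and for the $\piptwo$ upper bound in part~(ii) are correct and essentially coincide with the paper's proof. The paper likewise observes that it suffices to consider ideal databases of the form $\da\cup vB$, where $v$ uses only the constants of $\da$ plus one fresh constant per variable of $Q$; since shrinking $\di$ preserves satisfaction of $\C$, this yields polynomial-size witnesses, and the remaining checks (that $(\di,\da)\models\C$ and that the new tuple is absent from $Q(\da)$) are each coNP, giving $\sigptwo$ for non-entailment.

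For the $\piptwo$-hardness, your target (reduce $\forall\exists$3-SAT, use the instance to absorb what otherwise needs comparisons or self-joins) is right, but the concrete plan has problems and differs from what the paper does. First, you propose to ``design $\C$ so that $(\di,\da)\models\C$ holds precisely when $\di$ extends $\da$ by a well-formed encoding of a truth assignment''; but TC statements only force certain $\di$-facts to already be in $\da$---they cannot impose positive well-formedness on $\di$. Second, the fork gadgets $G_j,G_j'$ from Figure~\ref{fig:reduction_pip2} are a red herring here: the paper's instance-based reduction does not use them, and trying to port them would reintroduce comparisons.

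The paper's reduction is structurally different from your sketch. It keeps the clause atoms \emph{out} of the query and \emph{into} a single TC condition. Concretely, with unary $B_1,\dots,B_m,R_1,\dots,R_m$ and ternary $C_1,\dots,C_k$, the query is just $\query{Q()}{B_1(x_1),R_1(x_1),\dots,B_m(x_m),R_m(x_m)}$, and $\da$ has $B_i=\{0,1\}$, $R_i=\emptyset$, $C_j$ the seven satisfying $\{0,1\}$-rows for clause $\gamma_j$. The set $\C$ declares each $B_i$ complete and contains one statement $\Compl\bigl(R_1(x_1);\,R_2(x_2),\dots,R_m(x_m),C_1(\dots),\dots,C_k(\dots)\bigr)$. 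Now $Q(\da)=\emptyset$; any $\di$ making $Q(\di)\neq\emptyset$ must insert a $\{0,1\}$-value into each $R_i$, which encodes a choice of the universal $\tpl x$; and $(\di,\da)\models\C$ fails precisely when the condition of the $R_1$-statement is satisfiable over $\di$, i.e., when some $\tpl y$-assignment (witnessed through the $C_j$ tables) satisfies all clauses for that $\tpl x$. Thus completeness holds iff every $\tpl x$ has a satisfying $\tpl y$, i.e., iff $\phi$ is valid. The key idea you are missing is that the universal quantifier lives in the choice of $\di$ and the existential quantifier in the evaluation of the TC condition---not in the query body.
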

To show the $\piptwo$-hardness of $\LCQC(\LRQ,\LRQ)$ entailment
w.r.t.\ a concrete database instance, which implies the hardness
of all other combinations, we give a reduction of the previously seen
problem of validity of an universally quantified 3-SAT formula.

Consider $\phi$ to be an allquantified 3-SAT formula of the form
\[
\forall\aufz xm\exists\aufz yn:\gamma_{1}\wedge\ldots\wedge\gamma_{k}.
\]
 where each $\gamma_{i}$ is a disjunction of three literals over
propositions $p_{i1}$, $p_{i2}$ and $p_{i3}$, and where $\set{\aufz xm}\union\set{\aufz yn}$
are propositions.

We define the query completeness problem 
\[
\Gamma_{\phi}=(\ \da,\C\stackrel{?}{\models}\Compl Q\ )
\]
 as follows. Let the relation schema $\Sigma$ be $\set{B_{1}/1,\ldots,B_{m}/1,R_{1}/1,\break\ldots,R_{m}/1,C_{1}/3,\ldots,C_{k}/3}$.
Let $Q$ be a query defined as 
\[
\query{Q()}{B_{1}(x_{1}),R_{1}(x_{1}),\ldots,B_{m}(x_{m}),R_{m}(x_{m})}.
\]
 Let $\da$ be such that for all $B_{i}$, $B_{i}(\da)=\set{0,1}$,
and for all $i=1,\ldots,m$ let $R_{i}(\da)=\{\}$ and let $C_{i}(\da)$
contain all the 7 triples over $\set{0,1}$ such that $\gamma_{i}$
is mapped to true if the variables in $\gamma_{i}$ become the truth
values $\true$ for 1 and $\false$ for 0 assigned.

Let $\C$ be the set containing the following \LC statements 
\begin{align*}
 & \Compl{B_{1}(x),\true},\ldots,\Compl{B_{m}(x),\true}\\[0.3ex]
 & \textit{Compl}(R_{1}(x_{1});\, R_{2}(x_{2}),\ldots,R_{m}(x_{m}),\\
 & \qquad\qquad\qquad\qquad\ C_{1}(p_{11},p_{12},p_{13}),\ldots,C_{k}(p_{k1},p_{k2},p_{k3})),
\end{align*}
 where the $p_{ij}$ are either $x$ or $y$ variables as defined
in $\phi$.
\begin{lem}
\label{lower_bound_ext_inf} Let $\phi$ be a $\forall\exists$3-SAT
formula as shown above and let $Q$, $C$ and $\da$ be constructed
as above. Then 
\[
\phi\mbox{ is valid}\mbox{\quad iff\quad}\da,\C\models\Compl Q.
\]
\end{lem}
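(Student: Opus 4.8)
The plan is to establish both implications directly, exploiting that $Q$ and every statement in $\C$ are boolean and lie in $\LRQ$ (so this is an $\LCQC(\LRQ,\LRQ)$ instance), and that $Q$ is already \emph{false} over $\da$ since $R_i(\da)=\eset$ for all $i$; hence $\da,\C\models\Compl Q$ is equivalent to saying that $Q$ is false over \emph{every} ideal $\di$ with $(\di,\da)\models\C$. First I would record two easy facts about any such $\di$. (a) The statements $\Compl{B_i(x);\true}$ together with $\da\incl\di$ force $B_i(\di)=\set{0,1}$, so any valuation satisfying $Q$ over $\di$ must send each $x_i$ into $\set{0,1}$. (b) The remaining statement of $\C$ holds over $(\di,\da)$ exactly when its associated query
\[
\query{Q_{R_1}(x_1)}{R_1(x_1),R_2(x_2),\ldots,R_m(x_m),C_1(p_{11},p_{12},p_{13}),\ldots,C_k(p_{k1},p_{k2},p_{k3})}
\]
returns nothing over $\di$, since $R_1(\da)=\eset$. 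Moreover, by construction of $\da$, a valuation satisfies the $C_i$-atoms of $Q_{R_1}$ against $C_i(\di)$ --- whose entries coming from $\da$ are precisely the $\set{0,1}$-triples making $\gamma_i$ true --- iff the induced truth assignment (reading $1$ as \true, $0$ as \false) satisfies the matrix $\gamma_1\wedge\cdots\wedge\gamma_k$.

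For the direction ``$\phi$ valid $\Rightarrow$ $\da,\C\models\Compl Q$'', I would take an arbitrary $\di$ with $(\di,\da)\models\C$, assume towards a contradiction that $Q$ is true over $\di$, and use fact (a) to extract $\aufz am\in\set{0,1}$ with $R_i(a_i)\in\di$ for all $i$. Applying validity of $\phi$ to the partial assignment $x_i\mapsto a_i$ yields $b_j\in\set{0,1}$ for the $y_j$ satisfying every clause; then each triple taken by $(p_{i1},p_{i2},p_{i3})$ under this extended assignment lies in $C_i(\da)\incl C_i(\di)$, so that assignment satisfies $Q_{R_1}$ over $\di$ and gives $a_1\in Q_{R_1}(\di)$, contradicting fact (b). Hence $Q$ is false over $\di$, and $Q(\di)=Q(\da)$.

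For the converse I would argue by contraposition. If $\phi$ is not valid, fix $\aufz am\in\set{0,1}$ admitting no clause-satisfying extension, and set $\di=\da\cup\set{R_1(a_1),\ldots,R_m(a_m)}$. Then $\da\incl\di$; the $B_i$-statements hold since no $B_i$-fact was added; and the $R_1$-statement holds because the only $R_i$-facts in $\di$ are the $R_i(a_i)$, so any valuation satisfying $Q_{R_1}$ over $\di$ must map $x_i\mapsto a_i$ and (for every $y_j$ occurring in some $C_i$-atom) into $\set{0,1}$, thereby producing exactly the forbidden extension; thus $Q_{R_1}(\di)=\eset=R_1(\da)$ and $(\di,\da)\models\C$. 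On the other hand $x_i\mapsto a_i$ witnesses that $Q$ is true over $\di$ (both $B_i(a_i)$ and $R_i(a_i)$ are in $\di$), while $Q$ is false over $\da$; so $Q(\di)\neq Q(\da)$ and $\da,\C\not\models\Compl Q$.

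The two implications above should be routine once the gadget is digested; the step I expect to need the most care is the verification, in the converse direction, that the single $R_1$-completeness statement is genuinely satisfied by the constructed $\di$ --- i.e.\ that a satisfying valuation of $Q_{R_1}$ over $\di$ can only arise from a full $\set{0,1}$-assignment extending $x_i\mapsto a_i$, so that unsatisfiability of the matrix under $\bar a$ really makes $Q_{R_1}(\di)$ empty --- together with the dual bookkeeping in the forward direction that the $B_i$-atoms of $Q$ confine the extracted witnesses $a_i$ to $\set{0,1}$, which is what lets the needed clause-triples sit in $C_i(\da)$. Two minor points also worth stating cleanly: negative literals are encoded entirely through the choice of which $\set{0,1}$-triples populate each $C_i(\da)$, and a clause with a repeated variable is handled by the corresponding repetition of arguments in the $C_i$-atom. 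Combined with $\piptwo$-hardness of $\forall\exists$3-SAT, the equivalence then yields the $\piptwo$-hardness of $\LCQC(\LRQ,\LRQ)$ with respect to a fixed database, which is what is used for Theorem~\ref{thm:instancereasoning-tcqc}.
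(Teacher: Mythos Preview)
Your proof is correct and follows essentially the same approach as the paper's own argument: both hinge on the observation that $Q(\da)=\eset$, that the $B_i$-statements pin each $x_i$ to $\set{0,1}$, and that the single $R_1$-statement is satisfied exactly when no extension of a chosen $\bar a$ makes all clause-atoms hold. Your write-up is in fact more careful than the paper's (which argues somewhat informally that ``every such addition has to violate $\C$''); in particular you explicitly note that $C_i(\da)\subseteq C_i(\di)$ suffices for the forward direction while $C_i(\di)=C_i(\da)$ is arranged by construction in the converse, which is the one point the paper glosses over.
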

\begin{proof}
[Proof (of the lemma)]Observe first, that validity of $\phi$ implies
that for every possible instantiation of the $x$ variables, there
exist an instantiation of the $y$ variables such that $C_{1}$ to
$C_{k}$ in the second \LC statement in $\cplset$ evaluate to true.

Completeness of $Q$ follows from $\cplset$ and $\da$, if $Q$ returns
the same result over $\da$ and any ideal database instance $\di$
that subsumes $\da$ and $\cplset$ holds over $(\di,\da)$.

$Q$ returns nothing over $\da$. To make $Q$ return the empty tuple
over $\di$, one value from $\set{0,1}$ has to be inserted into each
ideal relation instance $\hat{R}_{i}$, because every predicate $R_{i}$
appears in $Q$, and every extension is empty in $\da$. This step
of adding any value from $\set{0,1}$ to the extensions of the $R$-predicates
in $\di$ corresponds to the universal quantification of the variables
$X$.

Now observe, that for the query to be complete, none of these combinations
of additions may be allowed. That is, every such addition has to violate
the \localComp constraint $\cplset$. As the extension of $R_{1}$
is empty in $\da$ as well, $\cplset$ becomes violated whenever adding
the values for the $R$-predicates leads to the existence of a satisfying
valuation of the body of $\cplset$. For the existence of a satisfying
valuation, the mapping of the variables $y$ is not restricted, which
corresponds to the existential quantification of the $y$-variables.

The reduction is correct, because whenever $\cplset,\da\models\Compl Q$
holds, for all possible additions of $\set{0,1}$ values to the extensions
of the $R$-predicates in $\di$ (all combinations of $x$), there
existed a valuation of the $y$-variables which yielded a mapping
from the $C$-atoms in $\cplset$ to the ground atoms of $C$ in $\da$,
that satisfied the existential quantified formula in $\phi$.

It is complete, because whenever $\phi$ is valid, then for all valuations
of the $x$-variables, there exists an valuation for the $y$-variables
that satisfies the formula $\phi$, and hence for all such extensions
of the $R$-predicates in $\di$, the same valuation satisfied the
body of the complex completeness statement, thus disallowing the extension.
\end{proof}

\begin{proof}
[Proof (of the theorem)] For $\piptwo$-membership, consider the
following naive algorithm for showing nonentailment: Given a query
$\query{Q(\tpl x)}B$, completeness statements $\C$ and an available
database $\da$, one has to guess a tuple $\tpl d$ and an ideal database
$\di$ such that $(\di,\da)$ satisfy $\C$ but do not satisfy $\compl Q$,
because $\tpl d$ is in $Q(\di)$ but not in $Q(\da)$. Verifying
that $(\di,\da)$ satisfies $\C$ is a coNP problem, as one has to
find all tuples in $\di$ that are constrained by some statement in
$\C$. Verifying that $(\di,\da)$ does not satisfy $\compl Q$ via
$\tpl d$ is a coNP problem as well, because one needs to show that
$t$ is not returned over $\da$. If one can guess a $\di$ and a
$\tpl d$ that satisfy these two properties, the completeness of $Q$
is not entailed by $\C$ and $\da$.

Now observe that for the guesses for $\tpl d$, it suffices to use
the constants in $\tpl d$ plus as many new constants as the arity
of $Q$. Also for the guesses for $\di$, one needs only minimally
larger databases that allow to retrieve new tuples. Therefore, it
is sufficient to guess ideal databases of the form $(\da\cup vB)$,
where $v$ is some valuation using only constants in $\da$ plus a
fixed set of additional constants.

As the range of possible ideal databases is finite, and given a guess
for an ideal database, the verification that $(\di,\da)$ satisfy
$\C$ and do not satisfy $\compl Q$ are coNP problems, the problem
is in $\piptwo$ \wrt combined complexity.There are only finitely
many databases $\di$ to consider, as it suffices to consider those
that are the result of adding instantiations of the body of $Q$ to
$\da$. Furthermore, since for the valuations $v$ it suffices to
only use the constants already present in the database plus one fresh
constant for every variable in $Q$, the obtained data complexity
is polynomial.
\end{proof}
This result shows that reasoning with respect to a database instance
is considerably harder, as $\tcqc(\LRQ,\LRQ)$ was in PTIME before.

\subsection{Entailment of Query Completeness by Query Completeness}

Entailment of query completeness by query completeness has already
been discussed in Section \ref{sec:general:QC-QC}. For queries under
set semantics, the close connection to the open problem of conjunctive
query determinacy was shown. For queries under bag semantics, the
equivalence to query containment was shown.

In the following, we show that when reasoning \wrt a database instance,
both QC-QC entailment under set semantics and determinacy become decidable,
and describe an algorithm in $\Pi_{3}^{P}$ for both.

QC-QC entailment \wrt a database instance is defined as follows:
\begin{defn}
[QC-QC Instance Entailment]Let $\Q=\{Q_{1},\ldots,Q_{n}$\} be a
set of queries, $Q$ be a query and $\da$ be a database instance.
We say that completeness of $\Q$ entails completeness of $Q$ wrt.
$\da$, written 

\[
\compl{\Q}\models_{\da}\compl Q
\]
if and only if for all ideal databases $\di$ with $\da\subseteq\di$
it holds that if $Q_{1}(\da)=Q_{1}(\di),\ldots,Q_{n}(\da)=Q_{n}(\di)$,
then $Q(\da)=Q(\di)$.
\end{defn}
QC-QC Instance entailment can be decided as follows: To show that
the entailment does not hold, one has to guess a tuple $\tpl d$ and
an ideal database $\di$, such that the incomplete database $(\di,\da)$
satisfies $\compl{\Q}$ but does not satisfy $\compl Q$, because
$\tpl d\in Q(\di)$ but $\tpl d\not\in Q(\da)$. As in the proof of
Theorem \ref{thm:instancereasoning-tcqc}, for the guesses for $\di$,
it suffices to consider minimal extensions of $\da$ using some valuation
$v$ for $B$. Also for the range of the valuation $v$, one has to
consider only the constants in $\da$ plus a as many new constants
as there are variables in $Q$. With this algorithm, we obtain an
upper bound for the complexity of $\qcqc$ entailment \wrt database
instances as follows.
\begin{prop}
QC-QC instance entailment for relational conjunctive queries is in
$\Pi_{3}^{P}$ \wrt combined complexity.\end{prop}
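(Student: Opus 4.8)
The plan is to establish the $\Pi_{3}^{P}$ upper bound by showing that \emph{non}-entailment is in $\Sigma_{3}^{P}$: I will describe a procedure with an $\exists\forall\exists$ structure that accepts a triple $(\da,\Q,Q)$ exactly when $\compl{\Q}\not\models_{\da}\compl Q$. The outermost existential guesses a purported counterexample, namely an ideal database $\di\incls\da$ together with a tuple $\tpl d$; the remaining $\forall\exists$ is spent verifying that $(\di,\da)\models\compl{\Q}$ while $(\di,\da)\not\models\compl Q$ via $\tpl d$.

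First I would argue that a minimal counterexample can be taken to be of polynomial size, so that the outer guess is legitimate. If the entailment fails at all, there are $\di\incls\da$ and $\tpl d$ with $(\di,\da)\models\compl{\Q}$, $\tpl d\in Q(\di)$ and $\tpl d\notin Q(\da)$. Fixing a valuation $v$ of the body $B$ of $Q$ that witnesses $\tpl d\in Q(\di)$ (so $vB\incl\di$ and $v$ maps the head of $Q$ to $\tpl d$), I replace $\di$ by $\di':=\da\cup vB$. Then $\da\incl\di'\incl\di$, so by monotonicity of conjunctive queries $Q_{i}(\da)\incl Q_{i}(\di')\incl Q_{i}(\di)$ for every $Q_{i}\in\Q$, which together with $(\di,\da)\models\compl{\Q}$ forces $(\di',\da)\models\compl{\Q}$; and $\tpl d$ still witnesses noncompleteness of $Q$ over $(\di',\da)$. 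Since $|vB|\le|B|$ and, up to renaming, the constants introduced by $v$ may be drawn from a fixed pool of size $|\vars{Q}|$, both $\di'$ and $\tpl d$ can be guessed with polynomially many bits. This is essentially the argument already carried out in the proof of Theorem~\ref{thm:instancereasoning-tcqc}.

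Next I would analyse the verification, aiming to keep it inside $\piptwo$. Given the guessed polynomial-size $\di$ and $\tpl d$: checking $\tpl d\in Q(\di)$ is in $\NP$ (guess a homomorphism from $B$ into $\di$ that maps the head to $\tpl d$); checking $\tpl d\notin Q(\da)$ is in $\coNP$ (no such homomorphism into $\da$ exists). For $(\di,\da)\models\compl{\Q}$, note that $\da\incl\di$ already gives $Q_{i}(\da)\incl Q_{i}(\di)$ by monotonicity, so it remains to check $Q_{i}(\di)\incl Q_{i}(\da)$ for each of the polynomially many $Q_{i}\in\Q$; the negation ``some $i$ and some tuple $\tpl e$ satisfy $\tpl e\in Q_{i}(\di)$ and $\tpl e\notin Q_{i}(\da)$'' is in $\sigptwo$ (existentially guess $i$, $\tpl e$ and a homomorphism witnessing $\tpl e\in Q_{i}(\di)$, then universally discard all homomorphism candidates for $Q_{i}$ over $\da$), so the check itself is in $\piptwo$. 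Because $\piptwo$ contains $\NP$ and $\coNP$ and is closed under intersection, the conjunction of the three checks is in $\piptwo$; prefixing the outer existential guess therefore places non-entailment in $\exists\cdot\piptwo=\Sigma_{3}^{P}$, i.e.\ QC-QC instance entailment is in $\Pi_{3}^{P}$.

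The step I expect to require the most care is the size-bounding of $\di$: one must verify that collapsing an arbitrary counterexample to the minimal extension $\da\cup vB$ and using only a bounded supply of fresh constants genuinely preserves both $(\di,\da)\models\compl{\Q}$ and the failure of $\compl Q$, for \emph{all} queries in $\Q$ simultaneously. The rest --- that conjunctive query evaluation is in $\NP$ in combined complexity, and that the polynomial hierarchy is closed under the relevant operations --- is routine bookkeeping.
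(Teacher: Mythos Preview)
Your proposal is correct and follows essentially the same approach as the paper: bound the counterexample $\di$ to a minimal extension $\da\cup vB$ with a polynomial supply of fresh constants, then verify in $\piptwo$ that $\compl{\Q}$ holds and $\compl Q$ fails. The only cosmetic difference is that the paper guesses the valuation $v$ directly (so $\tpl d=v\tpl x$ and $\tpl d\in Q(\di)$ is automatic), whereas you guess $\di$ and $\tpl d$ separately and absorb the extra $\NP$ check $\tpl d\in Q(\di)$ into the $\piptwo$ verification.
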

\begin{proof}
Consider the algorithm from above. To show that the entailment does
not hold, it suffices to guess one valuation $v$ for the body $B$
of $Q$, such that the incomplete database $\D=(\da\cup vB,\da)$
satisfies $\compl{\Q}$ but $v\tpl x\not\in Q(\da)$. Verifying the
latter is a coNP problem. Verifying that $\D$ satisfies $\compl{\Q}$
is a $\piptwo$-problem, as, in order to show that $\D$ does not
satisfy $\compl{\Q}$, it suffices to guess one $Q_{i}\in\Q$ and
one tuple $\tpl c\in Q_{i}(\di)$, for which one then needs to show
that there is no valuation $v'$ for $Q$ that allows to retrieve
$\tpl c$ over $\da$.
\end{proof}
Interestingly, also query determinacy wrt. an instance can be solved
analogously. In the following definition, notice the similarity to
the definition of $\qcqc$ instance entailment above. The only difference
are the considered models, which, for $\qcqc$ are incomplete databases,
while for determinacy are arbitrary pairs of databases.
\begin{defn}
[Instance Query Determinacy] Given a set $\Q$ of queries $Q_{1}$
to $Q_{n}$, a query $Q$ and a database $D_{1}$, we say that $\Q$
determines $Q$ wrt. $D_{1}$, written 
\[
\Q\det_{D_{1}}Q
\]
if and only if for all databases $D_{2}$ it holds that if $Q_{1}(D_{1})=Q_{1}(D_{2})$\linebreak{}
$\wedge\cdots\wedge Q_{n}(D_{1})=Q_{n}(D_{2})$, then $Q(D_{1})=Q(D_{2})$.

Again, to show that the entailment does not hold, one has to guess
a tuple $\tpl d$ and a database $D_{2}$, such that the $\Q(D_{1})=\Q(D_{2})$
and $\tpl d\in Q(D_{2})$ but $\tpl d\not\in Q(D_{1})$. Now for $D_{2}$
we have to consider all minimal extensions not of $D_{1}$ itself
but of $\Q(D_{1})$. That is, given the result of the queries $\Q$
over $D_{1}$, we construct a v-table $T$ such that $\Q(D_{1})=\Q(T)$.
This construction can be done by choosing for each tuple $\tpl c'$
in $Q_{i}(D_{1})$ some valuation $v'$ that computed $\tpl c'$,
replacing the images of nondistinguished variables of $Q_{i}$ in
$v$ with new variables, and then taking the union of all the v-tables
for all the tuples in $Q_{i}(D_{1})$ and then the union over all
queries in $\Q$.

Having this v-table $T$, a minimal extension of $\Q(D_{1})$ is any
database $D_{2}=(\sigma T\cup\theta B)$, where $\sigma$ is an instantiation
for the v-table $T$, and $\theta$ is a valuation for the body $B$
of $Q$.

As before, for both valuations one has to consider only the constants
in $D_{1}$ plus a as many new constants as there are variables in
$Q$ and $T$. With this algorithm, we obtain an upper bound for the
complexity of $\qcqc$ entailment wrt.\  database instances as follows.\end{defn}
\begin{prop}
Instance query determinacy for relational conjunctive queries is in
$\Pi_{3}^{P}$.\end{prop}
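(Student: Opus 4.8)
The plan is to mirror the non-determinacy algorithm already sketched in the definition and then do the complexity accounting, exactly as in the proof of the preceding $\qcqc$ instance entailment bound. To show that $\Q\det_{D_1}Q$ \emph{fails} it suffices to exhibit a tuple $\tpl d$ and a witness database $D_2$ of the restricted form $D_2=\sigma T\cup\theta B$, where $T$ is the v-table built from $\Q(D_1)$, $\sigma$ is an instantiation of the variables of $T$, $\theta$ is a valuation for the body $B$ of $Q$, and $\sigma,\theta$ range only over the constants occurring in $D_1$ together with a fixed supply of fresh constants, one per variable of $Q$ and of $T$. Hence $D_2,\tpl d,\sigma,\theta$ are all of size polynomial in the input and can be guessed in nondeterministic polynomial time.

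First I would record what must be verified of a guessed candidate. Guessing $\theta$ alongside $D_2$ makes the condition $\tpl d\in Q(D_2)$ — i.e.\ $\theta B\subseteq D_2$ and $\theta\tpl x=\tpl d$ — a polynomial-time check, and $\tpl d\notin Q(D_1)$ is a $\coNP$ check (rule out every valuation of $Q$ into $D_1$ yielding $\tpl d$). The remaining condition $\Q(D_1)=\Q(D_2)$ decomposes, for each $Q_i\in\Q$, into $Q_i(D_1)\subseteq Q_i(D_2)$ and $Q_i(D_2)\subseteq Q_i(D_1)$. Non-containment of $Q_i(D_1)$ in $Q_i(D_2)$ is witnessed by guessing a valuation $v$ with $vB_i\subseteq D_1$ (an $\NP$ step) and then checking that the answer tuple produced by $v$ is not in $Q_i(D_2)$ (a $\coNP$ step), so $Q_i(D_1)\subseteq Q_i(D_2)$ is a $\piptwo$ property; symmetrically for the other inclusion. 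Conjoining over the polynomially many queries in $\Q$, verifying $\Q(D_1)=\Q(D_2)$ stays in $\piptwo$.

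Putting the pieces together, non-determinacy w.r.t.\ $D_1$ has the shape ``there exist $D_2,\tpl d,\theta$ of polynomial size such that $\varphi$'', where $\varphi$ is a conjunction of a polynomial-time check, a $\coNP$ check and a $\piptwo$ check, hence itself $\piptwo$. An existential quantifier over polynomially many bits in front of a $\piptwo$ predicate lands in $\Sigma_3^P$, so its complement — that $\Q$ determines $Q$ w.r.t.\ $D_1$ — is in $\Pi_3^P$, which is the claim.

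The step I expect to be the main obstacle is justifying the restricted search space: that whenever $\Q\det_{D_1}Q$ fails there is a witness of the special form $\sigma T\cup\theta B$ of polynomial size. One has to argue that the v-table $T$ obtained from $\Q(D_1)$ faithfully represents, up to homomorphism on its frozen constants, every database agreeing with $D_1$ on all of $\Q$, so that any genuine counterexample database can be folded into a bounded canonical instance $\sigma T$ and then minimally extended by a single homomorphic copy $\theta B$ of the body of $Q$ — still giving $\tpl d\in Q(D_2)\setminus Q(D_1)$ while preserving $\Q(D_2)=\Q(D_1)$. Once this small-model claim is in place, the accounting above yields the $\Pi_3^P$ bound.
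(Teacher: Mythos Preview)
Your proposal is correct and follows essentially the same approach as the paper: guess $\sigma$, $\theta$ (hence a polynomial-size $D_{2}=\sigma T\cup\theta B$), then verify the $\coNP$ condition $\theta\tpl x\notin Q(D_{1})$ and the $\piptwo$ condition $\Q(D_{1})=\Q(D_{2})$, yielding $\Sigma_{3}^{P}$ for non-determinacy and hence $\Pi_{3}^{P}$ for the problem. The paper's proof is terser and argues only one direction of $\Q(D_{1})=\Q(D_{2})$ explicitly (the inclusion $\Q(D_{1})\subseteq\Q(D_{2})$ is automatic by construction of $T$ and monotonicity), and, like you, it does not spell out the small-model justification in detail; your identification of that step as the main obstacle is apt.
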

\begin{proof}
Consider the algorithm from above. To show that determinacy does not
hold, it suffices to guess $\sigma$ for $T$ and $\theta$ for $B$,
such that $\Q(D_{1})=\Q(D_{2})$ but $\theta\tpl x\not\in Q(D_{1})$.
Verifying the latter is a coNP problem. Verifying that $\Q(D_{1})=\Q(D_{2})$
is a $\piptwo$-problem, as, in order to show that $\Q(D_{1})\neq\Q(D_{2})$,
one has to guess a valuation for some $Q_{i}\in Q$ that yields a
tuple $\tpl c'$ over $D_{2}$, and then has to show that $\tpl c'$
is not returned by $Q_{i}$ over $D_{1}$.
\end{proof}

\section{Related Work}

\label{sec:general:intro}

Open- and closed world semantics were first discussed by Reiter in
\cite{reiter1978closed}, where he formalized earlier work on negation
as failure \cite{clark1978negation} from a database point of view.
The closed-world assumption corresponds to the assumption that the
whole database is complete, while the open-world assumption corresponds
to the assumption that nothing is known about the completeness of
the database.

Abiteboul et al.\ \cite{abiteboul1991representation} introduced
the notion of certain and possible answers over incomplete databases.
Certain answers are those tuples that are in the query answer over
all possible completions the incomplete database, while possible answers
are those tuples that are in at least one such completion. The notions
can also be used over partially complete databases. Then, query completeness
can be seen as the following relation between certain and possible
answers: A query over a partially complete database is complete, if
the certain and the possible answers coincide.

Motro~\cite{motro_integrity} 
introduced the notion of partially incomplete and incorrect databases
as databases that can both miss facts that hold in the real world
or contain facts that do not hold there. He described partial completeness
in terms of \emph{query completeness} (QC) statements, which express
that the answer of a query is complete. The query completeness statements
express that to some parts of the database the closed-world assumption
applies, while for the rest of the database, the open-world assumption
applies. He studied how the completeness of a given query can be deduced
from the completeness of other queries. His solution was based on
rewriting queries using views: to infer that a given query is complete
whenever a set of other queries are complete, he would search for
a conjunctive rewriting in terms of the complete queries. This solution
is correct, but not complete, as later results on query determinacy
show: the given query may be complete although no conjunctive rewriting
exists

While Levy \etal\ could show that rewritability of conjunctive queries
as conjunctive queries is decidable \cite{levy:1995:answering:queries:using:views:pods},
general rewritability of conjunctive queries by conjunctive queries
is still open: An extensive discussion on that issue was published
in 2005 by Segoufin and Vianu where it is shown that it is possible
that conjunctive queries can be rewritten using other conjunctive
queries, but the rewriting is not a conjunctive query \cite{segoufin_vianu_determinacy}.
They also introduced the notion of query determinacy, which for conjunctive
queries implies second order rewritability. The decidability of query
determinacy for conjunctive queries is an open problem to date.

Halevy~\cite{levy_completeness} suggested \emph{local completeness}
statements, which we, for a better distinction from the QC statements,
call table completeness (TC) statements, as an alternate formalism
for expressing partial completeness of an incomplete database. These
statements allow one to express completeness of parts of relations
independent from the completeness of other parts of the database.
The main problem he addressed was how to derive query completeness
from table completeness (TC-QC). He reduced TC-QC to the problem of
queries independent of updates (QIU)~\cite{elkan_qiu}. However,
this reduction introduces negation, and thus, except for trivial cases,
generates QIU instances for which no decision procedures are known.
As a consequence, the decidability of TC-QC remained largely open.
Moreover, he demonstrated that by taking into account the concrete
database instance and exploiting the key constraints over it, additional
queries can be shown to be complete.

Etzioni \etal~\cite{EtzioniEtAl-Sound:and:efficient:closed:world:reasoning:for:planning-AI}
discussed completeness statements in the context of planning and presented
an algorithm for querying partially complete data. Doherty \etal~\cite{DohertyEtAl-Efficient:reasoning:using:the:LCWA-AIMSA}
generalized this approach and presented a sound and complete query
procedure. Furthermore, they showed that for a particular class of
completeness statements, expressed using semi-Horn formulas, querying
can be done efficiently in $\PTIME$ \wrt data complexity.

Demolombe~\cite{demo1,demo2} captured Motro's definition of completeness
in epistemic logic and showed that in principle this encoding allows
for automated inferences about completeness.

Denecker~\etal~\cite{Denecker:Calabuig-logical_theory_partial_databases:tods:10}
studied how to compute possible and certain answers over a database
instance that is partially complete. They showed that for first-order
TC statements and queries, the data complexity of TC-QC entailment
wrt.\ a database instance is in $\CONP$ and $\CONP$-hard for some
TC statements and queries. Then they focused on approximations for
certain and possible answers and proved that under certain conditions
their approximations are exact.

In the Diplomarbeit (master thesis) of Razniewski \cite{razniewski:diplom:thesis:2010}
it was shown that TC-TC entailment and query containment are equivalent
(Section \ref{sec:general:TC-TC}), and that TC-QC entailment for
queries under bag semantics can be reduced to query containment (Theorem
\ref{theorem_characterizing_query_completeness} (i)). Also, reasoning
wrt.\ database instance was discussed, and the combined complexity
of TC-QC reasoning was shown, and Theorem \ref{thm:weakest-precond}
was contained there, although it was erroneously claimed to hold for
conjunctive queries, while so far it is only proven to hold for relational
queries. Furthermore, it was shown that TC-QC reasoning for databases
that satisfy finite domain constraints is $\piptwo$-complete.

Fan and Geerts \cite{Fan:Geerts-relative_information_completeness:pods:09}
discussed the problem of query completeness in the presence of master
data. In this setting, at least two databases exist: one master database
that contains complete information in its tables, and other, possibly
incomplete periphery databases that must satisfy certain inclusion
constraints wrt.\ the master data. Then, in the case that one detects
that a query over a periphery database contains already all tuples
that are maximally possible due to the inclusion constraints, one
can conclude that the query is complete. The work is not comparable
because completeness is not deduced from metadata but from an existing
data source, the master data, which gives an upper bound for the data
that other databases can contain. 

Abiteboul \etal~\cite{AbiteboulEtAl-Representing:and:querying:incomplete:XML-TODS}
discussed representation and querying of incomplete semistructured
data. They showed that the problem of deciding query completeness
from stored complete query answers, which corresponds to the QC-QC
problem raised in~\cite{motro_integrity} for relational data, can
be solved in PTIME\ \wrt data complexity.

Other work about completeness focused on completeness in sensor networks~\cite{naumann_biswas_completeness}.

\section{Summary}

\label{conclusion} In this chapter we have discussed three main inference
problems: The entailment of table completeness by table completeness
(TC-TC entailment), the entailment of query completeness by table
completeness (TC-QC entailment) and the entailment of query completeness
by query completeness (QC-QC entailment).

For the first problem of TC-TC entailment, we have shown that it naturally
corresponds to query containment and also has the same complexity.

For the second problem of TC-QC entailment, we have shown that for
queries under bag semantics, query completeness can be characterized
by table completeness and thus TC-QC entailment can be reduced to
TC-TC entailment. We have also shown the hardness of TC-QC under bag
semantics and that for queries under set semantics without projections
the same holds.

For queries under set semantics, we have shown that for minimal queries
without comparisons, weakest preconditions in terms of TC statements
can be found, thus again allowing to reduce TC-QC to TC-TC. For other
queries, we have given a direct reduction to query containment, and
also shown that the complexities achieved by this reduction are tight.
Whereas TC-QC under bag and set semantics mostly have the same complexity,
we have shown that for TC statements without comparisons or selfjoins,
but queries with both, the problem for queries under set semantics
is harder than under bag semantics.

For the third problem of QC-QC entailment, we have shown its close
correspondence to the problem of query determinacy, and that QC-QC
entailment for queries under bag semantics is decidable.

A surprising insight of this chapter may be that while query containment
for queries under bag semantics is usually harder than for queries
under set semantics, both the TC-QC and also the QC-QC entailment
reasoning for queries under bag semantics is easier than for queries
under set semantics.

The existence of weakest preconditions also for queries under set
semantics that contain comparisons remain open. In the following chapter
we discuss several extensions to the core framework by either extending
the formalism or by taking into account the actual database instance.

We have also discussed two extensions of the core relational model
that can be taken into account in completeness reasoning: Aggregate
queries and instance reasoning.

For aggregate queries, we have shown how the reasoning can be performed
and that for the aggregate functions COUNT and SUM, completeness reasoning
has the same complexity as for nonaggregate queries under bag semantics,
while for the functions MIN and MAX, it has the same complexity as
for queries under set semantics.

For the instance reasoning, we have shown that TC-QC reasoning becomes
harder, again jumping from NP to $\piptwo$, while for QC-QC entailment,
we have shown that the problem becomes decidable.

In the next chapter, we look into another interesting extension, namely
into databases with null values.

\chapter{Databases with Null Values}

\label{chap:nulls}In this section we extend the previous results
for relational queries to databases that contain null values. As arithmetic
comparisons can be seen as orthogonal to null values, we consider
only relational queries in this chapter.

Null values as used in SQL are ambiguous. They can indicate either
that no attribute value exists or that a value exists, but is unknown.
We study completeness reasoning for the different interpretations.
We show that when allowing both interpretations at the same time,
it becomes necessary to syntactically distinguish between different
kinds of null values. We present an encoding for doing that in standard
SQL databases. With this technique, any SQL DBMS evaluates complete
queries correctly with respect to the different meanings that null
values can carry.

The results in this section have been published at the CIKM 2012 conference
\cite{razniewski:nutt-CIKM2012}.

In Section \ref{nulls:sub:framework} we extend the previous formalisms
for incomplete databases and table completeness to databases with
null values. Section \ref{nulls:sub:specific:reasoning} presents
the reasoning for simple, uniform meanings of null value. Section
\ref{nulls:sub:make_nulls_explicit} shows how the different meanings
of nulls can be made explicit in standard SQL databases, while Sections
\ref{nulls:sub:reasoning_different_nulls} shows that reasoning is
possible in that case. Section \ref{nulls:sub:bag:semantics} discusses
the reasoning for queries under bag semantics, and in Section \ref{nulls:sub:complexity}
we summarize the complexity results and compare them with the results
for databases without null values.

\section{Introduction}

Practical SQL databases may contain null values. These null values
are semantically ambiguous, as they may mean that a value is missing,
non existing, or it is unknown which of the two applies. The different
meanings have different implications on completeness reasoning:
\begin{example}
Consider the table \emph{result(name,subject,grade)}, where name and
subject are the key of the table, and the table contains only the
record \emph{(John,Pottery,null)}. Then, if the null value means that
no grade was given to John, the database is not incomplete for a query
for all pottery grades of John. If the null means that the grade is
unknown then the query is incomplete, while if it is unknown which
of the two applies, the query may or may not be incomplete.
\end{example}
Classic work on null values by Codd introduced them for missing values
\cite{codd_null}. In recent work, Franconi and Tessaris~\cite{Franconi:Et:Al-SQL:Nulls-AMW}
have shown that the SQL way to evaluate queries over instances with
nulls captures exactly the semantics of attributes that are not applicable.

In this chapter, we show that it is important to disambiguate the
meaning of null values, and will present a practical way to do so.

\label{sec:extensions:null-values}

\section{Framework for Databases with Null Values}

\label{nulls:sub:framework}

In the following, we adapt the notion of incomplete database to allow
null values, and table completeness statements to allow to specify
the completeness only of projections of tables.

A problem with nulls as used in standard SQL databases is their ambiguity,
as those nulls may mean both that an attribute value exists but is
unknown, or that no value applies to that attribute. The established
models of null values, such as Codd, v-, and c-tables~\cite{imielinski_lipski_representation_systems},
avoid this ambiguity by concentrating on the aspect of unknown values.
In this work, we consider the ambiguous standard SQL null values~\cite{codd_null},
because those are the ones used in practice. Null values mainly have
two meanings: 
\begin{itemize}
\item an attribute value exists, but is \emph{unknown}; 
\item an attribute value does not exist, the attribute is \emph{not applicable}. 
\end{itemize}
In database theory, unknown values are represented by so-called \emph{Codd
nulls,} which are essentially existentially quantified first-order
variables. A relation instance with Codd nulls, called a \emph{Codd
table}, represents the set of all regular instances that can be obtained
by instantiating those variables with non-null values~\cite{foundations_of_dbs}. 

For a conjunctive query $Q$ over an instance with Codd nulls, say
$D_{\codd}$, one usually considers \emph{certain answer} semantics~\cite{foundations_of_dbs}:
the result set $Q_{\cert}(D_{\codd})$ consists of those tuples that
are in $Q(D')$ for every instantiation $D'$ of $D_{\codd}$. The
set $Q_{\cert}(D_{\codd})$ can be computed by evaluating $Q$ over
$D_{\codd}$ while treating each occurrence of a null like a different
constant and then dropping tuples with nulls from the result. Formally,
using the notation 
\begin{equation}
\nonulls{Q(D)}:=\set{\tpl d\in Q(D)\mid\tpl d\mbox{\ does not contain nulls}},\label{eq-nonulls:operator}
\end{equation}
 this means 
$Q_{\cert}(D_{\codd})=\nonulls{Q(D_{\codd})}$.

The null values supported by SQL (\quotes{SQL nulls} in short) have
a different semantics than Codd nulls. Evaluation of first order queries
follows a three-valued semantics with the additional truth value \textit{unknown}.
For a conjunctive query $Q$, we say that $y$ is a \emph{join variable}
if $y$ occurs at least twice in the body of $Q$ and a \emph{singleton
variable} otherwise. If $D_{\sql}$ contains facts with null values,
then under SQL's semantics the result of evaluating $Q(\tpl x)$ over
$D_{\sql}$ is 
\begin{equation}
Q_{\sql}(D_{\sql})=\set{v\tpl x\mid v\mbox{\ maps no join variable to nulls}}.\label{eq-sql:semantics}
\end{equation}
 To see this, note that a twofold occurrence of a variable $y$ is
expressed in an SQL query by an equality between two attributes, which
evaluates to \textit{unknown} if a null is involved.

Franconi and Tessaris~\cite{Franconi:Et:Al-SQL:Nulls-AMW} have shown
that the SQL way to evaluate queries over instances with nulls captures
exactly the semantics of attributes that are not applicable. To make
this more precise, suppose that $R$ is an $n$-ary relation with
attribute set $X:=\set{\dd An}$. If each attribute in an $R$-tuple
can be null, then $R$ can be seen as representing for each $Y\incl X$
a relation $R_{Y}$ with attribute set $Y$. In this perspective,
an instance of $R$ with tuples containing nulls represents a collection
of $2^{n}$ instances of the relations $R_{Y}$, where a tuple $\tpl d$
belongs to the instance $R_{Y}$ iff the entries in $\tpl d$ for
the attributes in $Y$ are not null. In other words, null values are
padding the positions that do not correspond to attributes of $R_{Y}$.
\begin{example}
\label{ex:difference:of:certain:answer:semantics} Consider the query
$Q$ that asks for all classes whose form teacher is also form teacher
of a class with arts as profile, which we write as 
\[
\query{Q(c_{1})}{\class(s_{1},c_{1},t,p),\class(s_{2},c_{2},t,\mathrm{'arts'})}
\]
 and consider the instance $D=\set{\class(\schoolone,1a,\NULL,\mathrm{'arts'})}$.
If we interpret $\NULL$ as Codd-null, then $(1a)\in Q_{\codd}(D)$.
If we evaluate $Q$ under the standard SQL semantics, we have that
$(1a)\notin Q_{\sql}(D)$.

Suppose we know that class 1a has a form teacher. Then whoever the
teacher of that class really is, the class has a teacher who teaches
a class with arts as profile and the interpretation of the null value
as Codd-null is correct. If the null however means that the class
has no form teacher, the SQL interpretation is correct.
\end{example}
Note that certain answer semantics and SQL semantics are not comparable
in that the former admits more joins, while the latter allows for
nulls in the query result. Later on we will show how for complete
queries we can compute certain answers from SQL answers by simply
dropping tuples with nulls.

We will say that a tuple with nulls representing an unknown but existing
value is an \emph{incomplete tuple}, since this nulls indicate the
absence of existing values. We say that a tuple where nulls represent
that no value exists is a \emph{restricted tuple}, because only the
not-null values in the tuple are related to each other. When modeling
databases with null values, we will initially not syntactically distinguish
between different kinds of null values and assume that some atoms
in an instance contain the symbol $\NULL$.

\subsection{Incomplete Databases with Nulls}

In Section \ref{sec:prelim:incomplete-databases}, incomplete databases
were modeled as pairs $(\di,\da)$, where $\da$ is contained in $\di$.
When allowing null values in databases, we have to modify this definition.

To formalize that the available database contains less information
than the ideal one we use the concept of fact dominance (not be mixed
with query dominance):
\begin{defn}
Let $R(\tpl s)$ and $R(\tpl d)$ be atoms that possibly contain nulls.
Then the fact $R(\tpl s)$ is \emph{dominated} by the fact $R(\tpl d)$,
written $R(\tpl s)\domby R(\tpl d)$, if $R(\tpl s)$ is the same
as $R(\tpl d)$, except that $R(\tpl s)$ may have nulls where $R(\tpl d)$
does not. An instance $D$ is \emph{dominated\/} by an instance $D'$,
written $D\domby D'$, if each fact in $D$ is dominated by some fact
in $D'$. \end{defn}
\begin{example}
Consider the two facts $\student(\John,\Null,\schoolone)$ and $\student(\John,3a,\schoolone).$
Then the former is dominated by the latter, because the null value
of the first fact is replaced by the constant 'A'.
\end{example}
By monotonicity of conjunctive queries we can immediately state the
following observation:
\begin{prop}
[Monotonicity] \label{prop-monotonicity} Let $Q$ be a conjunctive
query and $D$, $D'$ be database instances with nulls. Suppose that
$D$ is dominated by $D'$. Then $Q_{\cert}(D)\incl Q_{\cert}(D')$
and $Q_{\sql}(D)\domby Q_{\sql}(D')$. \end{prop}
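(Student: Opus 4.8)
The plan is to prove the two assertions separately, in each case reducing the claim to the plain monotonicity of conjunctive queries under the subset ordering $\incl$ of instances. The only structural fact about the dominance relation $\domby$ that I will use repeatedly is that \emph{non-null entries are preserved upward}: if $R(\tpl s)\domby R(\tpl d)$ then at every position where $\tpl s$ is non-null we have $\tpl s$ and $\tpl d$ equal there; consequently $\tpl d$ is non-null wherever $\tpl s$ is, and $\NULL\domby c$ for every value $c$. This will let me transport a valuation / instantiation that works for $D$ up to one that works for $D'$.

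For the first claim $Q_{\cert}(D)\incl Q_{\cert}(D')$ I would use the definition of certain answers as $Q_{\cert}(D)=\bigcap\set{Q(D_*)\mid D_* \text{ an instantiation of } D}$, where an instantiation of a Codd table replaces every (distinct) null occurrence by a non-null constant. The key lemma is: if $D\domby D'$, then every instantiation $D'_*$ of $D'$ contains, as a subset of facts, an instantiation $D_*$ of $D$. To build it, for each fact $f\in D$ fix some $g\in D'$ with $f\domby g$, and let $g_*$ be the image of $g$ in $D'_*$; then $g_*$ is null-free, and by upward preservation of non-nulls it agrees with $f$ on every non-null position of $f$, so $g_*$ is exactly $f$ with its nulls instantiated. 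Taking $D_*$ to be the collection of these $g_*$ over all $f\in D$ gives an instantiation of $D$ with $D_*\incl D'_*$. Monotonicity of conjunctive queries then yields $Q_{\cert}(D)\incl Q(D_*)\incl Q(D'_*)$, and since $D'_*$ was arbitrary, $Q_{\cert}(D)\incl\bigcap_{D'_*}Q(D'_*)=Q_{\cert}(D')$. (One could equally argue straight from the characterization $Q_{\cert}(D)=\nonulls{Q(D)}$.)

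For the second claim $Q_{\sql}(D)\domby Q_{\sql}(D')$ I would take $\tpl t\in Q_{\sql}(D)$ witnessed, per $(\ref{eq-sql:semantics})$, by a valuation $v$ of the body of $Q$ with $v(\text{body})\incl D$ that maps no join variable to a null, and construct a valuation $v'$ witnessing a dominating answer of $Q_{\sql}(D')$. For each body atom $A_i$ pick $g_i\in D'$ with $vA_i\domby g_i$. Define $v'$ on variables as follows: on a join variable $y$ (which $v$ sends to a non-null value), set $v'(y):=v(y)$ -- this is consistent across the several atoms containing $y$, because each of their witness facts carries that same non-null value at $y$'s positions; on a singleton variable $y$, which occurs at a unique position $j$ of a unique atom $A_i$, set $v'(y):=v(y)$ if $v(y)$ is non-null and $v'(y):=(g_i)_j$ otherwise. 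Then $v'$ is well defined, $v'A_i=g_i\in D'$ for every $i$, and $v'$ maps no join variable to a null, so $v'\tpl x\in Q_{\sql}(D')$; and comparing $v\tpl x$ with $v'\tpl x$ position by position, the two agree on head entries that are constants or join variables, while on the remaining head entries $v$ can only carry a null where $v'$ has been set to whatever the witness fact contains, so $v\tpl x\domby v'\tpl x$. (By the Remark on notation we may assume the head has only variables, which removes even the constant case.)

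The repeated invocations of "non-nulls are preserved upward under $\domby$" and "conjunctive queries are monotone under $\incl$" are routine. The one delicate point -- and what I expect to be the \emph{main obstacle} -- is the construction of $v'$ in the SQL case: one must check it is consistent, i.e.\ that a variable shared by several atoms receives a single well-defined image. This is exactly why the split into join variables (whose image is pinned down by $v$ to a non-null value shared by all the relevant witness facts, hence automatically consistent) and singleton variables (which occur once and may freely follow their unique witness) is the right one; and one must bear in mind that $v'$, unlike $v$, may legitimately send a singleton distinguished variable to a null inherited from $D'$, which is permitted by the SQL semantics and is precisely the reason the answer inclusion is only up to $\domby$ rather than equality.
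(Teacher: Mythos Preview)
Your proof is correct. The paper does not actually give a proof of this proposition; it merely introduces it with the sentence ``By monotonicity of conjunctive queries we can immediately state the following observation'' and leaves the details implicit. Your argument unpacks exactly those details---reducing both claims to ordinary monotonicity under $\incl$ by (i) showing every instantiation of $D'$ contains an instantiation of $D$, and (ii) lifting a witnessing valuation from $D$ to $D'$ via the join/singleton split---and the delicate point you flag (consistency of $v'$ on shared variables) is handled correctly.
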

\begin{defn}
An \emph{incomplete database} is a pair of database instances $(\di,\da)$
such that $\da$ is dominated by $\di$. Based on the previous discussion
of the possible semantics of null values, we distinguish two special
cases of incomplete databases: 
\begin{enumerate}
\item We say that $\pdb$ is an incomplete database \emph{with restricted
facts} if $\da\incl\di$. Note that in this case the ideal state may
contain nulls and that every fact in the available state must appear
in the same form in the ideal state. Thus, a null in the position
of an attribute means that the attribute is not applicable and nulls
are interpreted the way SQL does. 
\item The pair $(\di,\da)$ is an incomplete database \emph{with incomplete
facts} if $\di$ does not contain any nulls and $\da$ is dominated
by $\di$. In this case, there are no nulls in the ideal state, which
means that all attributes are applicable, while the nulls in the available
state indicate that attribute values are unknown. Therefore, those
nulls have the same semantics as Codd nulls.
\end{enumerate}
\end{defn}
\begin{table}
\begin{centering}
{\small{}}%
\begin{tabular}{|cccc|c|ccc|}
\multicolumn{8}{c}{{\small{}$\id D$}}\tabularnewline
\cline{1-4} \cline{6-8} 
\multicolumn{4}{|c|}{{\small{}class }} &  & \multicolumn{3}{c|}{{\small{}student}}\tabularnewline
{\small{}school} & {\small{}code} & {\small{}formTeacher} & {\small{}profile} &  & {\small{}name} & {\small{}class} & {\small{}school}\tabularnewline
\cline{1-4} \cline{6-8} 
{\small{}$\schoolone$} & \emph{\small{}1a} & \emph{\small{}Smith} & \emph{\small{}arts} &  & \emph{\small{}John} & \emph{\small{}1a} & {\small{}$\schoolone$}\tabularnewline
{\small{}$\schoolone$} & \emph{\small{}2b} & \emph{\small{}Rossi} & {\small{}$\bot$} &  & \emph{\small{}Mary} & {\small{}$\bot$} & {\small{}$\schoolone$}\tabularnewline
\cline{1-4} 
\multicolumn{1}{c}{} &  &  & \multicolumn{1}{c}{} &  & \textit{\small{}Paul} & \textit{\small{}2b} & \textit{\small{}$\schooltwo$}\tabularnewline
\cline{6-8} 
\multicolumn{5}{c}{} & \multicolumn{3}{c}{}\tabularnewline
\multicolumn{8}{c}{{\small{}$\av D$}}\tabularnewline
\cline{1-4} \cline{6-8} 
\multicolumn{4}{|c|}{{\small{}class }} &  & \multicolumn{3}{c|}{{\small{}student}}\tabularnewline
{\small{}school} & {\small{}code} & {\small{}formTeacher} & {\small{}profile} &  & {\small{}name} & {\small{}class} & {\small{}school}\tabularnewline
\cline{1-4} \cline{6-8} 
{\small{}$\schoolone$} & \emph{\small{}1a} & \emph{\small{}Smith} & \emph{\small{}arts} &  & \emph{\small{}John} & \emph{\small{}1a} & {\small{}$\schoolone$}\tabularnewline
{\small{}$\schoolone$} & \emph{\small{}2b} & \emph{\small{}Rossi} & {\small{}$\bot$} &  & \emph{\small{}Mary} & {\small{}$\bot$} & {\small{}$\schoolone$}\tabularnewline
\cline{1-4} \cline{6-8} 
\end{tabular}
\par\end{centering}{\small \par}

\caption{Incomplete database with restricted facts}

\label{fig:pdb:example:resf} 
\end{table}

\begin{example}
\label{ex:IDB} Recall the school database from our running example,
defined in Section~\ref{sec:intro-motivating-example}. In Table~\ref{fig:pdb:example:resf}
we see an incomplete database with restricted facts for this scenario.
The null values appearing in the available database mean that no value
exists for the corresponding attributes. The $\class$ table shows
that no profile has been assigned to class 2b and that Mary is an
external student not belonging to any class.

In contrast, Table~\ref{fig:pdb:example:incf} shows an incomplete
database with incomplete facts. Here, null values in the available
database mean that a value exists but is unknown. So, class 1a has
a form teacher, but we do not know who. Class 2b has a profile, but
we do not now which. John is in some class, but we do not know which
one.

Observe that in both kinds of incomplete databases, some facts, such
as the one about Paul being a student, can be missing completely.

\begin{table}[t]
\begin{centering}
{\small{}}%
\begin{tabular}{|cccc|c|ccc|}
\multicolumn{8}{c}{{\small{}$\id D$}}\tabularnewline
\cline{1-4} \cline{6-8} 
\multicolumn{4}{|c|}{{\small{}class }} &  & \multicolumn{3}{c|}{{\small{}student}}\tabularnewline
{\small{}school} & {\small{}code} & {\small{}formTeacher} & {\small{}profile} &  & {\small{}name} & {\small{}class} & {\small{}school}\tabularnewline
\cline{1-4} \cline{6-8} 
{\small{}$\schoolone$} & \emph{\small{}1a} & \emph{\small{}Smith} & \emph{\small{}arts} &  & \emph{\small{}John} & \emph{\small{}1a} & {\small{}$\schoolone$}\tabularnewline
{\small{}$\schoolone$} & \emph{\small{}2b} & \emph{\small{}Rossi} & \emph{\small{}science} &  & \emph{\small{}Mary} & \emph{\small{}2b} & {\small{}$\schoolone$}\tabularnewline
\cline{1-4} 
\multicolumn{1}{c}{} &  &  & \multicolumn{1}{c}{} &  & \textit{\small{}Paul} & \textit{\small{}2b} & \textit{\small{}$\schooltwo$}\tabularnewline
\cline{6-8} 
\multicolumn{5}{c}{} & \multicolumn{3}{c}{}\tabularnewline
\multicolumn{8}{c}{{\small{}$\av D$}}\tabularnewline
\cline{1-4} \cline{6-8} 
\multicolumn{4}{|c|}{{\small{}class }} &  & \multicolumn{3}{c|}{{\small{}student}}\tabularnewline
{\small{}school} & {\small{}code} & {\small{}formTeacher} & {\small{}profile} &  & {\small{}name} & {\small{}class} & {\small{}school}\tabularnewline
\cline{1-4} \cline{6-8} 
{\small{}$\schoolone$} & \emph{\small{}1a} & \emph{\small{}Smith} & \emph{\small{}arts} &  & \emph{\small{}John} & {\small{}$\bot$} & {\small{}$\schoolone$}\tabularnewline
{\small{}$\schoolone$} & \emph{\small{}2b} & \emph{\small{}Rossi} & {\small{}$\bot$} &  & \emph{\small{}Mary} & \emph{2b} & {\small{}$\schoolone$}\tabularnewline
\cline{1-4} \cline{6-8} 
\end{tabular}
\par\end{centering}{\small \par}

\caption{Incomplete database with incomplete facts}

\label{fig:pdb:example:incf} 
\end{table}

\end{example}
In practice, null values of both meanings will occur at the same time,
which may lead to difficulties if they cannot be distinguished.

\subsection{Query Completeness}

The result of query evaluation over databases with null values may
vary depending on whether the nulls are interpreted as Codd or as
SQL nulls.

Consider databases with incomplete facts. Then nulls are interpreted
as Codd nulls and queries are evaluated under certain answer semantics.
While $\da$ may contain nulls, $\di$ does not and $Q_{\cert}(\di)=Q(\di)$. 
\begin{defn}
Let $Q$ be a query and $\D$ be an IDB with incomplete facts. Then
for $\ast\in\set{s,b}$ 
\begin{equation}
\pdb\models_{\IF}\textit{Compl}^{\ast}(Q)\quad\mbox{iff}\quad Q^{\ast}(\di)=Q_{\cert}^{\ast}(\da)\label{eq-query:completeness:for:pdbs:with:incomplete:facts}
\end{equation}

\end{defn}
That is, the tuples returned by $Q$ over $\di$ are also returned
over $\da$ if nulls are treated according to certain answer semantics.
Conversely, that every null-free tuple returned over $\da$ is also
returned over $\di$ follows by monotonicity from the fact that $\da\domby\di$
(Proposition~\ref{prop-monotonicity}).

Consider databases with partial facts. Then nulls are interpreted
as SQL nulls and queries are evaluated under SQL semantics. 
\begin{defn}
Let $Q$ be a query and $\D$ be an incomplete database with partial
facts. Then for $\ast\in\set{s,b}$
\end{defn}
\begin{equation}
\pdb\models_{\RF}\textit{Compl}^{\ast}(Q)\quad\mbox{iff}\quad Q_{\sql}^{\ast}(\di)=Q_{\sql}^{\ast}(\da).\label{eq-query:completeness:for:pdbs:with:restricted:facts}
\end{equation}
Again, the crucial part is that tuples returned by $Q$ over $\di$
are also returned over $\da$ if nulls are treated according to SQL
semantics, while the converse inclusion holds due to monotonicity.
\begin{example}
\label{ex:query:completeness:arts:students} The query $\query{Q_{\mathrm{art\_students}}(n)}{\student(n,c,s),}$
\linebreak{}
$\class(s,c,f,\mathrm{'arts'})$ asks for the names of students in
classes with arts as profile. Over $\di$ in Table~\ref{fig:pdb:example:incf}
it returns the singleton set $\{(\mathrm{John})\}$ and over $\da$
as well. Therefore, $Q_{\mathrm{art\_students}}$ is complete over
that partial database.

In contrast, $\query{Q_{\mathrm{schools}}(s)}{\student(n,c,s)}$ is
not complete over this database, because it returns $\{(\mathrm{\schoolone}),(\mathrm{\schooltwo})\}$
over the ideal database but only $\{(\schoolone),(\bot)\}$ over the
available one.
\end{example}

\subsection{Table Completeness Statements with Projection}

Null values may lead to tuples becoming incomplete at certain positions.
Therefore, it can now happen that tables are complete for some columns
while for others they are not complete. To be able to describe such
completeness, we extend table completeness statements to talk about
the completeness of projections of tables.
\begin{defn}
[TC Statements] A \emph{table completeness} statement, written $\tc{R(\tpl s)}PG$,
consists of three components: (i)~a relational atom $R(\tpl s)$,
(ii)~a set of numbers $P\incl\set{1,\ldots,\arity R}$, and (iii)~a
condition $G$. The numbers in $P$ are interpreted as attribute positions
of~$R$. 
\end{defn}
For instance, if $R$ is the relation $\student$, then $\set{1,\,3}$
refers to the attributes $\name$ and $\school$.
\begin{defn}
[Satisfaction of TC Statements] Let $C=\tc{R(\tpl s)}PG$ be a TC
statement and $\pdb=(\di,\da)$ an incomplete database. An atom $R(\tpl u)\in\di$
is \emph{constrained by $C$} if there is a valuation $v$ such that
$\tpl u=v\tpl s$, $R(v\tpl s)\in\di$, and $vG\incl\di$. An atom
$R(\tpl u')\in\da$ is an \emph{indicator for $R(\tpl u)$ wrt $C$}
if $\tpl u[P]=\tpl{u'}[P]$, where $\tpl u[P]$ is the projection
of $\tpl u$ onto the positions in $P$. We say that \emph{$C$ is
satisfied by $\pdb$} if for every atom $R(\tpl u)\in\di$ that is
constrained by $C$ there is an indicator $R(\tpl u')\in\da$.\end{defn}
\begin{example}
\label{ex:tc:statement:students:from:art:classes} In our school scenario,
the TC statement 
\begin{equation}
\tc{\student(n,c,s)}{\set{1,3}}{\class(s,c,f,\arts)}\label{eqn-tc:students:from:arts:classes}
\end{equation}
 states, intuitively, that the available database contains for all
students of classes with arts as profile the name and the class. However,
the student's hometown need not be present. Over the ideal database
in Example~\ref{ex:IDB}, the fact $\student(\mathrm{John},\mathrm{1a},\schoolone)$
is constrained by the statement~(\ref{eqn-tc:students:from:arts:classes}).
Any fact $\student(\mathrm{John},\NULL,\schoolone)$, $\student(\mathrm{John},\mathrm{1a},\schoolone)$
or $\student(\mathrm{John},\mathrm{1a},\schooltwo)$ in $\da$ would
be an indicator. In the database in Table~\ref{fig:pdb:example:incf}
the first fact is present, and therefore Statement~(\ref{eqn-tc:students:from:arts:classes})
is satisfied over it.
\end{example}
As seen in \ref{sec:prelim:table-completeness}, the semantics of
TC statements can also be expressed using tuple-generating dependencies.
The TGDs are more complex now, as they contain an existentially quantified
variable in place of each attribute that is projected out: 

For instance, Statement~(\ref{eqn-tc:students:from:arts:classes})
would have the following TGD associated: 
\[
\tcruleEx{\id{\class}(s,c,f,\arts),\,\id{\student}(n,c,s)}{c'}{\av{\student}(n,c',s)}.
\]
 To simplify our notation, we assume that the projection positions
$P$ are the first $k$ positions of $R$ and that $\tpl s$ has the
form $(\tpl s',\tpl s'')$, where $\tpl s'$ has length $k$ and $\tpl s''$
has length $\arity R-k$. Then, for a completeness statement $C=\tc{R(\tpl s)}PG$
its corresponding TGD $\Rule C$ is 
\[
\tcruleEx{\id G,\id R(\tpl s',\tpl s'')}{\tpl z}{\av R(\tpl s',\tpl z)},
\]
 where $\tpl z$ is a tuple of distinct fresh variables that has the
same length as $\tpl s''$. Again, for every TC statement $C$, an
incomplete database satisfies $C$ in the sense defined above if and
only if it satisfies the rule $\Rule C$ in the classical sense of
rule satisfaction.

Note that our definition of when a TC statement is satisfied takes
into account null values. Regarding nulls in $\da$, we treat nulls
like non-null values and consider their presence sufficient to satisfy
an existential quantification in the head of a TC rule.

Nulls in $\di$, however, have to be taken into account when evaluating
the body of a rule. Since nulls in the ideal database always represent
the absence of a value, we always interpret the rules that we associated
with TC statements under SQL semantics.

\section{Reasoning for Specific Nulls}

\label{nulls:sub:specific:reasoning}

In this section we discuss reasoning for databases where the meaning
of nulls is unambiguous. In~\ref{subsection:reasoning:incomplete:facts},
we assume that nulls always mean that a value is missing but exists,
while in~\ref{subsec:reasoning:restricted:facts}, we assume that
nulls mean that a value is inapplicable. For both cases we give decidable
characterizations of TC-QC entailment. Moreover, we show that evaluation
under certain answer and under SQL semantics lead to the same results
for minimal complete queries.

\subsection{Incomplete Facts}

\label{subsection:reasoning:incomplete:facts} 

We suppose we are given a set of TC statements $\C$ and a conjunctive
query~$Q$, which is to be evaluated under set semantics. We say
that $\C$ entails $\compls Q$ over IDBs with \emph{incomplete facts},
written 
\begin{equation}
\C\models_{\IF}\compls Q,\label{eqn-TC:QC:Inc:Facts:sets}
\end{equation}
 iff for every such IDB $\pdb$ we have that 
\[
\pdb\models\C\quad\mbox{implies}\quad\pdb\models_{\IF}\compls Q.
\]

To decide the entailment of query completeness by table completeness,
we extend the $\tcop$ operator from Definition \ref{def:tc-operator},
which for every TC statement $\C$ maps an instance $D$ to the least
informative instance $T_{\C}(D)$ such that $(D,T_{\C}(D))\models\C$.
Let $C=\tc{\, R(\tpl s',\tpl s'')}P{G\,}$ be a TC statement, where
without loss of generality, $\tpl s'$ consists of the terms in the
positions $P$. We define the query $Q_{C}$ by the rule 
\begin{equation}
\query{Q_{C}(\tpl s',\tpl{\NULL})}{R(\tpl s',\tpl s''),\, G}.\label{eqn-aux:query:for:T:C}
\end{equation}
 This means, given an instance $D$, the query $Q_{C}$ returns for
every~$\alpha$ satisfying the condition ${R(\tpl s',\tpl s''),G}$,
a tuple $(\alpha\tpl s',\tpl{\NULL})$ that consists of the projected
part $\alpha\tpl s'$ and is padded with nulls $(\NULL)$ for the
positions projected out. We then define 
\begin{equation}
T_{C}(D):=\set{R(\tpl d)\mid\tpl d\in Q_{C}(D)}\label{eq:T:C:Operator}
\end{equation}
 and $T_{\C}(D):=\bigcup_{C\in\C}T_{C}(D)$.

Intuitively, for a database instance $\di$ and a TC statement $C$,
the function $T_{C}$ calculates the minimal information that any
available database $\da$ must contain in order that $(\di,\da)$
together satisfy $C$. Observe that every atom in $T_{C}(D)$ is an
indicator for some $R(\tpl u)$ in $D$ wrt $C$. This is the case
because every fact in $T_{C}(D)$ is created as an indicator for some
fact in $D$ constrained by $C$. Observe also that in general, $T_{\C}(D)$
may contain more facts than $D$, because several TC statments may
constrain the same atom and therefore several indicators are produced.
\begin{example}
Consider the TC statement $C$ defined in Example~\ref{ex:tc:statement:students:from:art:classes}
as $\tc{\student(n,c,s)}{\set{1,3}}{\class(s,c,f,\arts)}$. The corresponding
query is $\query{Q_{C}(n,\NULL,s)}{\student(n,c,s),\class(s,c,f,\arts)}$.
For the partial database in Table~\ref{fig:pdb:example:incf}, $Q_{C}(\di)$
is $\set{(John,\NULL,\schoolone)}$ and hence $T_{C}(\di)=\set{\student(John,\NULL,\schoolone)}$,
which is the minimal information that any available database must
contain to satisfy together with $\di$ the TC statement $C$. 
\end{example}
Similarly to the properties of $T_{C}$ over databases without nulls,
as stated in Lemma \ref{lemma_reasoning_1}, the following properties
now hold for the function $T_{\C}$:
\begin{prop}
\label{prop-properties:of:TC:transformation-incF} Let $D$ be a database
instance without nulls and let $\D_{0}$ be the incomplete database
$(D,T_{\C}(D))$. Then 
\begin{enumerate}
\item $T_{\C}(D)$ is dominated by $D$, 
\item $\D_{0}$ is an IDB with incomplete facts, and 
\item $\D_{0}\models\C$. 
\end{enumerate}
Moreover, if $D'$ is another instance such that $(D,D')$ is an IDB
with incomplete facts that satisfies $\C$, then $D'$ dominates $T_{\C}(D)$. 
\end{prop}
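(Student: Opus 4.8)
This proposition is essentially the analogue of Lemma~\ref{lemma_reasoning_1} for databases with null values under the incomplete-facts interpretation, so the plan is to mirror that proof while carefully tracking how nulls interact with domination. I will prove the three numbered claims and then the minimality (``moreover'') part in turn.

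For claim~(1), I would argue directly from the definition of $T_{\C}$. Every fact in $T_{\C}(D) = \bigcup_{C\in\C}T_C(D)$ arises from some $C = \tc{R(\tpl s',\tpl s'')}{P}{G}$ as $R(\tpl d)$ with $\tpl d \in Q_C(D)$, where $Q_C$ is the rule~(\ref{eqn-aux:query:for:T:C}) whose head is $R(\tpl s',\tpl{\NULL})$. So there is a valuation $\alpha$ with $\alpha R(\tpl s',\tpl s'') \in D$, $\alpha G \incl D$, and the produced fact is $R(\alpha\tpl s',\tpl{\NULL})$. Since $D$ has no nulls, the fact $R(\alpha\tpl s',\alpha\tpl s'')$ is a null-free fact of $D$ that dominates $R(\alpha\tpl s',\tpl{\NULL})$ (the two agree except that the latter has nulls in the projected-out positions where the former has constants). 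Hence every fact of $T_{\C}(D)$ is dominated by a fact of $D$, i.e.\ $T_{\C}(D) \domby D$. Claim~(2) is then immediate: $\D_0 = (D,T_{\C}(D))$ has a null-free ideal part $D$ and an available part dominated by $D$, which is exactly the definition of an incomplete database with incomplete facts. For claim~(3), I would unwind the satisfaction definition: take any $R(\tpl u)\in D$ constrained by some $C = \tc{R(\tpl s')}{P}{G}{}$, witnessed by a valuation $v$ with $\tpl u = v\tpl s$, $R(v\tpl s)\in D$, $vG\incl D$. Then $v$ is a satisfying valuation for the body of $Q_C$, so $Q_C(D)$ contains $(v\tpl s',\tpl{\NULL})$, hence $T_C(D)$ (and thus $T_{\C}(D)$) contains the atom $R(v\tpl s',\tpl{\NULL})$. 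This atom is an indicator for $R(\tpl u)$ wrt $C$, since it agrees with $\tpl u$ on the positions in $P$ (both equal $v\tpl s'$). By the definition of satisfaction of TC statements, $\D_0 \models C$; as $C$ was arbitrary, $\D_0 \models \C$.

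For the minimality part, suppose $(D,D')$ is an incomplete database with incomplete facts satisfying $\C$; I must show $D' \dominates T_{\C}(D)$, i.e.\ every fact of $T_{\C}(D)$ is dominated by some fact of $D'$. Take a fact $R(\alpha\tpl s',\tpl{\NULL}) \in T_{\C}(D)$ produced as above from $C = \tc{R(\tpl s',\tpl s'')}{P}{G}$ via valuation $\alpha$. Then $R(\alpha\tpl s',\alpha\tpl s'') \in D$ and $\alpha G \incl D$, so the null-free fact $R(\alpha\tpl s',\alpha\tpl s'')$ is constrained by $C$ in $D$. Since $(D,D') \models C$, there is an indicator $R(\tpl u') \in D'$ with $\tpl u'[P] = (\alpha\tpl s')$; that is, $R(\tpl u')$ agrees with $(\alpha\tpl s',\tpl{\NULL})$ on the positions in $P$ and has arbitrary (possibly null, possibly constant) entries elsewhere, which means $R(\tpl u')$ dominates $R(\alpha\tpl s',\tpl{\NULL})$ (the latter has nulls exactly in the non-$P$ positions). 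Hence every fact of $T_{\C}(D)$ is dominated by a fact of $D'$, so $D' \dominates T_{\C}(D)$, proving minimality (least informativeness) of $T_{\C}(D)$.

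The main obstacle is keeping the bookkeeping about null positions straight, especially in the minimality argument: the indicator $R(\tpl u')$ in $D'$ need not be null-free and need not literally equal the padded fact, so the argument must go through \emph{domination} rather than equality, and I need to check that "agrees on $P$, arbitrary off $P$'' really does give domination of the $\NULL$-padded fact (it does, precisely because the padded fact carries nulls in all and only the off-$P$ positions). A minor point worth a sentence is that several statements in $\C$ may constrain the same atom, so $T_{\C}(D)$ can be strictly larger than $D$ as a set of facts while still being dominated by $D$ --- I would remark on this to reassure the reader that claim~(1) is about domination, not inclusion. No delicate combinatorics beyond this is needed; the structure is a faithful adaptation of Lemma~\ref{lemma_reasoning_1}.
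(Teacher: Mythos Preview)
Your proposal is correct and matches the paper's intended approach: the paper states this proposition without proof, merely remarking that it is the analogue of Lemma~\ref{lemma_reasoning_1} in the presence of nulls, which is exactly the route you take. Your careful tracking of domination (rather than inclusion) in the minimality argument is the one genuinely new ingredient over Lemma~\ref{lemma_reasoning_1}, and you handle it correctly.
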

The following characterization of TC-QC-entailment over IDBs with
incomplete facts says that completeness of $Q$ wrt.\ $\C$ can be
checked by evaluating $Q$ over $T_{\C}(L)$.
\begin{thm}
\label{thm-characterisation:TC:QC:inc} \label{th:characterisation:TC-QC:incf}
Let $\query{Q(\tpl x)}L$ be a conjunctive query and $\C$ be a set
of table completeness statements. Then 
\[
\C\ensuremath{\models_{\IF}\compls Q}\quad\mbox{iff}\quad\tpl x\in Q_{\cert}(T_{\C}(L)).
\]
\end{thm}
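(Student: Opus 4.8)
The plan is to prove the characterization in the same spirit as the proof of Theorem~\ref{thm:reduction_lc-qc_to_containment}, but adapted to the setting with incomplete facts, certain-answer semantics, and the $\tcop$ operator with null-padding. The key technical tool is Proposition~\ref{prop-properties:of:TC:transformation-incF}, which plays the role that Lemma~\ref{lemma_reasoning_1} played in the null-free case. I would prove each direction by contraposition, constructing a witnessing incomplete database from a test instance and vice versa.

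\textbf{Direction ``$\Leftarrow$''.} Suppose $\tpl x\in Q_{\cert}(T_{\C}(L))$, where $L$ is the body of $Q$ read as a database instance (the frozen body). Let $\pdb=(\di,\da)$ be any IDB with incomplete facts such that $\pdb\models\C$; I must show $\pdb\models_{\IF}\compls Q$, i.e. $Q(\di)\incl Q_{\cert}(\da)$. Take any $\tpl d\in Q(\di)$. Then there is a valuation $v$ with $vL\incl\di$ and $v\tpl x=\tpl d$; moreover, since $\di$ contains no nulls (it is an IDB with incomplete facts), $\tpl d$ contains no nulls. Now $v$ is a homomorphism from $L$ into $\di$, hence induces a homomorphism from $T_{\C}(L)$ into $T_{\C}(\di)$ — this is the monotonicity/naturality property of the $\tcop$ operator, which one reads off from its definition via $Q_C$. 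Composing with the hypothesis $\tpl x\in Q_{\cert}(T_{\C}(L))$ (a satisfying valuation $\theta$ for $Q$ over $T_{\C}(L)$ whose image $\theta\tpl x$ is null-free, namely $\theta\tpl x=\tpl x$ after freezing), one obtains a satisfying valuation for $Q$ over $T_{\C}(\di)$ yielding $\tpl d$. Finally, since $\pdb\models\C$, Proposition~\ref{prop-properties:of:TC:transformation-incF} (its ``moreover'' part, applied with $D=\di$, $D'=\da$) gives $\da\dominates T_{\C}(\di)$, and by monotonicity of $Q_{\cert}$ (Proposition~\ref{prop-monotonicity}) we get $\tpl d\in Q_{\cert}(T_{\C}(\di))\incl Q_{\cert}(\da)$. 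Hence $Q$ is complete over $\pdb$.

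\textbf{Direction ``$\Rightarrow$''.} Suppose $\C\models_{\IF}\compls Q$. Let $L$ be the frozen body of $Q$, so that $L$ is a null-free database instance and $\tpl x$ (frozen) is in $Q(L)$ — in fact in $Q_{\cert}(L)$ since $L$ is null-free. Form the incomplete database $\pdb_0=(L,T_{\C}(L))$. By Proposition~\ref{prop-properties:of:TC:transformation-incF}, $\pdb_0$ is an IDB with incomplete facts and $\pdb_0\models\C$. By the assumed entailment, $\pdb_0\models_{\IF}\compls Q$, that is $Q(L)=Q_{\cert}(T_{\C}(L))$ (the inclusion of $Q(L)$ in $Q_{\cert}(T_{\C}(L))$ is what we need; the reverse holds by monotonicity since $T_{\C}(L)\dominates L$ need not hold — rather $T_{\C}(L)$ is dominated by $L$ — so actually one only uses $Q(L)\incl Q_{\cert}(T_{\C}(L))$). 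Since $\tpl x\in Q(L)$, we conclude $\tpl x\in Q_{\cert}(T_{\C}(L))$, as required.

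\textbf{Main obstacle.} The delicate point is the interaction between the null-padding in the definition of $T_C$ and certain-answer semantics in the conclusion. One must check carefully that a satisfying valuation for $Q$ over $T_{\C}(L)$ that produces a \emph{null-free} output tuple survives the composition with an arbitrary homomorphism $v\colon L\to\di$, and that the nulls introduced by $\tcop$ never end up in the projected output $\tpl x$; this is where the restriction to \emph{incomplete facts} (so $\di$ is null-free and queries on $\di$ are evaluated under ordinary semantics) is essential, and where one must invoke the fact that $Q_{\cert}$ simply drops tuples with nulls while join variables may legitimately be bound to nulls in intermediate facts. Making the "induced homomorphism $T_{\C}(L)\to T_{\C}(\di)$" step precise — i.e. that $\tcop$ commutes with homomorphisms up to the freezing convention — is the part I expect to require the most care; everything else is a routine adaptation of the null-free argument.
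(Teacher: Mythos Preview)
Your proposal is correct, and your ``$\Rightarrow$'' direction matches the paper's exactly. For the ``$\Leftarrow$'' direction you take a slightly different but closely related route: you factor the argument through the intermediate instance $T_{\C}(\di)$, first lifting the homomorphism $v\colon L\to\di$ to a map $\hat v\colon T_{\C}(L)\to T_{\C}(\di)$ (your ``naturality of $T_\C$'' step), then composing with $\theta$ to land in $Q_{\cert}(T_\C(\di))$, and finally invoking the dominance $T_\C(\di)\domby\da$ from Proposition~\ref{prop-properties:of:TC:transformation-incF} together with Proposition~\ref{prop-monotonicity} to pass to $Q_{\cert}(\da)$. The paper instead goes directly from $L$ to $\da$ without the intermediate $T_\C(\di)$: for each atom $A\in L$ it traces $\theta A\in T_\C(L)$ back to the TC statement $C$ and the atom $B\in L$ that produced it, uses $\pdb\models C$ to find an indicator $\tilde B\in\da$ for $\delta B$, and defines $\delta'A:=\tilde B$. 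The bulk of the paper's proof is then a careful verification that this atom-wise assignment is consistent on terms (constants are preserved; a variable occurring in two positions is sent to the same value), arguing position by position through the chain $A\to B'\to B\to\delta B\to\tilde B$. That verification is precisely the content of what you flag as your ``main obstacle'': making $\hat v$ well-defined on the Codd nulls of $T_\C(L)$ amounts to the same position-by-position consistency check. So the two organizations differ in packaging---yours is more modular, cleanly separating the functorial step from the dominance/monotonicity step---but the technical core you will need to fill in is identical to what the paper writes out.
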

\begin{proof}
\OnlyIf By Proposition~\ref{prop-properties:of:TC:transformation-incF},
$(L,T_{\C}(L))$ is an IDB with incomplete facts that satisfies $\C$.
Thus, by assumption, $(L,T_{\C}(L))\models_{\IF}\compls Q$, which
implies $Q^{s}(L)=Q_{\cert}^{s}(T_{\C}(L))$. The identity from $L$
to $L$ is a satisfying assignment for $Q$ over $L$, from which
it follows that $\tpl x\in Q(L)$, and hence $\tpl x\in Q_{\cert}(T_{\C}(L))$.

\If Suppose that $\tpl x\in Q_{\cert}(T_{\C}(L))$. We show that
$\C\ensuremath{\models_{\IF}\compls Q}$. Let $\D=(\di,\da)$ be an
IDB with incomplete facts that satisfies $\C$. We show that $Q^{s}(\di)=Q_{\cert}^{s}(\da)$.
Note that we only have to show $Q^{s}(\di)\incl{Q_{\cert}^{s}(\da)}$,
since the other inclusion holds by monotonicity (Proposition~\ref{prop-monotonicity}).
Let $\tpl d\in Q^{s}(\di)$. We show that $\tpl d\in{Q_{\cert}^{s}(\da)}$.

There is a valuation $\delta$ such that $\delta L\incl\di$ and $\delta\tpl x=\tpl d$.
We will construct a valuation $\delta'$ such that $\delta'L\incl\da$
and $\delta'\tpl x=\tpl d$. To define $\delta'$, we specify how
it maps atoms of $L$ to $\da$.

Let $A$ be an atom in $L$. Since $\tpl x\in Q_{\cert}(T_{\C}(L))$,
there is a homomorphism $\theta$ from $L$ to $T_{\C}(L)$ such that
$\theta\tpl x=\tpl x$. Let $B'=\theta A\in T_{\C}(L)$. By construction
of $T_{\C}(L)$, there is a TC-statement $C=\tc BPG$ such that $(B,G)\incl L$
and $B'$ has been constructed as indicator for $B$ wrt $C$. Since
$\delta L\incl\di$, we have $(\delta B,\delta G)\incl\di$. Clearly,
$\delta B$ is constrained by $C$ over $\D$. Since $\D\models\C$,
there is an indicator atom $\tilde{B}$ for $\delta B$ in $\da$.
We now define $\delta'A:=\tilde{B}$.

For $\delta'$ to be well-defined, we have to show that $\delta'$
induces a mapping on the terms of $L$, that is, (i)~if $A$ contains
a constant $c$ at position $p$, then $\delta'A$ contains $c$ at
$p$, (ii)~if $A_{1}$ contains variable $y$ at position $p_{1}$,
and $A_{2}$ contains $y$ at $p_{2}$, then $\delta'A_{1}$ and $\delta'A_{2}$
have the same term at position $p_{1}$ and $p_{2}$, respectively.

To see this, let $c$ be in $A$ at $p$, which we denote as $c=A[p]$.
Since $\theta$ is a homomorphism, $B'[p]=(\theta A)[p]=c$. By construction
of $T_{\C}(L)$, we have a statement $C\in\C$ such that $p\in P$,
the set of projected positions of $C$, and $B[p]=B'[p]$. Moreover,
since $\delta$ is a homomorphism, we have that $(\delta B)[p]=B[p]$.
As $\tilde{B}$ is an indicator for $\delta B$ wrt $C$, and $p\in P$,
it follows that $\tilde{B}[p]=(\delta B)[p]$. In summary, $(\delta'A)[p]=\tilde{B}[p]=A[p]$.

Next, suppose that $A_{1}[p_{1}]=A_{2}[p_{2}]=y$. We will show that
$\tilde{B}_{1}[p_{1}]=\tilde{B}_{2}[p_{2}]$. Since $\theta$ is a
homomorphism, it holds that $B'_{1}[p_{1}]=B'_{2}[p_{2}]$. By construction
of $T_{\C}(L)$, we have a statement $C_{1}$ such that $p_{1}\in P_{1}$,
the set of projected positions of $C_{1}$, and $B_{1}[p_{1}]=B'_{1}[p_{1}]$.
An analogous argument holds for $B'_{2}$, so $B_{1}[p_{1}]=B_{2}[p_{2}]$.
Moreover, since $\delta$ is a homomorphism, we have that $(\delta B_{1})[p_{1}]=(\delta B_{2})[p_{2}]$.
As $\tilde{B}_{1}$ is an indicator for $\delta B_{1}$ wrt $C_{1}$,
and $p_{1}\in P_{1}$, it follows that $\tilde{B}_{1}[p_{1}]=(\delta B_{1})[p_{1}]$
An analogous statement holds for $\delta B_{2}$. Therefore, it also
holds that $\tilde{B}_{1}[p_{1}]=\tilde{B}_{2}[p_{2}]$.
\end{proof}
The intuition of this theorem is the following: To check whether completeness
of a query $Q$ is entailed by a set of TC statements $\C$, we perform
a test over a prototypical database: Considering the body of the query
as an ideal database, we test whether the satisfaction of the TC statements
$\C$ implies that there is also enough information in any available
database to return the tuple of the distinguished variables $\tpl x$.
If that is the case, then also for any other tuple found over an ideal
database, there is enough information in the available database to
compute that tuple again.
\begin{example}
\label{ex:prototypical:reasoning} Consider again the query from Example~\ref{ex:query:completeness:arts:students},
which is $\query{Q_{\mathrm{art\_students}}(n)}{\mathrm{student}(n,c,s),\mathrm{class}(s,c,f,\mathrm{'arts'})}$.
Suppose we are given TC statements $C_{1}=\tc{\class(s,c,f,p)}{\{1,2,3,4\}}{\true}$
and $C_{2}=\tc{\student(n,c,s)}{\{1,3\}}{\class(s,c,f,p)}$, which
state that complete facts about all classes are in our database, and
that for all students from art classes the name and the school attribute
are in the database. When we want to find out whether $C_{1}$ and
$C_{2}$ imply that query $Q_{\mathrm{art\_students}}$ returns a
complete answer, we proceed according to Theorem~\ref{thm-characterisation:TC:QC:inc}
as follows: 
\begin{enumerate}
\item We take the body of the query $Q_{\mathrm{art\_students}}$ as a prototypical
test database: $L=\set{\student(n,c,s),\class(s,c,f,p)}$.
\item We apply the functions $T_{C_{1}}$ and $T_{C_{2}}$ to $L$ to generate
the minimum information that can be found in any available database
if the TC statements are satisfied: $T_{C_{1}}(L)=\set{\class(s,c,f,p)}$
and $T_{C_{2}}(L)=\set{\student(n,\NULL,s)}$.
\item We evaluate $Q_{\mathrm{art\_students}}$ over $T_{C_{1}}(L)\cup T_{C_{2}}(L)$.
The result is $\set{(n)}$.
\end{enumerate}
The tuple $(n)$ is exactly the distinguished variable of $Q_{\mathrm{art\_students}}$.
Therefore, we conclude that $C_{1}$ and $C_{2}$ entail query completeness
under certain answer semantics. 
\end{example}
We will discuss the complexity of reasoning in detail in Section~\ref{nulls:sub:complexity}.
At this point we already remark that the reasoning is in NP for relational
conjunctive queries, since all that needs to be done is query evaluation,
first of the TC rules in order to calculate $T_{\C}(L)$, second of
$Q$, in order to check whether $\tpl x\in Q(T_{\C}(L))$. Also, for
relational conjunctive queries without self-joins the reasoning can
be done in polynomial time.

So far we have assumed that nulls in the available database are treated
as Codd nulls and that queries are evaluated under certain answer
semantics. Existing DBMSs, however, implement the SQL semantics of
nulls, which is more restrictive, as it does not allow for joins involving
nulls, and thus leads to fewer answers. In the following we will show
that SQL semantics gives us the same results as certain answer semantics
for a query $Q$, if $Q$ is complete and minimal.

In analogy to \quotes{$\models_{\IF}$}, we define for an IDB with
incomplete facts $\pdb=(\di,\da)$ the satisfaction of query completeness
as follows: 
\[
\pdb\models_{\IF,\sql}\compls Q\mbox{\ \ \ \ if and only if\ \ \ \ }Q^{s}(\di)=\nonulls{Q_{\sql}^{s}(\da)}
\]
Moreover, we write $\C\models_{\IF,\sql}\compls Q$ if and only if
$\pdb\models_{\IF,\sql}\compls Q$ for all IDBs where $\pdb\models\C$.
Intuitively, \quotes{$\models_{\IF,\sql}$} is similar to \quotes{$\models_{\IF}$},
with the difference that queries over $\da$ are evaluated as by an
SQL database system.

We show that query completeness for this new semantics can be checked
in a manner analogous to the one for certain answer semantics in Theorem~\ref{thm-characterisation:TC:QC:inc}.
The proof is largely similar.
\begin{lem}
\label{lem-characterization:of:inc:sql} Let $\query{Q(\tpl x)}L$
be a conjunctive query and $\C$ be a set of table completeness statements.
Then 
\[
\C\ensuremath{\models_{\IF,\sql}\compls Q}\quad\mbox{iff}\quad\tpl x\in Q_{\sql}(T_{\C}(L)).
\]

\end{lem}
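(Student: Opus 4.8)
The plan is to transcribe the proof of Theorem~\ref{thm-characterisation:TC:QC:inc} almost verbatim, replacing certain-answer semantics by \sql\ semantics throughout and adding one extra bookkeeping step for join variables. Writing $L$ for the body of $Q$, I would keep using that $(L,T_{\C}(L))$ is an IDB with incomplete facts that satisfies $\C$ (Proposition~\ref{prop-properties:of:TC:transformation-incF}).

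For \OnlyIf, I would instantiate $\C\models_{\IF,\sql}\compls Q$ at the IDB $(L,T_{\C}(L))$, obtaining $Q^{s}(L)=\nonulls{Q^{s}_{\sql}(T_{\C}(L))}$; since the identity valuation on $L$ witnesses $\tpl x\in Q^{s}(L)$ and $\tpl x$ contains no nulls, it follows that $\tpl x\in\nonulls{Q^{s}_{\sql}(T_{\C}(L))}\incl Q_{\sql}(T_{\C}(L))$. For \If, I would fix an IDB $\D=(\di,\da)$ with incomplete facts such that $\D\models\C$ and prove $Q^{s}(\di)=\nonulls{Q^{s}_{\sql}(\da)}$; the inclusion $\nonulls{Q^{s}_{\sql}(\da)}\incl Q^{s}(\di)$ comes from monotonicity (Proposition~\ref{prop-monotonicity}) together with $\di$ being null-free.

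The substance of \If is the converse inclusion. Given $\tpl d\in Q^{s}(\di)$ witnessed by $\delta$ with $\delta L\incl\di$ and $\delta\tpl x=\tpl d$ — so $\delta$ maps every term of $L$ to a non-null constant, $\di$ being null-free — and given a homomorphism $\theta\colon L\to T_{\C}(L)$ with $\theta\tpl x=\tpl x$ witnessing $\tpl x\in Q_{\sql}(T_{\C}(L))$ (hence mapping no join variable to $\NULL$), I would build $\delta'$ with $\delta'L\incl\da$ and $\delta'\tpl x=\tpl d$ exactly as in Theorem~\ref{thm-characterisation:TC:QC:inc}: for an atom $A\in L$, the fact $\theta A\in T_{\C}(L)$ was produced as an indicator for some $B\in L$ wrt a statement $C=\tc BPG$ with $(B,G)\incl L$, so $\delta B$ is constrained by $C$ over $\D$, and $\D\models\C$ supplies an indicator $\tilde B\in\da$ for $\delta B$, which I take as $\delta'A$. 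Well-definedness of $\delta'$ on terms (constants preserved, repeated variables mapped consistently) is the same check as in that theorem.

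The genuinely new point — and the step I expect to be the main obstacle — is that $\delta'$ maps no join variable to $\NULL$, so that $\delta'$ is an \sql-witness and hence $\tpl d=\delta'\tpl x\in Q_{\sql}(\da)$; being null-free, $\tpl d$ then lies in $\nonulls{Q^{s}_{\sql}(\da)}$, which closes the argument. For this, let $y$ be a join variable occurring at position $p_{1}$ in some atom $A_{1}\in L$. The fact $\theta A_{1}$ is an indicator for some $B_{1}\in L$ wrt some $C_{1}=\tc{B_{1}}{P_{1}}{G_{1}}$, and by~\eqref{eqn-aux:query:for:T:C} all entries of $\theta A_{1}$ outside $P_{1}$ equal $\NULL$; since $(\theta A_{1})[p_{1}]=\theta y\neq\NULL$, necessarily $p_{1}\in P_{1}$, so the indicator $\tilde B_{1}=\delta'A_{1}$ for $\delta B_{1}$ wrt $C_{1}$ agrees with $\delta B_{1}$ on $P_{1}$, whence $\delta'y=\tilde B_{1}[p_{1}]=(\delta B_{1})[p_{1}]$, which is non-null because $\delta B_{1}\in\di$. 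The delicate part is keeping straight the chain from $A_{1}$ to the statement $C_{1}$ and atom $B_{1}$ that generated $\theta A_{1}$ and then to $\tilde B_{1}\in\da$, and verifying along it that a join-variable position is always a retained position $P_{1}$; apart from this, the proof is a routine transcription of Theorem~\ref{thm-characterisation:TC:QC:inc}.
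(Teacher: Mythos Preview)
Your proposal is correct and matches the paper's approach exactly: the paper does not spell out a proof but simply notes that it ``is largely similar'' to that of Theorem~\ref{thm-characterisation:TC:QC:inc}, which is precisely the transcription you carry out. You also correctly isolate and handle the one genuinely new ingredient—the check that the constructed $\delta'$ sends no join variable to $\NULL$, which follows because the SQL witness $\theta$ already avoids nulls on join variables, forcing the relevant positions to lie in the projection sets $P$.
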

Now, suppose that the conjunctive query $Q$ is minimal (cf.~\cite{ChandraM77}).
Then $Q$ returns a result over $T_{\C}(L)$ only if each atom from
$L$ has an indicator in $T_{\C}(L)$. The next lemma shows that it
does not matter whether the nulls in $T_{\C}(L)$ are interpreted
as SQL or as Codd nulls.
\begin{lem}
\label{lem-codd:sql:equivalence:for:minimal:queries} Let $\C$ be
a set of TC statements and $\query{Q(\tpl x)}L$ be a minimal conjunctive
query. Then 
\[
\tpl x\in Q_{\sql}(T_{\C}(L))\quad\mbox{iff}\quad\tpl x\in Q_{\cert}(T_{\C}(L)).
\]

\end{lem}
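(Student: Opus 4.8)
The plan is to prove both directions, with the $\Leftarrow$ direction being essentially free and the $\Rightarrow$ direction being where the minimality hypothesis does the work. First observe the easy inclusion: by the definitions in equations~(\ref{eq-nonulls:operator})--(\ref{eq-sql:semantics}), for any instance $D$ we have $Q_{\cert}(D) = \nonulls{Q(D)} \incl Q_{\sql}(D)$ whenever the tuples in question are null-free, because certain-answer semantics drops null-containing tuples \emph{after} allowing joins through nulls, while SQL semantics forbids such joins but keeps nulls in the output; restricted to \emph{null-free output tuples} obtained by a valuation that maps no variable to a null, both semantics agree. So if $\tpl x \in Q_{\cert}(T_{\C}(L))$, then since $\tpl x$ is null-free (the distinguished variables $\tpl x$ are genuine variables, not $\NULL$), the valuation witnessing membership in $Q_{\cert}$ maps each relational atom of $L$ to a (possibly null-containing) atom of $T_{\C}(L)$ but maps $\tpl x$ to non-null terms; I would then argue this valuation already satisfies $Q$ under SQL semantics, giving $\tpl x \in Q_{\sql}(T_{\C}(L))$. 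Actually the cleanest statement is just: $Q_{\sql}(D) \supseteq \nonulls{Q_{\sql}(D)}$ trivially and $Q_{\cert}(D) = \nonulls{Q(D)}$; one direction of the iff needs a small argument about join variables, which is where minimality enters.

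For the direction $\tpl x \in Q_{\sql}(T_{\C}(L)) \implies \tpl x \in Q_{\cert}(T_{\C}(L))$, the key is that $Q$ is minimal. First I would invoke the structure of $T_{\C}(L)$: by its definition in~(\ref{eq:T:C:Operator}) via the padded query~(\ref{eqn-aux:query:for:T:C}), each fact in $T_{\C}(L)$ is an indicator for some atom of $L$ that is constrained by some $C\in\C$, so every atom of $T_{\C}(L)$ carries the relation symbol of some atom of $L$, and $T_{\C}(L)$ is dominated by $L$ (Proposition~\ref{prop-properties:of:TC:transformation-incF}). Now suppose $\tpl x \in Q_{\sql}(T_{\C}(L))$; let $v$ be the witnessing valuation, mapping no join variable of $Q$ to a null, with $vL \incl T_{\C}(L)$ and $v\tpl x = \tpl x$ (here I am freezing $\tpl x$ as in the ``Freezing'' remark). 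Because $Q$ is minimal, the homomorphism $v: L \to T_{\C}(L)$ composed with the identity-on-symbols projection back to $L$ would, if not surjective onto the atoms of $L$ needed, contradict minimality --- more precisely, minimality forces $v$ to hit an indicator for \emph{each} atom of $L$. The point of minimality is to rule out the degenerate case where $Q_{\sql}$ returns $\tpl x$ using only a proper subset of the body's atoms (which could then be matched avoiding nulls while the full body cannot be matched without nulls). Given that every atom of $L$ must be matched, I would then show the only nulls that can appear under $v$ are at singleton-variable positions of $Q$ (since $v$ maps no join variable to a null, and distinguished variables are frozen to constants), and hence dropping null-containing result tuples does not remove $\tpl x$: formally $v$ also witnesses $\tpl x \in Q(T_{\C}(L))$ with $\tpl x$ null-free, so $\tpl x \in \nonulls{Q(T_{\C}(L))} = Q_{\cert}(T_{\C}(L))$.

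The main obstacle I anticipate is making the minimality argument watertight: I need to rule out the possibility that $Q_{\sql}$ succeeds via a valuation $v$ that maps some \emph{singleton} variable $y$ to a non-null term in $T_{\C}(L)$ where the ``natural'' indicator would have a null, while simultaneously no valuation exists making the whole of $Q_{\cert}$ succeed --- I believe this cannot happen precisely because the padding query~(\ref{eqn-aux:query:for:T:C}) puts $\NULL$ only in projected-out positions, and minimality of $Q$ implies $Q$ has no redundant atom whose removal would change the set of positions that must be matched. I would spell this out by contradiction: if $\tpl x \in Q_{\sql}(T_{\C}(L))$ but $\tpl x \notin Q_{\cert}(T_{\C}(L))$, then every certain-answer-valuation for $\tpl x$ must route some output-relevant variable through a null, but a singleton variable's value is irrelevant to $\tpl x$ and to join conditions, so it can be re-mapped to a fresh constant, contradicting non-membership in $Q_{\cert}$. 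Tying this re-mapping argument cleanly to the definition of minimality (cf.~\cite{ChandraM77}) and to the exact shape of $T_{\C}(L)$ is the delicate step; everything else is routine unwinding of the semantics in~(\ref{eq-nonulls:operator})--(\ref{eq-sql:semantics}).
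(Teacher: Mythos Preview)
You have the two directions reversed. The inclusion you call ``essentially free,'' $\tpl x \in Q_{\cert}(T_{\C}(L)) \Rightarrow \tpl x \in Q_{\sql}(T_{\C}(L))$, is the hard direction and is where minimality is needed; your claimed general fact $Q_{\cert}(D) \subseteq Q_{\sql}(D)$ for null-free tuples is simply false. Example~\ref{ex:difference:of:certain:answer:semantics} already exhibits $(1a) \in Q_{\cert}(D)$ with $(1a) \notin Q_{\sql}(D)$, and the text right after it says explicitly that the two semantics are incomparable because certain-answer semantics ``admits more joins'': under Codd semantics a join variable may be bound to a null when several body atoms collapse onto the same fact, which SQL forbids. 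Conversely, the direction you spend most of the plan on, $Q_{\sql} \Rightarrow Q_{\cert}$, is immediate and does not use minimality: any SQL-valid valuation is automatically Codd-valid, so $Q_{\sql}(D) \subseteq Q(D)$, and since $\tpl x$ is null-free this gives $\tpl x \in \nonulls{Q(T_{\C}(L))} = Q_{\cert}(T_{\C}(L))$.

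The paper's argument for the non-trivial direction is the sentence just before the lemma: for minimal $Q$, a result over $T_{\C}(L)$ forces each atom of $L$ to have an indicator in $T_{\C}(L)$. Concretely, a Codd homomorphism $\theta\colon L \to T_{\C}(L)$ fixing $\tpl x$ lifts, via the dominance $T_{\C}(L) \preceq L$, to an endomorphism $\theta'\colon L \to L$ fixing $\tpl x$; minimality makes $\theta'$ a bijection on atoms. If $\theta$ sent a join variable $y$ to a null, the two distinct atoms containing $y$ would map under $\theta$ to the same fact of $T_{\C}(L)$ (a null occurs in exactly one position of one fact), so $\theta'$ would identify them --- contradiction. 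Hence $\theta$ itself is SQL-valid and witnesses $\tpl x \in Q_{\sql}(T_{\C}(L))$. Your instinct that minimality forces each atom to have an indicator is exactly the right ingredient; it just needs to be deployed for $\Leftarrow$, not $\Rightarrow$.
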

Combining Theorem~\ref{thm-characterisation:TC:QC:inc} and Lemmas~\ref{lem-characterization:of:inc:sql}
and~\ref{lem-codd:sql:equivalence:for:minimal:queries} we conclude
that for minimal conjunctive queries that are known to be complete,
it does not matter whether one evaluates them under certain answer
or under SQL semantics.
\begin{thm}
Let $\pdb=(\di,\da)$ be an incomplete database with incomplete facts,
let $\C$ be a set of TC statements, and let $\query{Q(\tpl x)}L$
be a minimal conjunctive query. If $\C\models_{\IF}\compls Q$ and
if $\pdb\models\C$ then $Q_{\cert}^{s}(\da)=\nonulls{Q_{\sql}^{s}(\da)}$. \end{thm}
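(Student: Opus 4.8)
The plan is to derive the claimed identity purely by chaining the three immediately preceding results, observing that the minimality hypothesis on $Q$ is exactly what is needed to bridge certain-answer and SQL semantics. The two sides of the asserted equation are the certain answer of $Q$ over $\da$ and the null-free part of the SQL answer of $Q$ over $\da$; I will show that both equal $Q^s(\di)$, so that they are equal to each other.

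First I would unfold the two hypotheses with respect to the single incomplete database $\pdb$. Since $\C\models_{\IF}\compls Q$ and $\pdb\models\C$, the definition of entailment gives $\pdb\models_{\IF}\compls Q$, that is, $Q^s(\di)=Q_{\cert}^s(\da)$. It then remains to establish the SQL analogue $\pdb\models_{\IF,\sql}\compls Q$, which unfolds to $Q^s(\di)=\nonulls{Q_{\sql}^s(\da)}$; combining the two equalities yields $Q_{\cert}^s(\da)=Q^s(\di)=\nonulls{Q_{\sql}^s(\da)}$, as required.

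To obtain $\pdb\models_{\IF,\sql}\compls Q$ it suffices to show $\C\models_{\IF,\sql}\compls Q$, since $\pdb\models\C$. For this I would run the following chain of equivalences over the prototypical instance $L$ (the body of $Q$): by Theorem~\ref{thm-characterisation:TC:QC:inc}, $\C\models_{\IF}\compls Q$ iff $\tpl x\in Q_{\cert}(T_{\C}(L))$; by Lemma~\ref{lem-codd:sql:equivalence:for:minimal:queries}, using that $Q$ is minimal, $\tpl x\in Q_{\cert}(T_{\C}(L))$ iff $\tpl x\in Q_{\sql}(T_{\C}(L))$; and by Lemma~\ref{lem-characterization:of:inc:sql}, $\tpl x\in Q_{\sql}(T_{\C}(L))$ iff $\C\models_{\IF,\sql}\compls Q$. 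Since the first condition holds by hypothesis, so does the last.

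The argument is essentially bookkeeping, so whatever content there is has already been spent in proving the three cited results; the place where care is needed here is simply lining up the semantic variants correctly. In particular one must keep in mind that $\models_{\IF,\sql}$ is defined via $Q^s(\di)=\nonulls{Q_{\sql}^s(\da)}$, with the null-dropping operator applied on the available side, rather than via $Q_{\sql}^s(\di)=Q_{\sql}^s(\da)$, and that the minimality of $Q$ is genuinely used, since Lemma~\ref{lem-codd:sql:equivalence:for:minimal:queries} can fail for non-minimal queries. No extra monotonicity or ad hoc inclusion argument is needed, as both sides of the final equation are pinned to $Q^s(\di)$.
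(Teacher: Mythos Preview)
Your proposal is correct and follows essentially the same route as the paper's own proof: chain Theorem~\ref{thm-characterisation:TC:QC:inc}, Lemma~\ref{lem-codd:sql:equivalence:for:minimal:queries} (using minimality), and Lemma~\ref{lem-characterization:of:inc:sql} to obtain $\C\models_{\IF,\sql}\compls Q$, and then pin both sides of the desired equation to $Q^{s}(\di)$. Your exposition is slightly more explicit about the final equality $Q_{\cert}^{s}(\da)=Q^{s}(\di)=\nonulls{Q_{\sql}^{s}(\da)}$, but the argument is the same.
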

\begin{proof}
Let $Q$ be a minimal query. If $\C\models_{\IF}\compls Q$, then
$\tpl x\in Q_{\cert}(T_{\C}(L))$ by Theorem $\ref{thm-characterisation:TC:QC:inc}$
which implies $\tpl x\in Q_{\sql}(T_{\C}(L))$ by Lemma $\ref{lem-codd:sql:equivalence:for:minimal:queries}$,
from which we conclude $\C\models_{\IF,\sql}\compls Q$ by Lemma $\ref{lem-characterization:of:inc:sql}$.

As a consequence, for any IDB $\pdb=(\di,\da)$ such that $\pdb\models\C$
we have $Q_{\cert}^{s}(\da)=Q^{s}(\di)=\nonulls{Q_{\sql}^{s}(\da)}$.
\end{proof}
It follows that for complete queries we also get a complete query
result when evaluating them over standard SQL databases.
\begin{example}
Consider again the query $Q_{\mathrm{art\_students}}$ from Example~\ref{ex:query:completeness:arts:students},
where $\query{Q_{\mathrm{art\_students}}(n)}{\student(n,c,s),\class(s,c,f,\mathrm{'arts'})}$,
and the TC statements $C_{1}$ and $C_{2}$ from Example~\ref{ex:prototypical:reasoning}
that entailed query completeness over IDBs with incomplete facts.
Since $Q_{\mathrm{art\_students}}$ has no self-joins it is clearly
minimal, and hence over the available database of any IDB that satisfies
$C_{1}$ and $C_{2}$ we can evaluate it under set semantics and will
get a complete query result.
\end{example}

\subsection{Restricted Facts}

\label{subsec:reasoning:restricted:facts}

We now move to IDBs with restricted facts. Recall that in this case
a null in a fact indicates that an attribute is not applicable. Accordingly,
an IDB with restricted facts is a pair $(\di,\da)$ where both the
ideal and the available database may contain nulls, and where the
available is a subset of the ideal database $(\da\subseteq\di)$.

Again, we suppose that we are given a set of TC statements $\C$ and
a conjunctive query $\query{Q(\tpl x)}L$, which is to be evaluated
under set semantics. Similar to the case of incomplete facts, we say
that \emph{$\C$ entails $\compls Q$ over IDBs with restricted facts},
written 
\begin{equation}
\C\models_{\RF}\compls Q,\label{eqn-TC:QC:Par:Facts:sets}
\end{equation}
 iff for every such IDB $\pdb$ we have that 
\[
\pdb\models\C\quad\mbox{implies}\quad\pdb\models_{\RF}\compls Q.
\]

We will derive a characterization of (\ref{eqn-TC:QC:Par:Facts:sets})
that can be effectively checked. We reuse the function $T_{\C}$ defined
in Equation~(\ref{eq:T:C:Operator}), for which now the following
properties hold (see also Proposition \ref{prop-properties:of:TC:transformation-incF}).
\begin{prop}
Let $D$ be an instance that may contain nulls and let $\D_{1}=(D\cup T_{\C}(D),T_{\C}(D))$.
Then 
\begin{enumerate}
\item $\D_{1}$ is an IDB with restricted facts; 
\item $\D_{1}\models\C$. 
\end{enumerate}
Moreover, if $D'$ is another instance such that $(D\cup D',D')$
is an IDB with restricted facts that satisfies $\C$, then $D'$ dominates
$T_{\C}(D)$. 
\end{prop}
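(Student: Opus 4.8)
The plan is to mirror the structure of the proof of Proposition~\ref{prop-properties:of:TC:transformation-incF}, the only new difficulty being that the ideal component of $\D_1$ is now $D\cup T_\C(D)$ rather than $D$ itself, so that nulls may occur on both sides and the rule body of a statement in~$\C$ may be matched partly against facts that $T_\C$ has freshly created. Claim~(1) needs nothing: since $T_\C(D)\incl D\cup T_\C(D)$, the available component of $\D_1$ is a subset of its ideal component, which by definition makes $\D_1$ an incomplete database with restricted facts.

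For claim~(2), fix $C=\tc{R(\tpl s',\tpl s'')}PG\in\C$ (recall that $P$ may be taken to be the first positions of $R$, so $\tpl s'$ occupies exactly the positions in $P$) and let $R(\tpl u)\in\di_1=D\cup T_\C(D)$ be constrained by $C$ over $\D_1$, witnessed by a valuation $v$ with $\set{R(v\tpl s)}\cup vG\incl\di_1$ and $\tpl u=v\tpl s$. If $R(\tpl u)\in T_\C(D)=\da_1$, then $R(\tpl u)$ is trivially an indicator for itself and we are done. Otherwise $R(\tpl u)\in D$, so the head atom $R(\tpl s',\tpl s'')$ of the rule is mapped by $v$ into $D$. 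Now I would pull $v$ back to a valuation $v_0$ that satisfies the whole body over $D$: for every body atom $A$ with $vA\in T_\C(D)\setminus D$, the fact $vA$ is by construction of $T_\C$ a null-padding $R'(\tpl a',\tpl{\NULL})$ of some fact $W_A\in D$ with $vA\domby W_A$, and every null of $vA$ sits in a position carrying a variable of $A$ that is \emph{singleton} in the body --- a join variable cannot be sent to a null, since nulls in the ideal component are read under SQL semantics. Define $v_0$ to agree with $v$ except that, for each such $A$, the singleton variables in the null positions of $vA$ are redirected to the matching entries of $W_A$; this is consistent because each such variable occurs in a unique atom, and it leaves every atom already mapped into $D$ --- in particular $R(\tpl s',\tpl s'')$ --- untouched, so $v_0\tpl s'=v\tpl s'$ and $\set{R(v_0\tpl s',v_0\tpl s'')}\cup v_0G\incl D$. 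Then $R(v_0\tpl s',\tpl{\NULL})\in T_C(D)\incl T_\C(D)=\da_1$, and since $v_0\tpl s'=v\tpl s'=\tpl u[P]$, this fact is an indicator for $R(\tpl u)$ w.r.t.\ $C$. Hence $\D_1\models C$ for every $C\in\C$, i.e.\ $\D_1\models\C$.

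The minimality (``moreover'') clause is proved as in Proposition~\ref{prop-properties:of:TC:transformation-incF}: if $(D\cup D',D')$ is an incomplete database with restricted facts satisfying $\C$, then for each fact $R(v\tpl s',\tpl{\NULL})\in T_\C(D)$ --- produced by some $C\in\C$ via a valuation $v$ satisfying the body of $C$ over $D\incl D\cup D'$ --- the atom $R(v\tpl s)$ is constrained by $C$ over $(D\cup D',D')$, so satisfaction of~$\C$ supplies an indicator $R(\tpl u')\in D'$ with $\tpl u'[P]=v\tpl s'$, whence $R(v\tpl s',\tpl{\NULL})\domby R(\tpl u')$; thus $D'$ dominates $T_\C(D)$. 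The delicate point of the whole argument is the case distinction in claim~(2) together with the SQL-semantics observation that lets the null-redirection defining $v_0$ go through without disturbing $\tpl s'$: if one tried to skip the case distinction one would be forced to use an indicator with a shifted $P$-projection, and indeed the naive analogue ``$\pdb\models\C$ iff $T_\C(\di)$ is dominated by $\da$'' of Lemma~\ref{lemma_reasoning_1}(ii) already fails once nulls may appear in projected positions.
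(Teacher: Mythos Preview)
Your proof is correct. The paper does not actually supply a proof for this proposition: it merely states the result with a parenthetical pointer to the analogous Proposition~\ref{prop-properties:of:TC:transformation-incF} for the incomplete-facts case, which in turn is only justified by reference to Lemma~\ref{lemma_reasoning_1}. So your argument is considerably more detailed than anything the paper offers.

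The one point worth highlighting is that you identify and resolve a genuine subtlety the paper glosses over: because the ideal component of $\D_1$ is $D\cup T_\C(D)$ rather than $D$, the body of a statement $C\in\C$ may be matched using facts that $T_\C$ has freshly manufactured. Your case distinction together with the pull-back of $v$ to $v_0$ --- exploiting that every fact in $T_\C(D)$ is dominated by a fact in $D$ and that, under SQL semantics, only singleton variables can land on the null-padded positions --- cleanly handles this. The ``moreover'' clause follows exactly as you say, since $D\subseteq D\cup D'$.
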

In contrast to databases with incomplete facts, nulls can now appear
in the output of queries over the ideal database, and therefore must
not be ignored in query answers over the available database. Recent
results in~\cite{Franconi:Et:Al-SQL:Nulls-AMW} imply that for queries
over databases with restricted facts, evaluation according to SQL's
semantics of nulls returns correct results.

The characterization of completeness entailment is different now because
$Q$'s body $L$ is no more a prototypical instance for $Q$ to retrieve
an answer $\tpl x$. Since the ideal database may now contain nulls,
we must consider the case that variables in $L$ are mapped to~$\NULL$
when $Q$ is evaluated over $\di$.

We first present a result for \emph{boolean\/} queries, that is,
for queries where the tuple of distinguished variables $\tpl x$ is
empty, and for \emph{linear\/} (or self-join free) queries, that
is, queries where no relation symbol occurs more than once.

A variable $y$ in a query $\query{Q(\tpl x)}L$ is a \emph{singleton\/}
variable, if it appears only once in $L$. Recall that only singleton
variables can be mapped to $\NULL$ when evaluating $Q$ under SQL
semantics. Let $L^{\NULL}$ and $\tpl x^{\NULL}$ be obtained from
$L$ and $\tpl x$, respectively, by replacing all singleton variables
with $\NULL$.
\begin{thm}
\label{thm:characterization:set:sem:for:boolean:and:linear:queries}
Let $\query{Q(\tpl x)}L$ be a boolean or linear conjunctive query
and $\C$ be a set of table completeness statements. Then 
\[
\C\models_{\RF}\compls Q\quad\mbox{iff}\quad\tpl x^{\NULL}\in Q_{\sql}(T_{\C}(L^{\NULL})).
\]

\end{thm}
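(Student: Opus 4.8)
The plan is to imitate the proof of Theorem~\ref{thm-characterisation:TC:QC:inc} (the incomplete-facts case), with two systematic changes. First, since for restricted facts the ideal database may itself contain nulls, the canonical prototypical ideal instance built from $Q$ is no longer $L$ but its null-padded version $L^{\NULL}$, in which every singleton variable of $Q$ is replaced by $\NULL$: this is the ``worst case'', because under SQL semantics only singleton variables can be bound to $\NULL$, and correspondingly the target answer becomes $\tpl x^{\NULL}$. Second, all query evaluation is under SQL semantics, which by \cite{Franconi:Et:Al-SQL:Nulls-AMW} is the semantics fitting restricted facts. I will also use the properties of the operator $T_{\C}$ over restricted-facts IDBs established in the Proposition immediately preceding the theorem, and the monotonicity statement (Proposition~\ref{prop-monotonicity}).

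For the direction $\Rightarrow$, assume $\C\models_{\RF}\compls Q$. By that Proposition, $\pdb_{1}=(L^{\NULL}\cup T_{\C}(L^{\NULL}),\,T_{\C}(L^{\NULL}))$ is an IDB with restricted facts satisfying $\C$, so $Q_{\sql}(L^{\NULL}\cup T_{\C}(L^{\NULL}))=Q_{\sql}(T_{\C}(L^{\NULL}))$. The valuation that is the identity on join variables and on distinguished variables and maps each singleton variable to $\NULL$ sends $L$ into $L^{\NULL}$ and maps no join variable to $\NULL$; hence $\tpl x^{\NULL}\in Q_{\sql}(L^{\NULL})\incl Q_{\sql}(T_{\C}(L^{\NULL}))$, as required.

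For the direction $\Leftarrow$, assume $\tpl x^{\NULL}\in Q_{\sql}(T_{\C}(L^{\NULL}))$ and let $\pdb=(\di,\da)$ be any restricted-facts IDB with $\pdb\models\C$; since $\da\incl\di$, monotonicity gives $Q_{\sql}(\da)\incl Q_{\sql}(\di)$, so it suffices to prove $Q_{\sql}(\di)\incl Q_{\sql}(\da)$. Fix $\tpl d\in Q_{\sql}(\di)$, realised by a valuation $\delta$ with $\delta L\incl\di$, $\delta\tpl x=\tpl d$, and $\delta$ not mapping any join variable to $\NULL$; and fix a valuation $\theta$ realising $\tpl x^{\NULL}\in Q_{\sql}(T_{\C}(L^{\NULL}))$. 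For each atom $A$ of $L$, the fact $\theta A\in T_{\C}(L^{\NULL})$ was produced, via the auxiliary query $Q_{C}$ of some statement $C\in\C$, as an indicator for some atom of $L^{\NULL}$ constrained by $C$; transporting this back to the corresponding atom and side condition in $L$ and composing with $\delta$ exactly as in the proof of Theorem~\ref{thm-characterisation:TC:QC:inc}, one obtains in $\da$ (using $\pdb\models\C$) an indicator $\tilde B$ for a fact of $\di$ constrained by $C$, and sets $\delta'A:=\tilde B$. One then checks that $\delta'$ is well defined: it respects constants because an indicator agrees with the constrained fact on the projected positions $P$, and it respects repeated variables because any variable shared between two body atoms is a join variable and is therefore never bound to $\NULL$ by $\delta$, $\theta$, or the intermediate valuations. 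This produces $\delta'L\incl\da$ with $\delta'\tpl x=\tpl d$, whence $\tpl d\in Q_{\sql}(\da)$.

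The main obstacle is precisely this well-definedness argument in the $\Leftarrow$ direction: nulls now appear both in $\di$ and in the intermediate instance $T_{\C}(L^{\NULL})$, so one must track carefully which positions may carry $\NULL$ while keeping the constant- and variable-coherence conditions intact through the two-step composition $\theta$-then-$\delta$. This is exactly where the hypothesis that $Q$ is boolean or linear enters: for a linear query no relation symbol repeats, so distinct body atoms cannot share a variable in a way that a self-join could fold into a spurious homomorphism over the null-laden instance $L^{\NULL}$, and for a boolean query the only candidate answer is $()$, so no such folding can affect the result; the general conjunctive case genuinely requires a different, more delicate characterization and is treated separately.
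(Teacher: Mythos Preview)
The paper states this theorem without proof, so there is nothing to compare against directly; I evaluate your proposal on its own merits.

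Your $\Rightarrow$ direction is correct and mirrors the incomplete-facts case cleanly. Your $\Leftarrow$ direction, however, has a real gap in the linear, non-boolean case. The claim ``this produces $\delta'L\incl\da$ with $\delta'\tpl x=\tpl d$'' does not follow from the well-definedness check you describe. In the proof of Theorem~\ref{thm-characterisation:TC:QC:inc} the equality $\delta'\tpl x=\delta\tpl x$ hinges on $\theta\tpl x=\tpl x$: for each distinguished $y$ one has $B'[p]=\theta y\neq\NULL$, which forces the position $p$ to lie in the projection set $P$ of the chosen TC statement, and then the indicator $\tilde B$ agrees with $\delta\hat B$ at $p$. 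In the restricted-facts setting you have $\theta\tpl x=\tpl x^{\NULL}$, so for a distinguished \emph{singleton} $y$ one gets $\theta y=\NULL$ and $B'[p]=\NULL$; nothing then forces $p\in P$, and $\delta'y=\tilde B[p]$ is left unconstrained. Your explanation of where linearity enters (ruling out self-join folding over $L^{\NULL}$) addresses a different phenomenon; linearity does not exclude singleton distinguished variables.

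The gap is not a matter of missing detail. Take the linear query $\query{Q(y)}{R(x,y),S(x)}$ and $\C=\{\tc{R(u,v)}{\{1\}}{\emptyset},\ \tc{S(u)}{\{1\}}{\emptyset}\}$. Then $L^{\NULL}=\{R(x,\NULL),S(x)\}=T_{\C}(L^{\NULL})$ and $\tpl x^{\NULL}=(\NULL)\in Q_{\sql}(T_{\C}(L^{\NULL}))$, so the single-null-version test passes. Yet for $\di=\{R(a,b),R(a,c),S(a)\}$ and $\da=\{R(a,b),S(a)\}$ one has $(\di,\da)\models\C$ (the fact $R(a,b)$ serves as indicator on position~$1$ for both $R(a,b)$ and $R(a,c)$), while $Q_{\sql}(\di)=\{b,c\}\neq\{b\}=Q_{\sql}(\da)$. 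The general criterion of Theorem~\ref{thm:characterization:tc:qc:restricted:facts} does catch this: the null version $v_0L=L$ requires $(y)\in Q_{\sql}(T_{\C}(L))=Q_{\sql}(\{R(x,\NULL),S(x)\})$, which fails. So checking only $L^{\NULL}$ is genuinely too weak for linear queries with singleton distinguished variables; your transport argument is sound only in the boolean case, where $\tpl x$ is empty and the problematic step is vacuous.
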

The theorem reduces completeness reasoning in the cases above to conjunctive
query evaluation. We conclude that deciding TC-QC entailment wrt.\
databases with restricted facts is in PTIME for linear and NP-complete
for arbitrary boolean conjunctive queries.

For general conjunctive queries, which may have distinguished variables,
evaluating $Q$ over a single test database obtained from $L$ is
not enough. We can show, however, that it is sufficient to consider
all cases where singleton variables in $L$ are either null or not.
A \emph{null version\/} of $L$ is a condition obtained from $L$
by replacing some singleton variables with $\NULL$. Any valuation
$v$ for $L$ that replaces some singleton variables of $L$ with
$\NULL$ and is the identity otherwise leads to a null version $vL$
of $L$.
\begin{thm}
\label{thm:characterization:tc:qc:restricted:facts} Let $\query{Q(\tpl x)}L$
be a conjunctive query. Then the following are equivalent: 
\begin{itemize}
\item $\C\models_{\RF}\compls Q$; 
\item $v\tpl x\in Q_{\sql}(T_{\C}(vL))$, for every null version $vL$ of
$L$. 
\end{itemize}
\end{thm}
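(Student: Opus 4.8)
The plan is to reduce the general case to the two cases already handled in Theorem~\ref{thm:characterization:set:sem:for:boolean:and:linear:queries}, using the observation that, under SQL semantics, the only freedom a valuation has to "lose information" is in how it treats singleton variables: a singleton variable can be sent to a genuine constant or to $\NULL$, whereas join variables must always go to non-null values in any answer-producing valuation. So a satisfying valuation $\delta$ for $Q$ over $\di$ is, up to renaming, determined by a choice of which singleton variables are nullified — that is precisely a null version $vL$ — together with a homomorphism of $vL$ into $\di$ that is injective on the non-null part in the right sense. This means that checking $\compls Q$ over all IDBs with restricted facts amounts to checking, for each of the finitely many null versions $vL$, that the body $vL$ (viewed as a prototypical ideal instance) still yields the answer tuple $v\tpl x$ after applying $T_{\C}$.

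First I would prove the easy direction (top $\Rightarrow$ bottom, i.e.\ completeness entailment implies the condition on every null version). Fix a null version $vL$ of $L$. By Proposition~\ref{prop-properties:of:TC:transformation-incF}'s analogue for restricted facts (the Proposition stated just before Theorem~\ref{thm:characterization:set:sem:for:boolean:and:linear:queries}), the pair $\D_1=(vL\cup T_{\C}(vL),\,T_{\C}(vL))$ is an IDB with restricted facts satisfying $\C$; note $vL\cup T_{\C}(vL)$ may add $T_{\C}$-indicator facts to the ideal side, but that is harmless. Since $\C\models_{\RF}\compls Q$, we get $Q_{\sql}^{s}(\id{\D_1})=Q_{\sql}^{s}(\av{\D_1})$. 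The identity-like valuation underlying $vL$ witnesses $v\tpl x\in Q_{\sql}(vL)\incl Q_{\sql}(\id{\D_1})$ — here one must check that nullifying exactly the singleton variables of $L$ is compatible with SQL evaluation, which it is, because those variables occur only once in $L$ and therefore never force an equality join. Hence $v\tpl x\in Q_{\sql}(\av{\D_1})=Q_{\sql}(T_{\C}(vL))$, as required.

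For the converse (bottom $\Rightarrow$ top), assume $v\tpl x\in Q_{\sql}(T_{\C}(vL))$ for every null version $vL$, let $\pdb=(\di,\da)$ be an IDB with restricted facts satisfying $\C$, and let $\tpl d\in Q_{\sql}^{s}(\di)$; I want $\tpl d\in Q_{\sql}^{s}(\da)$ (the reverse inclusion is monotonicity, Proposition~\ref{prop-monotonicity}). The tuple $\tpl d$ comes from a valuation $\delta$ with $\delta L\incl\di$, $\delta\tpl x=\tpl d$, and $\delta$ mapping no join variable to $\NULL$. Let $vL$ be the null version obtained by nullifying exactly those singleton variables that $\delta$ sends to $\NULL$; then $\delta$ factors as $\delta=\bar\delta\circ v$ for a valuation $\bar\delta$ on the terms of $vL$ that uses no nulls except those already present. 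Now mimic the "if" direction of the proof of Theorem~\ref{thm-characterisation:TC:QC:inc} / Theorem~\ref{thm:characterization:set:sem:for:boolean:and:linear:queries}: from $v\tpl x\in Q_{\sql}(T_{\C}(vL))$ we get a homomorphism $\theta$ from $vL$ into $T_{\C}(vL)$ fixing $v\tpl x$; for each atom $A$ of $vL$, the fact $\theta A$ was generated as an indicator for some constrained atom via some $C\in\C$, and since $\bar\delta(vL)\incl\di$ and $\pdb\models\C$, there is a corresponding indicator fact in $\da$. Defining $\delta'A$ to be that indicator fact gives, by the same positional bookkeeping as in the proof of Theorem~\ref{thm-characterisation:TC:QC:inc} (constants preserved on projected positions, repeated variables mapped consistently), a well-defined valuation $\delta'$ with $\delta'(vL)\incl\da$ and $\delta'\,v\tpl x=\delta\tpl x=\tpl d$; hence $\tpl d\in Q_{\sql}(\da)$.

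The main obstacle is the bookkeeping in this last step: one has to be careful that the null version $vL$ one picks genuinely captures $\delta$'s behaviour (a join variable could in principle be mapped by $\delta$ to a constant $c$ while $\theta$ maps the same position into a $\NULL$ padded in by $T_{\C}$ — one must argue this never breaks a required equality, exploiting that the indicator facts for join positions carry the actual shared value because those positions are \emph{not} projected out along the relevant $C$, or else handle it by the SQL-semantics convention). Getting the interaction between "which positions $C$ projects out", "which variables are singletons", and "which entries of $\di$ are already $\NULL$" exactly right is where the proof earns its keep; the rest is a routine adaptation of Theorems~\ref{thm-characterisation:TC:QC:inc} and~\ref{thm:characterization:set:sem:for:boolean:and:linear:queries}.
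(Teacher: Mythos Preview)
The paper does not actually give a proof of this theorem; it states the result, adds one sentence of intuition (``instead of just one prototypical case, we have to consider several now, because query evaluation for databases with nulls is more complicated''), cites \cite{containment:dbs:with:nulls}, and moves on to the $\piptwo$ upper bound. So there is no detailed paper argument to compare against.

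Your overall approach is exactly what the paper's discussion points to: replay the proof of Theorem~\ref{thm-characterisation:TC:QC:inc}, but parameterised by null versions $vL$, using the restricted-facts analogue of Proposition~\ref{prop-properties:of:TC:transformation-incF} for the prototypical IDB. Your easy direction (\emph{entailment} $\Rightarrow$ \emph{condition for every $vL$}) is correct as stated: $(vL\cup T_\C(vL),\,T_\C(vL))$ is an IDB with restricted facts satisfying $\C$, the identity-on-$vL$ valuation witnesses $v\tpl x$ over the ideal side, and completeness pushes it to the available side. Your hard direction has the right skeleton: given $\delta$ with $\delta L\incl\di$, pick the null version $v$ nullifying exactly the singletons $\delta$ sends to $\NULL$, use the hypothesis for \emph{that} $v$ to get $\theta$, and transport $\theta$ along $\bar\delta$ atom-by-atom to indicators in $\da$.

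You are also right that the crux is the positional bookkeeping. For join variables your parenthetical is correct: if $z$ is a join variable of $L$ then $\theta(z)\neq\NULL$, so the position lies in the projection set $P$ of the chosen $C$, and the value is carried through to $\tilde B$. The case you gesture at but do not close is a \emph{singleton distinguished} variable $x_i$ sitting at a position $p\notin P$: then $\theta(x_i)=\NULL$ forces $v(x_i)=\NULL$ and hence $d_i=\NULL$, but your indicator $\tilde B\in\da$ is unconstrained at $p$, so $\delta'(x_i)=\tilde B[p]$ may well be a non-null constant, and then $\delta'\tpl x\neq\tpl d$. This is precisely the ``more possibilities to retrieve null as a result'' phenomenon the paper flags, and it is where the need for \emph{all} null versions (not just $L^\NULL$, not just the one matching $\delta$) really bites. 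Your final paragraph honestly names this as ``where the proof earns its keep''; the sketch locates the obstacle but does not discharge it.
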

The theorem says that instead of just one prototypical case, we have
to consider several now, because query evaluation for databases with
nulls is more complicated: while the introduction of nulls makes the
satisfaction of TC statements and the query evaluation more difficult,
it also creates more possibilities to retrieve null as a result (see~\cite{containment:dbs:with:nulls}).

The above characterisation can be checked by a $\piptwo$ algorithm:
in order to verify that containment does not hold, it suffices to
guess one null version $vL$ and then show that $v\tpl x$ is not
in $Q(T_{\C}(vL))$, which is an NP task. 

\subsection{Ambiguous Nulls}

\label{sec:one:ambiguous:null} So far we have assumed that nulls
have one of two possible meanings, standing for unknown or for non-existing
values. In this section we discuss completeness reasoning in the presence
of one syntactic null value, which can have three possible meanings,
the previous two plus indeterminacy as to which of those two applies.
This is the typical usage of nulls in SQL.

We model IDBs for this case as pairs $\pdb=(\di,\da)$, where both
instances, $\di$, $\da$, may contain $\NULL$ and each tuple in
$\da$ is dominated by a tuple in $\di$. We assume that queries are
evaluated as in SQL, since we cannot tell which nulls are Codd-nulls
and which not. For a query $Q$ and $\ast\in\set{s,b}$ we define
\begin{equation}
\pdb\models_{\AN}\textit{Compl}^{\ast}(Q)\quad\mbox{iff}\quad Q_{\sql}^{\ast}(\di)=Q_{\sql}^{\ast}(\da).\label{}
\end{equation}

Different from the case where nulls stand for unknown values, we may
not drop nulls in the query result over the available database, because
they might carry information (absence of a value).

We observe that without further restrictions on the IDBs, for many
queries there is no way to conclude query completeness from table
completeness.
\begin{prop}
\label{prop:keys:needed:for:bag:semantics:and:ambiguous:nulls:in:set:semantics}
There exists an IDB $\D$ with ambiguous nulls and a query $Q$, such
that $\D$ satisfies any set of TC statements but $\D$ does not satisfy
$\compls Q$. \end{prop}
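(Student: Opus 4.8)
The plan is to produce an explicit witness. Let $R$ be a single binary relation symbol, and put $\di=\set{R(1,2)}$ and $\da=\set{R(1,2),\,R(1,\NULL)}$. The first step is to check that $\pdb=(\di,\da)$ is a legal incomplete database with ambiguous nulls: both available facts are dominated by $R(1,2)\in\di$, since $R(1,2)\domby R(1,2)$ trivially and $R(1,\NULL)\domby R(1,2)$ because the available fact differs from the ideal one only by carrying $\NULL$ in a position where the ideal fact carries the constant $2$.

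The second step is to exhibit a query that fails to be complete over $\pdb$. Take $\query{Q(y)}{R(x,y)}$, which has no join variable. Evaluated under SQL semantics, $Q_{\sql}^{s}(\di)=\set{(2)}$, whereas over $\da$ the nullified row $R(1,\NULL)$ additionally contributes the tuple $(\NULL)$ (it is retained, since neither $x$ nor $y$ is a join variable), so $Q_{\sql}^{s}(\da)=\set{(2),(\NULL)}$. Since $\set{(2)}\neq\set{(2),(\NULL)}$, the defining equation of $\models_{\AN}$ gives $\pdb\not\models_{\AN}\compls{Q}$.

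The core step is to show that $\pdb$ satisfies every set of TC statements, for which it suffices to fix an arbitrary single statement $C=\tc{R'(\tpl s)}{P}{G}$ and verify $\pdb\models C$. If $R'\neq R$, then $R'(\di)=\eset$, no ideal atom is constrained by $C$, and $C$ holds vacuously. If $R'=R$, the only atom of $R(\di)$ is $R(1,2)$; if it is not constrained by $C$, then again $C$ holds vacuously, and if it is constrained, then $R(1,2)\in\da$ is itself an indicator for it, because $R(1,2)$ agrees with itself on the positions in $P$ for every choice of $P$. Hence $\pdb\models C$ in all cases, and the additional fact $R(1,\NULL)\in\da$ is harmless for this argument, since extra facts in the available state can never make a TC statement fail. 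Thus $\pdb$ satisfies every set of TC statements while $\pdb\not\models_{\AN}\compls{Q}$, which is the claim.

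The point that needs the most care is reconciling these two facts: $\pdb$ satisfies even the maximally strong statement $\tc{R(\tpl s)}{\set{1,2}}{\true}$, which asserts full completeness of $R$, and yet $Q$ is not complete. The resolution is exactly the ambiguity of nulls in this model: a TC statement can only \emph{demand the presence} of certain indicator facts in $\da$, never their absence, so enlarging $\da$ by the dominated copy $R(1,\NULL)$ keeps every TC statement satisfied, while under SQL's three-valued evaluation that extra row produces the spurious null answer $(\NULL)$, which breaks the set equality $Q_{\sql}(\di)=Q_{\sql}(\da)$ required for completeness. I would note in passing that a key on the first attribute of $R$ would forbid the coexistence of $R(1,2)$ and $R(1,\NULL)$ in $\da$, which already indicates why completeness reasoning in this setting will have to take key constraints into account.
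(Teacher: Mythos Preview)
Your proof is correct and follows essentially the same approach as the paper: build an IDB whose available database contains the single ideal fact together with a null-carrying dominated copy of it, so that every TC statement is satisfied (the ideal fact is its own indicator in $\da$) while a projection query picks up a spurious $\NULL$ answer from the extra row. Your abstract choice of a binary relation $R$ and the explicit case analysis on TC statements are cleaner than the paper's concrete $\student$ instance, but the idea is identical.
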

\begin{proof}
Let $\D$ be with $\di=\set{\student(\mathrm{Mary},\mathrm{2a},\schoolone)}$
and $\da=\set{\student(\mathrm{Mary},\mathrm{2a},\mathrm{Chester}),\student(\mathrm{Mary},\NULL,\schoolone)}$.
Clearly, $\D$ satisfies all possible TC statements, because every
fact from the ideal database is also in the available database. But
the query $\query{Q_{\mathrm{classes}}(c)}{\student(n,c,s)}$ is not
complete over $\pdb$, because $Q^{s}(\di)=\set{(2a)}$ while $Q^{s}(\da)=\set{(2a),(\NULL)}$.
\end{proof}
Inspecting $\pdb$ in the proof above more closely, we observe that
the two facts in $\da$ are dominated by the same fact in the ideal
database. Knowing that, we can consider the second fact in $\da$
as redundant: it does not add new information about Mary. This duplicate
information leads to the odd behaviour of $\D$ wrt completeness:
while all information from the ideal database is also in the available
database, $Q(\da)$ contains an additional fact with a null.

Sometimes, such duplicates occur naturally, e.g., when data from different
sources is integrated. In other scenarios, however, redundancies are
unlikely because objects are identified by keys, and only non-key
attributes may be unknown or non-applicable.

In a school database, it can happen that address or birth place of
a student are unknown. In contrast, it is hard to imagine that one
may want to store a fact $\student(\NULL,\NULL,\schoolone)$, saying
that there is a student with unknown name and class at the $\schoolone$.

Keys alone, however, are still not sufficient:
\begin{example}
Suppose we are given an incomplete database with\linebreak{}
 $\di=\{\student(\Mary,\twoA,\schoolone),\student(\Paul,\twoA,\schoolone)\}$
and\linebreak{}
 $\da=\set{\student(\Mary,\twoA,\schoolone),\student(Paul,\NULL,\schoolone)}$.
\linebreak{}
Observe that there are no redundant tuples in $\da$. The TC statement\linebreak{}
 $\tc{\student(n,c,s)}{\set 2}{\true}$, which says that all classes
from the ideal database are also in the available database, is satisfied
over this IDB. One might believe that over an IDB satisfying this
statement the query $Q_{\mathrm{classes}}$, defined above, is complete,
as it is the case for IDBs with incomplete facts or with restricted
facts. However, query evaluation returns that $Q_{\mathrm{classes}}^{s}(\di)=\set{(\twoA)}$
while $Q_{\mathrm{classes}}^{s}(\da)=\set{(\twoA),(\NULL)}$.
\end{example}
The problem with ambiguous nulls is that while all information needed
for computing a query result may be present in the available database,
it is not clear how to treat a null in the query answer. If it represents
an unknown value, we can discard it because the value will still be
there explicitly. But if it represents that no value exists, it should
also show up in the query result.

Therefore, we conclude that one should disambiguate the meaning of
null values. In the next section we propose how to do this in an SQL
database.

\section{Making Null Semantics Explicit}

\label{nulls:sub:make_nulls_explicit}

Nulls in an available database can express three different statements
about a value: absence, presence with the concrete value being unknown,
and indeterminacy which of the two applies. As seen in Section~\ref{sec:one:ambiguous:null},
this ambiguity makes reasoning impossible. To explicitly distinguish
between the three meanings of nulls in an SQL database, we present
an approach that adds an auxiliary boolean attribute to each attribute
that possibly has nulls as values.
\begin{example}
Consider relation $\student(\name,\code,\school)$. Imagine a student
John for whom the attribute $\code$ is null because John attends
a class, but the information was not entered into the database yet.
Imagine another student Mary for whom $\code$ is null because Mary
is an external student and does not attend any class. Imagine a third
student Paul for whom $\code$ is null because it is unknown whether
or not he attends a class. We mark the different meanings of nulls
by symbols $\bot_{\uk}$ (unknown but existing value), $\bot_{\na}$
(not applicable value) and $\bot_{\bot}$ (indeterminacy), but remark
that in practice, in an SQL database, all three cases would be expressed
using syntactically identical null values.

We can distinguish them, however, if we add a boolean attribute $\hascode$.
For John, the value of $\hascode$ would be $\true$, expressing that
the tuple for John has a code value, which happens to be unknown,
indicated by the $\NULL$ for $\code$. For Mary, $\code$ would have
the value $\false$, expressing that the attribute $\code$ is not
applicable. For Paul, the $\hascode$ attribute itself would be $\NULL$,
expressing that nothing is known about the actual value. Table~3
shows a $\student$ instance with explicit types of null, on the left
using three nulls, on the right with a single null and the auxiliary
attribute. 
\end{example}
\begin{table}
\begin{centering}
\begin{tabular}{|ccc|c|cccc|}
\cline{1-3} \cline{5-8} 
\multicolumn{3}{|c|}{student} & \multicolumn{1}{c|}{} & \multicolumn{4}{c|}{student}\tabularnewline
name & \ldots{} & code  &  & name  & \ldots{} & hasCode  & code\tabularnewline
\cline{1-3} \cline{5-8} 
\emph{Sara } &  & \emph{2a}  &  & \emph{Sara } &  & $\true$  & \emph{2a}\tabularnewline
\emph{John } &  & $\NULL_{\uk}$  &  & \emph{John } &  & $\true$  & $\NULL$\tabularnewline
\emph{Mary } &  & $\NULL_{\na}$  &  & \emph{Mary } &  & $\false$  & $\NULL$\tabularnewline
\emph{Paul } &  & $\NULL_{\NULL}$  &  & \emph{Paul } &  & $\NULL$  & $\NULL$\tabularnewline
\cline{1-3} \cline{5-8} 
\end{tabular}
\label{fig:making:nulls:explicit:example} 
\par\end{centering}

\caption{Making the semantics of nulls explicit}
\end{table}

In general, for an attribute $\attr$ where we want to disambiguate
null values, we introduce a boolean attribute $\hasattr$. We refer
to $\hasattr$ as the \emph{sign\/} of $\attr$, because it signals
whether a value exists for the attribute, no value exists, or whether
this is unknown.

Note that if $\hasattr$ is $\false$ or $\NULL$, then $\attr$ must
be $\NULL$. This can be enforced by an SQL check constraint.

As seen earlier, in general SQL semantics does not fully capture the
semantics of unknown nulls as it may miss some certain answers. We
will show in Theorem~\ref{thm-sql:implementation:of:combined:nulls},
that our encoding can be exploited to compute answer sets for complete
queries by joining attributes with nulls according to SQL semantics
and then using the signs to drop tuples with unknown and indeterminate
nulls.

\section{Reasoning for Different Nulls}

\label{nulls:sub:reasoning_different_nulls}


In the previous section we showed how to implement a syntactic distinction
of three different meanings of null values in SQL databases. In this
section we discuss how to reason with these three different nulls.

An instance $D$ with the three different kinds of nulls represents
an infinite set of instances $D'$ that can be obtained from $D$
by (i)~replacing all occurrences of $\NULL_{\uk}$ with concrete
values and (ii)~replacing all occurrences of $\NULL_{\NULL}$ with
concrete values or with $\NULL_{\na}$.

As usual, the set of \emph{certain answers} of a query $Q$ over $D$
consists of the tuples that are returned by $Q$ over all such $D'$
and is denoted as $Q_{\cert}(D)$.

It is easy to see that a tuple $\tpl d$ is in $Q_{\cert}(D)$ iff
the only nulls in $\tpl d$ are $\NULL_{\na}$ and there exists a
valuation $v$ such that (i) $\tpl d=v\tpl x$, (ii) $vL\subseteq D$,
(iii) $v$ does not map join variables to $\NULL_{\na}$ or $\NULL_{\NULL}$,
and (iv) no two occurrences of a join variable are mapped to different
occurrences of $\NULL_{\uk}$. Intuitively, this means that we have
to treat $\NULL_{\uk}$ as Codd null and the other nulls as SQL nulls.

We say that an incomplete database $\D=(\di,\da)$ contains \emph{partial
facts\/} if (i)~the facts in $\di$ may contain the null $\NULL_{\na}$,
(ii)~the facts in $\da$ may contain all three kinds of nulls, and
(iii)~each fact $R(\tpl d)\in\da$ is \emph{dominated} by a fact
$R(\tpl d')$ in $\di$ in the sense that for any position $p$ 
\vspace{5pt}
 \begin{compactitem} 

\item if $\tpl d[p]=\NULL_{\na}$, then also $\tpl d'[p]=\NULL_{\na}$, 

\item if $\tpl d[p]=\NULL_{\uk}$, then $\tpl d'[p]$ is a value
from the domain $\dom$, 

\item if $\tpl d[p]=\NULL_{\NULL}$, then $\tpl d'[p]$ is $\NULL_{\na}$
or in $\dom$, 

\item if $\tpl d[p]=d$ for a value $d\in\dom$, then also $\tpl d'[p]=d$.
\end{compactitem} \vspace{5pt}

We then say that a query is complete over a database $\D=(\di,\da)$
with partial facts, if $Q(\di)={Q_{\cert}(\da)}$, and write $\D\models_{\TN}\compl Q$.

Satisfaction of TC-statements is not affected by these changes, as
$\di$ contains only nulls $\NULL_{\na}$, which indicate restricted
facts that can be treated according to SQL semantics.


\begin{example}
Consider the available database $\da$ that contains the three facts
$\class(\schoolone,\mathrm{1a},\NULL_{\uk},\arts)$, $\class(\schoolone,\mathrm{2b},\NULL_{\na},\arts)$
and $\class(\schoolone,\mathrm{3c},\NULL_{\NULL},\arts)$. Also, consider
the query from Example~\ref{ex:difference:of:certain:answer:semantics}
that asks for all classes whose form teacher is also form teacher
of an arts class, written as $\query{Q(c_{1})}{\class(s_{1},c_{1},t,p_{1}),\class(s_{2},c_{2},t,\mathrm{'arts'})}$.

Then similar to before, the only tuple in $Q_{\cert}(\da)$ is $(1a)$,
because since the teacher of that class is unknown but existing, it
holds in any complete database that the class 1a has a teacher that
also teaches an arts class (1a again). The tuples 2b and 3c do not
show up in the result, because the former has no form teacher at all
$(\NULL_{\na})$, while the latter may or may not have a form teacher. 
\end{example}
A first result is that TC-QC entailment over IDBs with partial facts
is equivalent to entailment over IDBs with restricted facts:
\begin{thm}
\label{thm-characterization:TC:QC:three:nulls} Let $Q$ be a conjunctive
query and $\C$ be a set of TC statements. Then 
\[
\C\models_{\TN}\compls Q\quad\mbox{iff}\quad\C\models_{\RF}\compls Q.
\]
 \end{thm}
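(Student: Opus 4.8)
Overall the plan is to reduce both entailment problems to the \emph{same} condition on a finite family of test databases derived from $Q$ and $\C$, namely the databases $T_{\C}(vL)$ where $vL$ ranges over the null versions of the body $L$ of $Q$. For restricted facts this condition is exactly Theorem~\ref{thm:characterization:tc:qc:restricted:facts}: $\C\models_{\RF}\compls Q$ iff $v\tpl x\in Q_{\sql}(T_{\C}(vL))$ for every null version $vL$ of $L$. The first substantive step is to prove the analogous characterisation for partial facts, $\C\models_{\TN}\compls Q$ iff $v\tpl x\in Q_{\cert}(T_{\C}(vL))$ for every null version $vL$. The bridge between the two is the observation that every fact of $vL$ and of $T_{\C}(vL)$ carries only nulls of type $\NULL_{\na}$ --- the nulls introduced by a null version stand for ``not applicable'', and the padding performed by $T_{\C}$ (Equation~(\ref{eq:T:C:Operator})) fills projected-out positions with such nulls. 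Over an instance whose only nulls are $\NULL_{\na}$, certain-answer and SQL semantics coincide (the certain-answer side condition (iii) then says exactly ``join variables are not mapped to nulls'' and condition (iv) is vacuous), so $Q_{\sql}(T_{\C}(vL))=Q_{\cert}(T_{\C}(vL))$. Hence the two characterising conditions are literally identical and the theorem follows.

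\textbf{Direction $\C\models_{\TN}\compls Q\Rightarrow\C\models_{\RF}\compls Q$.} I would argue this one directly. Given an IDB with restricted facts $\pdb=(\di,\da)$, $\da\incl\di$, with $\pdb\models\C$, reinterpret every occurrence of $\NULL$ in $\pdb$ as $\NULL_{\na}$. Since $\da\incl\di$, each fact of $\da$ is dominated by itself, so $\pdb$ is also an IDB with partial facts, and (as noted in the text) satisfaction of TC statements is unaffected by the reinterpretation. Therefore $\pdb\models_{\TN}\compls Q$, i.e.\ $Q(\di)=Q_{\cert}(\da)$. As $\di$ and $\da$ contain only $\NULL_{\na}$-nulls, certain-answer and SQL evaluation of $Q$ agree on both, so $Q_{\sql}(\di)=Q_{\sql}(\da)$, that is $\pdb\models_{\RF}\compls Q$. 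Since $\pdb$ was arbitrary, $\C\models_{\RF}\compls Q$.

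\textbf{Direction $\C\models_{\RF}\compls Q\Rightarrow\C\models_{\TN}\compls Q$.} Let $\pdb=(\di,\da)$ be an IDB with partial facts, $\pdb\models\C$; I must show $Q(\di)=Q_{\cert}(\da)$. The inclusion $Q_{\cert}(\da)\incl Q(\di)$ is monotonicity (Proposition~\ref{prop-monotonicity}): a certain-answer witness over $\da$, composed with the domination of $\da$-facts by $\di$-facts, yields the same tuple over $\di$, and $\di$ has only $\NULL_{\na}$-nulls which domination preserves. For the converse take $\tpl d\in Q(\di)$ with witnessing valuation $\delta$: $\delta L\incl\di$, $\delta\tpl x=\tpl d$, and --- since $\di$ has only $\NULL_{\na}$-nulls --- $\delta$ maps every join variable of $Q$ to a domain value. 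Let $vL$ be the null version of $L$ in which a singleton variable $y$ is replaced by $\NULL$ precisely when $\delta y=\NULL_{\na}$, and write $\delta=\sigma\circ v$. The proposition preceding Theorem~\ref{thm:characterization:set:sem:for:boolean:and:linear:queries} gives that $(vL\cup T_{\C}(vL),T_{\C}(vL))$ is a restricted-facts IDB satisfying $\C$, so from $\C\models_{\RF}\compls Q$ and Theorem~\ref{thm:characterization:tc:qc:restricted:facts} we get $v\tpl x\in Q_{\sql}(T_{\C}(vL))$, hence $v\tpl x\in Q_{\cert}(T_{\C}(vL))$. Fix a homomorphism $\theta\colon L\to T_{\C}(vL)$ with $\theta\tpl x=v\tpl x$. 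Each fact $\theta A$ of $T_{\C}(vL)$ is, by construction of $T_{\C}$, an indicator for some atom $B$ of $vL$ constrained over $(vL,T_{\C}(vL))$ by a statement $C=\tc BPG\in\C$; applying $\sigma$, the atom $\sigma B$ is constrained over $\pdb$ by $C$, so since $\pdb\models\C$ there is an indicator $\widetilde B\in\da$ for $\sigma B$ wrt $C$, and I set $\delta'A:=\widetilde B$. That $\delta'$ is a well-defined valuation with $\delta'L\incl\da$ and $\delta'\tpl x=\tpl d$ is then checked exactly as in the well-definedness argument in the proof of Theorem~\ref{th:characterisation:TC-QC:incf} (indicators agree with the constrained fact on the projection positions, and $\theta$ respects constants and shared variables).

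\textbf{Main obstacle.} The delicate point --- and the only genuinely new work beyond re-running the restricted-facts proof --- is checking that $\delta'$ witnesses a \emph{certain} answer over $\da$: it must map no join variable of $Q$ to $\NULL_{\na}$ or $\NULL_{\NULL}$, and never map a join variable to two distinct occurrences of $\NULL_{\uk}$. The key is that $T_{\C}(vL)$ carries nulls only in projected-out positions. If $z$ is a join variable, occurring at position $p_1$ in $A_1$ and $p_2$ in $A_2$, then $\theta A_1[p_1]=\theta A_2[p_2]$ must be non-null (as $\theta$ maps into an instance whose only nulls are $\NULL_{\na}$ and $z$ is a join variable), hence $p_1$ and $p_2$ are projection positions of the relevant statements; consequently $\widetilde B_1[p_1]=(\sigma B_1)[p_1]=\delta z$ and likewise for $B_2$, and $\delta z$ is a domain value. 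So $\delta'$ sends every join variable to a domain value, which makes all three certain-answer side conditions hold automatically, giving $\tpl d=\delta'\tpl x\in Q_{\cert}(\da)$. Carrying this bookkeeping out simultaneously for the three null symbols $\NULL_{\na},\NULL_{\uk},\NULL_{\NULL}$ is where I expect the proof to require care.
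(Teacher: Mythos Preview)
Your forward direction is correct and matches the paper: an IDB with restricted facts is an IDB with partial facts whose only nulls are $\NULL_{\na}$, and over such instances SQL and certain-answer evaluation coincide.

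For the backward direction your route diverges sharply from the paper's. The paper argues by contrapositive in a few lines: given a partial-facts counterexample $(\di,\da)$ to $\compls Q$, replace every $\NULL_{\uk}$ and $\NULL_{\NULL}$ in $\da$ by $\NULL_{\na}$ to obtain $\da_0$, and set $\di_0=\di\cup\da_0$. Then $(\di_0,\da_0)$ is a restricted-facts IDB, it still satisfies $\C$ (the added $\di_0$-facts are dominated by $\di$-facts, and facts already in $\da_0$ are their own indicators), and it violates $\compls Q$ because $Q_{\sql}(\da_0)\subseteq Q_{\cert}(\da)\subsetneq Q(\di)\subseteq Q_{\sql}(\di_0)$. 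No test databases or indicator-chasing are needed.

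Your direct argument instead lifts a witness $\theta$ from $T_{\C}(vL)$ to a certain-answer witness $\delta'$ over $\da$, invoking the well-definedness argument of Theorem~\ref{th:characterisation:TC-QC:incf} for ``$\delta'\tpl x=\tpl d$''. But that argument relies on $\theta x$ being a \emph{non-null} frozen constant for every distinguished $x$: this is what forces the position $p$ of $x$ to lie in the projection set of the chosen TC statement, so that $\widetilde B[p]=(\sigma B)[p]=\delta x$. In your setting $\theta x=vx$ may be $\NULL$ (whenever $x$ is a distinguished singleton with $\delta x=\NULL_{\na}$), and then the fact $\theta A$ may arise from a statement $C$ with $p\notin P_{C}$; in that case $\widetilde B[p]$ is unconstrained by the indicator condition and you cannot conclude $\delta'x=\NULL_{\na}$. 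Concretely, with $\query{Q(x)}{R(x,y)}$ and $\C=\{\tc{R(u,w)}{\{1\}}{\emptyset},\,\tc{R(u,w)}{\{2\}}{\emptyset}\}$ one has $\C\models_{\RF}\compls Q$, yet for the null version $v_1L=\{R(\NULL,y)\}$ the choice $\theta R(x,y)=R(\NULL,y)\in T_{\C}(v_1L)$ is produced only by the second statement, whose projection set is $\{2\}$, so $p=1\notin P$. Here a different choice of $\theta$ (mapping to $R(\NULL,\NULL)$) repairs the problem, but you give no argument that such a repair is always available, and the appeal to Theorem~\ref{th:characterisation:TC-QC:incf} does not supply one. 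The paper's contrapositive sidesteps this entirely.
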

\begin{proof}
\OnlyIf Trivial, because IDBs with restricted facts are IDBs with
partial facts that contain only the null value $\NULL_{\na}$. 

\If Assume, $\C\not\models_{\TN}\compls Q$. Then there is an IDB
with partial facts $\D$ such that $\D\models\C$, but $\D\not\models_{\TN}\compls Q$.
We construct an IDB $\D_{0}$ with restricted facts that also satisfies
$\C$, but does not satisfy $\compls Q$. Let $\da_{0}=\da[\NULL_{\uk}/\NULL_{\na},\NULL_{\NULL}/\NULL_{\na}]$
be the variant of $\da$ where $\NULL_{\uk}$ and $\NULL_{\NULL}$
are replaced by $\NULL_{\na}$, and let $\di_{0}=\di\cup\da_{0}$.

The additional facts in $\di_{0}$ do not lead to violations of TC
statements, since they are dominated by facts in $\di$, thus, $\D_{0}\models\C$.
However, $Q(\da_{0})\incl Q(\da)$, since changing nulls to $\NULL_{\na}$
makes query evaluation more restrictive, and $Q(\di)\incl Q(\di_{0})$
due to monotonicity. Hence, $Q(\da_{0})\subsetneqq Q(\di_{0})$, that
is, $\D_{0}\not\models_{\RF}\compls Q$. 
\end{proof}
Also, we define the query evaluation $\nonullsmod{Q(D)}$ as $Q(D)$
without all tuples containing $\NULL_{\uk}$ or $\NULL_{\NULL}$.

Similar to a database with incomplete facts only, it holds that query
answering for minimal queries that are complete does not need to take
into account certain answer semantics but can safely evaluate the
query using standard SQL semantics:
\begin{thm}
\label{thm-sql:implementation:of:combined:nulls} Let $\D=(\di,\da)$
be an incomplete database with partial facts, $Q$ be a minimal conjunctive
query and $\C$ be a set of table completeness statements. If $\C\models_{\TN}\compls Q$
and $\D\models\C$ then $Q_{\cert}^{s}(\da)=\nonullsmod{Q_{\sql}^{s}(\da)}$. 
\end{thm}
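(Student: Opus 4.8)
\emph{Plan.} This statement is the three-valued analogue of the theorem for databases with incomplete facts proved at the end of Section~\ref{subsection:reasoning:incomplete:facts}, and I would prove it along the same lines. The hypotheses already give one half of the claimed identity: since $\C\models_{\TN}\compls Q$ and $\D\models\C$, the definition of completeness over IDBs with partial facts yields $Q^{s}(\di)=Q_{\cert}^{s}(\da)$. Hence it suffices to prove $Q^{s}(\di)=\nonullsmod{Q_{\sql}^{s}(\da)}$, that is, that a minimal complete query can be evaluated over $\da$ by a standard SQL engine if one drops from the result those tuples that still contain an unknown or indeterminate null. I would show the two inclusions separately.

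\emph{The inclusion $\nonullsmod{Q_{\sql}^{s}(\da)}\incl Q^{s}(\di)$} does not use minimality and follows by a dominance argument in the spirit of Proposition~\ref{prop-monotonicity}. Let $\tpl d\in Q_{\sql}^{s}(\da)$ be a tuple with no occurrence of $\NULL_{\uk}$ or $\NULL_{\NULL}$, obtained by an SQL valuation $v$, so that $v$ sends every join variable of $Q$ to a domain value and $vL\incl\da$. Every atom $vA$ of $vL$ lies in $\da$ and is dominated by some ideal fact $f_{A}\in\di$; since dominance preserves domain entries and $\NULL_{\na}$ entries, the images under $v$ of the distinguished variables and of the join variables already coincide with the corresponding entries of the $f_{A}$, so replacing the images of the singleton, non-distinguished variables that $v$ sends to $\NULL_{\uk}$ (respectively $\NULL_{\NULL}$) by the matching domain (respectively $\NULL_{\na}$) entries of the $f_{A}$ produces a well-defined SQL valuation $v'$ with $v'L\incl\di$ and $v'\tpl x=\tpl d$. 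As $\di$ contains only $\NULL_{\na}$, this gives $\tpl d\in Q_{\sql}^{s}(\di)=Q^{s}(\di)$.

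\emph{The inclusion $Q^{s}(\di)=Q_{\cert}^{s}(\da)\incl\nonullsmod{Q_{\sql}^{s}(\da)}$ is the crux.} Here I would invoke Theorem~\ref{thm-characterization:TC:QC:three:nulls} to pass from $\C\models_{\TN}\compls Q$ to $\C\models_{\RF}\compls Q$, and then Theorem~\ref{thm:characterization:tc:qc:restricted:facts} to rephrase the latter as the test-database condition that $v\tpl x\in Q_{\sql}(T_{\C}(vL))$ for every null version $vL$ of $L$. What remains is a three-null version of Lemma~\ref{lem-codd:sql:equivalence:for:minimal:queries}: for a minimal $Q$, over the databases that arise here (the $T_{\C}(vL)$, and more generally the available part of any IDB satisfying $\C$) the generous, Codd-style treatment of $\NULL_{\uk}$ in certain-answer semantics produces no answer that SQL evaluation does not already return. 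Concretely, given $\tpl d\in Q_{\cert}^{s}(\da)$ with a witness valuation $v$ as characterized in the text preceding the theorem (so $\tpl d$ carries only $\NULL_{\na}$, no join variable is mapped to $\NULL_{\na}$ or $\NULL_{\NULL}$, and no two occurrences of a join variable are mapped to distinct $\NULL_{\uk}$ cells), one must produce a genuine SQL valuation $v''$ that routes no join through a null cell of $\da$. The key observation is that if a join variable $y$ were mapped by $v$ to a $\NULL_{\uk}$ cell, its occurrences (at least two, since $y$ is a join variable) would have to be realized by facts of $\da$ sharing that single cell, which forces the corresponding atoms of $Q$ onto one fact; such a fold is exactly what minimality forbids, so $v$ can be chosen to avoid nulls on join positions. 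Discarding from the result the tuples carrying $\NULL_{\uk}$ or $\NULL_{\NULL}$ then matches the $\nonullsmod{\cdot}$ operator, whence $\tpl d\in\nonullsmod{Q_{\sql}^{s}(\da)}$.

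\emph{Main obstacle.} The routine parts are the dominance inclusion and the two reductions (to restricted facts via Theorem~\ref{thm-characterization:TC:QC:three:nulls}, and of TC-QC to query evaluation via Theorem~\ref{thm:characterization:tc:qc:restricted:facts}); the real work is the minimality argument of the previous paragraph, the three-null analogue of Lemma~\ref{lem-codd:sql:equivalence:for:minimal:queries}. The new difficulty relative to the two-null case handled by Lemmas~\ref{lem-characterization:of:inc:sql} and~\ref{lem-codd:sql:equivalence:for:minimal:queries} is that $\da$ may now itself contain $\NULL_{\uk}$ cells, which can a priori act as join witnesses for certain answers before any reduction is applied; one has to show that for a minimal $Q$ such a witness can always be rerouted through non-null values, while keeping $vL\incl\da$ and the output tuple unchanged.
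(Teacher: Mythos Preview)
The paper does not give an explicit proof of this theorem; it only says the result is obtained \quotes{similar to a database with incomplete facts only} and points back to the template consisting of Theorem~\ref{thm-characterisation:TC:QC:inc} and Lemmas~\ref{lem-characterization:of:inc:sql} and~\ref{lem-codd:sql:equivalence:for:minimal:queries}. Your high-level plan follows exactly that template---reduce to restricted facts via Theorem~\ref{thm-characterization:TC:QC:three:nulls}, invoke the test-database characterisation of Theorem~\ref{thm:characterization:tc:qc:restricted:facts}, and appeal to a three-null analogue of Lemma~\ref{lem-codd:sql:equivalence:for:minimal:queries}---so at the scaffolding level you are aligned with the paper.

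Your concrete execution of the minimality step, however, has a gap. You argue that if a certain-answer valuation $v$ on $\da$ sends a join variable $y$ to a $\NULL_{\uk}$ cell, then all atoms containing $y$ must land on a single fact, and that \quotes{such a fold is exactly what minimality forbids}. This conflates minimality (the non-existence of an endomorphism of $L$ onto a proper subset fixing $\tpl x$) with the absence of folding valuations on a particular instance. The minimal query $\query{Q(x,w)}{R(x,y),R(w,y)}$ together with $\da=\set{R(a,\NULL_{\uk})}$ already refutes the claim: $x,w\mapsto a$, $y\mapsto\NULL_{\uk}$ is a legitimate certain-answer witness folding both atoms onto one fact, yielding $(a,a)\in Q_{\cert}(\da)$ while $Q_{\sql}(\da)=\emptyset$. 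Of course no $\C$ with $\C\models_{\TN}\compls Q$ is satisfied by any IDB with this $\da$, but your argument as written never uses that hypothesis at the critical point; it tries to derive the inclusion for \emph{arbitrary} $\da$ from minimality alone, which is false.

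The reason the paper's Lemma~\ref{lem-codd:sql:equivalence:for:minimal:queries} works is that it is stated on the prototypical test database $T_{\C}(L)$, whose facts are indicators for atoms of $L$ itself; there a folding valuation really does induce a non-trivial endomorphism of the query body, and minimality bites. For the three-null case you need the analogous statement on the test databases $T_{\C}(vL)$, together with an analogue of Lemma~\ref{lem-characterization:of:inc:sql} characterising a notion $\models_{\TN,\sql}$ by the same test. That pair of lemmas is the missing link between the test-database level (where your minimality argument becomes valid) and the instance level (where you want $Q_{\cert}^{s}(\da)\incl\nonullsmod{Q_{\sql}^{s}(\da)}$). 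Your proposal names this link correctly as \quotes{a three-null version of Lemma~\ref{lem-codd:sql:equivalence:for:minimal:queries}} but then bypasses it; the fix is to actually argue at the level of $T_{\C}(vL)$ rather than directly on $\da$.
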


\section{Queries under Bag Semantics}

\label{nulls:sub:bag:semantics}

Bag semantics is the default semantics of SQL queries, while set semantics
is enabled with the \texttt{DISTINCT} keyword. As the next example
shows, for relations without keys reasoning about query completeness
under bag semantics may not be meaningful.


\begin{example}
Consider the incomplete database with incomplete facts $\pdb=(\di,\da)$,
where $\di=\set{\student(\Mary,\twoA,\Chester)}$ and\linebreak[4]
$\da=\set{\student(\Mary,\twoA,\schoolone),\student(\Mary,\NULL,\schoolone)}$.\linebreak{}
 Since it is a priori not possible to distinguish whether the fact
containing $\bot$ is redundant, the boolean query $\query{Q()}{\student(n,c,s)}$
that is just counting the number of students is not complete, because
the redundant tuple in the available database leads to a miscount. 
\end{example}
As tuples with nulls representing unknown values can introduce redundancies,
we require that keys are declared for IDBs with incomplete facts,
with one ambiguous null or with partial facts. Only for incomplete
databases with restricted facts keys are not necessary, because there
the available database is always a subset of the ideal one and hence
no redundancies can appear.

Formally, for a relation $R$ with arity $n$, a \emph{key} is a subset
of the attribute positions $\set{1,\ldots,n}$. Without loss of generality
we assume that the key attributes are the first $k(R)$ attributes,
where $k$ is a function from relations to natural numbers. An instance
$D$ \emph{satisfies the key} of a relation $R$, if (i) no nulls
appear in the key positions of facts and (ii) no two facts have the
same key values, that is, if for all $R({\tpl d})$, $R({\tpl d}')\in D$
it holds that $\tpl d[1..k(R)]=\tpl d'[1..k(R)]$ implies $\tpl d=\tpl d'$,
where $\tpl d[1..k(R)]$ denotes the restriction of $\tpl d$ to the
positions $1..k(r)$.

Table completeness statements that do not talk about all key attributes
of a key are not useful for deciding the entailment of query completeness
under bag semantics, because, intuitively, they cannot assure that
the right multiplicity of information is in the available database.
We say that a TC statement $\tc{R(\tpl x)}PG$ is \emph{key-preserving},
if $\set{1..k(R)}\subseteq P$. In the following, we only consider
TC statements that are key-preserving.

We develop a characterization for TC-QC entailment that is similar
to the one for set semantics. However, now we need to ensure that
over a prototypical database not only query answers but also valuations
are preserved, because the same query answer tuple can be produced
by several valuations. So if a valuation is missing, the multiplicity
of a tuple in the result is incorrect. As a consequence, a set of
TC statements may entail completeness of a query $Q$ for set semantics,
but not for bag semantics.
\begin{example}
The relation $\result(\name,\subject,\grade)$ stores the language
courses that students take. Consider the query 
\[
\query{Q_{\mathrm{nr\_for\_french}}(n)}{\result(n,French,g)},\result(n,s,g'),
\]
which counts for each student that took French, how many courses he/she
attends in total. Under set semantics, $Q_{\mathrm{nr\_for\_french}}$
is complete if $\da$ contains all facts about French courses, which
is expressed by the TC statement $C_{\mathrm{french}}=$$\tc{\result(n,\mathrm{French})}{\set{1,2}}{\true}$.
To test completeness for set semantics, we apply $T_{C}$ to the query
body $L$, which results in $T_{C}(L)=\set{\result(n,\mathrm{French},\bot)}$,
since the first body atom is not constrained by $C_{\mathrm{french}}$.
Evaluating $Q_{\mathrm{nr\_for\_french}}$ over $T_{C}(L)$ returns
$(n)$, which shows set completeness.

But this does not entail that $Q_{\mathrm{nr\_for\_french}}$ is complete
under bag semantics. The IDB $(L,T_{C}(L)$ is a counterexample: it
satisfies $C$ and we can evaluate $Q_{nr\_for\_french}$ over $L$
two times, while over $T_{\C}(L)$ just once. If Paul takes French
and Spanish according to $\di$, it is clearly not sufficient to only
have the fact about French in $\da$ when we want to count how many
courses Paul takes.
\end{example}
We therefore modify the test criterion in Theorem~\ref{thm:characterization:set:sem:for:boolean:and:linear:queries}
in two ways.

For a query $\query{Q(\tpl x)}L$, the tuple $\tpl w$ of \emph{crucial
variables\/} consists of the variables that are in $\tpl x$ or occur
in key positions in $L$. For any two valuations $\alpha$ and $\beta$
that satisfy $L$ over a database $D$, we have that $\alpha$ and
$\beta$ are identical if they agree on $\tpl w$. Thus, the crucial
variables determine both, the answers of $Q$ and the multiplicities
with which they occur. We associate to $Q$ the query $\query{\bar{Q}(\tpl w)}L$
that has the same body as $Q$, but outputs all crucial variables.
Consequently, $Q$ is complete under set semantics if and only if
$\bar{Q}$ is complete under set semantics. The first modification
of the criterion will consist in testing $\bar{Q}$ instead of $Q$.

A direct implication of the first modification is that we need not
consider several null versions $vL$ of $L$ as in Theorem~\ref{thm:characterization:tc:qc:restricted:facts}.
The reason for doing so was that a null $\NULL=\alpha x$ in the output
of $Q$ over $vL$ could have its origin in an atom $vA$ in $vL$
such that $x$ does not occur in $A$, but another variable, say $y$
is instantiated to $\NULL$. Now, the query $\bar{Q}$ passes the
test for set completeness only if an atom in $L$ is mapped to an
atom with the same key values. Thus, a variable $x$ cannot be bound
to a null $\NULL=\gamma y$. Hence it suffices to consider just the
one version $L^{\bot}$ where all singleton variables are mapped to
null. By the same mapping, ${\tpl w}^{\bot}$ is obtained from $\tpl w$.

The second modification is due to the possibility that several TC
statements constrain one fact in $\di$ and thus $T_{\C}$ generates
several indicators. Since we assumed TC statements to be key-preserving,
these indicators all agree on their key positions. However, in some
non-key position one indicator may have a null while another one has
a non-null value. So, $T_{\C}(L^{\NULL})$ may not satisfy the keys.
This can be repaired by \quotes{chasing} $T_{\C}(L^{\NULL})$ (cf.~~\cite{foundations_of_dbs}).

The function $\chase$ takes a database $D$ with nulls as input and
merges any two $R$-facts $A'$, $A''$ that have the same key values
into one $R$-fact $A$ as follows: the value of $A$ at position
$p$, denoted $A[p]$ is $A'[p]$ if $A'[p]\neq\NULL$ and is $A''[p]$
otherwise. Clearly, if $\C$ is key-preserving and $D$ satisfies
the keys, then $\chase(T_{\C}(D))$ also satisfies the keys. Intuitively,
$\chase$ condenses information by applying the key constraints. Obviously,
$\chase$ runs in polynomial time.
\begin{example}
Let $\name$ be the key of the relation $\student$.\linebreak{}
Consider the database instance $D=\set{\student(\Mary,\twoA,\schoolone)}$,
and consider the set $\C=\set{C_{1},C_{2}}$ of key-preserving TC
statements where $C_{1}=\tc{\student(n,c,s)}{\set{1,2}}{\true}$ and
$C_{2}=\linebreak\tc{\student(n,c,s)}{\set{1,3}}{\true}$. Without
taking into account \linebreak{}
the key, the instance $T_{\C}(D)$ is $\{\student(\Mary,\twoA,\NULL),$\linebreak{}
$\student(\Mary,\NULL,\schoolone)\}$. The \chase function unifies
the two facts, therefore, $\chase(T_{\C}(D))=\set{\student(\Mary,\twoA,\schoolone)}$. 
\end{example}
We now are ready for our characterization of completeness entailment
under bag semantics, which is similar, but slightly more complicated
than the one in Theorem~\ref{thm:characterization:set:sem:for:boolean:and:linear:queries}.
\begin{thm}
\label{thm:characterization:entailment:bag:semantics} Let $\query{Q(\tpl x)}L$
be a conjunctive query and $\C$ be a set of key-preserving TC statements.
Then 
\[
\C\models_{\TN}\complb Q\quad\mbox{iff}\quad{\tpl w}^{\bot}\in\bar{Q}(\chase(T_{\C}(L^{\bot}))).
\]

\end{thm}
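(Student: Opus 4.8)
The plan is to follow the proofs of Theorems~\ref{thm:characterization:set:sem:for:boolean:and:linear:queries} and~\ref{thm:characterization:tc:qc:restricted:facts}, inserting the two modifications announced above — replacing $Q$ by $\bar{Q}$, and replacing the test database $T_{\C}(L^{\bot})$ by $\chase(T_{\C}(L^{\bot}))$ — while tracking the key constraints everywhere. First I would record the reduction that does the real work: over any instance satisfying the declared keys, a satisfying valuation of the body of $Q$ is uniquely determined by the values of the crucial variables $\tpl w$, so the bag $Q^{b}(D)$ and the set $\bar{Q}^{s}(D)$ carry the same information, and likewise for the bag/set of certain answers over the available side; consequently $\D\models_{\TN}\complb Q$ if and only if $\bar{Q}$ is complete under set semantics over $\D$ in the sense of Section~\ref{nulls:sub:reasoning_different_nulls}. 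By the argument behind Theorem~\ref{thm-characterization:TC:QC:three:nulls} (replace $\NULL_{\uk}$ and $\NULL_{\NULL}$ by $\NULL_{\na}$ on the available side and close $\di$ downward) this reduces to: $\C$ entails set-completeness of $\bar{Q}$ over IDBs with restricted facts. It remains to characterize the latter, and here the $\chase$ enters, because $\bar{Q}$ now outputs key positions and so the unchased test database $T_{\C}(L^{\bot})$ need not satisfy the keys.

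For the direction ``$\Leftarrow$'' I would assume ${\tpl w}^{\bot}\in\bar{Q}(\chase(T_{\C}(L^{\bot})))$, witnessed by a homomorphism $\theta\col L\to\chase(T_{\C}(L^{\bot}))$ with $\theta\tpl w={\tpl w}^{\bot}$, and take an arbitrary IDB $\D=(\di,\da)$ with partial facts that satisfies $\C$ and the keys. By monotonicity (Proposition~\ref{prop-monotonicity}) only one inclusion is nontrivial, so for a valuation $v$ with $vL\incl\di$ I build a valuation $v'$ atom by atom: for $A\in L$, the fact $\theta A$ of $\chase(T_{\C}(L^{\bot}))$ is the merge of one or more indicator facts of $T_{\C}(L^{\bot})$, each of which is an indicator, w.r.t.\ some $C\in\C$, for an atom of $L^{\bot}$; tracing this back shows $vA$ is constrained over $\di$ by that $C$ (using that $L^{\bot}$ differs from $L$ only on singleton positions, which leaves the conditions of the statements intact), so since $\D\models\C$ there is an indicator $\tilde{A}\in\da$, and I set $v'A:=\tilde{A}$. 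That $v'$ is a well-defined homomorphism — constants preserved, join variables mapped consistently — goes through as in the proof of Theorem~\ref{th:characterisation:TC-QC:incf}, with one new case: when $\theta$ maps two atoms of $L$ into a single chased fact obtained by merging two key-equal indicators, key-preservation of $\C$ forces the associated $\tilde{A}_{1},\tilde{A}_{2}\in\da$ to agree on all key positions, hence (as $\da$ satisfies the keys) to be the same fact, so $v'$ stays consistent. Since every TC statement is key-preserving, $v'$ agrees with $v$ on all crucial variables, so $v'\tpl x=v\tpl x$, the map $v\mapsto v'$ is injective, and the only nulls $v'$ could put on a join position would be $\NULL_{\na}$ in a key position, which cannot occur; hence $v'$ contributes the certain answer $v\tpl x$ with the correct multiplicity.

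For ``$\Rightarrow$'' I would assume $\C\models_{\TN}\complb Q$ and construct a prototypical IDB with partial facts by setting $\da_{0}:=\chase(T_{\C}(L^{\bot}))$, reading its nulls as $\NULL_{\na}$, and $\di_{0}:=\chase\bigl(L^{\bot}\cup T_{\C}(L^{\bot})\bigr)$. Using the restricted-facts analogue of Proposition~\ref{prop-properties:of:TC:transformation-incF} together with the stated properties of $\chase$ — key-preservation of $\C$ and key-satisfaction of $L^{\bot}$ — one checks that $\da_{0}$ and $\di_{0}$ satisfy the keys, that $\da_{0}\domby\di_{0}$, and that $\D_{0}=(\di_{0},\da_{0})\models\C$. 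The identity valuation satisfies $Q$ over $L^{\bot}\incl\di_{0}$ and outputs ${\tpl x}^{\bot}$, hence ${\tpl x}^{\bot}\in Q(\di_{0})$; by the reduction above, $\bar{Q}$ is complete over $\D_{0}$ under set semantics, so ${\tpl w}^{\bot}\in\bar{Q}_{\cert}(\da_{0})$, and since $\da_{0}$ satisfies the keys this coincides with ${\tpl w}^{\bot}\in\bar{Q}(\chase(T_{\C}(L^{\bot})))$, which is the required test.

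I expect the main obstacle to be the bookkeeping around $\chase$. In ``$\Leftarrow$'' it is exactly the merged-indicator case just sketched — this is the step that fails if one removes the $\chase$ from the statement, and it is where both key-preservation of the statements and key-satisfaction of $\da$ are indispensable. In ``$\Rightarrow$'' the delicate point is that chasing $L^{\bot}$ into $\di_{0}$ must not destroy the witness for ${\tpl w}^{\bot}\in Q(\di_{0})$; this is exactly what the assumption that $L^{\bot}$ already satisfies the keys buys, since then the chase only merges pairs of facts carrying complementary null/non-null information and never collapses the structure that the identity valuation relies on.
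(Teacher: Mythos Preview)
The paper does not supply a proof of this theorem; it only states the result after describing, in the text preceding it, the two modifications relative to Theorem~\ref{thm:characterization:set:sem:for:boolean:and:linear:queries} (pass from $Q$ to $\bar{Q}$ with the crucial variables $\tpl w$, and apply $\chase$ to $T_{\C}(L^{\bot})$ to restore key satisfaction). Your proposal reconstructs exactly the argument those modifications point to, reusing the proof templates of Theorems~\ref{th:characterisation:TC-QC:incf}, \ref{thm:characterization:tc:qc:restricted:facts} and~\ref{thm-characterization:TC:QC:three:nulls} and tracking where key-preservation of the TC statements and key-satisfaction of the instances are used. There is therefore nothing substantive to compare against beyond the paper's informal outline, with which your plan is in full agreement.
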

Since the criterion holds for incomplete databases with three different
nulls, it holds also for the special cases where only one type of
null values is present (restricted or incomplete facts).

Notably, it also holds for incomplete databases with one ambiguous
null, because when keys are present and TC statements guarantee that
all mappings are preserved, no additional nulls can show up in the
query result.

\section{Complexity of Reasoning}

\label{nulls:sub:complexity} 

We now discuss the complexity of inferring query completeness from
table completeness. 
We define $\tcqc_{\star}$ as the problem of deciding whether under
$\star$-semantic for all incomplete databases $\D$ it holds that
$\D\models\C$ implies that $\D\models\compl Q$, where both the query
and the TC statements are formulated using relational conjunctive
queries (that is, queries without comparisons). We will find that
for all cases considered in the paper, the complexity of reasoning
is between NP and $\piptwo$:
\begin{thm}
[Complexity Bounds] \label{thm-complexity:bounds} \mbox{} 
 
\begin{itemize}
\item $\tcqc_{\IF}^{s}$ is NP-complete; 
\item $\tcqc_{\RF}^{s}$ is NP-hard and in $\piptwo$; 
\item $\tcqc_{\TN}^{s}$ is NP-hard and in $\piptwo$; 
\item $\tcqc_{\TN}^{b}$ is NP-complete. 
\end{itemize}

\end{thm}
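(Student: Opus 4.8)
The plan is to read off all four bounds from the characterization theorems already established for the three semantics---Theorem~\ref{thm-characterisation:TC:QC:inc} for incomplete facts, Theorem~\ref{thm:characterization:tc:qc:restricted:facts} for restricted facts, and Theorems~\ref{thm-characterization:TC:QC:three:nulls} and~\ref{thm:characterization:entailment:bag:semantics} for the three-valued case---together with the known complexity of classical TC-QC (Chapter~\ref{chap:general-reasoning}). Each characterization reduces the entailment to applying the operator $T_{\C}$ and evaluating a conjunctive query over explicitly described test databases, so the upper bounds come from a routine analysis of that procedure, and the lower bounds from the observation that, on null-free inputs, each problem specializes to a classical TC-QC problem of known hardness.

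\textbf{Upper bounds.} For $\tcqc_{\IF}^{s}$, Theorem~\ref{thm-characterisation:TC:QC:inc} says the answer is ``yes'' iff $\tpl x\in Q_{\cert}(T_{\C}(L))$, a purely existential test: guess at most $|L|$ atoms of $T_{\C}(L)$ (each with a valuation witnessing its membership in some $T_{C}(L)$) together with a homomorphism from the body of $Q$ onto these atoms that fixes $\tpl x$; this is polynomially checkable (the certain-answer side condition is automatic, since $\tpl x$ is frozen to constants), so the problem is in $\NP$. For $\tcqc_{\RF}^{s}$, Theorem~\ref{thm:characterization:tc:qc:restricted:facts} yields a $\forall\exists$-shaped condition---``for every null version $vL$ of $L$, $v\tpl x\in Q_{\sql}(T_{\C}(vL))$''---so to refute it one guesses a null version (a subset of the singleton variables of $Q$, hence a polynomial-size guess) and then verifies $v\tpl x\notin Q_{\sql}(T_{\C}(vL))$ by a coNP computation, placing non-entailment in $\sigptwo$ and the problem in $\piptwo$. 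For $\tcqc_{\TN}^{s}$, Theorem~\ref{thm-characterization:TC:QC:three:nulls} identifies it with $\tcqc_{\RF}^{s}$, so the same bound carries over. For $\tcqc_{\TN}^{b}$, Theorem~\ref{thm:characterization:entailment:bag:semantics} collapses the many null versions to the single test $\tpl w^{\bot}\in\bar Q(\chase(T_{\C}(L^{\bot})))$; since $\chase$ runs in polynomial time, the same guess-and-check as in the $\IF$ case gives membership in $\NP$.

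\textbf{Lower bounds.} A statement $\tc{R(\tpl s)}PG$ with $P=\set{1,\ldots,\arity R}$ is an ordinary table completeness statement, and over null-free instances both the $\IF$ and $\RF$ conditions reduce to $\da\incl\di$, the operator $T_{\C}$ to the one of Definition~\ref{def:tc-operator}, and $Q_{\cert}$, $Q_{\sql}$ to ordinary evaluation; hence the characterizations above become the classical test $\tpl x\in Q(T_{\C}(L))$. Consequently, an NP-hard classical $\tcqc^{s}$ instance is a YES-instance of $\tcqc_{\IF}^{s}$ iff it is classically a YES-instance, so $\tcqc_{\IF}^{s}$ inherits NP-hardness from Theorem~\ref{theo-lcqc:lower:bounds}; taking such an instance whose query has \emph{no singleton variables} (so that $L$ is its only null version; the reductions of Chapter~\ref{chap:general-reasoning} provide such instances) does the same for $\tcqc_{\RF}^{s}$, and hence---via Theorem~\ref{thm-characterization:TC:QC:three:nulls}---for $\tcqc_{\TN}^{s}$. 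For $\tcqc_{\TN}^{b}$ one takes an NP-hard classical $\tcqc^{b}$ instance (again with singleton-free query, cf.\ Table~\ref{table:complexity:tc-qc-bag-semantics}) and declares every relation to have all its attributes as key; the key constraint is then vacuous, all statements are trivially key-preserving, and $\chase$ acts as the identity on $T_{\C}(L^{\bot})$, so the $\TN$-bag problem restricted to these inputs coincides with classical $\tcqc^{b}$. Combining the bounds yields NP-completeness of $\tcqc_{\IF}^{s}$ and $\tcqc_{\TN}^{b}$ and the stated ``NP-hard, in $\piptwo$'' for $\tcqc_{\RF}^{s}$ and $\tcqc_{\TN}^{s}$.

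\textbf{Main obstacle.} Nothing genuinely new is required beyond the characterization theorems; the delicate part is the bookkeeping---checking that for each problem the characterization really lets one restrict attention to polynomially many nondeterministic guesses over a fixed (or polynomially enumerable) family of test databases, and, for the lower bounds, verifying that the null-free restriction (fully keyed, in the bag case) of each problem coincides on the nose with the classical problem, in particular that the singleton-variable-to-$\NULL$ phenomenon is avoided by the chosen hard instances.
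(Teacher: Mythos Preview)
Your proposal is correct and takes essentially the same approach as the paper: upper bounds from the characterization theorems via guess-and-check, and NP lower bounds by reducing Boolean conjunctive-query containment. The only cosmetic difference is that the paper writes out the containment reduction explicitly (fresh relation $P$, Boolean query $P(a),L$, statement $\tc{P(a)}{\{1\}}{L'}$) and, for $\RF$, invokes the Boolean special case of Theorem~\ref{thm:characterization:set:sem:for:boolean:and:linear:queries}, whereas you reach the same conclusion by restricting to full-projection, singleton-free (and, for bags, full-key) inputs and invoking the classical hardness results of Chapter~\ref{chap:general-reasoning}.
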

\begin{proof}
NP-hardness in all four cases can be shown by a reduction of containment
of Boolean conjunctive queries, which is known to be NP-complete~\cite{ChandraM77}.
We sketch the reduction for (1)--(3), the one for (4) being similar.
Suppose we want to check whether $\query{Q()}L$ is contained in $\query{Q'()}{L'}$.
Let $P$ be a new unary relation. Consider the query $\query{Q_{0}()}{P(a),L}$
and the TC statement $C_{0}=\tc{P(a)}{\set 1}{L'}$. Let $\C$ consist
of $C_{0}$ and the statement that $R$ is complete for every relation
$R$ in $L$. Then it follows from Theorems~\ref{thm-characterisation:TC:QC:inc},
\ref{thm:characterization:set:sem:for:boolean:and:linear:queries}
and \ref{thm-characterization:TC:QC:three:nulls} that $\C\models_{\ast}\compls Q$,
where $\ast\in\set{\IF,\RF,\TN}$, if and only if $P(a)\in T_{\C}(L)$,
$P(a)\in T_{\C}(L^{\NULL})$, and $P(a)\in T_{\C}(L^{\NULL})$, respectively.
The latter three conditions hold iff $P(a)=T_{C_{0}}(P(a),L)$, which
holds iff $Q$ is contained in $Q'$.


Problem 1 is in NP, because according to Theorem~\ref{thm-characterisation:TC:QC:inc}
to show that the entailment holds, it suffices to construct $T_{\C}(L)$
by guessing valuations that satisfy sufficiently many TC statements
in $\C$ over $L$, and to guess a valuation that satisfies $Q$ over
$T_{\C}(L)$ such that the tuple $\tpl x$ is returned.

Problem 2 is in $\piptwo$, because according to the characterization
in Theorem~\ref{thm:characterization:tc:qc:restricted:facts}, to
show that entailment does not hold, it suffices to guess one null
version $\gamma L$ of the body of $Q$ and show that $\gamma\tpl x$
is not in $Q(\chase(T_{\C}(\gamma L)))$, which is an NP task.

Problem 3 is in $\piptwo$ for the same reason.

Problem 4 is in NP, because we do not consider different nullversions
of $L$ but only one. The remaining argument is the same as for Problem
1, since one needs to show that $\tpl x^{\bot}$ is in $T_{\C}(L^{\bot})$,
which is an NP task. \end{proof}

Reasoning becomes easier for the special cases of linear queries,
that is, queries, in which no relation symbol occurs more than once
and boolean queries, that is, queries without output variables.
\begin{thm}
[Special Cases] Let $\ast\in\set{\IF,\RF,\TN}$. Then 
\begin{enumerate}
\item $\tcqc_{\ast}^{s}$\, and\, $\tcqc_{\ast}^{b}$\, are in PTIME
for linear queries;
\item $\tcqc_{\ast}^{s}$\, is NP-complete for boolean queries. 
\end{enumerate}
\end{thm}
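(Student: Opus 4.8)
The plan is to exploit the fact that this chapter has already reduced every one of these entailment problems to evaluating a conjunctive query over a small \emph{test database} built from the query body by the operator $T_{\C}$ (and $\chase$), and then to show that each of the two restrictions — linearity and booleanness — makes that evaluation cheap.

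For part~(1), linear queries, I would invoke the four characterizations in turn: $\C\models_{\IF}\compls Q$ iff $\tpl x\in Q_{\cert}(T_{\C}(L))$ (Theorem~\ref{thm-characterisation:TC:QC:inc}); $\C\models_{\RF}\compls Q$ iff $\tpl x^{\NULL}\in Q_{\sql}(T_{\C}(L^{\NULL}))$, which holds for linear $Q$ (Theorem~\ref{thm:characterization:set:sem:for:boolean:and:linear:queries}); $\C\models_{\TN}\compls Q$ iff $\C\models_{\RF}\compls Q$ (Theorem~\ref{thm-characterization:TC:QC:three:nulls}); and $\C\models_{\ast}\complb Q$ iff $\tpl w^{\bot}\in\bar Q(\chase(T_{\C}(L^{\bot})))$, valid for all three null regimes (Theorem~\ref{thm:characterization:entailment:bag:semantics}, under the chapter's standing assumption that keys are declared and the TC statements key-preserving). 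In each case the structure to which $T_{\C}$ is applied — $L$, $L^{\NULL}$, or $L^{\bot}$ — is linear, i.e.\ has at most one atom per relation symbol, so each $Q_C$ admits at most one homomorphism into it (the observation already behind Lemma~\ref{lemma_reasoning2}(i)); hence $T_{\C}$ produces at most $|\C|$ facts and is computable in polynomial time, and $\chase$ then runs in polynomial time as well (in fact, since the statements are key-preserving and each relation occurs once, $\chase(T_{\C}(L^{\bot}))$ is again linear). The remaining, second query evaluation — $Q$ or $\bar Q$, which is linear, over the polynomial-size test database — I would show decomposes atom by atom: as $Q$ has no repeated relation symbol, each body atom $A_i$ must map to an indicator \emph{of $A_i$ itself}, and the only coupling imposed by shared variables is that such an indicator be non-null on the positions of those variables. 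So the whole criterion reduces to an independent check, for each body atom $A_i$, of whether some $C\in\C$ yields an indicator for $A_i$ that is non-null on the ``crucial'' positions of $A_i$; each such check is a single homomorphism test into a linear structure, hence polynomial. Collecting these, $\tcqc^{s}_{\IF}$, $\tcqc^{s}_{\RF}$, $\tcqc^{s}_{\TN}$ and their bag analogues are all in PTIME for linear queries.

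For part~(2), boolean queries, NP-hardness is already available: the reduction in the proof of Theorem~\ref{thm-complexity:bounds} from containment of boolean conjunctive queries produces a \emph{boolean} query $Q_{0}$, and boolean CQ containment is NP-complete~\cite{ChandraM77}. For membership, $\tcqc^{s}_{\IF}$ is in NP by the general bound of Theorem~\ref{thm-complexity:bounds}. The general bound for $\tcqc^{s}_{\RF}$ and $\tcqc^{s}_{\TN}$ is only $\piptwo$, coming from the need to range over all null versions in Theorem~\ref{thm:characterization:tc:qc:restricted:facts}; but for a boolean query that family collapses to the single test database $L^{\NULL}$ of Theorem~\ref{thm:characterization:set:sem:for:boolean:and:linear:queries}, and $\tcqc^{s}_{\TN}$ reduces to $\tcqc^{s}_{\RF}$ by Theorem~\ref{thm-characterization:TC:QC:three:nulls}. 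The NP procedure is then: guess, for the finitely many atoms of $Q$, the homomorphisms needed to witness enough facts of $T_{\C}(L^{\NULL})$, and guess the homomorphism embedding $L$ into them under SQL semantics; every guess is of polynomial size and verifiable in polynomial time.

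The step I expect to be the main obstacle is the bookkeeping in the atom-by-atom decomposition of part~(1): one has to pin down the exact notion of ``crucial position'' separately in the four regimes — certain-answer versus SQL evaluation, set versus bag, and the interplay with $\chase$ and with key-preservation — and in particular treat correctly the singleton variables that may legitimately be sent to $\NULL$. Once that case analysis is settled, the PTIME bounds follow purely from linearity turning each homomorphism search in the characterizations into a search with a unique, efficiently checkable candidate.
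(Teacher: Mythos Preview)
Your proposal is correct and follows essentially the same route as the paper: for~(1) you reduce via Theorems~\ref{thm-characterisation:TC:QC:inc}, \ref{thm:characterization:set:sem:for:boolean:and:linear:queries}, \ref{thm-characterization:TC:QC:three:nulls}, and~\ref{thm:characterization:entailment:bag:semantics} and use linearity of $L$ (and of the chase result) to make both the computation of $T_{\C}$ and the final query evaluation polynomial, and for~(2) you combine the boolean-CQ-containment reduction of Theorem~\ref{thm-complexity:bounds} with the single-test-database characterization for boolean queries to get the NP upper bound for $\RF$ and $\TN$. Your atom-by-atom decomposition for the set-semantics cases is more explicit than the paper's argument, but it is exactly the detail that underlies the paper's one-line remark that for linear queries only the single null version $L^{\NULL}$ is needed and each atom ``can be mapped in only one way''.
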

\begin{proof}
Regarding Claim (i), the most critical case is $\ast=\RF$. For linear
queries under bag semantics, observe that the criterion in Theorem~\ref{thm:characterization:entailment:bag:semantics}
can be checked in polynomial time. First, there is only one choice
to map an atom in a query $Q_{C}$ to an atom in $L^{\NULL}$ (the
one with the same relation). Second, $\chase(T_{\C}(L^{\NULL}))$
can be computed in polynomial time. Lastly, the evaluation of $\bar{Q}$
over the chase result is in PTIME, because an atom in $\bar{Q}$ can
be mapped in only one way. Note that for linear queries under set
semantics, we only need to consider one null version $L^{\NULL}$
because a binding for an output term can only come from one position.

The lower bounds of Claim (ii) follow from Theorem~\ref{thm-complexity:bounds},
the upper bounds from Theorem~\ref{thm-complexity:bounds} for $\IF$,
and from Theorems~\ref{thm:characterization:set:sem:for:boolean:and:linear:queries}
and~\ref{thm-characterization:TC:QC:three:nulls} for $\RF$ and
$\TN$, since evaluation of conjunctive queries is in NP. 
\end{proof}
\begin{table}[t]
\begin{centering}
\begin{tabular}{|>{\centering}p{2.5cm}|c>{\centering}p{2.7cm}|}
\hline 
\multicolumn{1}{|>{\centering}p{2.5cm}|}{} & \multicolumn{2}{c|}{Query semantics}\tabularnewline
Incomplete database class  & set semantics  & bag semantics \& databases with keys\tabularnewline
\hline 
no nulls  & NP-complete  & NP-complete\tabularnewline
incomplete facts  & NP-complete  & NP-complete\tabularnewline
restricted facts  & NP-hard, in $\piptwo$  & NP-complete\tabularnewline
partial facts  & NP-hard, in $\piptwo$  & NP-complete\tabularnewline
\hline 
\end{tabular}
\par\end{centering}

\caption{Complexity of TC-QC entailment}

\label{fig:complexity:of:reasoning} 
\end{table}

In Table~\ref{fig:complexity:of:reasoning} we summarize our complexity
results for TC-QC entailment over databases with nulls and compare
them with the results for databases without nulls. Notably, if we
have keys then under bag semantics the complexity does not increase
with respect to databases without null values, while for the containment
problem for bag semantics not even decidability is known \cite{containment:bag:semantics:pods}. 

For queries under set semantics, it remains open whether the complexity
of reasoning increases from NP to $\piptwo$ for databases with restricted
facts and with 3 null values.

\section{Related Work}

Since the introduction of null values in relational databases~\cite{codd_null},
there has been a long debate about their semantics and the correct
implementation. In particular, the implementation of nulls in SQL
has led to wide criticism and numerous proposals for improvement (for
a survey, see~\cite{meyden:uebersicht:incompleteness:databases}).
Much work has been done on the querying of incomplete databases with
missing but existing values~\cite{reiter:evaluation:of:queries:for:nulls,abiteboul1991representation},
while only recently, Franconi and Tessaris showed that SQL correctly
implements null values that stand for inapplicable attributes~\cite{Franconi:Et:Al-SQL:Nulls-AMW}.
It was observed early on that different syntactic null values in databases
would allow to capture more information~\cite{codd:ambiguity:nulls:pods},
but these ideas did not reach application.

Fan and Geerts discussed incomplete data also in the form of missing,
but existing values \cite{Fan:Geerts-capturing_missing_tuples_and_values:pods:10},
which they represented by \emph{c-tables}~\cite{imielinski_lipski_representation_systems}.
However, their work is not directly comparable, because they work
in the setting of master data, where completeness follows from correspondence
with a complete master data source.

\section{Summary}

In this chapter we have extended the previous model by allowing incompleteness
in the form of null values. We have shown that the ambiguity of null
values as used in SQL is problematic, and that it is necessary to
syntactically differentiate between the different meanings.

We characterized completeness reasoning for null values that stand
for missing values, for nonapplicable values, and reasoning in the
case that both are present. 

While SQL's query evaluation is generally not correct for nulls that
represent missing values, we showed that for a minimal complete query
correct query answers can be calculated from the SQL query result
by dropping tuples with unknown and indeterminate nulls.

In the next chapter, we will discuss reasoning for geographic databases.

\chapter{Geographical Data}

\label{chap:osm}

Volunteered geographical information systems are gaining popularity.
The most established one is OpenStreetMap (OSM), but also classical
commercial map services such as Google Maps now allow users to take
part in the content creation. 

Assessing the quality of spatial information is essential for making
informed decisions based on the data, and particularly challenging
when the data is provided in a decentralized, crowd-based manner.
In this chapter, we show how information about the completeness of
features in certain regions can be used to annotate query answers
with completeness information. We provide a characterization of the
necessary reasoning and show that when taking into account the available
database, more completeness can be derived. OSM already contains some
completeness statements, which are originally intended for coordination
among the editors of the map. A contribution of this chapter is therefore
to show that these statements are not only useful for the producers
of the data but also for the consumers. 

Preliminary versions of the results up to Proposition \ref{lem:compl-simple-queries}
have been published at the BNCOD 2013 conference \cite{BNCOD-razniewski}.

\section{Introduction}

Storage and querying of geographic information poses additional requirements
that motivated the development of dedicated architectures and algorithms
for spatial data management. Recently, due to the increased availability
of GPS devices, volunteered geographical information systems have
quickly evolved, with OpenStreetMap (OSM) being the most prominent
one. Ongoing open public data initiatives that allow to integrate
government data also contribute. The level of detail of OpenStreetMap
is generally significantly higher than that of commercial solutions
such as Google Maps or Bing Maps, while its accuracy and completeness
are comparable.

OpenStreetMap allows to collect information about the world in remarkable
detail. This, together with the fact that the data is collected in
a voluntary, possibly not systematic manner, brings up the question
of the completeness of the OSM data. When using OSM, it is desirable
also to get metadata about the completeness of the presented data,
in order to properly understand its usefulness. 

Assessing completeness by comparison with other data is only possible,
if a more reliable data source for comparison exists, which is generally
not the case. Therefore, completeness can best be assessed by metadata
about the completeness of the data, that is produced in parallel to
the base data, and that can be compiled and shown to users. When providing
geographical data it is quite common to also provide metadata, e.g.,
using standards such as the FGDC metadata standard%
\footnote{http://www.fgdc.gov/metadata/geospatial-metadata-standards%
} show. However, little is known about how query answers can be annotated
with completeness information.

As an example, consider that a tourist wants to find hotels in some
town that are no further than 500 meters away from a park. Assume,
that, as shown in Figure \ref{fig:motivating-example}, the data about
hotels and parks is only complete in parts of the map. Then, the query
answer is only complete in the intersection of the areas where hotels
are complete and a zone 500 meters inside the area where spas are
complete (green in the figure), because outside, either hotels or
spas within 500 meters from a hotel could be missing from the database,
thus leading to missing query results.

\begin{figure}
\begin{centering}
\includegraphics[scale=0.5]{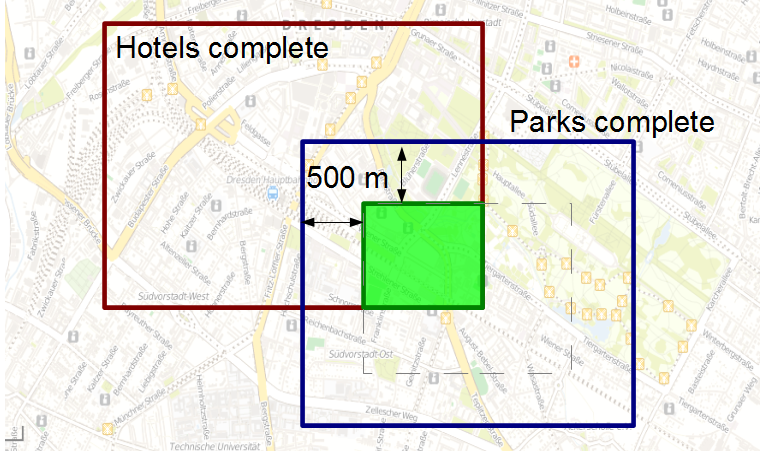} 
\par\end{centering}

\caption{Spatial query completeness analysis example. Assumed that hotels are
complete within the brown rectangle, and parks within the blue rectangle,
a query for hotels that have a park within 500 meters distance will
definitely return all answers that are located within the green rectangle.}

\label{fig:motivating-example} 
\end{figure}

Our contribution in this chapter is a methodology for reasoning about
the completeness queries over spatial data. In particular, we show
that metadata can allow elaborate conclusions about query completeness,
when one takes into account the data actually stored in the database.
We also show that metadata about completeness is already present to
a limited extent for OSM, and discuss practical challenges regarding
acquisition and usage of completeness metadata in the OSM project.

The structure of this chapter is as follows: In Section \ref{sec:geocompl-motivating-example-osm},
we present a sample scenario, in Section \ref{sec:geocompl-background}
we discuss spatial database systems, online map services, geographical
data completeness and OpenStreetMap. In Section \ref{sec:geocompl-formalization},
we give formalizations for expressing completeness over spatial databases.
In Section \ref{sec:geocompl-reasoning}, we present results for reasoning,
and discuss practical aspects in Section \ref{sec:geocompl-discussion}.
Section \ref{sec:geocompl-related work} contains related work. Preliminary
versions of some of the results contained in this chapter have been
published at the BNCOD 2013 conference \cite{BNCOD-razniewski}.

\section{Motivating Scenario: OpenStreetMap}

\label{sec:geocompl-motivating-example-osm} OpenStreetMap is a popular
volunteered geographical information system that allows access to
its base data to anyone. To coordinate their efforts, the creators
(usually called Mappers) of the data use a Wiki to record the completeness
of features in different areas. This information then allows to assess
the completeness of complex queries over the data.

As a particular use case, consider that a user Mary is planning vacations
in Abingdon, UK. Assume Mary is interested in finding a 3-star hotel
that is near a public park. Using the Overpass API, she could formulate
in XML the following query and execute it online over the OSM database%
\footnote{http://overpass-turbo.eu/%
}:\medskip{}

\begin{lstlisting}[basicstyle={\footnotesize\ttfamily}]
<query type="node">  
  <has-kv k="tourism" v="hotel"/>
  <has-kv k="stars" v="3"/>             
  <bbox-query e="7.25" n="50.8" s="50.7" w="7.1"/>
</query>   
<query type="node">     
  <around radius="500"/>     
  <has-kv k="leisure" v="park"/>     
  <bbox-query e="7.25" n="50.8" s="50.7" w="7.1"/>   
</query>   
<print/>
\end{lstlisting}

\medskip{}
The query could return as answer the hotels Moonshine Star, British
Rest and Holiday Inn, which in XML would be returned as follows:\medskip{}

\begin{lstlisting}[basicstyle={\footnotesize\ttfamily}]
<?xml version="1.0" encoding="UTF-8"?> 
<osm version="0.6" generator="Overpass API"> 
 <meta osm_base="2014-03-05T17:47:02Z"/>
  <node id="446099398" lat="48.9995855" lon="9.1475664">
    <tag k="tourism" v="hotel"/>
    <tag k="name" v="Moonshine Star"/>
    <tag k="stars" v="3"/>
    <tag k="restaurant" v="yes"/>
  </node>
  <node id="459972551" lat="48.9997612" lon="9.1483558">
    <tag k="amenity" v="hotel"/>
    <tag k="name" v="British Rest"/>
    <tag k="stars" v="3"/>
    <tag k="restaurant" v="yes"/>
  </node>
  <node id="459972551" lat="48.9997412" lon="9.1483658">
    <tag k="amenity" v="hotel"/>
    <tag k="name" v="Holiday Inn"/>
    <tag k="stars" v="3"/>
  </node>
</osm> 
\end{lstlisting}

\medskip{}
Before taking further steps in decision making, Mary is interested
to know whether this answer is trustworthy: Are these really all hotels
in Abingdon near a park? She therefore looks into the OSM Wiki page
of Abingdon and finds the completeness statements as shown in Figure
\ref{fig:compl-statements-abingdon}.

\begin{figure}
\begin{centering}
\includegraphics[scale=0.63]{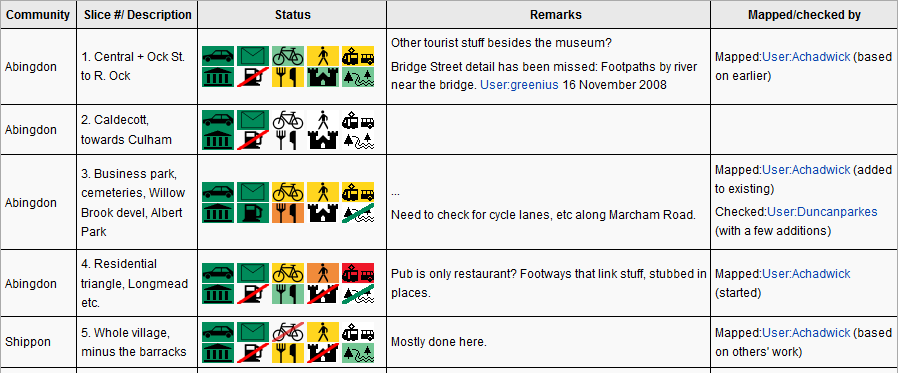}
\par\end{centering}

\caption{Extract from the OpenStreetMap-Wiki page for Abingdon. Source: http://wiki.openstreetmap.org/wiki/Abingdon}

\label{fig:compl-statements-abingdon}
\end{figure}
She also finds a legend for this table as shown in Figure \ref{fig:legend-osm-statements}
and a partitioning of Abingdon in districts as shown in Figure \ref{fig:partitioning-abingdon}.
To conclude in which parts of Abingdon the query is complete, she
has to watch for two things: First, she has to watch for those districts
in which hotels (pictogram: fork/knife) and parks (pictogram: trees/river)
are complete. But that is not all: She also has to watch for those
areas where parks are complete, but no parks are present in Abingdon.
Because those areas do not matter for the query result at all, independent
of whether they actually host hotels or not.

As another use case, consider emergency planning, where the planners
are interested to find all schools that are are within a certain radius
of a nuclear power plant. Querying the database again, he might miss
some information. Therefore, to assess in which areas the query answer
is complete, he not only has to watch for areas where schools and
power plants are complete, but also for areas where power plants are
complete and no power plants are present.

\begin{figure}
\begin{centering}
\includegraphics[scale=0.24]{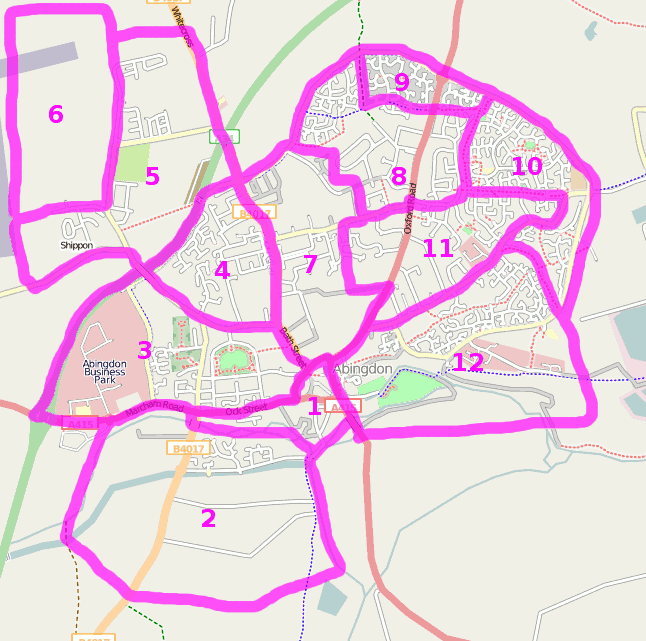}
\par\end{centering}

\caption{Partition of Abingdon made on the OSM wiki page. Source: http://wiki.openstreetmap.org/wiki/Abingdon}

\label{fig:partitioning-abingdon}
\end{figure}

\begin{figure}
\begin{centering}
\includegraphics[scale=0.65]{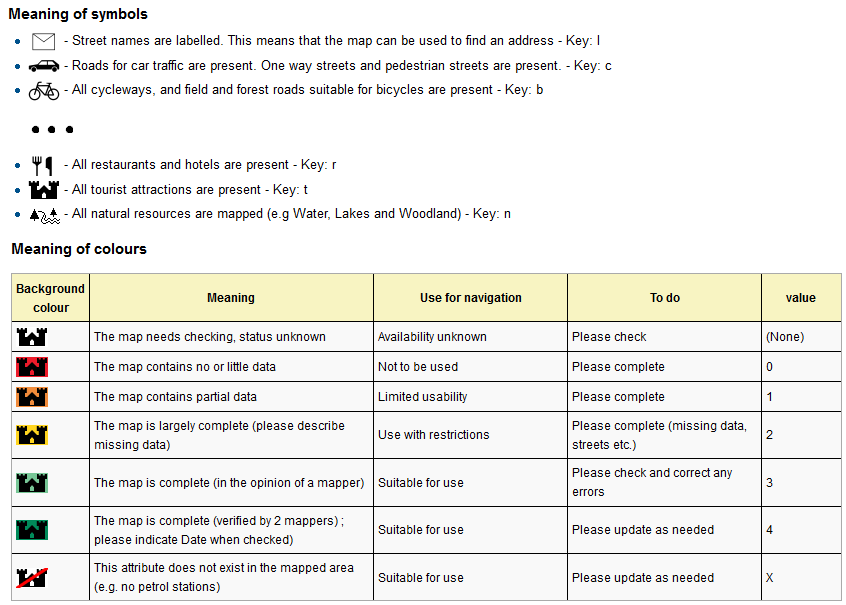}
\par\end{centering}

\caption{Legend for completeness statements as shown on the OpenStreetMap wiki
page. Source: http://wiki.openstreetmap.org/wiki/Template:En:Map\_status}
\label{fig:legend-osm-statements}
\end{figure}

\section{Background}

\label{sec:geocompl-background} In the following, we introduce spatial
database systems and online map services, the problem of geographical
data completeness and OpenStreetMap.

\subsection{Spatial Databases Systems and Online Map Services }

To facilitate storage and retrieval, geographic data is usually stored
in spatial databases. According to \cite{guting-vldb-intro-spatial-dbs},
spatial databases have three distinctive features. First, they are
database systems, thus classical relational/tree-shaped data can be
stored in them and retrieved via standard database query languages.
Second, they offer spatial data types, which are essential to describe
spatial objects. Third, they efficiently support spatial data types
via spatial indexes and spatial joins.

Online map services usually provide graphical access to spatial databases
and provide services for routing and address finding. There are several
online map services available, some of the most popular ones being
Google Maps, Bing Maps, MapQuest and OpenStreetMap. With the exception
of OSM, the data underlying those services is not freely accessible.
The most common uses of those services are routing (``Best path from
A to B?''), address retrieval (``Where is 2nd street?'') and business
retrieval (``Hotels in Miami''). While the query capabilities of
most online map services are currently still limited (one can usually
only search for strings and select categories), spatial databases
generally allow much more complex queries.
\begin{example}
Tourists could be interested in finding those hotels that are less
than 500 meters from a spa and 1 kilometer from the city center. Real
estate agents could be interested in properties that are larger than
1000 square meters and not more than 5 kilometers from the next town
with a school and a supermarket. Evacuation planners might want to
know which public facilities (schools, retirement homes, kindergartens,
etc.) are within a certain range around a chemical industry complex.
\end{example}

\subsection{Geographical Data Completeness}

Geographical data quality is important, as for instance recent media
coverage on Apple misguiding drivers into remote Australian desert
areas shows.%
\footnote{http://www.dailymail.co.uk/sciencetech/article-2245773/Drivers-stranded-Aussie-desert-Apple-glitch-Australian-police-warn-Apple-maps-kill.html%
} Since long there has been work on geographical data quality, however
it was mostly focusing on precision and accuracy \cite{Goodchild-book-spatial-data-quality},
which are fairly uniform among different features. Completeness in
contrast, is highly dependent on the type of feature. If metadata
about completeness is present, it is attractive to visualize it on
maps \cite{visualizing-spatial-dq}. Completeness is especially a
challenge when (1) databases are to capture continuously the current
state (as opposed to a database that stores a map for a fixed date)
because new features can appear, (2) databases are built up incrementally
and are accessible during build-up (as it is the case for OSM) and
(3) the level of detail that can be stored in the database is high
(as it is easier to be complete for all highways in a state than for
all post boxes).

There have been several attempts on assessing the completeness of
OpenStreetMap based on comparison with other data sources. In this
chapter, we take a different approach based on completeness metadata.

\subsection{OpenStreetMap}

OpenStreetMap (OSM) is a free, open, collaboratively edited map project.
Its organization is similar to that of Wikipedia. Its aim is to create
a map of the world. The map consists of features, where basic features
are either points, polygons or groups, and each feature has a primary
category, such as highway, amenity or similar. Then, each feature
can have an unrestricted set of key-value pairs. Though there are
no formal constraints on the key-value pairs, there are agreed standards
for each primary feature category.%
\footnote{http://wiki.openstreetmap.org/wiki/Map\_features%
} 

There have been some assessments of the completeness of OSM based
on comparison with other data sources, which showed that the road
map completeness is generally good \cite{OSM-completeness-analysis-england,OSM-completeness-analysis-2,quality-metrics-for-osm}.
Assessment based on comparison is however a method that is very limited
in general, as it relies on a data source which captures some aspects
equally good as OSM. Especially since due to the open key-value scheme,
the level of detail of OSM is not limited, comparison is not possible
for many aspects. Examples of the deep level of detail are the kind
of trash that trash bins accept or the opening hours of shops or the
kind of fuel used in public fire pits %
\footnote{http://wiki.openstreetmap.org/wiki/Tag:amenity\%3Dbbq%
} (these attributes are all agreed as useful by the OSM community).

While the most common usage of OSM is as online map service, it also
provides advanced querying capabilities, for instance via the Overpass
API web interface.%
\footnote{http://overpass-turbo.eu%
} Also, the OSM data, which is natively in XML, can be downloaded,
converted and loaded into classical SQL databases with geographical
extensions.

\section{Formalization}

\label{sec:geocompl-formalization} In the following, we formalize
spatial databases, queries with the spatial distance function, incompleteness
in databases, completeness statements for spatial databases and completeness
areas for spatial queries.

\subsection{Spatial Databases}

While OSM uses an extendable data format based on key-value pairs,
and stores its data natively in XML, this data can easily be transferred
into relational data, thus, in the following, we adopt a relational
database view in the following.

Similarly to classical relational databases, spatial databases consist
of sets of facts, here called features, which are formulated using
a fixed vocabulary, the database schema. In difference to classical
relational databases (see Sec.~\ref{sec:prelim:relational-databases}),
in a spatial database, each feature has one \emph{location} attribute.
For simplicity, we assume that these locations are only points. 

We assume a fixed set of feature names $\Sigma$, where each feature
name $F$ has a set of arbitrary attributes and one location attribute.
Then, a \emph{spatial database} is a finite set of facts over $\Sigma$
that may contain null values. Null values correspond to key/value
pairs that are not set for a given feature.\global\long\def\abgd{\mathrm{Abgd}}

\begin{example}
Consider the three Moonshine Star, British Rest and Holiday Inn from
above. Furthermore, assume that there are also 2 parks in the database.
Then, in a geographical database $D_{\abgd}$, this information would
be stored as follows:\label{ex:spatial-db}\medskip{}

{\scriptsize{}\hspace{-0.3cm}}%
\begin{tabular}{|cccc|c|ccc|}
\cline{1-4} \cline{6-8} 
\multicolumn{4}{|c|}{{\scriptsize{}Hotel}} &  & \multicolumn{3}{c|}{{\scriptsize{}Park}}\tabularnewline
{\scriptsize{}name} & {\scriptsize{}stars} & {\scriptsize{}rstnt} & {\scriptsize{}location} &  & {\scriptsize{}name} & {\scriptsize{}size} & {\scriptsize{}location}\tabularnewline
\cline{1-4} \cline{6-8} 
{\scriptsize{}Moonshine Star} & {\scriptsize{}3} & {\scriptsize{}yes} & {\scriptsize{}48.55:9.64} & {\scriptsize{}$\ \ \ \ $} & {\scriptsize{}Central Park} & {\scriptsize{}med} & {\scriptsize{}48.20:9.57}\tabularnewline
\cline{1-4} \cline{6-8} 
{\scriptsize{}British Rest} & {\scriptsize{}3} & {\scriptsize{}yes} & {\scriptsize{}48.12:9.58} &  & {\scriptsize{}King's Garden} & {\scriptsize{}small} & {\scriptsize{}48.49:9.61}\tabularnewline
\cline{1-4} \cline{6-8} 
{\scriptsize{}Holiday Inn} & {\scriptsize{}3} & {\scriptsize{}no} & {\scriptsize{}48.41:9.37} & \multicolumn{1}{c}{} &  &  & \multicolumn{1}{c}{}\tabularnewline
\cline{1-4} 
\end{tabular}{\scriptsize \par}

\medskip{}

Represented on a map, this information could look as in Figure \ref{fig:osm:spatial:db}.
\end{example}
\begin{figure}
\begin{centering}
\includegraphics[width=1\textwidth]{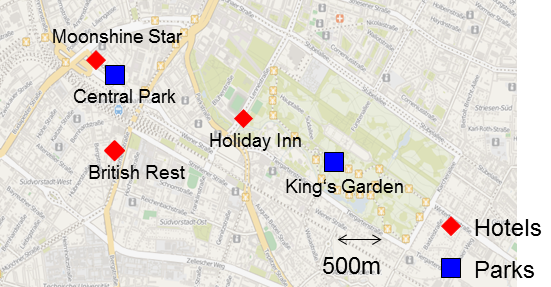}
\par\end{centering}

\caption{Visualization of the database $D_{\abgd}$ from Example \ref{ex:spatial-db}.}
\label{fig:osm:spatial:db}
\end{figure}

Spatial query languages allow the use of spatial functions. As we
assume that all spatial objects are points, only the spatial relation
$\dist(l_{1},l_{2})$, which describes the distance between locations
$l_{1}$ and $l_{2}$ is meaningful.

A \emph{simple spatial query} is written as $\query{Q(\tpl d,l)}{R(\tpl d,l)}$,
where $R$ is a relation, the terms $\tpl d$ are either constants
or variables and $l$ is the location attribute of $R$.

Over spatial databases, it is especially interesting to retrieve features
for which there exist other features (not) within a certain proximity.
To express such queries, we introduce the class of so called distance
queries, on which we will focus in the remainder of this chapter:

Intuitively, a \emph{distance query} asks for a feature for which
certain other features exist within a certain radius. Its shape resembles
that of a star, because the joins between atoms in the query appear
only between the first atom and other atoms. Formally, a distance
query with $n$ atoms is written as follows:\global\long\def\nicehotels{\mathrm{niceHotels}}
\global\long\def\abgd{\mathrm{Abgd}}

\begin{eqnarray}
 & \query{Q(\tpl d_{1},l{}_{1})}{R_{1}(\tpl d_{1},l_{1}),} & R_{2}(\tpl d_{2},l_{2}),\dist(l_{1},l_{2})<c_{2},\label{eq:star-shaped-query}\\
 &  & R_{3}(\tpl d_{3},l_{3}),\dist(l_{1},l_{3})<c_{3},\nonumber \\
 &  & \ldots\nonumber \\
 &  & R_{n}(\tpl d_{n},l_{n}),\dist(l_{1},l_{n})<c_{n}\nonumber 
\end{eqnarray}

where $l_{i}$ is the geometry attribute of the feature $R_{i}$,
and the $c_{i}$ are constants. Later, we will also discuss negated
atoms. Note that using the relations $'\neq'$ and $'='$ together
with $\dist$ does not make sense for a nearly continuous-valued attribute
such as location, and that the expression $'\dist>c'$ not make sense,
because in order to evaluate such a query, one would need to scan
the features in the whole world.\global\long\def\hotel{\mathit{hotel}}
\global\long\def\pub{\mathit{pub}}
\global\long\def\park{\mathit{park}}

\begin{example}
Consider again Mary's query that asked for 3-star hotels with a park
within 500 meters distance. As a distance query, it would be written
as follows:

{\footnotesize{}
\begin{eqnarray*}
 & \query{Q_{\nicehotels}(n,s,r,l_{\hotel})}{\hotel(n,s,r,l_{\hotel}),} & \park(n',s',l_{\park}),\dist(l_{\hotel},l_{\park})<500m
\end{eqnarray*}
}{\footnotesize \par}

A query that additionally also asks for pubs within a kilometer and
a train station within 1 kilometer would be written as follows:

\hspace{-0.2cm}{\footnotesize{}
\begin{eqnarray*}
 & \query{Q_{\mathrm{nicerHotel}}(n,s,r,l_{\hotel})}{\hotel(n,s,r,l_{\hotel}),} & \park(n',s',l_{\park}),\dist(l_{\hotel},l_{\park})<500mm\\
 &  & \pub(n'',l_{\pub}),\dist(l_{\hotel},l_{\pub}<1km\\
 &  & station(n''',l_{station}),\dist(l_{\hotel},l_{station})<1km
\end{eqnarray*}
}{\footnotesize \par}
\end{example}
\begin{figure}
\begin{centering}
\includegraphics[scale=0.65]{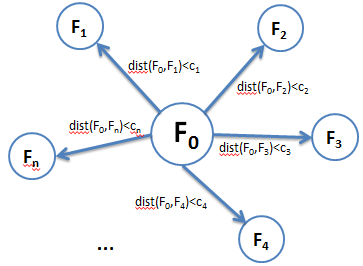}
\par\end{centering}

\caption{Distance query.}

\label{fig:distancequery}
\end{figure}

\subsection{Completeness Formalisms}

In the following, we formalize incomplete databases as in Section
\ref{sec:prelim:incomplete-databases}, extend table completeness
statements to feature completeness statements for spatial databases,
and show that for queries now their query completeness area becomes
relevant.

\paragraph{Incomplete Databases}

Online spatial databases that try to capture the world can hardly
contain all features of the world. As before, we model such incomplete
databases as pairs of an ideal database $\di$, which describes the
information that holds according to the real world, and an available
database $\da$, which contains the information that is actually stored
in the database. Again we assume that the stored information $\da$
is a subset of the information that holds in the real world $\di$.
\begin{example}
\label{ex:Irish-pubs}

Consider that the available database is $D_{\abgd}$ as shown in Figure
\ref{fig:osm:spatial:db}. It might be that in reality, there exists
another park, the Hyde Park, and another hotel the Best Marigold.
Thus the ideal database would also contain those two facts, and, represented
on a map, would look as shown in Figure \ref{fig-geocompl-idb}.
\end{example}
\begin{figure}
\begin{centering}
\includegraphics[width=1\textwidth]{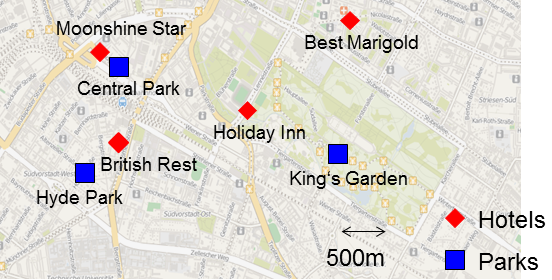}
\par\end{centering}

\caption{Map representation of the ideal database of Example \ref{ex:Irish-pubs}.}

\label{fig-geocompl-idb}
\end{figure}

\paragraph{Feature completeness statements}

Adapting the well-known table completeness statements (see Section
\ref{sec:prelim:table-completeness}), feature completeness statements
can be used to express that certain features are complete in a certain
area. 

Formally, a \emph{feature completeness statement }consists of a feature
name $R$, a set of selections $M$ on the attributes $\tpl a$ of
the relation $R$ and an area $A$. We write such a statement $F$
as $\compl{R,M,A}$. It has a corresponding simple query, which is
defined as $\query{Q_{F}(\tpl a,l)}{R(\tpl a,l),M}.$ An incomplete
database $(\di,\da)$ satisfies the statement, if $Q_{F}(\di)\subseteq\da$.
\begin{example}
\label{ex:osm:compl-statement}Consider a feature completeness statement
$c_{\hotel}$ expressing that hotels are complete in the area $A_{1}$,
and a statement $c_{\park}$ expressing that parks are complete in
the area $A_{2}$, where $A_{1}$ and $A_{2}$ overlap as shown in
Figure \ref{fig:osm:completeness-statements}, as follows:
\begin{eqnarray*}
c_{\hotel} & =\compl{\hotel(n,s,r,l);\emptyset;A_{1}}\\
c_{\park} & =\compl{\park(n,s,l};\emptyset;A_{2})
\end{eqnarray*}

\end{example}
Observe also that each green icon in Figure \ref{fig:compl-statements-abingdon}
actually is a completeness statement. For example, the first green
icon says that all roads are complete in the center of Abingdon.

\begin{figure}
\begin{centering}
\includegraphics[width=0.8\textwidth]{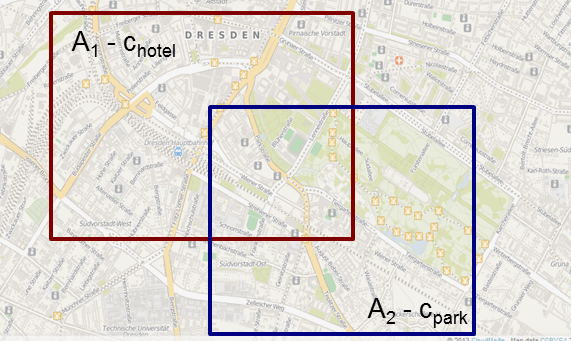}
\par\end{centering}

\caption{Areas $A_{1}$ and $A_{2}$ for the completeness statements $c_{\hotel}$
and $c_{\park}$ from Example \ref{ex:osm:compl-statement}.}

\label{fig:osm:completeness-statements}
\end{figure}

\paragraph{Query completeness area}

When querying an available database, one is interested in getting
all features that satisfy the query wrt. the ideal world. If data
is missing in the available database, then this cannot be guaranteed
everywhere. For example, when pubs are not complete for north district
then a query for all Irish pubs may be complete in the center but
not in the north district.

Given a set of feature completeness statements $\F$, an available
database $D$ and a query $Q$, the query completeness area of $Q$
is the set $S$ of all points such that it holds that for any ideal
database $\di$ with $(\di,D)$ satisfying $\F$ it holds that $Q(\di)=Q(D)$.

\medskip{}

\framebox{\begin{minipage}[t]{0.9\columnwidth}%
Reasoning Problem\\
\textbf{Input}: Set of feature completeness statements $\F$,\\
 $\ \ \ \ \ $Database $D$, query $Q$

\textbf{Output}: Completeness area of $Q$%
\end{minipage}}

\medskip{}

It is clear that the completeness area depends on the completeness
statements. We remind that also the database $D$ has a crucial influence.
As discussed in the motivating example, areas where a constraining
feature is not present, but complete according to the completeness
statements, also belong to the completeness area.\global\long\def\nicehotels{\mathrm{niceHotels}}
\global\long\def\abgd{\mathrm{Abgd}}

\begin{example}
Consider the completeness statement $c_{\hotel}$ and $c_{\park}$
from above, the database instance $D_{\abgd}$ from Example \ref{fig:osm:spatial:db},
and consider Mary's query $Q_{\nicehotels}$ from Example TODO. Then
the completeness area for this query would be the green area shown
in Figure \ref{fig:osm:compl-area}. The upper left area is complete,
because due to $c_{\hotel}$, hotels are complete there, and the only
hotels in that area in $D_{\abgd}$ are the Moonshine Star and the
British Rest, where the one is an answer and for the other it is not
known. The green area on the lower right is complete, because parks
are complete in the surrounding, but there are no parks nearby, so
there cannot be any hotels that are answers to the query.

\label{example-completeness-area}

\begin{figure}
\begin{centering}
\includegraphics[width=1\textwidth]{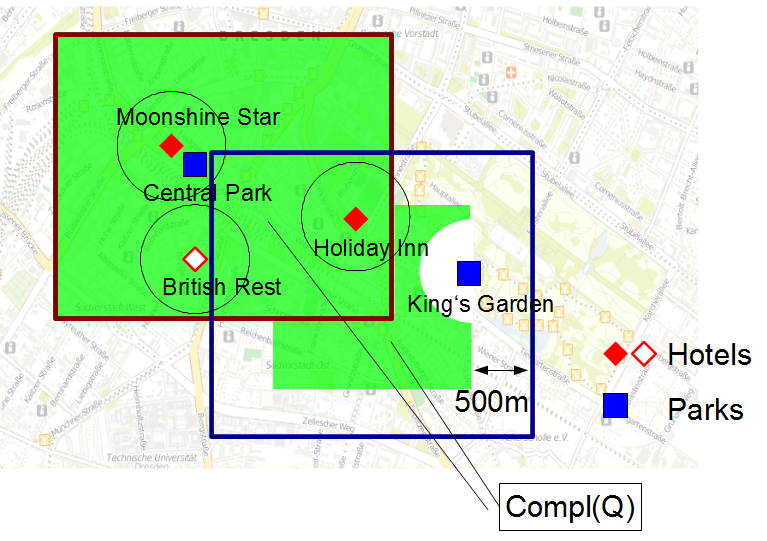}
\par\end{centering}

\caption{Completeness area for the query $Q_{\nicehotels}$ based on the completeness
statement $c_{\hotel}$ and $c_{\park}$ and the database $D_{\abgd}$
as discussed in Example \ref{example-completeness-area}. }
\label{fig:osm:compl-area}
\end{figure}

\end{example}

\section{Completeness Assessment}

\label{sec:geocompl-reasoning}

In the following, we show how the completeness area of a query can
be computed. We start with simple queries and then show how the computation
for distance queries can be reduced to the one of simple queries.
Lastly, we discuss the complexity of completeness assessment.
\begin{table}
\begin{centering}
{\small{}}%
\begin{tabular}{|c|>{\centering}p{7cm}|}
\hline 
{\small{}Symbol} & {\small{}Meaning}\tabularnewline
\hline 
\hline 
{\small{}$Q$} & {\small{}Query}\tabularnewline
\hline 
{\small{}$\F$} & {\small{}Set of feature completeness statements}\tabularnewline
\hline 
{\small{}$\da$} & {\small{}Available database instance}\tabularnewline
\hline 
{\small{}$\ca$} & {\small{}completeness area of a query}\tabularnewline
\hline 
{\small{}$\oor$} & {\small{}out of range area}\tabularnewline
\hline 
{\small{}$\coor$} & {\small{}complete and out of range area}\tabularnewline
\hline 
{\small{}$\cert_{Q,\F,\da}$} & {\small{}certain answers to $Q$ wrt. $\F$ and $\da$}\tabularnewline
\hline 
{\small{}$\poss_{Q,\F,\da}$} & {\small{}possible answers to $Q$wrt. $\F$ and $\da$}\tabularnewline
\hline 
{\small{}$\imposs_{Q,\F,\da}$} & {\small{}impossible answers to $Q$wrt. $\F$ and $\da$}\tabularnewline
\hline 
{\small{}$\notins_{F,c,\da}$} & {\small{}area where no $F$-feature is within distance $c$}\tabularnewline
\hline 
{\small{}$\complins_{F,c,\da}$} & {\small{}area where $F$-features are complete within distance $c$}\tabularnewline
\hline 
\end{tabular}
\par\end{centering}{\small \par}

\caption{Notation table}
\end{table}

\subsection{Assessment for Simple Queries}

Simple queries are queries that do not contain any joins, but only
select features with certain attribute values. We will use them as
building blocks for the more complex distance queries. For finding
the completeness area of a simple query, one only needs to take the
union of the areas of all completeness statements that capture the
queried features:
\begin{prop}
Let $\F$ be a set of FC statements and $\query{Q(\tpl d,l)}{R(\tpl d,l)}$
be a simple query. Then $\acompl$, the completeness area of $Q$
wrt $\F$ is computed as follows:\label{lem:compl-simple-queries}
\[
\acompl=\ \bigcup\,\bigset{A_{i}\bigmid\mbox{\ensuremath{Q\subseteq Q_{F_{i}}}}\ \wedge\ F_{i}\in\F}.
\]

\end{prop}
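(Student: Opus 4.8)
The plan is to establish the displayed set equality by proving the two inclusions separately, bearing in mind that the completeness area is defined relative to the fixed available database $D$ and that $(\di,D)$ satisfies a feature completeness statement $F=\compl{R,M,A}$ precisely when every tuple of $Q_F(\di)$ whose location lies in $A$ already occurs in $D$. Observe first that for a simple query the answers are just the matching $R$-facts, so monotonicity of evaluation gives $Q(D)\subseteq Q(\di)$ whenever $D\subseteq\di$; hence a point $p$ lies in the completeness area exactly when no ideal database $\di\supseteq D$ with $(\di,D)\models\F$ contains an answer of $Q$ located at $p$ that is absent from $Q(D)$. (Here I read the slightly informal definition of ``completeness area'' as asking that the answers of $Q$ \emph{located at $p$} coincide over $\di$ and over $D$.)

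For the inclusion from right to left, suppose $p\in A_i$ for some $F_i\in\F$ with $Q\subseteq Q_{F_i}$, and let $\di\supseteq D$ satisfy $\F$. If $t\in Q(\di)$ is located at $p$, then $Q\subseteq Q_{F_i}$ yields $t\in Q_{F_i}(\di)$ (in particular the relation of $F_i$ must be $R$, since a simple query is satisfiable and containment then forces equal relations); as $p\in A_i$ and $(\di,D)\models F_i$, the fact $t$ must already lie in $D$, so $t\in Q(D)$. Thus $p$ belongs to the completeness area, and since $D$ was arbitrary this shows the right-hand side is contained in the completeness area for every $D$.

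For the inclusion from left to right I would argue by contraposition: assume no $F_i\in\F$ satisfies both $Q\subseteq Q_{F_i}$ and $p\in A_i$, and exhibit an ideal database witnessing that $p$ is not complete. Let $t_0$ be the $R$-fact located at $p$ that carries the constants of $Q$ in the positions where $Q$ has constants and pairwise distinct fresh values — not occurring in $D$, in $\F$, or in $Q$ — in the remaining positions, and put $\di:=D\cup\{t_0\}$. Then $t_0\in Q(\di)$, while the fresh values force $t_0\notin D$ and hence $t_0\notin Q(D)$, so $Q$ has an extra answer at $p$ over $\di$. It remains to check $(\di,D)\models\F$: adding the $R$-fact $t_0$ cannot disturb any statement about a relation other than $R$; and for $F_j\in\F$ about $R$, if $p\notin A_j$ there is nothing to verify, while if $p\in A_j$ then by hypothesis $Q\not\subseteq Q_{F_j}$, and since both $Q$ and $Q_{F_j}$ are single-atom selection queries, non-containment means either their constant patterns clash or $M_j$ fixes a position that $Q$ leaves variable — in either case the values chosen for $t_0$ make $t_0\notin Q_{F_j}(\di)$, so $t_0$ is unconstrained by $F_j$ and no violation arises. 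Hence $(\di,D)\models\F$ but $Q$ is incomplete at $p$, so $p$ is not in the completeness area.

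The step I expect to be the main obstacle is exactly this construction of $t_0$: one needs a single witness fact that is simultaneously a genuine answer of $Q$, genuinely new relative to $D$, and harmless for \emph{every} feature completeness statement whose area covers $p$. What makes it work is that for precisely those statements the non-containment $Q\not\subseteq Q_{F_j}$ is guaranteed, and for single-atom selection queries non-containment is always witnessed by placing fresh values (distinct from all constants in $D$, $\F$, and $Q$) in the attribute positions that $Q$ does not pin down — which is what the definition of $t_0$ does. I would factor this case distinction on the shape of the selections $M_j$ into a small lemma; the only sub-case needing separate comment is a fully ground $Q$ (no variable positions), where the witness is simply the unique $R$-fact matching $Q$ at $p$ and ``freshness'' reduces to its absence from $D$. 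Finally I would remark that the argument in fact shows the completeness area of a simple query does not depend on $D$, consistent with $D$ not appearing on the right-hand side of the statement.
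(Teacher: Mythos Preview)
The paper does not actually supply a proof of this proposition: it states the formula and immediately moves on to remark that the containment checks are ``straightforward'' attribute-by-attribute comparisons. Your two-inclusion argument is therefore more than what the paper offers, and it is essentially correct; the right-to-left direction is clean, and the contrapositive construction of a single fresh witness fact $t_0$ at location $p$ is exactly the right idea for the converse.

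There is one small wrinkle you flag but do not close. In the fully ground sub-case you say ``freshness reduces to its absence from $D$,'' which is true---but that absence is not guaranteed. If $Q$ has only constants in the non-location positions and the unique matching fact $R(\tpl c,p)$ already lies in $D$, then your construction produces $\di=D$ and no incompleteness is exhibited; indeed, in that situation $p$ \emph{is} in the completeness area regardless of $\F$, so the stated equality fails at that point. This is really an imprecision in the proposition (and in the paper's informal definition of ``completeness area'') rather than a flaw in your strategy: the formula on the right-hand side is $D$-independent, while the completeness area as you correctly read it can pick up such trivially-complete points from $D$. For the intended use---simple queries arising as component queries of a distance query, which always carry non-location variables---the issue does not arise, and your argument goes through. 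Your concluding remark that the completeness area is $D$-independent should therefore be qualified to ``provided $Q$ has at least one non-location variable,'' which matches how the result is applied later in the chapter.
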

The containment checks needed to compute $\ca$ are straightforward:
In each containment, we have to check whether the selection by $Q$
is the same or more specific than the selection by $Q_{F}$, which
is a pairwise comparison for each attribute. Thus, for a fixed database
schema, the completeness check is linear in the size of the set $\F$
of feature completeness statements.\global\long\def\simple{\mathrm{simple}}

\begin{example}
Remember the completeness statement $c_{\hotel}$ from Example \ref{ex:osm:compl-statement},
which asserted completeness for hotels in an area $A_{1}$. Consider
furthermore a simple query $\query{Q_{\simple}(n,3,r,l)}{\hotel(n,3,r,l)}$
that asks for all hotels with 3 stars. Clearly, the completeness area
for $Q_{\simple}$ will contain $A_{1}$, because $Q_{\simple}$ is
contained in the query $Q_{c_{\hotel}}$ that asks for all hotels
regardless of their number of stars.
\end{example}
In the following, we use the completeness of simple queries as building
block for reasoning about the completeness of distance queries.

\subsection{Assessment for Distance Queries}

For distance queries, we have to take into account the completeness
area of each literal. In general, for a point to be in the completeness
area, the point has to be in the completeness areas for all the simple
queries that constitute the distance query. Furthermore, in specific
cases when looking at the database instance, completeness can also
be concluded even when only some features are complete.

As shown in \cite{BNCOD-razniewski}, for an arbitrary distance query
$Q$ as in Equation (\ref{eq:star-shaped-query}), one can introduce
a simple query $Q_{L_{i}}$ for each literal $L_{i}$ in $Q$, defined
as 
\[
\query{Q_{L_{i}}(g_{i})}{R_{i}(\tpl d_{i})}
\]

for $i=1,\ldots,n$, which we call a \emph{component query} of $Q$. 

The function $\shrink(A,d)$ that shrinks an area $A$ by a distance
$d$ is defined as one would expect as the set of all points within
$A$ that are at least $d$ apart from the border of $A$. Then, when
neglecting the actual state of the database, the completeness area
of a distance query is defined as follows:
\begin{prop}
[Schema-level completeness area]

Let $\F$ be a set of FC statements, $\query Q{L_{1},\ldots,L_{n}}$
be a distance query and $Q_{L_{1}},\ldots,Q_{L_{n}}$ the component
queries of $Q$. Then for any database $\da$ the completeness area
$\acompl$ of $Q$ wrt.\ $\F$ satisfies 
\[
\acompl=\complatom{L_{1}}\cap\shrink(\complatom{L_{2}},c_{2})\cap\ldots\cap\shrink(\complatom{L_{n}},c_{n}),
\]

\label{prop:instance-independent}
\end{prop}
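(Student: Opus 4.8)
The plan is to prove the two inclusions $S\incl\acompl$ and $\acompl\incl S$ separately, where I write $S$ for the right-hand side $\complatom{L_1}\cap\shrink(\complatom{L_2},c_2)\cap\ldots\cap\shrink(\complatom{L_n},c_n)$. Recall that each $\complatom{L_i}$ is the completeness area of the simple component query $Q_{L_i}$, which by Proposition~\ref{lem:compl-simple-queries} equals the \emph{instance-independent} union $\bigset{A_F\bigmid Q_{L_i}\incl Q_F,\ F\in\F}$. I read $\acompl$ as the instance-independent (schema-level) completeness area, i.e.\ the set of points $p$ at which the answers of $Q$ located at $p$ agree over $\da$ and every ideal $\di$ with $(\di,\da)\models\F$, uniformly in the available database $\da$, as the section title indicates; soundness will then hold for every $\da$, while tightness is witnessed by one specific minimal database. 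Since distance queries are monotone, for every $\da$ and every valid $\di\supseteq\da$ the answers located at $p$ over $\da$ are automatically a subset of those over $\di$, so at each point only the converse inclusion of answer sets is at stake.

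For soundness ($S\incl\acompl$), which I would establish for an arbitrary $\da$, fix $p\in S$ and an answer of $Q$ over some valid $\di$ with location $l_1=p$, given by a valuation $v$ with $vl_1=p$. First I would force the head witness $vR_1(\tpl d_1,p)$ into $\da$: since $p\in\complatom{L_1}$, there is a statement $F\in\F$ with $Q_{L_1}\incl Q_F$ and $p\in A_F$, and as the head fact matches $Q_{L_1}$ it also matches $Q_F$, so satisfaction of $F$ (that is, $Q_F(\di)\incl\da$) puts it in $\da$. Next, for each side atom $i\geq2$ the witness $vR_i(\tpl d_i,l_i)$ satisfies $\dist(p,vl_i)<c_i$; because $p\in\shrink(\complatom{L_i},c_i)$, the whole ball of radius $c_i$ around $p$ lies in $\complatom{L_i}$, hence $vl_i\in\complatom{L_i}$ and the same argument forces $vR_i(\tpl d_i,l_i)$ into $\da$. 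Thus $v$ reproduces the identical answer over $\da$, giving equality of the answer sets at $p$ and therefore $p\in\acompl$.

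For tightness ($\acompl\incl S$) I would argue by contraposition: if $p\notin S$ I construct one available database $\da$ together with a valid ideal $\di\supseteq\da$ whose answers at $p$ differ, showing that completeness fails uniformly and hence $p\notin\acompl$. I would take $\di$ to be a frozen instance of the body of $Q$ placing $R_1$ at $p$ and each $R_i$ ($i\geq2$) at some location $l_i$ with $\dist(p,l_i)<c_i$, all non-location attributes set to fresh generic constants; this makes $(v\tpl d_1,p)$ an answer over $\di$. If $p\notin\complatom{L_1}$, put $\da=\di$ minus the head fact. If instead $p\notin\shrink(\complatom{L_i},c_i)$ for some $i$, choose the witness location $l_i$ to be a point with $\dist(p,l_i)<c_i$ lying outside $\complatom{L_i}$ — such a point exists precisely because the $c_i$-ball around $p$ is not contained in $\complatom{L_i}$ — and put $\da=\di$ minus that single $R_i$-fact. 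In either case the only witness for the deleted answer is gone, so $Q$ is incomplete at $p$.

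The main obstacle is verifying $(\di,\da)\models\F$ in the tightness construction, i.e.\ that deleting the single offending fact respects every completeness statement. By Proposition~\ref{lem:compl-simple-queries}, the deleted fact sits at a location not covered by any $F\in\F$ whose selection is at least as general as the corresponding component query; the remaining danger is a statement $F$ with a strictly more general selection that the frozen fact happens to match while $p$ (resp.\ $l_i$) lies in $A_F$. I would rule this out by choosing the fresh non-location constants to falsify the selection of every such $F$ simultaneously: since $Q_{L_i}\not\incl Q_F$, each such $F$ supplies a value satisfying the component selection but not $F$'s, and the dense domain $\dom$ lets all these choices be realized at once. A secondary, routine point is the regularity assumption that lets me read $p\notin\shrink(A,c)$ as the existence of an uncovered point strictly within distance $c$ of $p$; for the closed regions used as completeness areas this is immediate.
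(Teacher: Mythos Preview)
The paper does not actually prove this proposition: it is introduced with the phrase ``when neglecting the actual state of the database, the completeness area of a distance query is \emph{defined} as follows'' and is attributed to the earlier BNCOD paper. So there is no proof in the paper to compare against; your write-up is in fact supplying an argument the thesis omits.

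Your overall plan is sound and matches what one would expect: the soundness direction is the obvious propagation argument (every witness of an answer at $p$ lies in a region covered by the relevant $\complatom{L_i}$, hence is forced into~$\da$), and for tightness you correctly pass to the instance-independent reading, constructing a single frozen body as~$\di$ and deleting exactly one witness fact. Your interpretation of the ambiguous ``for any database~$\da$'' as the schema-level (intersection over all~$\da$) area is the right one in context, since the subsequent theorem in the paper explicitly gives a larger, instance-dependent area.

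The one technical point that deserves a bit more care is the ``fresh constant'' step in the tightness direction. Your argument that one can simultaneously falsify the selection of every statement $F$ with $Q_{L_i}\not\subseteq Q_F$ and $l_i\in A_F$ is correct for the concrete FC language of this chapter (conjunctions of equality selections over an infinite $\dom$): each such $F$ fixes finitely many attribute values, and a single fresh tuple avoids all of them. It is worth saying explicitly that this relies on selections being equalities; once arbitrary comparisons are allowed (Section~5.5.5), a finite family of statements can cover every value at some attribute, and then the fresh-constant trick fails -- which is exactly why the paper later shows that adding comparisons makes even point-membership in the completeness area $\coNP$-hard. Your side remark on the boundary behaviour of $\shrink$ is also appropriate; the paper's informal definition (``at least $d$ apart from the border'') makes your reading ($p\notin\shrink(A,c)$ iff the open $c$-ball around $p$ escapes $A$) the intended one.
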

As seen before in Lemma \ref{lem:compl-simple-queries}, the computation
of the completeness areas $\ca_{L_{1}}$ to $\ca_{L_{n}}$ for the
component queries is straightforward. 

As Example \ref{example-completeness-area} showed, wrt.\ a known
available database, the completeness area may however be larger, as
it also includes those areas where a queried feature is too far away
and also complete, so that the query becomes unsatisfiable in that
area.

Before proceeding to the characterization, additional terminology
is needed. 

With $\complins_{L,c}$ we denote the set of all points for which
it holds that the literal $L$ is complete within the distance $c$.
This area can be computed as 
\[
\complins_{L,c}=\shrink(CA_{L},c)
\]
The area $\notins_{L,c}$ denotes the set of all points $p$ for which
it holds that no feature satisfying $L$ is within the distance $c$
of $p$. This are can be computed as 
\[
\notins_{L,c}=\neg(\bigcup_{f\in\query{Q(l)}L}\buffer(f,c))
\]
 Using these two areas, we define the area $\coor_{L,c}$, which stands
for the set of points where features satisfying $L$ are both not
existent within the distance $c$ and also complete within the distance
$c$ as: 

\[
\coor_{L,c}=\complins_{L,c}\cap\notins_{L,c}
\]

This area defines additional parts of the completeness area of a query:
If the query asks for a feature within a certain distance, but for
a given point there is no such feature within that distance, and the
feature is also complete within that distance, then, even if some
feature satisfying the output atom $L_{1}$ of a distance query is
present, it can never satisfy the atom $L$. Thus, the query is complete
in that point as well.
\begin{example}
Consider again Figure \ref{fig:osm:compl-area}, and observe the green
area at the lower right. For each point in that area, it holds that
no park is within 500 meters in the available database, and also parks
are complete within 500 meters according to $c_{\park}$. Thus, these
points lie in the area $\coor_{L_{2},500m}$.
\end{example}
Also, a second conclusion can be made wrt. the database instance:
Wherever the output feature $L_{1}$ alone is complete, and there
is no feature present in the database that could potentially become
an answer, the query is complete as well.

Formally, consider a query $\query Q{L_{1},\ldots,L_{n}}$ and a database
instance $\da$. Then, all features in $\da$ that satisfy $\tpl d_{1}$,
that is, all tuples in the result to $Q_{L_{1}}$ can be grouped into
three disjoint categories:
\begin{enumerate}
\item Certain answers,
\item Impossible answers,
\item Possible answers.
\end{enumerate}
\emph{Certain answers} are those features, which already satisfy the
query in the current database instance, that is:
\[
f\in\cert_{Q,\F,D}\mbox{\ \ \ \ iff\ \ \ \ }f\in Q(\da)
\]

\emph{Impossible answers} are those features which are in $Q_{L_{1}}(\da)$,
but cannot be in the answer of the query because some atom $L_{i}$
with $i>1$ is unsatisfiable for them, that is: 
\[
f\in\imposs_{Q,F,D}\mbox{\ \ \ iff\ \ \ }\forall\di:(\di,\da)\models\F\mbox{\ \ it holds that\ \ }f\not\in Q(\di)
\]
To practically compute $\imposs_{Q,F,D}$, we can use the previously
introduced function $\coor$:
\begin{lem}
Let $Q$ be a query, $\F$ be a set of feature completeness statements
and $D$ be a database. Then for any feature $f\in Q_{L_{1}}(D)$
with location $l_{f}$:

\[
f\in\imposs_{Q,\F,D}\ \ \ \ \mbox{iff}\ \ \ \ l_{f}\in(\coor_{L_{2},c_{2}}\cup\ldots\cup\coor_{L_{n},c_{n}})
\]

\end{lem}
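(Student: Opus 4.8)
The plan is to prove the two directions separately, in both cases working with the $i$-th conjunct $R_i(\tpl d_i,l_i),\dist(l_1,l_i)<c_i$ of the distance query $Q$ and the component query $Q_{L_i}$, and using Proposition~\ref{lem:compl-simple-queries} to read the completeness area $\complatom{L_i}$ of $Q_{L_i}$ as $\bigcup\{A_j\mid Q_{L_i}\subseteq Q_{F_j},\ F_j\in\F\}$.

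\emph{Direction "$\Leftarrow$".} Suppose $l_f\in\coor_{L_i,c_i}$ for some $i\in\set{2,\ldots,n}$, so $l_f\in\complins_{L_i,c_i}=\shrink(\complatom{L_i},c_i)$ and $l_f\in\notins_{L_i,c_i}$. By the definition of $\shrink$, every point at distance less than $c_i$ from $l_f$ lies in $\complatom{L_i}$; since the $i$-th conjunct uses the strict inequality $\dist(l_1,l_i)<c_i$, any potential witness location sits inside $\complatom{L_i}$. By Proposition~\ref{lem:compl-simple-queries}, $Q_{L_i}$ is complete on $\complatom{L_i}$, hence for every ideal database $\di$ with $(\di,\da)\models\F$ each $L_i$-feature of $\di$ within distance $c_i$ of $l_f$ is already in $\da$. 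But $l_f\in\notins_{L_i,c_i}$ says $\da$ has no such feature, so $\di$ has none either. Therefore no valuation of $Q$ over $\di$ can bind the first atom to $f$ and still satisfy the $i$-th conjunct, i.e.\ $f\notin Q(\di)$ for every such $\di$, which is exactly $f\in\imposs_{Q,\F,D}$.

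\emph{Direction "$\Rightarrow$" (contrapositive).} Assume $l_f\notin\coor_{L_i,c_i}$ for all $i\in\set{2,\ldots,n}$; I will exhibit one $\di\supseteq\da$ with $(\di,\da)\models\F$ and $f\in Q(\di)$, which shows $f\notin\imposs_{Q,\F,D}$. For each $i$ I supply a witness for the $i$-th conjunct: if $l_f\notin\notins_{L_i,c_i}$, then $\da$ already contains an $L_i$-feature within distance $c_i$ of $l_f$, which serves; otherwise $l_f\notin\complins_{L_i,c_i}$, so there is a point $p_i$ with $\dist(l_f,p_i)<c_i$ and $p_i\notin\complatom{L_i}$, and I add to $\di$ a \emph{canonical witness} $A_i^{\mathrm{new}}$: an $R_i$-fact located at $p_i$, carrying in each non-location position the constant that $L_i$ has there and pairwise distinct fresh values in the remaining positions. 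Let $\di$ be $\da$ together with all these new facts. Since the location-joins are strict and the non-join attribute positions of $L_i$ (for $i>1$) are either unconstrained by $Q$ or filled to match by construction, $f$ now satisfies every conjunct, so $f\in Q(\di)$.

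\emph{Safety of the construction.} It remains to check $(\di,\da)\models\F$. A statement $F_j=\compl{R_j,M_j,A_j}$ can be violated only by a fact of $\di\setminus\da$, i.e.\ by some $A_i^{\mathrm{new}}$ with $R_j=R_i$, $p_i\in A_j$, and $A_i^{\mathrm{new}}\models M_j$. As $M_j$ selects only on non-location attributes and $A_i^{\mathrm{new}}$ is a canonical instance of the body of $Q_{L_i}$ (the constants agree, the rest being fresh and distinct), $A_i^{\mathrm{new}}\models M_j$ holds iff there is a containment mapping of $Q_{F_j}$ into $Q_{L_i}$, i.e.\ iff $Q_{L_i}\subseteq Q_{F_j}$; by Proposition~\ref{lem:compl-simple-queries} this would force $A_j\incl\complatom{L_i}$ and hence $p_i\in\complatom{L_i}$, contradicting the choice of $p_i$. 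So no statement of $\F$ is violated, and the several added witnesses do not interfere with each other since they feed disjoint conjuncts and share no values. This completes the argument.

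\emph{Main obstacle.} The delicate part is this last paragraph: arguing that a canonical witness of $Q_{L_i}$ placed outside $\complatom{L_i}$ cannot trigger any feature completeness statement. This hinges on (i) using the containment characterization of $\complatom{L_i}$ from Proposition~\ref{lem:compl-simple-queries}, and (ii) the fact that the conditions $M_j$ in feature completeness statements never constrain the location attribute, so the witness may be freely placed at $p_i$. A minor but necessary point is the behaviour of $\shrink$ (a point at distance at least $c_i$ from the border of $\complatom{L_i}$ has its whole open $c_i$-ball inside $\complatom{L_i}$), which is what lets strict-inequality distance witnesses stay inside the completeness area in the "$\Leftarrow$" direction.
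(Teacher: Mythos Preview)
The paper does not supply a formal proof of this lemma; after the statement it only records the intuitive meaning (``a feature is an impossible answer if and only if for one of the literals in the query it holds that no features that satisfy it are within the required range, but those features are also complete there''). Your argument correctly fills this gap.

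The ``$\Leftarrow$'' direction is the straightforward unpacking of the definitions that the paper's intuition gestures at. The substantive contribution is your contrapositive construction for ``$\Rightarrow$'': placing a canonical $L_i$-instance at a point $p_i\notin\complatom{L_i}$ and then using the containment characterization of $\complatom{L_i}$ from Proposition~\ref{lem:compl-simple-queries} to argue that no $F_j\in\F$ is violated. This step is sound in the setting of the lemma, where feature completeness statements carry only equality selections on non-location attributes (comparisons are treated separately in a later subsection), so fresh distinct values indeed fail every $M_j$ not already implied by $L_i$. The star shape of distance queries (no attribute joins among the $L_i$ except via the distance constraints to $l_1$) is what guarantees independence of the several added witnesses; this is implicit in the paper's definition of distance queries, and your reliance on it is appropriate.
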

The intuitive meaning is that a feature is an impossible answer if
and only if for one of the literals in the query it holds that no
features that satisfy it are within the required range, but those
features are also complete there.

The remaining features in $Q_{L_{1}}(\da)$ that are neither certain
nor impossible answers are \emph{possible answers}. Possible answers
are characterized by the fact that currently they are not in the answer
to the query, but the completeness statements do not exclude the chance
that the features are answers over the ideal database. The possible
answers are computed as: 
\[
Q_{L_{1}}(D)\setminus(\cert_{Q,F,D}\cup\imposs_{Q,F,D})
\]

\begin{example}
In Figure \ref{fig:osm:compl-area}, the hotel Moonshine Star is a
certain answer, because the Central Park is located nearby. The hotel
British Rest is a possible answer, because there is no park nearby
shown in the database, but parks are not complete in the 500-meter-surrounding
of the hotel. The hotel Holiday Inn is an impossible answer, because
both there is no park in the 500-meter-surrounding and parks are are
complete in that surrounding.
\end{example}
We can now characterize the completeness are of a distance query as
follows:
\begin{thm}
Let $\F$ be a set of FC statements, $\query Q{L_{1},\ldots,L_{n}}$
be a distance query and $D$ be an available database. Furthermore,
let $\ca_{1}$ be the completeness area for the literal $L_{1}$.
Then the completeness area of $Q$ wrt. $\F$ and $D$ is

\[
\ca=\ca_{1}\cup\coor_{L_{2},c_{2}}\cup\ldots\cup\coor_{L_{n},c_{n}}\setminus\poss_{Q}
\]

\end{thm}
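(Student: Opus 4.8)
The plan is to prove the set identity pointwise. For a point $p$, I will show that $p$ lies in the completeness area of $Q$ wrt.\ $\F$ and $D$ exactly when $p \in \ca_1 \cup \coor_{L_2,c_2} \cup \cdots \cup \coor_{L_n,c_n}$ and $p \notin \poss_Q$. The first step is to unfold the definition of the completeness area using monotonicity of conjunctive queries: since $D \incl \di$ forces $Q(D) \incl Q(\di)$, membership of $p$ in the completeness area is equivalent to saying that for every ideal $\di$ with $D \incl \di$ and $(\di,D) \models \F$, every answer tuple of $Q$ over $\di$ whose location component equals $p$ already occurs in $Q(D)$. I will also note upfront that $\poss_Q$ is disjoint from each $\coor_{L_i,c_i}$: by the lemma characterising impossible answers, any feature of $Q_{L_1}(D)$ located inside some $\coor_{L_i,c_i}$ is impossible, hence not possible, so the right-hand side equals $(\ca_1 \setminus \poss_Q) \cup \coor_{L_2,c_2} \cup \cdots \cup \coor_{L_n,c_n}$, which streamlines the case analysis.

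For the first inclusion (right-hand side contained in the completeness area), I would fix such a $p$, an admissible $\di$, and an answer tuple $t = (\tpl d_1, p) \in Q(\di)$, and show $t \in Q(D)$ by cases. If $p \in \coor_{L_i,c_i}$ for some $i \geq 2$, then $Q_{L_i}$ is complete throughout the disk of radius $c_i$ around $p$ while no $L_i$-feature lies within $c_i$ of $p$ in $D$; hence none does in $\di$ either, so $Q$ has no answer whatsoever with location $p$ over $\di$, and the case is vacuous. Otherwise $p \in \ca_1 \setminus \poss_Q$; here I would use that $\ca_1$ is the completeness area of the component query retrieving the entire $L_1$-atom, so completeness at $p$ together with $D \incl \di$ pins the fact $R_1(\tpl d_1,p)$ down to a member $f$ of $Q_{L_1}(D)$. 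Then $f$ is certain, impossible, or possible; it is not impossible, since $t \in Q(\di)$, and not possible, since $p \notin \poss_Q$; therefore $f$ is certain, i.e.\ $t \in Q(D)$.

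For the converse inclusion, I would first show that $p$ in the completeness area implies $p \notin \poss_Q$: if $p$ were the location of a possible answer $f = R_1(\tpl d_1, p) \in Q_{L_1}(D)$, then $f$ not being impossible gives an admissible $\di$ with $(\tpl d_1,p) \in Q(\di)$, while $f$ not being certain gives $(\tpl d_1,p) \notin Q(D)$, contradicting completeness at $p$. Then I would argue the contrapositive of the remaining part: assuming $p \notin \ca_1$ and $p \notin \coor_{L_i,c_i}$ for every $i \geq 2$, I construct a witnessing $\di$. Since $p \notin \ca_1$, $Q_{L_1}$ is not complete at $p$, so some admissible extension of $D$ contains a fresh $R_1$-fact $A_1$ at $p$ with $A_1 \notin D$; and for each $i \geq 2$ with no $L_i$-witness within $c_i$ of $p$ in $D$ (so $p \in \notins_{L_i,c_i}$), the hypothesis $p \notin \coor_{L_i,c_i}$ forces $p \notin \complins_{L_i,c_i} = \shrink(\ca_{L_i},c_i)$, yielding a point $q_i$ within $c_i$ of $p$ outside $\ca_{L_i}$ at which $Q_{L_i}$ is not complete, so a fresh $R_i$-fact $A_i$ can be placed at $q_i$. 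Taking $\di$ to be $D$ together with all these new facts, I get $(\di,D)\models\F$ with $(\tpl d_1,p) \in Q(\di) \setminus Q(D)$, so $p$ is not in the completeness area, a contradiction.

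The step I expect to be the main obstacle is making this last construction rigorous when $Q$ has repeated relation symbols and constants in its atoms. I will need to choose each added fact generically so that: (i)~it does not itself violate any statement of $\F$ (a fresh value must fall outside every selection $M$ that would put the fact into $Q_F$ for some $F \in \F$); (ii)~the facts added for different literals do not interfere --- in particular a fresh $R_i$-fact introduced for $L_i$ must not inadvertently supply or remove a witness for some $L_j$ with $R_j = R_i$; and (iii)~the constant positions of an atom $R_i(\tpl d_i,l_i)$ are respected while the remaining positions are filled with fresh values. Turning the ``not complete at a point'' property of the simple component queries (Proposition~\ref{lem:compl-simple-queries}) into an explicit conflict-free choice of such facts is the delicate part. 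A minor related point I must pin down is that $\ca_1$ has to be read as the completeness area of the component query that outputs the \emph{full} $L_1$-atom, not merely its location, since the location-only reading appropriate for $L_2,\dots,L_n$ would be too weak for the argument in the first inclusion.
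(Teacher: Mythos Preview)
The paper does not supply a proof of this theorem; it is stated and immediately followed by a remark about its implications. So there is nothing in the paper to compare against; your outline \emph{is} the proof.

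Your pointwise approach via the certain/impossible/possible trichotomy is the right one and tracks the paper's definitions cleanly. Two comments on the obstacles you flag. Concern~(ii), interference between fresh facts under repeated relation symbols, is not a genuine obstacle: you only need to exhibit one tuple in $Q(\di)\setminus Q(D)$ at location~$p$, and by monotonicity any extra witnesses that a fresh $A_i$ supplies for some other $L_j$ can only help; that $A_i$ does not violate $\F$ follows already from $q_i\notin\ca_{L_i}$ together with a generic choice of values for its variable positions, irrespective of the other literals. Concern~(i) is real but dissolves in the setting of the theorem (equality-only selections, the standing assumption before the ``Comparisons'' subsection): $q\notin\ca_L$ forces every statement $F$ about the same relation with $q\in A_F$ to satisfy $Q_L\not\subseteq Q_F$, so some conjunct of $M_F$ is not forced by the constants in~$L$, and a single choice of fresh domain values for the variable positions of $L$ defeats all such $M_F$ simultaneously. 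Your closing caveat --- that $\ca_1$ must be the completeness area of the component query outputting the \emph{full} $L_1$-atom, not just its location --- is correct and necessary; under the location-only reading (which is what the paper's definition of $Q_{L_i}$ literally says) your Case~2 argument would not go through.
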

An implication of this theorem is that the completeness area may contain
incomplete points. Note that these points cannot be in any of the
$\coor$-areas, as such areas by definition cannot contain any possible
answers.

\subsection{Quantification}

So far, we have only discussed how to describe the completeness area
of a query. Obviously, we can quantify the proportion of the an area
of interest that is contained in the completeness area (e.g., 80\%
of Abingdon lie in the completeness area of the query). Since the
completeness area however can contain incomplete points (caused by
possible answers), an area which is 100\% contained in the completeness
area still satisfies query completeness only if the number of possible
answers in the area is zero. 

In general, for an area that is 100\% contained in the completeness
area of a query, we can give bounds for the completeness as percentage
of tuples from the ideal database that are already in the answer over
the available database. Using the relationship between certain and
possible answers, we can give bounds as follows:

\[
1\geq\mbox{Completeness}(Q,\F,\da)\geq\frac{\mid\cert_{Q,\F,\da}\mid}{\mid\cert_{Q,\F,\da}\mid+\ |\poss_{Q,\F,\da}|}
\]

These bounds however might be very wide. Since in the real world features
are not distributed uniformly, any conclusions beyond this bounds
are difficult.
\begin{example}
Consider again the database and completeness statements as shown in
Figure \ref{fig:osm:compl-area}. Then for the query $Q_{\nicehotels}$,
the completeness in the green area lies between 50\% and 100\%, because
there is one certain and one possible answer in that area.
\end{example}

\subsection{Distance Queries with Negation}

It may be interesting to ask queries that include negated literals.
For example, one could ask for the schools that \emph{do} \emph{not}
have a nuclear power plant within 10 kilometers distance. Formally,
a distance query with negation has a form as follows:
\begin{eqnarray*}
 & \query{Q(\tpl d_{1},l{}_{1})}{R_{1}(\tpl d_{1},l_{1}),} & R_{2}(\tpl d_{2},l_{2}),\dist(l_{1},l_{2})<c_{2},\\
 &  & \ldots\\
 &  & R_{i}(\tpl d_{i},l_{i}),\dist(l_{1},l_{i})<c_{i},\\
 &  & \neg R_{i+1}(\tpl d_{i+1},l_{i+1}),\dist(l_{1},l_{i+1})<c_{i+1},\\
 &  & \ldots\\
 &  & \neg R_{n}(\tpl d_{n},l_{n}),\dist(l_{1},l_{n})<c_{n}
\end{eqnarray*}

such that literals from $1$ to $i$ are positive, and from $i+1$
to $n$ are negated. Completeness in the instance-independent case
(Lemma \ref{lem:compl-simple-queries} and Proposition \ref{prop:instance-independent})
holds analogous. In the instance-dependent case, things are different:
\global\long\def\ir{\mathrm{IR}}

We introduce a function $\ir_{L,c,\da}$ for calculating the area
where a feature satisfying an atom $L$ is within a range $c$ in
the database $\da$ as

\[
\bigcup_{f\in Q_{L}(\da)}\buffer(l_{f},c)
\]

Certain and impossible answers are now also differently defined. Let
$Q^{+}$ be the positive part of $Q$. Then
\begin{itemize}
\item $\tilde{\cert}_{Q,\F,\da}=Q(\da)\cap\coor_{L_{i+1}}\cap\ldots\cap\coor_{L_{n}}$
\item $\tilde{\imposs}_{Q,\F,\da}=\imposs_{Q^{+}}\cup(Q_{L_{1}}\cap(\ir_{L{}_{j+1}}\cup\ldots\cup\ir_{L_{n}}))$
\item $\tilde{\poss}$ is defined as before as the remaining features in
$Q_{L_{1}}(\da)$ that are neither certain answers nor impossible
answers. 
\end{itemize}
Having this definitions, we can now define the completeness area of
a query with negation as follows:
\begin{thm}
Let $Q$ be a distance query with negation, $\F$ be a set of feature
completeness statements and $\da$ be a database instance. Then $\ca_{Q,\F,\da}$,
the completeness area of $Q$ wrt. $\F$ and $\da$, satisfies
\[
\ca_{Q,\F,\da}=\ca_{1}\cup\coor_{2}\cup\ldots\cup\coor_{j}\cup\ir_{j+1}\cup\ldots\cup\ir_{n}\setminus\tilde{\poss}_{Q}
\]

\end{thm}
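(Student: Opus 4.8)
The plan is to prove the two inclusions separately, reusing the machinery already developed for distance queries without negated literals and treating each negated atom through the area $\ir$. Throughout, literals $L_1,\dots,L_j$ are the positive ones (with $L_1$ the output atom) and $L_{j+1},\dots,L_n$ the negated ones, and I identify the feature set $\tilde\poss_Q$ with the set of its locations when writing the set difference. I will also use repeatedly the following \emph{free-extension} observation, which is essentially the one underlying Proposition~\ref{lem:compl-simple-queries}: if one adds to $\di$ an $R$-feature whose head coordinates match the relevant constants, whose remaining selectable attributes are fresh constants, and whose location lies outside the completeness area $\ca_{R(\cdot)}$ of the component query for $R$, then no statement of $\F$ is violated. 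Since $\complins_{L,c}=\shrink(\ca_L,c)$, the condition $l\notin\coor_{L,c}$ splits as: either $l\notin\notins_{L,c}$, i.e.\ a witness for $L$ within distance $c$ of $l$ is already present in $\da$, or $l\notin\complins_{L,c}$, i.e.\ there is a point within distance $c$ of $l$ lying outside $\ca_L$ at which a fresh witness can be planted without breaking $\F$.

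\textbf{Soundness ($\supseteq$).} Fix $p$ in the right-hand side and an ideal database $\di$ with $(\di,\da)\models\F$; I show the $Q$-answer tuples located at $p$ coincide over $\di$ and $\da$. If $p\in\coor_{L_k}$ for a positive $k$, completeness of $R_k$ within $c_k$ around $p$ plus the absence of such features over $\da$ forces their absence over $\di$, so no answer tuple can sit at $p$ over either database. If $p\in\ir_{L_k}$ for a negated $k$, an $R_k$-feature within $c_k$ of $p$ already in $\da$ persists in $\di$ by $\da\subseteq\di$, so again no answer sits at $p$. The substantive case is $p\in\ca_1$: by Proposition~\ref{lem:compl-simple-queries} every $R_1$-feature at $p$ in $\di$ already lies in $\da$, and since $p$ carries no $\tilde\poss_Q$-feature each such feature $f$ is in $\tilde\cert$ or in $\tilde\imposs$. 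For $f\in\tilde\cert$ one checks, using $l_f\in\coor_{L_k}$ for every negated $k$ together with $\da\subseteq\di$, that $f$ stays an answer over $\di$; for $f\in\tilde\imposs$ one checks, using either $l_f\in\coor_{L_k}$ for a positive $k$ (so $f\notin Q^{+}(\di)$, hence $f\notin Q(\di)$) or $l_f\in\ir_{L_k}$ for a negated $k$, that $f$ is an answer over neither. Conversely any answer at $p$ over $\di$ is an $R_1$-feature at $p$, hence in $\da$, hence certain (not possible, not impossible), hence in $Q(\da)$. So the answer sets at $p$ agree.

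\textbf{Completeness ($\subseteq$).} I argue the contrapositive: if $p$ is not in the right-hand side I build $\di$ with $(\di,\da)\models\F$ whose $Q$-answers at $p$ differ from those over $\da$. By De~Morgan either (b) $p\notin\ca_1$, $p\notin\coor_{L_k}$ for all positive $k$, and $p\notin\ir_{L_k}$ for all negated $k$, or (a) $p$ is the location of some $f\in\tilde\poss_Q$. In case (b), since $p\notin\ca_1$ I plant a fresh $R_1$-feature $f_0$ with head $(\tpl d_1,p)$ and generic other attributes at $p$; for each positive $L_k$, $p\notin\coor_{L_k}$ gives (by the dichotomy above) a witness within $c_k$, either already in $\da$ or freshly added at an uncovered point; for each negated $L_k$, $p\notin\ir_{L_k}$ means no $R_k$-feature within $c_k$ of $p$ is in $\da$ and I add none. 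Taking $\di$ to be $\da$ with $f_0$ and the added positive witnesses, $(\di,\da)\models\F$ and $f_0\in Q(\di)\setminus Q(\da)$. In case (a), $f\notin\tilde\imposs$ again yields positive witnesses within range and no $R_k$-witness within $c_k$ for negated $k$, so $\di^{*}:=\da\cup\{\text{positive witnesses}\}$ satisfies $\F$ and puts $f$ in $Q(\di^{*})$; if $f\notin Q(\da)$ this already shows incompleteness at $p$, while if $f\in Q(\da)$ then $f\notin\tilde\cert$ gives a negated $L_k$ with $l_f\notin\coor_{L_k}$, and since $f\in Q(\da)$ forces $l_f\in\notins_{L_k,c_k}$ we get $l_f\notin\complins_{L_k,c_k}$, so a fresh $R_k$-witness can be planted within $c_k$ of $f$, giving $\di^{**}$ with $(\di^{**},\da)\models\F$ and $f\in Q(\da)\setminus Q(\di^{**})$.

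\textbf{Main obstacle.} The delicate part is the bookkeeping for the freshly added features: they must neither violate a feature completeness statement, which is ensured by the free-extension observation (place them at points outside the relevant $\ca_L$, with generic selectable attributes), nor accidentally create or destroy the satisfaction of another literal. The subtlest point is when a relation symbol occurs both positively and negatively: a witness added for a positive atom could double as a forbidden witness for a negated atom on the same relation. This is excluded by choosing the added feature's selectable attributes to disagree with the negated atom's selection; in the residual case where the negated atom imposes no selection on that relation, the positive witness would already make $f$ impossible, so $f$ would not be in $\tilde\poss_Q$ in the first place. The only remaining routine issues are the geometric ones — the precise treatment of the boundaries of the areas $\ca_L$, so that "within distance $c$ of $l$ but outside $\ca_L$" is genuinely nonempty whenever $l\notin\complins_{L,c}$ — which parallel the corresponding step in the characterization for distance queries without negated literals.
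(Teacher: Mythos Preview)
The paper does not give a proof of this theorem; it merely states the result and remarks that ``the differences to the positive case are that now also those areas, where a negated feature is present in the available database in the surrounding, belong to the completeness area, and that the possible answers are defined differently.'' Your proposal supplies exactly the argument the paper leaves implicit: you mirror the two-inclusion strategy used (again only informally) for the earlier theorem on positive distance queries, and you handle the new ingredients $\ir_{L_k}$ and $\tilde{\poss}$, $\tilde{\cert}$, $\tilde{\imposs}$ in the natural way. The case analysis is sound, and your ``main obstacle'' paragraph correctly isolates the one genuinely new subtlety (a freshly planted witness for a positive literal must not double as a forbidden witness for a negated literal on the same relation) and disposes of it by the generic-attribute trick. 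This is the approach the paper intends; nothing further is needed.
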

The differences to the positive case are that now also those areas,
where a negated feature is present in the available database in the
surrounding, belong to the completeness area, and that the possible
answers are defined differently.

\subsection{Explanations for Incompleteness}

For a point where a query is not complete, it may be interesting to
know which kind of features can be missing. For a singular point within
a completeness area due to a possible answer, the answer is just this
possible answer. For other points, one has to take into account all
tuples that satisfy the query but do not satisfy the completeness
statements at this point.
\begin{example}
Consider two completeness expressing that hotels with 3 stars and
a dinner restaurant are complete, and that 4 star hotels with a full-day
restaurant are complete. Then there are four possible types of hotels
that could be missing: Hotels not with 3 stars and not with 4 stars,
hotels not with 3 stars and not with a full-day restaurant, hotels
not with an evening restaurant and not with 4 stars, and hotels with
neither an evening restaurant nor a full-day restaurant. Essentially,
as there are two ways to violate the first statement and two ways
to violate the second statement, there are 4 combinations that violate
both statements.
\end{example}
In general, for a feature class with $m$ attributes and $n$ completeness
statements for that class, there could be $m^{n}$ possible explanations.
That means, given that sufficiently many distinct values are used
for attributes in completeness statements, the explanations for incompleteness
will grow exponential. This may not be a problem in practice however,
because possibly only few attributes will be instantiated in completeness
statements. For some attributes, such as stars of a hotel, it makes
sense to instantiate them, but for many others, such as opening hours
or phone numbers it clearly does not make sense.

\subsection{Comparisons}

Using comparisons in completeness statements may be useful, for example
for saying that all hotels with at least 3 stars are complete.

In line with previous results (see Section \ref{sec:prelims:query-containment}),
when adding comparisons to the formalism, reasoning becomes more complex.
In particular, already for simple queries, the problem of deciding
whether a certain point is in the completeness area, becomes coNP
hard:
\begin{thm}
Let $Q$ be a simple query with comparisons, $\F$ be a set of feature
completeness statements and $\da$ be a database instance. Then deciding
whether a point $p$ is in $\ca_{Q,\F}$ is coNP-hard.\end{thm}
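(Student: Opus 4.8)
The plan is to prove coNP-hardness by a polynomial-time many-one reduction from \textbf{3-UNSAT}, the complement of 3-SAT, which is coNP-complete (cf.\ the discussion of query containment with comparisons in Section~\ref{sec:prelims:query-containment}).

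The first step is to record a characterisation of the completeness area for simple queries that, unlike Proposition~\ref{lem:compl-simple-queries}, also covers comparisons. Write $Q$ as $\query{Q(\tpl x,l)}{R(\tpl x,l),M_Q}$ and every $F_i\in\F$ as $\compl{R,M_i,A_i}$, where $M_Q$ and the $M_i$ are read as conjunctions of selections. For a point $p$, the only way $Q$ can fail to be complete at $p$ is that some fact $R(\tpl a,p)$ with $\tpl a\models M_Q$ is missing from $\da$; such a fact can be added to the ideal database without violating $\F$ exactly when it is \emph{not} an answer of any $Q_{F_i}$ with $p\in A_i$, i.e.\ when $\tpl a\not\models M_i$ for every such $i$. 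In particular, if $\da=\eset$ (which we may assume in the reduction) this yields
\[
p\in\ca_{Q,\F}\mbox{\quad iff\quad}M_Q\models\bigvee_{i\,:\,p\in A_i}M_i .
\]
The nontrivial direction uses that, when for some $\tpl a\models M_Q$ no index $i$ with $p\in A_i$ satisfies $\tpl a\models M_i$, the incomplete database $(\set{R(\tpl a,p)},\eset)$ satisfies $\F$ yet makes $Q$ incomplete at $p$; verifying that the single added fact violates no statement of $\F$ is the place where some care is needed.

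Given a 3-CNF formula $\phi=\gamma_1\wedge\cdots\wedge\gamma_k$ over propositions $q_1,\dots,q_m$, I would build the reduction instance as follows. Let $R$ have attributes $x_1,\dots,x_m$ plus a location attribute, let $\da=\eset$, let $Q$ be $\query{Q(\tpl x,l)}{R(\tpl x,l),\, x_1\le x_1}$ (the trivial comparison only makes $Q$ formally a query with comparisons), and fix a single area $A$ together with a point $p\in A$. For each clause $\gamma_i=(\ell_{i1}\vee\ell_{i2}\vee\ell_{i3})$ put the statement $F_i=\compl{R,M_i,A}$ into $\F$, where $M_i=\set{\mathit{neg}(\ell_{i1}),\mathit{neg}(\ell_{i2}),\mathit{neg}(\ell_{i3})}$ and $\mathit{neg}(q_j)$ is the selection $x_j<0$ while $\mathit{neg}(\neg q_j)$ is the selection $x_j\ge 0$. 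Reading a value $\ge 0$ for $x_j$ as ``$q_j$ true'' and a value $<0$ as ``$q_j$ false'' is a genuine dichotomy because the domain $\dom$ is dense and ordered, and under this reading $\neg M_i$ is precisely the disjunction of literal-constraints corresponding to $\gamma_i$; hence $\bigwedge_i\neg M_i$ is satisfiable over $\dom^{m}$ iff $\phi$ is satisfiable.

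It remains to combine the two ingredients. All areas $A_i$ equal $A$ and $p\in A$, so by the characterisation above $p\in\ca_{Q,\F}$ iff $\bigvee_i M_i$ is valid over $\dom^{m}$, i.e.\ iff $\bigwedge_i\neg M_i$ is unsatisfiable, i.e.\ iff $\phi$ is unsatisfiable; and the construction of $(Q,\F,p)$ from $\phi$ is plainly polynomial. Thus deciding $\ca$-membership for simple queries with comparisons is coNP-hard. The main obstacle I anticipate is not the 3-SAT encoding --- which mirrors the disjunctive order-constraint gadgets of Section~\ref{sec:prelims:query-containment} --- but making the extended completeness-area characterisation fully rigorous, in particular the witness construction for incompleteness and the bookkeeping that the added fact respects every feature completeness statement in $\F$.
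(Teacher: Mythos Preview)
Your proposal is correct and follows essentially the same approach as the paper: both reduce a propositional validity problem by mapping each clause to one feature completeness statement over a single area $A$, so that $p\in\ca_{Q,\F}$ holds iff every assignment is captured by some statement. The only cosmetic differences are that the paper reduces DNF-tautology using equality selections $v_j=\mathrm{true}/\mathrm{false}$, whereas you reduce 3-UNSAT using the order dichotomy $x_j<0$ vs.\ $x_j\ge 0$ in the style of Section~\ref{sec:prelims:query-containment}; your choice has the minor advantage that it works over an arbitrary dense ordered domain without any side argument about restricting attribute values.
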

\begin{proof}
By reduction of the propositional tautology problem. \global\long\def\sig{\mathrm{sig}}

Consider a propositional tautology problem that asks whether a formula
$\phi=l_{1}\wedge l_{2}\wedge l_{3}\bigvee\ldots\bigvee l_{n-2}\wedge l_{n-1}\wedge l_{n}$
is a tautology, and assume that $\phi$ contains variables $v_{1}$
to $v_{m}$. Then this tautology problem can be reduced to a basic-query-completeness
problem as follows: First, one introduces a feature $R$ with $m$
arguments. Then, for each clause $l_{i}\wedge l_{i+1}\wedge l_{i+2}$
one introduces a completeness statement $\compl{R(v_{1},\ldots,v_{n});v_{i}=\sig(l_{i}),v_{i+1}=\sig(l_{i+1}),v_{i+2}=\sig(l_{i+2});A}$,
where $\sig(v_{i})$ returns $\true$ if $l_{i}$ is positive and
returns false otherwise, and by considering a query $\query{Q()}{R(x_{1},\ldots,x_{m})}.$

Clearly, any point in $A$ lies in the completeness area of $Q$ if
and only if $\phi$ is a tautology.
\end{proof}
Similarly as in the section before, this result may be little harmful
in practice, as in practice likely only few attributes of a feature
will be used for completeness statements, and only few completeness
statements will overlap (see also the discussion of the statements
in the following section).

\section{Discussion}

\label{sec:geocompl-discussion} In this section, we discuss various
practical considerations regarding the theory presented so far.

\paragraph{Pragmatics}

All queries used in this chapter are in some way asymmetric, as the
features used in constraining the output feature are more seldom than
the output feature. E.g., there are much more schools than than nuclear
power plants, or considerable more hotels than train stations. Possibly,
this will also hold for most practically used queries. If that is
the case, then in the instance reasoning, the condition that a constraining
feature is complete but out of range is likely to contribute significantly
to the completeness area of queries.

\paragraph{Language of Statements in OpenStreetMap}

The statements as used on the OSM Wiki (see Figure \ref{fig:legend-osm-statements})
do not use comparisons. Thus, the reasoning is in PTIME. The statements
in OSM are furthermore of an easy kind because they do not use constants
at all, but just express that one out of 12 feature classes is complete
in a certain area. Furthermore, the statements are also computationally
well-behaved in another way: The areas for which they are given do
not overlap, instead, the statements are always given for disjoint
areas (compared with stating that Irish pubs are complete in all Abingdon
and pubs are also complete in the center of Abingdon, which are spatially
and semantically overlapping statements).

\paragraph{Current Usage in OpenStreetMap}

So far, the use of completeness statements on the OSM wiki is sparse.
More concretely, out of 22,953 on 11th of June, 2013, only approximately
1,300 Wiki pages (\textasciitilde{}5\%) give completeness statements
(estimate based on number of pages that contain an image used in the
table).

Another limitation is that at the moment completeness statements are
only given for urban areas. This may change if completeness statements
become more frequently used.

Particular challenges are the dynamicity of the real-world in two
aspects: New features can arise that toggle previous completeness
statements incorrect, and features can disappear.

The first challenge can be addressed by regularly reviewing completeness
statement, and giving completeness guarantees only with time stamps
(``complete as of xx.yy.zzzz''). The second challenge goes beyond
the term of completeness, and instead asks also for correctness guarantees.
Mappers then not only would have to guarantee that all information
of the real world is captured in the database, but also the contrary.

In OSM, completeness statements come in 7 different levels, ranging
from unknown to completeness verified by two persons (see Figure \ref{fig:legend-osm-statements}).
In that figure the lower table also contains a row concerning the
implications on usage (\textquotedbl{}Use for navigation\textquotedbl{}).
Still, it remains hard know how to interpret the levels and to know
the implications on data usage.

\paragraph{Gamification}

Using games to achieve human computation tasks is a popular topic.
Games such as Google's Ingress%
\footnote{www.ingress.com%
} have shown that there is a considerable interest in geographical
augmented-reality games. Projects such as Urbanopoly \cite{urbanopoly}
show that this interest could in principle also be utilized for computation
of geographic information.

The general aims of a project for promoting completeness statement
usage would be twofold:
\begin{enumerate}
\item To obtain as many and as general completeness statements as possible
\item To ensure that the current statements are correct.
\end{enumerate}
To achieve both goals, one could introduce a game where users get
points for making correct statements, with the points being proportional
to the extend of the statements, thus covering the first goal. To
cover the second goal, the game would also reward the falsification
of completeness statements, and penalize the players that gave wrong
statements. Correctness of falsifications could be based on common
crowd-sourced consolidation techniques as discussed for instance in
\cite{human-computation-consolidation}.

The charm of such a method would be that by altering the reward function,
one could steer user efforts into topics of interest, e.g., by giving
more points for mapping efforts in areas with low completeness.

\section{Related Work}

\label{sec:geocompl-related work}

To the best of our knowledge, the only work on analyzing the completeness
of OpenStreetMap was done Mooney et al. \cite{quality-metrics-for-osm}
and Haklay and Ellul \cite{OSM-completeness-analysis-england,OSM-completeness-analysis-2}.
The former introduced general quality metrics for OSM, while the latter
analyzed the completeness of the road maps in England by comparing
them with government data sources.

Regarding metadata based completeness assessment of geographical data,
no work has been done so far.

\section{Summary}

In this chapter we have discussed how to assess the completeness of
spatial databases based on metadata. For the class of distance queries,
we have reduced completeness assessment to the assessment of simple
queries, combined with the consideration of possible answers. We have
also shown that in principle, giving explanations for incompleteness
and reasoning over statements with comparisons is coNP-hard.

The statements used in OpenStreetMap are of a simple kind however,
and give expectations that systems implementing our algorithms using
the OSM completeness statements will face little computational challenges.
On the other hand, the conceptual challenges regarding the maintenance
and meaning of completeness statements are more serious, and cannot
be answered only from the formal side.

We are currently working on an implementation of our theory %
\footnote{http://www.inf.unibz.it/\textasciitilde{}srazniewski/geoCompl/%
}, a screenshot of our system can be seen in Figure \ref{fig:demo}.
We use the Java Topoloy Suite (JTS) for the computation of the spatial
operations, and plan to use Leaflet scripts for deploying the demo
online.

\begin{figure}
\includegraphics[scale=0.38]{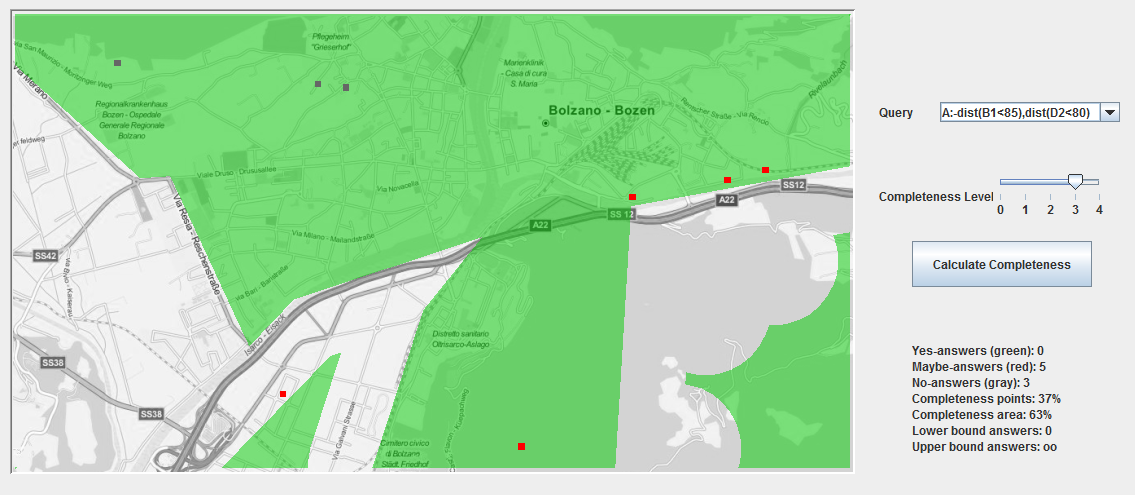}

\caption{Screenshot from our implementation of geographic completeness reasoning}

\label{fig:demo}

\end{figure}

\chapter{Linked Data}

\label{chap:lod}

With thousands of RDF data sources today available on the Web, covering
disparate and possibly overlapping knowledge domains, the problem
of providing high-level descriptions (in the form of metadata) of
their content becomes crucial. In this chapter we discuss reasoning
about the completeness of semantic web data sources. We show how the
previous theory can be adapted for RDF data sources, what peculiarities
the SPARQL query language offers and how completeness statements themselves
can be expressed in RDF. This chapter originated from the co-supervision
of the master thesis of Fariz Darari \cite{darari:master:thesis:2013}.
Subsequently, the results have been published at the International
Semantic Web Conference 2013 \cite{darari:ISWC:2013}.

This chapter discusses the foundation for the expression of completeness
statements about RDF data sources. The aim is to complement with \textit{qualitative}
descriptions about completeness the existing proposals like \void\ that
mainly deal with \textit{quantitative} descriptions. We develop a
formalism and show its feasibility. The second goal of this chapter
is to show how completeness statements can be useful for the semantic
web in practice. We believe that the results have both a theoretical
and practical impact. On the theoretical side, we provide a formalization
of completeness for RDF data sources and techniques to reason about
the completeness of query answers. From the practical side, completeness
statements can be easily embedded in current descriptions of data
sources and thus readily used. The results presented in this chapter
have been implemented by Darari in a demo system called CORNER.

\noindent \textbf{Outline.} The chapter is organized as follows. Section
\ref{sec:introduction} provides an introduction to the Semantic Web
and the challenges wrt.\ completeness. Section~\ref{sec:scenario}
discusses a real world scenario and provides a high level overview
of the completeness framework. Section~\ref{sec:formal-framework}
after providing some background introduces a formalization of the
completeness problem for RDF data sources. This section also describes
how completeness statements can be represented in RDF. In Section~\ref{sec:compl-querying-single}
we discuss how completeness statements can be used in query answering
when considering a single data source at a time. In Section~\ref{sec:discussion}
we discuss some aspects of the proposed framework, and in Section
\ref{sec:rdf:relatedwork} we discuss related work.

\section{Background}

\label{sec:introduction} The Resource Description Framework (RDF)~\cite{W3C04}
is the standard data model for the publishing and interlinking of
data on the Web. It enables the making of \textit{statements} about
(Web) resources in the form of triples including a \textit{subject},
a \textit{predicate} and an \textit{object}. Ontology languages such
as RDF Schema (RDFS) and OWL provide the necessary underpinning for
the creation of vocabularies to structure knowledge domains. Friend-of-a-Friend
(FOAF), Schema.RDFS.org and Dublin Core (DC) are a few examples of
such vocabularies. 
RDF is now a reality; efforts like the Linked Open Data project~\cite{heath2011}
give a glimpse of the magnitude of RDF data today available online.
The common path to access such huge amount of structured data is via
SPARQL endpoints, that is, network locations that can be queried upon
by using the SPARQL query language~\cite{W3C-SPARQL1.1}.


With thousands of RDF data sources covering possibly overlapping knowledge
domains the problem of providing high-level descriptions (in the form
of metadata) of their content becomes crucial. Such descriptions will
connect data publishers and consumers; publishers will advertise ``what''
there is inside a data source so that specialized applications can
be created for data source discovering, cataloging, selection and
so forth. Proposals like the \void~\cite{W3C11} vocabulary touched
this aspect. With \void\ it is possible to provide statistics about
how many instances a particular \textit{class} has, info about its
SPARQL endpoint and links with other data sources, among the other
things. However, \void\ mainly focuses on providing \textit{quantitative}
information. 
We claim that toward comprehensive descriptions of data sources \textit{qualitative}
information is crucial.

\section{Motivating Scenario}

\label{sec:scenario} In this section we motivate the need of formalizing
and expressing completeness statements in a machine-readable way.
Moreover we show how completeness statement are useful for query answering.
We start our discussion with a real data source available on the Web.
Figure \ref{fig:imdb} shows a screenshot taken from the IMDB web
site. The page is about the movie Reservoir Dogs; in particular it
lists the cast and crew of the movie. For instance, it says that Tarantino
was not only the director and writer of the movie but also the character
Mr. Brown. As it can be noted, the data source includes a ``completeness
statement'', which says that the page is \textit{complete for all
cast and crew members of the movie}. The availability of such statement
increases the potential value of the data source. In particular, users
that were looking for information about the cast of this movie and
found this page can prefer it to other pages since, assuming the truth
of the statement, all they need is here. 
\begin{figure}[!h]
\centering \includegraphics[width=0.9\textwidth]{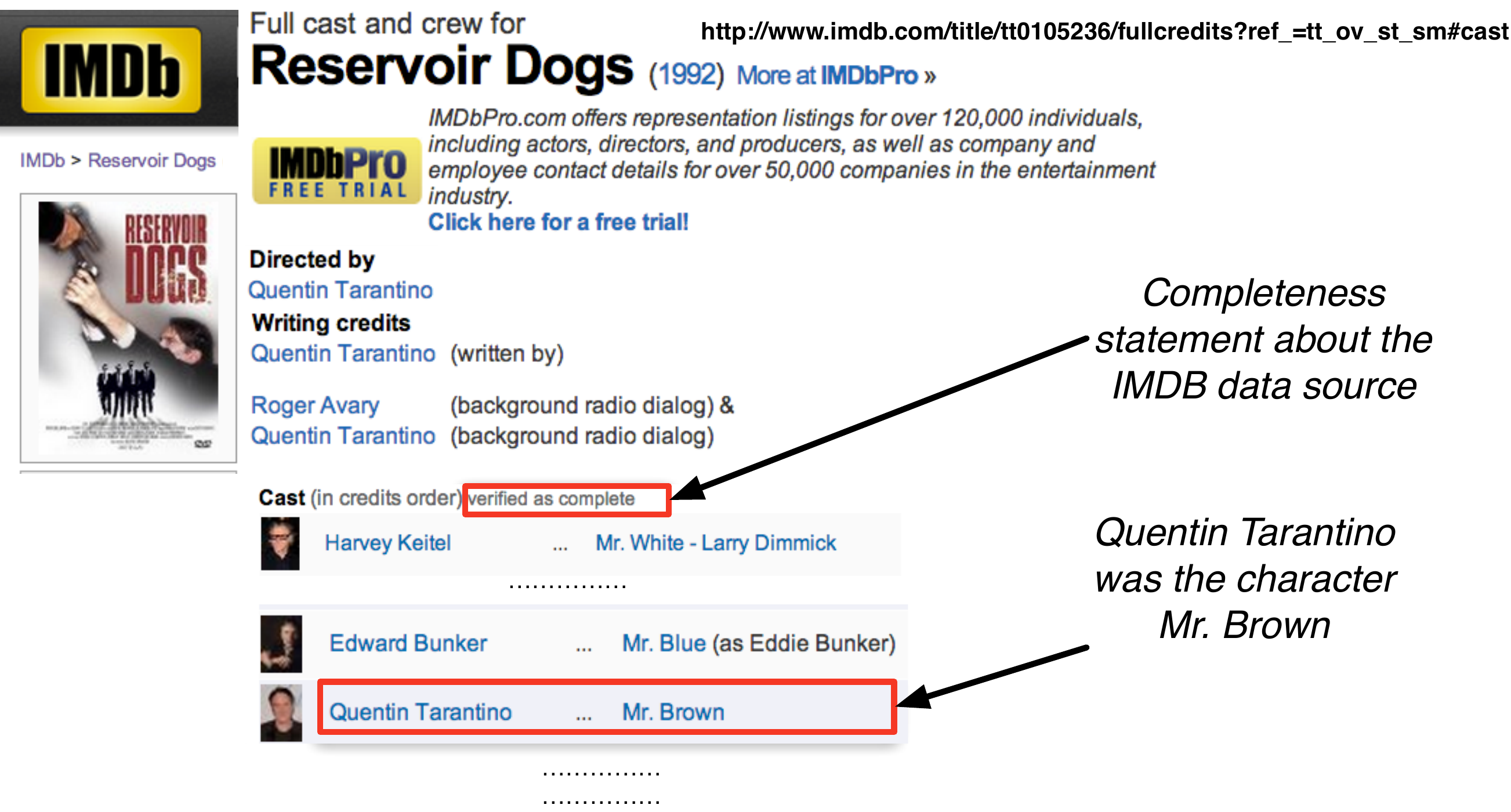}
\caption{A \textit{completeness statement} in IMDB.}

\label{fig:imdb} 
\end{figure}

The problem with such kind of statements, expressed in natural language,
is that they cannot be automatically processed, thus hindering their
applicability, for instance, in query answering. Indeed, the interpretation
of the statement ``verified as complete'' is left to the user. On
the other hand, a reasoning and querying engine when requested to
provide information about the cast and crew members of Reservoir Dogs
could have leveraged such statement and inform the user about the
completeness of the results.

\smallskip{}
\textbf{Machine readable statements.} In the RDF and linked data context
with generally incomplete and possibly overlapping data sources and
where \textit{``anyone can say anything about any topic and publish
it anywhere''} having the possibility to express completeness statements
becomes an essential aspect. The machine-readable nature of RDF enables
to deal with the problems discussed in the example about IMDB; completeness
statements can be represented in RDF. As an example, the high-level
description of a data source like DBpedia could include, for instance,
the fact that it is complete for all of Quentin Tarantino's movies.
Figure \ref{fig:graph-compl} shows how the data source DBPedia can
be complemented with completeness statements expressed in our formalism.
Here we give a high level presentation of the completeness framework;
details on the theoretical framework supporting it are given in Section~\ref{sec:formal-framework}.
\begin{figure}[!h]
\centering \includegraphics[width=0.9\textwidth]{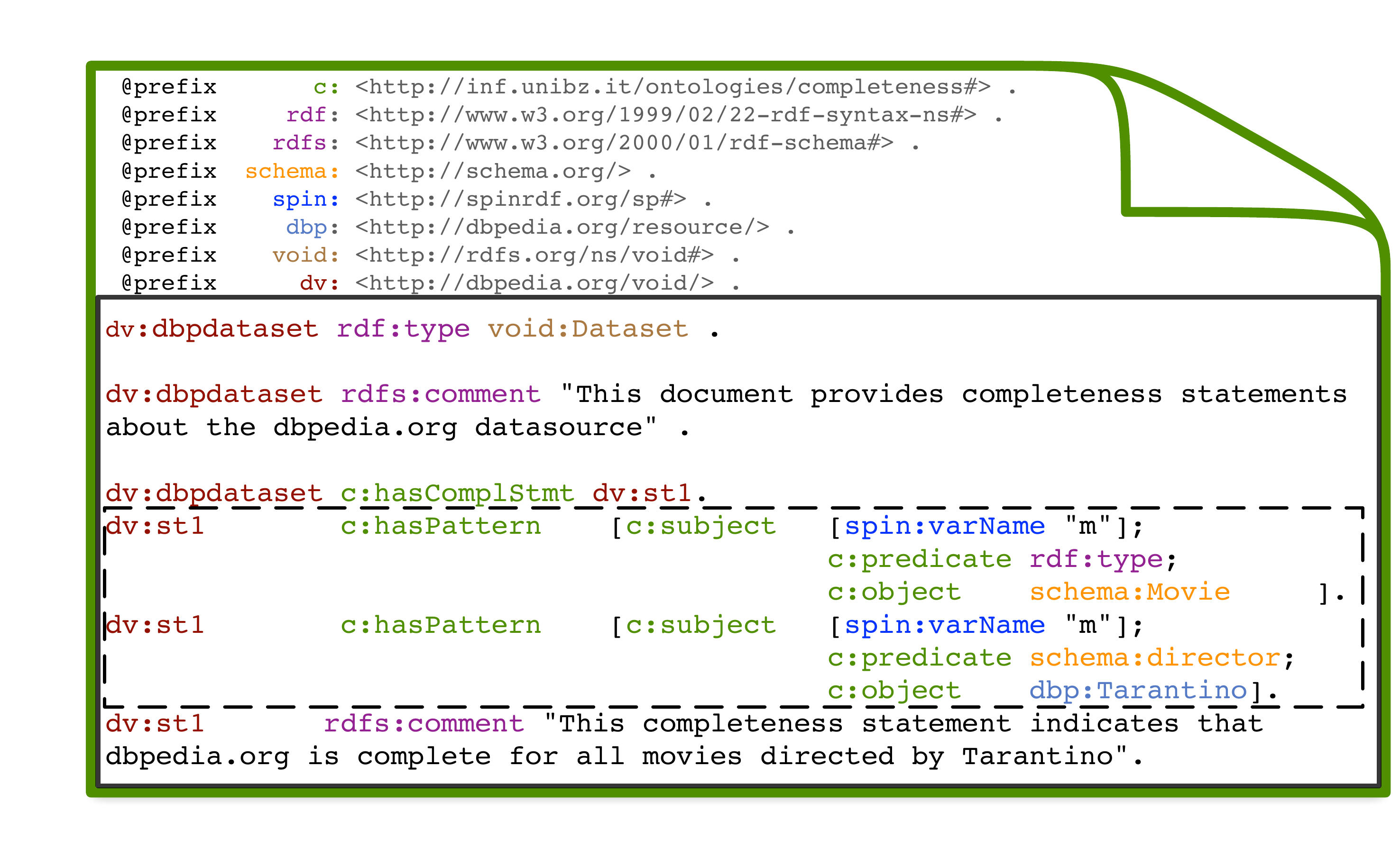}
\caption{Completeness statements about \texttt{dbpedia.org}}

\label{fig:graph-compl} 
\end{figure}

\noindent A simple statement can be thought of as a SPARQL Basic Graph
Pattern (BGP). The BGP{\small{} }\texttt{\small{}(?m rdf:type schema:Movie).(?m
schema:director}~\linebreak{}
\texttt{\small{} dbp:Tarantino)}, for instance, expresses the fact
that \texttt{dbpedia.org} is complete for all movies directed by Tarantino.
In the figure, this information is represented by using an ad-hoc
completeness vocabulary (see Section~\ref{subsec:representationInRDF})
with some properties taken from the SPIN%
\footnote{http://spinrdf.org/sp.html\#sp-variables.%
} vocabulary. For instance, the $\texttt{compl:hasPattern}$ links
a completeness statement with a pattern.

\smallskip{}
\textbf{Query Completeness.} The availability of completeness statements
about data sources is useful in different tasks, including data integration,
data source discovery and query answering. In this chapter we will
focus on how to leverage completeness statements for query answering.
The research question we address is how to assess whether available
data sources with different degree of completeness can ensure the
completeness of query answers. 
Consider the scenario depicted in Figure \ref{fig:example-query}
where the data sources DBpedia and LinkedMDB are described in terms
of their completeness. The Web user Syd wants to pose the query $Q$
to the SPARQL endpoints of these two data sources asking for \textit{all
movies directed by Tarantino in which Tarantino also starred}. By
leveraging the completeness statements, the query engines at the two
endpoints could tell Syd whether the answer to his query is complete
or not. For instance, although DBPedia is complete for all of Tarantino's
movies (see Figure \ref{fig:graph-compl}) nothing can be said about
his participation as an actor in these movies (which is required in
the query). Indeed, at the time of writing this chapter, DBPedia is
actually incomplete; this is because in the description of the movie
Reservoir Dogs the fact is missing that Tarantino was the character
Mr. Brown (and from Figure \ref{fig:imdb} we know that this is the
case). On the other hand, LinkedMDB, the RDF counterpart of IMDB,
can provide a complete answer. Indeed, with our framework it is possible
to express in RDF the completeness statement available in natural
language in Figure \ref{fig:imdb}. This statement has then been used
by the CORNER reasoning engine, implementing our formal framework,
to state the completeness of the query. 
\begin{figure}[!h]
\centering \includegraphics[width=1\textwidth]{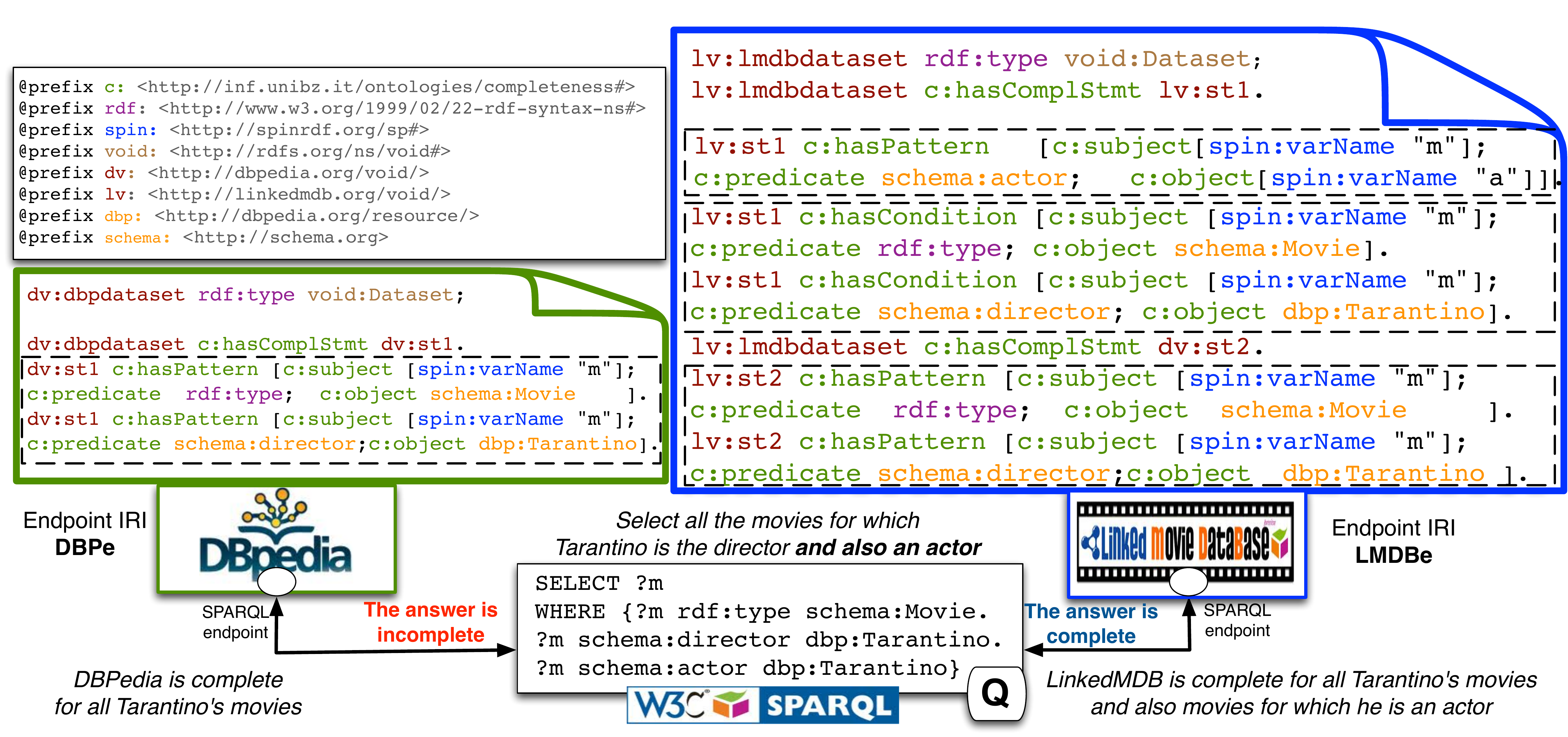}
\caption{Completeness statements and query answering.}

\label{fig:example-query} 
\end{figure}

In this specific case, LinkedMDB can guarantee the completeness of
the query answer because it contains all the actors in Tarantino's
movies (represented by the statement \texttt{\small{}lv:st1}) in addition
to the\linebreak{}
 Tarantino's movies themselves (represented by the statement \texttt{\small{}lv:st2}). 

Note that the statement \texttt{\small{}lv:st1} includes two parts:
(i) the pattern, which is expressed via the BGP \texttt{\small{}(?m,
schema:actor, ?a)} and (ii) the conditions, that is, the BGP \texttt{\small{}(?m,
rdf:type, schema:Movie).(?m, schema:director, dbp:Tarantino)}. Indeed,
a completeness statement allows one to say that a certain part (i.e.,
with respect to some conditions) of data is complete, or in other
words, it can be used to state that a data source contains all triples
in a pattern $P_{1}$ that satisfy a condition $P_{2}$. The detailed
explanation and the semantics of completeness statements can be found
in Section~\ref{sec:formal-framework}.

\section{Framework for RDF Data}

\label{sec:formal-framework} In the following, we introduce the RDF
data format and the SPARQL query language, and show how the previous
notions for talking about data completeness can be extended to the
new setting.

\smallskip{}
\textbf{RDF and SPARQL.} \label{sec:background} We assume that there
are three pairwise disjoint infinite sets $I$ (\emph{IRIs}), $L$
(\emph{literals}) and $V$~(\emph{variables}). We collectively refer
to IRIs and literals as \emph{RDF terms} or simply \emph{terms}. A
tuple $(s,p,o)\in I\times I\times(I\cup L)$ is called an \emph{RDF
triple} (or a \emph{triple}), where $s$ is the \emph{subject}, $p$
the \emph{predicate} and $o$ the \emph{object} of the triple. An
\emph{RDF graph} or \emph{data source} consists of a finite set of
triples~\cite{W3C04}. For simplicity, we omit namespaces for the
abstract representation of RDF graphs.

\noindent The standard query language for RDF is SPARQL. The basic
building blocks of a SPARQL query are \emph{triple patterns,} which
resemble RDF triples, except that in each position also variables
are allowed. SPARQL queries include \emph{basic graph patterns} (BGP),
built using the $\AND$ operator, and further operators, including
$\OPT$, $\FILTER$, $\UNION$ and so forth. In this paper we consider
the operators $\AND$ and $\OPT$. Moreover, we also consider the
result modifier $\DISTINCT$. Evaluating a graph pattern $P$ over
an RDF graph $G$ results in a set of mappings $\mu$ from the variables
in $P$ to terms, denoted as $\eval PG$. Further information about
SPARQL can be found in \cite{perez2009semantics}.

SPARQL queries come as $\SELECT$, $\ASK$, or $\CONSTRUCT$ queries.
A $\SELECT$ query has the abstract form $(W,P)$, where $P$ is a
graph pattern and $W$ is a subset of the variables in $P$. A $\SELECT$
query $Q=(W,P)$ is evaluated over a graph $G$ by restricting the
mappings in $\eval PG$ to the variables in $W$. The result is denoted
as $\eval QG$. Syntactically, an $\ASK$ query is a special case
of a $\SELECT$ query where $W$ is empty. For an $\ASK$ query $Q$,
we write also $\eval QG=\true$ if $\eval QG\neq\eset$, and $\eval QG=\false$
otherwise. A $\CONSTRUCT$ query has the abstract form $(P_{1},P_{2})$,
where $P_{1}$ is a BGP and $P_{2}$ is a graph pattern. In this paper,
we only use $\CONSTRUCT$ queries where also $P_{2}$ is a BGP. The
result of evaluating $Q=(P_{1},P_{2})$ over $G$ is the graph $\eval QG$,
that is obtained by instantiating the pattern $P_{1}$ with all the
mappings in $\eval{P_{2}}G$.

Later on, we will distinguish between three classes of queries: (i)
Basic queries, that is, queries $\sparqlquery WP$ where $P$ is a
BGP and which return bags of mappings (as it is the default in SPARQL),
(ii) $\DISTINCT$ queries, that is, queries $\sparqlquery WP^{d}$
where $P$ is a BGP and which return sets of mappings, and (iii) $\OPT$
queries, that is, queries $\sparqlquery WP$ without projection ($W=\var(P)$)
and $P$ is a graph pattern with $\OPT$.

\subsection{Completeness Statements and Query Completeness}

\label{sec:formal-definitions} We are interested in formalizing when
a query is complete over a potentially incomplete data source and
in describing which parts of such a source are complete. When talking
about the completeness of a source, one implicitly compares the information
\emph{available} in the source with what holds in the world and therefore
should \emph{ideally} be also present in the source. As before, we
only consider sources that may miss information, but do not contain
wrong information. 
\begin{defn}
[Incomplete Data Source] We identify data sources with RDF graphs.
Then, adapting the previous notion of incomplete databases, we define
an incomplete data source as a pair $\pg=(\Ga,\Gi)$ of two graphs,
where $\Ga\subseteq\Gi$. We call $\Ga$ the \emph{available} graph
and $\Gi$ the \emph{ideal} graph. \label{def:incomplete-ds} 
\end{defn}

\begin{example}
[Incomplete Data Source] \label{ex:incompl_data_source} Consider
the DBpedia data source and suppose that the only movies directed
by Tarantino are Reservoir Dogs, Pulp Fiction, and Kill Bill, and
that Tarantino was starred exactly in the movies Desperado, Reservoir
Dogs, and Pulp Fiction. For the sake of example, suppose also that
the fact that he was starred in Reservoir Dogs is missing in DBpedia%
\footnote{as it was the case on 7 May 2013%
}. Using Definition~\ref{def:incomplete-ds}, we can formalize the
incompleteness of the DBpedia data source $\pg_{\dbp}$ as: 
\begin{align*}
\Ga_{\dbp}=\; & \{\triple{reservoirDogs}{director}{tarantino},\!\triple{pulpFiction}{director}{tarantino},\\[-2pt]
 & \triple{killBill}{director}{tarantino},\triple{desperado}{actor}{tarantino},\\[-2pt]
 & \triple{pulpFiction}{actor}{tarantino},\triple{desperado}{type}{Movie},\\[-2pt]
 & \triple{reservoirDogs}{type}{Movie},\triple{pulpFiction}{type}{Movie},\\
 & \triple{killBill}{type}{Movie}\}\\
\Gi_{\dbp}=\; & \Ga_{dbp}\cup\set{\triple{reservoirDogs}{actor}{tarantino}}
\end{align*}
 
\end{example}

We now introduce \emph{completeness statements}, which are used to
denote the partial completeness of a data source, that is, they describe
for which parts the ideal and available graph coincide. 
\begin{defn}
[Completeness Statement] A completeness statement\linebreak{}
 $\defineGC{P_{1}}{P_{2}}$ consists of a non-empty BGP $P_{1}$ and
a BGP $P_{2}$. We call $P_{1}$ the \emph{pattern} and $P_{2}$ the
\emph{condition} of the completeness statement. \label{def:compl-stat} 
\end{defn}
For example, we express that a source is complete for all pairs of
triples that say \quotes{\emph{$?m$ is a movie and $?m$ is directed
by Tarantino}} using the statement 
\begin{equation}
\gc_{\mathit{dir}}=\defineGC{\triple{?m}{type}{Movie},\triple{?m}{director}{tarantino}}{\emptyset},\label{eq:1}
\end{equation}
 whose pattern matches all such pairs and whose condition is empty.
To express that a source is complete for all triples about acting
in movies directed by Tarantino, we use 
\begin{equation}
\gc_{\mathit{act}}=\defineGC{\triple{?m}{actor}{?a}}{\triple{?m}{director}{tarantino},\triple{?m}{type}{Movie}},\label{eq:completeness-statement-act}
\end{equation}
 whose pattern matches triples about acting and the condition restricts
the acting to movies directed by Tarantino.

We define the satisfaction of completeness statements over incomplete
data sources analogous to the one for table completeness statements
over incomplete databases (Definition \ref{def:satisfaction-of-tc}).
To a statement $C=\defineGC{P_{1}}{P_{2}}$, we associate the \CONSTRUCT\ query
$Q_{C}=(P_{1},P_{1}\cup P_{2})$. Note that, given a graph $G$, the
query $Q_{C}$ returns those instantiations of the pattern $P_{1}$
that are present in $G$ together with an instantiation of the condition.
For example, the query $Q_{\gc_{\mathit{act}}}$ returns all the actings
in Tarantino movies in $G$. 
\begin{defn}
[Satisfaction of Completeness Statements]For an incomplete data source
$\pg=(\Ga,\Gi)$, the statement $C$ is satisfied by $\pg$, written
$\pg\models C$, if $\eval{Q_{C}}{\Gi}\subseteq\Ga$ holds.\end{defn}
\begin{example}
To see that the statement $\gc_{\mathit{dir}}$ is satisfied by $\pg_{\dbp}$,
observe that the query $Q_{\gc_{\mathit{dir}}}$ returns over $\Gi_{\dbp}$
all three movie triples in $\id G_{\dbp}$, and that all these triples
are also in $\Ga_{\dbp}$. However, $\gc_{\mathit{act}}$ is \emph{not}
satisfied by $\pg_{\dbp}$, because $Q_{\gc_{\mathit{act}}}$ returns
over $\Gi_{\dbp}$ the triple $\triple{reservoirDogs}{actor}{tarantino}$,
which is not in $\Ga_{\dbp}$.
\end{example}
Observe that the completeness statements defined here go syntactically
beyond the table completeness statements introduced in Section \ref{sec:prelim:table-completeness},
as they allow more than one atom in the head of the statement. However,
these statements can easily be translated to a set of linearly many
statements with only one atom in the head as follows: 
\begin{prop}
Consider a completeness statement $C=\defineGC{P_{1}}{P_{2}}$ with
$P_{1}=t_{1},\ldots,t_{n}$. Then any incomplete data source $\pg$
satisfies $C$ if and only it satisfies the following statements:
\begin{eqnarray*}
 & \defineGC{t_{1}}{P_{1}\setminus\set{t_{1}},P_{2}}\\
 & \ldots\\
 & \defineGC{t_{n}}{P_{1}\setminus\set{t_{n}},P_{2}}.
\end{eqnarray*}

\end{prop}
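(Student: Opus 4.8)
The plan is to prove both directions by unfolding the definition of satisfaction of a completeness statement (Definition of satisfaction just above: $\pg \models C$ iff $\eval{Q_C}{\Gi} \subseteq \Ga$) and showing that the associated \CONSTRUCT\ query $Q_C = (P_1, P_1 \cup P_2)$ decomposes atom-wise. Write $C = \defineGC{P_1}{P_2}$ with $P_1 = t_1, \ldots, t_n$, and for each $i$ let $C_i = \defineGC{t_i}{P_1 \setminus \set{t_i}, P_2}$ with associated query $Q_{C_i} = (\set{t_i}, P_1 \cup P_2)$. The key observation is that all the $Q_{C_i}$ share the same condition graph pattern $P_1 \cup P_2$; hence for any graph $G$ they are evaluated over exactly the same set of mappings $\eval{(P_1 \cup P_2)}{G}$, and $Q_C$ differs from them only in which part of the pattern gets instantiated to form the output.

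First I would make precise the set-theoretic identity underlying the argument. For any RDF graph $G$, since the head of $Q_C$ is the BGP $P_1 = t_1 \cup \cdots \cup t_n$ and instantiating a union of triple patterns under a mapping $\mu$ yields the union of the instantiations, we have
\[
\eval{Q_C}{G} \;=\; \bigcup_{\mu \in \eval{(P_1 \cup P_2)}{G}} \mu(P_1) \;=\; \bigcup_{i=1}^{n} \Bigl( \bigcup_{\mu \in \eval{(P_1 \cup P_2)}{G}} \mu(\set{t_i}) \Bigr) \;=\; \bigcup_{i=1}^{n} \eval{Q_{C_i}}{G}.
\]
(One should be slightly careful here that $\eval{\cdot}{\cdot}$ for \CONSTRUCT\ queries returns a \emph{set} of triples, so the union is a genuine set union and no multiplicity bookkeeping is needed; also $P_2$ need not be safe, but this does not matter because the condition pattern is the same on both sides.) Applying this with $G = \Gi$ gives $\eval{Q_C}{\Gi} = \bigcup_{i=1}^n \eval{Q_{C_i}}{\Gi}$.

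With that identity in hand both directions are immediate. For the ``only if'' direction: if $\pg \models C$, then $\bigcup_{i=1}^n \eval{Q_{C_i}}{\Gi} = \eval{Q_C}{\Gi} \subseteq \Ga$, so in particular each $\eval{Q_{C_i}}{\Gi} \subseteq \Ga$, i.e.\ $\pg \models C_i$ for every $i$. For the ``if'' direction: if $\pg \models C_i$ for all $i$, then each $\eval{Q_{C_i}}{\Gi} \subseteq \Ga$, hence their union $\eval{Q_C}{\Gi} \subseteq \Ga$, i.e.\ $\pg \models C$.

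I do not anticipate a hard part in the mathematical sense — the statement is essentially a bookkeeping lemma. The only thing requiring a little care is justifying the distributivity step $\mu(P_1) = \bigcup_i \mu(\set{t_i})$ at the level of the concrete SPARQL \CONSTRUCT\ semantics (and noting that the shared condition pattern $P_1 \cup P_2$ is what makes the decomposition clean, rather than, say, $C_i$ having condition $P_1 \setminus \set{t_i}$ without the other $t_j$'s — which would be \emph{wrong}; the statement as printed correctly keeps $P_1 \setminus \set{t_i}$ in the condition, so the union of conditions-plus-patterns on the right equals $P_1 \cup P_2$ for each $i$). I would phrase the proof to make this explicit, perhaps with a one-line remark that this mirrors exactly the analogous decomposition used implicitly for table completeness statements in Section~\ref{sec:prelim:table-completeness}.
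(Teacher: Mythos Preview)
Your proof is correct. The paper states this proposition without proof, treating it as an immediate observation; your argument supplies exactly the natural justification, namely that each $Q_{C_i}$ has the same condition pattern $P_1\cup P_2$ as $Q_C$ (since $\{t_i\}\cup(P_1\setminus\{t_i\})\cup P_2 = P_1\cup P_2$), so $\eval{Q_C}{\Gi}=\bigcup_i \eval{Q_{C_i}}{\Gi}$ and the two inclusion conditions are equivalent.
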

When querying a potentially incomplete data source, we would like
to know whether at least the answer to our query is complete. For
instance, when querying DBpedia for movies starring Tarantino, it
would be interesting to know whether we really get all such movies,
that is, whether our query is complete over DBpedia. We next formalize
query completeness with respect to incomplete data sources.
\begin{defn}
[Query Completeness] Let $Q$ be a \SELECT\ query. To express that
\emph{$Q$ is complete}, we write $\buildQC Q$. An incomplete data
source $\calG=(\Ga,\Gi)$ satisfies the expression $\buildQC Q$,
if $Q$ returns the same result over $\Ga$ as it does over $\Gi$,
that is $\evalsetbag Q{\Ga}=\evalsetbag Q{\Gi}$. In this case we
write $\pg\models\buildQC Q$. \label{def:query-compl} \end{defn}
\begin{example}
[Query Completeness] \label{ex:query_compl} Consider the incomplete
data source $\pg_{\dbp}$ and the two queries $Q_{dir}$, asking for
all movies directed by Tarantino, and $Q_{dir+act}$, asking for all
movies, both directed by and starring Tarantino: 
\begin{align*}
Q_{dir}= & \setquery{?m}{\triple{?m}{type}{Movie},\triple{?m}{director}{tarantino}}\\
Q_{dir+act}= & (\{?m\}\{\triple{?m}{type}{Movie},\triple{?m}{director}{tarantino},\\
 & \triple{?m}{actor}{tarantino}\}
\end{align*}
Then, it holds that $Q_{dir}$ is complete over $\pg_{\dbp}$ and
$Q_{dir+act}$ is not. Later on, we show how to deduce query completeness
from completeness statements. \label{ex:query-compl}
\end{example}

\subsection{RDF Representation of Completeness Statements}

\label{subsec:representationInRDF} Practically, completeness statements
should be compliant with the existing ways of giving metadata about
data sources, for instance, by enriching the VoID description~\cite{W3C11}.
Therefore, it is essential to express completeness statements in RDF
itself. Suppose we want to express that LinkedMDB satisfies the statement:
\[
\gc_{\mathit{act}}=\defineGC{\triple{?m}{actor}{?a}}{\triple{?m}{type}{Movie},\triple{?m}{director}{tarantino}}.
\]
Then, we need vocabulary to say that this is a statement about LinkedMDB,
which triple patterns make up its pattern, and which its condition.
We also need a vocabulary to represent the constituents of the triple
patterns, namely subject, predicate, and object of a pattern. Therefore,
we introduce the property names whose meaning is intuitive: 
\begin{align*}
 & \VhasComplStmt,\ \VhasPattern,\ \VhasCondition,\ \Vsubject,\ \Vpredicate,\ \Vobject.
\end{align*}
If the constituent of a triple pattern is a term (an IRI or a literal),
then it can be specified directly in RDF. Since this is not possible
for variables, we represent a variable by a resource that has a literal
value for the property \VvarName. Now, we can represent $\gc_{\mathit{act}}$
in RDF as the resource \texttt{lv:st1} described in Figure~\ref{fig:example-query}.

More generally, consider a completeness statement $\defineGC{P_{1}}{P_{2}}$,
where $P_{1}=\set{t_{1},\ldots,t_{n}}$ and $P_{2}=\set{t_{n+1},\ldots,t_{m}}$
and each $t_{i}$, $1\leq i\leq m$, is a triple pattern. Then the
statement is represented using a resource for the statement and a
resource for each of the $t_{i}$ that is linked to the statement
resource by the property \VhasPattern\ or \VhasCondition, respectively.
The constituents of each $t_{i}$ are linked to $t_{i}$'s resource
in the same way via \Vsubject, \Vpredicate, and \Vobject. All resources
can be either IRIs or blank nodes.

\section{Completeness Reasoning over a Single Data Source}

\label{sec:compl-querying-single} In this section, we show how completeness
statements can be used to judge whether a query will return a complete
answer. We first focus on completeness statements that hold on a \emph{single}
data source, while completeness statements in the federated setting
are discussed in Section~\ref{sec:querying-federated}. 

\medskip{}
\textbf{Problem Definition.} Let $\gcs$ be a set of completeness
statements and $Q$ be a \SELECT\ query. We say that \emph{$\gcs$
entails the completeness of $Q$}, written $\gcs\models\buildQC Q$,
if any incomplete data source that satisfies $\gcs$ also satisfies
$\buildQC Q$.
\begin{example}
Consider $\gc_{\mathit{dir}}$ from (\ref{eq:1}). Whenever an incomplete
data source $\G$ satisfies $\gc_{\mathit{dir}}$, then $\Ga$ contains
all triples about movies directed by Tarantino, which is exactly the
information needed to answer query $Q_{\mathit{dir}}$ from Example
\ref{ex:query-compl}. Thus, $\set{\gc_{\mathit{dir}}}\models\buildQC{Q_{\mathit{dir}}}$.
This may not be enough to completely answer $Q_{\mathit{dir+act}}$,
thus $\set{\gc_{\mathit{dir}}}\not\models\buildQC{Q_{\mathit{dir+act}}}$.
We will now see how this intuitive reasoning can be formalized. 
\end{example}

\subsection{Completeness Entailment for Basic Queries}

\noindent \label{sec:compl-querying-plain} In difference to the characterization
for TC-QC entailment that is shown in Theorem \ref{thm:reduction_lc-qc_to_containment},
we now use characterization for completeness entailment that is similar
to the one in Theorem \ref{lem-characterization:of:inc:sql}.

To characterize completeness entailment, we use the fact that completeness
statements have a correspondence in \CONSTRUCT\
queries. We reuse the operator $\tcop$ from Definition \ref{def:tc-operator},
but define it now using the \CONSTRUCT\
queries as a mapping from graphs to graphs. Let $\C$ be a set of
completeness statements. Then 
\[
T_{\C}(G)=\bigcup_{C\in\C}Q_{C}(G).
\]
 Notice that for any data source $G$, the pair $(T_{\C}(G),G)$ is
an incomplete data source satisfying $\C$ and $T_{\C}(G)$ is the
smallest set (wrt.\ set inclusion) for which this holds wrt.\ the
ideal data source $G$.
\begin{example}
[Completeness Entailment] Consider the set of completeness statements
$\gcs_{\mathit{dir,act}}=\set{\gc_{\mathit{dir}},\gc_{\mathit{act}}}$
and the query $Q_{\mathit{dir+act}}$. Recall that the query has the
form $Q_{\mathit{dir+act}}=\sparqlquery{\set{m?}}{P_{\mathit{dir+act}}}$,
where 
\[
P_{\mathit{dir+\! act}}\!=\!\set{\triple{?m}{type}{Movie},\!\triple{?m}{director}{tarantino},\!\triple{?m}{actor}{tarantino}}.
\]
We want to check whether these statements entail the completeness
of $Q_{\mathit{dir+act}}$, that is, whether $\gcs_{\mathit{dir,act}}\models\buildQC{Q_{\mathit{dir+act}}}$
holds.
\end{example}

\begin{example}
[Completeness Entailment Checking] Suppose that $\pg=(\Ga,\Gi)$
satisfies $\gcs_{\mathit{dir,act}}$. Suppose also that $Q_{\mathit{dir+act}}$
returns a mapping $\mu=\set{?m\mapsto m'}$ over $\Gi$ for some term~$m'$.
Then $\Gi$ contains $\mu P_{\mathit{dir+act}}$, the instantiation
by $\mu$ of the BGP of our query, consisting of the three triples
$\triple{m'}{type}{Movie}$, $\triple{m'}{director}{tarantino}$,
and $\triple{m'}{actor}{tarantino}$.

The \CONSTRUCT\ query $Q_{\gc_{\mathit{dir}}}$, corresponding to
our first completeness statement, returns over $\mu P_{\mathit{dir+act}}$
the two triples $\triple{m'}{type}{Movie}$ and $\triple{m'}{director}{tarantino}$,
while the \CONSTRUCT\ query $Q_{\gc_{\mathit{act}}}$, corresponding
to the second completeness statement, returns the triple $\triple{m'}{actor}{tarantino}$.
Thus, all triples in $\mu P_{\mathit{dir+act}}$ have been reconstructed
by $T_{\gcs_{\mathit{dir,act}}}$ from $\mu P_{\mathit{dir+act}}$.

Now, we have $\mu P_{\mathit{dir+act}}=T_{\gcs_{\mathit{dir,act}}}(P_{\mathit{dir+act}})\subseteq T_{\gcs_{\mathit{dir,act}}}(\Gi)\subseteq\Ga$,
where the last inclusion holds due to $\pg\models\gcs_{\mathit{dir,act}}$.
Therefore, our query $Q_{\mathit{dir+act}}$ returns the mapping $\mu$
also over~$\Ga$. Since $\mu$ and $\pg$ were arbitrary, this shows
that $\gcs_{\mathit{dir,act}}\models\buildQC{Q_{\mathit{dir+act}}}$
holds. \label{ex:eval} 
\end{example}
In summary, in Example~\ref{ex:eval} we have reasoned about a set
of completeness statements $\gcs$ and a query $Q=\sparqlquery WP$.
We have considered a generic mapping $\mu$, defined on the variables
of $P$, and applied it to $P$, thus obtaining a graph $\mu P$.
Then we have verified that $\mu P=\TC(\mu P)$. From this, we could
conclude that for every incomplete data source $\pg=(\Ga,\Gi)$ we
have that $\eval Q{\Ga}=\eval Q{\Gi}$. Next, we make this approach
formal. 
\begin{defn}
[Prototypical Graph] Let $\sparqlquery WP$ be a query. The \emph{freeze
mapping} $\frozenID$ is defined as mapping each variable~$v$ in
$P$ to a new IRI $\frozen v$. Instantiating the graph pattern $P$
with $\frozenID$ yields the RDF graph $\frozen P:=\frozenID\, P$,
which we call the prototypical graph of $P$. \label{def:prot-graph} 
\end{defn}
\noindent Now we can generalize the intuitive reasoning from above
to a generic completeness check, analogous to Theorem \ref{thm:reduction_lc-qc_to_containment}:

\global\long\def\restr#1#2{#1_{|#2}}
 
\begin{thm}
[Completeness of Basic Queries]\label{thm:basic-queries} Let $\gcs$
be a set of completeness statements and let $Q=\sparqlquery WP$ be
a basic query. Then
\[
\gcs\models\buildQC Q\qquad\mbox{if and only if}\qquad\frozen P=\TC(\frozen P).
\]

\label{th-compl-query} \end{thm}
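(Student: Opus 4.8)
The plan is to prove both directions of the biconditional, mirroring the structure of the proof of Theorem~\ref{thm:reduction_lc-qc_to_containment}, but adapted to the RDF setting where the operator $\TC$ maps graphs to graphs (built from the \CONSTRUCT\ queries associated with completeness statements) and where the role of the ``test database'' is played by the prototypical graph $\frozen P$.

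\textbf{The ``if'' direction.} Assume $\frozen P = \TC(\frozen P)$. Let $\pg = (\Ga, \Gi)$ be an arbitrary incomplete data source with $\pg \models \gcs$. Since a basic query returns the same result over isomorphic graphs up to renaming, it suffices to show $\eval Q{\Gi} \subseteq \eval Q{\Ga}$, the reverse inclusion holding by monotonicity since $\Ga \subseteq \Gi$. Take a mapping $\mu \in \eval Q{\Gi}$; then there is an extension $\mu'$ of $\mu$ to $\var(P)$ with $\mu' P \subseteq \Gi$. The key observation is that $\mu' P$ is a homomorphic image of $\frozen P$ (send each $\frozen v$ to $\mu' v$), and $\TC$, being defined through \CONSTRUCT\ queries, commutes with homomorphisms in the sense that a homomorphism $h$ with $h(\frozen P) = \mu'P$ maps $\TC(\frozen P)$ into $\TC(\mu'P)$. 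Hence $\mu' P = h(\frozen P) = h(\TC(\frozen P)) \subseteq \TC(\mu' P) \subseteq \TC(\Gi) \subseteq \Ga$, where the last inclusion is exactly the definition of $\pg \models \gcs$ (unfolding $\TC$ into the union of the $Q_C$'s and Definition of satisfaction of completeness statements). Therefore $\mu' P \subseteq \Ga$, so $\mu \in \eval Q{\Ga}$, giving $\pg \models \buildQC Q$.

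\textbf{The ``only if'' direction.} Assume $\gcs \models \buildQC Q$. Consider the specific incomplete data source $\pg_0 = (\TC(\frozen P), \frozen P)$. By the remark preceding the theorem, $\pg_0$ is indeed an incomplete data source (since $\TC(\frozen P) \subseteq \frozen P$, which holds because each $Q_C$ returns instantiations of $P_1$ already present in the input graph) and $\pg_0 \models \gcs$ (since $\TC(G)$ is the least set making $(\TC(G),G)$ satisfy $\gcs$). By the freeze construction, the freeze mapping $\frozenID$ itself witnesses that $\frozenID{}_{|W} \in \eval Q{\frozen P}$, because $\frozenID\, P = \frozen P \subseteq \frozen P$. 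Since $\gcs \models \buildQC Q$, we get $\frozenID{}_{|W} \in \eval Q{\TC(\frozen P)}$ as well, i.e.\ there is an extension $\nu$ of $\frozenID{}_{|W}$ to $\var(P)$ with $\nu P \subseteq \TC(\frozen P)$. Now I would argue that $\nu$ must coincide with $\frozenID$ on all of $\var(P)$: the frozen IRIs $\frozen v$ occurring in $\TC(\frozen P)$ are precisely those appearing in $\frozen P$, and any subject/predicate/object position that $\nu$ would have to match against such a fresh IRI forces $\nu v = \frozen v$ (here one uses that fresh IRIs do not occur in the patterns of $\gcs$, so no ``new'' term is invented by $\TC$). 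Consequently $\nu P = \frozen P$, yielding $\frozen P \subseteq \TC(\frozen P)$, and combined with $\TC(\frozen P) \subseteq \frozen P$ we conclude $\frozen P = \TC(\frozen P)$.

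\textbf{Main obstacle.} The delicate point is the argument in the ``only if'' direction that the witnessing extension $\nu$ is forced to be the freeze mapping, i.e.\ that no identification of distinct frozen variables and no mapping onto a non-frozen constant can occur. This rests on two facts that must be stated carefully: first, $\TC(\frozen P)$ contains no IRIs other than those already in $\frozen P$ together with constants occurring in $\gcs$, so the fresh IRIs $\frozen v$ are ``rigid''; second, each triple of $\frozen P$, being built from pairwise handled fresh IRIs in the variable positions of $P$, can only be matched in $\TC(\frozen P)$ by the corresponding reconstructed triple, pinning down $\nu$. I would isolate this as the crux of the proof; the homomorphism-commutation lemma for $\TC$ used in the ``if'' direction is routine and parallels Lemma~\ref{lemma_reasoning_1}(iii).
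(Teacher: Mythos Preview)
Your ``if'' direction is fine and matches the paper's argument: once $\frozen P=\TC(\frozen P)$, every full valuation $\mu'$ with $\mu'P\subseteq\Gi$ satisfies $\mu'P\subseteq\Ga$, which is exactly what bag equality of $\eval Q{\Gi}$ and $\eval Q{\Ga}$ requires.

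The ``only if'' direction, however, has a real gap, and it is precisely the step you flagged as the main obstacle. Your argument works at the \emph{set} level: from $\frozenID_{|W}\in\eval Q{\TC(\frozen P)}$ you extract \emph{some} extension $\nu$ with $\nu P\subseteq\TC(\frozen P)$ and then try to force $\nu=\frozenID$ by freshness of the frozen IRIs. That forcing step is false in general. For instance, take $P=\{(?x,p,a),(?y,p,a)\}$ with $W=\emptyset$ and the single statement $\Compl(\{(?u,p,a)\}\mid\emptyset)$. Here $\TC(\frozen P)=\frozen P$, so the theorem's conclusion holds, yet $\nu=\{?x\mapsto\tilde x,\,?y\mapsto\tilde x\}$ is a perfectly good extension of $\frozenID_{|W}$ with $\nu P\subseteq\TC(\frozen P)$ and $\nu\neq\frozenID$. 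Freshness of the $\tilde v$'s does not pin down which frozen IRI a non-distinguished variable must hit. More fundamentally, your set-level reasoning cannot succeed: for \DISTINCT\ (set-semantics) queries the paper explicitly states that the correct criterion is strictly weaker than $\frozen P=\TC(\frozen P)$, so no argument that ignores multiplicities can derive the equality.

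The fix is to use that basic queries are \emph{bag} queries. Over $\pg_0=(\TC(\frozen P),\frozen P)$, completeness means the two bags of satisfying valuations coincide. Since $\TC(\frozen P)\subseteq\frozen P$, the valuations over $\TC(\frozen P)$ form a subset of those over $\frozen P$; bag equality forces this subset to be all of them. In particular the valuation $\frozenID$ itself, which satisfies $P$ over $\frozen P$, must satisfy $P$ over $\TC(\frozen P)$, giving $\frozen P=\frozenID\,P\subseteq\TC(\frozen P)$. This is exactly the paper's (terse) argument: if $\frozen P\neq\TC(\frozen P)$, then $\frozenID$ is lost as a valuation over the available graph, so the bag shrinks and $\pg_0$ refutes completeness. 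No analysis of an auxiliary $\nu$ is needed.
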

\begin{proof}
$\proofr$ If $\frozen P\neq T_{\C}(\frozen P)$, then the pair $(T_{\C}(G),G)$
is a counterexample for the entailment. It satisfies $\C$, but does
not satisfy $\buildQC Q$ because the mapping $\restr{\frozenID}{\var P}$,
the restriction of the frozen identity $\frozenID$ to the variables
of $P$, cannot be retrieved by $Q$ over the available graph $T_{\C}(\frozen P)$.

$\proofl$ If all triples of the pattern $\frozen P$ are preserved
by $\TC$, then this serves as a proof that in any incomplete data
source all triples that are used to compute a mapping in the ideal
graph are also present in the available graph.
\end{proof}



\global\long\def\osp{\;}
\global\long\def\osps#1{{\osp{#1}\osp}}

\global\long\def\P{{\cal P}}

\subsection{Queries with \DISTINCT}

In SPARQL, answers to basic queries may contain duplicates, that is,
they are evaluated according to bag semantics. The use of the \DISTINCT\ keyword
eliminates duplicates, thus corresponding to query evaluation under
set semantics. For a query $Q$ involving \DISTINCT, the difference
to the characterization in Theorem~\ref{thm:basic-queries} is that
instead of retrieving the full pattern $\tilde{P}$ after applying
$T_{\C}$, we only check whether sufficient parts of $\tilde{P}$
are preserved that still allow to retrieve the identity mapping on
the distinguished variables of $Q$.

\subsection{Completeness of Queries with the $\OPT$ Operator}

One interesting feature where SPARQL goes beyond SQL is the $\OPT$
(\quotes{optional}) operator. With $\OPT$ one can specify that
parts of a query are only evaluated if an evaluation is possible,
similarly to an outer join in SQL. For example, when querying for
movies, one can also ask for the prizes they won, if any. The $\OPT$
operator is used substantially in practice~\cite{PicalausaV11}. 

Intuitively, the mappings for a pattern $P_{1}\osps{\OPT}P_{2}$ are
computed as the union of all the bindings of $P_{1}$ together with
bindings for $P_{2}$ that are valid extensions, and including those
bindings of $P_{1}$ that have no binding for $P_{2}$ that is a valid
extension. For a formal definition of the semantics of queries with
the $\OPT$ operator, see~\cite{LetelierPPS12}. 

Completeness entailment for queries with $\OPT$ differs from that
of queries without:
\begin{example}
[Completeness with $\OPT$] Consider the following query 
\[
Q_{\mathit{maw}}=\queryOPT{\triple{?m}{type}{Movie}}{\triple{?m}{award}{?aw}}
\]
which asks for all movies and if available, also their awards. Consider
also
\[
C_{\mathit{aw}}=\defineGC{\triple{?m}{type}{Movie},\triple{?m}{award}{?aw}}{\emptyset}
\]
a completeness statement that expresses that all movies that have
an award are complete and all awards of movies are complete. If the
query $Q_{\mathit{maw}}$ used \AND instead of $\OPT$, then its
completeness could be entailed by $C_{\mathit{aw}}$. However with
$\OPT$ in $Q_{\mathit{maw}}$, more completeness is required: Also
those movies have to be complete that do not have an award. Thus,
$C_{\mathit{aw}}$ alone does not entail the completeness of $Q_{\mathit{maw}}$.
\end{example}
Graph patterns with $\OPT$ have a hierarchical structure that can
be made explicit by so-called pattern trees. A pattern tree $\T$
is a pair $(T,\P)$, where (i) $T=(N,E,r)$ is a tree with node set
$N$, edge set $E$, and root $r\in N$, and (ii) $\P$ is a labeling
function that associates to each node $n\in N$ a BGP $\P(n)$. We
construct for each triple pattern $P$ a corresponding pattern tree
$\T$.
\begin{example}
Consider a pattern $((P_{1}\OPT P_{2})\OPT(P_{3}\OPT P_{4}))$, where
$P_{1}$ to $P_{4}$ are BGPs. Its corresponding pattern tree would
have a root node labeled with $P_{1}$, two child nodes labeled with
$P_{2}$ and $P_{3}$, respectively, and the $P_{3}$ node would have
another child labeled with $P_{4}$. 
\end{example}
To this end, we first rewrite $P$ in such a way that $P$ consists
of BGPs connected with $\OPT$. For instance, $(t_{1}\,\OPT\, t_{2})\,\AND\, t_{3}$
would be equivalently rewritten as $(t_{1}\,\AND\, t_{3})\,\OPT\, t_{2}$.
If $P$ has this form, then we construct a pattern tree for $P$ as
follows. (i) If $P$ is a BGP, then the pattern tree of $P$ consists
of a single node, say $n$, which is the root. Moreover, we define
$\P(n)=P$. (ii) Suppose that $P=P_{1}\OPT P_{2}$. Suppose also that
we have constructed the pattern trees $\T_{1}$, $\T_{2}$ for $P_{1}$,
$P_{2}$, respectively, where $\T_{i}=(T_{i},\P_{i})$ and $T_{i}=(N_{i},E_{i},r_{i})$
for $i=1,2$. Suppose as well that $N_{1}$ and $N_{2}$ are disjoint.
Then we construct the tree $\T$ for $P$ by making the root $r_{2}$
of $T_{2}$ a child of the root $r_{1}$ of $T_{1}$ and defining
nodes, edges and labeling function accordingly.

Similarly to patterns, one can define how to evaluate pattern trees
over graphs, which leads to the notion of equivalence of pattern trees.
The evaluation is such that a pattern and the corresponding pattern
tree are equivalent in the sense that they give always rise to the
same sets of mappings. In addition, one can translate every pattern
tree in linear time into an equivalent pattern.

If one uses $\OPT$ without restrictions, unintuitive queries may
result. Pérez et al.\ have introduced the class of so-called well-designed
graph patterns that that avoid anomalies that may otherwise occur
\cite{perez2009semantics}. Well-designedness of a pattern $P$ is
defined in terms of the pattern tree $\T_{P}$. A pattern tree $\T$
is well-designed if all occurrences of all variables are connected
in the following sense: if there are nodes $n_{1}$, $n_{2}$ in $\T$
such that the variable $v$ occurs both in $\P(n_{1})$ and $\P(n_{2})$,
then for all the nodes $n$ on the path from $n_{1}$ to $n_{2}$
in $\T$ it must the case that $v$ occurs in $\P(n)$. We restrict
ourselves in the following to $\OPT$ queries with well-designed patterns,
which we call well-designed queries.

To formulate our characterization of completeness, we have to introduce
a normal form for pattern trees that frees the tree from redundant
triples. A triple $t$ in the pattern $\P(n)$ of some node $n$ of
$\T$ is \emph{redundant} if every variable in $t$ occurs also in
the pattern of an ancestor of $n$. Consider for example the pattern
$P_{ex}=\triple{?x}p{?y}\osps{\OPT}\triple{?x}r{?y}$. Its pattern
tree $\T_{ex}$ consists of two nodes, the root with the first triple,
and a child of the root with the second triple. Intuitively, the second
triple in the pattern is useless, since a mapping satisfies the pattern
if and only if it satisfies the first triple. Since all the variables
in the optional second triple occur already in the mandatory first
triple, no new variable bindings will result from the second triple.

From any well-designed pattern tree $\T$, one can eliminate in polynomial
time all redundant triples \cite{LetelierPPS12}. This may result,
however, in a tree that is no more well-designed. Letelier et al.~\cite{LetelierPPS12}
have shown that for every pattern tree $\T$ one can construct in
polynomial time an equivalent well-designed pattern tree $\T^{\mathit{NR}}$
without redundant triples, which is called the NR-normal form of $\T$.
The NR-normal form of $\T_{ex}$ above consists only of the root,
labeled with the triple $\triple{?x}p{?y}$.

For every node $n$ in $\T$ we define the branch pattern $P_{n}$
of $n$ as the union of the labels of all nodes on the path from $n$
to the root of $\T$. Then the \emph{branch query} $Q_{n}$ of $n$
has the form $\sparqlquery{W_{n}}{P_{n}}$, where $W_{n}=\var(P_{n})$.
\begin{thm}
[Completeness of $\OPT$-Queries] \label{thm:opt-queries} Let $\gcs$
be a set of completeness statements. Let $Q={\sparqlquery WP}$ be
a well-designed $\OPT$-query and $\T$ be an equivalent pattern tree
in NR-normal form. Then 
\[
\calC\models\buildQC Q\quad\ \ \mbox{iff}\quad\ \ \calC\models\buildQC{Q_{n}}\quad\textrm{for all branch queries \ensuremath{Q_{n}}\ of \ensuremath{\T}.}
\]
\end{thm}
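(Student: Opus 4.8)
The plan is to prove both directions of the equivalence, reducing completeness of the $\OPT$-query to the completeness of all its branch queries, using the characterization for basic queries (Theorem~\ref{thm:basic-queries}) together with the structural properties of well-designed pattern trees in NR-normal form established by Letelier et al.~\cite{LetelierPPS12}. The key semantic fact I would invoke is that a mapping $\mu$ is in the evaluation of a well-designed $\OPT$-query $Q$ over a graph $G$ precisely when $\mu$ is a maximal (with respect to the subsumption order on mappings) mapping among those of the form $\mu'$ with $\mu' \models Q_n$ over $G$ for some node $n$, where $Q_n$ is the branch query of $n$; this is the standard semantics of well-designed patterns via pattern trees. Thus the evaluation of $Q$ is entirely determined by the union of the evaluations of the branch queries $Q_n$, modulo taking maximal mappings.

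For the direction ``$\Leftarrow$'', I would assume $\calC \models \buildQC{Q_n}$ for every branch query $Q_n$, and let $\pg=(\Ga,\Gi)$ be an incomplete data source satisfying $\calC$. Take a mapping $\mu \in \eval Q{\Gi}$. By the characterization of well-designed evaluation, there is a node $n$ such that $\mu$ witnesses $Q_n$ over $\Gi$, i.e.\ $\mu \in \eval{Q_n}{\Gi}$, and moreover $\mu$ is maximal. Since $\calC \models \buildQC{Q_n}$, we get $\mu \in \eval{Q_n}{\Ga}$. The remaining work is to show $\mu$ is still \emph{maximal} over $\Ga$: if some $\mu' \supsetneq \mu$ were in $\eval{Q_m}{\Ga}$ for a descendant $m$, then by monotonicity $\mu' \in \eval{Q_m}{\Gi}$ as well (since $\Ga \subseteq \Gi$), contradicting maximality of $\mu$ over $\Gi$. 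Hence $\mu \in \eval Q{\Ga}$. The converse inclusion $\eval Q{\Ga} \subseteq \eval Q{\Gi}$ holds by monotonicity of well-designed $\OPT$-queries under graph inclusion. This gives $\pg \models \buildQC Q$.

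For the direction ``$\Rightarrow$'', I would argue contrapositively: suppose $\calC \not\models \buildQC{Q_n}$ for some branch query $Q_n$. By Theorem~\ref{thm:basic-queries} applied to the basic query $Q_n$, this means $\frozen{P_n} \neq \TC(\frozen{P_n})$, where $P_n$ is the branch pattern of $n$. I would then build a counterexample incomplete data source for $Q$ itself. The natural candidate is $\pg = (\TC(\frozen{P_n}), \frozen{P_n})$, which satisfies $\calC$. The point to verify is that the frozen identity restricted to $\var(P_n)$, call it $\mu_n$, belongs to $\eval Q{\frozen{P_n}}$ but not to $\eval Q{\TC(\frozen{P_n})}$. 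The first holds because over $\frozen{P_n}$ the branch pattern $P_n$ is fully matched by $\mu_n$ and, by NR-normality, no redundant triples have been dropped, so $\mu_n$ is indeed a maximal mapping witnessed along the path to $n$ --- here I rely on the NR-normal form precisely to guarantee that each branch query contributes a mapping that actually surfaces in the $\OPT$-evaluation (redundant triples would otherwise let a shorter branch's mapping dominate). The second holds because $\mu_n$ cannot be retrieved by $Q_n$ over $\TC(\frozen{P_n})$ by the failure of the basic-query characterization, and since no \emph{longer} branch can be satisfied over this graph either (its pattern strictly contains $P_n$'s), $\mu_n$ is not produced by $Q$ at all. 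Hence $\pg \not\models \buildQC Q$, so $\calC \not\models \buildQC Q$.

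The main obstacle I anticipate is the careful handling of maximality of mappings and the role of NR-normal form: one must ensure that completeness of a branch query genuinely corresponds to a mapping that the $\OPT$-semantics exposes, and that dropping redundant triples does not change query completeness (which is where the equivalence $\T \equiv \T^{\mathit{NR}}$ from~\cite{LetelierPPS12} and the fact that branch queries of a well-designed tree have the ``connectedness'' property are essential). A secondary subtlety is that in the $\Rightarrow$ direction the counterexample graph $\frozen{P_n}$ must not accidentally satisfy longer branch queries through the freeze IRIs; since freeze IRIs are distinct fresh constants and longer branches require strictly more triples (with connected extra variables), this does not happen, but it needs to be stated. Once these structural points are in place, the argument is a routine combination of monotonicity, Theorem~\ref{thm:basic-queries}, and the pattern-tree semantics.
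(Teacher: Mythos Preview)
Your overall structure is sound and you correctly identify the key role of NR-normal form, but there is one genuine gap and one unnecessary complication compared to the paper's proof.

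\textbf{The gap (``$\Leftarrow$'' direction).} Your claim that the inclusion $\eval Q{\Ga}\subseteq\eval Q{\Gi}$ ``holds by monotonicity of well-designed $\OPT$-queries under graph inclusion'' is false. Well-designed $\OPT$ patterns are only \emph{weakly} monotone: a mapping in $\eval Q{\Ga}$ is merely guaranteed to have an \emph{extension} in $\eval Q{\Gi}$, not to lie in $\eval Q{\Gi}$ itself. Concretely, take $P=(?x,p,a)\OPT(?x,q,?y)$ with $\Ga=\{(b,p,a)\}$ and $\Gi=\Ga\cup\{(b,q,c)\}$; then $\{?x\mapsto b\}\in\eval Q{\Ga}\setminus\eval Q{\Gi}$. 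The inclusion you need does follow, but only by combining weak monotonicity with the forward inclusion you already proved and the fact that $\eval Q{\Ga}$ is an antichain under mapping extension: if $\mu\in\eval Q{\Ga}$, weak monotonicity gives $\mu'\supseteq\mu$ in $\eval Q{\Gi}$, your forward inclusion gives $\mu'\in\eval Q{\Ga}$, and the antichain property forces $\mu=\mu'$. There is also a smaller imprecision: a mapping $\mu\in\eval Q{\Gi}$ corresponds to a \emph{subtree} containing the root, not in general a single branch; you should argue completeness of all root-to-leaf branch queries of that subtree, which together cover every node.

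\textbf{The complication (``$\Rightarrow$'' direction).} Your canonical construction via $\frozen{P_n}$ and Theorem~\ref{thm:basic-queries} works but is harder than necessary, and your claimed witness $\mu_n$ need not itself lie in $\eval Q{\frozen{P_n}}$: over $\frozen{P_n}$ the mapping $\mu_n$ can be properly extendable to a child of $n$ (a triple in a child pattern may accidentally match a frozen triple of $P_n$; e.g.\ a child triple $(?z,p,c)$ with $?z$ fresh can match $(\tilde x,p,c)\in\frozen{P_n}$). The actual witness is then the maximal extension of $\mu_n$, so the conclusion survives, but your stated argument needs repair. The paper avoids all of this by not constructing a canonical counterexample at all: it takes \emph{any} incomplete data source $\pg$ witnessing $\calC\not\models\buildQC{Q_n}$, picks $\mu_n\in\eval{Q_n}{\Gi}\setminus\eval{Q_n}{\Ga}$, and observes that the maximal mapping in $\eval Q{\Gi}$ extending $\mu_n$ cannot appear in $\eval Q{\Ga}$ (since its restriction to $W_n$ would have to satisfy $P_n$ over $\Ga$). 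This semantic argument is shorter and sidesteps the freeze-IRI bookkeeping you flagged as a ``secondary subtlety''.
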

\begin{proof}
\textquotedbl{}$\Rightarrow$\textquotedbl{}: By contradiction. Assume
the completeness of some branch query $Q_{n}$ of $\T$ is not entailed
by $\C$. Then, there must exist an incomplete graph $\G$ where $\eval{Q_{n}}{\id G}\neq\eval{Q_{n}}{\av G}$.
By construction of $Q_{n}$, every answer valuation $v$ that leads
to an answer $\mu_{n}$ to $Q_{n}$ over $\G$ is also a valuation
for $Q$, and leads either to the same mapping $\mu_{n}$ or to a
mapping $\mu$ that contains $\mu_{n}$. In both cases, if the valuation
$v$ is not satisfying for $Q_{n}$ over $\av G$, either $Q$ misses
a multiplicity of the same mapping $\mu_{n}$ over $\av G$, or $Q$
misses a multiplicity of the more general mapping $\mu$, and thus,
$Q$ is incomplete over $\G$ as well.

$"\Leftarrow"$: By contradiction. Assume that $\calC\not\models\buildQC Q$.
We have to show that there exists a branch query $Q_{n}$ of $Q$
such that $\calC\not\models\buildQC{Q_{n}}$. Since $\calC\not\models\buildQC Q$,
there must exist an incomplete data source $\G=(\Ga,\Gi)$ such that
some mapping $\mu$ is in $\eval Q{\Gi}$ but not in $\eval Q{\Ga}$.
By the semantics of $\OPT$ queries, $\mu$ must be a mapping of a
subtree of the pattern tree $\T$ for $Q$ that includes the root
of $\T$. Since $\mu$ is not satisfied over $\Ga$, there must be
at least one node $n$ in this subtree such that the triple $\mu n$
is not in $\Ga$. But then, the branch query $Q_{n}$ is not complete
over $(\Ga,\Gi)$ either, thus showing that $C$ does not entail completeness
of all branch queries of $Q$.
\end{proof}
Note that the proof above discusses only $\OPT$ queries without $\SELECT$.
For queries with $\SELECT$, the argument has to be extended to multiplicities
of mappings in the query result, but the technique remains the same. 

The theorem above allows to reduce completeness checking for an $\OPT$
query to linearly many completeness checks for basic queries.

\subsection{Completeness Entailment under RDFS Semantics}

\label{sec:compl-querying-rdfs}RDFS (RDF Schema) is a simple ontology
language that is widely used for RDF data~\cite{W3C04b}. RDFS information
can allow additional inference about data and needs to be taken into
account during completeness entailment: 
\begin{example}
[RDF vs.\ RDFS] Consider we are interested in the completeness of
the query

\[
Q_{\dir}=\sparqlquery{\set{?m}}{\set{\triple{?m}{director}{tarantino}}}
\]

asking for all objects that were directed by Tarantino, and consider
the completeness statement

\[
\gc_{\tn}=\defineGC{\triple{?m}{director}{tarantino},\triple{?m}{type}{Movie}}{\emptyset}
\]
that tells that all Tarantino movies are complete. A priori, we cannot
conclude that $C_{\tn}$ entails the completeness of $Q_{\dir}$,
because other pieces that Tarantino directed could be missing. If
however we consider the RDFS statement $\triple{director}{domain}{Movie}$
that tells that all pieces that have a director are movies, then completeness
of all Tarantino movies implies completeness of all pieces that Tarantino
directed, because there can be no other pieces than movies.

Or, consider the query $Q_{\film}=\sparqlquery{\set{?m}}{\set{\triple{?m}{type}{film}}}$,
asking for all films, and the completeness statement $\gc_{\movie}=\defineGC{\triple{?m}{type}{movie}}{\emptyset}$
saying that we are complete for all movies. A priori, we cannot conclude
that $\gc_{\movie}$ entails the completeness of $Q_{\film}$, because
we do not know about the relationship between films and movies. When
considering the RDFS statements $\triple{film}{subclass}{movie}$
and $\triple{movie}{subclass}{film}$ saying that all movies and films
are equivalent, we can conclude that $\{\gc_{\movie}\}\models\buildQC{Q_{\film}}$.
\label{ex:rdfs} 
\end{example}
The intuitive reasoning from above has to be taken into account when
reasoning about query completeness.

In the following, we rely on $\rho$DF, which is a formalization of
the core of RDFS~\cite{MunozPG09}. The vocabulary of $\rho$DF contains
the terms
\[
\mathit{subproperty,subclass,domain,range},\mathit{type}
\]
A \emph{schema graph} $S$ is a set of triples built using any of
the $\rho$DF terms, except $\mathit{type}$, as predicates.

We assume that schema information is not lost in incomplete data sources.
Hence, for incomplete data sources it is possible to extract their
$\rho$DF schema into a separate schema graph. The \emph{closure of
a graph $G$} wrt.\ a schema graph $S$ is the set of all triples
that are entailed. We denote this closure by $\cl_{S}(G)$. The computation
of this closure can be reduced to the computation of the closure of
a single graph that contains both schema and non-schema triples as
$\cl_{S}(G)=\cl(S\cup G)$. 
We now say that a set $\C$ of completeness statements \emph{entails}
the completeness of a query $Q$ wrt.\ a $\rho$DF schema graph $S$,
if for all incomplete data sources $(\Ga,\Gi)$ it holds that if $(\cl_{S}(\Ga),\cl_{S}(\Gi))$
satisfies $\C$ then it also satisfies $\buildQC Q$.


\begin{example}
[Completeness Reasoning under RDFS] Consider again the query $Q_{\film}$,
the schema graph $S$=$\{\triple{film}{subclass}{movie},$\linebreak{}
$\triple{movie}{subclass}{film}\}$ and the completeness statement
$\gc_{\movie}$ in Example~\ref{ex:rdfs}. Assume that the query
$Q_{\film}$ returns a mapping $\set{?m\mapsto m'}$ for some term
$m'$ over the ideal graph \Gi\ of an incomplete data source $\pg=(\Ga,\Gi)$
that satisfies $\gc_{\movie}$. Then, the triple $\triple{m'}{type}{film}$
must be in $\Gi$. Because of the schema, the triple $\triple{m'}{type}{movie}$
is then entailed (and thus in the closure $\cl_{S}(\Gi)$). As before,
we can now use the completeness statement $\gc_{\movie}$ to infer
that the triple $\triple{m'}{type}{movie}$ must also be in $\Ga$.
Again, the triple $\triple{m'}{type}{film}$ is then entailed from
the triple $\triple{m'}{type}{movie}$ that is in \Ga\ because of
the schema. Thus, $Q_{\film}$ then also returns the mapping $\set{?m\mapsto m'}$
over $\Ga$. Because of the prototypical nature of $m'$ and $(\Ga,\Gi)$,
the completeness statement entails query completeness in general.
\end{example}
Therefore, the main difference to the previous entailment procedures
is that the closure is computed to obtain entailed triples before
and after the completeness operator $T_{\C}$ is applied. For a set
of completeness statements $\C$ and a schema graph $S$, let $T_{\C}^{S}$
denote the function composition $\cl_{S}\circ T_{\C}\circ\cl_{S}$.
Then the following holds.
\begin{thm}
[Completeness under RDFS]Let $\gcs$ be a set of completeness statements,
$Q=\sparqlquery WP$ a basic query, and $S$ a schema graph. Then
\[
\gcs\models_{S}\buildQC Q\qquad\mbox{if and only if}\qquad\frozen P\subseteq T_{\C}^{S}(\frozen P).
\]
\end{thm}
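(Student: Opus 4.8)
The plan is to mirror the proof of Theorem~\ref{thm:basic-queries} (the RDF-semantics version) and superimpose the RDFS closure on top of it. The key observation is that the operator $T_{\C}^{S}=\cl_S\circ T_\C\circ\cl_S$ plays exactly the role that $T_\C$ played in the RDF case: for any graph $G$, the pair $(T_{\C}^{S}(G),G)$ is an incomplete data source whose closure-pair $(\cl_S(T_{\C}^{S}(G)),\cl_S(G))$ satisfies $\gcs$, and $T_{\C}^{S}(G)$ is (up to closure) the smallest available graph that does so relative to the ideal graph $G$. So first I would record this monotonicity/minimality fact as the analogue of Lemma~\ref{lemma_reasoning_1}, using that $\cl_S$ is monotone and idempotent and that applying $\cl_S$ before $T_\C$ accounts for all schema-entailed triples that a completeness statement could match in the ideal graph.

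For the \emph{if} direction, assume $\frozen P\subseteq T_{\C}^{S}(\frozen P)$ and take any incomplete data source $\pg=(\Ga,\Gi)$ with $(\cl_S(\Ga),\cl_S(\Gi))\models\gcs$. Suppose $Q$ returns a mapping $\mu$ over $\Ga$, i.e.\ over $\cl_S(\Gi)$ (here basic-query answers only depend on the closure); then $\mu$ is the composition of the freeze mapping with a homomorphism $\theta$ from $\frozen P$ into $\cl_S(\Gi)$. By monotonicity of $T_{\C}^{S}$ along homomorphisms and the assumption, $\theta(\frozen P)\subseteq\theta(T_{\C}^{S}(\frozen P))\subseteq T_{\C}^{S}(\cl_S(\Gi))=\cl_S(T_\C(\cl_S(\Gi)))\subseteq\cl_S(\Ga)$, where the last inclusion uses $(\cl_S(\Ga),\cl_S(\Gi))\models\gcs$, i.e.\ $T_\C(\cl_S(\Gi))\subseteq\cl_S(\Ga)$, together with monotonicity of $\cl_S$ and $\cl_S(\cl_S(\Ga))=\cl_S(\Ga)$. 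Hence $\mu$ is returned over $\Ga$ too, and since the reverse inclusion $\eval Q{\Ga}\subseteq\eval Q{\Gi}$ is immediate from $\Ga\subseteq\Gi$ and monotonicity, we get $\pg\models\buildQC Q$.

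For the \emph{only if} direction I argue by contraposition: if $\frozen P\not\subseteq T_{\C}^{S}(\frozen P)$, then $\pg_0=(T_{\C}^{S}(\frozen P),\,\cl_S(\frozen P))$ is a counterexample. By the analogue of Lemma~\ref{lemma_reasoning_1}, $(\cl_S(T_{\C}^{S}(\frozen P)),\cl_S(\cl_S(\frozen P)))$ satisfies $\gcs$; but the identity-on-$\var(P)$ restriction of $\frozenID$ is an answer to $Q$ over the ideal graph $\cl_S(\frozen P)$ (since $\frozen P\subseteq\cl_S(\frozen P)$) and it cannot be retrieved over the available graph $T_{\C}^{S}(\frozen P)$, because by assumption some triple of $\frozen P$ is absent from $T_{\C}^{S}(\frozen P)$ and, $\frozen P$ being built over fresh IRIs that occur in no RDFS triple, that triple is not even in $\cl_S(T_{\C}^{S}(\frozen P))$ either. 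So $\pg_0\not\models\buildQC Q$ and thus $\gcs\not\models_S\buildQC Q$.

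The main obstacle I expect is the careful bookkeeping of closures: one must check that $\cl_S$ commutes well enough with homomorphisms that instantiate the prototypical graph --- in general $\cl_S$ does \emph{not} commute with arbitrary homomorphisms, but here it suffices that for the particular homomorphism $\theta$ arising from a query answer, $\theta(\cl_S(\frozen P))\subseteq\cl_S(\theta(\frozen P))$, which holds because $\rho$DF inference rules are preserved under substitution. A secondary delicate point is ensuring that the fresh IRIs $\frozen v$ introduced by the freeze mapping never appear as RDFS vocabulary or get conflated with schema terms, so that the ``missing triple stays missing after closure'' argument in the \emph{only if} direction is valid; this is where the disjointness assumptions on $I$, $L$, $V$ and the convention that $\frozen v$ are new IRIs are used.
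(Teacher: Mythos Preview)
Your proposal is correct and follows essentially the same approach as the paper's proof: the same counterexample $(T_{\C}^{S}(\frozen P),\cl_S(\frozen P))$ for the \emph{only if} direction, and the same chain of inclusions via monotonicity of $T_\C^S$ and the satisfaction condition $T_\C(\cl_S(\Gi))\subseteq\cl_S(\Ga)$ for the \emph{if} direction (modulo the evident typo where you write ``over $\Ga$'' but mean ``over $\Gi$''). You are in fact more explicit than the paper about the one genuinely delicate point---that $\rho$DF inference and $T_\C$ are preserved under substitution, so that $\frozen P\subseteq T_\C^S(\frozen P)$ transfers to $\mu P\subseteq T_\C^S(\mu P)$---which the paper asserts without comment.
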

\begin{proof}
\footnote{A similar proof can be found in \cite{darari:master:thesis:2013}%
} $\proofr$   If $\frozen P\not\subseteq T_{\C}^{S}(\frozen P)$,
then  the incomplete data source $(T_{\C}^{S}(\frozen P),\cl_{S}(\frozen P))$
is a counterexample for the entailment.  It satisfies $\C$ wrt.\
the schema $S$, but does not satisfy $\Compl Q$ because the identity
mapping $\frozenID$, which can be retrieved over the closure of the
ideal graph $\cl_{S}(\frozen P)$ cannot be retrieved  by $P$ over
the available graph $T_{\C}^{S}(\frozen P)$.

$\proofl$ Assume $\frozen P\subseteq T_{\C}^{S}(\frozen P)$. We
show that for an incomplete data source $\G=(\cl_{S}(\Ga),\cl_{S}(\Gi))$
such that $\G\models\C$, it holds that $\G\models\Compl Q$.  By
definition, $\G\models\Compl Q$ if  $\eval Q{\cl_{S}(\Ga)}=\eval Q{\cl_{S}(\Gi)}$.
By the semantics of $\SELECT$ queries, it is sufficient to prove
that  $\eval P{\cl_{S}(\Ga)}=\eval P{\cl_{S}(\Gi)}$.  Note that  $\eval P{\cl_{S}(\Ga)}\subseteq\eval P{\cl_{S}(\Gi)}$
immediately follows from the monotonicity of $P$ and the fact that
$\cl_{S}(\Ga)\subseteq\cl_{S}(\Gi)$.

As for $\eval P{\cl_{S}(\Ga)}\supseteq\eval P{\cl_{S}(\Gi)}$, suppose
that there is some mapping $\mu$ in $\eval P{\cl_{S}(\Gi)}$. Then,
$\mu P\subseteq\cl_{S}(\Gi)$ and because of the monotonicity of $T_{\C}^{S}$,
it holds that $T_{\C}^{S}(\,\mu P)\subseteq T_{\C}^{S}(\Gi)$.  By
the defintion of satisfaction of completeness statements wrt.\ RDFS,
$(\cl_{S}(\Ga),\cl_{S}(\Gi))\models\C$ implies that $\TC(\cl_{S}(\Gi))\subseteq\cl_{S}(\Ga)$.
By applying the closure $\cl_{S}$ once again on both sides, we find
that $\TC^{S}(\Gi)\subseteq\cl_{S}(\Ga)$. Composing this two inclusions,
we find that the following inclusion holds: $T_{\C}^{S}(\,\mu P)\subseteq T_{\C}^{S}(\Gi)\subseteq\cl_{S}(\Ga)$.

Because we have assumed that $\frozen P\subseteq T_{\C}^{S}(\frozen P)$,
it follows that $\mu\,\frozenID^{-1}\frozen P\subseteq\TC^{S}(\mu\,\frozenID^{-1}\frozen P)$.
 Since $\mu\,\frozenID^{-1}\frozen P=\mu P$ and  $\TC^{S}(\mu\,\frozenID^{-1}\frozen P)=\TC^{S}(\mu P)$,
 this means that $\mu P\subseteq\cl_{S}(\Ga)$.  Consequently, $\mu$
is also in $\eval P{\cl_{S}(\Ga)}$. Thus,  $\eval P{\cl_{S}(\Ga)}$
is in $\eval P{\cl_{S}(\Gi)}$ and thus $\eval P{\cl_{S}(\Ga)}=\eval P{\cl_{S}(\Gi)}$.
\end{proof}
As the computation of the closure can be done in polynomial time,
reasoning wrt.\ RDFS has the same complexity as reasoning for basic
queries.

\label{theo:completeness-rdfs} 

\global\long\def\fl#1{#1^{\mathit{\, fl}}}
 \global\long\def\fedmodels{\models_{\mathit{fed}}}

\section{Completeness over Federated Data Sources}

\label{sec:querying-federated} Data on the Web is intrinsically distributed.
Hence, the single-source query mechanism provided by SPARQL has been
extended to deal with multiple data sources. In particular, the recent
SPARQL 1.1 specification introduces the notion of query \textit{federation}~\cite{W3C13}.
A federated query is a SPARQL query that is evaluated across several
data sources, the SPARQL endpoints of which can be specified in the
query. 

So far, we have studied the problem of querying a \textit{single}
data source augmented with completeness statements. The federated
scenario calls for an extension of the completeness framework discussed
in Section~\ref{sec:compl-querying-single}. Indeed, the completeness
statements available about each data source involved in the evaluation
of a federated query must be considered to check the completeness
of the federated query. The aim of this section is to discuss this
aspect and present an approach to check whether the completeness of
a non-federated query (i.e., a query without \SERVICE\ operators)
can be ensured with respect to the completeness statements on each
data source. We also study the problem of rewriting a non-federated
query into a federated version in the case in which the query is complete.



\medskip{}
\textbf{Federated SPARQL Queries.} Before discussing existing results
on reasoning in the federated case, we formalize the notion of federated
SPARQL queries. A federated query is a SPARQL query executed over
a \emph{federated graph}. Formally speaking, a federated graph is
a family of RDF graphs $\gs=(\g_{j})_{j\in J}$ where $J$ is a set
of IRIs. A federated SPARQL query (as for the case of a non-federated
query) can be a \SELECT or an \ASK query~\cite{ArenasGP10}. In
what follows, we focus on the conjunctive fragment (i.e., the \AND
fragment) of SPARQL with the inclusion of the \SERVICE operator.
Non-federated SPARQL queries are evaluated over graphs. In the federated
scenario, queries are evaluated over a pair $(i,\gs)$, where the
first component is an IRI associated to the initial SPARQL endpoint,
and the second component is a federated graph. The semantics of graph
patterns with \AND and \SERVICE operators is defined as follows:
\begin{align*}
\eval t{(i,\gs)} & =\eval t{G_{i}}\\
\eval{P_{1}\osps{\AND}P_{2}}{(i,\gs)} & =\eval{P_{1}}{(i,\gs)}\Join\eval{P_{2}}{(i,\gs)}\\
\eval{\querySERVICE jP}{(i,\gs)} & =\eval P{(j,\gs)}
\end{align*}
 where $t$ ranges over all triple patterns and $P$, $P_{1}$, $P_{2}$
range over all graph patterns with \AND and \SERVICE operators.
We denote federated queries as $\bar{Q}$.

\textbf{Federated Completeness Reasoning.} Darari et al.\ have shown
\cite{darari:ISWC:2013} how to extend completeness reasoning to the
federated setting. They extended completeness statements with data
source indices, and defined query completeness of a non-federated
query as completeness wrt. the union of the ideal graphs of all data
sources. The main result of this work is that if a non-federated query
is complete over a set of datasources, then there exist a federated
version of the query such that each triple is evaluated over only
exactly one data source, and the query still returns the complete
result. 
\begin{example}
[Federated Data Sources]Consider the two data sources shown in Figure
\ref{fig:example-query} plus an additional data source named FB (=
Facebook) with the completeness statement
\[
C_{\mathit{fb}}={\buildGC{\triple{?m}{likes}{?l}}{\triple{?m}{type}{Movie},\triple{?m}{director}{tarantino}}}
\]
 and the query
\[
Q_{\mathit{fb}}=\setquery{?m,?l}{\triple{?m}{type}{Movie},\triple{?m}{director}{tarantino},\triple{?m}{likes}{?l}}
\]
that asks for the number of \textit{likes} of Tarantino's movies.

This query is complete over the three data sources, whose endpoints
are reachable at the IRIs \texttt{\small{}DBPe}{\small{}, }\texttt{\small{}LMDBe}{\small{}
and }\texttt{\small{}FBe,} because Facebook is complete for likes
of Tarantino movies and IMDB is complete for all Tarantino movies
and all directing of Tarantino. Since the query is complete, we can
compute a federated version $Q_{\mathit{fb}}$, which in this case
is $(\{?m,?l\},\{\querySERVICEBraces{\texttt{LMDBe}}{\triple{?m}{type}{Movie},\triple{?m}{director}{tarantino}}$\texttt{
AND} ${\querySERVICEBraces{\texttt{FBe}}{\triple{?m}{likes}{?l}}}$,
which returns a complete answer already.
\end{example}
Note that the results for the federated case as presented in \cite{darari:ISWC:2013}
only work as long as there are no comparisons in the completeness
statements. If the completeness statements may contain comparisons,
then it can be the case that only a combination of data sources together
ensures completeness, e.g.\ if one data source is complete for movies
before 1980 and the other for movies in or after 1980.

\section{Discussion}

\label{sec:discussion} We now discuss some aspects underlying the
completeness framework.

\paragraph{Availability of Completeness Metadata}

At the core of the proposed framework lies the availability of completeness
statements. We have discussed in Section~\ref{sec:scenario} how
existing data sources like IMDB already incorporate such statements
(Figure~\ref{fig:imdb}) and how they can be made machine-readable
with our framework. The availability of completeness statements rests
on the assumption that a domain ``expert'' has the necessary background
knowledge to provide such statements.

\noindent We believe that is in the interest of data providers to
annotate their data sources with completeness statements in order
to increase their value. Indeed, users can be more inclined to prefer
data sources including ``completeness marks'' to other data sources.
Moreover, in the era of crowdsourcing the availability of independent
``ratings'' from users regarding the completeness of data can also
contribute, in a bottom up manner, to the description of the completeness
of data sources. For instance, when looking up information about Stanley
Kubrick in DBpedia, as a by-product users can provide feedback as
to whether all of Kubrick's movies are present.

\paragraph{Complexity}

All completeness checks presented in this chapter are NP-complete.
The hardness holds because of the classical complexity of conjunctive
query containment; the NP upper bound follows because all completeness
checks require conjunctive query evaluation at their core. In practice,
we expect these checks to be fast, since queries and completeness
statements are likely to be small. After all, this is the same complexity
as the one of query evaluation and query optimization of basic queries,
as implemented in practical database management systems. All theorems
in this paper characterize completeness entailment using the transformation
$T_{\C}$ that is based on \CONSTRUCT queries. Thus, the completeness
checks can be straightforwardly implemented and can make use of existing
query evaluation techniques.

\paragraph{Vocabulary Heterogeneity}

In practice, a query may use a vocabulary different from that of some
data sources. In this work, we assume the presence of a global schema.
Indeed, one could use the \texttt{schema.org} vocabulary for queries,
since it has already been mapped to other vocabularies (e.g., DBpedia).

\paragraph{Implementation}

To show the feasibility of this proposal, Darari developed the CORNER
system, which implements the completeness entailment procedure for
basic and \DISTINCT\ queries with $\rho$DF%
\footnote{\noindent http://rdfcorner.wordpress.com%
}.

\section{Related Work}

\label{sec:rdf:relatedwork}

Fürber and Hepp~\cite{FurberH10} investigated data quality problems
for RDF data originating from relational databases. Wang et al.~\cite{wang2005}
focused on data cleansing while Stoilos et al.~\cite{StoilosGH10}
on incompleteness of reasoning tasks. The problem of assessing completeness
of linked data sources is discussed by Harth and Speiser~\cite{HarthS12};
here, completeness is defined in terms of \textit{authoritativeness}
of data sources, which is a purely syntactic property. Hartig et al.~\cite{HartigBF09}
discuss an approach to get more complete results of SPARQL queries
over the Web of Linked Data. Their approach is based on traversing
RDF links to discover relevant data during query execution. Still,
the completeness of query answers cannot be guaranteed. 

\noindent Indeed, the semantics of completeness is crucial also for
RDF data sources distributed on the Web, where each data source is
generally considered incomplete. To the best of our knowledge, the
problem of formalizing the semantics of RDF data sources in terms
of their completeness is open. Also from the more pragmatic point
of view, there exist no comprehensive solutions enabling the characterization
of data source in terms of completeness. As an example, with \void\ it
is not possible to express the fact that, for instance, the data source
IMDB is \textit{complete for all movies directed by Tarantino}. Having
the possibility to provide in a declaratively and machine-readable
way (in RDF) such kind of completeness statements paves the way toward
a new generation of services for retrieving and consuming data. In
this latter respect, the semantics of completeness statements interpreted
by a reasoning engine can guarantee the completeness of query answering.
We present a comprehensive application scenario in Section~\ref{sec:scenario}. 

The RDF data itself is based on the open-world assumption, implying
that in general the information is incomplete \cite{W3C04}. One approach
to deal with this incompleteness was proposed by Nikolaou and Koubarakis
\cite{nikolaou:2012}, in which they developed $RDF^{i}$, an extension
to RDF that can represent incomplete values by means of \emph{e-literals}.
The \emph{e-literals} behave like existentially quantified variables
in first-order logic, and are constrained by a global constraint.
Global constraints can in general be quantifier-free formulae of some
first-order constraint language. Both constitute syntactic devices
for the representation of incomplete information, called $RDF^{i}$
databases. They have also extended the standard SPARQL in order to
allow expressions of a first-order constraint language as FILTER expressions
and be able to pose queries that ask for certain answers over $RDF^{i}$
databases. However, incompleteness with respect to missing records
in RDF data was not covered by their approach.

In \cite{HarthS12}, Harth and Speiser observe that the notion of
completeness of sources can be defined based on authority. They study
three completeness classes and their interrelationships, for triple
patterns and conjunctive queries: one that considers the whole web,
one that regards documents in the surrounding of sources derived from
the query and one that considers documents according to the query
execution. Their work is orthogonal to ours in the sense that we define
completeness of sources based on their semantic structure. Also, their
work does not concern the OPT fragment of SPARQL queries and the RDFS
schema that may underlie RDF data. The partial completeness of RDF
data is not considered in their work either.

Recently, Patel-Schneider and Franconi presented an approach for integrity
constraints in an ontology setting \cite{patelschneier-franconi}.
 It is to completely specify certain concepts and roles, making them
analogous to database tables. On these concepts and roles, which are
called DBox, axioms act like integrity constraints. Moreover, the
answers returned by queries for DBox are complete. However, their
work was focused on the integrity constraint part, not on the query
answering part. Additionally, they did not cover the partial completeness
of concepts and roles, i.e., to specify that only certain parts of
concepts and roles are complete.

\section{Summary}

\label{sec:conclusion} RDF and SPARQL are recent technologies enabling
and alleviating the publication and exchange of structured information
on the semantic web. The availability of distributed and potentially
overlapping RDF data sources calls for mechanisms to provide qualitative
characterizations of their content. In this chapter, we have transferred
previous results for relational databases to the semantic web. We
have shown that although completeness information is present on the
web in some available data sources (e.g., IMDB discussed in Section~\ref{sec:scenario})
it is neither formally represented nor automatically processed. We
have adapted the relational framework for the declarative specification
of completeness statements to RDF data sources and underlined how
the framework can complement existing initiatives like \void. As
particularities, we studied the reasoning wrt. RDF schema and for
SPARQL queries containing the $\OPT$ keyword.

\chapter{Verifying Completeness over Processes}

\label{chap:bpm}

In many applications, data is managed via well documented processes.
If information about such processes exists, one can draw conclusions
about completeness as well. In this chapter, we present a formalization
of so-called \emph{quality-aware processes} that create data in the
real world and store it in the company's information system possibly
at a later point. We then show how one can check the completeness
of database queries in a certain state of the process or after the
execution of a sequence of actions, by leveraging on query containment,
a well-studied problem in database theory. Finally, we show how the
results can be extended to the more expressive formalism of colored
Petri nets. Besides Section \ref{sec:processes:petri:nets}, all results
in this chapter are contained in a conference paper by Razniewski
et al., published at the BPM 2013 conference \cite{razniewski:BPM2013},
or in the extended version available at Arxiv.org \cite{razniewski:BPM2013:Arxiv:extended}.

This chapter is divided as follows. In Section \ref{sec:processes:background},
we discuss necessary background information about processes. In Section
\ref{sec:processes-related-work}, we discuss related work on data
quality verification over processes. In Section \ref{sec:processes:samplescenario},
we discuss the scenario of the school enrollment data in the province
of Bozen/Bolzano in detail. In Section \ref{sec:processes:formalization},
we discuss our formal approach, introducing quality-aware transition
systems, process activity annotations used to capture the semantics
of activities that interact with the real world and with an information
system, and properties of query completeness over such systems. In
Section \ref{sec:processes:verification}, we discuss how query completeness
can be verified over such systems at design time, at runtime, how
query completeness can be refined and what the complexity of deciding
query completeness is. We conclude with a discussion of extensions
to (colored) Petri Nets in Section \ref{sec:processes:petri:nets}.

\section{Motivation and Background}

\label{sec:processes:background}

In the previous chapters we have discussed how reasoning over completeness
statements can be performed. In this chapter, we discuss how the same
statements, query completeness, can be verified over business process
descriptions.

In many businesses, data creation and access follow formalized procedures.
Strategic decisions are taken inside a company by relying on statistics
and business indicators such as KPIs. Obviously, this information
is useful only if it is reliable, and reliability, in turn, is strictly
related to quality and, more specifically, to completeness.

Consider for example the school information system of the autonomous
province of Bozen/Bolzano in Italy. Such an information system stores
data about schools, enrollments, students and teachers. When statistics
are computed for the enrollments in a given school, e.g., to decide
the amount of teachers needed for the following academic year, it
is of utmost importance that the involved data are complete, i.e.,
that the required information stored in the information system is
aligned with reality.

Completeness of data is a key issue also in the context of auditing.
When a company is evaluated to check whether its way of conducting
business is in accordance to the law and to audit assurance standards,
part of the external audit is dedicated to the analysis of the actual
data. If such data are incomplete w.r.t.~the queries issued during
the audit, then the obtained answers do not properly reflect the company's
behaviour.

A common source of data incompleteness in business processes is constituted
by delays between real-world events and their recording in an information
system. This holds in particular for scenarios where processes are
carried out partially without support of the information system. E.g.,
many legal events are considered valid as soon as they are signed
on a sheet of paper, but their recording in the information system
could happen much later in time. Consider again the example of the
school information system, in particular the enrollment of pupils
in schools. Parents enroll their children at the individual schools,
and the enrollment is valid as soon as both the parents and the school
director sign the enrollment form. However, the school secretary may
record the information from the sheets only later in the local database
of the school, and even later submit all the enrollment information
to the central school administration, which needs it to plan the assignment
of teachers to schools, and other management tasks.

\paragraph{Relation to previous chapters}

The model of completeness (correspondence between query results over
an ideal and an available database) is the same as in the previous
chapters. While the reasoning problem of completeness in a state of
a process is different, the techniques to solve these problems (query
containment) are the same as before. While Chapters \ref{chap:general-reasoning}
and \ref{chap:nulls} left the question of where completeness information
could come from largely open, this chapter gives an answer for scenarios,
where data creation and manipulation follows formalized processes.

\section{Example Scenario}

\label{sec:processes:samplescenario}

Consider the example of the enrollment to schools in the province
of Bolzano. Parents can submit enrollment requests for their child
to any school they want until the 1st of March. Schools then decide
which pupils to accept, and parents have to choose one of the schools
in which their child is accepted. Since in May the school administration
wants to start planning the allocation of teachers to schools and
take further decisions (such as the opening and closing of school
branches and schools) they require the schools to process the enrollments
and to enter them in the central school information system before
the 15th of April.

A particular feature of this process is that it is partly carried
out with pen and paper, and partly in front of a computer, interacting
with an underlying school information system. Consequently, the information
system does often not contain all the information that hold in the
real world, and is therefore incomplete. E.g., while an enrollment
is legally already valid when the enrollment sheet is signed, this
information is visible in the information system only when the secretary
enters it into a computerized form.

A BPMN diagram sketching the main phases of this process is shown
in Figure \ref{figure:enrolment-BPMN-advanced}, while a simple UML
diagram of (a fragment of) the school domain is reported in Figure
\ref{figure:ER-diagram-school-world}. These diagrams abstractly summarize
the school domain from the point of view of the central administration.
Concretely, each school implements a specific, local version of the
enrollment process, relying on its own domain conceptual model. The
data collected on a per-school basis are then transferred into a central
information system managed by the central administration, which refines
the conceptual model of Figure \ref{figure:ER-diagram-school-world}.
In the following, we will assume that such an information system represents
information about children and the class they belong to by means of
a $pupil(pname,class,sname)$ relation, where $pname$ is the name
of an enrolled child, $class$ is the class to which the pupil belongs,
and $sname$ is the name of the corresponding school. 
\begin{figure}[t]
\centering \includegraphics{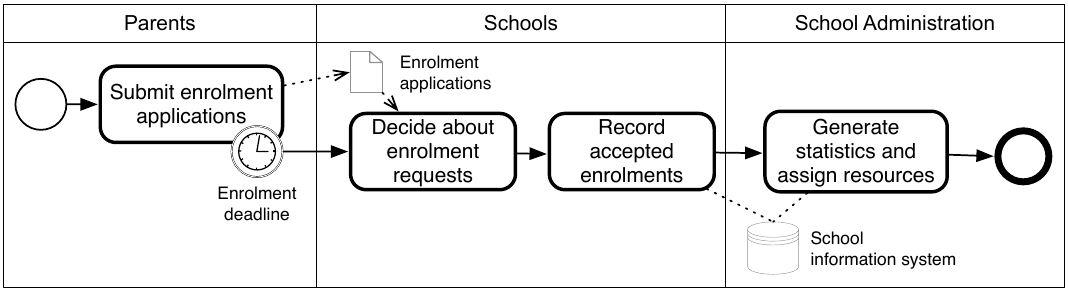}\caption{BPMN diagram of the main phases of the school enrollment process}

\label{figure:enrolment-BPMN-advanced} 
\end{figure}

\begin{figure}[t]
\centering \includegraphics{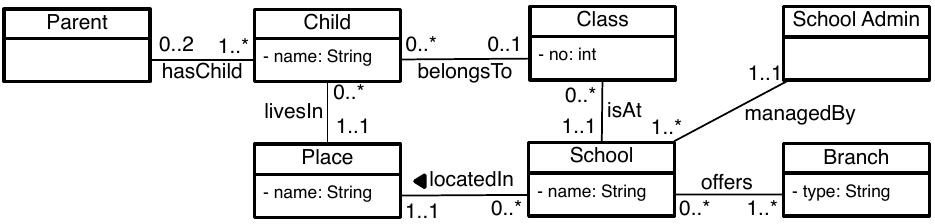}
\caption{UML diagram capturing a fragment of the school domain}

\label{figure:ER-diagram-school-world} 
\end{figure}

When using the statistics about the enrollments as compiled in the
beginning of May, the school administration is highly interested in
having correct statistical information, which in turn requires that
the underlying data about the enrollments must be complete. Since
the data is generated during the enrollment process, this gives rise
to several questions about such a process. The first question is whether
the process is generally designed correctly, that is, whether the
enrollments present in the information system are really complete
at the time they publish their statistics, or whether it is still
possible to submit valid enrollments by the time the statistics are
published. We call this problem the \emph{design-time verification}.

A second question is to find out whether the number of enrollments
in a certain school branch is already complete before the 15th of
April, that is, when the schools are still allowed to submit enrollments
(i.e., when there are school that still have not completed the second
activity in the school lane of Figure \ref{figure:enrolment-BPMN-advanced}),
which could be the case when some schools submitted all their enrollments
but others did not. In specific cases the number can be complete already,
when the schools that submitted their data are all the schools that
offer the branch. We call this problem the \emph{run-time verification}.

A third question is to learn on a finer-grained level about the completeness
of statistics, when they are not generally complete. When a statistic
consists not of a single number but of a set of values (e.g., enrollments
per school), it is interesting to know for which schools the number
is already complete and for which not. We call this the \emph{dimension
analysis}.

\section{Formalization}

\label{sec:processes:formalization}

We want to formalize processes such as the one in Figure \ref{figure:enrolment-BPMN-advanced},
which operate both over data in the real-world (pen\&paper) and record
information about the real world in an information system. We therefore
first introduce ideal databases and available databases to model the
state of the real world and the information system, and show then
how transition systems, which represent possible process executions,
can be annotated with effects for interacting with the ideal or the
available database.

\subsection{Ideal and Available Databases}

As in Chapter \ref{chap:general-reasoning}, we assume an ordered,
dense set of constants $\dom$ and a fixed set $\Sigma$ of relations.
A database instance is a finite set of facts in $\Sigma$ over $\dom$.
As there exists both the real world and the information system, in
the following we model this with two databases: $\id D$ called the
ideal database, which describes the information that holds in the
real world, and $\av D$, called the available database, which captures
the information that is stored in the information system. We assume
that the stored (available) information is always a subset of the
real-world (ideal) information. Thus, processes actually operate over
pairs $(\id D,\av D)$ of an ideal database and available database.
In the following, we will focus on processes that create data in the
real world and copy parts of the data into the information system,
possibly delayed. 

\label{example:incomplete-db} Consider that in the real world, there
are the two pupils John and Mary enrolled in the classes 2 and 4 at
the Hofer School, while the school has so far only processed the enrollment
of John in their IT system. Additionally it holds in the real world
that John and Alice live in Bolzano and Bob lives in the city of Merano.
The ideal database $\id D$ would then be 
\begin{eqnarray*}
 & \{ & \pupil(\John,2,\Hoferschool),\pupil(\Mary,4,\Hoferschool)\\
 &  & \livesin(\John,\Bolzano),\livesin(\Bob,\Merano),\livesin(Alice,\Bolzano)\}
\end{eqnarray*}
 while the available database would be 
\[
\{\pupil(\John,2,\Hoferschool)\}.
\]
Where it is not clear from the context, we annotate atoms with the
database they belong to, so, e.g., $\av{\pupil}(\John,4,\Hoferschool)$
means that this fact is stored in the available database.

\subsection{Query Completeness}

For planning purposes, the school administration is interested in
figures such as the number of pupils per class, school, profile, etc.
Such figures can be extracted from relational databases via SQL queries
using the COUNT keyword. In an SQL database that contains a table
$\mathit{pupil(name,class,school)}$, a query asking for the number
of students per school would be written as: 
\begin{align}
\begin{split} & \mbox{SELECT school, COUNT(*) as pupils\_nr}\\[-0.8ex]
 & \mbox{FROM pupil}\\[-0.8ex]
 & \mbox{GROUP BY school.}\\[-4ex]
\\
\end{split}
\label{eq:SQL-query-bpm}
\end{align}
 As discussed earlier, conjunctive queries formalize SQL queries.
A \emph{conjunctive query} $Q$ is an expression of the form $\query{Q(\tpl x)}{A_{1},\ldots,A_{n},M}$,
where $\tpl x$ are called the distinguished variables in the head
of the query, $A_{1}$ to $A_{n}$ the atoms in the body of the query,
and $M$ is a set of built-in comparisons \cite{foundations_of_dbs}.
We denote the set of all variables that appear in a query $Q$ by
$\var(Q)$. Common subclasses of conjunctive queries are linear conjunctive
queries, that is, they do not contain a relational symbol twice, and
relational conjunctive queries, that is, queries that do not use comparison
predicates. Conjunctive queries allow to formalize all single-block
SQL queries, i.e., queries of the form ``SELECT $\ldots$ FROM $\ldots$
WHERE $\ldots$''. As a conjunctive query, the SQL query (\ref{eq:SQL-query-bpm})
above would be written as: 
\begin{equation}
\query{Q_{p/s}(\mathit{schoolname},\mathit{\Count(name}))}{\pupil(\mathit{name,class,schoolname})}
\end{equation}

The formalization of query completeness over a pair of an ideal database
and an available database is as before: Intuitively, if query completeness
can be guaranteed, then this means that the query over the generally
incomplete available database gives the same answer as it would give
w.r.t.~the information that holds in the ideal database. Query completeness
is the key property that we are interested in verifying.

A pair of databases $(\id D,\av D)$ satisfies\emph{ query completeness}
of a query $Q$, if $Q(\id D)=Q(\av D)$ holds. We then write $(\id D,\av D)\models\compl Q$. 
\begin{example}
Consider the pair of databases $(\id D,\av D)$ from Example \ref{example:incomplete-db}
and the query $Q_{p/s}$ from above (2). Then, $\compl{Q_{p/s}}$
does not hold over $(\id D,\av D)$ because $Q(\id D)=\{(\Hoferschool,2)\}$
but $Q(\av D)=\{(\Hoferschool,1)\}$. A query for pupils in class
2, $\query{Q_{class2}(n)}{\pupil(n,2,s)}$, would be complete, because
$Q(\id D)=Q(\av D)=\{\John\}$. 

\end{example}

\subsection{Real-world Effects and Copy Effects}

\label{sec:effects} We want to formalize the real-world effect of
an enrollment action at the Hofer School, where in principle, every
pupil that has submitted an enrollment request before, is allowed
to enroll in the real world. We can formalize this using the following
implication: 

\[
\id{\pupil}(n,c,\Hoferschool)\leftsquigarrow\id{\req}(n,\Hoferschool)
\]
which should mean that whenever someone is a pupil at the Hofer school
now, he has submitted an enrollment request before. Also, we want
to formalize copy effects, for example where all pupils in classes
greater than 3 are stored in the database. This can be written with
the following implication:

\[
\id{\pupil}(n,c,s),c>3\rightarrow\av{\pupil}(n,c,s)
\]
which means that whenever someone is a pupil in a class with level
greater than three in the real world, then this fact is also stored
in the available database.

For annotating processes with information about data creation and
manipulation in the ideal database $\id D$ and in the available database
$\av D$, we use real-world effects and copy effects as annotations.
While their syntax is the same, their semantics is different. Formally,
a \emph{real-world effect} $r$ or a \emph{copy effect} $c$ is a
tuple $(R(\tpl x,\tpl y),G(\tpl x,\tpl z))$, where $R(\tpl x,\tpl y)$
is an atom, $G$ is a set of atoms and built-in comparisons and $\tpl x$,
$\tpl y$ and $\tpl z$ are sets of distinct variables. We call $G$
the \emph{guard} of the effect. The effects $r$ and $c$ can be written
as follows: 
\begin{eqnarray*}
 &  & r:\ \id R(\tpl x,\tpl y)\leftsquigarrow\exists\tpl z\!:\ \id G(\tpl x,\tpl z)\\
 &  & c:\ \id R(\tpl x,\tpl y),\id G(\tpl x,\tpl z)\rightarrow\av R(\tpl x,\tpl y)
\end{eqnarray*}

Real-world effects can have variables $\tpl y$ on the left side that
do not occur in the condition. These variables are not restricted
and thus allow to introduce new values.

A pair of real-world databases $(\id{D_{1}},\id{D_{2}})$ \emph{conforms}
to a real-world effect \linebreak{}
$\id R(\tpl x,\tpl y)\leftsquigarrow\exists\tpl z:\ \id G(\tpl x,\tpl z)$,
if for all facts $\id R(\tpl c_{1},\tpl c_{2})$ that are in $\id{D_{2}}$
but not in $\id{D_{1}}$ it holds that there exists a tuple of constants
$\tpl c_{3}$ such that the guard $\id G(\tpl c_{1},\tpl c_{3})$
is in $\id{D_{1}}$. The pair of databases conforms to a set of real-world
effects, if each fact in $\id{D_{2}}\setminus\id{D_{1}}$ conforms
to at least one real-word effect.

If for a real-world effect there does not exist any pair of databases
$(D_{1},D_{2})$ with $D_{2}\setminus D_{1}\neq\emptyset$ that conforms
to the effect, the effect is called \emph{useless}. In the following
we only consider real-world effects that are not useless.

The function $\copyy_{c}$ for a copy effect $c=\id R(\tpl x,\tpl y),\id G(\tpl x,\tpl z)\rightarrow\av R(\tpl x,\tpl y)$
over an ideal database $\id D$ returns the corresponding R-facts
for all the tuples that are in the answer of the query $\query{P_{c}(\tpl x,\tpl y)}{\id R(\tpl x,\tpl y),\id G(\tpl x,\tpl z)}$
over $\id D$. For a set of copy effects $\CE$, the function $\mathrm{copy_{\CE}}$
is defined by taking the union of the results of the individual copy
functions. 
\begin{example}
Consider a real-world effect $r$ that allows to introduce persons
living in Merano as pupils in classes higher than 3 in the real world,
that is, $r=\id{\pupil}(n,c,s)\leftsquigarrow c>3,\livesin(n,\Merano)$
and a pair of ideal databases using the database $\id D$ from Example
\ref{example:incomplete-db} that is defined as 
 $(\id D,\id D\cup\{\id{\pupil}(\Bob,4,\Hoferschool)\}$. Then this
pair conforms to the real-world effect $r$, because the guard of
the only new fact $\id{\pupil}(\Bob,4,\Hoferschool)$ evaluates to
true: Bob lives in Merano and his class level is greater than 3. The
pair $(\id D,\id D\cup\{\id{\pupil}(\Alice,1,\Hoferschool)\}$ does
not conform to $r$, because Alice does not live in Merano, and also
because the class level is not greater than 3.

For the copy effect $c=\id{\pupil}(n,c,s),c>3\rightarrow\av{\pupil}(n,c,s)$,
which copies all pupils in classes greater equal 3, its output over
the ideal database in Example \ref{example:incomplete-db} would be
$\{\av{\pupil}(\Mary,4,\Hoferschool)\}$.
\end{example}

\subsection{Quality-Aware Transition Systems}

To capture the execution semantics of \emph{quality-aware processes},
we resort to (suitably annotated) labeled transition systems, a common
way to describe the semantics of concurrent processes by interleaving
\cite{BaKG08}. This makes our approach applicable for virtually every
business process modeling language equipped with a formal underlying
transition semantics (such as Petri nets or, directly, transition
systems).

Formally, a \emph{(labeled) transition system} $T$ is a tuple $T=(S,s_{0},A,E)$,
where $S$ is a set of states, $s_{0}\in S$ is the initial state,
$A$ is a set of names of actions and $E\subseteq S\times A\times S$
is a set of edges labeled by actions from $A$. In the following,
we will annotate the actions of the transition systems with effects
that describe interaction with the real-world and the information
system. In particular, we introduce \emph{quality-aware transition
systems} (QATS) to capture the execution semantics of processes that
change data both in the ideal database and in the available database.

Formally, a \emph{quality-aware transition system} $\qats$ is a tuple
$\qats=(T,\re,\ce)$, where $T$ is a transition system and $\re$
and $\ce$ are functions from $A$ into the sets of all real-world
effects and copy effects, which in turn obey to the syntax and semantics
defined in Sec.~\ref{sec:effects}. Note that transition systems
and hence also QATS may contain cycles.
\begin{example}
Let us consider two specific schools, the Hofer School and the Da
Vinci School, and a (simplified version) of their enrollment process,
depicted in BPMN in Figure \ref{fig:school-processes} (left) (in
parenthesis, we introduce compact names for the activities, which
will be used throughout the example). As we will see, while the two
processes are independent from each other from the control-flow point
of view (i.e., they run in parallel), they eventually write information
into the same table of the central information system.

Let us first consider the Hofer School. In the first step, the requests
are processed with pen and paper, deciding which requests are accepted
and, for those, adding the signature of the school director and finalizing
other bureaucratic issues. By using relation $\id{\req}(n,\Hoferschool)$
to model the fact that a child named $n$ requests to be enrolled
at Hofer, and $\id{\pupil}(n,1,\Hoferschool)$ to model that she is
actually enrolled, the activity \textsf{pH} is a real-world activity
that can be annotated with the real-world effect $\id{\pupil}(n,1,\Hoferschool)\leftsquigarrow\id{\req}(n,\Hoferschool)$.
In the second step, the information about enrolled pupils is transferred
to the central information system by copying all real-world enrollments
of the Hofer school. More specifically, the activity \textsf{rH} can
be annotated with the copy effect\linebreak{}
 $\id{\pupil}(n,1,\Hoferschool)\rightarrow\av{\pupil}(n,1,\Hoferschool)$.

Let us now focus on the Da Vinci School. Depending on the amount of
incoming requests, the school decides whether to directly process
the enrollments, or to do an entrance test for obtaining a ranking.
In the first case (activity \textsf{pD}), the activity mirrors that
of the Hofer school, and is annotated with the real-world effect $\id{\pupil}(n,1,\Davinci)\leftsquigarrow\id{\req}(n,\Davinci)$.
As for the test, the activity \textsf{tD} can be annotated with a
real-world effect that makes it possible to enroll only those children
who passed the test: $\id{\pupil}(n,1,\Davinci)\leftsquigarrow\id{\req}(n,\Davinci),$\linebreak{}
$\id{\test}(n,mark),mark{\ \geq\ }6$. 

Finally, the process terminates by properly transferring the information
about enrollments to the central administration, exactly as done for
the Hofer school. In particular, the activity \textsf{rD} is annotated
with the copy effect $\id{\pupil}(n,1,\Davinci)\rightarrow\av{\pupil}(n,1,\Davinci)$.
Notice that this effect feeds the same $\pupil$ relation of the central
information systems that is used by \textsf{rH}, but with a different
value for the third column (i.e., the school name).

Figure \ref{fig:school-qats} (right) shows the QATS formalizing the
execution semantics of the parallel composition of the two processes
(where activities are properly annotated with the previously discussed
effects). Circles drawn in orange with solid line represent execution
states where the information about pupils enrolled at the Hofer school
is complete. Circles in blue with double stroke represent execution
states where completeness holds for pupils enrolled at the Da Vinci
school. At the final, sink state information about the enrolled pupils
is complete for both schools.

\begin{figure}[t]
\centering \includegraphics[width=0.48\textwidth]{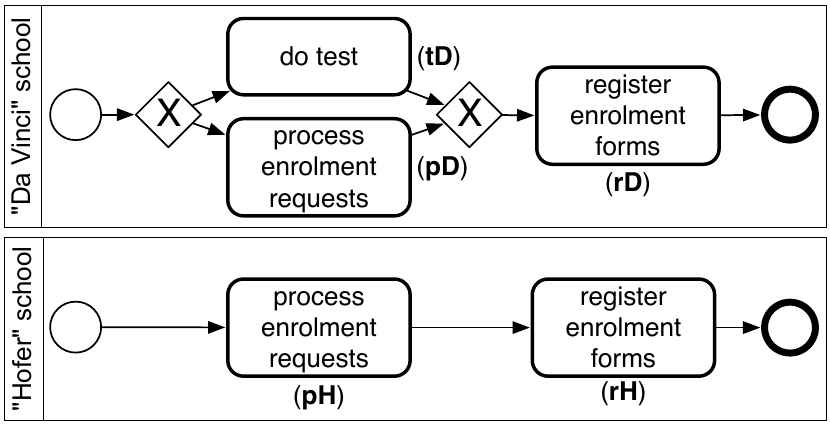}\includegraphics[width=0.48\textwidth]{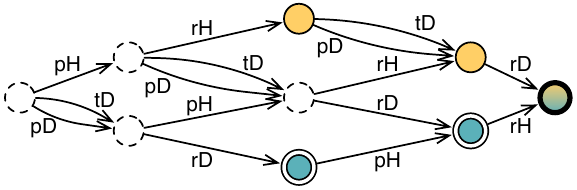}

\caption{BPMN enrollment process of two schools (left), and the corresponding
QATS (right)}

\label{schools-bpmn-and-qats} \label{fig:school-processes}\label{fig:school-qats}
\end{figure}

In Figure \ref{figure:enrolment-BPMN-as-QATS}, we have formalized the
school lane of the BPMN process from Figure \ref{figure:enrolment-BPMN-advanced}
as a QATS. The two actions correspond to the activities in the lane.
In Action 1, which corresponds to the acceptance of enrollment requests
by the school, the real-world effect $r_{1}$ allows to add new enrollments
into the real world. In Action 2, which corresponds to the insertion
of the enrollments into the database, the copy effect $c_{1}$ copies
all enrollments from the real world into the information system.

\begin{figure}[t]
\centering
\includegraphics[scale=1.3]{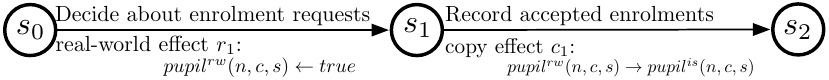}
\caption{The school lane from Figure \ref{figure:enrolment-BPMN-advanced} formalized
as QATS.}
\label{figure:enrolment-BPMN-as-QATS}
\end{figure}

\end{example}

\subsection{Paths and Action Sequences in QATSs}

Let $\qats=(T,\re,\ce)$ be a QATS. A \emph{path} $\pi$ in $\qats$
is a sequence $t_{1},\ldots,t_{n}$ of transitions such that $t_{i}=(s_{i-1},a_{i},s_{i})$
for all $i=1\ldots n$. An\emph{ action sequence} $\alpha$ is a sequence
$a_{1},\ldots,a_{m}$ of action names. Each path $\pi=t_{1},\ldots,t_{n}$
has also a \emph{corresponding} \emph{action sequence} $\alpha_{\pi}$
defined as $a_{1},\ldots,a_{n}$ . For a state $s$, the set $\aseq(s)$
is the set of the action sequences of all paths that end in $s$.

Next we consider the semantics of action sequences. A \emph{development}
of an action sequence $\alpha=a_{1},\ldots,a_{n}$ is a sequence $\id{D_{0}},\ldots,\id{D_{n}}$
of ideal databases such that each pair $\id{(D_{j}},\id{D_{j+1}})$
conforms to the effects $\re(\alpha_{j+1})$. Note that $\id{D_{0}}$
can be arbitrary. For each development $\id{D_{0}},\ldots,\id{D_{n}}$,
there exists a unique trace $\av{D_{0}},\ldots,\av{D_{n}}$, which
is a sequence of available databases $\av{D_{j}}$ defined as follows:
\[
\av{D_{j}}=\begin{cases}
\id{D_{j}} & \mbox{if }j=0\\
\av{D_{j-1}}\cup\mathrm{copy}_{\CE(t_{j})}(\id{D_{j}}) & \mbox{otherwise.}
\end{cases}
\]
 Note that $\av{D_{0}}=\id{D_{0}}$ does not introduce loss of generality
and is just a convention. To start with initially different databases,
one can just add an initial action that introduces data in all ideal
relations.

\subsection{Completeness over QATSs}

An action sequence $\alpha=a_{1},\ldots,a_{n}$ \emph{satisfies} query
completeness of a query $Q$, if for all developments of $\alpha$
it holds that $Q$ is complete over $\id{(D_{n}},\av{D_{n}})$, that
is, if $Q\id{(D_{n}})=Q(\av{D_{n}})$ holds. A path $P$ in a QATS
$\qats$ satisfies query completeness for $Q$, if its corresponding
action sequence satisfies it. A state $s$ in a QATS $\qats$ satisfies
$\compl Q$, if all action sequences in $\aseq(s)$ (the set of the
action sequences of all paths that end in $s$) satisfy $\compl Q$.
We then write $s\models\compl Q$. 

\begin{example}
Consider the QATS in Figure \ref{schools-bpmn-and-qats} (right) and
recall that the action $\textsf{pH}$ is annotated with the effect
$\id{\pupil}(n,1,\Hoferschool)\leftsquigarrow\ $\linebreak{}
$\id{\req}(n,\Hoferschool)$ for enrolling pupils in the real world,
and the action $\textsf{rH}$ with the copy effect $\id{\pupil}\!(n,1,\Hoferschool)\rightarrow$\linebreak{}
$\av{\pupil}\!(n,1,\Hoferschool)$. A path $\pi=((s_{0},\textsf{pH},s_{1}),$
$(s_{1},\textsf{rH},s_{2}))$ has the corresponding action sequence
$(\textsf{pH},\textsf{rH})$. Its models are all sequences $(\id{D_{0}},\id{D_{1}},\id{D_{2}})$
of ideal databases (developments), where $\id{D_{1}}$ may contain
additional pupil facts at the Hofer school w.r.t.~$\id{D_{0}}$ because
of the real-world effect of action $a_{1}$, and $\id{D_{2}}=\id{D_{1}}$.
Each such development has a uniquely defined trace $(\av{D_{0}},\av{D_{1}},\av{D_{2}})$
where $\av{D_{0}}=\id{D_{0}}$ by definition, $\av{D_{1}}=\av{D_{0}}$
because no copy effect is happening in action $a_{1}$, and $\av{D_{2}}=\av{D_{1}}\cup copy_{\ce(a_{1})}(\id{D_{1}})$,
which means that all pupil facts from Hofer school that hold in the
ideal database are copied into the information system due to the effect
of action $a_{1}$. Thus, the state $s_{2}$ satisfies $\compl{Q_{\textit{Hofer}}}$
for a query $\query{Q_{\textit{Hofer}}(n)}{\pupil(n,c,\Hoferschool)}$,
because in all models of the action sequence the ideal database pupils
at the Hofer school are copied into the available database by the
copy effect in action $\textsf{rH}$. 
\end{example}

\section{Verifying Completeness over Processes}

\label{sec:processes:verification}

In the following, we analyze how to check completeness in a state
of a QATS at design time, at runtime, and how to analyze the completeness
of an incomplete query in detail.

\subsection{Design-Time Verification}

When checking for query completeness at design time, we have to consider
all possible paths that lead to the state in which we want to check
completeness. We first analyze how to check completeness for a single
path, and then extend our results to sets of paths.

Given a query $\query{Q(\tpl z)}{R_{1}(\tpl d_{1}),\ldots,R_{n}(\tpl d_{n}),M},$
we say that a real-world effect $r$ is \emph{risky} w.r.t.~$Q$,
if there exists a pair of ideal databases $(\id{D_{1}},\id{D_{2}})$
that conforms to $r$ and where the query result changes, that is,
$Q(\id{D_{1}})\neq Q(\id{D_{2}})$. Intuitively, this means that ideal
database changes caused by $r$ can influence the query answer and
lead to incompleteness, if the changes are not copied into the available
database. 
\begin{prop}
[Risky Effects] Let $r$ be the real-world effect $R(\tpl x,\tpl y)\leftsquigarrow G_{1}(\tpl x,\tpl z_{1})$
and $Q$ be the query $\query Q{R_{1}(\tpl d_{1}),\ldots R_{n}(\tpl d_{n}),M}$.
Then $r$ is risky wrt.~$Q$ if and only if the following formula
is satisfiable: 
\[
G_{1}(\tpl x,\tpl z_{1})\ \wedge\ \bigl(\bigwedge_{i=1\ldots n}\! R_{i}(\tpl d_{i})\bigr)\ \wedge\ M\ \wedge\ \bigl(\bigvee_{R_{i}=R}(\tpl x,\tpl y)=\tpl d_{i}\bigr)
\]
 \end{prop}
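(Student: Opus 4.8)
The plan is to prove both directions of the biconditional by connecting the satisfiability of the displayed formula to the existence of a witnessing pair of ideal databases. The formula has a natural reading: a satisfying assignment picks constants for all variables such that (i) the guard $G_1$ holds, (ii) the full body of $Q$ is satisfied, (iii) the comparisons $M$ hold, and (iv) the newly generated $R$-fact $R(\tpl x,\tpl y)$ coincides with at least one of the body atoms $R_i(\tpl d_i)$ for which $R_i = R$. Intuitively, a satisfying assignment simultaneously exhibits a new fact that the effect $r$ can legitimately add (because its guard is met) and a use of that very fact inside a satisfying valuation of $Q$.

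For the ``only if'' direction, I would start from a risky effect: by definition there is a pair $(\id{D_1},\id{D_2})$ conforming to $r$ with $Q(\id{D_1}) \neq Q(\id{D_2})$. Since $r$ only adds facts (conformance constrains $\id{D_2}\setminus\id{D_1}$ and every such fact has its guard satisfied in $\id{D_1}$), monotonicity of conjunctive queries gives $Q(\id{D_1}) \subsetneq Q(\id{D_2})$, so there is a tuple $\tpl c \in Q(\id{D_2}) \setminus Q(\id{D_1})$. Take a satisfying valuation $v$ for $Q$ over $\id{D_2}$ yielding $\tpl c$. Because $\tpl c \notin Q(\id{D_1})$, at least one body atom $R_i(\tpl d_i)$ must be mapped by $v$ to a fact in $\id{D_2}\setminus\id{D_1}$ — otherwise $v$ would already witness $\tpl c$ over $\id{D_1}$. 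That fact conforms to $r$, so it has the form $R(\tpl a,\tpl b)$ with $G_1(\tpl a,\tpl c_3)$ holding in $\id{D_1}$ for some $\tpl c_3$. I would then read off the satisfying assignment for the displayed formula: map $\tpl x, \tpl y$ to $\tpl a, \tpl b$, map $\tpl z_1$ to $\tpl c_3$, and for the body atoms of $Q$ use the valuation $v$ (the conjunct $(\tpl x,\tpl y) = \tpl d_i$ for this particular $i$ is then satisfied since $v\tpl d_i = (\tpl a,\tpl b)$). All the $R_j(\tpl d_j)$ and $M$ hold because $v$ satisfies $Q$ over $\id{D_2}$ (and built-in comparisons are evaluated purely on the constants), and $G_1$ holds by the conformance witness. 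One subtlety to handle carefully: the variables of $Q$ and the variables $\tpl x,\tpl y,\tpl z_1$ of $r$ should be taken disjoint when writing the formula, so that ``gluing'' the two assignments together is unambiguous; the equality conjunct $(\tpl x,\tpl y)=\tpl d_i$ is precisely what ties them.

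For the ``if'' direction, suppose the formula has a satisfying assignment $\sigma$. I would construct the witnessing pair directly: let $\id{D_1}$ be the set of ground atoms obtained by instantiating the guard atoms $\sigma G_1$ together with all body atoms $\sigma R_j(\tpl d_j)$ \emph{except} that we delete the single atom $\sigma R_i(\tpl d_i)$ that equals the new fact $\sigma R(\tpl x,\tpl y)$; actually more cleanly, set $\id{D_2}$ to be $\sigma G_1 \cup \{\sigma R_j(\tpl d_j) : j\}$ and set $\id{D_1} := \id{D_2} \setminus \{\sigma R(\tpl x,\tpl y)\}$. Then $\id{D_2}\setminus\id{D_1} = \{\sigma R(\tpl x,\tpl y)\}$, and this fact conforms to $r$ because $\sigma G_1$ (its guard, with the existential witness $\sigma\tpl z_1$) lies in $\id{D_1}$ — here I must check that the deleted atom $\sigma R(\tpl x,\tpl y)$ is not itself one of the guard atoms; if it accidentally is, a small perturbation using the density of the domain lets me choose fresh constants for $\tpl y$ (or for an offending position), keeping the formula satisfied, since $r$ is assumed not useless and $\tpl y$ is unconstrained. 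Finally $\sigma$ restricted to $\var(Q)$ is a satisfying valuation for $Q$ over $\id{D_2}$, producing some output tuple $\tpl c$; and $\tpl c \notin Q(\id{D_1})$ provided $Q$ is minimal / the deleted atom is genuinely needed — this is the place I expect the main obstacle, since a non-minimal query might still produce $\tpl c$ over $\id{D_1}$ via a different valuation. The fix is to first minimize $Q$ (which does not change the query semantics and only removes atoms, hence cannot make a risky effect non-risky or vice versa), or alternatively to note that if $\tpl c \in Q(\id{D_1})$ one can enlarge $\id{D_1}$ only mildly, but the cleanest route is to assume WLOG $Q$ is minimal and then the frozen-body argument (as in the proof of Theorem~\ref{thm:weakest-precond}) shows the deleted atom cannot be recovered. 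Putting the two directions together yields the claimed equivalence.
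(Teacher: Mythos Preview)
Your approach matches the paper's in both directions: the ``only if'' direction extracts a satisfying assignment from a witnessing pair by taking a new answer tuple, its valuation, and the body atom mapped to a new fact (the paper states this tersely), and the ``if'' direction builds the witness by instantiating the body and guard with the satisfying valuation and removing the single $R$-fact.

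The one place you diverge is the minimality worry in the ``if'' direction. This is unnecessary here: the chapter works under bag semantics (the motivating queries are \texttt{COUNT} queries), and the paper's one-line justification is precisely the bag-semantics argument --- ``a new valuation for the query is possible by mapping each atom to itself''. Once the fact $\delta R_k(\tpl d_k)$ is genuinely absent from $\id{D_1}$, the identity valuation over $\id{D_2}$ is unavailable over $\id{D_1}$, so the answer \emph{multisets} differ regardless of whether $Q$ is minimal. Your detour through minimisation and the frozen-body argument of Theorem~\ref{thm:weakest-precond} would only be needed under set semantics, which is not the intended reading in this chapter.

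Your guard-collision worry (that $\sigma R(\tpl x,\tpl y)$ might coincide with an atom of $\sigma G_1$ or with another instantiated body atom, leaving $\id{D_1}=\id{D_2}$) is legitimate; the paper also glosses over it. Both proofs tacitly rely on the density of the domain together with the standing assumption that $r$ is not useless to pick the valuation so that the removed fact is distinct from every guard atom and every other instantiated body atom.
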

\begin{proof}
\textquotedbl{}$\Leftarrow:$\textquotedbl{} If the formula is satisfied
for some valuation $\delta$, this valuation directly yields an example
showing that $r$ is risky wrt.\ $Q$ as follows: Suppose that the
disjunct is satisfied for some $i=k$. Then we can construct databases
$\id D_{1}$ and $\id D_{2}$ as $\id D_{1}=G_{1}(\delta\tpl x,\delta\tpl z_{1})\cup\set{\bigwedge_{i=1\ldots n,i\neq k}R_{i}(\delta\tpl d_{i})}$
and $\id D_{2}=\id D_{1}\cup\set{R_{k}(\delta\tpl d_{k})}$. Clearly,
$(\id D_{1},\id D_{2})$ satisfies the effect $r$ because for the
only additional fact $R_{k}(\delta\tpl d_{k})$ in $\id D_{2}$, the
condition $G_{1}$ is contained in $(\id D_{1})$. But $Q(\id D_{1})\neq Q(\id D_{2})$
because with the new fact, a new valuation for the query is possible
by mapping each atom to itself.

\textquotedbl{}$\Rightarrow:$\textquotedbl{} Holds by construction
of the formula, which checks whether it is possible for $R$-facts
to satisfy both $G_{1}$ and $Q$. Suppose $r$ is risky wrt.\ $Q$.
Then there exists a pair of databases $(\id D_{1},\id D_{2})$ that
satisfies $r$ and where $Q(\id D_{1})\neq Q(\id D_{2})$. Thus, all
new facts in $\id D_{2}$ must conform to $G_{1}$ and some facts
must also contribute to new evaluations of $Q$ that lead to $Q(\id D_{1})\neq Q(\id D_{2})$.
Thus, each such facts implies the existence of a satisfying assignment
for the formula. \end{proof}
\begin{example}
Consider the query $\query{Q(n)}{\pupil(n,c,s),\livesin(n,\Bolzano)}$
and the real-world effect $r_{1}=\pupil(n,c,s)\leftsquigarrow c=4$,
which allows to add new pupils in class 4 in the real world. Then
$r_{1}$ is risky w.r.t.~$Q$, because pupils in class 4 can potentially
also live in Bolzano. Note that without integrity constraints, actually
most updates to the same relation will be risky: if we do not have
keys in the database, a pupil could live both in Bolzano and Merano
and hence an effect $r_{2}=\pupil(n,c,s)\leftsquigarrow\livesin(n,\Merano)$
would be risky w.r.t.~$Q$, too. If there is a key defined over the
first attribute of $\livesin$, then $r_{2}$ would not be risky,
because adding pupils that live in Merano would not influence the
completeness of pupils that only live in Bolzano. 
\end{example}
We say that a real-world effect $r$ that is risky w.r.t.~a query
$Q$ is \emph{repaired} by a set of copy effects $\{c_{2},\ldots,c_{n}\}$,
if for any sequence of databases $(\id{D_{1}},\id{D_{2}})$ that conforms
to $r$ it holds that $Q(\id{D_{2}})=Q(\id{D_{1}}\cup\mathit{\mathit{copy}}{}_{c_{1}\ldots c_{n}}(\id{D_{2}}))$.
Intuitively, this means that whenever we introduce new facts via $r$
and apply the copy effects afterwards, all new facts that can change
the query result are also copied into the available database. 
\begin{prop}
[Repairing] Consider the query $\query Q{R_{1}(\tpl d_{1}),\ldots R_{n}(\tpl d_{n}),M}$,
let $\tpl v=\var(Q)$, a real-world effect $R(\tpl x,\tpl y)\leftsquigarrow G_{1}(\tpl x,\tpl z_{1})$
and a set of copy effects $\{c_{2},\ldots,c_{m}\}$. Then $r$ is
repaired by $\{c_{2},\ldots,c_{m}\}$ if and only if the following
formula is valid:\label{prop:bpm-main-theorem} {\footnotesize{}
\[
\forall\tpl x,\tpl y\!:\ \biggl(\Bigl(\exists\tpl z_{1},\tpl v\!:\ (G_{1}(\tpl x,\tpl z_{1})\wedge\bigwedge_{i=1\ldots n}\! R_{i}(\tpl d_{i})\wedge M\wedge\bigvee_{R_{i}=R}(\tpl x,\tpl y)=\tpl d_{i}\Bigr)\ \ \Rightarrow\ \ \bigvee_{j=2\ldots m}\exists\tpl z_{j}\!:\ G_{j}(\tpl x,\tpl z_{j})\biggr)
\]
}{\footnotesize \par}\end{prop}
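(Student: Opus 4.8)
The statement characterizes when a single risky real-world effect $r$ is repaired by a set of copy effects, in terms of validity of a first-order implication. My plan is to prove both directions by relating satisfying valuations of the antecedent formula to potential "uncaptured" changes in the query answer. The key observation, analogous to the proof of the "Risky Effects" proposition, is that the antecedent of the inner implication describes precisely the situations in which a freshly introduced $R$-fact (conforming to the guard $G_1$) participates as the atom $R_i$ in a satisfying valuation of $Q$; the consequent states that at least one copy effect $c_j$ fires on that same fact, so that the fact is reflected in the available database and the query answer is preserved.

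For the \If direction (formula valid $\Rightarrow$ $r$ repaired), I would take an arbitrary pair $(\id{D_1},\id{D_2})$ conforming to $r$, and show $Q(\id{D_2})=Q(\id{D_1}\cup\mathit{copy}_{c_2\ldots c_m}(\id{D_2}))$. The inclusion $\supseteq$ follows by monotonicity, since $\id{D_1}\cup\mathit{copy}_{c_2\ldots c_m}(\id{D_2})\subseteq\id{D_2}$ (copy effects only extract $R$-facts already present). For $\subseteq$, take a tuple $\tpl c\in Q(\id{D_2})$ witnessed by a valuation $\delta$ with $\delta\tpl z=\tpl c$, $\delta M$ true, and $\delta R_i(\tpl d_i)\in\id{D_2}$ for all $i$. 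If every $\delta R_i(\tpl d_i)$ already lies in $\id{D_1}$, then $\delta$ witnesses $\tpl c$ over the smaller database too. Otherwise some $\delta R_i(\tpl d_i)$ with $R_i=R$ is a new fact; since the pair conforms to $r$, there is a tuple $\tpl c_3$ with $\id G_1(\delta\tpl x,\tpl c_3)\subseteq\id{D_2}$ — actually one must be a little careful here and argue that the guard instance can be taken inside $\id{D_1}$, or re-run the whole argument up to the "latest introduced fact" so the guard holds in the predecessor state; this bookkeeping is where I expect the technical friction. Having the guard satisfied, the antecedent of the inner implication is satisfiable under a valuation extending $\delta$, so by validity the consequent holds: some $c_j$ has its guard $G_j$ satisfied on $(\delta\tpl x,\delta\tpl y)$, hence $\mathit{copy}_{c_j}$ copies $R(\delta\tpl x,\delta\tpl y)=\delta R_i(\tpl d_i)$ into the available part. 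Iterating over all new facts used by $\delta$ shows $\tpl c\in Q(\id{D_1}\cup\mathit{copy}_{c_2\ldots c_m}(\id{D_2}))$.

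For the \Onlyif direction (contrapositive), suppose the formula is not valid. Then there is an instantiation of $\tpl x,\tpl y$ making the antecedent true but every disjunct $\exists\tpl z_j\!:\ G_j(\tpl x,\tpl z_j)$ false. From the antecedent's witnessing valuation $\delta$ I build a concrete counterexample, in the same style as the "Risky Effects" proof: let $\id{D_1}$ be the instantiation of $G_1$ together with all atoms $R_i(\delta\tpl d_i)$ for $i\neq k$ (where $k$ is the index realizing the disjunct $(\tpl x,\tpl y)=\tpl d_i$), and $\id{D_2}=\id{D_1}\cup\{R(\delta\tpl x,\delta\tpl y)\}$. Then $(\id{D_1},\id{D_2})$ conforms to $r$ because the guard $G_1$ holds in $\id{D_1}$; the new fact produces a fresh query answer so $\tpl c=\delta\tpl z\in Q(\id{D_2})$; but since none of the copy-effect guards $G_j$ is satisfiable on $(\delta\tpl x,\delta\tpl y)$, $\mathit{copy}_{c_2\ldots c_m}(\id{D_2})$ does not contain $R(\delta\tpl x,\delta\tpl y)$, and one checks that no other derivation of $\tpl c$ survives over $\id{D_1}\cup\mathit{copy}_{c_2\ldots c_m}(\id{D_2})$ (by minimality of the construction and genericity of the constants chosen for $\delta$, picking fresh distinct constants for otherwise-unconstrained variables). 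Hence $r$ is not repaired.

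The main obstacle, as noted, is the first (\If) direction: a development realizing $r$ may introduce \emph{several} new $R$-facts at once, and a single satisfying valuation of $Q$ over $\id{D_2}$ may rely on one such fact while its guard instance $G_1(\cdot)$ sits only in $\id{D_2}$, not in $\id{D_1}$. The clean fix is to observe that since $r$ is not useless and conformance is per-fact, we may assume w.l.o.g.\ (or reduce to the case) that $\id{D_2}\setminus\id{D_1}$ is a single fact — every real-world effect application decomposes into single-fact steps, and query answers and copy results behave monotonically along such a decomposition — after which the guard necessarily holds already in $\id{D_1}$ and the argument above goes through verbatim. I would state this single-fact reduction as a small lemma before the main equivalence, since it is also implicitly what makes the "Risky Effects" characterization tight.
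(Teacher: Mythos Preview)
Your overall strategy matches the paper's proof, which is itself quite terse: the \If\ direction is argued in one sentence (``any fact introduced by $r$ that can change $Q$ is copied''), and the \Onlyif\ direction builds the same counterexample $(\id{D_1},\id{D_2})$ that you describe, mirroring the Risky Effects construction.

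However, the ``main obstacle'' you identify in the \If\ direction is not an obstacle at all, and your proposed single-fact reduction lemma is unnecessary. Look again at the definition of conformance: a pair $(\id{D_1},\id{D_2})$ conforms to $r=R(\tpl x,\tpl y)\leftsquigarrow G_1(\tpl x,\tpl z_1)$ if for every new fact $R(\tpl c_1,\tpl c_2)\in\id{D_2}\setminus\id{D_1}$ there exists $\tpl c_3$ with $G_1(\tpl c_1,\tpl c_3)\subseteq\id{D_1}$ --- the guard is required to hold already in the \emph{pre}-state $\id{D_1}$, not merely in $\id{D_2}$. So your sentence ``there is a tuple $\tpl c_3$ with $\id G_1(\delta\tpl x,\tpl c_3)\subseteq\id{D_2}$'' undersells what conformance gives you, and the worry that ``the guard instance $G_1(\cdot)$ sits only in $\id{D_2}$'' never arises, regardless of how many new $R$-facts are introduced simultaneously. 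With $\id{D_1}\subseteq\id{D_2}$, you then evaluate the implication over $\id{D_2}$: the antecedent is witnessed (guard from $\id{D_1}\subseteq\id{D_2}$, query atoms from $\delta$ over $\id{D_2}$, equality at the chosen position), so validity yields some $G_j(\delta\tpl x,\cdot)\subseteq\id{D_2}$, whence $c_j$ copies the new fact from $\id{D_2}$. Iterating over every new $R$-fact used by $\delta$ gives the claim directly --- no decomposition into single-fact steps is required.

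In the \Onlyif\ direction your construction is exactly what the paper does; the genericity argument you flag (fresh constants to ensure the ``new'' fact is genuinely new and that no alternative derivation of the answer survives) is indeed needed but routine, and the paper does not spell it out either.
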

\begin{proof}
\textquotedbl{}$\Leftarrow:$\textquotedbl{} Straightforward. If the
formula is valid, it implies that any fact $R(\tpl x)$ that is introduced
by the real-world effect $r$ and which can change the result of $Q$
also satisfies the condition of some copy effect and hence will be
copied.

\textquotedbl{}$\Rightarrow:$\textquotedbl{} Suppose the formula
is not valid. Then there exists a fact $R(\tpl x)$ which satisfies
the condition of the implication (so $R(\tpl x)$ can both conform
to $r$ and change the result of $Q$) but not the consequence (it
is not copied by any copy effect). Thus, we can create a pair $(\id D_{1},\id D_{2})$
of databases as before as $\id D_{1}=G_{1}(\tpl x,\tpl y)\cup\set{\bigwedge_{i=1\ldots n,i\neq k}R_{i}(\tpl d_{i})}$
and $\id D_{2}=\id D_{1}\cup\set{R_{k}(\tpl d_{k})}$ which proves
that $Q(\id D_{2})\neq Q(\id D_{1}\cup\mathit{copy}_{c_{1},\ldots c_{m}}(\id D_{2})$. 
\end{proof}
This implication can be translated into a problem of query containment
as follows: For a query $\query{Q(\tpl z)}{R_{1}(\tpl d_{1}),\ldots,R_{n}(\tpl d_{n})}$,
we define the atom-projection of $Q$ on the $i$-th atom as $\query{Q_{i}^{\pi}(\tpl x)}{R_{1}(\tpl d_{1}),\ldots,R_{n}(\tpl d_{n}),\tpl x=\tpl d_{i}}$.
Then, for a query $Q$ and a relation $R$, we define the $R$-projection
of $Q$, written $Q^{R}$, as the union of all the atom-projections
of atoms that use the relation symbol $R$, that is, $\bigcup_{R_{i}=R}Q_{i}^{\pi}$.
For a real-world effect $r=R(\tpl x,\tpl y)\leftsquigarrow G(\tpl x,\tpl z)$,
we define its associated query $P_{r}$ as $\query{P_{r}(\tpl x,\tpl y)}{R(\tpl x,\tpl y),G(\tpl x,\tpl z)}$. 
\begin{cor}
[Repairing and Query Containment] \label{cor:repairing-if-containment}Let
$Q$ be a query, $\alpha=a_{1},\ldots a_{n}$ be an action sequence,
$a_{i}$ be an action with a risky real-world effect $r$, and $\{c_{1},\ldots,c_{m}\}$
be the set of all copy effects of the actions $a_{i+1}\ldots a_{n}$.

Then $r$ is repaired, if and only if it holds that $P_{r}\cap Q^{R}\subseteq P_{c_{1}}\cup\ldots\cup P_{c_{m}}$. \end{cor}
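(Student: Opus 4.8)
The plan is to reduce Corollary~\ref{cor:repairing-if-containment} directly to the characterization already established in Proposition~\ref{prop:bpm-main-theorem}, by showing that the validity of the first-order formula stated there is equivalent to the query containment $P_{r}\cap Q^{R}\subseteq P_{c_{1}}\cup\ldots\cup P_{c_{m}}$. Since Proposition~\ref{prop:bpm-main-theorem} already establishes that $r$ is repaired by $\{c_{2},\ldots,c_{m}\}$ if and only if that formula is valid, the entire task is a syntactic translation between the logical implication and a (union of) conjunctive query containment problem. So the real content is: unfold the definitions of $P_{r}$, $Q^{R}$, $Q^{R}_{i}$, and the associated queries $P_{c_{j}}$, and observe that the antecedent of the implication is exactly the body of $P_{r}\cap Q^{R}$ and the consequent is exactly the body of the union $P_{c_{1}}\cup\ldots\cup P_{c_{m}}$, both with head variables $(\tpl x,\tpl y)$.

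First I would spell out the left-hand side. By definition $P_{r}$ has body $R(\tpl x,\tpl y),G(\tpl x,\tpl z)$ and head $(\tpl x,\tpl y)$. The $R$-projection $Q^{R}$ is the union $\bigcup_{R_{i}=R}Q_{i}^{\pi}$, where each atom-projection $Q_{i}^{\pi}$ has body $R_{1}(\tpl d_{1}),\ldots,R_{n}(\tpl d_{n}),M,\tpl x=\tpl d_{i}$ and head $(\tpl x)$ — I would note that we can take $(\tpl x,\tpl y)$ as head here too, since the $R$-atom $R_i(\tpl d_i)$ equated to $(\tpl x,\tpl y)$ supplies the $\tpl y$ component; this is the step that matches the ``$(\tpl x,\tpl y)=\tpl d_i$'' disjunct appearing in the Proposition rather than just ``$\tpl x=\tpl d_i$''. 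Intersecting $P_{r}$ with a union distributes, so $P_{r}\cap Q^{R}$ is a union of conjunctive queries, the $i$-th disjunct (for $R_{i}=R$) having body $R(\tpl x,\tpl y),G(\tpl x,\tpl z),R_{1}(\tpl d_{1}),\ldots,R_{n}(\tpl d_{n}),M,(\tpl x,\tpl y)=\tpl d_{i}$ — which, after the equality is folded in, is precisely the matrix $G_{1}(\tpl x,\tpl z_{1})\wedge\bigwedge_{i} R_{i}(\tpl d_{i})\wedge M\wedge\bigvee_{R_i=R}(\tpl x,\tpl y)=\tpl d_i$ from the antecedent of the Proposition's formula, existentially quantified over $\tpl z_1,\tpl v$ and with $\tpl x,\tpl y$ as distinguished variables. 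Symmetrically, each $P_{c_{j}}$ has body $R(\tpl x,\tpl y),G_{j}(\tpl x,\tpl z_{j})$ with head $(\tpl x,\tpl y)$, so the union $P_{c_{1}}\cup\ldots\cup P_{c_{m}}$ is the query whose answer over an instance is $\{(\tpl x,\tpl y)\mid R(\tpl x,\tpl y)\wedge\bigvee_{j}\exists\tpl z_{j}\,G_{j}(\tpl x,\tpl z_{j})\}$, matching the consequent. I would also handle the small bookkeeping point that the copy effects $c_{2},\ldots,c_{m}$ in the Proposition versus $c_{1},\ldots,c_{m}$ in the Corollary are just a relabeling — both denote ``the set of all copy effects of the later actions $a_{i+1},\ldots,a_{n}$'' — so no mismatch arises.

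With both sides of the containment identified with the antecedent and consequent of the Proposition's implication, the equivalence ``formula is valid $\iff$ containment holds'' is the standard fact that a conjunctive query (or union thereof) $Q_1$ is contained in a union $Q_2$ exactly when $\forall$(free vars)$:\,\mathrm{body}(Q_1)\Rightarrow\mathrm{body}(Q_2)$ holds, after treating each side's existentially-bound non-distinguished variables correctly — which is exactly the shape of the formula in Proposition~\ref{prop:bpm-main-theorem}. Composing this equivalence with Proposition~\ref{prop:bpm-main-theorem} itself gives ``$r$ is repaired $\iff$ $P_{r}\cap Q^{R}\subseteq P_{c_{1}}\cup\ldots\cup P_{c_{m}}$,'' which is the claim. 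The only genuinely delicate point, and the one I would be most careful about, is the alignment of distinguished variables: the Proposition quantifies universally over $\tpl x,\tpl y$, so the containment must be read with $(\tpl x,\tpl y)$ as the output tuple of all queries involved, and one must check that the $R$-projection construction $Q^R$ indeed exposes \emph{both} $\tpl x$ and $\tpl y$ (via the atom it equates to $R(\tpl x,\tpl y)$) rather than only the arguments appearing in the query head $\tpl z$ of $Q$ — otherwise the containment would be strictly weaker than the implication. Once that is verified, the rest is routine unfolding of definitions and appeal to the already-proven Proposition.
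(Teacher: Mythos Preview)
Your approach is essentially the same as the paper's: reduce to Proposition~\ref{prop:bpm-main-theorem} by unfolding the definitions of $P_{r}$, $Q^{R}$, and the $P_{c_{j}}$, identify the antecedent and consequent of that formula with the bodies of $P_{r}\cap Q^{R}$ and $\bigcup_{j}P_{c_{j}}$ respectively, and then invoke the standard correspondence between conjunctive queries and relational calculus to conclude that validity of the implication is the same as the containment. The paper's own argument is a terser version of exactly this; you are simply more explicit about the bookkeeping (the head-variable alignment $(\tpl x,\tpl y)$ versus $\tpl x$ in the atom-projection, and the harmless index shift in the copy-effect labels), which the paper glosses over.
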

\begin{proof}
Consider again the formula in Lemma \ref{prop:bpm-main-theorem}.
Then, the first conjunct on the lefthandside is the condition of the
real-world effect $r$, corresponding to $P_{r}$, the second conjunct
is the $R$-projection $Q^{R}$ of the query $Q$, and the third conjunct
is the intersection between $P_{r}$ and $Q^{R}$. The disjunction
on the righthandside corresponds to the union of the queries $P_{c_{1}}$
to $P_{c_{m}}$.
\end{proof}
Intuitively, the corollary says that a risky effect $r$ is repaired,
if all data that is introduced by $r$ that can potentially change
the result of the query $Q$ are guaranteed to be copied into the
information system database by the copy effects $c_{1}$ to $c_{n}$.

The corollary holds because of the direct correspondence between conjunctive
queries and relational calculus \cite{foundations_of_dbs}.

We arrive at a result for characterizing query completeness wrt.\ an
action sequence: 
\begin{lem}
[Action Sequence Completeness]Let $\alpha$ be an action sequence
and $Q$ be a query. Then $\alpha\models\compl Q$ if and only if
all risky effects in $\alpha$ are repaired.\label{lem:incompleteness-if-unhealed-rw-action}\end{lem}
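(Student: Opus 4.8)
The statement to prove is Lemma~\ref{lem:incompleteness-if-unhealed-rw-action}: an action sequence $\alpha = a_1,\ldots,a_n$ satisfies $\compl Q$ if and only if every risky real-world effect occurring along $\alpha$ is repaired (by the copy effects of the subsequent actions). The plan is to prove the two directions separately, using the characterization of ``repaired'' via query containment from Corollary~\ref{cor:repairing-if-containment} and the definitions of development and trace of an action sequence.

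\textbf{($\Leftarrow$) If all risky effects are repaired, then $\alpha \models \compl Q$.} I would take an arbitrary development $\id D_0,\ldots,\id D_n$ of $\alpha$ with its uniquely determined trace $\av D_0,\ldots,\av D_n$, and show $Q(\id D_n) = Q(\av D_n)$. Since $\av D_n \incl \id D_n$ holds by construction (the copy effects only ever add facts that are already in the corresponding ideal database, and monotonicity gives $Q(\av D_n)\incl Q(\id D_n)$), it suffices to show $Q(\id D_n)\incl Q(\av D_n)$. I would argue by induction on the position $i$, tracking which ideal facts that are ``relevant'' to $Q$ (i.e.\ could participate in a satisfying valuation, formalized via the $R$-projections $Q^R$) have already been copied into $\av D_i$. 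The key step: when a fact is introduced into $\id D_i$ by a risky effect $r$ of action $a_i$, the fact either is irrelevant to $Q$ (does not lie in $Q^R \cap P_r$, hence cannot change the query answer), or it is relevant — and then, since $r$ is repaired, Corollary~\ref{cor:repairing-if-containment} guarantees that the fact satisfies the condition of one of the copy effects $c_1,\ldots,c_m$ of the later actions $a_{i+1},\ldots,a_n$, so it is copied into some $\av D_j$ with $j>i$ and thus is present in $\av D_n$. Facts introduced by non-risky effects cannot change $Q$'s answer at all. Hence every valuation satisfying $Q$ over $\id D_n$ already finds all its atoms in $\av D_n$, giving the desired inclusion.

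\textbf{($\Rightarrow$) If some risky effect is not repaired, then $\alpha \not\models \compl Q$.} This is the contrapositive. Suppose action $a_i$ carries a risky real-world effect $r$ that is not repaired by $\{c_1,\ldots,c_m\}$, the copy effects of $a_{i+1},\ldots,a_n$. By Corollary~\ref{cor:repairing-if-containment}, $P_r \cap Q^R \not\subseteq P_{c_1}\cup\cdots\cup P_{c_m}$, so there is a witness database (obtained from a satisfying valuation of $P_r \cap Q^R$ that violates all the $P_{c_j}$) containing an $R$-fact $R(\tpl a)$ that (a) conforms to the guard of $r$, (b) contributes to a new answer of $Q$, and (c) is not copied by any of $c_1,\ldots,c_m$. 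I would lift this witness to a full development of $\alpha$: start $\id D_0$ as the ``old'' part of the witness, let action $a_i$ add $R(\tpl a)$ to form $\id D_i$ (this conforms to $r$ since its guard is satisfied in $\id D_{i-1}$), and keep the ideal database constant at all other steps — which is a legitimate development since every action's real-world effect is satisfied when nothing is added. One has to check that earlier risky effects are not forced to fire (they need not, since conformance only constrains facts that \emph{are} added) and that the fact $R(\tpl a)$ can indeed be chosen to be absent from $\id D_{i-1}$ (possible because the witness valuation can use fresh constants for the $\tpl y$-part of $R(\tpl x,\tpl y)$, which is unconstrained in $r$; here the density of $\dom$ is used). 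Then along the induced trace, $R(\tpl a)$ is never copied, so $R(\tpl a)\notin \av D_n$, while $Q(\id D_n)$ strictly exceeds $Q(\av D_n)$ by the answer tuple contributed by $R(\tpl a)$. Hence $\alpha\not\models\compl Q$.

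\textbf{Main obstacle.} The delicate point is the ($\Rightarrow$) direction: constructing a development that \emph{isolates} the single problematic fact without accidentally triggering other effects or running into integrity issues, and simultaneously ensuring the fact genuinely changes the query answer over the final database (rather than being masked by some other valuation or by a copy effect applying via a different matching). This requires care with the freshness of the introduced constants and with the interaction of multiple risky effects along the sequence; the argument should mirror the database constructions used in the proofs of the ``Risky Effects'' and ``Repairing'' propositions above, where the witness database is built from a satisfying valuation of the relevant logical formula. I would also remark that, in the absence of integrity constraints, this isolation is always possible, which is why the lemma holds without any further hypotheses.
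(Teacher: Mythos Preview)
Your proposal is correct and follows essentially the same approach as the paper: both directions are argued via Corollary~\ref{cor:repairing-if-containment}, and for the ($\Rightarrow$) direction you construct a counterexample development by taking the witness database, removing the single offending $R$-fact to form $\id D_0=\cdots=\id D_{i-1}$, and adding it back at step $i$ to form $\id D_i=\cdots=\id D_n$, exactly as the paper does. Your discussion of freshness and isolation is more cautious than the paper's, but the underlying construction is the same.
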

\begin{proof}
``$\Leftarrow$'': Assume that all risky real-world effects in $\alpha$
are repaired in $\alpha$. Then by Lemma \ref{cor:repairing-if-containment}
any fact introduced by a real-world effect $r$ which can potentially
also influence the satisfaction of $\compl Q$ also satisfies the
condition of some later copy effect, and hence it is eventually copied
into some $\av{D_{j}}$ and hence it also appears in $\av{D_{n}}$,
which implies that $C$ is satisfied over $(\id{D_{n}},\av{D_{n}})$.

``$\Rightarrow$'': Assume the repairing does not hold for some
risky effect $r$ of an action $a_{i}\in\alpha$. Then by Lemma \ref{cor:repairing-if-containment},
since the containment does not hold, there exists a database $D$
with a fact $R(t)$ that is in $Q_{r}\cap Q^{R}(D)$ but not in $Q_{c_{i+1}}\cup\ldots\cup Q_{c_{n}}(D)$.
Then, we can create a development $\id{D_{0}},\ldots,\id{D_{n}}$
of $\alpha$ as $\id{D_{0}},\ldots,\id{D_{i-1}}=D\setminus\{R(t)\}$
and $\id{D_{i}},\ldots,\id{D_{n}}=D$. Its trace is $\av{D_{0}},\ldots,\av{D_{n}}=D\setminus\{R(t)\}$,
because since the containment does not hold, for none of the copy
effects in the following actions its guard evaluates to true for the
fact $R(t)$ and hence $R(t)$ is never copied into the available
database. But since $R(t)$ is in $Q^{R}(D)$, query completeness
for $Q$ is not satisfied over $(\id{D_{n}},\av{D_{n}})$ and hence
$\alpha\not\models\compl Q$. 
\end{proof}
Before discussing complexity results in Section \ref{sub:complexity},
we show that completeness entailment over action sequences and containment
of unions of queries have the same complexity. As discussed earlier,
common sublanguages of conjunctive queries are, e.g., queries without
arithmetic comparisons (so-called relational queries), or queries
without repeated relation symbols (so-called linear queries).

For a query language $\L$, we call $\ent{\L}$ the problem of deciding
whether an action sequence $\alpha$ entails completeness of a query
$Q$, where $Q$ and the real-world effects and the copy effects in
$\alpha$ are formulated in language $\L$. Also, we call $\UCont(\L,\L)$
the problem of deciding whether a query is contained in a union of
queries, where all are formulated in the language $\L$. 
\begin{thm}
Let $\L$ be a query languages. Then $\ent{\L}$ and $\UCont(\L,\L)$
can be reduced to each other in linear time.\label{thm:containment=00003Dcomplentailment}\end{thm}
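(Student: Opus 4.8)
The plan is to prove the two directions of the reduction separately, each time exhibiting a linear-time many-one translation between instances.

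\medskip

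\textbf{From $\ent{\L}$ to $\UCont(\L,\L)$.} Given an action sequence $\alpha = a_1,\ldots,a_n$ and a query $Q$ in $\L$, I would invoke Lemma~\ref{lem:incompleteness-if-unhealed-rw-action}: $\alpha \models \compl Q$ iff every risky real-world effect in $\alpha$ is repaired. Iterating over the (at most $n$) real-world effects, for each action $a_i$ carrying a real-world effect $r$ one forms its associated query $P_r$, the $R$-projection $Q^R$ of $Q$, and the set $\{c_{i+1},\ldots,c_m\}$ of all copy effects occurring in $a_{i+1},\ldots,a_n$; by Corollary~\ref{cor:repairing-if-containment} the effect $r$ is repaired iff the single union-containment instance $P_r \cap Q^R \subseteq P_{c_{i+1}} \cup \cdots \cup P_{c_m}$ holds. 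One subtlety to address is that this produces up to $n$ separate union-containment tests rather than one, and that the containee $P_r \cap Q^R$ is an intersection (hence a conjunctive query obtained by merging the bodies of $P_r$ and $Q^R$, which stays in $\L$ for all four languages considered). To get a genuine many-one reduction to a \emph{single} instance of $\UCont(\L,\L)$, I would either (a) appeal to the fact that a logarithmic- or constant-number of calls to a decision procedure still fits the complexity classes in question, or (b) more cleanly, combine the tests using fresh relation symbols: pad each containee and the corresponding container queries with a fresh $0$-ary (or tagged unary) atom $T_i$ so that all the containments can be bundled into one $\UCont$ instance that holds iff all components hold. I would check that this padding preserves membership in $\L$ (no new comparisons, no new repeated relation symbols if tags are chosen distinct) and is computable in linear time.

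\medskip

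\textbf{From $\UCont(\L,\L)$ to $\ent{\L}$.} Conversely, given a containment instance $Q_0 \subseteq Q_1 \cup \cdots \cup Q_k$ with all $Q_j$ in $\L$, I would build a two-action sequence $\alpha = a_1, a_2$. Introduce a fresh relation symbol $H$ of the same arity as the $Q_j$. Action $a_1$ carries the single risky real-world effect $r \colon H(\tpl x) \leftsquigarrow B_0(\tpl x, \tpl z)$, where $B_0$ is the body of $Q_0$; action $a_2$ carries the copy effects $c_j \colon H(\tpl x), B_j(\tpl x,\tpl z_j) \to \av H(\tpl x)$ for $j = 1,\ldots,k$, where $B_j$ is the body of $Q_j$. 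Take the target query to be $\query{Q(\tpl x)}{H(\tpl x)}$, so that $Q^H = Q$ up to variable renaming and $P_r = Q_0$ (after matching head variables to $\tpl x$). By Lemma~\ref{lem:incompleteness-if-unhealed-rw-action} and Corollary~\ref{cor:repairing-if-containment}, $\alpha \models \compl Q$ iff $r$ is repaired iff $P_r \cap Q^H \subseteq P_{c_1} \cup \cdots \cup P_{c_k}$, and since $P_r \cap Q^H$ is equivalent to $Q_0$ and $P_{c_j}$ is equivalent to $Q_j$, this is exactly $Q_0 \subseteq Q_1 \cup \cdots \cup Q_k$. Again I would verify that $H$ being fresh means no repeated relation symbols are introduced beyond those already in the $Q_j$, so the construction stays within $\L$ for linear query languages, and clearly within $\L$ for the relational ones; and that the whole construction is linear in the size of the input.

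\medskip

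\textbf{Main obstacle.} The routine parts (correctness of each construction) follow almost immediately from the already-established Lemma~\ref{lem:incompleteness-if-unhealed-rw-action} and Corollary~\ref{cor:repairing-if-containment}; the only genuinely delicate point is the forward direction, namely turning the potentially-linear number of per-effect containment checks into a single $\UCont(\L,\L)$ instance while staying inside the syntactic restrictions of $\L$ (no new comparisons for relational queries, no new repeated relation symbols for linear queries). If a clean single-instance bundling cannot be achieved within the linearity restrictions for \emph{all} four languages simultaneously, the fallback is to state the reduction as a linear-time Turing (not many-one) reduction, which still yields the intended consequence that $\ent{\L}$ and $\UCont(\L,\L)$ have the same complexity for the classes at hand; I would make explicit which notion of reduction is being used. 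I would close by noting that, combined with the known complexities of $\UCont$ from Table~\ref{table:complexity_querycontainment}, this pins down the exact complexity of design-time completeness entailment over action sequences.
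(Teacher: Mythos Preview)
Your proposal is correct and follows essentially the same approach as the paper: the backward direction uses the identical two-action construction with a fresh head relation, and the forward direction invokes Lemma~\ref{lem:incompleteness-if-unhealed-rw-action} together with Corollary~\ref{cor:repairing-if-containment} to produce one containment check per real-world effect. You are in fact more careful than the paper on one point: the paper's proof of the forward direction simply remarks that ``the number of containment checks is the same as the number of the real-world effects of the action sequence and thus linear,'' without bundling them into a single $\UCont$ instance or distinguishing many-one from Turing reductions; your discussion of this issue and your tagging workaround go beyond what the paper provides.
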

\begin{proof}
``$\Rightarrow$'': Consider the characterization shown in Lemma
\ref{lem:incompleteness-if-unhealed-rw-action}. For a fixed action
sequence, the number of containment checks is the same as the number
of the real-world effects of the action sequence and thus linear.

``$\Leftarrow$'': Consider a containment problem $Q_{0}\subseteq Q_{1}\cup\ldots\cup Q_{n}$,
for queries formulated in a language $\L$. Then we can construct
a QATS $\qats=(S,s_{0},A,E,\re,\ce)$ over the schema of the queries
together with a new relation $R$ with the same arity as the queries
where $S=\{s_{0},s_{1},s_{2}\}$, $A=\{a_{1},a_{2}\},\re(a_{1})=\{\id R(\tpl x)\leftsquigarrow Q_{0}(\tpl x)\}$
and $\ce(a_{2})=\bigcup_{i=1\ldots n}\{Q_{i}(\tpl x)\rightarrow\av R(\tpl x)\}$.
Now, the action sequence $a_{1},a_{2}$ satisfies a query completeness
for a query $\query{Q'(\tpl x)}{R(\tpl x)}$ exactly if $Q_{0}$ is
contained in the union of the queries $Q_{1}$ to $Q_{n}$, because
only in this case the real-world effect at action $a_{1}$ cannot
introduce any facts into $\id{D_{1}}$ of a development of $a_{1},a_{2}$,
which are not copied into $\av{D_{2}}$ by one of the effects of the
action $a_{2}$. 
\end{proof}
We discuss the complexity of query containment and hence of completeness
entailment over action sequences more in detail in Section \ref{sub:complexity}.

So far, we have shown how query completeness over a path can be checked.
To verify completeness in a specific state, we have to consider all
paths to that state, which makes the analysis more difficult. We first
introduce a lemma that allows to remove repeated actions in an action
sequence: 
\begin{lem}
[Duplicate Removal]Let $\alpha=\alpha_{1},\tilde{a},\alpha_{2},\tilde{a},\alpha_{3}$
be an action sequence with $\tilde{a}$ as repeated action and let
$Q$ be a query. Then $\alpha$ satisfies $\compl Q$ if and only
if $\alpha'=\alpha_{1},\alpha_{2},\tilde{a},\alpha_{3}$ satisfies
$\compl Q$. \label{lem:repeated-transitions-can-be-ignored}\end{lem}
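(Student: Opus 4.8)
The key observation is that the satisfaction of query completeness by an action sequence depends only on which risky real-world effects occur and whether each is repaired by the set of copy effects that follow it, as characterized by Lemma~\ref{lem:incompleteness-if-unhealed-rw-action} and Corollary~\ref{cor:repairing-if-containment}. The plan is to show that the transformation from $\alpha$ to $\alpha'$ (deleting the first occurrence of $\tilde a$ and keeping the second) preserves, for every action, the set of copy effects that strictly follow it, and therefore preserves the ``all risky effects are repaired'' property on which completeness hinges.

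First I would invoke Lemma~\ref{lem:incompleteness-if-unhealed-rw-action}: $\alpha \models \compl Q$ iff every risky real-world effect occurring in $\alpha$ is repaired, and likewise for $\alpha'$. So it suffices to establish that the multiset of pairs (risky real-world effect, set of copy effects occurring strictly after it) is the same for $\alpha$ and $\alpha'$. I would split the analysis by the position of a real-world effect's host action $a$:
\emph{(i)} $a$ lies in $\alpha_1$, $\alpha_2$, or $\alpha_3$ --- then the suffix of actions after $a$ is unchanged as a multiset, since we only reorder $\tilde a, \alpha_2$ into $\alpha_2,\tilde a$, and in all three cases the collection of actions after $a$ (hence the collection of copy effects after $a$) is the same in $\alpha$ and $\alpha'$; \emph{(ii)} $a = \tilde a$, the repeated action --- here there are two occurrences in $\alpha$ and one in $\alpha'$, so I have to argue more carefully. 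For the \emph{second} occurrence of $\tilde a$ in $\alpha$ (which becomes the sole occurrence in $\alpha'$): in $\alpha$ the actions after it are $\alpha_3$; in $\alpha'$ the actions after it are also $\alpha_3$; so its repaired-status is identical. For the \emph{first} occurrence of $\tilde a$ in $\alpha$: the actions strictly after it are $\alpha_2, \tilde a, \alpha_3$, whose copy effects are exactly those of $\alpha_2$, $\tilde a$, and $\alpha_3$. The crucial point is that this first occurrence's risky effects are the \emph{same} effects as the second occurrence's (both copies of $\tilde a$ carry the real-world effects $\re(\tilde a)$), and the copy effects following the first occurrence include all of $\alpha_3$'s copy effects plus $\tilde a$'s own copy effects $\ce(\tilde a)$ --- a superset of what follows the second occurrence. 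Hence if the second (= only, in $\alpha'$) occurrence is repaired, so is the first, by monotonicity of containment in the union on the right-hand side of Corollary~\ref{cor:repairing-if-containment}. Conversely, deleting the first occurrence of $\tilde a$ from $\alpha$ cannot introduce a new unrepaired risky effect in $\alpha'$, because every real-world effect and its following copy effects in $\alpha'$ already appeared (with an at-least-as-large set of following copy effects) in $\alpha$.

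The main obstacle is the asymmetry in case~\emph{(ii)}: a repeated action contributes its risky effects at two points in $\alpha$ but only one in $\alpha'$, so the argument is not a pure ``same multiset of obligations'' statement but a ``the nontrivial obligations coincide, and the extra one is weaker'' statement. I would make this precise using Corollary~\ref{cor:repairing-if-containment}: repairing the real-world effect $r \in \re(\tilde a)$ relative to a suffix means $P_r \cap Q^R \subseteq P_{c_1} \cup \cdots \cup P_{c_m}$ where $c_1,\ldots,c_m$ are the copy effects of actions strictly after that occurrence; since the suffix after the first occurrence of $\tilde a$ in $\alpha$ has a superset of the copy effects of the suffix after the occurrence of $\tilde a$ in $\alpha'$ (namely the same $\alpha_3$-copy-effects plus $\ce(\tilde a)$, and $\alpha_2$'s copy effects are irrelevant to making the containment fail --- they can only help), the containment for $\alpha'$ implies the containment for the first occurrence in $\alpha$. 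Carefully handling the copy effects of $\alpha_2$ (which appear between the two occurrences of $\tilde a$ in $\alpha$ but are moved before $\tilde a$ in $\alpha'$, hence drop out of the relevant suffix for that action) is the one bookkeeping subtlety; since copy effects only ever \emph{add} to $\av D$ and the repairing condition only gets easier with more following copy effects, this reorganization is harmless in both directions.
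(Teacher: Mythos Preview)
Your approach is essentially the same as the paper's: both proofs reduce to Lemma~\ref{lem:incompleteness-if-unhealed-rw-action} and argue that the set of copy effects following each risky real-world effect is preserved (or only enlarged) under the transformation. One minor slip: you describe the passage from $\alpha$ to $\alpha'$ as ``reordering $\tilde a,\alpha_2$ into $\alpha_2,\tilde a$'' and claim the suffix after an action $a\in\alpha_1$ is unchanged \emph{as a multiset}; in fact the first $\tilde a$ is deleted, so for $a\in\alpha_1$ the suffix loses one copy of $\tilde a$ as a multiset --- but since the repairing condition of Corollary~\ref{cor:repairing-if-containment} depends only on the \emph{union} (hence set) of copy-effect queries, your conclusion that the relevant set of copy effects is unchanged is correct, and the rest of the argument goes through.
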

\begin{proof}
\textquotedbl{}$\Rightarrow$\textquotedbl{}: Suppose $\alpha$ satisfies
$\compl Q$. Then, by Proposition \ref{lem:incompleteness-if-unhealed-rw-action},
all risky real-world effects of the actions in $\alpha$ are repaired.
Let $a_{r}$ be an action in $\alpha$ that contains a risky real-world
effect $r$. Thus, there must exist a set of actions $A_{c}$ in $\alpha$
that follows $a_{r}$ and contains copy effects that repair $r$.
Suppose $A_{c}$ contains the first occurrence of $\tilde{a}$. Then,
this first occurrence of $\tilde{a}$ can also replaced by the second
occurrence of $\tilde{a}$ and then the modified set of actions also
appears after $a_{r}$ in $\alpha'$.

\textquotedbl{}$\Leftarrow$\textquotedbl{}: Suppose $\alpha'$ satisfies
$\compl Q$. Then, also $\alpha$ satisfies $\compl Q$ because adding
the action $\tilde{a}$ earlier cannot influence query completeness:
Since by assumption each risky real-world effect of the second occurrence
of $\tilde{a}$ is repaired by some set of actions $A_{c}$ that follows
$\tilde{a}$, the same set $A_{c}$ also repairs each risky real-world
effect of the first occurrence of $\tilde{a}$. 
\end{proof}
The lemma shows that our formalism can deal with cycles. While cycles
imply the existence of sequences of arbitrary length, the lemma shows
that we only need to consider sequences where each action occurs at
most once. Intuitively, it is sufficient to check each cycle only
once. 
\begin{rem}
If we consider a fixed start database, we cannot just drop all but
the last occurrence of an action. Consider e.g.\ a process consisting
of a sequence of three actions: a real-world effect $\id R(x)\leftsquigarrow\true$,
a copy effect $\id R(x)\rightarrow\av S(x)$ and again the real-world
effect $\id R(x)\leftsquigarrow\true$. Then if the start databases
are assumed to be empty, the first occurrence of $R(x)\leftsquigarrow\true$
cannot be dropped without changing the satisfaction of completeness
of a query $\query{Q(x)}{S(x)}$ in the end of the process. Still,
because we do not consider recursive queries, such dependencies would
presumably be finite.
\end{rem}
Based on the preceding lemma, we define the \emph{normal action sequence}
of a path $\pi$ as the action sequence of $\pi$ in which for all
repeated actions all but the last occurrence are removed.
\begin{prop}
[Normal Action Sequences]Let $\qats=(T,\re,\ce)$ be a QATS, $\Pi$
be the set of all paths of $\qats$ and $Q$ be a query. Then \end{prop}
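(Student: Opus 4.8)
The plan is to reduce the claim about a \emph{state} $s$ to a claim about the finitely many \emph{normal} action sequences reaching $s$, using the Duplicate Removal Lemma as the workhorse. First I would unfold the definition of state completeness: by definition $s\models\compl Q$ holds precisely when every action sequence in $\aseq(s)$ — i.e.\ the action sequence of every path $\pi\in\Pi$ ending in $s$ — satisfies $\compl Q$. The target is then to show that for each such $\pi$ it suffices to look at the normal action sequence of $\pi$, and that ranging over all $\pi\in\Pi$ one obtains only finitely many normal action sequences.

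The core step is to upgrade Lemma~\ref{lem:repeated-transitions-can-be-ignored} from "remove one duplicate" to "remove all duplicates". I would do this by induction on the total number of \emph{surplus occurrences} of actions in $\pi$, that is, $\sum_{a\in A}(\#\{\text{occurrences of }a\text{ in }\alpha_\pi\}-1)^{+}$. If this number is zero, $\alpha_\pi$ is already its own normal action sequence and there is nothing to prove. Otherwise some action $\tilde a$ occurs at least twice, so $\alpha_\pi$ has the shape $\alpha_1,\tilde a,\alpha_2,\tilde a,\alpha_3$; applying Lemma~\ref{lem:repeated-transitions-can-be-ignored} yields $\alpha'=\alpha_1,\alpha_2,\tilde a,\alpha_3$ with $\alpha_\pi\models\compl Q \iff \alpha'\models\compl Q$, and $\alpha'$ has strictly smaller surplus count, so the induction hypothesis applies. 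Hence $\alpha_\pi$ satisfies $\compl Q$ iff the normal action sequence of $\pi$ does. Combining this with the definition unfolding above gives the "if and only if" of the proposition: $s\models\compl Q$ iff every normal action sequence of a path ending in $s$ satisfies $\compl Q$.

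For the finiteness/decidability part, I would note that $A$ is finite and a normal action sequence lists each action at most once, so there are at most $\sum_{k=0}^{|A|}\frac{|A|!}{(|A|-k)!}$ distinct normal action sequences overall; in particular, even in the presence of cycles in $\qats$, only this finite set needs to be inspected. For each individual normal action sequence, satisfaction of $\compl Q$ is decidable: by Lemma~\ref{lem:incompleteness-if-unhealed-rw-action} it amounts to checking that every risky real-world effect occurring in the sequence is repaired by the subsequent copy effects, and by Corollary~\ref{cor:repairing-if-containment} each such check is a query-containment test $P_r\cap Q^{R}\subseteq P_{c_1}\cup\cdots\cup P_{c_m}$, which is decidable (with complexity as discussed in Section~\ref{sub:complexity}).

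The main obstacle — really a point requiring care rather than a deep difficulty — is that a normal action sequence of $\pi$ need not itself be the action sequence of \emph{any} path in $\qats$, since it is obtained by \emph{deleting} transitions, not by following edges; one must therefore resist the temptation to phrase the argument as "reduce to paths whose normal form is short" and instead keep the reasoning entirely at the level of action sequences and their satisfaction of $\compl Q$, transferring the property only along Lemma~\ref{lem:repeated-transitions-can-be-ignored}. A secondary subtlety is termination of the duplicate-elimination process in the presence of cycles: because a cyclic path may traverse the same action arbitrarily many times, the induction must be carried out on the surplus-occurrence count (a nonnegative integer that strictly decreases at each step), not on path length, which need not decrease.
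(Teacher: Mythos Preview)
Your reconstruction of the proposition is broader than what it actually asserts. In the paper the proposition has three clauses (placed in an \texttt{enumerate} outside the \texttt{prop} environment, which is why you only saw ``Then''): (i) every normal action sequence has length at most $|A|$; (ii) there are at most $\sum_{k=1}^{|A|}\frac{|A|!}{(|A|-k)!}<(|A|+1)!$ distinct normal action sequences; (iii) for every path $\pi$, $\pi\models\compl Q$ iff the normal action sequence of $\pi$ satisfies $\compl Q$. Your proposal covers exactly these three points, and your argument for (iii)---induction on the total surplus-occurrence count, peeling off one duplicate at a time via Lemma~\ref{lem:repeated-transitions-can-be-ignored}---is correct and is just a spelled-out version of the paper's one-line proof, which simply cites that lemma. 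Items (i) and (ii) are handled by the paper in one sentence (``normal action sequences do not contain actions twice''), which matches your finiteness remark.

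The additional material you develop---lifting the equivalence to states $s$, checking realizability of a normal action sequence by some path, and reducing the per-sequence check to containment---is not part of this proposition; it is the content of the subsequent Proposition~\ref{prop:checking-for-action-sequences-path-existence} and Theorem~\ref{thm:final-characterization-design-time}. Your cautionary note that a normal action sequence need not be the action sequence of any path in $\qats$ is exactly the reason the paper needs Proposition~\ref{prop:checking-for-action-sequences-path-existence} as a separate step before the design-time theorem, so you have anticipated the right structural issue, just one result too early.
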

\begin{enumerate}
\item for each path $\pi\in\Pi$, its normal action sequence has at most
the length $\mid\!\! A\!\!\mid$, 
\item there are at most $\Sigma_{k=1}^{\mid A\mid}\frac{\mid\! A\!\mid!}{(\mid\! A\!\mid-k)!}<(\mid\!\! A\!\!\mid+1)!$
different normal forms of paths, 
\item for each path $\pi\in\Pi$, it holds that $\pi\models\compl Q$ if
and only if $\alpha'$ satisfies $\compl Q$, where $\alpha'$ is
the normal action sequence of $\pi$.\end{enumerate}
\begin{proof}
The first two items hold because normal action sequences do not contain
actions twice. The third item holds because of Lemma \ref{lem:repeated-transitions-can-be-ignored},
which allows to remove all but the last occurrence of an action in
an action sequence without changing query completeness satisfaction.
\end{proof}
Before arriving at the main result, we need to show that deciding
whether a given normal action sequence can actually be realized by
a path is easy: 
\begin{prop}
Given a QATS $\qats$, a state $s$ and a normal action sequence $\alpha$.
Then, deciding whether there exists a path $\pi$ that has $\alpha$
as its normal action sequence and that ends in $s$ can be done in
polynomial time. \label{prop:checking-for-action-sequences-path-existence} \end{prop}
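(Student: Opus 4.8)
The plan is to reduce the problem to reachability in a polynomially-sized auxiliary graph. Given the QATS $\qats = (T, \re, \ce)$ with $T = (S, s_0, A, E)$, a target state $s$, and a normal action sequence $\alpha = a_1, \ldots, a_k$ (which, being normal, has length $k \le |A|$ and contains no repeated action), I want to decide whether some path $\pi$ from $s_0$ to $s$ has $\alpha$ as its normal action sequence. Recall that the normal action sequence of $\pi$ is obtained from the action sequence of $\pi$ by deleting, for each repeated action, all occurrences but the last. So the question is: does there exist a walk from $s_0$ to $s$ in the transition graph whose sequence of edge labels, after this deletion operation, equals $\alpha$?

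First I would characterize which label sequences normalize to $\alpha = a_1, \ldots, a_k$. A sequence $b_1, \ldots, b_m$ normalizes to $\alpha$ iff: every $b_j$ belongs to $\{a_1, \ldots, a_k\}$; the subsequence of \emph{last occurrences} of the distinct symbols, read in order, is exactly $a_1, \ldots, a_k$; equivalently, the last occurrence of $a_i$ precedes the last occurrence of $a_{i+1}$ for each $i$, and no symbol outside $\alpha$ appears. This means the walk can be cut into $k$ consecutive blocks $B_1, \ldots, B_k$ where block $B_i$ ends with an edge labeled $a_i$, and every edge inside block $B_i$ carries a label from $\{a_1, \ldots, a_i\}$ (so that $a_i$'s genuine last occurrence is the block boundary, and earlier symbols may still recur later). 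I would verify this characterization carefully — it is the one real piece of combinatorial reasoning — and then it is straightforward to check reachability.

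The algorithm then proceeds in $k$ phases. I maintain a set $R_i \subseteq S$ of states reachable after completing blocks $B_1, \ldots, B_i$. Set $R_0 = \{s_0\}$. To compute $R_i$ from $R_{i-1}$: let $G_i$ be the subgraph of $T$ keeping only edges whose label lies in $\{a_1, \ldots, a_i\}$; a state $t$ is in $R_i$ iff there is a state $t' \in S$ reachable from some state of $R_{i-1}$ using edges of $G_i$, together with an edge $(t', a_i, t) \in E$. Each phase is a reachability computation in a subgraph plus a one-step extension, hence doable in time $O(|S| + |E|)$; there are $k \le |A|$ phases, so the total running time is $O(|A| \cdot (|S| + |E|))$, which is polynomial in the size of $\qats$. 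Finally, the answer is ``yes'' iff $s \in R_k$. Correctness is immediate from the block characterization: a path to $s$ with normal action sequence $\alpha$ exists iff it can be split into blocks realizing exactly the phase structure, which is exactly what the $R_i$ computation tracks.

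The main obstacle is getting the block characterization exactly right, in particular handling the interaction between ``earlier'' symbols recurring inside later blocks and the requirement that $a_i$'s \emph{last} occurrence be at the boundary of $B_i$. One must check that allowing edges labeled $a_1, \ldots, a_i$ freely inside $B_i$ does not accidentally let $a_j$ (for $j < i$) have its last occurrence before block $B_j$ ends — but this cannot happen, because $a_j$ is also allowed in all later blocks up through $B_i$, so its global last occurrence is determined by the last block in which it actually appears, and the construction only needs to \emph{witness} one valid placement, which it does. A secondary minor point is the degenerate case where $\alpha$ is empty (only the empty walk, so the answer is ``yes'' iff $s = s_0$) or where some $a_i$ does not occur at all in $\qats$ (then no such path exists and phase $i$ produces $R_i = \emptyset$, correctly). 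None of this requires more than routine graph bookkeeping once the characterization is pinned down.
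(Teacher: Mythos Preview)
Your overall plan---reduce to iterated reachability in label-restricted subgraphs---is right, and matches the paper's approach. But your block characterization has the wrong direction, and this makes the algorithm unsound.

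You claim that $b_1,\ldots,b_m$ normalizes to $\alpha = a_1,\ldots,a_k$ iff it splits into blocks $B_1,\ldots,B_k$ where $B_i$ ends with $a_i$ and uses only labels from the \emph{prefix} set $\{a_1,\ldots,a_i\}$. This is false. Take $\alpha = a_1,a_2,a_3$ and $w = a_1\,a_2\,a_1\,a_3$. The decomposition $B_1=a_1$, $B_2=a_2$, $B_3=a_1 a_3$ satisfies all your conditions, so your algorithm accepts $w$; but the last occurrences of $a_1,a_2,a_3$ in $w$ are at positions $3,2,4$, so the normal form of $w$ is $a_2,a_1,a_3 \neq \alpha$. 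Concretely, on a QATS with a single path $s_0 \xrightarrow{a_1} s_1 \xrightarrow{a_2} s_2 \xrightarrow{a_1} s_3 \xrightarrow{a_3} s_4$, your phases produce $R_3 = \{s_4\}$ and answer ``yes'' for $\alpha = a_1,a_2,a_3$, although no path to $s_4$ has that normal action sequence. Your own obstacle paragraph worries about $a_j$ having its last occurrence \emph{too early}; the actual danger is the opposite---allowing $a_j$ with $j<i$ inside $B_i$ lets $a_j$'s last occurrence slip \emph{past} the required position of $a_{j+1}$.

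The paper uses the \emph{suffix} sets: in phase $i$ one traverses edges with labels in $\{a_i,\ldots,a_k\}$, not $\{a_1,\ldots,a_i\}$. With suffix sets the characterization works, because once block $i$ begins, $a_1,\ldots,a_{i-1}$ are forbidden, so their last occurrences are pinned in the correct order. Your iterated-reachability skeleton is fine once you flip the direction (and ensure block $i$ actually uses $a_i$); the running-time analysis is then identical to what you wrote.
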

\begin{proof}
The reason for this proposition is that given a normal action sequence
$\alpha=a_{1},\ldots,a_{n}$, one just needs to calculate the states
reachable from $s_{0}$ via the concatenated expression 
\[
(a_{1},\ldots,a_{n})^{+},(a_{2},\ldots,a_{n})^{+},\ldots,(a_{n-1},a_{n})^{+},(a_{n})^{+}
\]
This expression stands exactly for all action sequences with $\alpha$
as normal sequence, because it allows repeated actions before their
last occurrence in $\alpha$. Calculating the states that are reachable
via this expression can be done in polynomial time, because the reachable
states $S_{n}^{\reach}$ can be calculated iteratively for each component
$(a_{i},\ldots,a_{n})^{+}$ as $S_{i}^{\reach}$ from the reachable
states $S_{i-1}^{\reach}$ until the previous component $(a_{i-1},\ldots,a_{n})^{+}$
by taking all states that are reachable from a state in $S_{i-1}^{\reach}$
via one or several actions in $\{a_{i},\ldots,a_{n}\}$, which can
be done with a linear-time graph traversal such as breadth-first or
depth-first search. Since there are only $n$ such components, the
overall algorithm works in polynomial time.
\end{proof}
Having shown that realization of a normal action sequence by a QATS
is in PTIME, we can prove the following main result:
\begin{thm}
Given a QATS $\qats$ and a query $Q$, both formulated in a query
language $\L$, checking ``$s\not\models\compl Q$?'' can be done
using a nondeterministic polynomial-time Turing machine with a $\ucont{\L}$-oracle.
\label{thm:final-characterization-design-time}\end{thm}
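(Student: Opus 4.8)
The plan is to show that checking "$s\not\models\compl Q$" can be done by a nondeterministic polynomial-time Turing machine equipped with a $\ucont{\L}$-oracle. By the preceding results, $s\models\compl Q$ holds if and only if \emph{every} path ending in $s$ satisfies $\compl Q$, equivalently, if and only if every normal action sequence that can be realized by some path ending in $s$ satisfies $\compl Q$. So to witness $s\not\models\compl Q$, it suffices to exhibit one path (equivalently one normal action sequence realizable by a path ending in $s$) that does not satisfy $\compl Q$.

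First I would have the Turing machine nondeterministically guess a normal action sequence $\alpha = a_1,\ldots,a_n$. By the Normal Action Sequences proposition, such a sequence has length at most $\mid\!A\!\mid$, so the guess has polynomial size. Second, the machine must verify that $\alpha$ is realizable by a path that ends in $s$; by Proposition~\ref{prop:checking-for-action-sequences-path-existence} this check is in PTIME, so it can be carried out directly by the Turing machine. Third, the machine must verify that $\alpha \not\models \compl Q$. By Lemma~\ref{lem:incompleteness-if-unhealed-rw-action}, $\alpha \models \compl Q$ precisely when every risky real-world effect occurring in $\alpha$ is repaired; hence $\alpha \not\models \compl Q$ holds iff there is at least one action $a_i$ in $\alpha$ carrying a real-world effect $r$ that is risky with respect to $Q$ and not repaired by the copy effects of the actions following $a_i$.

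To implement this last step, the machine additionally guesses the index $i$ of such an offending action. Checking that the real-world effect $r$ of $a_i$ is risky with respect to $Q$ amounts, by the Risky Effects proposition, to a satisfiability test of a conjunctive formula, which is an $\NP$ test and can be folded into the nondeterministic guessing (or, since satisfiability of such a query-like formula reduces to a trivial non-containment, handled by the oracle). Checking that $r$ is \emph{not} repaired reduces, by Corollary~\ref{cor:repairing-if-containment}, to deciding that $P_r \cap Q^R \not\subseteq P_{c_1}\cup\cdots\cup P_{c_m}$, where $c_1,\ldots,c_m$ are the copy effects of the actions after $a_i$; this is exactly a non-instance of $\ucont{\L}$ (all these queries are built from $Q$, $r$, and the $c_j$, hence in $\L$), so it is answered by a single call to the $\ucont{\L}$-oracle. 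Since the guessed data is polynomial and all remaining checks are polynomial-time plus one oracle call, the whole procedure runs on an $\NP$ machine with a $\ucont{\L}$-oracle, which gives the claimed bound.

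The main obstacle I anticipate is bookkeeping rather than conceptual: one must be careful that ``risky but not repaired for some $a_i$'' is the correct negation of the repair condition aggregated over the whole sequence, and that the copy effects relevant to repairing $a_i$ are exactly those of the strictly later actions in the chosen normal action sequence (not of the original cyclic path) --- this is where Lemma~\ref{lem:repeated-transitions-can-be-ignored} and the normal-form reduction do the real work, ensuring that reasoning over the polynomially bounded normal action sequence is sound and complete for the cyclic QATS. A secondary point to handle cleanly is that the $\NP$-style satisfiability check for riskiness should either be absorbed into the nondeterminism or shown to reduce to the oracle, so that we genuinely stay within $\NP^{\ucont{\L}}$ and do not accidentally need a second oracle type.
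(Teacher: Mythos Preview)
Your proposal is correct and follows essentially the same approach as the paper: guess a normal action sequence, verify realizability in polynomial time via Proposition~\ref{prop:checking-for-action-sequences-path-existence}, and then check non-completeness of the sequence using the oracle. The paper is simply terser in the last step, invoking Theorem~\ref{thm:containment=00003Dcomplentailment} directly (the linear-time reduction of $\ent{\L}$ to $\ucont{\L}$) rather than unpacking it into guessing an index and checking riskiness and non-repair separately; note also that your separate riskiness check is redundant, since if $r$ is not risky then $P_r \cap Q^R$ is unsatisfiable and hence trivially contained, so a single non-containment oracle call already certifies both riskiness and non-repair.
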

\begin{proof}
If $s\not\models\compl Q$, one can guess a normal action sequence
$\alpha$, check by Proposition \ref{prop:checking-for-action-sequences-path-existence}
in polynomial time that there exists a path $\pi$ from $s_{0}$ to
$s$ with $\alpha$ as normal action sequence, and by Theorem \ref{thm:containment=00003Dcomplentailment}
verify using the $\ucont{\L}$-oracle that $\alpha$ does not satisfy
$\compl Q$. 
\end{proof}
We discuss the complexity of this problem in Section~\ref{sub:complexity}

\subsection{Runtime Verification}

Similarly to the results about database instance reasoning in Section
\ref{sec:reasoning_with_instances}, more completeness can be derived
if the actual process instance is taken into account, that is, the
concrete activities that were carried out within a process. 

As an example, consider that the secretary in a large school can perform
two activities regarding the enrollments, either he/she can sign enrollment
applications (which means that the enrollments become legally valid),
or he/she can record the signed enrollments that are not yet recorded
in the database. For simplicity we assume that the secretary batches
the tasks and performs only one of the activities per day. A visualization
of this process is shown in Figure \ref{figure:runtime-verification-example-bpmn}.
Considering only the process we cannot draw any conclusions about
the completeness of the enrollment data, because if the secretary
chose the first activity, then data will be missing, however if the
secretary chose the second activity, then not. If however we have
the information that the secretary performed the second activity,
then we can conclude that the number of the currently valid enrollments
is also complete in the information system.

\begin{figure}[t]
\centering \includegraphics{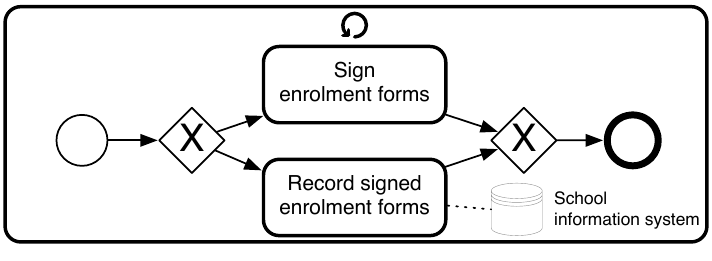}
\caption{Simplified BPMN process for the everyday activity of a secretary in
a school}

\label{figure:runtime-verification-example-bpmn} 
\end{figure}

Formally, a runtime verification problem consists of a path $\pi=t_{1},\ldots,t_{n}$
that was executed so far and a query $Q$. Again the problem is to
check whether completeness holds in the current state, that is, whether
all developments of $\pi$ satisfy $\compl Q$. Recall that we introduced
$\ent{\L}$ as the problem of deciding whether a path in a QATS formulated
in a language $\L$ satisfies completeness of a query formulated in
the same language $\L$.
\begin{cor}
The problems $\ent{\L}$ and $\ucont{\L}$ can be reduced to each
other in linear time. 
\end{cor}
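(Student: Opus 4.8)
The statement asserts a linear-time interreduction between $\ent{\L}$ for runtime verification and $\ucont{\L}$, the problem of containment of a conjunctive query in a union of conjunctive queries. The plan is to observe that the runtime-verification setting is essentially a special case of the design-time machinery already developed: a concrete executed path $\pi = t_1,\ldots,t_n$ is itself an action sequence $\alpha_\pi$, and we want to decide whether $\alpha_\pi \models \compl Q$. So the bulk of the work has already been done in Lemma~\ref{lem:incompleteness-if-unhealed-rw-action} and Theorem~\ref{thm:containment=00003Dcomplentailment}.

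For the direction $\ent{\L} \to \ucont{\L}$: given a runtime-verification instance $(\pi, Q)$, take the action sequence $\alpha_\pi$ and apply the characterization of Lemma~\ref{lem:incompleteness-if-unhealed-rw-action}, namely that $\alpha_\pi \models \compl Q$ iff every risky real-world effect in $\alpha_\pi$ is repaired by the copy effects occurring after it. By Corollary~\ref{cor:repairing-if-containment}, each such repairing condition is exactly a containment $P_r \cap Q^R \subseteq P_{c_1} \cup \cdots \cup P_{c_m}$, which is a $\ucont{\L}$ instance (one notes $P_r \cap Q^R$ is again a conjunctive query in $\L$ when $Q$, $r$ are). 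Since $\pi$ has $n$ transitions, there are at most $n$ such checks, giving a linear-time reduction — in fact this is the same argument as the first half of Theorem~\ref{thm:containment=00003Dcomplentailment}, just reread for a single fixed path rather than quantifying over all paths to a state.

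For the converse $\ucont{\L} \to \ent{\L}$: given a containment instance $Q_0 \subseteq Q_1 \cup \cdots \cup Q_n$, build the two-action sequence $\alpha = a_1, a_2$ exactly as in the ``$\Leftarrow$'' part of Theorem~\ref{thm:containment=00003Dcomplentailment}: $\re(a_1) = \{\id R(\tpl x) \leftsquigarrow Q_0(\tpl x)\}$ over a fresh relation $R$, and $\ce(a_2) = \bigcup_{i=1\ldots n} \{Q_i(\tpl x) \rightarrow \av R(\tpl x)\}$. Take $\pi$ to be the path realizing $\alpha$ and $Q' = \query{Q'(\tpl x)}{R(\tpl x)}$. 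Then $\pi \models \compl{Q'}$ iff the real-world effect of $a_1$ is repaired by the copy effects of $a_2$ iff $Q_0 \subseteq Q_1 \cup \cdots \cup Q_n$. This construction is clearly linear.

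The only genuinely new point over the design-time theorem is that here the path $\pi$ is given explicitly rather than being one of exponentially many paths to a state, so no normal-form or path-enumeration argument (Proposition on normal action sequences, Proposition~\ref{prop:checking-for-action-sequences-path-existence}) is needed — the reduction is more direct and avoids the NP-machinery of Theorem~\ref{thm:final-characterization-design-time}. I expect no real obstacle: the proof is a straightforward re-reading of Lemma~\ref{lem:incompleteness-if-unhealed-rw-action}, Corollary~\ref{cor:repairing-if-containment}, and Theorem~\ref{thm:containment=00003Dcomplentailment} for the single-path case, and one should simply cite those results. The one thing to be careful about is that cycles in the path are already handled, since $\pi$ is a concrete finite executed path; if desired one could additionally invoke Lemma~\ref{lem:repeated-transitions-can-be-ignored} to shorten $\alpha_\pi$ to its normal form before running the containment checks, but this is not necessary for the linear-time claim.
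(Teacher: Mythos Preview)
Your proposal is correct and takes essentially the same approach as the paper. The paper's proof is even terser: it simply invokes Theorem~\ref{thm:containment=00003Dcomplentailment} directly (which already establishes the linear-time interreduction for action sequences) together with the observation that a path satisfies completeness if and only if its action sequence does---so your expansion of the two directions via Lemma~\ref{lem:incompleteness-if-unhealed-rw-action} and Corollary~\ref{cor:repairing-if-containment} is faithful but more detailed than necessary.
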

The corollary follows directly from Theorem \ref{thm:containment=00003Dcomplentailment}
and the fact that a path satisfies completeness if and only if its
action sequence satisfies completeness.

Runtime verification becomes more complex when also the current, concrete
state of the available database is explicitly taken into account.
Given the current state $D$ of the database, the problem is then
to check whether all the developments of $\pi$ in which $\av{D_{n}}=D$
holds satisfy $\compl Q$. In this case repairing of all risky actions
is a sufficient but not a necessary condition for completeness:
\begin{example}
Consider a path $(s_{0},a_{1},s_{1}),(s_{1},a_{2},s_{2})$, where
action $a_{1}$ is annotated with the copy effect $\id{\req}(n,s)\rightarrow\av{\req}(n,s)$,
action $a_{2}$ with the real-world effect $\id{\pupil}(n,c,s)\leftsquigarrow\id{\req}(n,s)$,
a database $\av D_{2}$ that is empty, and consider a query $\query{Q(n)}{\pupil(n,c,s),}$ $\req(n,s)$.
Then, the query result over $\av D_{2}$ is empty. Since the relation
$\req$ was copied before, and is empty now, the query result over
any ideal database must be empty too, and therefore $\compl Q$ holds.
Note that this cannot be concluded with the techniques introduced
in this work, as the real-world effect of action $a_{2}$ is risky
and is not repaired. 
\end{example}
The complexity of runtime verification w.r.t.~a concrete database
instance is still open.

\subsection{Dimension Analysis}

When at a certain timepoint a query is not found to be complete, for
example because the deadline for the submissions of the enrollments
from the schools to the central school administration is not yet over,
it becomes interesting to know which parts of the answer are already
complete. 
\begin{example}
Consider that on the 10th of April, the schools ``Hofer'' and ``Da
Vinci'' have confirmed that they have already submitted all their
enrollments, while ``Max Valier'' and ``Gherdena'' have entered
some but not all enrollments, and other schools did not enter any
enrollments so far. Then the result of a query asking for the number
of pupils per school would look as in Figure \ref{figure:visualization-dimension-analysis}
(left table), which does not tell anything about the trustworthiness
of the result. If one includes the information from the process, one
could highlight that the data for the former two schools is already
complete, and that there can also be additional schools in the query
result which did not submit any data so far (see right table in Figure
\ref{figure:visualization-dimension-analysis}).\label{example:dimension-analysis} 
\end{example}
\begin{figure}[t]

\begin{centering}
\includegraphics[width=1\textwidth]{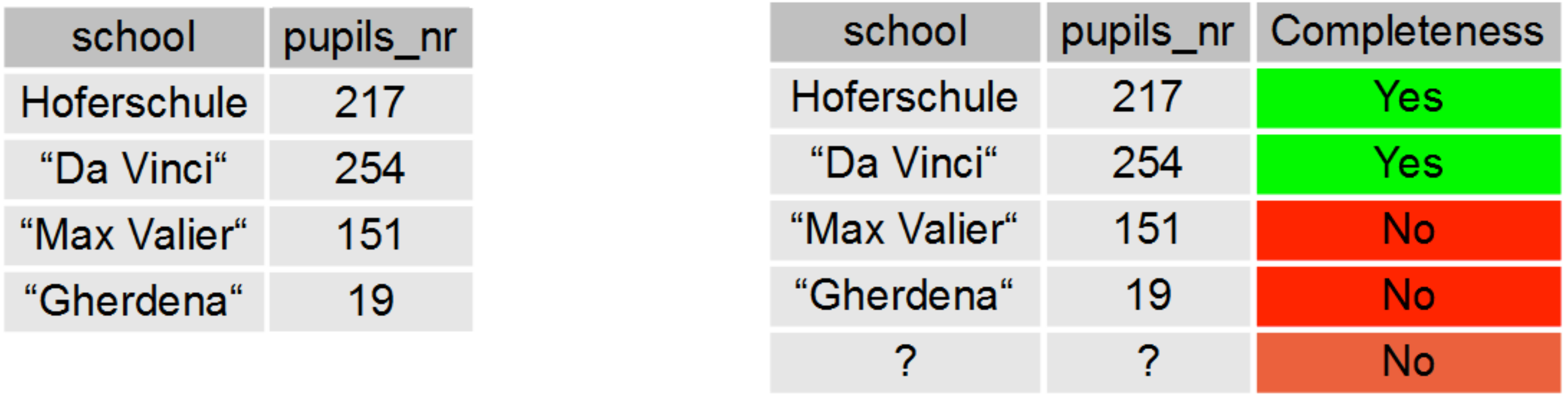} 
\par\end{centering}

\caption{Visualization of the dimension analysis of Example \ref{example:dimension-analysis}.}

\label{figure:visualization-dimension-analysis} 
\end{figure}

Formally, for a query $Q$ a dimension is a set of distinguished variables
of $Q$. Originally, dimension analysis was meant especially for the
arguments of a GROUP BY expression in a query, however it can also
be used with other distinguished variables of a query. Assume a query
$\query{Q(\tpl x)}{B(\tpl x,\tpl y)}$ cannot be guaranteed to be
complete in a specific state of a process. For a dimension $\tpl v\subseteq\tpl x$,
the analysis can be done as follows: 
\begin{enumerate}
\item Calculate the result of $\query{Q'(\tpl v)}{B(\tpl x,\tpl y)}$ over
$\av D$. 
\item For each tuple $\tpl c$ in $Q'(\av D)$, check whether $s,\av D\models\compl{Q[\tpl v/\tpl c]}$.
This tells whether the query is complete for the values $\tpl c$
of the dimension $V$. 
\item To check whether further values are possible, one has to guess a new
value $\tpl c_{new}$ for the dimension and show that $Q[\tpl v/\tpl c_{new}]$
is not complete in the current state. For the guess one has to consider
only the constants in the database plus a fixed set of new constants,
hence the number of possible guesses is polynomial for a fixed dimension
$\tpl v$.
\end{enumerate}
Step 2 corresponds to deciding for each tuple with a certain value
in $Q(\av D)$, whether it is complete or not (color red or green
in Figure \ref{figure:visualization-dimension-analysis}, right table),
Step 3 to deciding whether there can be additional values (bottom
row in Figure \ref{figure:visualization-dimension-analysis}, right
table).

\subsection{Complexity of Completeness Verification}

\label{sub:complexity}

In the previous sections we have seen that completeness verification
can be solved using query containment. Results on query containment
are already reported in Section \ref{sub:general-tc-qc-entailment}.
The results presented here follow from Theorems~\ref{thm:containment=00003Dcomplentailment}
and~\ref{thm:final-characterization-design-time}, and are summarized
in Figure~\ref{figure:complexity-results-table}. We distinguish
between the problem of runtime verification, which has the same complexity
as query containment, and design-time verification, which, in principle
requires to solve query containment exponentially often. Notable however
is that in most cases the complexity of runtime verification is not
higher than the one of design-time verification. 

The results on linear relational and linear conjunctive queries, i.e.,
conjunctive queries without selfjoins and without or with comparisons,
are borrowed from \cite{Razniewski:Nutt-Compl:of:Queries-VLDB11}.
The result on relational queries is reported in~\cite{Sagiv:Yannakakis-Containment-VLDB},
and that on conjunctive queries from~\cite{meyden-Complexity_querying_ordered_domains-pods}.
As for integrity constraints, the result for databases satisfying
finite domain constraints is reported in~\cite{Razniewski:Nutt-Compl:of:Queries-VLDB11}
and for databases satisfying keys and foreign keys in~\cite{rosati:2003:containment-keys-and-foreign-keys}.

\begin{figure}[t]
\renewcommand{\arraystretch}{1.3}

\begin{centering}
\begin{small} %
\begin{tabular}{|>{\centering}m{4cm}|>{\centering}m{3.5cm}>{\centering}m{2.9cm}|}
\hline 
Query/QATS language $\L$  & Runtime-verification:

Complexity of $\UCont(\L,\L)$ and $\ent{\L}$ ``$(\pi\models\compl Q)$''?  & Design-time verification:

Complexity of ``$s\models\compl Q$''?\tabularnewline
\hline 
Linear relational queries  & PTIME  & in $\coNP$\tabularnewline
Linear conjunctive queries  & $\coNP$-complete  & $\coNP$-complete\tabularnewline
 Relational queries  & NP-complete  & in $\Pi_{2}^{P}$\tabularnewline
Relational queries over databases with finite domains  & $\Pi_{2}^{P}$-complete  & $\Pi_{2}^{P}$-complete\tabularnewline
Conjunctive queries  & $\Pi_{2}^{P}$-complete  & $\Pi_{2}^{P}$-complete\tabularnewline
Relational queries over databases with keys and foreign keys  & in PSPACE  & in PSPACE\tabularnewline
\hline 
\end{tabular}\end{small} 
\par\end{centering}

\caption{Complexity of design-time and runtime verification for different query
languages.}

\label{figure:complexity-results-table} 
\end{figure}

\section{Extracting Transition Systems from Petri Nets}

\label{sec:processes:petri:nets}Transition systems are a very basic
formalism for describing the semantics of business processes. For
business processes itself, the quasi-standard for process models is
the business process modeling notation (BPMN). Large parts of BPMN
are well founded in coloured Petri nets. In turn, the models of coloured
Petri nets are represented by its reachability graph, which is a transition
system. It is therefore not surprising that the annotations of transition
systems with real-world and copy effects can also be expressed on
the level of coloured Petri nets.

Informally, coloured Petri nets (CPN) are systems in which tokens
of different types can move and interact according to defined actions~\cite{jensen:1987:coloured-petri-nets}.
An example of a CPN that is annotated with real-world and copy effects
is shown in Figure \ref{figure-example-QACPN}. In this CPN, there
exist three types of tokens: Time, persons and schools, which initially
populate the places on the left side in top-down order. The actions
can be executed whenever there are tokens in all the input places,
e.g., the action ``Enroll yourself'' can be executed whenever there
there is a time token in the first place, a person token in the second
place and a school token in the third place. 

When annotating Petri nets with real-world and copy effects, we can
now use the variables of the actions also in the effect specifications:
For instance, the real-world effect of the action ``Decide enrollments
of a school'' allows to create records \textit{enrolled(n,s)} for
pupils who expressed an enrollment desire at the school which performs
this action. 

Notice that while already coloured Petri nets also allow some representation
of data via the tokens, CPN annotated with real-world and copy effects
go clearly beyond this, because (1) copy actions are a kind of universally
quantified transitions, (2) they allow the introduction of new values,
and (3) allow the verification for arbitrary starting databases.

The semantics of Coloured Petri nets are their reachability graphs,
which, in turn, are transition systems. A common class of well-behaved
Petri nets are bound Petri nets. A Petri net is bound by a value $k$,
if the number of tokens in all reachable states is less or equal to
$k$. For $k$-bound Petri nets, their reachability graph is at most
exponential in $k$. Thus, completeness verification over coloured
Petri nets can be reduced to completeness verification over exponential
transition systems.

\begin{figure}[t]
\begin{centering}
\includegraphics[width=1\textwidth]{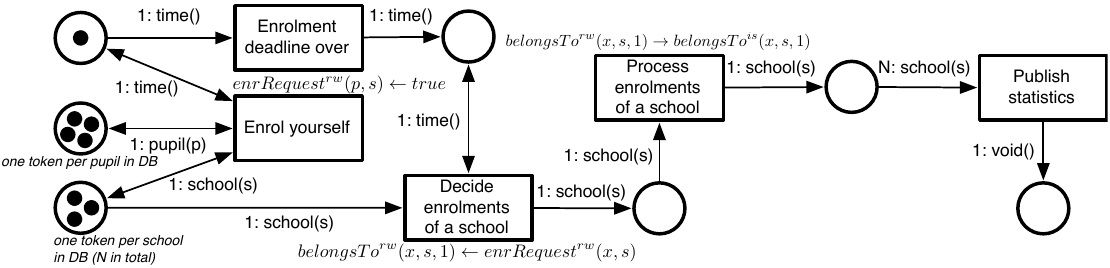} 
\par\end{centering}

\caption{Enrollment both from the perspective of schools and students modelled
in a CPN. A desirable property to check is that when the action ``Publish
statistics'' is executed, the data about enrolments is complete,
which in this example is the case.}

\label{figure-example-QACPN} 
\end{figure}

\section{Related Work}

\label{sec:processes-related-work}

In the BPM context, there have been attempts to model data quality
issues, like in \cite{bpmn-data-inaccuracy-modeling-weak,bpmn-data-quality:cappiello:caballero-weak,bagchi:2006:data-quality-and-business-process-modeling}.
However, these approaches mainly discussed general methodologies for
modeling data quality requirements in BPMN, but did not provide methods
to asses their fulfillment. In this chapter, we claim that process
formalizations are an essential source for learning about data completeness
and show how data completeness can be verified. In particular, our
contributions are (1) to introduce the idea of extracting information
about data completeness from processes manipulating the data, (2)
to formalize processes that can both interact with the real-world
and record information about the real-world in an information system,
and (3) to show how completeness can be verified over such processes,
both at design and at execution time.

Our approach leverages on two assumptions related to how the data
manipulation and the process control-flow are captured. From the data
point of view, we leverage on annotations that suitably mediate between
expressiveness and tractability. More specifically, we rely on annotations
modeling that new information of a given type is acquired in the real
world, or that some information present in the real world is stored
into the information system. We do not explicitly consider the evolution
of specific values for the data, as incorporating full-fledged data
without any restriction would immediately make our problem undecidable,
being simple reachability queries undecidable in such a rich setting
\cite{DDHV11,BCDDF11,BCDDM13}. From the control-flow point of view,
we are completely orthogonal to process specification languages. In
particular, we design our data completeness algorithms over (labeled)
transition systems, a well-established mathematical structure to represent
the execution traces that can by produced according to the control-flow
dependencies of the (business) process model of interest. Consequently,
our approach can in principle be applied to any process modeling language,
with the proviso of annotating the involved activities. We are in
particular interested in providing automated reasoning facilities
to answer whether a given query can be answered with complete information
given a target state or a sequence of activities.

\section{Summary }

In this chapter we have discussed that data completeness analysis
should take into account the processes that manipulate the data. In
particular, we have shown how process models can be annotated with
effects that create data in the real world and effects that copy data
from the real world into an information system. We have then shown
how one can verify the completeness of queries over transition systems
that represent the execution semantics of such processes. It was shown
that, similarly to the previous chapters, the problems here are closely
related to the problem of query containment, although now, it may
be the case that exponentially many containments have to be solved
for one completeness check. We also showed that completeness checking
is easier when the trace of the process is known.

We focused on the process execution semantics in terms of transition
systems. The results would allow the realization of a demonstration
system to annotate high-level business process specification languages
(such as BPMN or YAWL), extract the underlying quality-aware transition
systems, and apply the techniques here presented to check completeness.

\chapter{Discussion}

\label{chap:discussion}

In the previous chapters we have discussed how to analyze the completeness
of query answers over databases using metadata about the completeness
of parts of the data. A critical prerequisite for doing such analysis
is to actually obtain such completeness metadata, and to have reasons
to believe that this metadata is correct. Also, the practical implications
of the presented complexity results and the technical integration
of completeness reasoning into existing software landscapes are important
issues. In the following, we discuss these issues.

\paragraph{Obtaining Completeness Statements and Ensuring Statement Correctness}

In the setting of company databases, especially fast-changing transactional
databases that possibly get integrated into data warehouses regularly,
in order to have up-to-date completeness metadata, completeness statement
generation needs to be automated as much as possible. Where data creation
is done automatically (e.g., sensor data), it could be feasible to
also generate completeness statements automatically. Where data is
submitted manually (for instance, a human presses a \textquotedbl{}submit\textquotedbl{}
button), completeness statement generation should be bound to the
data submission. That is, whenever data is submitted, the user is
asked (or forced) to also make statements about the (in-)completeness
of the submitted data. This is crucial, because commonly the stakeholder
that will know most about the completeness of the data is the one
who submits the data.

In settings where the knowledge about completeness is not captured
directly at data creation, later attempts to get completeness statements
will require manual inspection and may be tricky, as often information
about data provenance is not maintained well. This may e.g.\ be a
problem when a database with completeness metadata is merged with
another database without such completeness information, e.g.\ after
an acquisition.

In the settings of crowd-based data such as OpenStreetMap or of integrated
data without any quality guarantees such as on the Semantic Web, there
is no way to enforce the generation of completeness metadata. Instead,
completeness metadata will need to be generated and maintained based
on mutual ratings and trust levels.

\paragraph{Practical Complexity}

While some theoretical complexity results presented in this thesis
may seem as if implementations could be very challenging, most discussed
schema-level reasoning problems are reduced to query containment,
which, for the languages discussed, is solved in existing DBMS every
day. We thus expect little runtime challenges when implementing schema
level reasoning using existing techniques for containment.

On the other hand, the results for reasoning wrt.\ a database instance
for which we showed a PTIME data complexity still pose a big challenge:
Even a linear data complexity may be not feasible for large data warehouses,
thus, for these results, more research is needed before they can be
implemented.

\paragraph{Set/bag disambiguation for TC-QC Reasoning}

A major complication in the presented results is the disambiguation
between set and bag semantics for queries. As Proposition \ref{prop:bag-and-set-normally-the-same}
however shows, $\tcqc$entailment reasoning for queries under bag
and set semantics only differs in cases where the queries under set
semantics not minimal, and can be synchronized again using query minimization.
As for all instances of $\tcqc^{s}(\L_{1},\L_{2})$ with $\L_{1}=\L_{2}$,
the complexity is the same as that of minimization of queries in $\L_{1}$,
the separate discussion of the reasoning for queries under set semantics
in Section \ref{sub:general-tc-qc-entailment} may give the wrong
impression that the problems are very different, while in fact in
all but one case they can be dealt with by the same algorithm.

\paragraph{Technical Integration}

The technical integration of completeness reasoning into data management
software remains an open problem. We can only conjecture that a completeness
reasoner component will require deep integration into the existing
data management software landscape.

First, the components that create completeness statements would need
to be integrated into software for creating and manipulation database
content, e.g.\ MS Access, SAP software, or custom-made web interfaces.

Second, the component that perform the actual reasoning should be
integrated with the DBMS, e.g.\ as a plugin, in order to allow the
execution of reasoning at the same time as query execution.

Third, the components for visualizing completeness information need
to be embedded into the software that is used to show query results,
such as management cockpits, web portals or business intelligence
tools such as Qlikview.

\paragraph{Impact}

Parts of the theory presented in this thesis have been implemented
by Savkovic \etal\  in a demonstration system called MAGIK \cite{ogi:demo:cikm,ogi:demo:vldb}.
The reasoning is MAGIK is performed by translating the reasoning problems
into logical programs, which are then solved by the DLV reasoner %
\footnote{http://www.dlvsystem.com/html/DLV\_User\_Manual.html%
}. MAGIK can also reasoning tasks that are not discussed in this thesis,
namely reasoning wrt.\ foreign keys and computing the most general
complete specializations or the least general complete generalization
of a query that are not complete.

\bibliographystyle{plain}
\bibliography{files/thesis,files/biblio_iswc}

\appendix

\chapter{Notation Table}

The listing below contains common notation used throughout this thesis.
Notation that is specific to a single chapter is not listed here.\medskip{}

\begin{longtable}{|c|>{\centering}p{9cm}|}
\hline 
Symbol & Meaning\tabularnewline
\hline 
$\Sigma$ & relational database schema\tabularnewline
\hline 
$A$ & relational atom\tabularnewline
\hline 
$\dom$ & domain, infinite set of constants\tabularnewline
\hline 
$D$ & database, set of ground atoms\tabularnewline
\hline 
$c$ & constant\tabularnewline
\hline 
$\tpl d$ & tuple of constants\tabularnewline
\hline 
$x,y,z,w$ & variables\tabularnewline
\hline 
$t$ & term, constant or variable\tabularnewline
\hline 
$R,(S,T)$ & relation names\tabularnewline
\hline 
$Q$ & conjunctive query\tabularnewline
\hline 
$B$ & body of a conjunctive query\tabularnewline
\hline 
$\tpl x$ & distinguished variables of a query\tabularnewline
\hline 
$\tpl y$ & nondistinguished variables of a query\tabularnewline
\hline 
$L$ & relational part of a body\tabularnewline
\hline 
$M$ & comparisons\tabularnewline
\hline 
$G$ & condition, set of atoms\tabularnewline
\hline 
$\cdot^{s}$ & set semantics\tabularnewline
\hline 
$\cdot^{b}$ & bag semantics\tabularnewline
\hline 
$\di$ & ideal database\tabularnewline
\hline 
$\da$ & available database\tabularnewline
\hline 
$\D$ & incomplete database, pair of an ideal and an available database\tabularnewline
\hline 
$C$ & table completeness statement\tabularnewline
\hline 
$Q_{C}$ & query associated to a table completeness statement\tabularnewline
\hline 
$v$ & valuation, mapping from variables into constants; sometimes also Greek
letters $(\sigma,\theta,\delta)$\tabularnewline
\hline 
$T_{C}$ & transformator function for a TC statement, maps databases into databases\tabularnewline
\hline 
$\compl Q$ & query completeness statement for a query $Q$\tabularnewline
\hline 
$\L$ & query language\tabularnewline
\hline 
$\Cont$ & containment problem of a query in a query\tabularnewline
\hline 
$\UCont$ & containment problem of a query in a union of queries\tabularnewline
\hline 
$\tctc$ & entailment problem of table completeness by table completeness\tabularnewline
\hline 
$\tcqc$ & entailment problem of table completeness by query completeness\tabularnewline
\hline 
$\tcqc$ & entailment problem of query completeness by query completeness\tabularnewline
\hline 
\end{longtable}

\section*{}
\end{document}